\newif\ifabstract
\newif\iffull
\newcommand{\myparskip}{3pt}
\newenvironment{proofof}[1]{\noindent{\bf Proof of #1.}}%
        {\hspace*{\fill}$\Box$\par\vspace{4mm}}
\newcommand{\tk}{\tilde k}
\newcommand{\dnot}{D^{\circ}}
\newcommand{\NDP}{{\sf NDP}\xspace}
\newcommand{\NDPdisc}{{\sf NDP-Disc}\xspace}
\newcommand{\NDPcyl}{{\sf NDP-Cylinder}\xspace}
\newcommand{\NDPgrid}{{\sf NDP-Grid}\xspace}
\newcommand{\NDPplanar}{{\sf NDP-Planar}\xspace}
\newcommand{\NDPwC}{{\sf NDPwC}\xspace}
\newcommand{\EDPwC}{{\sf EDPwC}\xspace}
\newcommand{\EDP}{{\sf EDP}\xspace}
\newcommand{\DPSP}{{\sf DPSP}\xspace}
\newcommand{\algsc}{\ensuremath{{\mathcal{A}}_{\mbox{\textup{\footnotesize{AKR}}}}}\xspace}
\newcommand{\alphasc}{\ensuremath{\alpha_{\mbox{\textup{\footnotesize{\tiny{AKR}}}}}}}
\newcommand{\dface}{d_{\mbox{\textup{\footnotesize{GNC}}}}}
\newcommand{\tgamma}{\tilde{\gamma}}
\newcommand{\ceil}[1]{\ensuremath{\left\lceil#1\right\rceil}}
\newcommand{\floor}[1]{\ensuremath{\left\lfloor#1\right\rfloor}}
\newcommand{\NP}{\mbox{\sf NP}}
\newcommand{\polylog}[1]{\mathrm{polylog(#1)}}
\newcommand{\ZPTIME}{\mbox{\sf ZPTIME}}
\newcommand{\opt}{\mathsf{OPT}}
\newcommand{\alphawl}{\ensuremath{\alpha_{\mbox{\tiny{\sc WL}}}}}
\newcommand{\alphaWL}{\ensuremath{\alpha_{\mbox{\tiny{\sc WL}}}}}
\newcommand{\set}[1]{\left\{ #1 \right\}}
\newcommand{\tpset}{\tilde{\mathcal P}}
\newcommand{\ts}{\tilde s}
\newcommand{\dset}{{\mathcal D}}
\newcommand{\lface}{L_{\tiny{\mbox{face}}}}
\newcommand{\tZ}{\tilde Z}
\newcommand{\trset}{\tilde{\mathcal R}}
\newcommand{\tset}{{\mathcal T}}
\newcommand{\ttset}{\tilde{\mathcal T}}
\newcommand{\tmset}{\tilde{\mathcal M}}
\newcommand{\tS}{\tilde S}
\newcommand{\tD}{\tilde D}
\newcommand{\tT}{\tilde T}
\newcommand{\hmset}{\hat{\mathcal{ M}}}
\newcommand{\iset}{{\mathcal{I}}}
\newcommand{\pset}{{\mathcal{P}}}
\newcommand{\oset}{{\mathcal{O}}}
\newcommand{\qset}{{\mathcal{Q}}}
\newcommand{\lset}{{\mathcal{L}}}
\newcommand{\bset}{{\mathcal{B}}}
\newcommand{\aset}{{\mathcal{A}}}
\newcommand{\cset}{{\mathcal{C}}}
\newcommand{\fset}{{\mathcal{F}}}
\newcommand{\mset}{{\mathcal M}}
\newcommand{\xset}{{\mathcal{X}}}
\newcommand{\yset}{{\mathcal{Y}}}
\newcommand{\rset}{{\mathcal{R}}}
\newcommand{\kset}{{\mathcal K}}
\newcommand{\tkset}{\tilde{\mathcal{K}}}
\newcommand{\tsigma}{\tilde{\sigma}}
\newcommand{\nset}{{\mathcal N}}
\newcommand{\hset}{{\mathcal{H}}}
\newcommand{\sset}{{\mathcal{S}}}
\newcommand{\zset}{{\mathcal{Z}}}
\newcommand{\be}{\begin{enumerate}}
\newcommand{\ee}{\end{enumerate}}
\newcommand{\bd}{\begin{description}}
\newcommand{\ed}{\end{description}}
\newcommand{\bi}{\begin{itemize}}
\newcommand{\ei}{\end{itemize}}
\newtheorem{theorem}{Theorem}[section]
\newtheorem{lemma}[theorem]{Lemma}
\newtheorem{observation}[theorem]{Observation}
\newtheorem{corollary}[theorem]{Corollary}
\newtheorem{claim}[theorem]{Claim}
\newtheorem{definition}{Definition}[section]
\newtheorem{remark}{Remark}[section]
\newenvironment{proof}{\par \smallskip{\bf Proof:}}{\hfill\stopproof}
\def\stopproof{\square}
\def\square{\vbox{\hrule height.2pt\hbox{\vrule width.2pt height5pt \kern5pt
\vrule width.2pt} \hrule height.2pt}}
\newcommand{\tw}{\mathrm{tw}}
\renewcommand{\phi}{\varphi}
\newcommand{\eps}{\epsilon}
\newcommand{\half}{\ensuremath{\frac{1}{2}}}
\newcommand{\poly}{\operatorname{poly}}
\newenvironment{properties}[2][0]
{
\begin{enumerate} \setcounter{enumi}{#1}}{\end{enumerate}}
\mathchardef\hyphen="2D
\newcommand{\mincycle}{\mathrm{min\hyphen cycle}}
\newcommand{\tcclass}[2]{\tilde{\mathcal{S}}_{#1}^{(#2)}}
\begin{document}

\title{Improved Approximation for Node-Disjoint Paths in Planar Graphs\footnote{An extended abstract is to appear in STOC 2016}}
\author{Julia Chuzhoy\thanks{Toyota Technological Institute at Chicago. Email: {\tt cjulia@ttic.edu}. Supported in part by NSF grant CCF-1318242.}\and David H. K. Kim\thanks{Computer Science Department, University of Chicago. Email: {\tt hongk@cs.uchicago.edu}. Supported in part by NSF grant CCF-1318242.} \and Shi Li\thanks{Department of Computer Science and Engineering, University at Buffalo. Part of the work done while the author was at the Toyota Technological Institute at Chicago. Email: {\tt shil@buffalo.edu}.}}

\begin{titlepage}
\maketitle

\thispagestyle{empty}

\begin{abstract}
We study the classical Node-Disjoint Paths (\NDP) problem: given an $n$-vertex graph $G$ and a collection $\mset=\set{(s_1,t_1),\ldots,(s_k,t_k)}$ of pairs of vertices of $G$ called \emph{demand pairs}, find a maximum-cardinality set of node-disjoint paths connecting the demand pairs. \NDP is one of the most basic routing problems, that has been studied extensively. Despite this, there are still wide gaps in our understanding of its approximability: the best currently known upper bound of $O(\sqrt n)$ on its approximation ratio is achieved via a simple greedy algorithm, while the best current negative result shows that the problem does not have a better than $\Omega(\log^{1/2-\delta}n)$-approximation for any constant $\delta$, under standard complexity assumptions. Even for planar graphs no better approximation algorithms are known, and to the best of our knowledge, the best negative bound is APX-hardness. Perhaps the biggest obstacle to obtaining better approximation algorithms for \NDP is that most currently known approximation algorithms for this type of problems rely on the standard multicommodity flow relaxation, whose integrality gap is $\Omega(\sqrt n)$ for \NDP, even in planar graphs. In this paper, we break the barrier of $O(\sqrt n)$ on the approximability of the \NDP problem in planar graphs and obtain an $\tilde O(n^{9/19})$-approximation. We introduce a new linear programming relaxation of the problem, and a number of new techniques, that we hope will be helpful in designing more powerful algorithms for this and related problems.
\end{abstract}

\end{titlepage}

\label{--------------------------------------------sec: intro---------------------------------------------------}
\section{Introduction}\label{sec: intro}

In the Node-Disjoint Paths (NDP) problem, we are given an $n$-vertex graph $G$, and a collection $\mset=\set{(s_1,t_1),\ldots,(s_k,t_k)}$ of pairs of vertices of $G$, called \emph{source-destination}, or \emph{demand}, pairs. The goal is to route as many of the demand pairs as possible, by connecting each routed pair with a path, so that the resulting paths are node-disjoint. We denote by \NDPplanar the special case of the problem where the input graph $G$ is planar, and by \NDPgrid the special case where $G$ is the $(\sqrt n\times\sqrt n)$-grid. 
\NDP is one of the most basic problems in the area of graph routing, and it was initially introduced to the area in the context of VLSI design. In addition to being extensively studied in the area of approximation algorithms, this problem has played a central role in Robertson and Seymour's Graph Minor series. 
When the number of the demand pairs $k$ is bounded by a constant,  Robertson and Seymour~\cite{RobertsonS,flat-wall-RS} have shown an efficient algorithm for the problem, as part of the series. When $k$ is a part of input, the problem becomes \NP-hard~\cite{Karp,EDP-hardness}, even on planar graphs~\cite{npc_planar}, and even on grid graphs~\cite{npc_grid}. Despite the importance of this problem, and many efforts, its approximability is still poorly understood. The following simple greedy algorithm achieves an $O(\sqrt n)$-approximation~\cite{KolliopoulosS}: while $G$ contains any path connecting any demand pair, choose the shortest such path $P$, add $P$ to the solution, and delete all vertices of $P$ from $G$.
Surprisingly, this elementary algorithm is the best currently known approximation algorithm for \NDP, even for planar graphs. Until recently, this was also the best approximation algorithm for \NDPgrid. On the negative side, it is known that there is no $O(\log^{1/2-\delta}n)$-approximation algorithm for \NDP for any constant $\delta$, unless $\NP \subseteq \ZPTIME(n^{\poly \log n})$~\cite{AZ-undir-EDP,ACGKTZ}. To the best of our knowledge, the best negative result for \NDPplanar and for \NDPgrid is APX-hardness~\cite{NDP-grids}. Perhaps the biggest obstacle to obtaining better upper bounds on the approximability of \NDP is that the common approach to designing approximation algorithms for this type of problems is to use the multicommodity flow relaxation, where instead of connecting the demand pairs with paths, we send a (possibly fractional) multicommodity flow between them. The integrality gap of this relaxation is known to be $\Omega(\sqrt n)$, even for planar graphs, and even for grid graphs. In a recent work, Chuzhoy and Kim~\cite{NDP-grids} showed an $\tilde O(n^{1/4})$-approximation algorithm for \NDPgrid, thus bypassing the integrality gap obstacle for this restricted family of graphs. The main result of this paper is an $\tilde O(n^{9/19})$-approximation algorithm for \NDPplanar. We also show that, if the value of the optimal solution to the \NDPplanar instance is $\opt$, then we can efficiently route $\Omega\left(\frac{\opt^{1/19}}{\poly\log n}\right )$ demand pairs. Our algorithm is motivated by the work of~\cite{NDP-grids} on \NDPgrid, and it relies on approximation algorithms for the \NDP problem on a disc and on a cylinder, that we discuss next.

We start with the \NDP problem on a disc, that we denote by \NDPdisc. In this problem, we are given a planar graph $G$, together with a set $\mset$ of demand pairs as before, but we now assume that $G$ can be drawn in a disc, so that all vertices participating in the demand pairs lie on its boundary. The \NDP problem on a cylinder, \NDPcyl, is defined similarly, except that now we assume that we are given a cylinder $\Sigma$, obtained from the sphere, by removing two disjoint open discs (caps) from it. We denote the boundaries of the discs by $\Gamma_1$ and $\Gamma_2$ respectively, and we call them the \emph{cuffs} of $\Sigma$. We assume that $G$ can be drawn on $\Sigma$, so that all source vertices participating in the demand pairs in $\mset$ lie on $\Gamma_1$, and all destination vertices lie on $\Gamma_2$. Robertson and Seymour~\cite{RS-disc} showed an algorithm, that, given an instance of the \NDPdisc or the \NDPcyl problem, decides whether all demand pairs in $\mset$ can be routed simultaneously via node-disjoint paths, and if so, finds the routing efficiently. Moreover, for each of the two problems, they give an exact characterization of instances for which all pairs in $\mset$ can be routed in $G$. Several other very efficient algorithms are known for both problems~\cite{linear_two_face,planar_steiner_forest}. However, for our purposes, we need to consider the optimization version of both problems, where we are no longer guaranteed that all demand pairs in $\mset$ can be routed, and would like to route the largest possible subset of the demand pairs. We are not aware of any results for these two special cases of the \NDP problem. In this paper, we provide $O(\log k)$-approximation algorithms for both problems.

\paragraph*{Other Related Work.}
A problem closely related to \NDP is the Edge-Disjoint Paths (\EDP) problem. It is defined similarly, except that now the paths chosen to the solution are allowed to share vertices, and are only required to be edge-disjoint. 
It is easy to show, by using a line graph of the \EDP instance, that \NDP is more general than \EDP (though this transformation inflates the number of the graph vertices, so it may not preserve approximation factors that depend on $n$).  
This relationship breaks down in planar graphs, since the resulting \NDP instance may no longer be planar. 
The approximability status of \EDP is very similar to that of \NDP: there is an $O(\sqrt n)$-approximation algorithm~\cite{EDP-alg}, and it is known that there is no $O(\log^{1/2-\delta}n)$-approximation algorithm for any constant $\delta$, unless $\NP \subseteq \ZPTIME(n^{\poly \log n})$~\cite{AZ-undir-EDP,ACGKTZ}. We do not know whether our techniques can be used to obtain improved approximation algorithms for \EDP on planar graphs.
As in the \NDP problem, we can use the standard multicommodity flow LP-relaxation of the problem, in order to obtain an $O(\sqrt n)$-approximation algorithm, and the integrality gap of the LP-relaxation is $\Omega(\sqrt n)$ even in planar graphs. For several special cases of the problem better algorithms are known: Kleinberg~\cite{Kleinberg-planar}, building on the work of Chekuri, Khanna and Shepherd~\cite{CKS,CKS-planar1}, has shown an $O(\log^2n)$-approximation LP-rounding algorithm for even-degree planar graphs. Aumann and Rabani~\cite{grids1} showed an $O(\log^2 n)$-approximation algorithm for \EDP on grid graphs, and Kleinberg and Tardos~\cite{grids3,grids4} showed $O(\log n)$-approximation algorithms for broader classes of nearly-Eulerian uniformly high-diameter planar graphs, and nearly-Eulerian densely embedded graphs. Recently, Kawarabayashi and Kobayashi~\cite{KK-planar} gave an $O(\log n)$-approximation algorithm for \EDP when the input graph is either 4-edge-connected planar or Eulerian planar. It appears that the restriction of the graph $G$ to be Eulerian, or nearly-Eulerian, makes the \EDP problem significantly simpler, and in particular improves the integrality gap of the LP-relaxation.
The analogue of the grid graph for the \EDP problem is the wall graph: the integrality gap of the standard LP-relaxation for \EDP on wall graphs is $\Omega(\sqrt n)$, and until recently, no better than $O(\sqrt n)$-approximation algorithms for \EDP on walls were known. The work of~\cite{NDP-grids} gives an $\tilde O(n^{1/4})$-approximation algorithm for \EDP on wall graphs.

A variation of the \NDP and \EDP problems, where small congestion is allowed, has been a subject of extensive study. In the \NDP with congestion (NDPwC) problem, the input is the same as in the \NDP problem, and we are additionally given a non-negative integer $c$. The goal is to route as many of the demand pairs as possible with congestion at most $c$: that is, every vertex may participate in at most $c$ paths in the solution. \EDP with Congestion (EDPwC) is defined similarly, except that now the congestion bound is imposed on edges and not vertices. The classical randomized rounding technique of Raghavan and Thompson~\cite{RaghavanT} gives a constant-factor approximation for both problems, if the congestion $c$ is allowed to be as high as $\Theta(\log n/\log\log n)$. A recent line of work~\cite{CKS,Raecke,Andrews,RaoZhou,Chuzhoy11,  ChuzhoyL12,ChekuriE13,NDPwC2} has lead to an $O(\poly\log k)$-approximation for both \NDPwC and \EDPwC problems, with congestion $c=2$. For planar graphs, a constant-factor approximation with congestion 2 is known for \EDP~\cite{EDP-planar-c2}. All these algorithms perform LP-rounding of the standard multicommodity flow LP-relaxation of the problem and so it is unlikely that they can be extended to routing with no congestion.

\paragraph{Our Results and Techniques.}
Given an instance $(G,\mset)$ of the \NDP problem, we denote by $\opt(G,\mset)$ the value of the optimal solution to it.
Our first result is an approximation algorithm for \NDPdisc and \NDPcyl.

\begin{theorem}\label{thm: routing on a disc and cyl main}
There is an efficient $O(\log k)$-approximation algorithm for the \NDPdisc and the \NDPcyl problems, where $k$ is the number of the demand pairs in the instance.
\end{theorem}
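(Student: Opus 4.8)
The plan is to leverage the exact Robertson--Seymour characterization~\cite{RS-disc} of when \emph{all} demand pairs can be routed in the disc (resp.\ the cylinder), using it both as a structural fact about optimal solutions and as an algorithmic subroutine. Consider \NDPdisc first. Fixing a drawing of $G$ in the disc with all terminals on the boundary circle, observe that in any feasible solution the routing paths are pairwise non-crossing curves, so the routed demand pairs form a \emph{non-crossing} family --- no two of them interleave in the cyclic order of the terminals. In the other direction I would derive, from the Robertson--Seymour characterization, that a non-crossing family $\mset'$ is routable if and only if a cut condition over arcs holds: for every arc $A$ of the boundary (a consecutive set of terminal positions), the number of pairs of $\mset'$ with exactly one endpoint on $A$ is at most $d(A)$, the minimum size of a vertex separator in $G$ between the terminals on $A$ and those off $A$; the quantities $d(A)$ are polynomially many and computable. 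With this, \NDPdisc becomes: find a maximum-cardinality non-crossing sub-family of $\mset$ obeying all arc-cut constraints.

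I would then attack this reformulated problem. Maximum-cardinality \emph{non-crossing} sub-families are tractable in isolation (the crossing graph of a chord family is a circle graph, and maximum-weight independent set on circle graphs admits a classical polynomial dynamic program over the circle); the difficulty is the global packing-type arc-cut constraints. To handle them with only an $O(\log k)$ loss I would exploit the interval-like structure inherited from the disc: cut the boundary circle at a suitable point (a constant-factor loss), so that arcs become intervals and non-crossing families become laminar, and then bucket the demand pairs according to the scale --- a power of two --- of their span or of the relevant separator sizes, giving $O(\log k)$ buckets. Within the heaviest bucket all pairs have comparable scale, and I expect to be able to extract a non-crossing routable sub-family carrying a constant fraction of that bucket's optimum, for instance by a greedy ``peel off an innermost pair'' procedure (repeatedly add a pair one of whose two boundary arcs contains no other already-chosen terminal), which preserves the non-crossing property and, thanks to the scale bound, keeps all arc-cut constraints satisfied. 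An alternative more in the spirit of standard routing algorithms is to start from a path decomposition of a maximum fractional node-capacitated flow, apply a planar uncrossing step to make the supporting paths non-crossing, and then run the same scale-bucketing; it is here --- and essentially only here --- that planarity and the ``terminals on one face'' hypothesis are used.

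For \NDPcyl the outline is the same once the cyclic topology is dealt with. Any two node-disjoint paths between the two cuffs of the cylinder are freely isotopic, hence have the same winding number, so in an optimal solution all routed paths share a common winding number $r$ whose absolute value is polynomially bounded (the paths are simple). The plan is therefore to guess $r$, and then reduce routing on the cylinder with winding fixed to $r$ to a disc-like instance by cutting the cylinder open along a carefully chosen vertical arc, with the cut condition modified to charge for the prescribed crossings with that arc; running the disc algorithm on the resulting instance and taking the best $r$ yields the $O(\log k)$-approximation for \NDPcyl as well. The step I expect to be the main obstacle is precisely the $O(\log k)$-loss rounding/extraction above --- equivalently, showing that the non-crossing-plus-cut relaxation (or the node-flow relaxation) has integrality gap $O(\log k)$ on the disc and the cylinder --- and in particular reconciling the combinatorial non-crossing constraints with the packing-type cut constraints; the grouping-by-scale device is what I would rely on to make the two kinds of constraints simultaneously manageable at the price of the $O(\log k)$ factor.
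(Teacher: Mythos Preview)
Your disc reformulation is essentially what the paper starts from: by Robertson--Seymour, a non-crossing family $\mset'$ is routable iff every ``chain'' (equivalently, every $G$-normal curve between two boundary points) has non-negative redundancy, which you correctly recast as arc-cut constraints. The paper also loses the $O(\log k)$ factor by a span-bucketing step (Lemma~\ref{lem: getting r-split demand pairs on a disc}): it partitions $\mset$ into $O(\log k)$ \emph{$r$-split} families, each of which further decomposes into $r$ independent $1$-split instances (sources on one segment, destinations on the adjacent one). So at the bucketing level your plan and the paper's coincide.

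The gap is exactly the step you flag as the obstacle: getting a constant-factor approximation inside one bucket. Your ``peel off an innermost pair'' greedy does not obviously work, because the cut constraints are not just arc constraints on one side of the boundary; after cutting the circle open into two paths $\sigma,\sigma'$, the Robertson--Seymour conditions also impose \emph{cross} constraints of the form ``at most $w$ pairs with $s\preceq a$ and $t\succeq b$'' (and the symmetric form). A greedy that only looks at local innermost structure can pack many pairs across a single small cross-cut and violate such a constraint. The paper handles this by abstracting the $1$-split case into a clean auxiliary problem it calls \DPSP: choose a maximum non-crossing subfamily subject to four types of constraints (two interval-type on $\sigma$ and $\sigma'$, two cross-type), and it gives an $8$-approximation for \DPSP by a level-indexed dynamic program that groups constraints by $\lceil\log w\rceil$ and builds solutions bottom-up on nested interval pairs (Theorem~\ref{thm: approximate DPSP}, Observation~\ref{obs: violated to satisfied}). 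This DP, not a greedy, is what reconciles the non-crossing and packing constraints; your sketch does not supply a substitute for it, and the fractional-flow/uncrossing alternative you mention would still need to round against the cross-cut constraints, which is again the hard part.

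For the cylinder your plan diverges more substantially from the paper. The paper does not guess a winding number or cut the cylinder open unconditionally. Instead it takes a dichotomy on the length $\ell^*$ of the shortest $G$-normal curve between the two cuffs: if $\ell^*<\opt/2$ it cuts along that curve (losing at most $\opt/2$ paths) and runs the disc algorithm; if $\ell^*\geq \opt/2$ it reduces directly to \DPSP using only type-1 and type-2 constraints derived from min-cuts of boundary arcs, and then reconstructs an actual routing by building $\Omega(\opt)$ tight concentric cycles between the cuffs and rerouting a monotone path system through them (Theorem~\ref{thm: max flow} and the paragraph following it). Your winding-plus-cut idea is plausible but incomplete: cutting along an arc of length $L$ and charging $r$ crossings per path means you need $L\geq r\cdot|\mset'|$, and when no short cross-curve exists this forces you into precisely the regime where the paper's concentric-cycle argument, rather than a reduction to the disc, does the work.
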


We provide a brief high-level overview of the techniques we use in the proof of Theorem~\ref{thm: routing on a disc and cyl main}. We define a new intermediate problem, called Demand Pair Selection Problem (\DPSP). In this problem, we are given two disjoint directed paths $\sigma$ and $\sigma'$, and a set $\mset=\set{(s_1,t_1),\ldots,(s_k,t_k)}$ of pairs of vertices that we call demand pairs, such that all vertices in $\set{s_1,\ldots,s_k}$ lie on $\sigma$, and all vertices in $\set{t_1,\ldots,t_k}$ lie on $\sigma'$.
For any pair of vertices $v,v'\in V(\sigma)$, we denote $v\prec v'$ if $v$ lies before $v'$ on $\sigma$, and we denote $v\preceq v'$ if $v=v'$ or $v\prec v'$. For every pair $v,v'\in V(\sigma')$ of vertices, we define the relationships $v\prec v'$ and $v\preceq v'$ similarly.
 We say that two demand pairs $(s,t),(s',t')\in \mset$ \emph{cross}, if either (i) $\set{s,t}\cap \set{s',t'}\neq \emptyset$, or (ii) $s\prec s'$  and $t'\prec t$, or (iii) $s' \prec s$ and $t\prec t'$. We are also given a set $\kset$ of constraints that we describe below. The goal of the \DPSP problem is to find a maximum-cardinality subset $\mset'\subseteq \mset$ of demand pairs, such that no two pairs in $\mset'$ cross, and all constraints in $\kset$ are satisfied. There are four types of constraints in $\kset$, and each constraint is defined by a quadruple $(i,x,y,w)$, where $i\in \set{1,2,3,4}$ is the constraint type, $x,y\in V(\sigma)\cup V(\sigma')$ are vertices, and $w$ is an integer. In a type-1 constraint $(1,x,y,w)$, we have $x,y\in V(\sigma)$, and the constraint requires that the number of the demand pairs $(s,t)\in \mset'$ with $x\preceq s\preceq y$ is at most $w$. Similarly, in a type-2 constraint $(2,x,y,w)$, we have $x,y\in V(\sigma')$, and the constraint requires that the number of the demand pairs $(s,t)\in \mset'$ with $x\preceq t\preceq  y$ is at most $w$. If $(3,x,y,w)\in \kset$ is a type-3 constraint, then $x\in V(\sigma)$ and $y\in V(\sigma')$ must hold. The constraint requires that the number of the demand pairs $(s,t)\in \mset'$ with $s\preceq x$ and $y\preceq t$ is at most $w$. Similarly,  if $(4,x,y,w)\in \kset$ is a type-4 constraint, then $x\in V(\sigma)$ and $y\in V(\sigma')$ must hold, and the constraint requires that the number of the demand pairs $(s,t)\in \mset'$ with $x\preceq s$ and $t\preceq y$ is at most $w$.  We show that both \NDPdisc and \NDPcyl reduce
to \DPSP with an $O(\log k)$ loss in the approximation factor. The reduction from \NDPdisc to \DPSP uses the characterization of routable instances of \NDPdisc  due to Robertson and Seymour~\cite{RS-disc}. Finally, we show a factor-$8$ approximation algorithm for the \DPSP problem. 

The main result of our paper is summarized in the following two theorems.

\begin{theorem}\label{thm: main1}
There is an efficient $O(n^{9/19}\cdot \poly\log n)$-approximation algorithm for the \NDPplanar problem.
\end{theorem}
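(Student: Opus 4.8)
I would follow the strategy of~\cite{NDP-grids} for \NDPgrid, the new difficulty being that the ``routing infrastructure'' is no longer handed to us as a grid but must be located inside the planar graph. It suffices to route $\Omega(\opt/(n^{9/19}\poly\log n))$ demand pairs, where $\opt=\opt(G,\mset)$, since this is exactly an $\tilde O(n^{9/19})$-approximation (we never route more than $\opt$ pairs). First I would guess $\opt$ up to a factor of $2$ (run the algorithm for every power of $2$ and keep the best routing obtained) and, by a standard preprocessing step, reduce to a \emph{canonical} instance: $G$ is connected, fixed together with a drawing on the sphere, and every vertex participates in at most one demand pair. If $\opt$ lies below the target threshold $n^{9/19}$, then outputting a single connecting path already meets the bound, so from now on $\opt$ is polynomially large.

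\textbf{The main dichotomy.} The algorithm is recursive, and each invocation first decides, with the help of a new linear programming relaxation of \NDP, which of two cases applies. The new relaxation is written not over individual fractional paths --- whose natural relaxation has integrality gap $\Omega(\sqrt n)$ even in planar graphs, the very obstacle we must bypass --- but over hierarchical families of \emph{crossbars}: embedded, grid-like routing structures together with the node-disjoint paths attaching demand pairs to them. It is designed so that its optimum is $\Omega(\opt)$ and so that the value of an optimal solution also certifies structure. In the first case, a rounding of the relaxation produces a single crossbar $W$ sitting inside a disc-like, respectively cylinder-like, region of $G$, to which many demand pairs can be simultaneously connected by node-disjoint paths; contracting the interior of that region and using that the attached terminals meet its single boundary (resp.\ its two cuffs) in a prescribed cyclic order, the residual routing task is exactly an instance of \NDPdisc (resp.\ \NDPcyl), and Theorem~\ref{thm: routing on a disc and cyl main} then routes, within a $\poly\log n$ factor, as many of those pairs as can be node-disjointly routed through $W$. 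In the second case no sufficiently large crossbar exists, and the structure theorem behind the LP instead yields a \emph{small} vertex set $Z$, balanced with respect to the demand pairs, whose deletion distributes all pairs except the $O(|Z|)$ that meet $Z$ (which we discard) among vertex-disjoint sub-instances, each holding at most a constant fraction of the pairs; we recurse on these and return the union of the sub-routings.

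\textbf{Analysis and the main obstacle.} Because each split reduces (up to constants) the number of demand pairs carried into a sub-instance, the recursion has depth $O(\log k)$; each level loses only a $\poly\log n$ factor and, additively, the $O(|Z|)$ pairs meeting the separator at that level, while the first case, whenever it fires, already delivers an $\Omega(\opt/\poly\log n)$ routing for the subtree below it. Choosing the threshold that delimits ``a large crossbar exists'' from ``a small demand-balanced separator exists'' so as to balance the quality of a crossbar routing against the cumulative separator loss along a root-to-leaf path, and then carrying out the resulting short optimization, is what produces the exponent $9/19$ and the bound $\Omega(\opt/(n^{9/19}\poly\log n))$. The step I expect to be the crux is precisely this structure theorem: proving that the \emph{failure} to find a large, genuinely node-disjoint crossbar forces a vertex separator that is simultaneously demand-balanced, leaves every sub-instance planar and amenable to the same recursion, and is small enough that the $O(|Z|)$ losses summed over all $O(\log k)$ levels stay well below $\opt/n^{9/19}$. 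Tightly coupled with it is the design of the new relaxation, which must at once have value $\Omega(\opt)$, attain a \emph{small} optimum exactly when the separator exists, and be roundable --- via Theorem~\ref{thm: routing on a disc and cyl main}, hence ultimately via the Robertson--Seymour characterization of routable disc and cylinder instances~\cite{RS-disc} --- into an honest node-disjoint routing when its optimum is large; making the reduction of crossbar routing to \NDPdisc/\NDPcyl go through also requires verifying that the cyclic order of the attached terminals on the crossbar boundary is consistent, so that Theorem~\ref{thm: routing on a disc and cyl main} applies without modification.
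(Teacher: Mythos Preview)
Your proposal diverges from the paper's proof in several structural ways, and at least one of the gaps looks genuine rather than cosmetic.

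\textbf{The LP relaxation.} The paper's relaxation is not written over ``hierarchical families of crossbars.'' It is the standard multicommodity-flow LP augmented with one extra family of constraints: for every subset $\mset'\subseteq\mset$ and every integer $z\ge\opt(G,\mset')$, the total flow routed between pairs of $\mset'$ is at most $z$. This LP is of course unsolvable directly (exponentially many constraints, and the right-hand sides are themselves NP-hard to compute). The trick is to run the Ellipsoid method and use the approximation algorithm itself as the separation oracle: given a semi-feasible flow, the algorithm either routes $\Omega(\opt/(N^{9/19}\poly\log n))$ pairs, or exhibits a subset $\mset'$ with $\opt(G,\mset')$ strictly below the flow it currently carries, i.e.\ a violated constraint. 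Ellipsoid then terminates in $\poly(n)$ iterations. Your LP is left vague, and nothing in your description explains why its optimum would certify the dichotomy you want.

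\textbf{No recursion; one well-linked decomposition.} The paper does \emph{not} recurse on a balanced separator. It performs a single node-well-linked decomposition (losing at most $w^*|\mset|/64$ vertices and a constant fraction of the pairs), and then handles every resulting well-linked piece $(G_j,\mset^j)$ in one shot. Inside a well-linked piece the treewidth is already $\Omega(W_j/\log k)$, so a large grid minor is guaranteed; there is no ``small balanced separator'' branch at this level. The global accounting issue you flag---that a sub-instance's optimum may route through the separator vertices you deleted---is precisely why naive recursion fails, and the paper resolves it not by summing separator losses over $O(\log k)$ levels but through the LP/Ellipsoid framework: if too many pieces are ``bad,'' their union $\mset'$ furnishes a violated subset-constraint.

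\textbf{The dichotomy is different, and the roles are inverted.} Within each well-linked piece the paper builds \emph{enclosures} (discs of boundary length $\Delta=\lceil W^{2/19}\rceil$) around terminals, defines a distance via these enclosures, and then clusters terminals into groups of pairwise distance $\le\Delta_0=\Theta(\Delta\log n)$ that are $\ge 5\Delta$ apart from one another. The two cases are:
\begin{itemize}
\item \emph{Light pairs} (many terminals spread out). Here the algorithm builds a \emph{crossbar}---shells of depth $\Theta(\Delta)$ around selected destination terminals, plus bundles of node-disjoint paths to a central terminal $t_0$---and routes directly through it (Theorem~6.3). There is no reduction to \NDPdisc/\NDPcyl\ in this case.
\item \emph{Heavy pairs} (many terminals concentrated near one or two faces). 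Here the problem is reduced to \NDPdisc\ or \NDPcyl\ by building a shell around the cluster(s), mapping terminals to shell cycles, and cutting a hole; Theorem~\ref{thm: routing on a disc and cyl main} is invoked on the resulting instance. No crossbar is built.
\end{itemize}
Your proposal has the crossbar case feeding into \NDPdisc/\NDPcyl, and a separator case handling the rest; this inverts the logic. In the paper the disc/cylinder reduction is what handles the ``clustered'' regime that defeats the flow LP, while the crossbar handles the ``spread-out'' regime directly.

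\textbf{Where $9/19$ actually comes from.} With $\Delta=W^{2/19}$ and the heavy/light threshold $\tau=W^{18/19}$, the light case routes $\Omega(W^{1/19}/\poly\log n)$ pairs, and the heavy case either routes that many or exhibits a violated LP constraint. Since well-linkedness forces $N_j\ge\Omega(W_j^2/\log^2 k)$ (via treewidth and the planar grid-minor theorem), $W_j^{1/19}=W_j/W_j^{18/19}\ge\Omega(W_j/N_j^{9/19}\cdot\poly\log n)$, which is the claimed ratio. The exponent is the output of balancing these fixed parameters inside one well-linked piece, not of trading crossbar quality against accumulated separator loss over a recursion.

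In short: your high-level instinct (new relaxation, structural dichotomy, Theorem~\ref{thm: routing on a disc and cyl main} as the endgame) is in the right direction, but the concrete mechanism---Ellipsoid with the algorithm as oracle, a one-shot well-linked decomposition, enclosures/shells to realize the spread-versus-clustered split, and a crossbar that routes by itself---is quite different from what you sketched, and your recursive separator branch would run into exactly the global-accounting obstacle that the LP/Ellipsoid framework is there to circumvent.
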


\begin{theorem}\label{thm: main2}
There is an efficient algorithm, that, given an  instance $(G,\mset)$ of \NDPplanar, computes a routing of $\Omega\left(\frac{(\opt(G,\mset))^{1/19}}{\poly\log n}\right )$ demand pairs of $\mset$ via node-disjoint paths in $G$.
\end{theorem}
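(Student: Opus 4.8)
The plan is to reduce Theorem~\ref{thm: main2}, losing only a $\poly\log n$ factor in the number of routed pairs, to the task of exhibiting a single sub-instance of $(G,\mset)$ of disc or cylinder type whose optimum is $\tilde\Omega\bigl(\opt(G,\mset)^{1/19}\bigr)$; Theorem~\ref{thm: routing on a disc and cyl main} then routes $\tilde\Omega\bigl(\opt^{1/19}\bigr)$ of its pairs, and these lift to node-disjoint $s_i$--$t_i$ paths in $G$. As preprocessing, fix a planar embedding of $G$ and an optimal solution $\pset^{\ast}$ with $|\pset^{\ast}|=\opt$; by the usual reductions assume each vertex is an endpoint of at most one demand pair; and bucket the paths of $\pset^{\ast}$ by length to obtain a class $[D,2D)$ carrying at least $\opt/O(\log n)$ of them, with associated pairs $\mset_{1}\subseteq\mset$ and paths $\pset^{\ast}_{1}\subseteq\pset^{\ast}$. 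It then suffices to route $\tilde\Omega\bigl(|\pset^{\ast}_{1}|^{1/19}\bigr)$ pairs of $\mset_{1}$.

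The heart of the argument is a dichotomy applied recursively inside the embedding. Either (i) some disc $R$ --- a region bounded by a simple closed curve meeting $G$ only in vertices and enclosing at most a prescribed vertex budget --- contains both endpoints of $\Omega\bigl(|\pset^{\ast}_{1}|/\poly\log n\bigr)$ paths of $\pset^{\ast}_{1}$, or (ii) the terminals are ``spread out'' and no such small region exists. In case (i) recurse on the instance induced by $R$ together with the pairs fully contained in $R$: the vertex budget shrinks by a fixed power at each level, so the recursion --- which terminates when case (ii) applies or the instance is trivial --- has small depth, each level costs a $\poly\log n$ factor plus a fixed-power loss in the governing size parameter, and balancing the resulting bound against the gain of case (ii) is exactly what fixes the exponent at $1/19$. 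In case (ii) we use the fact that the terminals cannot be trapped in a small region, together with the length bound $D$, to carve out a cylindrical region $\Sigma\subseteq G$ with cuffs $\Gamma_{1},\Gamma_{2}$ and a large subset $\mset_{2}\subseteq\mset_{1}$, so that the sources of $\mset_{2}$ route to $\Gamma_{1}$ and the destinations to $\Gamma_{2}$ along node-disjoint paths whose interiors avoid $\Sigma$; contracting these ``interface'' paths yields a genuine \NDPcyl instance on $\Sigma$ with optimum at least $|\mset_{2}|$, and Theorem~\ref{thm: routing on a disc and cyl main} finishes the job. Concatenating the interface paths with the cylinder routing gives the desired node-disjoint paths in $G$. (A variant of the same machinery, tuned in its base case to exploit $|V(G)|$, together with the $O(\sqrt{n})$ greedy algorithm, underlies the proof of Theorem~\ref{thm: main1}.)

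The main obstacle is making this dichotomy simultaneously quantitative and geometrically consistent. In case (ii) one must actually construct the interface paths and certify that they are node-disjoint from one another and from the interior of $\Sigma$ reserved for the cylinder routing; the standard multicommodity-flow relaxation is too weak here --- its integrality gap is $\Omega(\sqrt{n})$ even on planar graphs --- so one needs the stronger LP relaxation of \NDP introduced in this paper, together with well-linkedness and planar-routing arguments, to certify that enough sources and destinations can be pushed onto the cuffs. In case (i) one must choose the ``small region'' threshold and the recursion so that the depth stays bounded while the per-level loss remains $\poly\log n$, and one must account carefully for the demand pairs discarded in passing to a single length class and to the pairs lying entirely inside $R$; this optimization is what pins the exponent at $1/19$, and verifying that no step is lossier than claimed is the most delicate part of the bookkeeping.
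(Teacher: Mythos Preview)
Your proposal does not constitute a proof, and the approach it sketches diverges substantially from the paper's.

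\textbf{Concrete gaps.} First, the argument is non-constructive: you ``fix an optimal solution $\pset^{\ast}$'' and bucket its paths by length, but an efficient algorithm has no access to $\pset^{\ast}$, and you never explain how to replace this oracle by something computable. Second, the dichotomy is never made quantitative: the ``prescribed vertex budget'' for the disc $R$, the recursion depth, and the per-level loss are all left unspecified, and you yourself concede in the final paragraph that ``making this dichotomy simultaneously quantitative and geometrically consistent'' is the main unresolved obstacle. Third, in case~(ii) you assert that one can carve out a cylinder $\Sigma$ and route sources/destinations to its cuffs along node-disjoint interface paths, but you give no construction; you only point to ``the stronger LP relaxation \ldots\ together with well-linkedness and planar-routing arguments,'' which is a reference to tools, not a proof. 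Finally, the claim that the exponent $1/19$ arises from ``balancing'' the recursion is asserted rather than derived.

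\textbf{What the paper actually does.} The paper's proof of Theorem~\ref{thm: main2} (Section~\ref{sec: proof of 2nd main thm}) has no recursion on regions and no bucketing by path length. It is essentially a direct corollary of the machinery built for Theorem~\ref{thm: main1}: run the Ellipsoid algorithm on (LP-flow2), and for each semi-feasible solution apply the well-linked decomposition (Theorem~\ref{thm: wld}) followed by Theorem~\ref{thm: main}. The latter, for each piece $(G_j,\mset^j)$ with $W_j=w^*|\mset^j|$, either routes $\Omega(W_j^{1/19}/\poly\log n)$ pairs or exhibits a set $\tmset^j$ certifying a violated LP constraint. If at least half the total weight $W=\sum_j W_j$ sits in pieces of the first kind, then since each $W_j\le W$,
\[
\sum_{j} \Omega\!\left(\frac{W_j^{1/19}}{\poly\log n}\right)
\;\ge\; \sum_j \Omega\!\left(\frac{W_j}{W^{18/19}\poly\log n}\right)
\;=\; \Omega\!\left(\frac{W^{1/19}}{\poly\log n}\right)
\;=\; \Omega\!\left(\frac{(X^*)^{1/19}}{\poly\log n}\right),
\]
and otherwise the union of the sets $\tmset^j$ yields a violated constraint, so the Ellipsoid loop continues. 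The exponent $1/19$ is not the output of any geometric recursion; it is hard-wired into Theorem~\ref{thm: main} via the parameter choices $\Delta=\lceil W^{2/19}\rceil$ and $\tau=W^{18/19}$ in the enclosure/shell construction and the Case~1/Case~2 analysis of Sections~\ref{sec: enclosures, shells, subsets}--\ref{sec: case 2}.
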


Notice that when $\opt(G,\mset)$ is small, Theorem~\ref{thm: main2} gives a much better that $\tilde{O}(n^{9/19})$-approximation.

We now give a high-level intuitive overview of the proof of Theorem~\ref{thm: main1}. Given an instance $(G,\mset)$ of the \NDP problem, we denote by $\tset$ the set of vertices participating in the demand pairs in $\mset$, and we refer to them as \emph{terminals}. We start with a quick overview of the $\tilde O(n^{1/4})$-approximation algorithm of~\cite{NDP-grids} for the \NDPgrid problem, since their algorithm was the motivation for this work. 
The main observation of~\cite{NDP-grids} is that the instances of \NDPgrid, for which the multicommodity flow relaxation exhibits the $\Omega(\sqrt n)$ integrality gap, have terminals close to the grid boundary. When all terminals are at a distance of at least $\Omega(n^{1/4})$ from the boundary of the grid, one can find an $\tilde O(n^{1/4})$-approximation via LP-rounding (but unfortunately the integrality gap remains polynomial in $n$ even in this case).
When the terminals are close to the grid boundary, the integrality gap of the LP-relaxation becomes $\Omega(\sqrt n)$. However, this special case of \NDPgrid can be easily approximated via simple dynamic programming. \iffull For example, when all terminals lie on the grid boundary, the integrality gap of the LP-relaxation is $\Omega(\sqrt n)$, but a constant-factor approximation can be achieved via standard dynamic programming. More generally, when all terminals are within distance at most $O(n^{1/4})$ from the grid boundary, we can obtain an $O(n^{1/4})$-approximation via dynamic programming. \fi \ifabstract We omit the details here. \fi Overall, we partition the demand pairs of $\mset$ into two subsets, depending on whether the terminals lie close to or far from the grid boundary, and obtain an $\tilde{O}(n^{1/4})$-approximation for each of the two resulting problem instances separately, selecting the better of the two solutions as our output.

This idea is much more difficult to implement in general planar graphs. For one thing, the notion of the ``boundary'' of a planar graph is meaningless - any face in the drawing of the planar graph can be chosen as the outer face. We note that the standard multicommodity flow LP-relaxation performs poorly not only when all terminals are close to the boundary of a single face (a case somewhat similar to \NDPdisc), but also when there are two faces $F$ and $F'$, and for every demand pair $(s,t)\in \mset$, $s$ is close to the boundary of $F$ and $t$ is close to the boundary of $F'$ (this setting is somewhat similar to \NDPcyl). 
The notion of ``distance'', when deciding whether the terminals lie close to or far from a face boundary is also not well-defined, since we can subdivide edges and artificially modify the graph in various ways in order to manipulate the distances without significantly affecting routings. Intuitively, we would like to define the distances between the terminals in such a way that, on the one hand, whenever we find a set $\mset'\subseteq \mset$ of demand pairs, such that all terminals participating in the pairs in $\mset'$ are far enough from each other, then we can route a large subset of the demand pairs in $\mset'$. On the other hand, if we find a set $\mset''\subseteq \mset$ of demand pairs, with all terminals participating in the pairs in $\mset''$ being close to the boundary of some face (or a pair of faces), then we can find a good approximate solution to instance $(G,\mset'')$ (for example, by reducing the problem to \NDPdisc or \NDPcyl). Since we do not know beforehand which face (or faces) will be chosen as the ``boundary'' of the graph, we cannot partition the problem into two sub-problems and employ different techniques to solve each sub-problem as we did for \NDPgrid. Instead, we need a single framework in which both cases can be handled.

We assume that every terminal participates in exactly one demand pair, and that the degree of every terminal is $1$. This can be achieved via a standard transformation, where we create several copies of each terminal, and connect them to the original terminal. This transformation may introduce up to $O(n^2)$ new vertices. Since we are interested in obtaining an $\tilde O(n^{9/19})$-approximation for \NDPplanar, we denote by $N$ the number of the non-terminal vertices in the new graph $G$.  Abusing the notation, we denote the total number of vertices in the new problem instance by $n$.
It is now enough to obtain an  $\tilde O(N^{9/19})$-approximation for the new problem instance.

Our first step is to define a new LP-relaxation of the problem. We assume that we have guessed correctly the value $\opt$ of the optimal solution. We start with the standard multicommodity flow LP-relaxation, where we try to send $\opt$ flow units between the demand pairs, so that the maximum amount of flow through any vertex is bounded by $1$. We then add the following new set of constraints to the LP: for every subset $\mset'\subseteq \mset$ of the demand pairs, for every value $\opt(G,\mset')\leq z\leq k$, the total amount of flow routed between the demand pairs in $\mset'$ is no more than $z$. Adding this type of constraints may seem counter-intuitive. We effectively require that the LP solves the problem exactly, and naturally we cannot expect to be able to do so efficiently. Since the number of the resulting constraints is exponential in $k$, and since we do not know the values $\opt(G,\mset')$, we indeed cannot solve this LP efficiently. In fact, our algorithm does not attempt to solve the LP exactly. Instead, we employ the Ellipsoid algorithm, that in every iteration produces a potential solution to the LP-relaxation. We then show an algorithm that, given such a potential solution, either finds an integral solution routing $\tilde \Omega(\opt/N^{9/19})$ demand pairs, or it returns some subset $\mset'\subseteq \mset$ of demand pairs, whose corresponding LP-constraint is violated. Therefore, we use our approximation algorithm as the separation oracle for the Ellipsoid algorithm. We are then guaranteed that after $\poly(n)$ iterations, we will obtain a solution routing the desired number of demand pairs, as only $\poly(n)$ iterations are required for the Ellipsoid algorithm in order to find a feasible LP-solution.

The heart of the proof of Theorem~\ref{thm: main1} is then an algorithm that, given a potential (possibly infeasible) solution to the LP-relaxation, either finds an integral solution routing $\tilde \Omega(\opt/N^{9/19})$ demand pairs, or returns some subset $\mset'\subseteq \mset$ of demand pairs, whose corresponding LP-constraint is violated. We can assume without loss of generality that the fractional solution we are given satisfies all the standard multicommodity flow constraints, as this can be verified efficiently. For simplicity of exposition, we assume that every demand pair in $\mset$ sends the same amount of $w^*$ flow units to each other. 

 We assume for now that the set $\tset$ of terminals is $\alphawl$-well-linked, for $\alphawl=\Theta(w^*/\log n)$ - that is, for every pair $(\tset',\tset'')$ of disjoint equal-sized subsets of vertices of $\tset$, we can connect vertices of $\tset'$ to vertices of $\tset''$ by at least $\alphawl\cdot |\tset'|$ node-disjoint paths. We discuss this assumption in more detail below. 
We assume that we are given a drawing of $G$ on the sphere.
Our first step is to define the notion of distances between the terminals. In order to do so, we first construct \emph{enclosures} around them. Throughout the proof, we use a parameter $\Delta=\opt^{2/19}$. We say that a curve $\gamma$ on the sphere is a \emph{$G$-normal curve} iff it intersects the drawing of $G$ only at its vertices. The length of such a curve is the number of vertices of $G$ it contains. An enclosure around a terminal $t$ is a disc $D_t$ containing $t$, whose boundary, that we denote by $C_t$, is a $G$-normal curve of length exactly $\Delta$, so that at most $O(\Delta/\alphawl)$ terminals lie in $D_t$. 
We show an efficient algorithm to construct the enclosures $D_t$ around the terminals $t$, so that the following additional conditions hold: (i) if $D_t\subseteq D_{t'}$ for any pair $t,t'\in \tset$ of terminals, then $D_t=D_{t'}$; and (ii) if $D_t\cap D_{t'}= \emptyset$, then there are $\Delta$ node-disjoint paths connecting the vertices of $C_t$ to the vertices of $C_{t'}$. 
We now define the distances between pairs of terminals. For every pair $(t,t')\in \tset$, distance $d(t,t')$ is the length of the shortest $G$-normal curve, connecting a vertex of $C_t$ to a vertex of $C_{t'}$. 

Next, we show that one of the following has to happen: either there is a large collection $\tmset\subseteq\mset$ of demand pairs, such that all terminals participating in the pairs in $\tmset$ are at a distance at least $\Omega(\Delta)$ from each other; or there is a large collection $\tmset'\subseteq \mset$ of demand pairs, and two faces $F,F'$ in the drawing of $G$ (with possibly $F=F'$), such that for every demand pair in $\tmset'$, one of its terminals is within distance at most $\tilde O(\Delta)$ from the boundary of $F$, and the other is within distance at most $\tilde O(\Delta)$ from the boundary of $F'$. In the former case, we show that we can route a large subset of the demand pairs in $\tmset$ via node-disjoint paths, by constructing a special routing structure called a crossbar (this construction exploits well-linkedness of the terminals and the paths connecting the enclosures). In the latter case, we reduce the problem to \NDPdisc or \NDPcyl, depending on the distance between the faces $F$ and $F'$, and employ the approximation algorithms for these problems to route $\tilde\Omega\left (\frac{\opt(G,\tmset')}{\poly(\Delta)}\right )$ demand pairs from $\tmset'$ in $G$. If the resulting number of demand pairs routed is close enough to $\opt$, then we return this as our final solution. Otherwise, we show that the LP-constraint corresponding to the set $\tmset'$ of demand pairs is violated, or equivalently, the amount of flow sent by the LP solution between the demand pairs in $\tmset'$ is greater than $\opt(G,\tmset')$.

So far we have assumed that the terminals participating in the demand pairs in $\mset$ are $\alphawl$-well-linked. In general this may not be the case. Using standard techniques, we can perform a well-linked decomposition: that is, compute a subset $U\subseteq V(G)$ of at most $\opt/64$ vertices, such that, if we denote the set of all connected components of $G\setminus U$ by  $\set{G_1,\ldots,G_r}$, and for each $1\leq i\leq r$, we denote by $\mset_i\subseteq G_i$ the set of the demand pairs contained in $G_i$, then the terminals participating in the demand pairs in $\mset_i$ are $\alphawl$-well-linked  in $G_i$. We are then guaranteed that $\sum_{i=1}^r\opt(G_i,\mset_i)\geq \frac{63}{64}\opt$. It is then tempting to apply the algorithm described above to each of the graphs separately. Indeed, if, for each $1\leq i\leq r$, we find a set $\pset_i$ of node-disjoint paths, routing $\Omega\left (\frac{\opt(G_i,\mset_i)}{N_i^{9/19}\cdot \poly\log n}\right)$ demand pairs of $\mset_i$ in $G_i$ (where $N_i$ denotes the number of the non-terminal vertices in $G_i$), then we obtain an $O(N^{9/19}\cdot\poly\log n)$-approximate solution overall. Assume now that for some $1\leq i\leq r$, we find a subset $\mset'_i\subseteq \mset_i$ of demand pairs, such that $\opt(G_i,\mset'_i)<w^*|\mset'_i|/16$. Unfortunately, the set $\mset'_i$ of demand pairs does not necessarily define a violated LP-constraint, since it is possible that $\opt(G,\mset'_i)>>\opt(G_i,\mset'_i)$, if the optimal routing uses many vertices of $U$ (and possibly from some other graphs $G_j$). In general, the number of vertices in set $U$ is relatively small compared to $\opt$, so in the global accounting across all instances $(G_{i'},\mset_{i'})$, only a small number of paths can use the vertices of $U$. But for any specific instance $(G_i,\mset_i)$, it is possible that most paths in the optimal solution to instance $(G,\mset_i)$ use the vertices of $U$. In order to overcome this difficulty, we need to perform a careful global accounting across all resulting instances $(G_i,\mset_i)$. 

\paragraph{Organization}
We start with preliminaries in Section~\ref{sec: prelims}. Section~\ref{sec: disc and cylinder} is devoted to the proof of Theorem~\ref{thm: routing on a disc and cyl main}. Since this is not our main result, and the proof is somewhat long (though not very difficult), most of the proof appears in Section~\ref{sec: proofs for disc and cylinder} of the Appendix. Sections~\ref{sec: alg overview}--\ref{sec: case 2}  are devoted to the proof of Theorem~\ref{thm: main1}: Section~\ref{sec: alg overview} provides an overview of the algorithm and some initial steps; Section~\ref{sec: enclosures, shells, subsets} introduces the main technical tools that we use: enclosures, shells, and a partition of the terminals into subsets;  and Sections~\ref{sec: case 1} and~\ref{sec: case 2} deal with Case 1 (when many terminals are far from each other) and Case 2 (when many terminals are close to the boundaries of at most two faces), respectively. We prove Theorem~\ref{thm: main2} in Section~\ref{sec: proof of 2nd main thm}, and provide conclusions in Section~\ref{sec: conclusion}. For convenience, we include in Section~\ref{sec: appendix-params-table} of the Appendix a table of the main parameters used in the proof of Theorem~\ref{thm: main1}.

\label{----------------------------------------sec: prelims-------------------------------------}
\section{Preliminaries}\label{sec: prelims}

Given a graph $G$ and a subset $U$ of its vertices, we denote by $N(U)$ the set of all neighbors of $U$, that is, all vertices $v\in V(G)\setminus U$, such that there is an edge $(u,v)\in E(G)$ for some $u\in U$. We say that two paths $P$ and $P'$ are \emph{internally disjoint} iff  for every vertex $v\in P\cap P'$, $v$ is an endpoint of both $P$ and $P'$. Given a path $P$ and a subset $U$ of vertices of $G$, we say that $P$ is internally disjoint from $U$ iff every vertex in $P\cap U$ is an endpoint of $P$. Similarly, $P$ is internally disjoint from a subgraph $G'$ of $G$ iff $P$ is internally disjoint from $V(G')$.
Given a graph $G$ and a set $\mset$ of demand pairs in $G$, for every subset $\mset'\subseteq\mset$ of the demand pairs, we denote by $\tset(\mset')$ the set of all vertices participating in the demand pairs in $\mset'$. For a subset $\mset'\subseteq\mset$ of the demand pairs, and a sub-graph $H\subseteq G$, let $\opt(H,\mset')$ denote the value of the optimal solution to instance $(H,\mset')$.

Given a drawing of any planar graph $H$ in the plane, and given any cycle $C$ in $H$, we denote by $D(C)$ the unique disc in the plane whose boundary is $C$. Similarly, if $C$ is a closed simple curve in the plane, $D(C)$ is the unique disc whose boundary is $C$.  When the graph $H$ is drawn on the sphere, there are two discs whose boundaries are $C$. In such cases we will explicitly specify which of the two discs we refer to. Given any disc $D$ (in the plane or on the sphere), we use $\dnot$ to denote the disc $D$ without its boundary. We say that a vertex of $H$ belongs to disc $D$, and denote $v\in D$, if $v$ is drawn inside $D$ or on its boundary. 

Given a planar graph $G$, drawn on a surface $\Sigma$, we say that a curve $C$ in $\Sigma$ is \emph{$G$-normal}, iff it intersects the drawing of $G$ at vertices only.  The set of vertices of $G$ lying on $C$ is denoted by $V(C)$, and the length of $C$ is $\ell(C)=|V(C)|$. 
For any disc $D$, whose boundary is a $G$-normal curve, we denote by $V(D)$ the set of all vertices of $G$ lying inside $D$ or on its boundary.


\begin{definition}
Let $\gamma,\gamma'$ be two curves in the plane or on the sphere. We say that $\gamma$ and $\gamma'$ \emph{cross},  iff there is a disc $D$, whose boundary is a simple closed curve that we denote by $\beta$, such that: 

\begin{itemize}
\item  $\gamma\cap D$ is a simple open curve, whose endpoints we denote by $a$ and $b$; 
\item $\gamma'\cap D$ is a simple open curve, whose endpoints we denote by $a'$ and $b'$; and

\item $a,a',b,b'\in \beta$, and they appear on $\beta$ in this circular order.
\end{itemize}
\end{definition}

Given a graph $G$ embedded in the plane or on the sphere, we say that two paths $P,P'$ in $G$ cross iff their images cross. Similarly, we say that a path $P$ crosses a curve $\gamma$ iff the image of $P$ crosses $\gamma$.

\paragraph{Sparsest Cut.}
In this paper we use the node version of the sparsest cut problem, defined as follows. Suppose we are given a graph $G=(V,E)$ with a subset $\tset\subseteq V$ of its vertices called terminals. A vertex cut is a tri-partition $(A,C,B)$ of $V$, such that there are no edges in $G$ with one endpoint in $A$ and another in $B$. If $(A\cup C)\cap \tset,(B\cup C)\cap \tset\neq \emptyset$, then the sparsity of the cut $(A,C,B)$ is $\frac{|C|}{\min\set{|A\cap \tset|,|B\cap \tset|}+|C\cap \tset|}$. The sparsest cut in $G$ with respect to the set $\tset$ of terminals is a vertex cut $(A,C,B)$ with $(A\cup C)\cap \tset,(B\cup C)\cap \tset\neq \emptyset$, whose sparsity is the smallest among all such cuts. Amir, Krauthgamer and Rao~\cite{vertex-sparsest-planar} showed an efficient algorithm, that, given any planar graph $G$ with a set $\tset\subseteq V(G)$ of terminal vertices, computes a vertex cut $(A,C,B)$ in $G$, whose sparsity with respect to $\tset$ is within a constant factor of the optimal one. We denote this algorithm by $\algsc$, and the approximation factor it achieves by $\alphasc$, so $\alphasc$ is a universal constant.

In the special case of the sparsest cut problem that we consider in our paper, all terminals have degree $1$, and no edge of $G$ connects any pair of terminals. We show that in this case we can compute a near-optimal solution $(A,C,B)$  to the sparsest cut problem with $C\cap \tset=\emptyset$.
The proof of the following observation uses standard techniques and is deferred to the Appendix.

\begin{observation}\label{obs: sparsest cut}
Let $G$ be a planar graph and let $\tset\subseteq V(G)$ be a subset of its vertices called terminals, with $|\tset|\geq 3$. Assume that the degree of every terminal is $1$, and no edge of $G$ connects any pair of terminals. Then there is an efficient algorithm to compute a vertex cut $(A,C,B)$ in $G$, whose sparsity is within a factor $\alphasc$ of the optimal one, and $C\cap \tset=\emptyset$.
\end{observation}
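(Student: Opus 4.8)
Here is my plan.

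\textbf{Overview.} The goal is to take the constant-factor sparsest cut algorithm $\algsc$ of Amir, Krauthgamer, and Rao, which produces a tripartition $(A',C',B')$ with sparsity within $\alphasc$ of optimal, and massage it into a tripartition $(A,C,B)$ with the same asymptotic sparsity guarantee but with the additional property $C\cap\tset=\emptyset$. The whole point is that terminals have degree $1$ and no edge joins two terminals, so a terminal sitting in the separator $C'$ is essentially ``wasted'': its single neighbor already blocks any path through it, so we may freely move the terminal into whichever of the two sides keeps the cut balanced, and move its (unique, non-terminal) neighbor into $C$ instead if necessary.

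\textbf{Key steps.} First, run $\algsc$ to get $(A',C',B')$ with sparsity at most $\alphasc$ times optimal. We may assume the optimal sparsity is small (otherwise any trivial cut works up to constants), and in particular that $C'$ is not ``too large.'' Second, process each terminal $t\in C'\cap\tset$ one at a time. Let $u$ be the unique neighbor of $t$; note $u\notin\tset$. There are two cases. If $u\in C'$ already, simply move $t$ out of $C'$: put $t$ into $A'$ if $|A'\cap\tset|\le|B'\cap\tset|$ at the current moment, else into $B'$. Since $t$'s only neighbor $u$ stays in $C'$, no new $A$–$B$ edge is created, so the cut remains valid; $|C'|$ strictly decreases and the balance term in the denominator does not decrease. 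If $u\in A'$ (the case $u\in B'$ is symmetric), move $t$ into $A'$ as well (so the edge $(t,u)$ is internal to $A'$), and move $u$ from $A'$ into $C'$ — but only if $u$ is not already needed elsewhere; in any case this replaces the terminal $t$ in $C'$ by the non-terminal $u$, so $|C'|$ does not increase, $|C'\cap\tset|$ strictly decreases, and validity is preserved because every other neighbor of $u$ is now separated from $B'$ by $u\in C'$. After processing all such terminals we obtain $(A,C,B)$ with $C\cap\tset=\emptyset$, $|C|\le|C'|$, and $\min\{|A\cap\tset|,|B\cap\tset|\}+|C\cap\tset|\ge \min\{|A'\cap\tset|,|B'\cap\tset|\}+|C'\cap\tset| - O(1)$ per step; one must check the denominator does not collapse.

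\textbf{The balance issue — and the main obstacle.} The delicate point is the denominator $\min\{|A\cap\tset|,|B\cap\tset|\}+|C\cap\tset|$. Since we drive $|C\cap\tset|$ to $0$, we must ensure $\min\{|A\cap\tset|,|B\cap\tset|\}$ does not become tiny. The greedy ``assign $t$ to the currently smaller side'' rule keeps the two sides balanced to within $1$, so at the end $\min\{|A\cap\tset|,|B\cap\tset|\}\ge (|A'\cap\tset|+|B'\cap\tset|)/2 - O(|C'\cap\tset|) \ge \tfrac12(|\tset|-|C'|)$, and since the original sparsity was small, $|C'|$ is a small fraction of $|\tset|$, so this is $\Omega(|\tset|)$ — hence comparable to $\min\{|A'\cap\tset|,|B'\cap\tset|\}+|C'\cap\tset|$ up to a constant. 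The only real obstacle is handling the edge case where $|C'|$ is comparable to $|\tset|$ (i.e.\ the optimal sparsity is $\Omega(1)$): there we instead output a trivial balanced cut $(A,C,B)$ with $C=\emptyset$ and $A,B$ an arbitrary near-equal split of $\tset$ with the (at most $|\tset|^2$, but actually we can just pick $C$ to be a small separating set, or note the graph's relevant structure) — its sparsity is $O(1/|\tset|)\cdot|\text{something}|$; more cleanly, since there is always a cut of sparsity $O(1)$ trivially (take $C=N(t)$ for a single terminal $t$, giving sparsity $1/\Theta(1)$), and the optimum is $\Omega(1)$, we are within a constant factor. I would fold this into a short case analysis at the start: if the sparsity returned by $\algsc$ exceeds some absolute constant, return a hand-crafted $O(1)$-sparsity cut avoiding terminals in its separator; otherwise run the rounding above. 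The bookkeeping to make the per-step constants add up correctly is the part that needs care, but it is routine.
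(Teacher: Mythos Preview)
Your high-level plan is the same as the paper's: run $\algsc$, then push each terminal $t\in C'$ out of the separator using the fact that its unique neighbor $u$ is a non-terminal. But the execution has two concrete bugs.

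First, in the case $u\in A'$, moving $u$ into $C'$ is both unnecessary and harmful. Since $u$ was already in $A'$, every neighbor of $u$ lies in $A'\cup C'$ (otherwise the original cut was invalid), so once $t$ joins $A'$ the cut is valid without touching $u$. By moving $u$ you keep $|C'|$ unchanged instead of letting it drop by one; then in a step where $|A'\cap\tset|\ge|B'\cap\tset|$ the denominator drops by one while the numerator stays put, so sparsity strictly increases.

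Second, your denominator analysis appeals to the ``assign $t$ to the currently smaller side'' rule, but your algorithm does not follow that rule: in the $u\in A'$ case you always send $t$ to $A'$ regardless of balance. If all the neighbors $u$ happen to lie in $A'$, then $|B'\cap\tset|$ never grows, and your claimed lower bound $\min\{|A\cap\tset|,|B\cap\tset|\}\ge\tfrac12(|\tset|-|C'|)$ need not hold.

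The paper's fix is to case-split on which side currently has fewer terminals, not on where $u$ sits. If $|A\cap\tset|<|B\cap\tset|$, move $t$ to $A$ and, if $v_t\notin C$, add the non-terminal $v_t$ to $C$; then $|C|$ does not increase and one checks the denominator is \emph{exactly} preserved. If the two sides are tied, move $t$ to whichever side already contains $v_t$; then $|C|$ drops by one and the denominator drops by at most one, which does not increase sparsity provided $\beta\le 1$. The paper enforces $\beta\le 1$ up front by replacing the $\algsc$ output with the trivial cut $(\emptyset,\tset,V\setminus\tset)$ of sparsity $1$ whenever $\algsc$ returns something worse; this single normalization replaces your separate ``optimum is $\Omega(1)$'' endgame and makes the per-step argument go through cleanly with no global balancing estimate needed.
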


\paragraph{Nested Segments.}
Suppose we are given a graph $G$, a cycle $C$ in $G$, and a collection $\Sigma$ of (not necessarily disjoint) segments of $C$, where each segment is either $C$ itself, or a sub-path of $C$. We say that $\Sigma$ is a \emph{nested set of segments of $C$} iff for all $\sigma,\sigma'\in \Sigma$, either $\sigma\subseteq \sigma'$, or $\sigma'\subseteq\sigma$, or $\sigma$ and $\sigma'$ are internally disjoint - that is, every vertex in $\sigma\cap \sigma'$ is an endpoint of both segments. We define a set of nested segments of a closed curve $C$, and of a path $P$ in $G$ similarly.

\paragraph{Decomposition of Forests.} A directed forest $F$ is a disjoint union of arborescences $\tau_1,\ldots,\tau_r$ for some $r\geq 1$, where in each arborescence $\tau_i$, all edges are directed towards the root. 
We use the following simple claim about partitioning directed forests into collections of paths.
 Similar decompositions were used in previous work, see e.g. Lemma 3.5 in~\cite{Kleinberg-planar}. The proof is included in Appendix for completeness.

\begin{claim}\label{claim: partition the forest} There is an efficient algorithm, that, given a directed forest $F$ with $n$ vertices, computes a partition $\rset=\set{R_1,\ldots,R_{\ceil{\log n}}}$ of $V(F)$ into subsets, such that for each $1\leq j\leq \ceil{\log n}$, $F[R_j]$ is a collection of disjoint directed paths, that we denote by $\pset_j$. Moreover, for all $v,v'\in R_j$, if there is a directed path from $v$ to $v'$ in $F$, then they both lie on the same path in $\pset_j$.
\end{claim}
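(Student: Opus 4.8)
The plan is to use the standard long-path / heavy-path style decomposition, indexing vertices by a depth-like quantity defined on the forest. First I would define, for every vertex $v \in V(F)$, the quantity $h(v)$ to be the number of vertices on the longest directed path in $F$ that ends at $v$ (equivalently, the height of $v$ in its arborescence, where leaves have $h = 1$). Since $F$ has $n$ vertices, every directed path has at most $n$ vertices, so $h(v) \in \{1,\ldots,n\}$ for all $v$. The key structural observation is that each non-root vertex $v$ has a unique out-neighbor (parent) $p(v)$, and $h(p(v)) > h(v)$, since any directed path ending at $v$ extends to one ending at $p(v)$ by appending the edge $(v,p(v))$; conversely a longest path ending at $p(v)$ restricted to end at a child achieves value at least $h(p(v))-1$ at some child, so in fact the children $u$ of any vertex $w$ take values $h(u) \le h(w)-1$.

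Next I would group vertices by the value $\lceil \log_2 h(v)\rceil \in \{0,1,\ldots,\lceil\log n\rceil\}$ — more conveniently, set $R_j = \{ v : h(v) \in [2^{j-1}, 2^j) \}$ for $j = 1,\ldots,\lceil \log n\rceil$ (adjusting the first block to include $h=1$), so that the $R_j$ partition $V(F)$ into $\lceil \log n \rceil$ classes. I claim $F[R_j]$ is a disjoint union of directed paths: it suffices to show every vertex of $R_j$ has at most one out-neighbor in $R_j$ and at most one in-neighbor in $R_j$. The out-neighbor statement is immediate since each vertex has at most one parent in $F$. For the in-neighbor statement, suppose $v \in R_j$ had two children $u_1, u_2 \in R_j$. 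Both $h(u_1), h(u_2) < h(v)$; but within the dyadic block $[2^{j-1},2^j)$, we need a child $u$ of $v$ with $h(u) \ge h(v) - 1 \ge 2^{j-1}$, and having two such children both with height $< h(v)$ in the same block is not itself a contradiction — so I must be more careful here.

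The fix — and this is the one delicate point I expect to be the main obstacle — is to use a heavy-child rule rather than a pure dyadic bucketing. For each non-leaf $w$, designate exactly one child $c(w)$ achieving $h(c(w)) = \max_u h(u) = h(w) - 1$ as the \emph{heavy child}; call the edge $(c(w), w)$ heavy, all other edges light. The heavy edges decompose $F$ into vertex-disjoint directed paths (each vertex has at most one incoming and at most one outgoing heavy edge by construction). Now partition these heavy paths by $\lceil \log_2 h(v_{\text{top}}) \rceil$ where $v_{\text{top}}$ is the top (root-most) vertex of the heavy path; standard heavy-path analysis shows that along any root-directed path, the value $h$ at least doubles each time we traverse a light edge, so a root-directed path meets at most $\lceil \log n\rceil$ distinct heavy paths, meaning the heavy paths fall into $\lceil\log n\rceil$ classes such that within a class no two heavy paths are connected by a directed path in $F$. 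Taking $R_j$ to be the union of the vertices of all heavy paths in class $j$ gives the partition: $F[R_j] = \pset_j$ is exactly the collection of heavy paths in class $j$ (disjoint directed paths, since any $F$-edge between two class-$j$ vertices on different heavy paths would be a light edge creating a directed connection between them, contradicting the class property), and if $v, v'$ lie in the same $R_j$ with a directed $v \to v'$ path in $F$, that path cannot cross a light edge leaving class $j$, so $v$ and $v'$ lie on the same heavy path. The algorithm is clearly efficient: compute $h$ by a reverse topological pass, pick heavy children greedily, extract heavy paths, and bucket by $\lceil \log_2 h \rceil$ of the top vertex, all in $O(n \log n)$ time. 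The main obstacle, as noted, is getting the bucketing exactly right so that $F[R_j]$ has maximum in-degree $1$; the heavy-path designation is what resolves it cleanly, and I would present that as the core of the argument.
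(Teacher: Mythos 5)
There is a genuine gap at exactly the step you flagged as the delicate point. Your argument rests on the claim that, with heavy children chosen to maximize the \emph{height} $h$, the value of $h$ at least doubles every time a root-directed path traverses a light edge. This is false. Crossing a light edge $(t,w)$ only tells you that $w$ has some other child $c(w)\neq t$ with $h(c(w))=h(w)-1\geq h(t)$, i.e.\ $h(w)\geq h(t)+1$ --- which holds for every edge and gives no doubling. Concretely, take a root $r$ with two children $x,y$ whose subtrees are each directed paths on $5$ vertices, and designate $y$ as the heavy child of $r$. Then $x$ is the top of a heavy path with $h(x)=5$, $r$ is the top of a heavy path with $h(r)=6$, both values satisfy $\lceil\log_2 h\rceil=3$, and the light edge $(x,r)$ joins the two heavy paths; in $F[R_3]$ the vertex $r$ then has in-degree $2$ (in-neighbors $x$ and $y$), so $F[R_3]$ is not a disjoint union of directed paths. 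Chaining such gadgets shows a root-directed path can cross $\Theta(\sqrt n)$ light edges with $h$ increasing by only $1$ at each one, so no reassignment of the dyadic thresholds rescues the height-based rule.

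The doubling property you invoke is the hallmark of the \emph{subtree-size} version of heavy-path decomposition, and your proof goes through once you switch to it: let $s(v)$ be the number of vertices in the subtree rooted at $v$, pick as heavy child one maximizing $s$, and bucket each heavy path by $\lceil\log_2 s(v_{\mathrm{top}})\rceil$. Crossing a light edge $(t,w)$ then gives $s(w)\geq 2s(t)+1$, since $w$'s subtree contains both $t$'s subtree and the at-least-as-large heavy subtree, so the next heavy path encountered lies in a strictly higher class; all the properties you list then follow by your own argument (modulo the trivial adjustment that $\lceil\log_2 s\rceil$ takes $\lceil\log n\rceil+1$ values). For comparison, the paper avoids heavy paths entirely: it repeatedly peels off, from each leaf, the maximal pendant chain of degree-$1$ and degree-$2$ vertices, and bounds the number of rounds by $\lceil\log n\rceil$ by observing that, after contracting degree-$2$ chains, deleting the leaves halves the number of remaining vertices in each round.
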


\paragraph{Routing on a Disc.}
Assume that we are given an instance $(G,\mset)$ of the \NDPdisc problem, where $G$ is drawn in a disc $D$ whose boundary is denoted by $C$. We need the following two definitions.

\begin{definition}
We say that two demand pairs $(s,t),(s',t')\in \mset$ cross iff either $\set{s,t}\cap\set{s',t'}\neq\emptyset$, or $(s,s',t,t')$ appear on $C$ in this circular order. We say that the set $\mset$ of demand pairs is non-crossing if no two demand pairs in $\mset$ cross.
\end{definition}

\begin{definition}
Let $C$ be a closed simple curve and $\mset$ a set of demand pairs with all vertices of $\tset(\mset)$ lying on $C$.
We say that $\mset$ is an  \emph{$r$-split collection of demand pairs} with respect to $C$, iff there is a partition $\mset_1,\ldots,\mset_r$ of the demand pairs in $\mset$, and there is a partition $\set{\sigma_1,\sigma_2,\ldots,\sigma_{2r}}$ of $C$ into disjoint segments, such that $\sigma_1,\ldots,\sigma_{2r}$ appear on $C$ in this circular order, and for each $1\leq i\leq r$, for every demand pair $(s,t)\in \mset_i$, either $s\in \sigma_{2i-1}$ and $t\in \sigma_{2i}$, or vice versa.
\end{definition}

Finally, the following lemma allows us to partition any set of demand pairs into a small collection of split sets. The proof appears in the Appendix.

\begin{lemma}\label{lem: getting r-split demand pairs on a disc}
There is an efficient algorithm, that, given a closed simple curve $C$ in the plane and a set  $\mset$ of $\kappa$ demand pairs, whose corresponding terminals lie on $C$, computes a partition $\mset^1,\ldots,\mset^{4\ceil{\log \kappa}}$ of $\mset$, such that for each $1\leq i\leq 4\ceil{\log \kappa}$, set $\mset^i$ is $r_i$-split with respect to $C$ for some integer $r_i\geq 0$.
\end{lemma}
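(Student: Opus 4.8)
The plan is to associate to each demand pair $(s,t)\in\mset$ an oriented "chord" of $C$ and to use a recursive/laminar decomposition of the chords so that within each part the chords are nested (non-crossing and ordered). First I would fix an arbitrary base point on $C$ and an orientation, so that $C$ becomes identified with a circle. Each demand pair $(s,t)$ gets a chord with endpoints at the positions of $s$ and $t$ on $C$. Two chords either cross, are nested, or are "parallel" (disjoint and side by side). The key observation is that a set of pairwise non-crossing chords on a circle forms a laminar-like family, and I want to show that any set of $\kappa$ chords can be partitioned into $O(\log\kappa)$ classes, each of which is $r$-split for some $r$.

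The main step is a charging/recursion argument analogous to Claim~\ref{claim: partition the forest}. I would build an auxiliary forest (or laminar family) on the chords: order the chords and, using a sweep around $C$, construct a structure recording the "depth" of nesting. Concretely, I would first split $\mset$ into two collections based on some canonical choice (e.g.\ for each chord, whether it "contains" the base point or not), reducing to the case where all chords live inside an arc and hence define an interval graph / nesting forest on the line. On the line, a set of intervals where no two cross (properly overlap) is exactly a laminar family plus a disjoint union; applying the forest-decomposition claim (Claim~\ref{claim: partition the forest}) to the forest of nesting relations gives $\ceil{\log\kappa}$ subsets, each of which is a union of nested chains of intervals. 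A single nested chain, together with a disjoint union of such chains placed side by side along $C$, is precisely an $r$-split collection: the $\mset_i$'s are the chains, and the segments $\sigma_{2i-1},\sigma_{2i}$ are the arcs spanned by the left and right endpoints of chain $i$. Tracking the constant through the two-step reduction (the base-point split contributes a factor $2$, the sweep within each side may contribute another factor $2$) yields the bound $4\ceil{\log\kappa}$.

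Finally I would verify the $r$-split definition carefully: I need the $\set{\sigma_1,\dots,\sigma_{2r}}$ to be a genuine partition of $C$ into disjoint segments appearing in circular order, and within each $\mset_i$ every demand pair has one endpoint in $\sigma_{2i-1}$ and the other in $\sigma_{2i}$ (or vice versa). This follows because the chains coming out of the forest decomposition are nested and mutually "parallel," so their endpoint-arcs can be taken to be disjoint and interleaved as required; any leftover portions of $C$ not spanned by a chain can be absorbed into adjacent segments. The main obstacle I anticipate is the bookkeeping at the base point: a chord that straddles the chosen base point behaves, on the cut-open line, like two separate intervals, so I must argue that handling these "wrap-around" chords as a separate bucket (and decomposing them analogously) does not blow up the number of parts beyond the claimed $4\ceil{\log\kappa}$. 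Everything else is a routine translation between nested chords on a circle, laminar families, and the path-decomposition of forests already available.
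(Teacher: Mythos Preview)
Your plan has a genuine gap: it implicitly assumes the demand pairs in $\mset$ are non-crossing. You build a ``nesting forest'' on the chords and invoke the forest-decomposition claim, but a nesting forest only exists when no two chords properly overlap; with crossing pairs there is no laminar structure to decompose. You even write ``a set of intervals where no two cross (properly overlap) is exactly a laminar family,'' which is the step where the assumption sneaks in. The lemma as stated makes no non-crossing hypothesis, and in the paper it is applied (in the $2$-connected \NDPdisc reduction) to an arbitrary input set $\mset$ that may well contain crossing pairs. So the argument, as written, does not cover the statement you need. Note also that your identification of ``nested chain'' with ``one block of an $r$-split set'' is more restrictive than necessary: within a single $\mset_i$ of an $r$-split collection the pairs may cross freely, since the definition only asks that each pair have one endpoint in $\sigma_{2i-1}$ and the other in $\sigma_{2i}$.

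The paper's proof avoids any laminar or nesting structure altogether. For each pair $(s,t)$ it records $\delta(s,t)$, the number of terminals on the shorter of the two arcs of $C$ between $s$ and $t$, and buckets the pairs by the dyadic scale $2^{i-1}\le\delta(s,t)<2^i$. Within bucket $i$, it lays down anchor points on $C$ every $2^{i-1}$ terminals, cutting $C$ into segments $\beta_1,\dots,\beta_z$; any pair in bucket $i$ then has its two endpoints in adjacent (or nearly adjacent, at the wrap-around) segments. Splitting on the parity of the segment index, plus two extra classes to absorb the wrap-around at $\beta_z,\beta_1,\beta_2$, gives four $r$-split subsets per bucket and hence $4\ceil{\log\kappa}$ in total. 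This bucketing-by-arc-length argument is insensitive to whether pairs cross, which is exactly what your forest approach lacks.
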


\paragraph{Routing on a Cylinder.}
Assume that we are given an instance $(G,\mset)$ of the \NDPcyl problem, where $\Gamma_1$ and $\Gamma_2$ are the cuffs of the cylinder. 

\begin{definition}
We say that a set $\mset'\subseteq \mset$ of demand pairs is \emph{non-crossing} if there is an ordering $(s_{i_1},t_{i_1}),\ldots,(s_{i_r},t_{i_r})$ of the demand pairs in $\mset'$, such that $s_{i_1},s_{i_2},\ldots,s_{i_r}$ are all distinct and appear in this counter-clock-wise order on $\Gamma_1$, and $t_{i_1},t_{i_2},\ldots,t_{i_r}$ are all distinct and appear in this counter-clock-wise order on $\Gamma_2$.
\end{definition}

It is immediate to verify that if we are given any instance $(G,\mset)$ of \NDPcyl, and any set $\mset'\subseteq \mset$ of demand pairs that can all be routed via node-disjoint paths in $G$, then set $\mset'$ is non-crossing.

\paragraph{Tight Concentric Cycles.}


 We start with the following definition.

\begin{definition}
Given a planar graph $H$ drawn in the plane and a vertex $v \in V(H)$ that is not incident to the infinite face, $\mincycle(H, v)$ is the cycle $C$ in $H$, such that: (i) $v\in\dnot(C)$; and (ii) among all cycles satisfying (i), $C$ is the one for which $D(C)$ is minimal inclusion-wise. 
\end{definition}

It is easy to see that $\mincycle(H,v)$ is uniquely defined. Indeed, consider the graph $H \setminus v$, and the face $F$ in the drawing of $H \setminus v$ where $v$ used to reside. Then the boundary of $F$ contains exactly one cycle $C$ with $D(C)$ containing $v$, and $C=\mincycle(H, v)$.
We next define a family of tight concentric cycles.

\begin{definition}
Suppose we are given a planar graph $H$, an embedding of $H$ in the plane, a simple closed $H$-normal curve $C$, and an integral parameter $r\geq 1$. A family of $r$ tight concentric cycles around $C$ is a sequence $Z_1,Z_2,\ldots,Z_r$ of disjoint simple cycles in $H$, with the following properties:

\begin{itemize}
\item $D(C)\subsetneq D(Z_1)\subsetneq D(Z_2)\subsetneq\cdots\subsetneq D(Z_r)$;   \label{prop: disc}

\item if $H'$ is the graph obtained from $H$ by contracting all vertices lying in $D(C)$ into a super-node $a$, then $Z_1=\mincycle(H',a)$; and

\item for every $1< h\leq r$, if $H'$ is the graph obtained from $H$ by contracting all vertices lying in $D(Z_{h-1})$ into a super-node $a$, then $Z_h = \mincycle(H', a)$.
\end{itemize}
\end{definition}

We will sometimes allow $C$ to be a simple cycle in $H$. The family of tight concentric cycles around $C$ is then defined similarly.

\paragraph{Monotonicity of Paths and Cycles.}

Suppose we are given a planar graph $H$, embedded into the plane,  a simple $H$-normal curve $C$ in $H$, and a family $(Z_1,\ldots,Z_r)$ of tight concentric cycles around $C$. Assume further that we are given a set $\pset$ of $\kappa$ node-disjoint paths, originating at the vertices of $C$, and terminating at some vertices lying outside of $D(Z_r)$. We would like to re-route these paths to ensure that they are monotone with respect to the cycles, that is, for all $1\leq h\leq r$, and for all $P\in \pset$, $P\cap Z_h$ is a path. We first discuss re-routing to ensure monotonicity with respect to a single cycle, and then extend it to monotonicity with respect to a family of concentric cycles.

\begin{definition}
Given a graph $H$, a cycle $C$ and a path $P$ in $H$, we say that $P$ is monotone with respect to $C$, iff $P \cap C$ is a path. 
\end{definition}

The proof of the following lemma is deferred to the Appendix.

\begin{lemma}\label{lem:reroute-montone}
Let $H$ be a planar graph embedded into the plane, $C$ a simple cycle in $H$, and $\pset$ a collection of $\kappa$ simple internally node-disjoint paths between two vertices: vertex $s$ lying in $\dnot(C)$, and vertex $t\not \in D(C)$, that is incident on the outer face. Assume further that $H$ is the union of $C$ and the paths in $\pset$, and that $C=\mincycle(H,s)$. Then there is an efficient algorithm to compute a set $\pset'$ of $\kappa$ internally node-disjoint paths connecting $s$ to $t$ in $H$, such that every path in $\pset'$ is monotone with respect to $C$. 
\end{lemma}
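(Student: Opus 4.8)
The plan is to reroute each path into the canonical monotone shape ``interior piece, then a stretch along $C$, then exterior piece,'' after first pinning down the structure of $H$ near $C$.

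\emph{Step 1 (structure near $C$).} Since $H$ is embedded in the plane, each $P\in\pset$ meets the Jordan curve $C$ only at vertices of $C$, and, as $s\in\dnot(C)$ while $t\notin D(C)$, the set $V(P)\cap V(C)$ is nonempty. Traversing $P$ from $s$, let $a(P)$ be its first vertex on $C$ and $b(P)$ its last, and write $P=Q_P\cup M_P\cup R_P$ where $Q_P$ runs from $s$ to $a(P)$, $R_P$ from $b(P)$ to $t$, and $M_P$ is the middle. Since $P$ does not cross $C$ before $a(P)$, $Q_P$ lies in $D(C)$ and meets $C$ only at $a(P)$; symmetrically $R_P$ lies in the closed exterior of $D(C)$ and meets $C$ only at $b(P)$. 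Using $C=\mincycle(H,s)$ one shows further that $\dnot(C)$ contains no vertex other than $s$ and the internal vertices of the $Q_P$'s: a sub-path of some $P$ that re-entered $\dnot(C)$ between two consecutive vertices of $C$ would, together with an arc of $C$, bound a cycle strictly inside $D(C)$ enclosing $s$, contradicting minimality. Hence the $Q_P$'s form a ``spider'' rooted at $s$ with feet $\{a(P)\}$ on $C$, and each $M_P$ lives in $C$ together with the exterior of $D(C)$. By internal node-disjointness the feet $\{a(P):P\in\pset\}$ are distinct, the feet $\{b(P):P\in\pset\}$ are distinct, and $a(P)=b(P')$ forces $P=P'$.

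\emph{Step 2 (shape of a monotone path; reduction to arcs on $C$).} If $P'$ is monotone then $P'\cap C$ is a single sub-path $\alpha$ of $C$; the portion of $P'$ before $\alpha$ cannot cross $C$, hence stays in $\dnot(C)$ and is therefore one of the spider legs $Q_P$ (ending at $a(P)$), and the portion after $\alpha$ is similarly one of the exterior pieces $R_{P'}$ (starting at $b(P')$). Conversely, for any bijection $\pi$ on $\pset$ and any sub-paths $\alpha_P$ of $C$ joining $a(P)$ to $b(\pi(P))$ that are \emph{pairwise vertex-disjoint} (equivalently, each $\alpha_P$ contains no foot except its own two endpoints), the paths $Q_P\cup\alpha_P\cup R_{\pi(P)}$ are $\kappa$ internally node-disjoint monotone $s$--$t$ paths in $H$: disjointness is immediate from Step 1 once the $\alpha_P$ are disjoint, since $Q_P$ meets $C$ only at $a(P)$, $R_{\pi(P)}$ meets $C$ only at $b(\pi(P))$, and the feet are distinct. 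So it suffices to produce such $\pi$ and $\alpha_P$.

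\emph{Step 3 (producing the arcs -- the crux).} I would obtain $\pi$ and the $\alpha_P$ by an uncrossing/exchange argument that uses the original middles $\{M_P\}$ as a certificate: the $M_P$ are pairwise disjoint, each joins $a(P)$ to $b(P)$ within $C$ together with the exterior of $D(C)$, so the sets $V(M_P)\cap V(C)$ are pairwise disjoint; one peels off an ``innermost'' configuration, shortcuts it to a genuine arc of $C$ between two feet containing no other foot, matches those two feet by $\pi$, deletes that arc, and recurses. The main obstacle is exactly this step: the arcs must be vertex-disjoint as sub-paths of $C$, not merely non-crossing as chords, so a naive combinatorial matching on the cyclic sequence of feet can fail — one genuinely needs the planar information that the $M_P$ are simultaneously routable (which excludes the bad cyclic patterns), and some care is needed for the degenerate cases $a(P)=b(P)$ (where $P$ is already monotone) and for chords of paths drawn inside $D(C)$, which may be discarded harmlessly since the final paths use only spider legs, arcs of $C$, and exterior pieces. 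The whole procedure is constructive — $\mincycle(H,s)$ is read off the face of $H\setminus s$ containing $s$, each decomposition is one traversal, and the arc extraction is a simple recursion — so it runs in polynomial time.
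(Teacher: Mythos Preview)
Your Steps~1 and~2 are correct and match the paper's setup exactly. The gap is Step~3, which you yourself flag as ``the crux'' and ``the main obstacle'' and leave as a vague sketch (``peel off innermost, match by $\pi$, recurse''). In particular, the bijection $\pi$ is a red herring: the paper keeps $\pi$ equal to the identity, so each rerouted path retains its own $a(P)$ and $b(P)$, and no matching/exchange argument is needed.

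The clean idea you are missing is the paper's notion of \emph{shadows}. Decompose each middle $M_P$ into \emph{bumps}: maximal sub-paths of $P$ with both endpoints on $C$ and all internal vertices off $C$ (an edge of $C$ is itself a bump). The \emph{shadow} of a bump $Q$ with endpoints $u,v$ is the arc $\sigma$ of $C$ between $u$ and $v$ for which the disc bounded by $\sigma\cup Q$ does \emph{not} contain $s$. Because $C=\mincycle(H,s)$, every bump lies in the closed exterior of $D(C)$ (otherwise one finds a smaller cycle around $s$) --- this is precisely the ``planar information'' you said you needed. Now if bumps $Q$ of $P$ and $Q'$ of $P'\neq P$ had intersecting shadows, one shadow would be strictly contained in the other, say $\sigma'\subsetneq\sigma$; the disc bounded by $\sigma\cup Q$ would trap a vertex of $Q'$ but not $t$, and the tail of $P'$ from that vertex to $t$ could only escape through $\dnot(C)$, contradicting the previous sentence. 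Hence shadows of bumps on different paths are pairwise vertex-disjoint. The arc $\alpha_P$ is then any simple sub-path, from $a(P)$ to $b(P)$, of the union of the shadows of $P$'s bumps; the paths $Q_P\cup\alpha_P\cup R_P$ are monotone and internally node-disjoint, and the whole construction is a single linear pass.
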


We now define monotonicity with respect to a family of cycles.

\begin{definition}
Let $H$ be a graph, $\zset=(Z_1,\ldots,Z_r)$ a collection of $r$ disjoint cycles, and $\pset$ a collection of node-disjoint paths in $H$. We say that the paths in $\pset$ are \emph{monotone} with respect to $\zset$, iff for every $1\leq h\leq r$, every path in $\pset$ is monotone with respect to $Z_h$.
\end{definition}

The following theorem allows us to re-route sets of paths so they become monotone with respect to a given family of tight concentric cycles.
Its proof is a simple application of Lemma~\ref{lem:reroute-montone} and is deferred to the Appendix.

\begin{theorem}\label{thm: monotonicity for shells}
Let $H$ be a planar graph embedded in the plane, $C$ any simple closed $H$-normal curve or a simple cycle in $H$, and $\zset=(Z_1,\ldots,Z_r)$ a family of $r$ tight concentric cycles in $H$ around $C$. Let $Y\subsetneq H$ be any connected subgraph of $H$ lying completely outside of $D(Z_r)$, and let $\pset$ be a set of $\kappa$ node-disjoint paths, connecting a subset $A\subseteq V(C)$ of $\kappa$ vertices to a subset $B\subseteq V(Y)$ of $\kappa$ vertices, so that the paths of $\pset$ are internally disjoint from $V(C)\cup V(Y)$. Let $H'=\left (\bigcup_{h=1}^rV(Z_h)\right )\cup \pset$. Then there is an efficient algorithm to compute a collection $\pset'$ of $\kappa$ node-disjoint paths in $H'$, connecting the vertices of $A$ to the vertices of $B$, so that the paths in $\pset'$ are monotone with respect to $\zset$, and they are internally node-disjoint from $V(C)\cup V(Y)$.
\end{theorem}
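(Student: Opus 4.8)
The plan is to prove the theorem by processing the cycles $Z_1,\dots,Z_r$ one at a time, from the innermost to the outermost, maintaining the following invariant: after $Z_h$ has been handled we have a collection of $\kappa$ node-disjoint paths in $H'$ connecting the vertices of $A$ to the vertices of $B$, internally disjoint from $V(C)\cup V(Y)$, and monotone with respect to each of $Z_1,\dots,Z_h$. The base case $h=0$ is the given family $\pset$, and after $Z_r$ is handled we obtain the desired $\pset'$. It is convenient to write $Z_0$ for $C$ and $D(Z_0)$ for $D(C)$. Throughout I would use the elementary fact that a path which is monotone with respect to a cycle (or simple closed curve) $Z$ and has one endpoint in $\dnot(Z)$ and one endpoint outside $D(Z)$ crosses $Z$ exactly once, and hence meets $D(Z)$ in a single subpath.

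\textbf{The step for $Z_h$.} Let $\pset=\{P_1,\dots,P_\kappa\}$ be the current family, monotone with respect to $Z_0,\dots,Z_{h-1}$. First I would contract all vertices of $H'$ lying in $D(Z_{h-1})$ into a single super-node $s$; by the invariant each $P_i$ meets $D(Z_{h-1})$ in one subpath, so the images of the paths are again $\kappa$ simple node-disjoint paths meeting pairwise only at $s$. For each $i$, let $w_i$ be the \emph{last} vertex of the image of $P_i$ on $Z_h$, and split that image into a head $Q_i$ (from $s$ to $w_i$) and a tail (from $w_i$ to its endpoint in $B$); since $w_i$ is the last visit to $Z_h$, the tail is disjoint from $D(Z_h)$ except at $w_i$, while the head may still weave across $Z_h$. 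I would then add a new vertex $t$ in the face outside $D(Z_h)$, together with edges $\{w_it\}_i$ (which can be drawn without crossings in the disc complementary to $D(Z_h)$), and set $\hat H:=Z_h\cup\bigcup_iQ_i\cup\{w_it\}_i$. Now $\hat H$ consists of the cycle $Z_h$ together with $\kappa$ internally node-disjoint simple paths $Q_i\cup\{w_it\}$ from $s\in\dnot(Z_h)$ to $t\notin D(Z_h)$, with $t$ on the outer face; moreover $Z_h=\mincycle(\hat H,s)$, since any cycle of $\hat H$ lying strictly inside $D(Z_h)$ with $s$ in its interior would, after undoing the contraction, be a cycle of $H$ lying strictly inside $D(Z_h)$ and enclosing all of $D(Z_{h-1})$, contradicting tightness of $\zset$ (namely $Z_h=\mincycle(H/D(Z_{h-1}),a)$). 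Applying Lemma~\ref{lem:reroute-montone} to $\hat H$ with cycle $Z_h$, source $s$ and sink $t$, I obtain $\kappa$ internally node-disjoint $s$-$t$ paths in $\hat H$, each monotone with respect to $Z_h$. Finally I would undo the contraction of $s$, delete $t$ and its edges, and, using that the $\kappa$ output paths use distinct edges at $s$ and distinct edges at $t$, prepend to each the (untouched, monotone) portion inside $D(Z_{h-1})$ of the corresponding original path and append the corresponding tail. This yields $\kappa$ node-disjoint paths of $H'$ from $A$ to $B$; a short case analysis shows they are internally disjoint from $V(C)\cup V(Y)$, monotone with respect to $Z_h$ by construction, and still monotone with respect to $Z_1,\dots,Z_{h-1}$ because the re-routed middle portions lie entirely outside $\dnot(D(Z_{h-1}))$ and therefore cannot disturb any crossing with an inner cycle. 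This re-establishes the invariant, and iterating over $h=1,\dots,r$ proves the theorem.

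\textbf{Main obstacle.} The heart of the argument is the bookkeeping around the two contractions. One must truncate at the \emph{last} visit $w_i$ to $Z_h$ (not the first), so that the tails are clean and the heads are precisely the kind of object — paths weaving across a single cycle — that Lemma~\ref{lem:reroute-montone} is designed to straighten; one must check that the auxiliary sink $t$ can be inserted planarly and that $Z_h$ is the $\mincycle$ around $s$ in $\hat H$, which is exactly where tightness of $\zset$ is used; and one must verify that reattaching the heads returned by the lemma to the original (pairwise disjoint) inside-portions and tails produces a genuinely node-disjoint family — this follows from the internal disjointness of the lemma's output together with the fact that the last vertex of any original path on $Z_h$ is not used internally by any other output path. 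None of these steps is deep, but together they are what turns the ``simple application'' into a short but nontrivial argument.
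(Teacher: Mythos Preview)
Your proof is essentially the paper's: both iterate over $h=1,\dots,r$, contract $D(Z_{h-1})$ to a source, apply Lemma~\ref{lem:reroute-montone} with respect to $Z_h$, and reassemble while checking that monotonicity with respect to $Z_1,\dots,Z_{h-1}$ is undisturbed. The only difference is that the paper forms the sink by contracting $B$ (planarity following at once from the connectivity of $Y$ and the internal disjointness of the paths from $V(Y)$), whereas you truncate at the last $Z_h$-hit $w_i$ and add an artificial $t$ with edges $w_it$---your claim that these edges can be drawn planarly in the complement of $D(Z_h)$ is not automatic (the heads $Q_i$ may have bumps there), but it is rescued by observing that the discarded tails together with $Y$ form a connected subgraph meeting $Z_h\cup\bigcup_iQ_i$ only at the $w_i$, hence lie in a single face into which $t$ can be placed.
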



\label{------------------------------------------sec: disc and cylinder-----------------------------------------}
\section{Routing on a Disc and on a Cylinder}\label{sec: disc and cylinder}
In this section we prove Theorem~\ref{thm: routing on a disc and cyl main}. 
 In order to do so, we define a new problem, called Demand Pair Selection Problem (\DPSP), and show an $8$-approximation algorithm for it. We then show that both \NDPdisc and \NDPcyl reduce to \DPSP.

\paragraph{Demand Pair Selection Problem}
We assume that we are given two disjoint directed paths, $\sigma$ and $\sigma'$, and a collection $\mset = \set{(s_1, t_1), \ldots, (s_k, t_k)}$ of pairs of vertices of $\sigma \cup \sigma'$ that are called demand pairs, where all vertices of $S = \set{s_1, \ldots, s_k}$ lie on $\sigma$, and all vertices of $T = \set{t_1, \ldots, t_k}$ lie on $\sigma'$ (not necessarily in this order). We refer to the vertices of $S$ and $T$ as the \emph{source} and the \emph{destination} vertices, respectively. Note that the same vertex of $\sigma$ may participate in several demand pairs, and the same is true for the vertices of $\sigma'$. Given any pair $a, a'$ of vertices of $\sigma$, with $a$ lying before $a'$ on $\sigma$, we sometimes denote by $(a, a')$ the sub-path of $\sigma$ between $a$ and $a'$ (that includes both these vertices), and we will sometimes refer to it as an interval. We define intervals of $\sigma'$ similarly. 

For every pair $v, v' \in V(\sigma)$ of vertices, we denote $v \prec v'$ if $v$ lies strictly before $v'$ on $\sigma$, and we denote $v \preceq v'$, if $v \prec v'$ or $v = v'$ hold. Similarly, for every pair $v, v' \in V(\sigma')$ of vertices, we denote $v \prec v'$ if $v$ lies strictly before $v'$ on $\sigma'$, and we denote $v \preceq v'$, if $v \prec v'$ or $v = v'$ hold. We need the following definitions.

\begin{definition} Suppose we are given two pairs $(a,b)$ and $(a',b')$ of vertices of $\sigma \cup \sigma'$, with $a, a' \in \sigma$ and $b, b' \in \sigma'$. We say that $(a,b)$ and $(a',b')$ \emph{cross} iff one of the following holds: either (i) $a = a'$; or (ii) $b = b'$; or (iii) $a \prec a'$ and $b' \prec b$; or (iv) $a' \prec a$ and $b \prec b'$.
\end{definition}

\begin{definition} We say that a subset $\mset'\subseteq\mset$ of demand pairs is \emph{non-crossing} iff for all distinct pairs $(s,t), (s',t') \in \mset'$, $(s,t)$ and $(s',t')$ do not cross.
\end{definition}

Our goal is to select the largest-cardinality non-crossing subset $\mset'\subseteq\mset$ of demand pairs, satisfying a collection $\kset$ of constraints. Set $\kset$ of constraints is given as part of the problem input, and consists of four subsets, $\kset_1, \ldots, \kset_4$, where constraints in set $\kset_i$ are called \emph{type-i constraints}. Every constraint $K \in \kset$ is specified by a quadruple $(i,a,b,w)$, where $i \in \set{1,2,3,4}$ is the constraint type, $a,b\in V(\sigma \cup \sigma')$, and $1\le w\le |\mset|$ is an integer.

For every type-1 constraint $K = (1,a,b,w) \in \kset_1$, we have $a,b \in V(\sigma)$ with $a \prec b$. The constraint is associated with the sub-path $I = (a,b)$ of $\sigma$. We say that a subset $\mset' \subseteq \mset$ of demand pairs \emph{satisfies} $K$ iff the total number of the source vertices participating in the demand pairs of $\mset'$ that lie on $I$ is at most $w$.

Similarly, for every type-2 constraint $K = (2,a,b,w) \in \kset_2$, we have $a,b \in V(\sigma')$ with $a \prec b$, and the constraint is associated with the sub-path $I = (a,b)$ of $\sigma'$. A set $\mset' \subseteq \mset$ of demand pairs satisfies $K$ iff the total number of the destination vertices participating in the demand pairs in $\mset'$ that lie on $I$ is at most $w$.

For each type-3 constraint $K = (3,a,b,w) \in \kset_3$, we have $a \in V(\sigma)$ and $b \in V(\sigma')$. The constraint is associated with the sub-path $L_a$ of $\sigma$ between the first vertex of $\sigma$ and $a$ (including both these vertices), and the sub-path $R_b$ of $\sigma'$ between $b$ and the last vertex of $\sigma'$ (including both these vertices). We say that a demand pair $(s,t) \in \mset$ \emph{crosses} $K$ iff $s \in L_a$ and $t \in R_b$. A set $\mset' \subseteq \mset$ of demand pairs satisfies $K$ iff the total number of pairs $(s,t) \in \mset'$ that cross $K$ is bounded by $w$.

Finally, for each type-4 constraint $K = (4,a,b,w) \in \kset_4$, we also have $a \in V(\sigma)$ and $b \in V(\sigma')$. The constraint is associated with the sub-path $R_a$ of $\sigma$ between $a$ and the last vertex of $\sigma$ (including both these vertices), and the sub-path $L_b$ of $\sigma'$ between the first vertex of $\sigma'$ and $b$ (including both these vertices). We say that a demand pair $(s,t) \in \mset$ crosses $K$ iff $s \in R_a$ and $t \in L_b$. A set $\mset' \subseteq \mset$ of demand pairs satisfies $K$ iff the total number of pairs $(s,t) \in \mset'$ that cross $K$ is bounded by $w$.

Given the paths $\sigma, \sigma'$, the set $\mset$ of the demand pairs, and the set $\kset$ of constraints as above, the goal in the $\DPSP$ problem is to select a maximum-cardinality non-crossing subset $\mset' \subseteq \mset$ of demand pairs, such that all constraints in $\kset$ are satisfied by $\mset'$.  The proof of the following theorem is deferred to the Appendix. 

\begin{theorem}\label{thm: approximate DPSP}
There is an efficient $8$-approximation algorithm for \DPSP.
\end{theorem}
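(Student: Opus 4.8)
The plan is to reduce \DPSP to an exactly solvable ``base'' problem — the longest chain in a poset, equivalently a maximum flow — and to pay only a constant factor for the four families of capacity constraints through linear‑programming rounding.

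I would first note that a subset $\mset'\subseteq\mset$ is non‑crossing if and only if it is a chain of the strict partial order on $\mset$ in which $(s,t)$ precedes $(s',t')$ exactly when $s\prec s'$ and $t\prec t'$: a short case analysis of the definition of crossing shows that two distinct demand pairs fail to cross precisely when they are comparable in this $2$‑dimensional order. Hence non‑crossing sets are the cliques of the comparability graph of this order, and since comparability graphs are perfect, the polytope $\{x\ge 0 : \sum_{P\in A}x_P\le 1 \text{ for every antichain } A\}$ is exactly the convex hull of the indicator vectors of non‑crossing sets; it admits an efficient separation oracle (compute a maximum‑weight antichain, solvable by min‑cost flow). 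Equivalently — and this is probably the cleaner route to implement — non‑crossing sets are the source‑to‑sink paths of the DAG on vertex set $\mset$ with an arc $P\to P'$ whenever $P$ precedes $P'$, so one may instead use the (integral) path/flow polytope of that DAG.

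To this base polytope I would append, for every constraint $K=(i,a,b,w)\in\kset$, the inequality $\sum_{P\text{ relevant to }K} x_P\le w$, where ``relevant'' means the source lies in $I$ for type $1$, the destination lies in $I$ for type $2$, and ``$P$ crosses $K$'' for types $3$ and $4$. Every feasible \DPSP solution is a $0/1$ feasible point, so the LP optimum is at least $\opt$, and the LP is solvable in polynomial time. The structural fact that drives the rounding is a contiguity lemma: for any non‑crossing set listed in its natural order $P_1\prec\cdots\prec P_m$, and any constraint $K$, the relevant pairs form a contiguous block $P_\ell,\dots,P_r$. Indeed source positions and destination positions both increase along the chain (covering types $1$ and $2$), and for a type‑$3$ (resp.\ type‑$4$) constraint ``$P$ crosses $K$'' asserts that $s_P$ lies in a prefix of $\sigma$ while $t_P$ lies in a suffix of $\sigma'$ (resp.\ symmetrically), i.e.\ that $P$ lies simultaneously in a prefix and a suffix of the chain. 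In particular, once we restrict to subsets of a \emph{fixed} chain, all constraints of $\kset$ become capacities on intervals of a linearly ordered ground set, and the ``select at most $w_K$ elements from each interval'' LP has a consecutive‑ones constraint matrix, hence is totally unimodular and solved integrally and exactly.

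The rounding then proceeds by solving the LP, expressing its optimum as a convex combination $\sum_k\lambda_k\chi_{C_k}$ of non‑crossing sets, computing on each $C_k$ the exact maximum feasible sub‑chain via the interval‑capacity LP above, and returning the best one found. The remaining and genuinely non‑routine step is to prove that this best sub‑chain retains at least a $1/8$ fraction of the LP value, i.e.\ that enforcing all four capacity families simultaneously costs only an absolute constant and does not compound over $|\kset|$. This is where the specific geometry of the families — vertical strips, horizontal strips, top‑left corners, bottom‑right corners — must be exploited: each family is ``monotone'' in the sense that, along any chain, its relevant block is a prefix intersected with a suffix whose endpoints move monotonically with the parameters of the constraint, and one wants to argue that this monotonicity lets one charge the deletions needed for each family to a constant fraction of the chain, with the four families contributing essentially independently. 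I expect this charging/exchange argument, and pinning down the resulting constant as $8$, to be the main obstacle; everything preceding it is standard.
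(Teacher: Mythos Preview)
Your setup is sound: non-crossing sets are indeed chains in the product order, the chain polytope is integral with a polynomial separation oracle, and the contiguity lemma (relevant pairs for any constraint $K$ form an interval along any chain) is correct and is in fact the same observation the paper relies on. But the proposal has a genuine gap exactly where you flag it: you never prove the approximation factor, and your proposed rounding does not obviously yield one. When you write $x^*=\sum_k\lambda_k\chi_{C_k}$, the capacity constraints are satisfied only in aggregate; an individual chain $C_k$ can violate a constraint $K$ arbitrarily badly (e.g.\ contain $|C_k|$ pairs relevant to a constraint with $w=1$), and solving the TU interval-capacity LP on $C_k$ with the \emph{original} right-hand sides $w_K$ gives no a priori lower bound in terms of $|C_k|$ or $\lambda_k|C_k|$. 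The ``monotone families contribute independently'' heuristic is not an argument, and I do not see how to complete it without essentially redoing the combinatorics the paper carries out.

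The paper takes a completely different, LP-free route. It buckets constraints into $O(\log k)$ levels by weight ($w\in[2^{j-1},2^j)$ is level $j$), and runs a hierarchical dynamic program: a level-$j$ entry is indexed by a pair of intervals $(I,I')$ that is \emph{good}, meaning it is not fully contained in the relevant region of any level-$j$ constraint; the entry stores exactly $2^j$ non-crossing pairs inside $I\times I'$, obtained by concatenating two disjoint level-$(j-1)$ entries. Two short counting arguments then finish the proof: any entry violates every constraint by at most a factor $4$ (for a level-$j'$ constraint with $j'<j$, at most two of the $2^{j-j'}$ constituent good level-$j'$ sub-intervals can meet the constraint's region, since a third would force one to be contained in it), and the optimal chain, sliced into blocks of size $2^j$, certifies a nonempty entry at each level, so some entry has size at least $\opt/2$. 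Your contiguity lemma is then used exactly once, to thin a factor-$4$-violating chain down to a feasible one by keeping every fourth element, giving $\opt/8$. So the engine that produces a large near-feasible chain is a direct combinatorial DP keyed to the weight hierarchy, not an LP decomposition.
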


We then use Theorem~\ref{thm: approximate DPSP} in order to design $O(\log k)$-approximation algorithms for \NDPdisc and \NDPcyl. The remainder of the proof of Theorem~\ref{thm: routing on a disc and cyl main} appears in Section~\ref{sec: proofs for disc and cylinder} of the Appendix.
The algorithm for \NDPdisc exploits the exact characterization of routable instances of the problem given by Robertson and Seymour~\cite{RS-disc}, in order to reduce the problem to \DPSP. The algorithm for \NDPcyl reduces the problem to \NDPdisc and \DPSP.

\label{----------------------------------------sec: alg overview-------------------------------------}
\section{Algorithm Setup}\label{sec: alg overview}
The rest of this paper mostly focuses on proving Theorem~\ref{thm: main1}; we prove Theorem~\ref{thm: main2} using the techniques we employ for the proof of Theorem~\ref{thm: main1} in Section~\ref{sec: proof of 2nd main thm}.

We assume without loss of generality that the input graph $G$ is connected - otherwise we can solve the problem separately on each connected component of $G$. 
Let $\tset=\tset(\mset)$. 
It is convenient for us to assume that every terminal participates in exactly one demand pair, and that the degree of every terminal is $1$. This can be achieved via a standard transformation of the input instance, where we add a new collection of terminals, connecting them to the original terminals. This transformation preserves planarity, but unfortunately it can increase the number of the graph vertices.
If the original graph $G$ contained $n$ vertices, then $|\mset|$ can be as large as $n^2$, and so the new graph may contain up to $n^2+n$ vertices, while our goal is to obtain an $\tilde O(n^{9/19})$-approximation. In order to overcome this difficulty, we denote by $N$ the number of the non-terminal vertices in the new graph $G$, so $N$ is bounded by the total number of vertices in the original graph, and by $n$ the total number of all vertices in the new graph, so $n=O(N^2)$. Our goal is then to obtain an efficient $O(N^{9/19}\cdot\poly\log n)$-approximation for the new problem instance. From now on we assume that 
every terminal participates in exactly one demand pair, and the degree of every terminal is $1$.  If $(s,t)$ is a demand pair, then we say that $s$ is the mate of $t$, and $t$ is the mate of $s$.
We denote $|\mset|=k$. Throughout the algorithm, we define a number of sub-instances of the instance $(G,\mset)$, but we always use $k$ to denote the number of the demand pairs in this initial instance. We can assume that $k>100$, as otherwise we can return a routing of a single demand pair.

We assume that we are given a drawing of $G$ on the sphere. Throughout the algorithm, we will sometimes select some face of $G$ as the outer face, and consider the resulting planar drawing of $G$. 
\subsection{LP-Relaxations}

Let us start with the standard multicommodity flow LP-relaxation of the problem. Let $G'$ be the directed graph, obtained from $G$ by bi-directing its edges. For every edge $e\in E(G')$, for each $1\leq i\leq k$, there is an LP-variable $f_i(e)$, whose value is the amount of the commodity-$i$ flow through edge $e$. We denote by $x_i$ the total amount of commodity-$i$ flow sent from $s_i$ to $t_i$. For every vertex $v$, let $\delta^+(v)$ and $\delta^-(v)$ denote the sets of its out-going and in-coming edges, respectively. We denote $[k]=\set{1,\ldots,k}$. The standard LP-relaxation of the \NDP problem is as follows.

\begin{eqnarray*}
\mbox{(LP-flow1)}&&\\
\max&\sum_{i=1}^kx_i&\\
\mbox{s.t.}&&\\
&\sum_{e\in \delta^+(s_i)}f_i(e)=x_i&\forall i\in[k]\\
&\sum_{e\in \delta^+(v)}f_i(e)=\sum_{e\in \delta^-(v)}f_i(e)&\forall i\in[k], \forall v\in V(G')\setminus\set{s_i,t_i}\quad\quad \mbox{(flow conservation)}\\
&\sum_{e\in \delta^+(v)}\sum_{i=1}^kf_i(e)\leq 1&\forall v\in V(G')\quad\quad\quad\quad\quad\quad\mbox{(vertex capacity constraints)}\\
&f_i(e)\geq 0&\forall i\in[k],\forall e\in E(G')
\end{eqnarray*}

We will make two changes to (LP-flow1). First, we will assume that we know the value $X^*$ of the optimal solution, and instead of the objective function, we will add the constraint $\sum_{i=1}^kx_i\geq X^*$. We can do so using standard methods, by repeatedly guessing the value $X^*$ and running the algorithm for each such value. It is enough to show that the algorithm routes $\Omega\left (\frac{X^*}{N^{9/19}\cdot\poly\log n}\right)$ demand pairs, when the value $X^*$ is guessed correctly.

Recall that for a subset $\mset'\subseteq\mset$ of the demand pairs, and a sub-graph $H\subseteq G$, $\opt(H,\mset')$ denotes the value of the optimal solution to instance $(H,\mset')$. 
For every subset $\mset'\subseteq \mset$ of the demand pairs, we add the constraint that the total flow between all pairs in $\mset'$ is no more than $z$, for all integers $z$ between $\opt(G,\mset')$ and $k$. We now obtain the following linear program that has no objective function,  so we are only interested in finding a feasible solution.

\label{-------------------------------------LP constraints-----------------------------------------------}
\begin{eqnarray}
\mbox{(LP-flow2)}&\nonumber\\
&\sum_{i=1}^kx_i\geq X^*& \label{LP: opt-value}\\ 
&\sum_{e\in \delta^+(s_i)}f_i(e)=x_i&\forall i\in[k] \label{LP: flow-amounts}\\
&\sum_{e\in \delta^+(v)}f_i(e)=\sum_{e\in \delta^-(v)}f_i(e)&\forall i\in[k], \forall v\in V(G')\setminus\set{s_i,t_i}\ \  \mbox{(flow conservation)}\label{LP: flow-conservation}\\
&\sum_{e\in \delta^+(v)}\sum_{i=1}^kf_i(e)\leq 1&\forall v\in V(G')\quad\quad\quad\quad\mbox{(vertex capacity constraints)}\label{LP: capacity constraints}\\
&\sum_{(s_i,t_i)\in \mset'} x_i\leq z&\forall \mset'\subseteq\mset,\forall z\in \mathbb{Z}: \opt(G,\mset')\leq z\leq k\label{LP: new constraint}\\
&f_i(e)\geq 0&\forall i\in[k],\forall e\in E(G')\label{LP: non-negativity}
\end{eqnarray}
\label{-------------------------------------end LP constraints-----------------------------------------------}

We say that a solution to (LP-flow2) is \emph{semi-feasible} iff all constraints of types~(\ref{LP: opt-value})--(\ref{LP: capacity constraints}) and (\ref{LP: non-negativity}) are satisfied.
Notice that the number of the constraints in (LP-flow2) is exponential in $k$. In order to solve it, we will use the Ellipsoid Algorithm with a separation oracle, where our approximation algorithm itself will serve as the separation oracle. This is done via the following theorem, which is our main technical result.

\begin{theorem}\label{thm: solution or sep oracle}
There is an efficient algorithm, that, given any semi-feasible solution to (LP-flow2), either computes a routing of at least $\Omega\left(\frac{X^*}{N^{9/19}\cdot \poly\log n}\right )$ demand pairs of $\mset$ via node-disjoint paths, or returns a constraint of type~(\ref{LP: new constraint}), that is violated by the current solution. 
\end{theorem}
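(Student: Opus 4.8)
\textbf{High-level plan.} The plan is to follow the intuitive roadmap sketched in Section~\ref{sec: alg overview}. Given a semi-feasible solution to (LP-flow2), we first reduce to the case where the terminals are $\alphawl$-well-linked: using Observation~\ref{obs: sparsest cut} we perform a standard well-linked decomposition, removing a set $U$ of at most $X^*/64$ vertices so that $G\setminus U$ decomposes into components $G_1,\dots,G_r$ in which the surviving demand pairs $\mset_i$ are $\alphawl$-well-linked, with $\alphawl=\Theta(w^*/\log n)$ and $\sum_i\opt(G_i,\mset_i)\ge\tfrac{63}{64}X^*$. For each $i$ we will run the core routine (described below), and we will either find, for some $i$, a routing of $\tilde\Omega\bigl(\opt(G_i,\mset_i)/N_i^{9/19}\bigr)$ demand pairs — which combined over all $i$ yields the claimed $\tilde\Omega(X^*/N^{9/19})$ bound by a global accounting argument — or, for every $i$, a subset $\mset_i'\subseteq\mset_i$ certifying $w^*|\mset_i'|>16\opt(G_i,\mset_i)$. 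The delicate point, flagged below, is that such a local certificate need not give a violated constraint of type~(\ref{LP: new constraint}) for the original graph $G$; the global accounting across the instances $(G_i,\mset_i)$ must be set up so that either we succeed globally or we can aggregate the local certificates into one genuine violated constraint (the total flow sent by the semi-feasible solution between the demand pairs in $\bigcup_i\mset_i'$ exceeds $\opt(G,\bigcup_i\mset_i')$).

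\textbf{The core routine (well-linked case).} Assume $\tset$ is $\alphawl$-well-linked in $G$. Set $\Delta=\opt^{2/19}$. First construct, via the tight-concentric-cycles machinery (\mincycle and the concentric-cycle definitions of Section~\ref{sec: prelims}) and the sparsest-cut algorithm $\algsc$, an enclosure $D_t$ around each terminal: a disc bounded by a $G$-normal curve $C_t$ of length exactly $\Delta$ containing at most $O(\Delta/\alphawl)$ terminals, with the nesting property and the property that disjoint enclosures are joined by $\Delta$ node-disjoint paths (this uses well-linkedness plus a Menger/Hall-type argument). Define $d(t,t')$ as the length of the shortest $G$-normal curve between $C_t$ and $C_{t'}$. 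Next, run a clustering/ball-growing argument on the metric $d$ to establish the dichotomy: either (Case 1) there is $\tmset\subseteq\mset$ with $|\tmset|=\Omega(|\mset|/\poly\log n)$ whose terminals are pairwise at distance $\Omega(\Delta)$, or (Case 2) there are faces $F,F'$ and $\tmset'\subseteq\mset$ with $|\tmset'|=\Omega(|\mset|/\poly\log n)$ such that every pair in $\tmset'$ has one terminal within $\tilde O(\Delta)$ of $\partial F$ and the other within $\tilde O(\Delta)$ of $\partial F'$.

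\textbf{Handling the two cases.} In Case 1, build the crossbar: pick a center terminal $t_0$, install shells $\zset(t)$ of $\Delta_1=\floor{\Delta/6}$ concentric cycles around the chosen terminals, use the paths between disjoint enclosures together with well-linkedness to route, for each source $v\in S^*$, one path to $C_{t_0}$, and for each destination $t\in T^*$, $\Omega(\opt^{1/19}/\poly\log n)$ paths from $C_t$ to $C_{t_0}$, all disjoint; then apply Theorem~\ref{thm: monotonicity for shells} and Lemma~\ref{lem:reroute-montone} to upgrade it to a \emph{good} crossbar (monotone paths, disjoint inner discs $\tD(t)$, nested segments on $Z_{\Delta_2}(t_0)$), from which a routing of $\Omega(\opt^{1/19}/\poly\log n)$ demand pairs reads off directly. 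In Case 2, contract the interiors of suitable discs around $F$ and $F'$ to map $\tmset'$ onto an instance of \NDPdisc (if $d(F,F')$ is small, including $F=F'$) or \NDPcyl (if $d(F,F')$ is large), use Theorem~\ref{thm: routing on a disc and cyl main} to route $\tilde\Omega(\opt(G,\tmset')/\poly(\Delta))$ pairs, and show this routing lifts back to $G$ with only a $\poly(\Delta)$ loss (using the $\Delta$-length cuff curves and monotonicity to transfer paths across the contracted regions). In either case: if the number of pairs routed is $\tilde\Omega(X^*/N^{9/19})$ we are done; otherwise, since $\Delta=\opt^{2/19}$ gives $\poly(\Delta)=O(\opt^{16/19}\poly\log n)$ and in the heavy sub-case $\opt(G,\tmset')\ge\Omega(\opt^{17/19})$, the failure to route enough forces $w^*|\tmset'|>16\opt(G,\tmset')$, i.e.\ a violated constraint of type~(\ref{LP: new constraint}).

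\textbf{Main obstacle.} The hardest part is the global accounting after the well-linked decomposition: a per-component certificate $w^*|\mset_i'|>16\opt(G_i,\mset_i')$ is \emph{not} automatically a violated LP constraint because routing in $G$ may exploit the cut vertices $U$, so $\opt(G,\mset_i')$ can far exceed $\opt(G_i,\mset_i')$. The resolution is to run all components together, track the total flow mass assigned by the semi-feasible solution to the demand pairs it eventually ``gives up on'', and exploit that $|U|\le X^*/64$ is small relative to the flow value $X^*$: only $O(X^*)$ units of flow can pass through $U$ globally, so after removing a small fraction of pairs the residual flow between $\bigcup_i\mset_i'$ must be essentially confined to the $G_i$'s, whence its value exceeds $\sum_i\opt(G_i,\mset_i')\ge\opt(G,\bigcup_i\mset_i')$ and $\bigcup_i\mset_i'$ is the desired violated constraint. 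Making this amortization precise — choosing the thresholds for ``heavy''/``light'' components, the $\poly\log n$ slack in the clustering, and the bookkeeping of flow through $U$ so that all inequalities line up — is where the real work lies, and it is carried out over Sections~\ref{sec: enclosures, shells, subsets}--\ref{sec: case 2}.
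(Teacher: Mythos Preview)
Your outline follows the paper's route---bucket the LP solution, do a well-linked decomposition (Theorem~\ref{thm: wld}), run a per-instance routine that either routes or certifies (Theorem~\ref{thm: main}), then aggregate---but two gaps remain.

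The larger one: your crossbar (Case~1) routes $\Omega(\opt^{1/19}/\poly\log n)$ pairs, yet your high-level accounting assumes the per-instance guarantee is $\tilde\Omega\bigl(W_j/N_j^{9/19}\bigr)$. These coincide only if $N_j\ge\Omega(W_j^2/\poly\log n)$, and that is a separate lemma you never invoke (Lemma~\ref{lem: nj bound} in the paper): $\alphawl$-well-linked terminals force treewidth $\Omega(W_j/\log k)$, whence by the planar grid-minor theorem a grid minor on $\Omega(W_j^2/\log^2 k)$ vertices, all of which must be non-terminal since terminals have degree~$1$. Without this step the exponent $9/19$ never materializes; your proof as written only shows one can route $\Omega(\opt^{1/19}/\poly\log n)$ pairs, which is Theorem~\ref{thm: main2}, not Theorem~\ref{thm: solution or sep oracle}.

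The smaller one is the shape of the aggregation. It is not ``for some $i$ a routing, else for every $i$ a certificate'': Theorem~\ref{thm: main} gives, \emph{for each} $j$, either a routing of $\Omega(W_j^{1/19}/\poly\log n)$ pairs or a set $\tmset^j\subseteq\mset^j$ with $|\tmset^j|\ge|\mset^j|/2$ and $\opt(G_j,\tmset^j)\le w^*|\tmset^j|/8$; one then splits indices into $I_1$ (routed) and $I_2$ (certified) and cases on which side carries total weight $\ge W/2$. In the certified case, set $\mset'=\bigcup_{j\in I_2}\tmset^j$ and bound $\opt(G,\mset')$ by reasoning about an \emph{integral} optimal routing (not the fractional flow): at most $|U|\le w^*|\mset|/64\le w^*|\mset'|/8$ of its paths touch $U$, and the remaining paths are each confined to some $G_j$, where at most $\sum_{j}w^*|\tmset^j|/8=w^*|\mset'|/8$ can exist---so $\opt(G,\mset')<w^*|\mset'|/2$, which is the violated constraint. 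Your sketch had this idea but stated it in terms of flow and left the $|U|$ correction implicit; also, the bucketing to a uniform flow value $w^*$ per surviving pair (an $O(\log k)$ loss) must happen \emph{before} the well-linked decomposition, or the certificate inequality $w^*|\tmset^j|>8\cdot\opt(G_j,\tmset^j)$ is not well-defined.
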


 We can now obtain an $O\left (N^{9/19}\cdot \poly\log n\right )$-approximation algorithm for \NDPplanar via the Ellipsoid algorithm. In every iteration, we start with some semi-feasible solution to (LP-flow2), and apply the algorithm from Theorem~\ref{thm: solution or sep oracle} to it. If the outcome is a solution routing at least $\Omega\left (\frac{X^*}{N^{9/19}\cdot \poly\log n}\right)$ demand pairs in $\mset$, then we obtain the desired approximate solution to the problem, assuming that $X^*$ was guessed correctly. Otherwise, we obtain a violated constraint of type~(\ref{LP: new constraint}), and continue to the next iteration of the Ellipsoid Algorithm. Since the Ellipsoid Algorithm is guaranteed to terminate with a feasible solution after a number of iterations that is polynomial in the number of the LP-variables, this gives an algorithm that is guaranteed to return a solution of value $\Omega\left (\frac{X^*}{N^{9/19} \cdot \poly\log n}\right)$ in time $\poly(n)$.
From now on we focus on proving Theorem~\ref{thm: solution or sep oracle}.

We note that, using standard techniques, we can efficiently obtain a flow-paths decomposition of any semi-feasible solution to (LP-flow2): we can efficiently find, for every demand pair $(s_i,t_i)$, a collection $\pset_i$ of paths, connecting $s_i$ to $t_i$, and for each path $P\in \pset_i$, compute a value $f(P)$, such that:

\begin{itemize}
\item For each $i\in [k]$, $\sum_{P\in \pset_i}f(P)=x_i$; 

\item For each $i\in [k]$, $|\pset_i|\leq n$; and 

\item For each $v\in V(G)$, $\sum_{i\in[k]}\sum_{\stackrel{P\in \pset_i:}{v\in P}}f(P)\leq 1$.
\end{itemize}

It is sometimes more convenient to work with the above flow-paths decomposition version of a given semi-feasible solution to (LP-flow2).

We now assume that we are given some semi-feasible solution $(x,f)$ to (LP-flow2), and define a new fractional solution based on it, where the flow between every demand pair is either $0$ or $w^*$, for some value $w^*>0$. First, for each demand pair $(s_i,t_i)$ with $x_i\leq \frac{1}{2k}$, we set $x_i=0$ and we set the corresponding flow values $f_i(e)$ for all edges $e\in E$ to $0$. Since we can assume that $X^*\geq 1$ if the graph is connected, the total amount of flow between the demand pairs remains at least $X^*/2$. We then partition the remaining demand pairs into $q=\ceil{\log 2k}$ subsets, where for $1\leq j\leq q$, set $\mset_j$ contains all demand pairs $(s_i,t_i)$ with $\frac{1}{2^{j}}< x_i \leq \frac 1 {2^{j-1}}$. There is some index $1\leq j^*\leq q$, such that the total flow between the demand pairs in $\mset_{j^*}$ is at least $\Omega(X^*/\log k)$. Let $w^*=\frac{1}{2^{j}}$. We further modify the LP-solution, as follows. First, for every demand pair $(s_i,t_i)\not\in \mset_{j^*}$, we set $x_i=0$, and the corresponding flow values $f_i(e)$ for all edges $e\in E$ to $0$. Next, for every demand pair $(s_i,t_i)\in \mset_{j^*}$, we let $\beta_i=w^*/x_i$, so $\beta_i\leq 1$. We set $x_i=w^*$, and the new flow values $f_i(e)$ are obtained by scaling the original values by factor $\beta_i$. This gives a new solution to (LP-flow2), that we denote by $(x',f')$. The total amount of flow sent in this solution is $\Omega(X^*/\log k)$, and it is easy to verify that constraints~(\ref{LP: flow-amounts})--(\ref{LP: capacity constraints}) and (\ref{LP: non-negativity}) are satisfied. For every demand pair $(s_i,t_i)\in \mset_{j^*}$, $x'_i=w^*$, and for all other demand pairs $(s_i,t_i)$, $x'_i=0$. It is easy to see that for every demand pair $(s_i,t_i)$, $x'_i\leq x_i$. Therefore, if we find a constraint of type~(\ref{LP: new constraint}) that is violated by the new solution, then it is also violated by the old solution. Our goal now is to either find an integral solution routing $\Omega\left (\frac{X^*}{N^{9/19}\cdot \poly\log n}\right)$  demand pairs, or to find a constraint of type~(\ref{LP: new constraint}) violated by the new LP-solution.
In particular, if we find a subset $\mset'\subseteq \mset_{j^*}$ of demand pairs, with $\opt(G,\mset')\leq w^*|\mset'|/2$, then set $\mset'$ defines a violated constraint of type~(\ref{LP: new constraint}) for (LP-flow2). Since from now on we only focus on demand pairs in $\mset_{j^*}$, for simplicity we denote $\mset=\mset_{j^*}$.


\label{---------------------------subsec: wld-----------------------------}
\subsection{Well-Linked Decomposition}\label{subsec: wld}
Like many other approximation algorithms for routing problems, we decompose our input instance into a collection of sub-instances that have some useful well-linkedness properties. Since the routing is on node-disjoint paths, we need to use a slightly less standard notion of node-well-linkedness, defined below. Throughout this paper, we use a parameter $\alphaWL=\frac{w^*}{512 \cdot \alphasc \cdot \log k}$.

\begin{definition}
Given a graph $H$ and a set $\tset'$ of its vertices, we say that $\tset'$ is \emph{$\alphaWL$-well-linked} in $H$ iff for every pair $\tset_1,\tset_2$ of disjoint equal-sized subsets of $\tset$, there is a set $\pset$ of at least $\alphaWL\cdot  |\tset_1|$ node-disjoint paths in $H$, connecting vertices of $\tset_1$ to vertices of $\tset_2$.
\end{definition}


\begin{definition}
Given a sub-graph $H\subseteq G$ and a subset $\mset'\subseteq \mset$ of demand pairs with $\tset(\mset')\subseteq V(H)$, we say that $(H,\mset')$ is a \emph{well-linked instance}, iff $\tset(\mset')$ is $\alphaWL$-well-linked in $H$.
\end{definition}

The following theorem uses standard techniques, and its proof is deferred to Appendix.

\begin{theorem}\label{thm: wld}
There is an efficient algorithm to compute a collection $G_1,\ldots,G_r$ of disjoint sub-graphs of $G$, and for each $1\leq j\leq r$, a set $\mset^j\subseteq \mset$ of demand pairs with $\tset(\mset^j)\subseteq V(G_j)$, such that:

\begin{itemize}
\item For all $1\leq j\leq r$, $(G_j,\mset^j)$ is a well-linked instance;
\item For all $1\leq j\neq j'\le r$, there is no edge in $G$ with one endpoint in $G_j$ and the other in $G_{j'}$; 
\item $\sum_{j=1}^r|\mset^j|\geq 63|\mset|/64$; and
\item $\left |V(G)\setminus\left(\bigcup_{j=1}^rV(G_j)\right )\right |\leq \frac{w^*\cdot |\mset|}{64}$.
\end{itemize}
\end{theorem}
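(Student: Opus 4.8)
The plan is to prove Theorem~\ref{thm: wld} by a standard recursive well-linked decomposition, where in each step we either certify that the current piece is already well-linked or we find a sparse vertex cut along which to split, using Observation~\ref{obs: sparsest cut} as the cutting primitive, and charging the deleted separator vertices to the terminals we lose.

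First I would set up the recursion. We maintain a collection of pairs $(H,\mset_H)$, starting from $(G,\mset)$. For a given pair $(H,\mset_H)$, let $\tset_H=\tset(\mset_H)$. If $|\tset_H|\le 2$ we stop (the instance is trivially well-linked, or we can just discard it with a negligible loss). Otherwise we run the algorithm of Observation~\ref{obs: sparsest cut} on $H$ with terminal set $\tset_H$ to obtain a vertex cut $(A,C,B)$ with $C\cap\tset_H=\emptyset$, whose sparsity is within factor $\alphasc$ of optimal. If the sparsity of this cut is at least $1/(512\log k)$, then (by the standard equivalence between sparsest-cut value and well-linkedness, up to the approximation factor $\alphasc$) the set $\tset_H$ is $\alphaWL$-well-linked in $H$ with $\alphaWL=\frac{w^*}{512\alphasc\log k}$ — here $w^*$ enters because each terminal has degree $1$ and participates in one demand pair, so the ``flow-cut'' gap between a fractional and integral routing on node-disjoint paths carries the extra $w^*$ factor; we add $(H,\mset_H)$ to the output list. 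If the sparsity is smaller than this threshold, we split: recurse on $(H[A\cup C],\mset_A)$ and $(H[B\cup C],\mset_B)$, where $\mset_A$ (resp.\ $\mset_B$) is the set of demand pairs of $\mset_H$ with both terminals in $A$ (resp.\ in $B$). Demand pairs with one terminal in $A$ and one in $B$, together with all vertices of $C$, are ``lost'' at this step. Note $C$ is duplicated into both sides, which is fine since we only need the final graphs to be vertex-disjoint after we commit — I would instead put $C$ only on one side, say the $A$-side, to keep the pieces genuinely disjoint, at no cost to the argument. Planarity is preserved by taking induced subgraphs, and there are no edges between distinct final pieces because every split removes a separator. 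Termination follows because $|\tset_H|$ strictly decreases (each side has fewer terminals, since both $A\cap\tset_H$ and $B\cap\tset_H$ are nonempty by the definition of a valid cut).

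The accounting is the crux, and is where I expect the main work to lie. For the third and fourth bullets I would use an amortized charging argument: when we split $(H,\mset_H)$ along a cut of sparsity $<\frac{1}{512\log k}$, let $m=\min\{|A\cap\tset_H|,|B\cap\tset_H|\}+|C\cap\tset_H|$; since $C\cap\tset_H=\emptyset$ this is just $\min\{|A\cap\tset_H|,|B\cap\tset_H|\}$. Sparsity $<1/(512\log k)$ gives $|C|< m/(512\log k)$, and the number of demand pairs split across the cut is at most $2m$ (each such pair has a terminal on the smaller side), actually I would bound it by $m$ after observing each crossing pair contributes a distinct terminal to the smaller side. We charge the $|C|$ deleted vertices and the (at most $2m$) lost terminals to the $m$ terminals on the smaller side of the cut; since these terminals move to a piece that is a strict subinstance, a standard argument bounds the total number of times any single terminal is charged by $O(\log k)$ (the recursion ``depth'' in terms of smaller-side membership is logarithmic because the smaller side at least halves the relevant terminal count each time it is the smaller side). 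Multiplying: total deleted non-terminal vertices $\le \sum |C| \le \frac{1}{512\log k}\sum m \le \frac{1}{512\log k}\cdot O(\log k)\cdot|\tset|$, and choosing constants so this is at most $\frac{w^*|\mset|}{64}$ — here I would be careful that $|\tset|=2|\mset|$ and that $w^*\le 1$, so this requires the $512$ in $\alphaWL$ to be chosen large enough relative to the hidden constant in the $O(\log k)$ charging bound. The same computation bounds the total number of lost demand pairs by $|\mset|/64$, giving $\sum_j|\mset^j|\ge 63|\mset|/64$. The main obstacle is getting the charging bookkeeping exactly right so that the two loss bounds come out with the stated constants ($63/64$ and $w^*/64$) simultaneously; I would handle this by proving a clean recursive inequality of the form $\Phi(H,\mset_H)\ge \Phi(H[A\cup C],\mset_A)+\Phi(H[B],\mset_B)+(\text{loss})$ for a suitable potential $\Phi$ linear in $|\mset_H|$, rather than tracing individual terminals.

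Finally, I would fill in the two ``standard'' facts I invoked. The equivalence between a lower bound on node-sparsest-cut and $\alpha$-well-linkedness is the usual LP-duality / multicommodity-flow argument: if every vertex cut $(A,C,B)$ has sparsity at least $\phi$, then for any pair $\tset_1,\tset_2$ of equal-sized subsets one can route a fractional node-capacitated flow of value $\Omega(\phi|\tset_1|/\log k)$ between them (via the approximate max-flow/min-cut theorem with node capacities), and since all terminals have degree $1$, rounding/integralizing together with the definition of $\alphaWL$ yields $\alphaWL|\tset_1|$ node-disjoint paths; the extra $1/\log k$ and the $w^*$ scaling are precisely absorbed into the definition $\alphaWL=\frac{w^*}{512\alphasc\log k}$. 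All steps are polynomial time, since each invocation of $\algsc$ is polynomial and the recursion tree has $O(k)$ leaves and hence $O(k)$ internal nodes.
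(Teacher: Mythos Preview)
Your overall strategy—iteratively split along approximate sparsest vertex cuts until each piece is well-linked, and bound the deleted vertices by charging each separator to the smaller side—is exactly the paper's approach. But two coupled errors leave genuine gaps.

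First, the cutting threshold must carry the factor $w^*$: the paper cuts whenever the returned sparsity is below $\tau=\frac{w^*}{512\log k}$, not $\frac{1}{512\log k}$. With your threshold the charging yields $|U|\le \frac{1}{512\log k}\cdot O(\log k)\cdot|\tset|=O(|\mset|)$, which is \emph{not} $\le \frac{w^*|\mset|}{64}$ when $w^*$ is small (and $w^*$ can be as small as $\Theta(1/k)$); no choice of the constant $512$ rescues this. Once $w^*$ sits in the threshold, well-linkedness follows directly from Menger: if the algorithm stops on a piece, its optimal sparsity is at least $\tau/\alphasc=\alphaWL$, so any vertex set separating equal-sized $\tset_1,\tset_2\subseteq\tset_H$ has size $\ge\alphaWL|\tset_1|$. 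There is no approximate max-flow/min-cut step, no extra $1/\log k$ loss, and no ``rounding with a $w^*$ factor''; your explanation of how $w^*$ enters $\alphaWL$ is simply wrong—it enters because it sits in $\tau$.

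Second, and more seriously, your bound on lost demand pairs does not follow from ``the same computation.'' At each split the number of crossing pairs is at most $m$, not $\tau m$, so summing gives only $\sum m\le O(|\tset|\log k)$, nowhere near $|\mset|/64$. The paper's argument is different and uses the LP solution, which you never invoke: a pair $(s,t)$ is lost iff $s$ and $t$ land in different components of $G\setminus U$, so the $w^*$ units of flow the semi-feasible LP solution routes between them must all pass through $U$. Vertex capacities then give $w^*\cdot|\{\text{lost pairs}\}|\le |U|\le \frac{w^*|\mset|}{64}$, hence at most $|\mset|/64$ pairs are lost. Without this LP step the third bullet has no proof.
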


For each $1\leq j\leq r$, let $W_j=w^*|\mset^j|$ be the contribution of the demand pairs in $\mset^j$ to the current flow solution and let $W=\sum_{j=1}^r W_j=\Omega(X^*/\log k)$. Let $n_j=|V(G_j)|$, and let $N_j$ be the number of the non-terminal vertices in $G_j$.
The main tool in proving Theorem~\ref{thm: solution or sep oracle} is the following theorem.

\begin{theorem}\label{thm: main}
 There is an efficient algorithm, that computes, for every $1\leq j\leq r$,  one of the following:
\begin{enumerate}
\item Either a collection $\pset^j$ of node-disjoint paths, routing $\Omega\left(W_j^{1/19}/\poly\log n\right)$ demand pairs of $\mset^j$ in $G_j$; or

\item A collection $\tmset^j\subseteq \mset^j$ of demand pairs, with $|\tmset^j|\geq |\mset^j|/2$, such that $\opt(G_j,\tmset^j)\leq w^*|\tmset^j|/8$.
\end{enumerate}
\end{theorem}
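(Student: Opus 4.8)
The plan is to prove Theorem~\ref{thm: main} by working on each well-linked instance $(G_j,\mset^j)$ separately; fix such an index $j$ and set $W_j=w^*|\mset^j|$. The high-level strategy follows the roadmap sketched in Section~\ref{sec: alg overview}: use a parameter $\Delta=\Theta(W_j^{2/19})$ (measured in units appropriate to $G_j$), build \emph{enclosures} $D_t$ of boundary-length exactly $\Delta$ around every terminal $t\in\tset(\mset^j)$, so that each enclosure contains at most $O(\Delta/\alphaWL)$ terminals, no two enclosures properly contain one another unless equal, and disjoint enclosures $D_t,D_{t'}$ are joined by $\Delta$ node-disjoint paths between $C_t$ and $C_{t'}$ (this uses the $\alphaWL$-well-linkedness of $\tset(\mset^j)$ in $G_j$ together with a Menger/uncrossing argument). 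Then define $d(t,t')$ to be the length of the shortest $G_j$-normal curve from $C_t$ to $C_{t'}$.

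Next I would prove the dichotomy: either a large subset $\tmset\subseteq\mset^j$ of demand pairs has all participating terminals pairwise at distance $\Omega(\Delta)$, or a large subset $\tmset'\subseteq\mset^j$ has two faces $F,F'$ of the drawing such that every pair in $\tmset'$ has one terminal within $\tilde O(\Delta)$ of $\partial F$ and the other within $\tilde O(\Delta)$ of $\partial F'$. This is a clustering/ball-growing argument on the distance function $d$: greedily grow balls of radius $\Theta(\Delta\log k)$ around terminals, using the bound on the number of terminals inside any enclosure to control overlaps; a set that refuses to spread out is, essentially by a topological counting argument on $G_j$-normal curves, squeezed near one or two face boundaries. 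In Case~1, I would invoke the crossbar construction of Section~\ref{sec: case 1}: using the inter-enclosure paths and a family of tight concentric cycles (Theorem~\ref{thm: monotonicity for shells}) around each terminal of a chosen ``center'', route $\Omega\bigl(|\tmset|^{\,?}/\poly\log n\bigr)=\Omega(W_j^{1/19}/\poly\log n)$ of the pairs in $\tmset$ node-disjointly, giving outcome~1. In Case~2, reduce $(G_j,\tmset')$ to \NDPdisc (if $F=F'$ or they are close) or \NDPcyl (if they are far), losing only a $\poly(\Delta)$ factor in the instance transformation, and apply the $O(\log k)$-approximation of Theorem~\ref{thm: routing on a disc and cyl main}; this yields a routing of $\tilde\Omega\bigl(\opt(G_j,\tmset')/\poly(\Delta)\bigr)$ pairs. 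If that number is $\Omega(W_j^{1/19}/\poly\log n)$ we are in outcome~1; otherwise $\opt(G_j,\tmset')=\tilde O(\poly(\Delta)\cdot W_j^{1/19})=\tilde O(W_j^{16/19}\cdot W_j^{1/19})=o(W_j)$, so (after also ensuring $|\tmset'|\ge|\mset^j|/2$, which the clustering guarantees) we get $\opt(G_j,\tmset')\le w^*|\tmset'|/8$, which is outcome~2. Balancing the two cases is exactly what forces the exponents $\Delta=W_j^{2/19}$, $\poly(\Delta)=\Delta^8=W_j^{16/19}$, and the target $W_j^{1/19}$.

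The main obstacle I expect is the Case~1 crossbar construction and, hand in hand with it, getting the enclosures to simultaneously satisfy all three requirements (bounded terminal count inside, no proper nesting, and $\Delta$ disjoint paths between disjoint ones) — the paths between enclosures must be kept globally disjoint across all the shells we later install, and the monotonization of Theorem~\ref{thm: monotonicity for shells} must be applied without the shells of different terminals interfering, which is why the terminals in Case~1 are required to be $\Omega(\Delta)$-separated so that the discs $D^*(t)$ are disjoint. A secondary but delicate point is the bookkeeping in the reductions of Case~2: one must verify that the transformation to a disc or cylinder instance both does not increase the optimum by more than a $\poly(\Delta)$ factor \emph{and} lets any \NDPdisc/\NDPcyl solution be pulled back to a node-disjoint routing in $G_j$; here the fact that all terminals near $\partial F$ sit within a $G_j$-normal collar of width $\tilde O(\Delta)$, which we can ``open up'' into a disc boundary, is what makes it go through. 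Everything else — the flow-paths decomposition, the uniform-flow normalization, and the well-linked decomposition — has already been set up in Section~\ref{sec: alg overview} and Theorem~\ref{thm: wld}, so the remaining work is precisely these two geometric/topological constructions and the arithmetic that ties the exponents together.
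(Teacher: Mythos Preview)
Your plan tracks the paper's approach closely --- enclosures, the terminal-distance metric, the crossbar in the spread-out case, and the disc/cylinder reduction in the clustered case are exactly the right ingredients, and you correctly flag the crossbar as the hardest piece. But the dichotomy you state is too clean, and as written it is false; this creates a real gap in your Case~2 accounting.

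It is \emph{not} the case that either a constant fraction of pairs have all participating terminals pairwise $\Omega(\Delta)$-far, or a constant fraction lie near a \emph{single} pair of faces. What the paper actually does (Section~\ref{subsec: terminal subsets}) is partition the terminals into clusters $X_1,\ldots,X_q$ that are close within and $\geq 5\Delta$ apart, declare a cluster \emph{heavy} if $w^*|X_i|\ge\tau=W_j^{18/19}$, and branch on whether at least $0.1|\mset^j|$ of the demand pairs are \emph{light} (one endpoint in a light cluster) or at least $0.7|\mset^j|$ are \emph{heavy} (both in heavy clusters). In Case~1 only the destination side of each selected pair needs to lie in distinct clusters (hence be far apart), and crucially Case~1 \emph{always} outputs a routing --- outcome~2 is never needed there. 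In Case~2 there can be up to $q\le 2W_j^{1/19}$ heavy clusters, so the heavy pairs split into as many as $q^2$ sub-instances $\mset_{i,j'}$, one per ordered pair of clusters; each is reduced separately to \NDPdisc or \NDPcyl (Theorem~\ref{thm: main for Case 2}). If \emph{every} one of these reductions routes too few pairs, the set returned in outcome~2 is the \emph{union} $\tmset'=\bigcup_{(i,j')}\mset'_{i,j'}$ --- that is what makes $|\tmset'|\ge|\mset^j|/2$ hold, since no single cluster-pair sub-instance need be that large. Your sentence ``the clustering guarantees $|\tmset'|\ge|\mset^j|/2$'' is only true for this union; but then $\tmset'$ is not near two fixed faces, and you must argue that the per-sub-instance approximation bound is tight enough that smallness of every $|\pset_{i,j'}|$ forces $\opt(G_j,\tmset')$ small. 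This global accounting, and the extra $q^2=O(W_j^{2/19})$ factor it costs, is precisely what your sketch elides, and it is part of what fixes the exponent $1/19$. A secondary omission: when the two heavy clusters are close (Subcase~2b), demand pairs may be entirely routable inside a small disc $\eta(R)$, and these must be handled first (Theorem~\ref{thm: routing close pairs}) before the reduction to \NDPdisc goes through.
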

Before we prove Theorem~\ref{thm: main}, we show that Theorem~\ref{thm: solution or sep oracle} follows from it. We apply Theorem~\ref{thm: main} to every instance $(G_j,\mset^j)$, for $1\leq j\leq r$.
We say that instance $(G_j,\mset^j)$ is a type-1 instance, if the first outcome happens for it, and we say that it is a type-2 instance otherwise.
Let $I_1=\set{j\mid \mbox{$(G_j,\mset^j)$ is a type-1 instance}}$, and similarly, $I_2=\set{j\mid \mbox{$(G_j,\mset^j)$ is a type-2 instance}}$. We consider two cases, where the first case happens when $\sum_{j\in I_1}W_j\geq W/2$.

\paragraph{Case 1: $\sum_{j\in I_1}W_j\geq W/2$.} We show that in this case, our algorithm returns a routing of $\Omega\left (\frac{X^*}{N^{9/19}\cdot \poly\log n}\right)$  demand pairs. We need the following lemma, whose proof is deferred to the Appendix. The proof uses standard techniques: namely, we show that the treewidth of each graph $G_j$ is at least $\Omega(W_j/\log k)$, and so $G_j$ must contain a large grid minor.

\begin{lemma}\label{lem: nj bound} For each $1 \le j \le r$, $N_j \ge \Omega(W_j^2 / \log^2 k)$.
\end{lemma}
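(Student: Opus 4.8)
The plan is to prove Lemma~\ref{lem: nj bound} by going through the standard chain: well-linkedness of the terminals $\tset(\mset^j)$ in $G_j$ implies a lower bound on the treewidth of $G_j$, the treewidth bound implies (via the planar Grid-Minor Theorem of Robertson--Seymour--Thomas~\cite{RST94}) that $G_j$ contains a large grid minor, and finally the fact that all terminals have degree $1$ forces the grid minor to live inside the non-terminal part of $G_j$, giving the desired bound on $N_j$.

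More concretely, fix $j$ and recall that $(G_j,\mset^j)$ is a well-linked instance, so $\tset(\mset^j)$ is $\alphaWL$-well-linked in $G_j$, where $\alphaWL = \Theta(w^*/\log k)$, and $|\tset(\mset^j)| = 2|\mset^j| = 2W_j/w^*$ since every terminal has degree $1$ and participates in exactly one demand pair. First I would argue that node-well-linkedness of a set $\tset'$ of $m$ vertices with parameter $\alpha$ in a graph $H$ implies $\tw(H) = \Omega(\alpha m)$: this is a routine consequence of the sparsest-cut/treewidth relationship — if $\tw(H)$ were small, there would be a balanced vertex separator of size $O(\tw(H))$ with respect to $\tset'$, contradicting well-linkedness unless $\tw(H) = \Omega(\alpha m)$. (Alternatively one can cite the standard statement that a graph with a well-linked set of size $m$ and linkedness $\alpha$ has treewidth $\Omega(\alpha m)$.) Plugging in $m = 2W_j/w^*$ and $\alpha = \Theta(w^*/\log k)$, the $w^*$ factors cancel and we get $\tw(G_j) = \Omega(W_j/\log k)$.

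Next, since $G_j$ is planar, the Grid-Minor Theorem for planar graphs (\cite{RST94}, and see also the improved bounds, though the linear planar bound suffices here) gives that $G_j$ contains the $(g\times g)$-grid as a minor for $g = \Omega(\tw(G_j)) = \Omega(W_j/\log k)$. A $(g\times g)$-grid minor requires at least $g^2$ vertices, and moreover each branch set of the minor model is a connected subgraph of $G_j$, so each must contain at least one non-terminal vertex: indeed, a branch set corresponding to a grid vertex of degree $\ge 2$ in the grid cannot consist of a single degree-$1$ terminal of $G_j$, and in fact since terminals have degree $1$ in $G_j$, no branch set can be "used" purely through a terminal while realizing two or more incident grid edges. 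More carefully, the interior $(g-2)\times(g-2)$ subgrid has all vertices of degree $4$, so the corresponding $\Omega(g^2)$ branch sets each contain a non-terminal vertex, and these branch sets are disjoint; hence $N_j \ge \Omega(g^2) = \Omega(W_j^2/\log^2 k)$, as claimed.

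The main obstacle I anticipate is not any single deep step but rather getting the degree-$1$-terminal argument airtight: one must be careful that the grid-minor model may route many grid edges through terminal vertices, but since each terminal has degree exactly $1$ in $G_j$, a terminal lies on the boundary of its branch set and can be incident to at most one edge leaving the branch set, so it cannot simultaneously serve as (part of) a branch set realizing $\ge 2$ grid-edges — the cleanest fix is to restrict attention to the $\Omega(g^2)$ internal grid vertices (degree $4$) and observe their branch sets must each contain a non-terminal vertex, and distinctness of branch sets then yields the count. The treewidth-to-grid-minor and well-linkedness-to-treewidth implications are entirely standard and I would state them with citations rather than reprove them.
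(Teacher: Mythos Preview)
Your proposal is correct and follows essentially the same approach as the paper: well-linkedness of $\tset(\mset^j)$ gives $\tw(G_j)=\Omega(W_j/\log k)$ via a balanced-separator argument, the planar Grid-Minor Theorem then yields an $\Omega(W_j/\log k)\times\Omega(W_j/\log k)$ grid minor, and the degree-$1$ terminals force $\Omega(W_j^2/\log^2 k)$ non-terminal vertices. The paper spells out the treewidth step in full (finding a balanced bag in a tree decomposition and contradicting well-linkedness) and is terser on the terminal-counting step, while you sketch the former and are more explicit about the latter, but the arguments coincide.
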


The number of the demand pairs we route in each type-$1$ instance $(G_j,\mset^j)$ is then at least:

\[\begin{split}
\Omega\left(\frac{W_j^{1/19}}{\poly\log n}\right )&=\Omega\left(\frac{W_j}{W_j^{18/19} \cdot \poly\log n}\right )\\
&=\Omega\left(\frac{W_j}{\left(\sqrt{N_j}\log k\right)^{18/19}\cdot\poly\log n}\right )
\\& =\Omega\left(\frac{W_j}{N^{9/19}\cdot\poly\log n}\right ). 
\end{split}\]

Overall, since $\sum_{j\in I_1}W_j\geq W/2$, the number of the demand pairs routed is $\Omega\left(\frac{W}{N^{9/19}\cdot\poly\log n}\right )=\Omega\left(\frac{X^*}{N^{9/19}\cdot\poly\log n}\right )$.

\paragraph{Case 2: $\sum_{j\in I_2}W_j\geq W/2$.}
Let $\mset'=\bigcup_{j\in I_2}\tmset^j$. Then $|\mset'|=\sum_{j\in I_2}|\tmset^j|\geq \sum_{j\in I_2}\frac{|\mset^j|}{2}\geq \frac 1 4 \sum_{j=1}^r|\mset^j|\geq \frac{|\mset|}8$. We claim that the following inequality, that is violated by the current LP-solution, is a valid constraint of (LP-flow2):

\begin{equation}\label{eq: violated constraint}
\sum_{(s_i,t_i)\in \mset'}x'_i\leq w^*|\mset'|/2.
\end{equation}

In order to do so, it is enough to prove that $\opt(G,\mset')<w^*|\mset'|/2$. Assume otherwise, and let $\pset^*$ be the optimal solution for instance $(G,\mset')$, so $|\pset^*|\geq w^*|\mset'|/2$. We say that a path $P\in \pset^*$ is bad if it contains a vertex of $V(G)\setminus\left(\bigcup_{j=1}^rV(G_j)\right )$. The number of such bad paths is bounded by the number of such vertices - namely, at most $\frac{w^*\cdot |\mset|}{64}\leq \frac{w^*|\mset'|}{8}\leq \frac{|\pset^*|} 2$. Therefore, at least $w^*|\mset'|/4$ paths in $\pset^*$ are good. Each such path must be contained in one of the graphs $G_j$ corresponding to a type-2 instance. For each $j\in I_2$, let $\hat{\mset}^j\subseteq \tmset^j$ be the set of the demand pairs routed by good paths of $\pset^*$. Then, on the one hand, $\sum_{j\in I_2}|\hmset^j|\geq w^*|\mset'|/4=w^*\sum_{j\in I_2}|\tmset^j|/4$, while, on the other hand, since all demand pairs in $\hat{\mset}^j$ can be routed simultaneously in $G_j$, for all $j\in I_2$, $|\hat{\mset}^j|\leq w^*|\tmset^j|/8$, a contradiction. We conclude that  $\opt(G,\mset')<w^*|\mset'|/2$, and (\ref{eq: violated constraint}) is a valid constraint of (LP-flow2).

From now on, we focus on proving Theorem~\ref{thm: main}. Since from now on we only consider one instance $(G_j,\mset^j)$, for simplicity, we abuse the notation and denote $G_j$ by $G$, and $\mset^j$ by $\mset$. As before, we denote $\tset=\tset(\mset)$. We denote by $W=w^*\cdot |\mset|$ the total amount of flow sent between the demand pairs in the new set $\mset$ in the  LP solution (note that this is not necessarily a valid LP-solution for the new instance, as some of the flow-paths may use vertices lying outside of $G^j$). We use $n$ to denote the number of vertices in $G$. Value $k$ - the number of the demand pairs in the original instance - remains unchanged. Our goal is to either find a collection of node-disjoint paths routing $\Omega(W^{1/19}/ \poly(\log (nk)))$ demand pairs of $\mset$ in $G$, or to find a collection $\tmset\subseteq \mset$ of at least $|\mset|/2$ demand pairs, such that $\opt(G,\tmset)\leq w^*|\tmset|/8$. We will rely on the fact that all terminals are $\alphaWL$-well-linked in $G$, for $\alphaWL=\Theta(w^*/\log k)$.  We assume that $G$ is connected, since otherwise all terminals must be contained in a single connected component of $G$ and we can discard all other connected components.

We assume that we are given an embedding of the graph $G$ into the sphere.
For every pair $v,v'\in V(G)$ of vertices, we let $\dface(v,v')$ be the length of the shortest $G$-normal curve connecting $v$ to $v'$ in this embedding, minus $1$. It is easy to verify that $\dface$ is a metric: that is, $\dface(v,v)=0$, $\dface(v,v')=\dface(v',v)$, and the triangle inequality holds for $\dface$. The value $\dface(v,v')$ can be computed efficiently, by solving an appropriate shortest path problem instance in the graph dual to $G$. Given a vertex $v$ and a subset $U$ of vertices of $G$, we denote by $\dface(v,U)=\min_{u\in U}\set{\dface(v,u)}$. Similarly, given two subsets $U,U'$ of vertices of $G$, we denote $\dface(U,U')=\min_{u\in U,u'\in U'}\set{\dface(u,u')}$. Finally, given a $G$-normal curve $C$, and a vertex $v$ in $G$, we let $\dface(v,C)=\min_{u\in V(C)}\set{\dface(v,u)}$.

Over the course of the algorithm, we will sometimes select some face of the drawing of $G$ as the outer face and consider the resulting drawing of $G$ in the plane. The function $\dface$ remains unchanged, and it is only defined with respect to the fixed embedding of $G$ into the sphere. 

\label{---------------------------------sec: enclosures, shells, subsets-----------------------------}
\section{Enclosures, Shells, and Terminal Subsets}\label{sec: enclosures, shells, subsets}

In this section we develop some of the technical machinery that we use in our algorithm, and describe the first steps of the algorithm. We start with enclosures around the terminals.

\label{--------------------------------subsec: enclosures---------------------------}
\subsection{Constructing Enclosures}\label{subsec: enclosures}

Throughout the algorithm, we use a parameter $\Delta=\ceil{W^{2/19}}$. We assume that $W>\Omega(\Delta)$, since otherwise $W$ is bounded by a constant, and we can return the routing of a single demand pair.
The goal of this step is to construct enclosures around the terminals, that are defined below. Recall that $G$ is embedded on the sphere.

\begin{definition}
An \emph{enclosure} for terminal $t\in \tset$ is a simple disc $D_t$ containing the terminal $t$, whose boundary is denoted by $C_t$, that has the following properties. (Recall that $V(D_t)$ is the set of all vertices of $G$ contained in $D_t$.)
\begin{itemize}
\item $C_t$ is a simple closed $G$-normal curve with $\ell(C_t)= \Delta$;

\item  $|\tset\cap V(D_t)|\leq 4\Delta/\alphawl$; and
\item  $V(D_t)$ induces a connected graph in $G$.
\end{itemize}
\end{definition}

The goal of this section is to prove the following theorem.

\begin{theorem}\label{thm: build enclosures}
There is an efficient algorithm, that constructs an enclosure $D_t$ for every terminal $t\in \tset$, such that for all $t,t'\in \tset$:
\begin{itemize}
\item If $D_t\subseteq D_{t'}$, then $D_t = D_{t'}$; and \label{property:build-enclosure-no-containment}

\item If $D_t\cap D_{t'} = \emptyset$, then there are $\Delta$ node-disjoint paths between $V(C_t)$ and $V(C_{t'})$ in $G$. \label{property:build-enclosure-disjoint-paths}
\end{itemize}
\end{theorem}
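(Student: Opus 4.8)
\textbf{Proof plan for Theorem~\ref{thm: build enclosures}.}

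The plan is to construct the enclosures via an iterative, greedy ``uncrossing'' procedure. First I would produce, for every terminal $t\in\tset$, a \emph{candidate} enclosure: a disc $D_t$ containing $t$ whose boundary $C_t$ is a $G$-normal curve of length exactly $\Delta$ and that contains few terminals. To do this, contract all of $V(G)$ to form nothing---rather, I would work with the metric $\dface$: consider the ball of radius roughly $\Delta/2$ around $t$ in the dual shortest-path metric, take its boundary, and argue via a counting/averaging argument over radii $\rho\in\{1,\ldots,\Delta/2\}$ that some radius yields a $G$-normal cycle (in the dual sense, a separating $G$-normal curve) of length at most $\Delta$; padding or the structure of tight concentric cycles (Definition of $\mincycle$ and tight concentric cycles, available earlier) can be used to make the length exactly $\Delta$. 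The bound $|\tset\cap V(D_t)|\le 4\Delta/\alphawl$ is where $\alphawl$-well-linkedness enters: if $D_t$ contained more than $4\Delta/\alphawl$ terminals, then splitting those terminals into two equal halves (one half inside, arranging the other half outside, or using a sub-split of the interior set) would require more than $\Delta$ node-disjoint paths crossing $C_t$, contradicting $\ell(C_t)=\Delta$ together with the fact that every such path must pass through a vertex of $V(C_t)$. I also need $V(D_t)$ to induce a connected subgraph; this can be enforced by, whenever $D_t$ is disconnected, replacing it with the union of $D_t$ and all discs bounded by components of its complement that do not contain $t$-reachable vertices, i.e. ``filling in holes'' --- this only shrinks the complement and cannot increase $\ell(C_t)$ beyond $\Delta$ if done carefully, or one re-runs the radius argument inside the connected piece.

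Next I would enforce the two global properties by an uncrossing argument. For the non-containment property: process terminals and, whenever $D_t\subsetneq D_{t'}$ strictly, simply \emph{redefine} $D_t:=D_{t'}$ and $C_t:=C_{t'}$; since $D_{t'}$ is a valid enclosure for $t'$ and it contains $t$, it is also a valid enclosure for $t$ (all three defining properties are inherited). One must check this process terminates --- it does, because each reassignment strictly increases $|V(D_t)|$, which is bounded by $n$, and once $D_t=D_{t'}$ the two are merged permanently. After this, $D_t\subseteq D_{t'}$ implies $D_t=D_{t'}$. For the disjoint-paths property when $D_t\cap D_{t'}=\emptyset$: here I would invoke Menger's theorem. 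If fewer than $\Delta$ node-disjoint paths connect $V(C_t)$ to $V(C_{t'})$, there is a vertex cut $Z$ with $|Z|<\Delta$ separating them; this cut, together with planarity, yields a $G$-normal curve of length $<\Delta$ separating $D_t$ from $D_{t'}$, and hence a disc $D$ with $D_t\subseteq D$, $D\cap D_{t'}=\emptyset$, $\ell(\partial D)<\Delta$. But then $D$ could be grown to a valid enclosure of $t$ of boundary-length exactly $\Delta$ that still does not contain $t'$ --- and crucially one shows this new enclosure properly contains $C_t$ or gives a contradiction with how $C_t$ was chosen (e.g. $C_t$ was chosen to be ``innermost'' / minimal, or of a canonical tight form). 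The cleanest route is probably: choose each $C_t$ initially to be, among all valid enclosure boundaries of $t$, one that is minimal with respect to disc inclusion (a tight/innermost choice, analogous to $\mincycle$); then any separating curve of length $<\Delta$ between two disjoint enclosures would contradict minimality on one side.

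The main obstacle I expect is the interaction between the two global properties during uncrossing: enlarging $D_t$ to fix a disjointness violation (or merging two enclosures to fix containment) may re-introduce too many terminals inside, violating $|\tset\cap V(D_t)|\le 4\Delta/\alphawl$, or may create new crossings with a third enclosure $D_{t''}$. Handling this requires a careful potential-function / termination argument showing the whole process stabilizes, and showing that the terminal-count bound $4\Delta/\alphawl$ (note the factor $4$ rather than $2$) has enough slack to absorb the merges --- the constant $4$ is presumably chosen precisely so that merging two overlapping enclosures, each with $\le 2\Delta/\alphawl$ terminals, stays within budget, while well-linkedness forbids any single valid enclosure from exceeding it. I would organize the proof so that: (i) candidate enclosures are built with the strict bound $2\Delta/\alphawl$; (ii) uncrossing/merging is shown to keep everything within $4\Delta/\alphawl$ and to terminate; (iii) minimality of the chosen boundaries, preserved or re-established after each merge, yields the disjoint-paths property via Menger plus planarity.
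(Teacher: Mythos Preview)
Your high-level outline has the right ingredients (Menger plus planarity to turn a small vertex cut into a short $G$-normal separating curve, well-linkedness to bound terminal counts, an iterative merging process), but the argument you sketch for the disjoint-paths property does not go through, and this is the heart of the theorem.

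\textbf{The minimality argument fails.} You propose choosing each $C_t$ to be \emph{innermost} among valid enclosure boundaries for $t$, and then claim that a separating curve of length $<\Delta$ between two disjoint enclosures would ``contradict minimality on one side.'' It does not. If $D_t\cap D_{t'}=\emptyset$ and a cut $Z$ with $|Z|<\Delta$ separates $V(C_t)$ from $V(C_{t'})$, the resulting $G$-normal curve bounds a disc $D\supseteq D_t$ with $\ell(\partial D)<\Delta$. This disc is \emph{larger} than $D_t$, not smaller; its boundary is too short to be an enclosure boundary, and growing it to length exactly $\Delta$ produces an even larger disc. Nothing here contradicts the innermost choice of $D_t$. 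The situation is not self-correcting by a one-shot minimal choice; you genuinely have to \emph{change} one of the enclosures.

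\textbf{What the paper does instead.} The paper runs an explicit fix-up loop. Initial enclosures are built by starting from the single point $t$ and repeatedly growing the disc by one vertex at a time (Claim~\ref{clm:extend-by-one}), which simultaneously keeps the interior connected and increments the boundary length by exactly one --- this is much cleaner than your dual-ball / radius-averaging idea, which does not obviously give length exactly $\Delta$ or connected interior. After handling strict containment by merging (as you suggest), the algorithm checks every disjoint pair $(t,t')$; if it is bad, Menger plus planarity (Claim~\ref{clm:cut-in-planar-graph}) yields a short separating curve $C$; the side with at most half the terminals has $\le \Delta/\alphawl$ terminals by well-linkedness, and that side is then grown via Claim~\ref{clm:extend-by-one} back up to boundary length $\Delta$, adding at most $\Delta$ further vertices. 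The resulting disc replaces $D_t$ (and any $D_{t''}\subsetneq D'$). Termination is immediate because $\sum_{t}|V(D_t)|$ strictly increases in every iteration. Note that the factor $4$ in $4\Delta/\alphawl$ is slack absorbing the $\le\Delta$ vertices added during the grow-back step, not (as you guessed) the union of two $2\Delta/\alphawl$-enclosures.

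In short: you correctly anticipated that the interaction between the two properties is the hard part, but the resolution is not a canonical minimal choice --- it is an iterative \emph{enlarge-and-repeat} procedure with the potential $\sum_t|V(D_t)|$.
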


Notice that since $\ell(C_t) = \Delta$, every vertex of $V(C_t)$ is an endpoint of a path connecting $V(C_t)$ to $V(C_{t'})$.
In order to prove the theorem, we need the following two simple claims.

\begin{claim}\label{clm:extend-by-one}
Let $D$ be any disc on the sphere, whose boundary $C$ is a simple $G$-normal curve, such that $1\leq |V(D)|< |V(G)|-1$, and $G[V(D)]$ is connected (we allow $D$ to consist of a single point, which must coincide with a vertex of $G$). Then we can efficiently find a disc $D'$ with $V(D)\subsetneq V(D')$ and $|V(D')|=|V(D)|+1$, such that  $G[V(D')]$ is a connected graph. Moreover, if $C'$ is the boundary of $D'$, then $C'$ is a simple $G$-normal curve with $\ell(C')=\ell(C)+1$. 
\end{claim}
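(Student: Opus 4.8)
The plan is to grow the disc $D$ by adding a single vertex that lies just outside its boundary, chosen so that connectivity is preserved. First I would consider the planar graph $G$ together with the closed curve $C$ bounding $D$, and examine the region $\Sigma\setminus \dnot(D)$ on the other side of $C$. Since $|V(D)| < |V(G)|-1$, there is at least one vertex of $G$ lying strictly outside $D$; and since $G$ is connected, some vertex $v$ outside $D$ must have a neighbor in $V(D)$ (otherwise the vertices of $V(D)$ and those outside would form a disconnection, using the fact that no edge of $G$ crosses $C$ because $C$ is $G$-normal). Actually, I want to be slightly more careful: I should pick $v$ to be a vertex outside $D$ that is incident to the face of $G$ ``just outside'' $C$, i.e. a vertex $v\in V(G)\setminus V(D)$ that is adjacent along $C$ — more precisely, $v$ lies on the boundary of the region immediately exterior to $C$ and is a neighbor of some vertex of $V(C)$. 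Such a $v$ exists because $C$ is $G$-normal, so $C$ passes only through vertices of $G$, and the closed region exterior to $D$ contains at least one vertex, hence (walking along edges from inside) at least one vertex adjacent to $V(C)$.

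Next I would define $D'$ by extending $D$ across $v$: take a small disc neighborhood $B_v$ of $v$, and let $D' = D \cup B_v \cup (\text{a thin corridor along an edge from } v \text{ to } V(C))$; equivalently, re-route the portion of $C$ near $v$ so that it now passes through $v$ and the two new curve-segments hug the edges of $G$ incident to $v$ on the exterior side. Because $v$ has a neighbor on $C$, we can do this so that the new boundary $C'$ is again a simple closed $G$-normal curve, now containing exactly the vertices $V(C)\cup\{v\}$ minus possibly none — I must check that $C'$ passes through every vertex of $V(C)$ together with $v$, so $\ell(C') = \ell(C)+1$, and $V(D') = V(D)\cup\{v\}$, giving $|V(D')| = |V(D)|+1$. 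The precise local surgery is the standard ``enlarge a $G$-normal disc by one vertex'' move; one draws $C'$ following $C$ except in the neighborhood of $v$, where it detours out to pass through $v$ along the rotation system of $G$ at $v$, staying $G$-normal. Finally, $G[V(D')] = G[V(D)\cup\{v\}]$ is connected: $G[V(D)]$ is connected by hypothesis, and $v$ has an edge to some vertex of $V(D)\cap N(v)\supseteq V(C)\cap N(v)\neq\emptyset$, so attaching $v$ keeps the graph connected.

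The main obstacle I anticipate is making the topological surgery rigorous: I need to argue that a vertex $v$ outside $D$ adjacent to $V(C)$ genuinely exists (this uses connectivity of $G$ plus the fact that a $G$-normal curve meets $G$ only at vertices, so the only way to ``escape'' $D$ via an edge is through a vertex of $V(C)$), and that the detour defining $C'$ can be drawn without crossing any edge of $G$ except at vertices and without self-intersecting — this is where planarity and the rotation system at $v$ come in, and where one must be careful if $v$ has several neighbors on $C$ or if edges of $G$ lie entirely outside $D$ incident to $v$. I would handle this by choosing the corridor to follow a single fixed edge $e=(u,v)$ with $u\in V(C)$, drawing $C'$ to coincide with $C$ outside a small neighborhood of $u$ and $e$, and inside that neighborhood to run parallel to $e$ out to $v$, loop around $v$ on its exterior side (respecting the cyclic order of edges at $v$ so as not to trap any edge that should remain outside), and return parallel to $e$ — a construction that is clearly realizable and efficiently computable. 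The bound $|V(D)| < |V(G)|-1$ guarantees there is room to do this (we are not forced to swallow the last remaining exterior vertex and thereby make the exterior region degenerate), which is exactly what the hypothesis is there to ensure.
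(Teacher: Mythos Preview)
Your overall plan---pick a vertex $v\notin V(D)$ adjacent to some $u\in V(C)$, and enlarge $D$ to swallow $v$---is exactly the paper's idea, and your justification that such a $v$ exists (connectivity of $G$, plus $G$-normality of $C$) is correct. The gap is in the surgery itself.

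Your proposed construction routes $C'$ out from $u$ to $v$ along one side of the edge $e=(u,v)$, loops around $v$, and returns along the other side of $e$. Look at what happens at $u$. If the two parallel strands both terminate at $u$, then $C'$ visits $u$ twice and is not simple. If instead you let the two strands bypass $u$ and attach directly to the two arcs of $C$ incident to $u$, then $u$ ends up in the interior of $D'$, so $V(C')=(V(C)\setminus\{u\})\cup\{v\}$ and $\ell(C')=\ell(C)$, not $\ell(C)+1$. Either way the claim fails. (And if the faces on the two sides of $e$ are not the same faces that the arcs of $C$ at $u$ live in, you cannot even make the connection without crossing an edge of $G$.)

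The paper repairs this by using \emph{two} anchor vertices on $C$, not one. Let $u'$ be a vertex adjacent to $u$ on $C$, and choose $v$ so that in addition to the edge $e=(u,v)$ there is a $G$-normal curve $\gamma'$ from $u'$ to $v$ meeting $G$ only at its endpoints and meeting $D$ only at $u'$. (Such a $v$ exists: among the neighbors of $u$ outside $D$, take the one whose edge is first in the rotation at $u$ after the arc of $C$ toward $u'$; then $v$ and $u'$ lie on a common face.) Now replace the short arc of $C$ between $u$ and $u'$ by the concatenation of a curve $\gamma$ from $u$ to $v$ (hugging $e$) and $\gamma'$ from $v$ to $u'$. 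The resulting $C'$ is simple, $G$-normal, passes through $V(C)\cup\{v\}$, and bounds a disc $D'$ with $V(D')=V(D)\cup\{v\}$. This two-anchor detour is the missing piece; once you have it, the rest of your argument (connectivity, the role of the hypothesis $|V(D)|<|V(G)|-1$) goes through. You should also treat the degenerate case $|V(D)|=1$ separately, as the paper does.
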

\begin{proof}
If $D$ consists of a single point corresponding to a vertex $v\in V(G)$, then let $u$ be any neighbor of $v$ in $G$. It is easy to construct a disc $D'$ whose boundary only contains the vertices $v$ and $u$, and has all the required properties, with $V(D')=\set{v,u}$. We now assume that $|V(D)|>1$.

Let $u\in V(C)$ be any vertex that has a neighbor in $V(G)\setminus V(D)$: since $G$ is connected and $|V(D)|< |V(G)|$, such a vertex  exists. Let $u'$ be a vertex lying next to $u$ on $C$. Then there must be a vertex $v\in V(G)\setminus V(D)$, such that:
 (i) edge $e=(u,v)$ belongs to $G$; and (ii) there is a simple $G$-normal curve $\gamma'$ connecting $u'$ to $v$, that intersects $G$ only at its endpoints, and intersects $D$ only at $u'$.  Let $\sigma,\sigma'$ be the two segments of $C$ whose endpoints are $u$ and $u'$. 
 
 Notice that due to the edge $e=(u,v)$, there is also a $G$-normal curve $\gamma$ connecting $u$ to $v$, that intersects $G$ only at its endpoints, and intersects $D$ only at $u$. Let $C_1$ be the concatenation of $\sigma,\gamma$ and $\gamma'$, and let $C_2$ be the concatenation of $\sigma',\gamma$ and $\gamma'$.
  
Let $x\in V(G)\setminus (V(D)\cup \set{v})$ be any vertex (such a vertex exists since $|V(D)|<|V(G)|-1$), and let $F$ be any face in the drawing of $G\cup C\cup \gamma\cup \gamma'$ incident on $x$. We can view the face $F$ as the outer face of our drawing, to obtain a drawing of $G\cup C\cup \gamma\cup \gamma'$ in the plane. Using this view, curve $C_1$ defines a disc $D_1$ and curve $C_2$ defines a disc $D_2$ in the plane. Exactly one of these discs contains the disc $D$ - assume w.l.o.g. that it is $D_1$. We then set $D'=D_1$ and $C'=C_1$. It is now immediate to verify that $D'$ has all required properties.
%
%
\end{proof}

\begin{claim}\label{clm:cut-in-planar-graph}
	Let $H$ be any connected planar graph drawn on a sphere, and let $s$ and $t$ be two distinct vertices of $H$. Assume that the maximum number of internally node-disjoint paths between $s$ and $t$ in $H$ is $\kappa$. Then we can efficiently find a simple closed $H$-normal curve $C$ of length $\kappa$ on the sphere, separating $s$ from $t$, with $s,t\not\in V(C)$. Moreover, if $U$ and $U'$ denote the sets of vertices lying strictly on each side of $C$, then $H[U\cup V(C)]$ and $H[U'\cup V(C)]$ are both connected.
\end{claim}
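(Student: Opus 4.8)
The plan is to read a vertex separator off Menger's theorem and then realize it as a closed curve using the planar embedding. Throughout I assume $(s,t)\notin E(H)$, which must hold for the statement to be meaningful (an $H$-normal curve never crosses the interior of an edge, so two adjacent vertices can never be separated by one) and which holds in all our applications. By the vertex form of Menger's theorem there is a set $X\subseteq V(H)\setminus\{s,t\}$ with $|X|=\kappa$ such that $s$ and $t$ lie in different connected components of $H\setminus X$; moreover a minimum-size such $X$ is found efficiently by a max-flow computation. Fix such a minimum $X$, let $A$ be the connected component of $s$ in $H\setminus X$ (so $s\in A$), and let $B=V(H)\setminus(A\cup X)$, so that $t\in B$.

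Next I would record three structural consequences of the minimality of $X$. (i) Every $x\in X$ has a neighbor in $A$: otherwise the component of $s$ in $H\setminus(X\setminus\{x\})$ is still exactly $A$, so $X\setminus\{x\}$ would already separate $s$ from $t$. (ii) $H[A\cup X]$ is connected, which is immediate from (i) and the connectedness of $A$. (iii) $H[B\cup X]$ is connected: if not, let $P$ be its component containing $t$; since $H$ is connected, every component of $H\setminus X$ lying in $B$ is adjacent to $X$, so $X\subseteq P$ would force $B\cup X\subseteq P$, contradicting disconnectedness, and hence $X\cap P\subsetneq X$; but no edge of $H$ joins $A$ to $B$, so every vertex of $B\cap P$ has all its neighbors inside $(B\cap P)\cup(X\cap P)$, which makes $X\cap P$ an $s$-$t$ separator of size $<|X|$, a contradiction.

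To build the curve, contract $A$ to a single supernode $\tilde s$; since $A$ is connected the resulting graph $H_1$ is planar and inherits the embedding on the sphere, and by (i) together with the fact that every edge of $H$ leaving $A$ ends in $X$ we get $N_{H_1}(\tilde s)=X$. Let $e_1,\dots,e_d$ be the edges incident to $\tilde s$ in their cyclic (rotation) order, with $e_\ell$ ending at $x(e_\ell)\in X$; between consecutive edges $e_\ell$ and $e_{\ell+1}$ there is a face $F_\ell$ of $H_1$ having both $x(e_\ell)$ and $x(e_{\ell+1})$ on its boundary, so inside $F_\ell$ we can route a simple arc from $x(e_\ell)$ to $x(e_{\ell+1})$ that meets $H_1$ only at its two endpoints and stays arbitrarily close to the corner at $\tilde s$. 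Concatenating these $d$ arcs yields a closed $H_1$-normal curve that encircles $\tilde s$, passes through every vertex of $X$ at least once, and whose $\tilde s$-side lies in a small neighborhood of the star of $\tilde s$ and hence contains no vertex of $H_1$ other than $\tilde s$. If some vertex is visited more than once, the arc-piece between two consecutive visits bounds (together with a point at that vertex) an open region containing no vertex of $H_1$ — neither $\tilde s$, which lies deep inside, nor any other vertex, since the curve passes through every vertex it meets — so it may be spliced out; after removing all repeats we obtain a \emph{simple} closed $H_1$-normal curve $C$ with $V(C)=X$, so $\ell(C)=\kappa$, still enclosing exactly $\tilde s$.

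Undoing the contraction is a homeomorphism away from the small disc that held $A$, so $C$ pulls back to a simple closed $H$-normal curve with $V(C)=X$, having $A$ strictly on one side and $B$ strictly on the other; thus $C$ separates $s$ from $t$, $s,t\notin V(C)$, and with $U=A$, $U'=B$ we get $H[U\cup V(C)]=H[A\cup X]$ and $H[U'\cup V(C)]=H[B\cup X]$, both connected by (ii) and (iii). Every step is polynomial-time. The main obstacle is the topological bookkeeping in the curve construction: choosing the arcs pairwise internally disjoint and then splicing so that the final curve is simple and meets each vertex of $X$ exactly once (this is where parallel edges at $\tilde s$, or a vertex of $X$ occurring non-consecutively in the rotation around $\tilde s$, have to be handled), together with the verification that the enclosed side really contains only $\tilde s$; the minimality argument for fact (iii) is routine but also needs some care.
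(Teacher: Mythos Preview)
Your high-level plan is sound and your structural facts (i)--(iii) are correct, but the splicing step and the final identification $U=A$, $U'=B$ both fail. Here is a concrete counterexample. Take $H$ with vertices $s,a,b,c,x_1,x_2,w,t$ and edges $sa,sb,sc,ax_1,bx_1,cx_2,x_1w,x_1t,x_2t$, embedded so that $w$ lies inside the $4$-cycle $s\text{--}a\text{--}x_1\text{--}b\text{--}s$. Then $\kappa=2$, $X=\{x_1,x_2\}$, $A=\{s,a,b,c\}$, $B=\{w,t\}$. After contracting $A$, the supernode $\tilde s$ has two parallel edges to $x_1$ and one to $x_2$, and $w$ sits inside the lens between the two $\tilde s$--$x_1$ edges. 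The loop-arc you build in that lens does \emph{not} bound a vertex-free region on either side: one side contains $w$, the other contains $\tilde s,x_2,t$. So your stated splicing criterion (``the arc-piece between two consecutive visits bounds an open region containing no vertex'') is false. And once you do splice correctly (removing the lens-loop and keeping the other two arcs), the resulting curve encloses both $\tilde s$ \emph{and} $w$; hence $U=A\cup\{w\}\ne A$ and $U'=\{t\}\ne B$, so the final appeal to (ii),(iii) does not establish the required connectivity.

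The paper's proof avoids all of this by bringing in the max-flow side of Menger as well: it fixes $\kappa$ internally node-disjoint $s$--$t$ paths $P_1,\dots,P_\kappa$, whose union cuts the sphere into exactly $\kappa$ bigon faces $F_i$ bounded by $P_i\cup P_{i+1}$. Each $P_i$ contains exactly one cut vertex $x_i\in X$, so the $x_i$ are automatically distinct, and inside each $F_i$ one draws a single arc $\gamma_i$ from $x_i$ to $x_{i+1}$ using that $\{x_i,x_{i+1}\}$ disconnects the $s$-side from the $t$-side of $H$ restricted to $F_i$. The concatenation $C=\gamma_1\cup\cdots\cup\gamma_\kappa$ is simple by construction (no repeated vertices, no splicing), and connectivity of both sides is immediate because every $x_i$ is joined to $s$ along $P_i^s\subseteq U\cup V(C)$ and to $t$ along $P_i^t\subseteq U'\cup V(C)$. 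Your approach can be repaired (one can show, using minimality of $X$, that each $x\in X$ appears as a consecutive block in the rotation around $\tilde s$, then keep only the between-block arcs and argue connectivity via the lens structure), but this is real additional work beyond what you wrote.
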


\begin{proof}
	By Menger's theorem, we can efficiently find a set $X \subseteq V(H) \setminus \{s, t\}$ of $\kappa$ vertices such that $s$ and $t$ are separated in the graph $H \setminus X$.  
	
	Let $\pset=\set{P_1,\ldots,P_{\kappa}}$ be any set of $\kappa$ internally node-disjoint paths connecting $s$ to $t$ in $H$, and let $H'$ be the sub-graph of $H$ obtained by taking the union of the paths in $\pset$. The drawing of $H'$ on the sphere consists of $\kappa$ faces, where the boundary of each face is the union of two distinct paths in $\pset$. We assume that the faces are $F_1,\ldots,F_{\kappa}$, and we assume without loss of generality that the boundary of each face $F_i$ is $P_i\cup P_{i+1}$ (and the boundary of $F_{\kappa}$ is $P_{\kappa}\cup P_1$). Notice that for each $1\leq i\leq \kappa$, $X$ contains exactly one internal vertex of $P_i$, that we denote by $x_i$. 
Let $P^s_i,P^t_i$ be the two paths in $P_i\setminus\set{x_i}$, where $P^s_i$ is the path containing $s$, so $P^t_i$ contains $t$.

Fix some $1\leq i\leq \kappa$, and let $H_i\subseteq H$ be the graph induced by all vertices lying inside $F_i$ or on its boundary. Let $H'_i=H_i\setminus\set{x_i,x_{i+1}}$. Then there is no path in $H'_i$ connecting a vertex of $P^s_i\cup P^s_{i+1}$ to a vertex of $P^t_i\cup P^t_{i+1}$: such a path would contradict the fact that $s$ is disconnected from $t$ in $H\setminus X$. Therefore, there is a curve $\gamma_i$, connecting $x_i$ to $x_{i+1}$ inside $F_i$, that intersects $H$ only at its endpoints. 
Curve $\gamma_i$ partitions $F_i$ into two subfaces: $F_i'$ and $F_i''$, with $s$ lying on the boundary of $F_i'$ and $t$ lying on the boundary of $F_i''$. Since $H$ is a connected graph, and $\gamma_i$ only intersects $H$ at its endpoints, both subgraphs induced by the vertices lying in $F_i'$ and its boundary, and in $F_i''$ and its boundary, are connected.

We  build the curve $C$ by concatenating all curves $\gamma_i$ for $1\leq i\leq \kappa$. It is easy to verify that $C$ has all required properties.\end{proof}

\begin{proofof}{Theorem~\ref{thm: build enclosures}}
We show an efficient algorithm to construct the enclosures with the desired properties.
Throughout the algorithm,  we maintain a set $\set{D_t}_{t\in \tset}$ of enclosures, such that for every pair $t,t'\in \tset$ of terminals, if $D_t\subseteq D_{t'}$, then $D_t=D_{t'}$.

The initial set of enclosures is obtained as follows. For each terminal $t\in \tset$, let $D'_t$ be the disc containing a single point - the image of the vertex $t$. We apply Claim~\ref{clm:extend-by-one} $\Delta-1$ times to $D'_t$, to obtain a disc $D_t$ whose boundary is a simple $G$-normal curve of length $\Delta$, and $D'_t\subseteq D_t$, while $|V(D_t)|= \Delta$. By Claim~\ref{clm:extend-by-one}, $V(D_t)$ induces a connected sub-graph in $G$. Since $|V(D_t)|=\Delta$, $D_t$ is a valid enclosure for $t$. While there is a pair $t,t'$ of terminals with $D_t\subsetneq D_{t'}$, we set $D_t=D_{t'}$. This finishes the definition of the initial set of enclosures. We then perform a number of iterations. In every iteration, we consider all pairs $t,t'$ of terminals with $D_t \cap D_{t'} = \emptyset$, and check whether there are $\Delta$ node-disjoint paths connecting the vertices of $V(C_t)$ to the vertices of $V(C_{t'})$ in $G$. If so, then we say that $(t,t')$ is a good pair. If all such pairs are good, then we terminate the algorithm, and output the current set of enclosures. Otherwise, let $(t,t')$ be a bad pair.  Let $H$ be the graph constructed from $G$, as follows: we delete all vertices of $V(D_t\setminus C_t)$, and add a source vertex $a$ to the interior of $D_t$. We then connect $a$ to every vertex of $V(C_t)$ with an edge. Similarly, we delete all vertices of $V(D_{t'}\setminus C_{t'})$, and add a destination vertex $b$ to the interior of $D_{t'}$. We then connect $b$ to every vertex of $V(C_{t'})$. 

Let $\kappa\leq \Delta-1$ be the maximum number of internally node-disjoint paths connecting $a$ to $b$ in $H$. We apply Claim~\ref{clm:cut-in-planar-graph} to find an $H$-normal closed curve $C$ of length $\kappa$, separating $a$ from $b$. Then $C$ defines a $G$-normal curve of length $\kappa$, such that, if $U$ and $U'$ denote the sets of vertices of $G$ lying strictly on each side of $C$, then $V(D_t)\subseteq U\cup V(C)$; $V(D_{t'})\subseteq U'\cup V(C)$; and both $G[U\cup V(C)]$ and $G[U'\cup V(C)]$ are connected. 

Notice that either $|U\cap \tset|\leq |\tset|/2$ or $|U'\cap \tset|\leq |\tset|/2$ holds - we assume w.l.o.g. that it is the former. Let $D$ be the disc whose boundary is $C$, with $D_t\subseteq D$. Since $\ell(C)<\Delta$, from the well-linkedness of the terminals $|V(D)\cap \tset|\leq 
\Delta/\alphawl$.  We next apply Claim~\ref{clm:extend-by-one} to $D$ repeatedly to obtain a disc ${D'}$, whose boundary $C'$ is a simple $G$-normal curve of length $\Delta$, so that $D \subseteq D'$; $|V(D')| \leq |V(D)| + \Delta$, and $G[D']$ is a connected graph. It is easy to verify that $D'$ is a valid enclosure for terminal $t$. We replace $D_t$ with $D'$. If there is any terminal $t''$ with $D_{t''}\subsetneq D'$, then we replace $D_{t''}$ with $D'$ as well. This finishes the description of an iteration. Notice that $\sum_{t\in \tset}|V(D_t)|$ increases by at least $1$ in every iteration, and so the number of iterations is bounded by $|V(G)|$.
\end{proofof}

\paragraph{Distances between terminals.}
For every pair $t,t'$ of terminals, we define the distance $d(t,t')$ between $t$ and $t'$ to be the length of the shortest $G$-normal open curve, with one endpoint in $V(C_t)$ and another in $V(C_{t'})$. (Notice that if $D_t\cap D_{t'}\neq \emptyset$, then $d(t,t')=1$).
We repeatedly use the following simple observation (a weak triangle inequality):

\begin{observation}
	\label{obs: weak-triangle-inequality}
	For all $t, t', t'' \in \tset$, $d(t, t'') \leq d(t, t') + d(t', t'') + \Delta/2$.
\end{observation}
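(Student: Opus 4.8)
\textbf{Proof proposal for Observation~\ref{obs: weak-triangle-inequality}.}

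The plan is to concatenate near-optimal $G$-normal curves witnessing $d(t,t')$ and $d(t',t'')$, and then argue that the extra ``slack'' needed to route around the enclosure $D_{t'}$ costs at most $\Delta/2$. Concretely, let $\gamma_1$ be a $G$-normal curve of length $d(t,t')$ with one endpoint $u\in V(C_t)$ and the other endpoint $v\in V(C_{t'})$, and let $\gamma_2$ be a $G$-normal curve of length $d(t',t'')$ with one endpoint $v'\in V(C_{t'})$ and the other endpoint $w\in V(C_{t''})$. The issue is that the endpoints $v$ and $v'$ on $C_{t'}$ need not coincide, so we cannot simply concatenate $\gamma_1$ and $\gamma_2$ into a single $G$-normal curve from $V(C_t)$ to $V(C_{t''})$.

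First I would walk along $C_{t'}$ from $v$ to $v'$, using whichever of the two arcs of $C_{t'}$ between $v$ and $v'$ is shorter; since $C_{t'}$ is a simple closed $G$-normal curve with $\ell(C_{t'})=\Delta$, the shorter arc $\beta$ contains at most $\lceil \Delta/2\rceil$ vertices of $C_{t'}$, hence has length at most roughly $\Delta/2$. Then the concatenation $\gamma = \gamma_1 \cdot \beta \cdot \gamma_2$ is a (not necessarily simple) $G$-normal curve from $u\in V(C_t)$ to $w\in V(C_{t''})$. Its length, counting vertices of $G$ it passes through, is at most $\ell(\gamma_1)+\ell(\beta)+\ell(\gamma_2) \le d(t,t') + \Delta/2 + d(t',t'')$ (being a little careful that the shared endpoints $v,v'$ are not double-counted, which only helps). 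Since $d(t,t'')$ is defined as the length of the \emph{shortest} $G$-normal curve between $V(C_t)$ and $V(C_{t''})$, and since a non-simple $G$-normal curve can always be shortcut to a simple one of no greater length, we get $d(t,t'') \le d(t,t') + d(t',t'') + \Delta/2$, as claimed. The edge cases ($D_t\cap D_{t'}\neq\emptyset$, or $v=v'$, or $t,t',t''$ not all distinct) only make the bound easier, since then $d$ of the relevant pair is $1$ or the detour along $\beta$ is trivial.

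The only mildly delicate point — the ``main obstacle'', though it is quite minor here — is bookkeeping the vertex counts at the three junction vertices $u$, $v$, $v'$, $w$: a $G$-normal curve's length is the number of $G$-vertices it meets, and when we splice curves at a common vertex of $C_{t'}$ we must make sure that vertex is counted once, not twice, and similarly that $\beta$'s endpoints $v,v'$ are exactly the endpoints of $\gamma_1$ and $\gamma_2$ on $C_{t'}$. This is why the inequality carries the additive $\Delta/2$ term rather than being a clean triangle inequality, and why a small slack (here $\Delta/2$, coming from half the length of $C_{t'}$) is unavoidable. I would present the estimate with $\lceil\Delta/2\rceil \le \Delta/2 + 1$ absorbed into the statement's constant (or note that the $\Delta/2$ bound holds since one of the two arcs of $C_{t'}$ has at most $\Delta/2$ vertices), and keep the geometric picture — two curves meeting the same enclosure boundary, bridged by a short boundary arc — as the guiding intuition.
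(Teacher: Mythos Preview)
Your proposal is correct and follows essentially the same argument as the paper: take witnessing $G$-normal curves for $d(t,t')$ and $d(t',t'')$, bridge their endpoints on $C_{t'}$ by the shorter of the two arcs of $C_{t'}$, and observe that the bridging arc contributes at most $\Delta/2$ once the shared junction vertices are accounted for. The paper's proof is exactly this concatenation with the same vertex-count bookkeeping you describe.
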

\begin{proof}
Let $\gamma$ be a $G$-normal curve of length $d(t,t')$ connecting a vertex of $C_t$ to a vertex of $C_{t'}$, and let $\gamma'$ be defined similarly for $d(t',t'')$. Let $u,u'\in V(C_{t'})$ be the vertices that serve as endpoints of $\gamma$ and $\gamma'$, respectively, and let $\sigma$ be the shorter of the two segments of $C_{t'}$ between $u$ and $u'$, so the length of $\sigma$ is at most $\Delta/2+2$. By combining $\gamma,\sigma$ and $\gamma'$, we obtain a $G$-normal curve of length at most  $d(t, t') + d(t', t'') + \Delta/2$, connecting a vertex of $C_t$ to a vertex of $C_{t''}$.
\end{proof}

\label{---------------------------------SubSec: the shells -------------------------------}

\subsection{Constructing Shells}\label{subsec: - the shells}
Suppose we are given some terminal $t\in \tset$ and an integer $r\geq 1$. In this section we show how to construct a shell of depth $r$ around $t$, and explore its properties. Shells play a central role in our algorithm. In order to construct the shell, we need to fix a plane drawing of the graph $G$, by choosing one of the faces $F_t$ of the drawing of $G$ on the sphere as the outer face. The choice of the face $F_t$ will affect the construction of the shell, but once the face $F_t$ is fixed, the shell construction is fixed as well. We require that for every vertex $v$ on the boundary of $F_t$, $\dface(v,C_t)\geq r+1$, and that $C_t$ separates all vertices on the boundary of $F_t$ from $t$. We note that when we construct shells for different terminals $t,t'$, we may choose different faces $F_t,F_{t'}$, and thus obtain different embeddings of $G$ into the plane.
We now define a shell.

\begin{definition}
Suppose we are given a terminal $t\in \tset$, a face $F_t$ in the drawing of $G$ on the sphere, and an integer $r\geq 1$, such that for every vertex $v$ on the boundary of $F_t$, $\dface(v,C_t)\geq r+1$, and $C_t$ separates $t$ from the boundary of $F_t$.

A shell $\zset^r(t)$ of depth $r$ around $t$ with respect to $F_t$ is a collection $\zset^r(t)=(Z_1(t),Z_2(t),\ldots,Z_r(t))$ of $r$  tight concentric cycles around $C_t$. In other words, all cycles $Z_h(t)$ are simple and disjoint from each other, and the following properties hold. For each $1\leq h\leq r$, let $D(Z_h(t))$ be the disc whose boundary is $Z_h(t)$ in the planar drawing of $G$ with $F_t$ as the outer face. Then:

\begin{properties}{J}
\item $D_t\subsetneq D(Z_1(t))\subsetneq D(Z_2(t))\subsetneq\cdots\subsetneq D(Z_r(t))$; and  \label{prop: disc2}

\item for every $1\leq h\leq r$, if $H$ is the graph obtained from $G$ by contracting all vertices lying in $D(Z_{h-1}(t))$ into a super-node $a$, then $Z_h(t) = \mincycle(H, a)$ (when $h=1$, we contract $D_t$ into a super-node $a$).
\end{properties}
\end{definition}

Notice that from this definition we immediately obtain the following additional properties:

\begin{properties}[2]{J}
\item For every $1\leq h\leq r$,  for every vertex $v\in V(Z_h(t))$, there is a $G$-normal curve of length $2$ connecting $v$ to some vertex of $V(Z_{h-1}(t))$ (or to a vertex of $V(C_t)$ if $h=1$).  \label{prop: curve 1}

\item For every $1\leq h\leq r$, for every vertex $v\in V(Z_h(t))$, there is a $G$-normal curve $\gamma(v)$ of length $h+1$ connecting $v$ to some vertex of $V(C_t)$, so that $\gamma(v)\subseteq D(Z_h(t))$, and it is internally disjoint from $Z_h(t)$ and $C_t$.\label{prop: short curve}
\end{properties}
 
Let $\tilde U$ be the set of all vertices $v\in V(G)\setminus V(D_t)$, such that $\dface(v,C_t)<r+1$. Clearly, $G\setminus \tilde U\neq \emptyset$, since the vertices on the boundary of $F_t$ do not lie in $\tilde U$. We let $Y_t$ be the connected component of $G\setminus \tilde U$ containing the vertices on the boundary of $F_t$. The following additional property follows from the definition of the shell:

\begin{properties}[4]{J}
\item All vertices of $Y_t$ lie outside $D(Z_r(t))$.\label{prop: Yt out}
\end{properties}

Indeed, from Property~(\ref{prop: short curve}), for each $1\leq h\leq r$, $Z_h(t)\cap Y_t=\emptyset$. Since $Y_t$ is connected,  $Z_r(t)$ must separate $V(D_t)$ from $Y_t$.

For convenience, we will always denote $Z_0(t)=C_t$ and $D(Z_0(t))=D_t$. Note that $Z_0(t)$ is not a cycle in $G$ -- it is a simple closed $G$-normal curve, and so it is not part of the shell. We build the cycles $Z_1(t),\ldots,Z_r(t)$ one-by-one, and  we maintain the invariant that for each $1\leq h\leq r$, $(Z_1(t),\ldots,Z_h(t))$ is a shell of depth $h$ around $t$ with respect to $F_t$.





Assume that we have defined $Z_0(t),\ldots,Z_{h-1}(t)$ for some $1\leq h< r$, such that the above invariant holds. In order to define $Z_h(t)$, consider the drawing of the graph $G$ in the plane with $F_t$ as the outer face, and delete all vertices lying in $D(Z_{h-1}(t))$ from it. Consider the face $F$ where the deleted vertices used to be. Let $\Gamma$ be the inner boundary of $F$. Then $\Gamma$ must contain a single simple cycle $Z$, such that $D(Z_{h-1}(t))\subseteq D(Z)$ (if no such cycle exists, then from Invariant~(\ref{prop: short curve}), for some vertex $v$ on the boundary of $F_t$, there is a $G$-normal curve of length at most $r+1$ connecting $v$ to $C_t$, which contradicts the choice of $F_t$). 
We let $Z_h(t)=Z$. It is easy to see that the invariant continues to hold.
This finishes the construction of the shell. We now study its properties.


For each $1\leq h\leq r$, we let $U_h$ be the set of all vertices of $G$ lying in $\dnot(Z_h(t))\setminus D(Z_{h-1}(t))$. Let $\rset_h$ be the set of all connected components of $G[U_h]$.
Each connected component $R\in \rset_h$ may have at most one neighbor in $V(Z_h(t))$ from the definition of the shell, and it may have a number of neighbors in $V(Z_{h-1}(t))$. We say that $R\in \rset_h$ is a \emph{type-1 component} if it has one neighbor in $V(Z_h(t))$, and at least one neighbor in $V(Z_{h-1}(t))$. We denote by $u(R)$ the unique neighbor of $R$ in $V(Z_h(t))$, and we denote by $L(R)$ the set of the neighbors of $R$ that belong to $V(Z_{h-1}(t))$.
We say that $R$ is a \emph{type-2 component}, if it has at least one neighbor in $V(Z_{h-1}(t))$ and no neighbors in $V(Z_h(t))$. In this case, we let $L(R)$ be the set of the neighbors of $R$ lying in $V(Z_{h-1}(t))$, and $u(R)$ is undefined.
Otherwise, we say that it is a \emph{type-3 component}. In this case, it has exactly one neighbor in $V(Z_h(t))$, that we denote by $u(R)$,  and no neighbors in $V(Z_{h-1}(t))$, so we set $L(R)=\emptyset$ (see Figure~\ref{fig: three types of cc's}). We sometimes refer to the vertices in set $L(R)$ as \emph{the legs of $R$}.

\begin{figure}[h]
\centering
\subfigure[The three types of components in $\rset_h$: $R_1$ is of type 1, $R_2$ is of type 2, and $R_3$ is of type 3.]{\scalebox{0.35}{\includegraphics{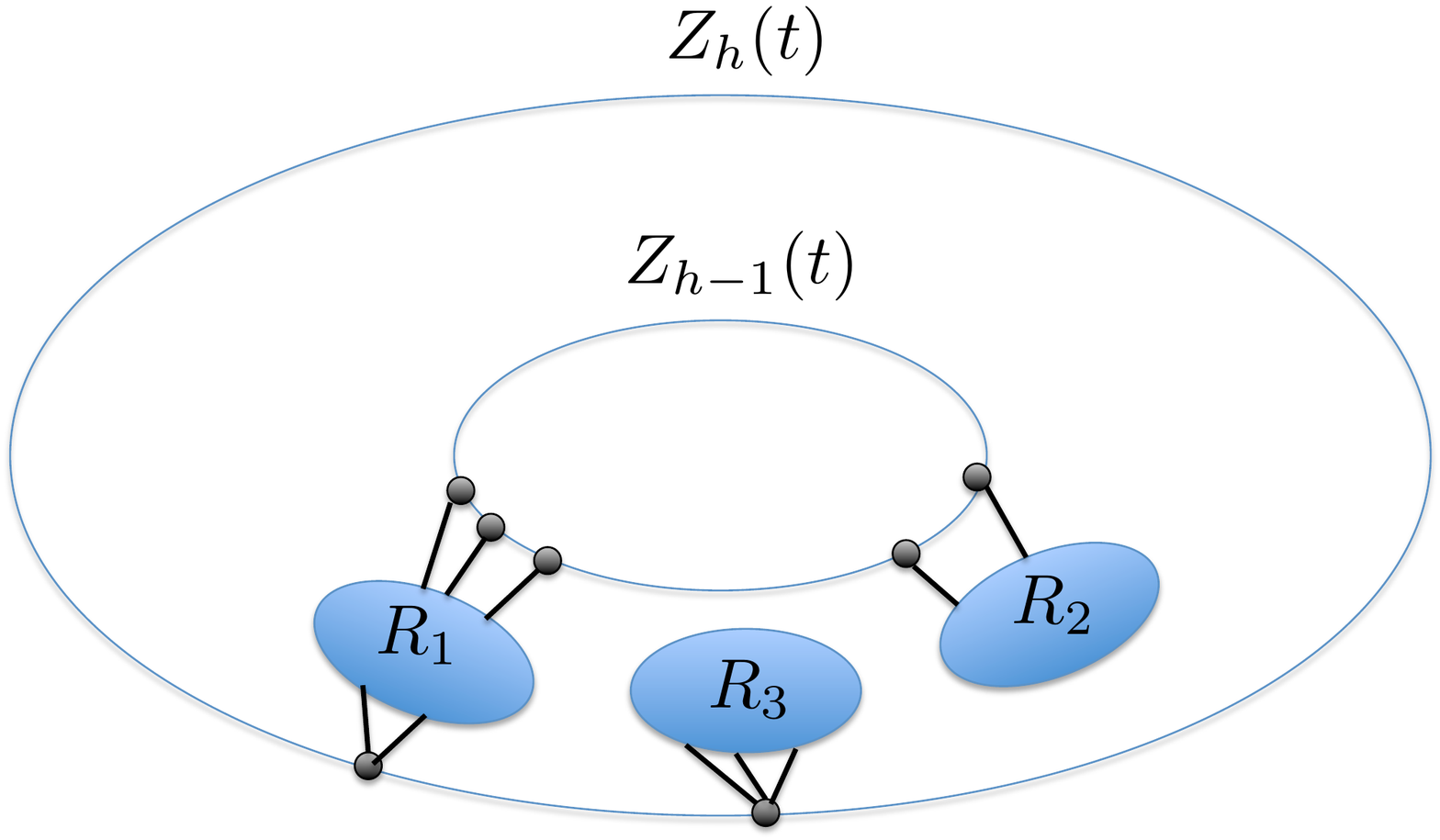}}\label{fig: three types of cc's}}
\hspace{1cm}
\subfigure[Definitions of vertices $u(R)$ and segments $\sigma(R)$]{
\scalebox{0.35}{\includegraphics{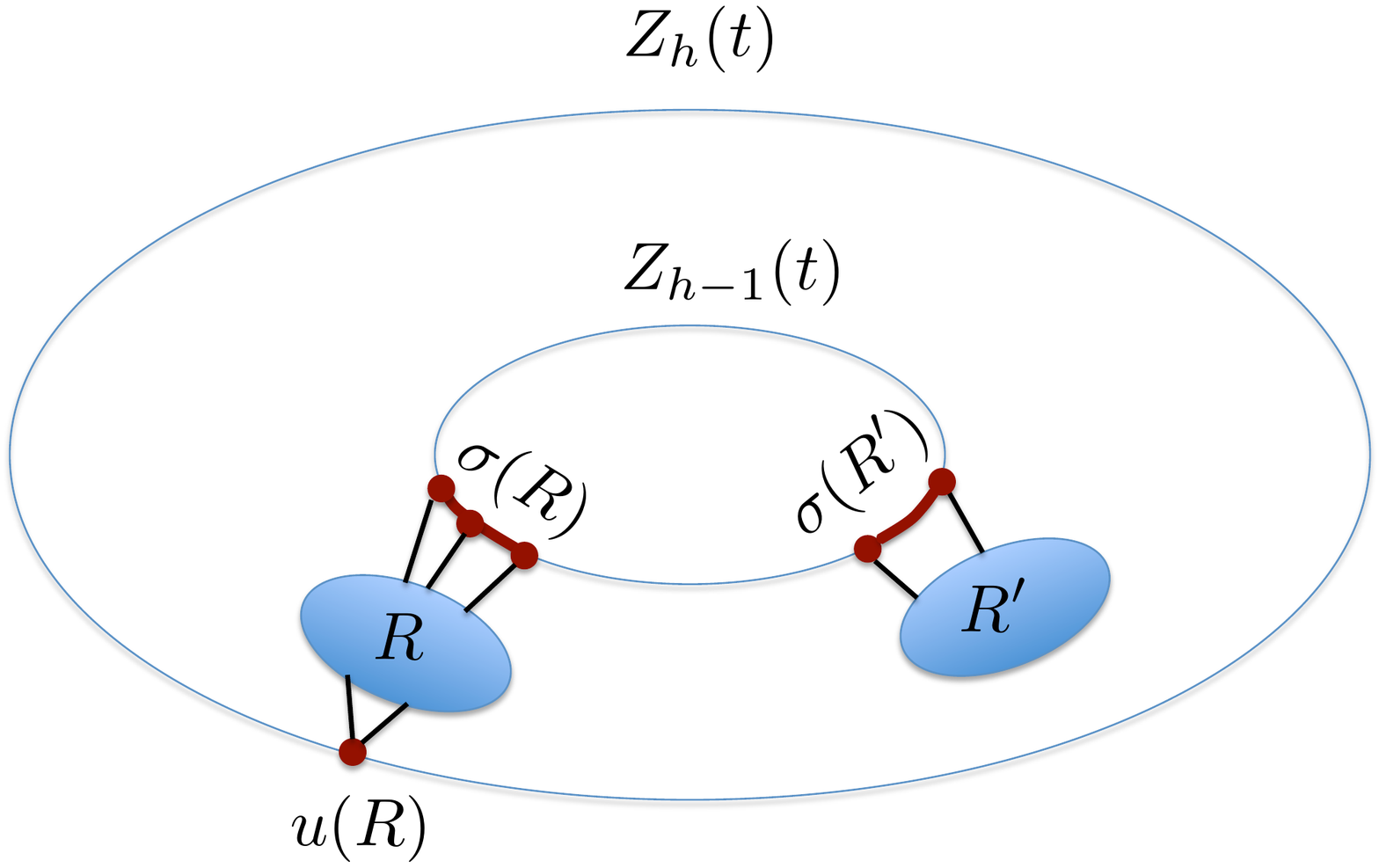}}\label{fig: disc-defs}}
\caption{Structure of the shells \label{fig: shell-structure}}
\end{figure}



Consider now any type-1 or type-2 component $R\in \rset_h$. For each such component $R$, we define a segment $\sigma(R)$ of $Z_{h-1}(R)$ containing all vertices of $L(R)$ as follows. If $|L(R)|=1$, then let $\sigma(R)$ consist of the unique vertex of $L(R)$. Otherwise, we let $\sigma(R)$ be the smallest (inclusion-wise) segment of $Z_{h-1}(t)$ containing all vertices of $L(R)$, such that, if we let $C$ be the union of $\sigma(R)$ and the outer boundary of the drawing of $G[V(R)\cup L(R)]$, then $C$ separates $R$ from $t$ (see Figure~\ref{fig: disc-defs}).
Notice that $\set{\sigma(R)\mid R\in \rset_h\mbox{ is of type 1 or 2}}$ is a nested set of segments of $Z_{h-1}(t)$. For consistency, for each type-3 component $R$, we define $\sigma(R)=\emptyset$. We denote $\rset=\bigcup_{h=1}^r\rset_h$.
We will repeatedly use the following theorem.

\begin{theorem}\label{thm: curves around CC's}
There is an efficient algorithm, that computes, for every $1\leq h\leq r$, for every component $R\in\rset_h$, a disc $\eta(R)\subseteq D(Z_h(t))$, whose boundary, denoted by $\gamma(R)$, is a simple closed $G$-normal curve of length at most $2h+1+\Delta/2$, such that:

\begin{enumerate}
\item For all $R\in \rset$, $G[V(R)\cup L(R)]\subseteq \eta(R)$, and, if $u(R)$ is defined, then $u(R)\in \eta(R)$;

\item For all $R\in \rset$, $\gamma(R)$ is disjoint from all vertices in $\bigcup_{R'\in \rset}V(R')$. In particular, for all $R'\in \rset$, either $R'\subseteq \eta(R)$, or $R'$ lies completely outside $\eta(R)$; and

\item For all $1\leq h\leq r$, for all $R,R'\in \rset_h$, either $\eta(R)\subseteq \eta(R')$, or $\eta(R')\subseteq\eta(R)$, or $\eta^{\circ}(R)\cap \eta^{\circ}(R')=\emptyset$. Moreover, if $R'\subseteq \eta(R)$, then $\sigma(R')\subseteq \sigma(R)$.

\end{enumerate} 
\end{theorem}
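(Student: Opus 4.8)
The construction proceeds by levels $h=1,2,\dots,r$, and for a fixed $h$ it processes the components $R\in\rset_h$ in the order of the nesting of the segments $\sigma(R)\subseteq Z_{h-1}(t)$, innermost first. The basic tool is a ``short descent'': by Properties~(\ref{prop: short curve}) and~(\ref{prop: curve 1}), for any $1\le j\le r$ and any vertex $v\in V(Z_j(t))$ there is a $G$-normal curve from $v$ to a vertex of $V(C_t)$ lying inside $D(Z_j(t))$ of length $j+1$; and a \emph{shortest} such curve must cross each of $Z_{j-1}(t),\dots,Z_1(t)$, so it passes through exactly one vertex on each of $Z_j(t),Z_{j-1}(t),\dots,Z_1(t),C_t$ and through no other vertex of $G$. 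In particular these descents are automatically disjoint from every component in $\rset$; this is the source of Property~2.

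The easy case is a type-$3$ component $R\in\rset_h$: here $u(R)$ is the unique neighbour of $R$ in $G\setminus R$, so there is a simple closed $G$-normal curve $\gamma(R)$ passing through $u(R)$ alone that encircles $R$, and we let $\eta(R)$ be the side of $\gamma(R)$ containing $R$. This disc has length $1\le 2h+1+\Delta/2$, satisfies Properties~1 and~2 trivially, and discs of distinct type-$3$ components are interior-disjoint, so Property~3 holds for them. For a type-$1$ or type-$2$ component $R\in\rset_h$, let $a,b$ be the two endpoints of $\sigma(R)$, and let $z_a,z_b\in V(Z_{h-1}(t))$ be the vertices of $Z_{h-1}(t)$ immediately flanking $\sigma(R)$ on each side. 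We form $\gamma(R)$ by concatenating: a shortest descent $\gamma_a$ from $z_a$ to a vertex $a'\in V(C_t)$ inside $D(Z_{h-1}(t))$ (length at most $h$); the shorter of the two arcs of $C_t$ between $a'$ and $b'$ (length at most $\Delta/2+1$, using $\ell(C_t)=\Delta$); a shortest descent $\gamma_b$ from $b'$ back to $z_b$; and a short ``closing'' arc from $z_b$ to $z_a$ running through the annulus $U_h$, over $u(R)$ in the type-$1$ case, on the side of $R$ away from the bulk of $Z_{h-1}(t)$, built from cycle vertices of $Z_{h-1}(t)$ and $Z_h(t)$ only and threading the gaps between $R$ and the other components of $\rset_h$. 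The defining property of $\sigma(R)$ --- that $\sigma(R)$ together with the outer boundary of $G[V(R)\cup L(R)]$ separates $R$ from $t$ --- together with the correct choice of which arc of $C_t$ to use and which side of $R$ the closing arc runs on, guarantees that $\eta(R)$ contains $G[V(R)\cup L(R)]$ and, in the type-$1$ case, $u(R)$; this gives Property~1. Property~2 holds because every piece of $\gamma(R)$ is a shortest descent, an arc of $C_t$, or an arc through cycle vertices, none of which meet any component, and a direct count of shared endpoints gives total length at most $2h+1+\Delta/2$.

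Property~3 is obtained from the processing order. Since $\{\sigma(R):R\in\rset_h\}$ is a nested family of segments of $Z_{h-1}(t)$, and since among near-optimal choices of the $C_t$-arc and of the closing arc we always pick the ones consistent with this nesting, the tentative discs $\eta(R)$, $R\in\rset_h$, come out laminar, with $\sigma(R')\subseteq\sigma(R)$ whenever $R'\subseteq\eta(R)$. Any residual crossing of two boundary curves $\gamma(R),\gamma(R')$ at the same level is removed by the standard uncrossing swap along a crossing disc, which does not increase either length and preserves Properties~1 and~2 (the swapped curves still avoid all component vertices and still separate the respective components from $t$).

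The step I expect to be the crux is the ``closing'' arc of $\gamma(R)$ for type-$1$/$2$ components, together with its length accounting. The subtlety is that $L(R)$, and hence $\sigma(R)$, can be much longer than $\Delta$, so $\gamma(R)$ cannot afford to trace $R$ in the annulus or trace $\sigma(R)$ along $Z_{h-1}(t)$; it must instead descend into $D(Z_{h-1}(t))$ on both flanks and bridge across with a \emph{single} half of $C_t$, exploiting $\ell(C_t)=\Delta$, while the portion that must genuinely go ``around'' $R$ in $U_h$ has to be realized with only a constant number of extra cycle vertices regardless of how $U_h$ is filled by other components. Making these two facts --- (i) the two descent feet on $C_t$ can always be joined by an arc of $C_t$ of length at most $\Delta/2+1$ with $R$ on the correct side, and (ii) the closing arc through $U_h$ can be taken $O(1)$-short using Property~(\ref{prop: curve 1}) and the tight-cycle structure --- precise, and checking they survive the uncrossing of Property~3, is where the real work lies.
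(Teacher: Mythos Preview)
Your plan is essentially the paper's construction, and you have correctly located where the work is; but a couple of the details you flag as hard are actually easy, while one thing you gloss over is where the care is needed.

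The closing arc over $R$ is simpler than you suggest. Drop the flanking vertices $z_a,z_b$ and anchor directly at the endpoints $a(R),a'(R)$ of $\sigma(R)$. The closing piece is then a curve that traces the outer boundary of the drawing of $G[V(R)\cup L(R)]$ (together with the edges to $u(R)$ in the type-$1$ case) \emph{just outside} that boundary, through the adjacent faces of $G$; it touches $G$ only at $a(R),u(R),a'(R)$, so its length is at most $3$ irrespective of the size of $L(R)$. No appeal to Property~(\ref{prop: curve 1}) or to the tight-cycle structure is needed for this part.

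Your crux~(i) is also not the real issue. Whichever arc of $C_t$ you choose, $R$ lies in one of the two discs bounded by $\gamma(R)$; that disc is $\eta(R)$ by definition, and $\eta(R)\subseteq D(Z_h(t))$ either way. The delicate point is elsewhere: when the ``natural'' arc $\sigma'(R)$ (the one keeping $D_t$ outside $\eta(R)$) has more than $\Delta/2+1$ vertices, taking the shorter arc forces $D_t\subseteq\eta(R)$, and two such discs could then overlap in $D_t$ without being nested. The paper handles this by first normalizing the family of descents $\{\gamma(a):a\in A\}$ to be mutually non-crossing (any two are disjoint or share a common suffix ending on $C_t$); this makes the landing arcs $\sigma'(R)$ on $C_t$ a nested family, so the ``large'' components form a single chain and their discs are automatically nested.

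That is the gap in your outline: your post-hoc ``uncrossing swap'' is the wrong tool for Property~3. Uncrossing two closed curves need not preserve individual lengths, and after a swap you would still have to argue that each resulting disc contains the correct component (Property~1). The paper's route---make the descents non-crossing \emph{before} assembling $\gamma(R)$---gives laminarity directly, with no residual crossings to repair; and once you do that, your ``always take the shorter $C_t$-arc'' rule coincides exactly with the paper's case split into good components (the two descents merge), small bad components (use $\sigma'(R)$), and large bad components (use the complementary arc $\sigma''(R)$).
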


\begin{proof}
We use the planar drawing of $G$ where $F_t$ is the outer face.
Fix some $1\leq h\leq r$. Let $\rset_h^1,\rset_h^2$, and $\rset_h^3$ be the sets of type-1, type-2, and type-3 components of $\rset_h$, respectively.
For every type-3 component $R\in \rset_h^3$, we let $\gamma(R)$ be a simple closed $G$-normal curve containing a single vertex of $G$ - vertex $u(R)$, such that the disc $\eta(R)$, whose boundary is $\gamma(R)$, contains $R$, and it is disjoint from all other components of $\rset$.

Consider now any type-1 component $R\in \rset_h$, and let $a(R),a'(R)$ be the endpoints of $\sigma(R)$.
Let $H\subseteq G$ be obtained from $G[V(R)\cup L(R)]$, by adding all edges connecting $u(R)$ to $V(R)$ to it.
 We draw two $G$-normal curves, $\gamma_1(R)$ connecting $u(R)$ to $a(R)$, and $\gamma_1'(R)$, connecting $u(R)$ to $a'(R)$ on either side of $R$, such that the curves $\gamma_1(R),\gamma_1'(R)$ do not contain any other vertices of $G$, and they follow the boundary of the drawing of $H$ from the outside. Let $\gamma'(R)$ be the union of $\gamma_1(R)$ and $\gamma_1'(R)$, and let $\eta'(R)$ be the disc whose boundary is the union of $\gamma'(R)$ and $\sigma(R)$.

Similarly, given any type-2 component $R\in \rset_h$, we denote by  $a(R),a'(R)$ be the endpoints of $\sigma(R)$.
Let $H=G[V(R)\cup L(R)]$, and let $\gamma'(R)$ be a simple $G$-normal curve connecting $a(R)$ to $a'(R)$,  such that  $\gamma'(R)$ does not contain any other vertices of $G$, and it follows the boundary of the drawing of $H$ from the outside. Let $\eta'(R)$ be the disc whose boundary is the union of $\gamma'(R)$ and $\sigma(R)$. 

Clearly, we can draw the curves $\set{\gamma_1(R),\gamma'_1(R)\mid R\in \rset_h^1}\cup \set{\gamma'(R)\mid R\in \rset_h^2}$, so that for all $R,R'\in \rset^1_h\cup\rset^2_h$, either $\eta'(R)\subseteq \eta'(R')$, or $\eta'(R')\subseteq \eta'(R)$, or the interiors of $\eta'(R)$ and $\eta'(R')$ are disjoint. Moreover, if $\eta'(R)\subseteq \eta'(R')$, then $\sigma(R)\subseteq \sigma(R')$. Notice that the curves $\gamma'(R)$ are disjoint from all vertices in $\bigcup_{R''\in \rset}V(R'')$.

Let $A=\set{a(R),a'(R)\mid R\in \rset_h^1\cup \rset_h^2}$. Recall that from Property~(\ref{prop: short curve}), for every vertex $a\in A$, there is a $G$-normal curve $\gamma(a)$ of length at most $h$, connecting $a$ to some vertex $v\in V(C_t)$, such that $\gamma(a)\subseteq D(Z_{h-1}(t))$. In particular, $\gamma(a)$ must be disjoint from all vertices in $\bigcup_{R\in \rset}V(R)$, as it must contain one vertex from each cycle $Z_1(t),\ldots,Z_{h-1}(t)$, and a vertex of $Z_0(t)$. Let $\Gamma=\set{\gamma(a)\mid a\in A}$. We can assume without loss of generality that the curves in $\Gamma$ are non-crossing, and moreover, whenever two such curves meet, they continue together. In other words, if $\gamma(a)\cap \gamma(a')\neq \emptyset$, then this intersection is a simple $G$-normal curve that contains a vertex of $C_t$.

We say that a component $R\in \rset_h^1\cup \rset_h^2$ is good if $\gamma(a(R))$ and $\gamma(a'(R))$ intersect, and we say that it is bad otherwise. From our assumption, if $R$ is good, then $\gamma(a(R))\cap\gamma(a'(R))$ is a curve, connecting some vertex $v$ to some vertex of $C_t$. We then let $\gamma(R)$ be the union of $\gamma'(R)$, the segment of $\gamma(a(R))$ from $a(R)$ to $v$, and the segment of $\gamma(a'(R))$ from $a'(R)$ to $v$, and we let $\eta(R)$ be the disc whose boundary is $\gamma(R)$. Notice that for every component $R'\in \rset_h^1\cup \rset_h^2$, if $\eta'(R')\subseteq \eta'(R)$, then $R'$ is also a good component, and $\eta(R')\subseteq \eta(R)$.

For every bad component $R\in  \rset_h^1\cup \rset_h^2$, we let $b(R),b'(R)\in V(C_t)$ be the endpoints of $\gamma(a(R))$ and $\gamma(a'(R))$, respectively, and we let $\sigma'(R)$ be the segment of $C_t$, whose endpoints are $b(R)$ and $b'(R)$, such that the disc whose boundary is $\gamma'(R)\cup \sigma'(R)\cup \gamma(a(R))\cup \gamma(a'(R))$ does not contain $D_t$.

Notice that the segments $\sigma'(R)$ for all bad components $R$ form a nested set of intervals on $C_t$. This is since their corresponding segments $\sigma(R)$ are nested, and the curves in $\Gamma$ are non-crossing. We say that a bad component $R$ is large, iff $\sigma'(R)$ contains more than $\Delta/2+1$ vertices. Let $\rset'$ be the set of all large bad components of $\rset_h^1\cup \rset_h^2$. Then we can find an ordering $(R_1,R_2,\ldots,R_{\ell})$ of the components in $\rset'$, so that $\sigma'(R_1)\subseteq \sigma'(R_2)\subseteq\cdots\subseteq \sigma'(R_{\ell})$.

For every bad component $R\not\in \rset'$, we let $\gamma(R)$ be the union of $\gamma'(R),\sigma'(R), \gamma(a(R))$, and $ \gamma(a'(R))$, and we let $\eta(R)$ be the disc whose boundary is $\gamma(R)$. For every bad component $R\in \rset'$, we let $\sigma''(R)$ be the segment of $C_t$ with endpoints $b(R)$ and $b'(R)$, that is different from $\sigma'(R)$, so the length of $\sigma''(R)$ is at most $\Delta/2+1$. We let  $\gamma(R)$ be the union of $\gamma'(R),\sigma''(R), \gamma(a(R))$, and $ \gamma(a'(R))$, and we let $\eta(R)$ be the disc whose boundary is $\gamma(R)$. Notice that in either case, the length of $\gamma(R)$ is bounded by $2h+\Delta/2+1$.
It is immediate to verify that the resulting discs $\eta(R)$ for all $R\in \bigcup_{h'=1}^r\rset_{h'}$ have all required properties. Indeed, notice that for all type-1 and type-2 components $R\in \rset_h$, $\eta'(R)\subseteq\eta(R)$. The first property then follows from the definition of $\eta'(R)$. For the third property, $\eta(R)\subseteq \eta(R')$ for $R,R'\in \rset_h$ only if $\eta'(R)\subseteq\eta'(R')$, and from the construction of the discs $\eta'(R),\eta'(R')$, this can only happen if $\sigma(R)\subseteq \sigma(R')$. If neither of the discs $\eta'(R),\eta'(R')$ is contained in the other, then they are internally disjoint, and our construction of the discs $\eta(R),\eta(R')$ ensures that these two discs are also internally disjoint. Finally, consider two components $R,R'\in \rset$. Our construction of the disc $\eta(R)$ ensures that $\gamma(R)$ is disjoint from $V(R')$, and so either $R'\subseteq \eta(R)$, or $R'\cap \eta(R)=\emptyset$. This establishes the remaining property.

\begin{figure}[h]
 \centering
\scalebox{0.5}{\includegraphics{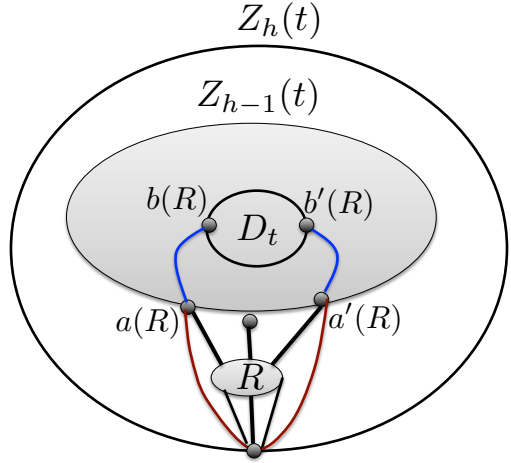}}\caption{Building $\eta(R)$. Curve $\gamma'(R)$ is shown in red, and $\gamma(a(R)),\gamma(a'(R))$ are shown in blue.\label{fig: build shell}}
\end{figure}

\end{proof}

We also need the following observation.

\begin{observation}\label{obs: connectivity of shells}
For all $1\leq h\leq r$, if $\tilde U_h$ is the set of vertices of $G$ lying in $D(Z_h(t))$, then $G[\tilde U_h]$ is connected.
\end{observation}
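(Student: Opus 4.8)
The plan is to prove this by induction on $h$, using the connectivity of the enclosure $D_t$ (Theorem~\ref{thm: build enclosures}) as the base case and the structure of the cycles $Z_h(t)$ in the shell as the inductive step. For the base case $h=0$, $\tilde U_0 = V(D_t)$, and $G[V(D_t)]$ is connected by the definition of an enclosure. Now assume $G[\tilde U_{h-1}]$ is connected, where $\tilde U_{h-1}$ is the set of vertices lying in $D(Z_{h-1}(t))$ (with $Z_0(t) = C_t$ by convention). I want to show $G[\tilde U_h]$ is connected.

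The key observation is how $Z_h(t)$ was constructed: we take the drawing of $G$ with $F_t$ as the outer face, delete all vertices of $D(Z_{h-1}(t))$, and let $Z_h(t)$ be the unique simple cycle on the inner boundary of the face $F$ where the deleted vertices used to reside, with $D(Z_{h-1}(t)) \subseteq D(Z_h(t))$. So the set $U_h$ of vertices lying in $\dnot(Z_h(t)) \setminus D(Z_{h-1}(t))$ consists precisely of the components $R \in \rset_h$ of type 1, 2, or 3 described in the text. Each such component $R$ has at least one neighbor on $V(Z_{h-1}(t))$ or on $V(Z_h(t))$ by definition (type-1 and type-2 components have a neighbor in $V(Z_{h-1}(t))$, while type-3 components have a neighbor $u(R)$ in $V(Z_h(t))$). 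Since $Z_h(t) = \mincycle(H,a)$ where $H$ is obtained by contracting $D(Z_{h-1}(t))$, every vertex of $V(Z_h(t))$ lies on a cycle enclosing the contracted super-node, and in fact $Z_h(t)$ is connected (it is a cycle) and each of its vertices is reachable from $\tilde U_{h-1}$: by Property~(\ref{prop: curve 1}), there is a $G$-normal curve of length $2$ connecting any $v \in V(Z_h(t))$ to a vertex of $V(Z_{h-1}(t))$, but more to the point, since $Z_h(t)$ bounds the face $F$ in $G \setminus (D(Z_{h-1}(t)) \setminus Z_{h-1}(t))$...

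Let me restructure: first I would establish that $G[\tilde U_{h-1} \cup V(Z_h(t))]$ is connected. The cycle $Z_h(t)$ is connected, and I need at least one edge between $V(Z_h(t))$ and $\tilde U_{h-1}$. This follows because $Z_h(t)$ is the inner boundary of the face $F$ of $G$ minus $\dnot(D(Z_{h-1}(t)))$ where the deleted interior used to be; since $G$ is connected, there must be an edge from some deleted vertex's "former location" — more carefully, the minimal cycle property of $\mincycle$ together with connectivity of $G$ forces $Z_h(t)$ to have a neighbor in $D(Z_{h-1}(t))$, in fact every vertex of $Z_h(t)$ has a path of length $\le 2$ to $Z_{h-1}(t)$ internally using only vertices that, by Property~(\ref{prop: short curve}), lie in $D(Z_h(t))$. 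Then each component $R \in \rset_h$ attaches to $\tilde U_{h-1} \cup V(Z_h(t))$ via its legs $L(R) \subseteq V(Z_{h-1}(t))$ or via $u(R) \in V(Z_h(t))$; since $R$ is a connected component of $G[U_h]$ and has at least one such neighbor by its type classification, $R$ together with that neighbor is connected to the rest. Since $\tilde U_h = \tilde U_{h-1} \cup V(Z_h(t)) \cup \bigcup_{R \in \rset_h} V(R)$ and we've shown each piece attaches, $G[\tilde U_h]$ is connected.

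The main obstacle I anticipate is verifying rigorously that $G[\tilde U_{h-1} \cup V(Z_h(t))]$ is connected — specifically, that $Z_h(t)$ is not "floating" but genuinely attached to $D(Z_{h-1}(t))$ by at least one path through intermediate components. This requires carefully unpacking that $Z_h(t) = \mincycle(H,a)$ where $a$ is the contraction of $D(Z_{h-1}(t))$: by the characterization of $\mincycle$ given in the text ($\mincycle(H,a)$ is the unique cycle on the boundary of the face of $H \setminus a$ where $a$ resided, with $a$ inside), every vertex of $V(Z_h(t))$ is adjacent in $H$ either to $a$ (hence to a vertex of $\tilde U_{h-1}$ in $G$) or to another vertex of $Z_h(t)$, and since $G$ is connected and $Z_h(t)$ properly encloses $D(Z_{h-1}(t))$, at least one vertex of $Z_h(t)$ must be connected through the region between $Z_{h-1}(t)$ and $Z_h(t)$ back to $Z_{h-1}(t)$. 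Once this anchoring is established, the rest is routine bookkeeping over the component types.
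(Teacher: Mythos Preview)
Your overall plan (induction on $h$, with the base case furnished by the connectivity of $G[V(D_t)]$ from the enclosure definition) is exactly what the paper does; the paper's proof is a two-line sketch that just asserts the inductive step is ``immediate from the definition of $Z_{h+1}(t)$ and the fact that $G$ is connected.''

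However, your execution of the inductive step has a genuine gap. You aim to first establish that $G[\tilde U_{h-1}\cup V(Z_h(t))]$ is connected, and only afterwards attach each $R\in\rset_h$. That intermediate statement is \emph{false} in general. Concretely: take $Z_{h-1}(t)$ to be a triangle $a_1a_2a_3$, a single vertex $x$ adjacent to $a_1$, a triangle $yzw$ with $y$ adjacent to $x$, and set $Z_h(t)=yzw$. This is consistent with $Z_h(t)=\mincycle(H,a)$, yet $G[\{a_1,a_2,a_3,y,z,w\}]$ has two components; the only link is through the type-1 component $\{x\}$. Relatedly, Property~(J3) gives you a $G$-normal \emph{curve} of length $2$ from $v\in Z_h(t)$ to $Z_{h-1}(t)$, not a path in $G$; you conflate these when you write ``every vertex of $Z_h(t)$ has a path of length $\le 2$ to $Z_{h-1}(t)$.''

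The fix is to reverse the order: show directly that $Z_h(t)$ is connected to $\tilde U_{h-1}$ \emph{within $G[\tilde U_h]$}, allowing the connection to pass through some $R\in\rset_h$. Since $G$ is connected, there is a path in $G$ from a vertex of $Z_h(t)$ to a vertex of $\tilde U_{h-1}$; any such path that leaves $D(Z_h(t))$ can be rerouted along $Z_h(t)$ to stay inside, so the connection already lives in $G[\tilde U_h]$. (Equivalently: in the contracted graph $H$, connectivity gives a path from $a$ to $Z_h(t)$ whose internal vertices lie in $U_h$, producing either a direct edge or a type-1 component.) Once $Z_h(t)$ is anchored to $\tilde U_{h-1}$, your attachment of the type-1/2/3 components goes through as you wrote. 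Your final sentence (``at least one vertex of $Z_h(t)$ must be connected through the region between $Z_{h-1}(t)$ and $Z_h(t)$ back to $Z_{h-1}(t)$'') is the right statement; it just needs to be the \emph{target}, not a consequence of the false subgoal.
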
 

\begin{proof} 
The proof is by induction on $h$. Recall that $V(D_t)$ induces a connected sub-graph in $G$, from the definition of the enclosures. Assume now that $G[\tilde U_h]$ is connected for some $0\leq h<r$. From the definition of $Z_{h+1}(t)$, and from the fact that $G$ is a connected graph, it is immediate to verify that $G[\tilde U_{h+1}]$ is also connected.
\end{proof}

Finally, we need the following observation about the interactions of different shells. 

\begin{observation}\label{obs: cycles of shells disjoint}
Let $t,t'$ be any pair of terminals, and let $r,r'>0$ be any integers, such that $d(t,t')>r+r'+1$. Let $\zset^r(t)=(Z_1(t),\ldots,Z_r(t))$ be a shell of depth $r$ around $t$ with respect to some face $F_t$, and let $\zset^{r'}(t')=(Z_1(t')\ldots,Z_{r'}(t'))$ be a shell of depth $r'$ around $t'$ with respect to some face $F_{t'}$. Then for all $1\leq h\leq r$ and $1\leq h'\leq r'$, $Z_h(t)\cap Z_{h'}(t')=\emptyset$.
\end{observation}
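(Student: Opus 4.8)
The plan is to derive a contradiction from the existence of some $h \in \{1,\dots,r\}$, $h' \in \{1,\dots,r'\}$ with a common vertex $v \in V(Z_h(t)) \cap V(Z_{h'}(t'))$. The key tool is Property~(\ref{prop: short curve}): for any vertex $v \in V(Z_h(t))$ there is a $G$-normal curve $\gamma(v) \subseteq D(Z_h(t))$ of length $h+1$ connecting $v$ to a vertex of $V(C_t)$, and symmetrically there is a $G$-normal curve $\gamma'(v) \subseteq D(Z_{h'}(t'))$ of length $h'+1$ connecting $v$ to a vertex of $V(C_{t'})$. Concatenating $\gamma(v)$ and $\gamma'(v)$ at $v$ produces a $G$-normal curve from a vertex of $V(C_t)$ to a vertex of $V(C_{t'})$. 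I would then carefully count the number of vertices of $G$ on this concatenated curve: $\gamma(v)$ contributes $h+1$ vertices, $\gamma'(v)$ contributes $h'+1$ vertices, and the two share only the vertex $v$, so the combined curve has at most $h + h' + 1$ vertices. By the definition of $d(t,t')$ as the length of the shortest $G$-normal curve with one endpoint in $V(C_t)$ and one in $V(C_{t'})$, this gives $d(t,t') \le h + h' + 1 \le r + r' + 1$, contradicting the hypothesis $d(t,t') > r + r' + 1$.

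The first step is therefore to recall the precise length bound in Property~(\ref{prop: short curve}): a vertex on $Z_h(t)$ is joined to $C_t$ by a $G$-normal curve of length exactly $h+1$ (equivalently containing $h+1$ vertices of $G$, one on each of $C_t, Z_1(t),\dots,Z_h(t)$). The second step is to observe that both shells were constructed with respect to possibly different faces $F_t$ and $F_{t'}$, but that the curves $\gamma(v), \gamma'(v)$ are purely topological objects on the sphere, so concatenating them is legitimate regardless of which face was chosen as the outer face; the metric $\dface$ (and hence $d$) is defined with respect to the fixed spherical embedding only. The third step is the vertex count of the concatenation and the comparison with $d(t,t')$. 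I should be slightly careful about whether $\gamma(v)$ and $\gamma'(v)$ might intersect at vertices other than $v$; even if they do, that only decreases the vertex count of the union, so the bound $d(t,t') \le h+h'+1$ still holds, and the argument goes through.

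The main (and really only) obstacle is making sure the length accounting is airtight: I want to confirm that the "$+1$"s and the shared vertex $v$ combine to give exactly $h + h' + 1 \le r + r' + 1$ rather than something like $r + r' + 2$, since the hypothesis is a strict inequality at the threshold $r + r' + 1$. Since $\gamma(v)$ has $h+1$ vertices including $v$, $\gamma'(v)$ has $h'+1$ vertices including $v$, and $v$ is counted once in the union, the union has at most $(h+1) + (h'+1) - 1 = h + h' + 1$ vertices, and since $h \le r$, $h' \le r'$, this is at most $r + r' + 1$; hence $d(t,t') \le r + r' + 1$, the desired contradiction. No other subtlety arises, so this should be a short argument.
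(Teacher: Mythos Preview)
Your proposal is correct and follows essentially the same argument as the paper: assume a common vertex $v$, invoke Property~(\ref{prop: short curve}) to get short $G$-normal curves from $v$ to $C_t$ and to $C_{t'}$, concatenate them (sharing $v$) to get a curve of length at most $h+h'+1\le r+r'+1$, and contradict $d(t,t')>r+r'+1$. Your extra care about the vertex count and about the two possibly different outer faces is fine and matches the paper's reasoning.
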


\begin{proof}
Assume otherwise. Then there is some vertex $v\in Z_h(t)\cap Z_{h'}(t')$. From Property~(\ref{prop: short curve}), there is a $G$-normal curve $\gamma(v)$ of length at most $h+1\leq r+1$ connecting $v$ to some vertex of $C_t$, and similarly there is a $G$-normal curve $\gamma'(v)$ of length at most $h'+1\leq r'+1$, connecting $v$ to some vertex of $C_{t'}$. Combining $\gamma(v)$ with $\gamma'(v)$, we obtain a $G$-normal curve of length at most $r+r'+1<d(t,t')$, connecting a vertex of $C_t$ to a vertex of $C_{t'}$, a contradiction.
\end{proof}

\label{--------------------------------subsec: terminal subsets---------------------------}
\subsection{Terminal Subsets}\label{subsec: terminal subsets}


Let $\Delta_0=20\left (\ceil{\log_{10/9} n}+1\right )\Delta=\Theta(\Delta \log n)$.
 Our next step is to define a family of disjoint subsets of terminals, so that the terminals within each subset are close to each other, while the terminals belonging to different subsets are far enough from each other. We will ensure that almost all terminals of $\tset$ belong to one of the 
 resulting subsets. Following is the main theorem of this section.

\begin{theorem}\label{thm: terminal clusters}
There is an efficient algorithm to compute a collection $\xset=\set{X_1,\ldots,X_q}$ of disjoint subsets of terminals of $\tset$, such that:

\begin{itemize}
\item for each $1\leq i\leq q$, for every pair $t,t'\in X_i$ of terminals, $d(t,t')\leq \Delta_0$;

\item for all $1\leq i\neq j\leq q$, for every pair $t\in X_i$, $t'\in X_j$ of terminals $d(t,t')\geq 5\Delta$; and

\item $\sum_{i=1}^q|X_i|\geq 0.9|\tset|$.
\end{itemize}
\end{theorem}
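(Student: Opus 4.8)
The plan is to build the subsets greedily via a ball-growing argument on the metric $d(\cdot,\cdot)$ defined on the terminals, combined with a standard region-growing/sparse-ball trick to guarantee that only a small fraction of terminals end up ``cut''. First I would set up the ball notation: for a terminal $t$ and a radius $\rho \ge 0$, let $B(t,\rho) = \{t' \in \tset : d(t,t') \le \rho\}$. Fix the scale parameter $\Delta$, and consider the $L = \ceil{\log_{10/9} n} + 1$ ``shells'' of geometrically increasing radii $\rho_\ell = 10\ell\Delta$ for $\ell = 0, 1, \dots, L$ (so $\Delta_0 = 20(L+1)\Delta$ gives enough room that $2\rho_L + (\text{slack}) \le \Delta_0$, using Observation~\ref{obs: weak-triangle-inequality} with its additive $\Delta/2$ error). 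The key pigeonhole step: for any terminal $t$, since $|B(t,\rho_0)| \ge 1$ and $|\tset| \le n$, there must exist an index $\ell = \ell(t) \in \{0,\dots,L-1\}$ for which $|B(t,\rho_{\ell+1})| < \frac{10}{9}|B(t,\rho_\ell)|$ — i.e., a radius interval $[\rho_\ell, \rho_{\ell+1}]$ across which the ball grows by a factor less than $10/9$; otherwise the ball would have size $\ge (10/9)^L \ge n$, a contradiction.

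Next I would run the greedy clustering. Maintain a set $\tset'$ of ``active'' terminals, initially $\tset' = \tset$. While $\tset'$ is nonempty: pick any active terminal $t$, find the sparse radius interval $[\rho_{\ell}, \rho_{\ell+1}]$ for $t$ as above (computed with respect to the \emph{original} $\tset$, not $\tset'$ — this matters for the bound), declare $X_i := B(t,\rho_\ell) \cap \tset'$, mark all terminals in $B(t,\rho_{\ell+1}) \cap \tset'$ as inactive (both the ones we keep in $X_i$ and the ``boundary shell'' ones we discard), and remove them from $\tset'$. Continue until $\tset'$ is empty, and output $\xset = \{X_1,\dots,X_q\}$.

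Now I would verify the three properties. \emph{Intra-cluster distance:} any two $t',t'' \in X_i \subseteq B(t,\rho_\ell)$ satisfy $d(t',t'') \le d(t',t) + d(t,t'') + \Delta/2 \le 2\rho_\ell + \Delta/2 \le 2\rho_L + \Delta/2 \le \Delta_0$ by Observation~\ref{obs: weak-triangle-inequality} and the choice of $\Delta_0$. \emph{Inter-cluster distance:} suppose $t' \in X_i$ and $t'' \in X_j$ with $i < j$. When $X_i$ was created around center $t$, every terminal within distance $\rho_{\ell+1}$ of $t$ was removed, so $t''$ (still active at stage $j>i$) has $d(t,t'') > \rho_{\ell+1}$; since $d(t,t') \le \rho_\ell$, the weak triangle inequality gives $d(t',t'') \ge d(t,t'') - d(t,t') - \Delta/2 > \rho_{\ell+1} - \rho_\ell - \Delta/2 = 10\Delta - \Delta/2 \ge 5\Delta$ — and the same conclusion holds symmetrically. \emph{Coverage:} the discarded terminals at step $i$ are exactly those in the boundary shell $(B(t,\rho_{\ell+1}) \setminus B(t,\rho_\ell)) \cap \tset'$, whose number is at most $|B(t,\rho_{\ell+1})| - |B(t,\rho_\ell)| < \frac{1}{9}|B(t,\rho_\ell)| \le \frac{1}{9}|X_i|$ by the sparse-interval property; since the $X_i$'s together with their discarded shells partition $\tset$ and are pairwise disjoint, $|\tset| \le \sum_i (|X_i| + \frac19|X_i|) = \frac{10}{9}\sum_i |X_i|$, hence $\sum_i |X_i| \ge \frac{9}{10}|\tset| = 0.9|\tset|$.

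The main obstacle — really the only subtle point — is the bookkeeping for the coverage bound: the sparse-interval factor $10/9$ must be applied to balls measured in the original metric on all of $\tset$, but the clusters and discarded sets are carved out of the shrinking active set $\tset'$; one has to check that $|(B(t,\rho_{\ell+1})\setminus B(t,\rho_\ell)) \cap \tset'| \le |B(t,\rho_{\ell+1})\setminus B(t,\rho_\ell)| < \tfrac19 |B(t,\rho_\ell)|$ and that $|B(t,\rho_\ell)\cap\tset'| = |X_i|$, so that the ratio of discarded-to-kept at each step is still at most $1/9$ even though both quantities may be smaller than their full-$\tset$ counterparts. This works because intersecting both the shell and the inner ball with the same set $\tset'$ can only help the inequality in the direction we need, provided we are careful that the inner ball we name as $X_i$ is $B(t,\rho_\ell)\cap\tset'$ rather than the full $B(t,\rho_\ell)$ — a terminal in $B(t,\rho_\ell)$ that was already inactive belongs to some earlier cluster's region and is accounted for there. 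Everything else is a direct application of the weak triangle inequality (Observation~\ref{obs: weak-triangle-inequality}) and the geometric pigeonhole, and the running time is clearly polynomial since $d$ is efficiently computable and there are at most $|\tset|$ iterations.
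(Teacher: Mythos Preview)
Your approach is essentially the same as the paper's --- greedy ball-growing with a sparse-shell pigeonhole --- but there is a genuine gap in the coverage bound, stemming from your choice to compute balls relative to the full terminal set $\tset$ rather than the current active set $\tset'$.

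Concretely: you choose the sparse index $\ell$ so that $|B(t,\rho_{\ell+1})| < \tfrac{10}{9}|B(t,\rho_\ell)|$ with balls measured in all of $\tset$. This bounds the discarded shell at step $i$ by $|B(t,\rho_{\ell+1}) \setminus B(t,\rho_\ell)| < \tfrac{1}{9}|B(t,\rho_\ell)|$. But $|X_i| = |B(t,\rho_\ell) \cap \tset'|$ can be much smaller than $|B(t,\rho_\ell)|$, since earlier iterations may have already consumed most of $B(t,\rho_\ell)$ while leaving the shell untouched. Your claim that ``intersecting both the shell and the inner ball with the same set $\tset'$ can only help the inequality in the direction we need'' is simply false: from $|B\setminus A| \leq \tfrac{1}{9}|A|$ it does not follow that $|(B\setminus A)\cap C| \leq \tfrac{1}{9}|A\cap C|$ for an arbitrary set $C$ (take $A=\{1,\dots,9\}$, $B=\{1,\dots,10\}$, $C=\{10\}$). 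Hence the step ``$|\tset| \le \sum_i (|X_i| + \tfrac{1}{9}|X_i|)$'' is unjustified, and you have not established $\sum_i |X_i| \ge 0.9|\tset|$.

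The fix is exactly what the paper does: define the balls relative to the \emph{active} set, i.e.\ $Y_i = \{t \in \tset' : d(t^*,t) \leq 8\Delta i\}$. The pigeonhole still goes through (one only needs $|Y_L| \leq n$, which is trivial), and now the sparse-shell bound reads $|Y_i \setminus Y_{i-1}| \leq \tfrac{1}{9}|Y_{i-1}|$ with both sides already inside $\tset'$. Setting $X = Y_{i-1}$ and discarding $Y_i \setminus Y_{i-1}$, the per-step discarded-to-kept ratio is directly at most $1/9$, and the coverage bound follows. Your intra- and inter-cluster distance arguments are fine and carry over unchanged.
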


\begin{proof}
We start with $\xset=\emptyset$ and $\ttset=\tset$, and perform a number of iterations, each of which adds one subset of terminals to $\xset$, and removes some terminals from $\ttset$. The iterations are executed while $\ttset\neq \emptyset$, and each iteration is executed as follows.

Let $t^*$ be any terminal in $\ttset$. For all $1\leq i\leq \ceil{\log_{10/9}n}+1$, let $Y_i$ contain all terminals $t\in \ttset$ with $d(t^*,t)\leq 8\Delta i$. We use the following simple observation.

\begin{observation}\label{obs: small level}
There is some $2\leq i\leq \ceil{\log_{10/9} n}+1$, such that $|Y_i\setminus Y_{i-1}|\leq  |Y_{i-1}|/9$.
\end{observation}

\begin{proof}
 If the claim is false, then for every $2\leq i\leq \ceil{\log_{10/9} n}+1$, $|Y_i| \geq 10 |Y_{i-1}|/9$, and so $Y_{\ceil{\log_{10/9} n}+1} > n$, which is impossible.
\end{proof}

Fix some $2\leq i\leq \ceil{\log_{10/9} n}+1$, such that $|Y_i\setminus Y_{i-1}|\leq  |Y_{i-1}|/9$.  Notice that for every terminal $t\in Y_{i-1}$, $d(t^*,t)\leq 8\Delta \left( \ceil{\log_{10/9} n}+1\right)$, and so from Observation~\ref{obs: weak-triangle-inequality}, for any pair $t,t'\in Y_{i-1}$ of terminals, $d(t,t')\leq 16\Delta\left ( \ceil{\log_{10/9} n}+1\right)+\Delta/2\leq \Delta_0$. Moreover, if we consider any pair $t\in Y_{i-1}$, $t'\in \ttset\setminus Y_i$ of terminals, then $d(t,t')\geq 5\Delta$ must hold, since otherwise, from Observation~\ref{obs: weak-triangle-inequality}, $d(t^*,t')\leq d(t^*,t)+d(t,t')+\Delta/2\leq 8\Delta (i-1)+5\Delta+\Delta/2<8\Delta i$, contradicting the fact that $t'\not\in Y_i$. We remove all terminals of $Y_i$ from $\ttset$, add the set $X=Y_{i-1}$ to $\xset$, and continue to the next iteration.

Notice that in every iteration we discard all terminals in $Y_i\setminus Y_{i-1}$, and add a set containing $|Y_{i-1}|$ terminals to $\xset$, where $|Y_i\setminus Y_{i-1}|\leq  |Y_{i-1}|/9$. Therefore, at the end of the algorithm, $\sum_{X\in \xset}|X|\geq 0.9|\tset|$. The remaining properties of the partition are now immediate to verify.
\end{proof}

We use a parameter $\tau=W^{18/19}$. We say that a set $X\in \xset$ of terminals is \emph{heavy} if $w^*|X|\geq \tau$, and we say that it is \emph{light} otherwise. We say that a demand pair $(s,t)$ is heavy iff both $s$ and $t$ belong to heavy subsets of terminals in $\xset$. We say that it is light if at least one of the two terminals belongs to a light subset, and the other terminal belongs to some subset in $\xset$. Note that a demand pair $(s,t)$ may be neither heavy nor light, for example, if $s$ or $t$ lie in $\tset\setminus\bigcup_{X\in \xset}X$. Let $\mset_0$ be the set of all demand pairs that are neither heavy nor light. Then $|\mset_0|\leq 0.2|\mset|$. We say that Case 1 happens  if there are at least $0.1|\mset|$ light demand pairs, and we say that Case 2 happens otherwise. Notice that in Case 2, at least $0.7|\mset|$ of the demand pairs are heavy.
In the next two sections we handle Case 1 and Case 2 separately.


\label{--------------------------------------------sec: case 1-----------------------------------------------}
\section{Case 1: Light Demand Pairs}\label{sec: case 1}
Let $\mset^L\subseteq \mset$ be the set of all light demand pairs. Recall that $|\mset^L|\geq 0.1|\mset|$.
We assume w.l.o.g. that for every pair $(s,t)\in \mset^L$, $t$ belongs to a light set in $\xset$. We let $S^L,T^L\subseteq\tset$ be the sets of the source and the destination vertices of the demand pairs in $\mset^L$, respectively. Let $\lset\subseteq \xset$ be the set of all light terminal subsets. Recall that we have assumed that every terminal participates in exactly one demand pair. If $(s,t)\in \mset$, then we say that $s$ is the mate of $t$, and $t$ is the mate of $s$.
The goal of this section is to prove the following theorem.

\begin{theorem}\label{thm: case 1}
Let $p^*=\Theta\left (\frac{\alphaWL |\mset^L|}{\tau\log n}\right )$.
There is an efficient algorithm, that computes a routing of at least $\min\set{\Omega(p^*), \Omega\left(\frac{\Delta}{p^*\log n} \right )}$ demand pairs via node-disjoint paths in $G$.
\end{theorem}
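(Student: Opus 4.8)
The goal in Theorem~\ref{thm: case 1} is to route many demand pairs from the light set $\mset^L$, where the destination terminals live in light subsets (so each light subset contributes relatively few terminals, at most $\tau/w^*$). My plan is to build the special routing structure sketched in the introduction — a \emph{crossbar} — centered at a single chosen terminal $t_0$, and then use it to pull out node-disjoint paths. First I would use the well-linkedness of $\tset$ in $G$, together with the enclosures from Theorem~\ref{thm: build enclosures} and the node-disjoint paths between disjoint enclosures, to find a large subset $\tmset \subseteq \mset^L$ and a set $\tpset$ of node-disjoint paths connecting the sources to the destinations of $\tmset$, such that the destination terminals of $\tmset$ are pairwise at distance at least $\Delta$ (this is where the $\alphaWL$ factor and the $\tau$-boundedness of light subsets enter: since each light subset is small, we can greedily discard whole subsets that are too close to a chosen representative while losing only a constant fraction, and the well-linkedness guarantees the routing paths exist). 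The value $p^* = \Theta\!\left(\frac{\alphaWL|\mset^L|}{\tau\log n}\right)$ is the number of well-separated destination terminals we can afford to keep, roughly $|\tmset|$ divided by the per-subset terminal budget.

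**Constructing and using the crossbar.**

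Next I would pick an arbitrary $t_0 \in \tT$ to serve as the center, discard it, and build shells $\zset(t)=(Z_1(t),\ldots,Z_{\Delta_1}(t))$ of depth $\Delta_1 = \floor{\Delta/6}$ around every $t \in \tT \cup \{t_0\}$, using Theorem~\ref{thm: build enclosures} for the enclosures and the shell construction of Section~\ref{subsec: - the shells}; since the terminals of $\tT$ are pairwise at distance $\geq \Delta > 2\Delta_1+1$, Observation~\ref{obs: cycles of shells disjoint} guarantees the shells are disjoint, and in fact we can make the outer discs $D^*(t)$ pairwise disjoint. The crossbar then consists of: for each source $v$, a single path $\pset(v)$ to $C_{t_0}$; and for each destination $t$, a bundle $\pset(t)$ of $\Omega(\Delta/(p^*\log n))$ paths from $C_t$ to $C_{t_0}$ — all of these paths pairwise disjoint. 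The bundle size comes from dividing the $\Delta$ node-disjoint paths guaranteed between $C_t$ and $C_{t_0}$ (via Theorem~\ref{thm: build enclosures}, since $D_t \cap D_{t_0} = \emptyset$) among the $p^*$ destination terminals that are all competing to route through the central region near $t_0$. I would then invoke the monotonicity machinery (Theorem~\ref{thm: monotonicity for shells}) to make the bundles monotone with respect to their shells, clean up intersections so the crossbar becomes a \emph{good} crossbar (bundles of distinct terminals avoid each other's inner discs, and there is a nested segment partition of $Z_{\Delta_2}(t_0)$, $\Delta_2 = \floor{\Delta_1/3}$, assigning a segment $\sigma(v)$ to each $v$), and finally read off a routing: each source path reaches its assigned segment on $C_{t_0}$, travels along the inner shell cycles of $t_0$ to the segment of its mate, and follows the mate's bundle outward to $C_t$ and then to $t$.

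**The balancing and the main obstacle.**

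The two terms in $\min\{\Omega(p^*), \Omega(\Delta/(p^*\log n))\}$ reflect the two bottlenecks: we can route at most $\sim p^*$ pairs because that is how many well-separated destinations we kept, but we can also route at most $\sim \Delta/(p^*\log n)$ pairs through any single destination's bundle, and in the worst case all routed pairs might share one destination region — so the number routed is the smaller of the two, and these are balanced when $p^* \approx \sqrt{\Delta/\log n}$, giving $\tilde\Omega(\sqrt{\Delta})$ routed pairs overall. The main obstacle, as flagged in the overview, will be the transformation from a "basic" crossbar to a "good" crossbar: ensuring the path bundles $\pset(t)$ can be made monotone \emph{and} simultaneously made to avoid the inner discs $\tD(t')=D(Z_{\Delta_2}(t'))$ of all other destination terminals, while only losing a constant (or $\poly\log$) factor in the bundle sizes and the number of surviving demand pairs. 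This requires carefully interleaving the monotonicity re-routing of Theorem~\ref{thm: monotonicity for shells} with a "shifting" argument inside each shell that pushes stray path-portions out of the inner discs of other terminals, exploiting the fact that there are $\Delta_1 - \Delta_2 = \Omega(\Delta)$ buffer cycles between the inner shell and the outer boundary $Z_{\Delta_1}(t)$ of each shell to absorb the rerouting. I would organize this as a sequence of lemmas: one building the basic crossbar from $\tpset$ and the enclosure-connecting paths, one doing the monotonicity and disc-avoidance cleanup, and a final one extracting the routing from a good crossbar and doing the arithmetic that yields the stated bound.
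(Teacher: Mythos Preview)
Your proposal is essentially correct and matches the paper's approach: select a subset $\mset_0$ of $\kappa_0=\Theta(\alphawl|\mset^L|/\tau)$ demand pairs whose destinations lie in distinct light subsets (the paper does this via a flow-network argument, Theorem~\ref{thm: choosing cluster centers}, rather than greedy discarding, but the output is the same), build shells and a basic $p_1$-crossbar with $p_1=\Omega(\Delta/\kappa_0)$, upgrade it to a good crossbar in three steps (disjointness from inner shells, monotonicity, the segment partition on $Z_{\Delta_2}(t_0)$), and then extract the routing.

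One small conceptual correction: the reason the bundle size $p$ caps the number of routed pairs is not that ``all routed pairs might share one destination region.'' Each routed pair has its own destination. Rather, in the final routing (Theorem~\ref{thm: find the routing}) every path $P_i^*$ must detour around \emph{every} other destination's inner disc it encounters, and each such detour consumes two paths from that destination's bundle and one cycle of its inner shell; routing $q$ pairs therefore requires each bundle to have at least $2q+1$ paths. This is why the bound is $\min\{\Delta_2,\lfloor(p-1)/2\rfloor,\lceil\kappa/2\rceil\}$, and hence why the two terms $\Omega(p^*)$ and $\Omega(\Delta/(p^*\log n))$ both appear.
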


We first show that Theorem~\ref{thm: case 1} concludes the proof of Theorem~\ref{thm: main} for Case 1. Indeed, since $|\mset^L|\geq 0.1|\mset|$, we get that $p^*=\Theta\left(\frac{\alphaWL |\mset|}{\tau\log n}\right )=\Theta\left(\frac{w^*|\mset|}{W^{18/19}\log n\log k}\right )=\Theta\left(\frac{W^{1/19}}{\log n\log k}\right )$. Notice also that $\Omega\left(\frac{\Delta}{p^*\log n} \right )=\Omega\left(\frac{W^{2/19}\log k}{W^{1/19}}\right)=\Omega\left(W^{1/19}\log k \right)$. Therefore, the algorithm routes $\Omega\left(\frac{W^{1/19}}{\log n\log k}\right )$ demand pairs via node-disjoint paths. The rest of this section is devoted to proving Theorem~\ref{thm: case 1}.

Our first step is to compute a large subset $\mset_0\subseteq \mset^L$ of light demand pairs, so that, if we denote by $S_0$ and $T_0$ the sets of the source and the destination vertices of the demand pairs in $\mset_0$, then there is a set $\qset$ of $|\mset_0|$ node-disjoint paths connecting the vertices of $S_0$ to a subset of vertices of $T^L$, that we denote by $T'$. Additionally, we ensure that every terminal set $X\in \lset$, $|X\cap T'|\leq 1$, and $|X\cap T_0|\leq 1$. We note that the sets $S_0$ and $T'$ do not necessarily form demand pairs. We will eventually route a subset of the pairs of $\mset_0$.

\begin{theorem}\label{thm: choosing cluster centers}
There is an efficient algorithm to compute a subset $\mset_0\subseteq \mset^L$ of $\kappa_0=\Theta\left(\frac{\alphawl|\mset^L|}{\tau}\right)$ demand pairs, and a subset $T'\subseteq T^L$ of $\kappa_0$ terminals, such that, if we denote by  $S_0$ and $T_0$ the sets of the source and the destination vertices of the demand pairs in $\mset_0$, then:

\begin{itemize}
\item There is a set $\qset$ of $\kappa_0$ node-disjoint paths connecting the vertices of $S_0$ to the vertices of $T'$; and
\item For each set $X\in \lset$, $|X\cap T'|\leq 1$, and $|X\cap T_0|\leq 1$.
\end{itemize}
\end{theorem}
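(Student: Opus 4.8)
The plan is to pick, for each light set that contains a destination of $\mset^L$, a single ``representative'' destination terminal, connect the \emph{mates} of these representatives to the representatives themselves using the well-linkedness of $\tset$ in $G$, and only \emph{afterwards} read off $\mset_0$ from the sources that the routing paths actually use. The order matters: one cannot route a prescribed matching on node-disjoint paths (well-linkedness does not preserve matchings), so instead the paths must be allowed to determine which demand pairs land in $\mset_0$.

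\textbf{Construction.} For every $X\in\lset$ with $X\cap T^L\neq\emptyset$, fix an arbitrary terminal $t_X\in X\cap T^L$ and set $\hat T=\set{t_X}$. Since every destination of a light demand pair lies in a light set and every light set has fewer than $\tau/w^*$ vertices, the number of light sets meeting $T^L$, which equals $|\hat T|$, is at least $w^*|T^L|/\tau=w^*|\mset^L|/\tau$; together with $\alphaWL=\Theta(w^*/\log k)$ this is what produces the value $\kappa_0$ in the statement. Let $\hat S$ be the set of mates of the terminals of $\hat T$, and $\hat{\mset}\subseteq\mset^L$ the corresponding set of demand pairs, so $|\hat S|=|\hat{\mset}|=|\hat T|$. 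Because each terminal has degree $1$ and lies in exactly one demand pair, each terminal is either a source or a destination but not both; hence $\hat S$ and $\hat T$ are disjoint, which is where the degree-$1$ normalization of the instance is used. Apply the $\alphaWL$-well-linkedness of $\tset$ in $G$ to the disjoint, equal-sized sets $\hat S,\hat T$: this yields at least $\alphaWL|\hat T|$ node-disjoint paths connecting $\hat S$ to $\hat T$. Put $\kappa_0=\floor{\alphaWL|\hat T|}$, let $\qset$ be a sub-collection of $\kappa_0$ of these paths, let $T'\subseteq\hat T$ and $S_0\subseteq\hat S$ be their sets of endpoints (all distinct since terminals are degree $1$, so $|T'|=|S_0|=\kappa_0$), and let $\mset_0$ be the set of demand pairs whose source lies in $S_0$. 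Then $S_0$ is the source set of $\mset_0$, $|\mset_0|=\kappa_0=\Theta(\alphaWL|\mset^L|/\tau)$, and $\qset$ is a set of $\kappa_0$ node-disjoint paths from $S_0$ to $T'$.

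\textbf{Verification and the main point.} The structural conditions are now immediate: $T'\subseteq\hat T$, and $\hat T$ meets each light set in at most one terminal, so $|X\cap T'|\le 1$ for all $X\in\lset$; and every $s\in S_0\subseteq\hat S$ is the mate of some terminal of $\hat T$, so the destination of $s$'s demand pair lies in $\hat T$, whence $T_0\subseteq\hat T$ and $|X\cap T_0|\le 1$ for all $X\in\lset$ as well. The only subtle point is this ordering: one must commit to $\hat T$ (already one-per-light-set) before routing, and then let the \emph{source} endpoints of the resulting paths --- not an arbitrary sub-matching --- define $\mset_0$, since that is precisely what forces $T_0$ into $\hat T$. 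The remaining ingredients --- that $|\hat T|$ is large enough, by the size bound on light sets, and that $\hat S$ and $\hat T$ are disjoint so that well-linkedness applies --- are straightforward.
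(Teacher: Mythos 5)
Your construction is structurally sound: the ordering you emphasize (commit to $\hat T$ first, route, then read $\mset_0$ off the source endpoints of the paths) is indeed what forces both $T'\subseteq\hat T$ and $T_0\subseteq\hat T$, and your checks that $\hat S,\hat T$ are disjoint and equal-sized so that well-linkedness applies are correct. The gap is quantitative, and it is fatal. You obtain $\kappa_0=\Theta\left(\alphawl\cdot|\hat T|\right)$ pairs, and your own lower bound $|\hat T|\geq w^*|\mset^L|/\tau$ is tight (every light set may be packed with $\approx\tau/w^*$ destinations). Hence $\kappa_0=\Theta\left(\frac{\alphawl\, w^*\,|\mset^L|}{\tau}\right)$, which falls short of the required $\Theta\left(\frac{\alphawl|\mset^L|}{\tau}\right)$ by a factor of $w^*$. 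The sentence in which you claim that this bound ``together with $\alphaWL=\Theta(w^*/\log k)$ produces the value $\kappa_0$ in the statement'' conceals exactly this extra $w^*$. Since $w^*$ can be as small as $\Theta(1/k)$, the loss is polynomial: downstream, $p^*$ would become $\Theta(w^*W^{1/19}/\poly\log n)$ instead of $\Theta(W^{1/19}/\poly\log n)$, which can drop below $1$, and the Case-1 guarantee collapses.

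The root cause is that you fix one representative destination per light set \emph{before} invoking well-linkedness, so the well-linkedness lower bound scales with the number of light sets rather than with $|T^L|$; and the well-linkedness definition can never give you more than $\alphawl|\hat T|$ disjoint paths terminating in $\hat T$, so the approach cannot be patched by a better choice of representatives. The paper instead keeps all of $T^L$ in play: it builds a flow network in which each light set $X$ receives super-nodes $s_X,t_X$ of capacity $\tau$ (all ordinary vertices have capacity $1$), and proves by a min-cut argument that the maximum flow is at least $\alphawl\tk$ with $\tk=|\mset^L|$ --- the point being that a cut using a super-node of capacity $\tau$ disconnects at most $\tau/w^*$ terminals, which cost only $\alphawl\cdot\tau/w^*=\Theta(\tau/\log k)<\tau$ in the well-linkedness bound, so cheap cuts cannot exist. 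Only \emph{then} is the one-per-cluster requirement enforced, by resetting the capacities of $s_X,t_X$ to $1$: scaling the flow down by $\tau$ and applying flow integrality yields $\ceil{\alphawl\tk/\tau}$ node-disjoint paths whose endpoints automatically meet each light set at most once. The loss is a single factor of $\tau$ applied \emph{after} the well-linkedness bound on the full terminal set, rather than a factor of $\tau/w^*$ applied to the terminal set before well-linkedness is ever used.
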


\begin{proof}
Recall that $S^L$ and $T^L$ are the sets of the source and the destination vertices, respectively, of the demand pairs in $\mset^L$. 

We build the following directed flow network $\nset$. Start with graph $G$, and bi-direct all its edges. 
For every light set $X\in \lset$, add two vertices: $s_X$, connecting to every vertex $s\in S^L$, whose mate $t\in X$, and $t_X$, to which every vertex $t\in T^L\cap X$ is connected. Finally, we add a global source vertex $s_0$, that connects with directed edges to every vertex $s_X$ for $X\in \lset$, and a global destination vertex $t_0$, to which every vertex $t_X$ with $X\in \lset$ connects.  The capacities of $s_0$ and $t_0$ are infinite, the capacities of all vertices in $\set{s_X,t_X\mid X\in \lset}$ are $\tau$, and all other vertex capacities are unit. Let $|S^L|=|T^L|=\tk$. We claim that there is an $s_0$--$t_0$ flow of value at least $\alphawl\cdot \tk$ in $\nset$. Assume otherwise. Then there is a set $Y$ of vertices, whose total capacity is less than $\alphawl\cdot \tk$, separating $s_0$ from $t_0$ in $\nset$. Let $A$ denote the subset of vertices of $\set{s_X\mid X\in \lset}$ that lie in $Y$, and let $B=Y\cap \set{t_X\mid X\in \lset}$. Assume that $|A|=a$, $|B|=b$, and assume for simplicity that $a\geq b$ (the other case is symmetric). 
We next build a set $\tilde S\subseteq S^L$ of source vertices as follows: for every set $X\in \lset$ with $s_X\in A$, we add all vertices $s\in S^L$ whose mate belongs to $X$, to set $\tilde S$ (so $\tilde S$ contains all vertices $s\in S^L$, such that some edge originating at a vertex of $A$ enters $s$). Let $\tilde S'=S^L\setminus \tilde S$. Since each cluster $X\in \xset$ contains at most $\tau/w^*$ terminals, $|\tilde S'|\geq \tk-\frac{a\tau}{w^*}$.

Similarly, we let $\tilde T\subseteq T^L$ contain all terminals $t$, such that, if $t\in X$ for cluster $X\in \lset$, then $t_X\in B$. Let $\tilde T'=T^L\setminus \tilde T$. Then $|\tilde T'|\geq \tk-\frac{b\tau}{w^*}\geq \tk-\frac{a\tau}{w^*}$, since we assumed that $a\geq b$. We discard terminals from $\tilde S'$ and $\tilde T'$ until $|\tilde S'| =|\tilde T'|= \ceil{\tk-\frac{a\tau}{w^*}}$ holds. (We note that $\tk-\frac{a\tau}{w^*}>0$, since the total capacity of all vertices in $A$ is at most $a\tau<\alphawl \tk<w^*\tk$, as $\alphawl=\frac{w^*}{512 \cdot \alphasc \cdot \log k}$.)

Let $Y'=Y\setminus (A\cup B)$. Then $Y'$ is a subset of vertices of $G$, and moreover, $G\setminus Y'$ does not contain any path connecting a vertex of $\tilde S'$ to a vertex of $\tilde T'$. Indeed, if $G\setminus Y'$ contains a path $P$ connecting some vertex $s\in \tilde S'$ to some vertex $t'\in \tilde T'$, then it is easy to see that path $P$ can be extended to an $s_0$--$t_0$ path in $\nset\setminus Y$.
Notice that:

\[|Y'|<\alphawl\cdot \tk-a\tau-b\tau\leq\alphawl\cdot \tk-a\tau.\]

But from the well-linkedness of terminals, there must be a set of at least $\alphawl |\tilde S'| \geq \alphawl(\tk-a\tau/w^*)$ paths connecting the vertices of $\tilde S'$ to the vertices of $\tilde T'$. Recall that $\alphawl=\frac{w^*}{512 \cdot \alphasc \cdot \log k}$, and so:

\[\alphawl\left(\tk-\frac{a\tau}{w^*}\right)=\alphawl \tk-\frac{a\tau}{512\alphasc \log k}>\alphawl\tk-a\tau>|Y'|,\]

a contradiction. We conclude that there is an $s_0$--$t_0$ flow $F$ of value at least $\alphawl\tk$ in $\nset$.

Let $\nset'$ be a directed flow network defined exactly like $\nset$, except that now we set the capacity of every vertex in $\set{s_X,t_X\mid X\in \lset}$ to $1$. By scaling the flow $F$ down by factor $\tau$, we obtain a valid $s_0$--$t_0$ flow in $\nset'$ of value at least $\alphawl\tk/\tau$. From the integrality of flow, there is an integral $s_0$--$t_0$ flow in $\nset'$ of value $\ceil{\alphawl \tk/\tau}$. This flow defines a collection $\qset$ of $\kappa_0=\ceil{\alphawl \tk/\tau}=\Omega(\alphawl |\mset^L|/\tau)$ node-disjoint paths in graph $G$, connecting some vertices of $S^L$ to some vertices of $ T^L$. We let $S_0\subseteq  S^L$ and $T'\subseteq  T^L$ be the sets of vertices where the paths of $\qset'$ originate and terminate, respectively, and we let $T_0$ contain all mates of the source vertices in $S_0$. We then set $\mset_0$ to be the set of the demand pairs in which the vertices of $S_0$ participate. It is easy to verify that for each set $X\in \lset$, $|T'\cap X|,|T_0\cap X|\leq 1$ from the definition of the network $\nset'$.
\end{proof}

We assume that $|\mset_0|\geq 1000$, as otherwise we can route a single demand pair, and obtain a solution routing $\Omega\left (\frac{\alphawl|\mset^L|}{\tau}\right)$ demand pairs.

Recall that every set $X\in \xset$ contains at most one terminal from $T_0$. Since $|S_0|=|T_0|$, there is some set $X_0\in T_0$, that contains exactly one terminal $t_0\in T_0$, and at most one additional terminal from $S_0$. We will view $t_0$ as our ``center'' terminal, and we discard from $\mset_0$ the demand pairs in which $t_0$ and the terminal in $S_0\cap X_0$ (if it exists) participate.

The main tool in our algorithm for Case 1 is a crossbar, that we define below. Let $\Delta_1=\floor{\Delta/6}$ and $\Delta_2=\floor{\Delta_1/3}$. Given a shell $\zset(t)=(Z_1(t),\ldots,Z_{\Delta_1}(t))$ of depth $\Delta_1$ around some terminal $t$, we will always denote by $D^*(t)=D(Z_{\Delta_1}(t))$, and by $\tilde D(t)=D(Z_{\Delta_2}(t))$. We will view the cycles $Z_1(t),\ldots,Z_{\Delta_2}(t)$ as the ``inner'' part of the shell $\zset(t)$.
The crossbar is defined with respect to some large enough subset $\mset^*\subseteq \mset_0$ of demand pairs (see Figure~\ref{fig: crossbar}). 

\begin{definition}
Suppose we are given a subset $\mset^*\subseteq \mset_0$ of demand pairs and an integer $p\geq 1$. Let $S^*$ and $T^*$ be the sets of all source and all destination vertices participating in the demand pairs of $\mset^*$, respectively. A $p$-crossbar for $\mset^*$ consists of:

\begin{itemize}
\item For each terminal $t\in T^*\cup \set{t_0}$, a shell $\zset(t)$ of depth $\Delta_1$ around $t$, such that for all $t,t'\in T^*\cup \set{t_0}$, $D^*(t)\cap D^*(t')=\emptyset$, and for all $s\in S^*$ and $t'\in T^*\cup \set{t_0}$, $s\not\in D^*(t')$; and

\item For each $v\in S^*\cup T^*$, a collection $\pset(v)$ of paths, such that:

\begin{itemize}
\item For each $s\in S^*$, $\pset(s)$ contains exactly one path, connecting $s$ to a vertex of $C_{t_0}$;

\item For each $t\in T^*$, $\pset(t)$ contains exactly $p$ paths, where each path connects a vertex of $C_t$ to a vertex of $C_{t_0}$; and

\item All paths in $\pset=\bigcup_{v\in S^*\cup T^*}\pset(v)$ are node-disjoint from each other.
\end{itemize}
\end{itemize}
\end{definition}

In order to route a large subset of the demand pairs in $\mset^*$, we need a crossbar with slightly stronger properties, that we call a \emph{good crossbar}, and define below.

\begin{definition}
Given a set $\mset^*\subseteq \mset_0$ of demand pairs, where $S^*,T^*$ are the sets of the source and the destination vertices of the demand pairs in $\mset^*$ respectively, and an integer $p\geq 1$, a $p$-crossbar $\left(\set{\zset(t)}_{t\in T^*\cup \set{t_0}},\set{\pset(v)}_{v\in S^*\cup T^*}\right )$ is a \emph{good crossbar}, if the following additional properties hold:

\begin{properties}{C}
\item For all $t\in T^*$ and all $v\in (S^*\cup T^*)\setminus\set{t}$, all paths in $\pset(v)$ are disjoint from $\tD(t)$. \label{prop3: paths don't touch inner discs}

\item For all $t\in T^*$, all paths in $\pset(t)$ are monotone with respect to $(Z_1(t),\ldots,Z_{\Delta_2}(t))$. Also, for all $v\in S^*\cup T^*$, all paths in  $\pset(v)$ are monotone with respect to  $(Z_1(t_0),\ldots,Z_{\Delta_2}(t_0))$.\label{prop4: in own disc contiguous path}

\item We can partition $Z_{\Delta_2}(t_0)$ into a collection $\Sigma=\set{\sigma(v)\mid v\in S^*\cup T^*}$ of $|S^*|+|T^*|$ disjoint segments, such that for all $v,v'\in S^*\cup T^*$ with $v\neq v'$, $\sigma(v)\cap \pset(v')=\emptyset$. \label{prop-last: contiguously hit disc of t0}
\end{properties}
\end{definition}

\begin{figure}[h]
 \centering
\scalebox{0.7}{\includegraphics{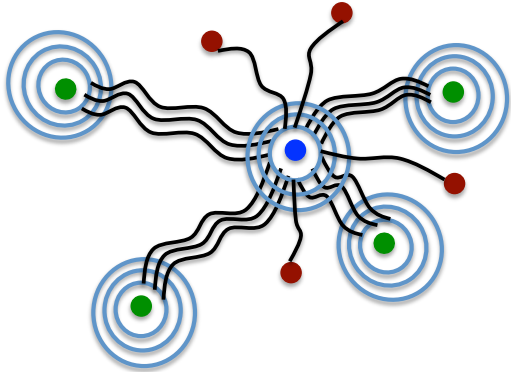}}\caption{A crossbar. The center vertex $t_0$ is shown in blue, the vertices of $S^*$ in red, and the vertices of $T^*$ in green.\label{fig: crossbar}}
\end{figure}

The following theorem shows that, given a $p$-crossbar in $G$, where $p$ is large enough, we can route many demand pairs in $\mset^*$.

\begin{theorem}\label{thm: find the routing}
Suppose we are given a subset $\mset^*\subseteq\mset_0$ of $\kappa$ demand pairs, where $S^*$ and $T^*$ are the sets of all source and all destination vertices of the demand pairs in $\mset^*$, respectively. Assume further that we are given a good $p$-crossbar $\left(\set{\zset(t)}_{t\in T^*\cup \set{t_0}},\set{\pset(v)}_{v\in S^*\cup T^*}\right )$ for $\mset^*$, and let $q=\min\set{\Delta_2, \floor{(p-1)/2},\ceil{\kappa/2}}$. Then there is an efficient algorithm that routes at least $q$ demand pairs in $\mset^*$ via node-disjoint paths in $G$.
\end{theorem}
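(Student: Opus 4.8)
The plan is to use the good crossbar as a routing grid: the cycles of the inner shell $(Z_1(t_0),\dots,Z_{\Delta_2}(t_0))$ of the center terminal $t_0$ will carry the ``turns'' of the routes, while the crossbar paths $\pset(v)$ carry the ``legs''. First I would choose the subset $\mset'\subseteq\mset^*$ of $q$ demand pairs that will actually be routed. By Property~(\ref{prop-last: contiguously hit disc of t0}) the segments $\set{\sigma(v)\mid v\in S^*\cup T^*}$ occur in some cyclic order along $Z_{\Delta_2}(t_0)$; I would pick $\mset'$ so that, viewing each pair $(s,t)$ as the chord $\set{\sigma(s),\sigma(t)}$ on $Z_{\Delta_2}(t_0)$, the chords of $\mset'$ form a laminar, pairwise non-crossing family. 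The bound $q\le\ceil{\kappa/2}$ is what gives us room to extract such a subfamily of size $q$ from the $\kappa$ given pairs.

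For a single selected pair $(s,t)\in\mset'$ I would build its route from three pieces. \emph{At $t_0$:} hand $(s,t)$ a cycle $Z_h(t_0)$ of the inner shell that is used by no other pair (this needs $q\le\Delta_2$), follow the path in $\pset(s)$ from $s$ inward until it first reaches $Z_h(t_0)$, at a vertex $x$; choose one path $P\in\pset(t)$, let $y$ be the vertex where $P$ meets $Z_h(t_0)$ (unique, by the monotonicity in Property~(\ref{prop4: in own disc contiguous path})), follow $P$ from $y$ outward to its endpoint $e\in C_t$, and join $x$ to $y$ along an arc of $Z_h(t_0)$. Monotonicity guarantees that inside $\tD(t_0)$ the portions of $\pset(s)$ and of $P$ that we use meet only the cycles $Z_h(t_0),Z_{h+1}(t_0),\dots,Z_{\Delta_2}(t_0)$; processing the pairs in the nesting order of their chords and always assigning the innermost still-free cycle then lets us pick the arc on $Z_h(t_0)$ disjoint from the used portions of all other routes. \emph{At $t$:} connect $e\in C_t$ to $t$. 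By Property~(\ref{prop3: paths don't touch inner discs}) the disc $\tD(t)$ meets no path $\pset(v)$ with $v\ne t$, and by Property~(\ref{prop4: in own disc contiguous path}) the $p$ paths of $\pset(t)$ cross $\tD(t)$ monotonically; using the connectivity of $V(D_t)$ and the cycles $(Z_1(t),\dots,Z_{\Delta_2}(t))$ of $t$'s own inner shell, one can route from $t$ to one of these $p$ paths while staying clear of the other $p-1$, and the bound $q\le\floor{(p-1)/2}$ (i.e.\ $p\ge 2q+1$) is precisely what leaves the needed slack. \emph{Gluing:} concatenate the three pieces at the junctions $x$, $y$, $e$, and take a simple path from $s$ to $t$ inside the resulting walk.

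Finally I would verify that the $q$ routes are pairwise node-disjoint. Their outward path-legs are disjoint because all paths in the crossbar are node-disjoint; their cycle-arcs are disjoint because the cycles $Z_h(t_0)$ are distinct and, by the nesting argument, each arc avoids the other routes; their terminal-side connections live inside the pairwise-disjoint discs $\tD(t)$ (by Property~(\ref{prop3: paths don't touch inner discs}) and disjointness of the discs $D^*(t)$); and pieces of different types meet only at the designated junction vertices, which are all distinct. This yields a routing of $q=\min\set{\Delta_2,\floor{(p-1)/2},\ceil{\kappa/2}}$ demand pairs, as required.

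The step I expect to be the main obstacle is the disjointness of the cycle-arcs in the piece at $t_0$: one must show that, with the pairs taken in laminar order and each given its own cycle, the arc chosen on $Z_h(t_0)$ really can be made disjoint from the used portions of every other route, which requires carefully combining Property~(\ref{prop4: in own disc contiguous path}), the laminar processing order, and an explicit accounting of which cycles each used path segment can hit. The loss factor of $2$ in the bound $\ceil{\kappa/2}$ is there to make exactly this accounting go through.
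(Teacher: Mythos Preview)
Your very first step fails: you cannot in general extract $q=\ceil{\kappa/2}$ pairs whose chords on $Z_{\Delta_2}(t_0)$ are pairwise non-crossing. Consider the configuration where the cyclic order of the segments along $Z_{\Delta_2}(t_0)$ is $\sigma(s_1),\sigma(s_2),\dots,\sigma(s_\kappa),\sigma(t_1),\sigma(t_2),\dots,\sigma(t_\kappa)$. Then every two chords $\{\sigma(s_i),\sigma(t_i)\}$ and $\{\sigma(s_j),\sigma(t_j)\}$ alternate along the cycle and hence cross; the largest non-crossing subfamily has size~$1$. Nothing in the definition of a good crossbar rules this out, so ``$q\le\ceil{\kappa/2}$ gives us room to extract such a subfamily'' is simply false.

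This is also why your explanation of the role of $p\ge 2q+1$ is off. In your scheme each destination $t$ uses a single path of $\pset(t)$, and you invoke the bound only to ``stay clear of the other $p-1$'' inside $\tD(t)$; but connecting $e\in C_t$ to $t$ inside $D_t$ needs no such slack. In the paper, $p\ge 2q+1$ is used for a completely different purpose: the pairs that are kept are \emph{not} non-crossing; instead one cuts the cycle $Z_{\Delta_2}(t_0)$ open (at a fixed source path), keeps the $\ge\ceil{\kappa/2}$ pairs that are, say, left-to-right in the resulting linear order, and sends source $s_i$ onto ``row'' $R_i=Z_i(t_0)\setminus e_i$. The destinations are then processed from right to left. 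When $t_i$ is processed, the middle path $Q_{q+1}\in\pset(t_i)$ carries $P_i^*$ into $t_i$, while for every $i'>i$ the route $P_{i'}^*$ is \emph{detoured around} $\tD(t_i)$ using the pair of paths $Q_{i'},Q_{2q+2-i'}\in\pset(t_i)$ together with the shell cycle $Z_{i'}(t_i)$. This is where both the $2q+1$ paths of $\pset(t_i)$ and the inner shell $\zset(t_i)$ are essential, and it is exactly what lets you handle the crossing configuration above. Your plan uses neither ingredient in this way, and without a non-crossing selection (which is unavailable) the arc-on-$Z_h(t_0)$ step cannot be made disjoint from the other routes.
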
 
\begin{proof}
We can assume without loss of generality that for every terminal $t\in T^*$, for every path $P\in \pset(t)$, and for every $1\leq j\leq \Delta_2$, $P\cap Z_j(t)$ consists of a single vertex. In order to see this, recall that for each such terminal $t$, path $P$ and value $j$, $P\cap Z_j(t)$ is a path, from Property~(\ref{prop4: in own disc contiguous path}). We contract each such path $P\cap Z_j(t)$ into a single vertex. We still maintain a good $p$-crossbar for $\mset^*$ in the resulting graph, and any routing of a subset of the demand pairs in $\mset^*$ in the new graph via node-disjoint paths immediately gives a similar routing of the same subset of the demand pairs in the original graph. Using a similar reasoning, we assume without loss of generality that for every path $P\in \bigcup_{v\in S^*\cup T^*}\pset(v)$, for every $1\leq j\leq \Delta_2$, $P\cap Z_j(t_0)$ is a single vertex.

Fix an arbitrary source vertex $s\in S^*$, and consider the unique path $P(s)\in \pset(s)$. For all $1\leq j\leq \Delta_2$, let $v_j$ be the unique vertex in $Z_j(t_0)\cap P(s)$, and let $e_j$ be the edge of $Z_j(t_0)$ incident to $v_j$, as we traverse $Z_j(t_0)$ starting from $v_j$ in the clock-wise direction. Let $R_j$ be the path obtained by deleting $e_j$ from $Z_j(t_0)$. We view this path as directed in the counter-clock-wise direction along $Z_j(t_0)$, thinking of this as the left-to-right direction. Once we process every cycle $Z_j(t_0)$ for $1\leq j\leq \Delta_2$ in this fashion, we obtain a collection $R_1,\ldots,R_{\Delta_2}$ of paths. Our routing will in fact only use the paths $R_1,\ldots,R_q$.

Let $\pset=\bigcup_{v\in S^*\cup T^*}\pset(v)$.
For each $1\leq j\leq \Delta_2$, and for each path $P\in \pset$, let $u(P,j)$ be the unique vertex in $P\cap R_j$. The vertices $u(P,\Delta_2)$ define a natural left-to-right ordering $\oset$ of the paths in $\pset$: for $P,P'\in \pset$, we denote $P\prec P'$ iff $u(P,\Delta_2)$ lies to the left of $u(P',\Delta_2)$ on $R_{\Delta_2}$. Notice that, since the paths of $\pset$ are monotone with respect to $Z_1(t_0),\ldots,Z_{\Delta_2}(t_0)$,  for every pair $P,P'\in \pset$ of paths with $P\prec P'$, for every $1\leq j\leq \Delta_2$, $u(P,j)$ lies to the left of $u(P',j)$ on $R_j$.
From Property~(\ref{prop-last: contiguously hit disc of t0}) of the crossbar, for each terminal $v\in S^*\cup T^*$, all paths in $\pset(v)$ appear consecutively in the ordering $\oset$. Therefore, we obtain a natural left-to-right ordering $\oset'$ of the terminals: we say that terminal $v$ lies to the left of terminal $v'$ iff for all $P\in \pset(v)$, $P'\in \pset(v')$, $P\prec P'$.

 We say that a demand pair $(s,t)\in \mset^*$ is a left-to-right pair, if $s$ appears before $t$ in $\oset'$, and we say that it is a right-to-left pair otherwise. At least $\ceil{|\mset^*|/2}$ of the pairs belong to one of these two types, and we assume w.l.o.g. that at least $\ceil{|\mset^*|/2}$ of the pairs are of the left-to-right type (otherwise we reverse the direction of the paths $\set{R_j}_{j=1}^{\Delta_2}$, and the orderings $\oset,\oset'$). We discard from $\mset^*$ all right-to-left demand pairs, and we update the sets $S^*$ and $T^*$ accordingly. We discard additional demand pairs from $\mset^*$ as needed, until $|\mset^*|=q$ holds.

Consider the set $S^*$ of the source vertices. We assume that $S^*=(s_1,\ldots,s_q)$, with the sources indexed in the order of their appearance in $\oset'$. We define a set $\pset^*=\set{P_1^*,\ldots,P_q^*}$ of node-disjoint paths, where for each $1\leq i\leq q$, path $P_i^*$ connects $s_i$ to $t_i$. In order to do so, we first construct an initial set of paths, that we later modify.

The initial set $\pset^*$ of paths is constructed as follows. Fix some $1\leq i\leq q$. The initial path $P^*_i$ is the concatenation of the segment of the unique path $P(s_i)\in \pset(s_i)$ from $s_i$ to $u(P,i)$, and the segment of $R_i$ from $u(P,i)$ to the last (right-most) vertex of $R_i$ (see Figure~\ref{fig: initial-routing}).

Clearly, all paths in $\pset^*$ are node-disjoint, and each path $P^*_i$ originates at $s_i$, but it does not terminate at $t_i$. We now modify the paths in $\pset^*$ in order to fix this. We process the terminals $t_i\in T^*$ in their right-to-left order (the reverse order of their appearance in $\oset'$). When we process terminal $t_i$, we will ensure that path $P_i^*$ terminates at $t_i$, while all paths in $\pset^*$ remain disjoint. Intuitively, in order to do so, we send the paths $P_{i+1}^*,\ldots,P_{q}^*$ ``around'' the terminal $t_i$, using the paths of $\pset(t_i)$ and the $q$ innermost cycles of the shell $\zset(t_i)$.

Formally, throughout the algorithm, we maintain the following invariants:

\begin{itemize}
\item The paths in $\pset^*=\set{P_1^*,\ldots,P_q^*}$ always remain disjoint.
\item Once terminal $t_i$ is processed, in all subsequent iterations, path $P_i^*$ connects $s_i$ to $t_i$.
\item Let $t'$ be the last terminal processed, and for each $1\leq i\leq q$, let $v^i$ be the leftmost vertex of $R_i$ that lies on any path in $\pset(t')$. Let $P(s_i)$ be the unique path in $\pset(s_i)$. Then if $s_i$ lies to the left of $t'$ in $\oset'$, the current path $P^*_i$ contains the segment of $R_i$ between $u(P(s_i),i)$ and $v^i$ (possibly excluding $v^i$).
\end{itemize}

We now describe an iteration when the terminal $t_i$ is processed. Let $\set{Q_1,\ldots,Q_{2q+1}}$ be any set of $2q+1$ distinct paths in $\pset(t_i)$, indexed according to their ordering in $\oset$ (recall that from our definition of $q$, $|\pset(t_i)|=p\geq 2q+1$). 
We do not modify the paths $P_1^*,\ldots,P_{i-1}^*$ at this step. Path $P_i^*$ is modified as follows. From our invariants, vertex $u(Q_{q+1},i)$ belongs to $P_i^*$. We discard the last segment of $P_i^*$ that starts at $u(Q_{q+1},i)$, and then concatenate the remaining path with the segment of $Q_{q+1}$ from its starting point $a\in C_{t_i}$ to $u(Q_{q+1},i)$. In this way, we obtain a path originating at $s_i$ and terminating at $a$. Since the sub-graph of $G$ induced by $V(D(t_i))$ is connected, we can extend this path inside $D(t_i)$, so it now terminates at $t_i$.

\begin{figure}[h]
\centering
\subfigure[The initial paths, together with the inner shells of the terminals in $T^*$. Recall that for all $t\in T^*$ and $v\in S^*\cup T^*$ with $t\neq v$, all paths in $\pset(v)$ are disjoint from the inner shell of $t$.]{\scalebox{0.35}{\includegraphics{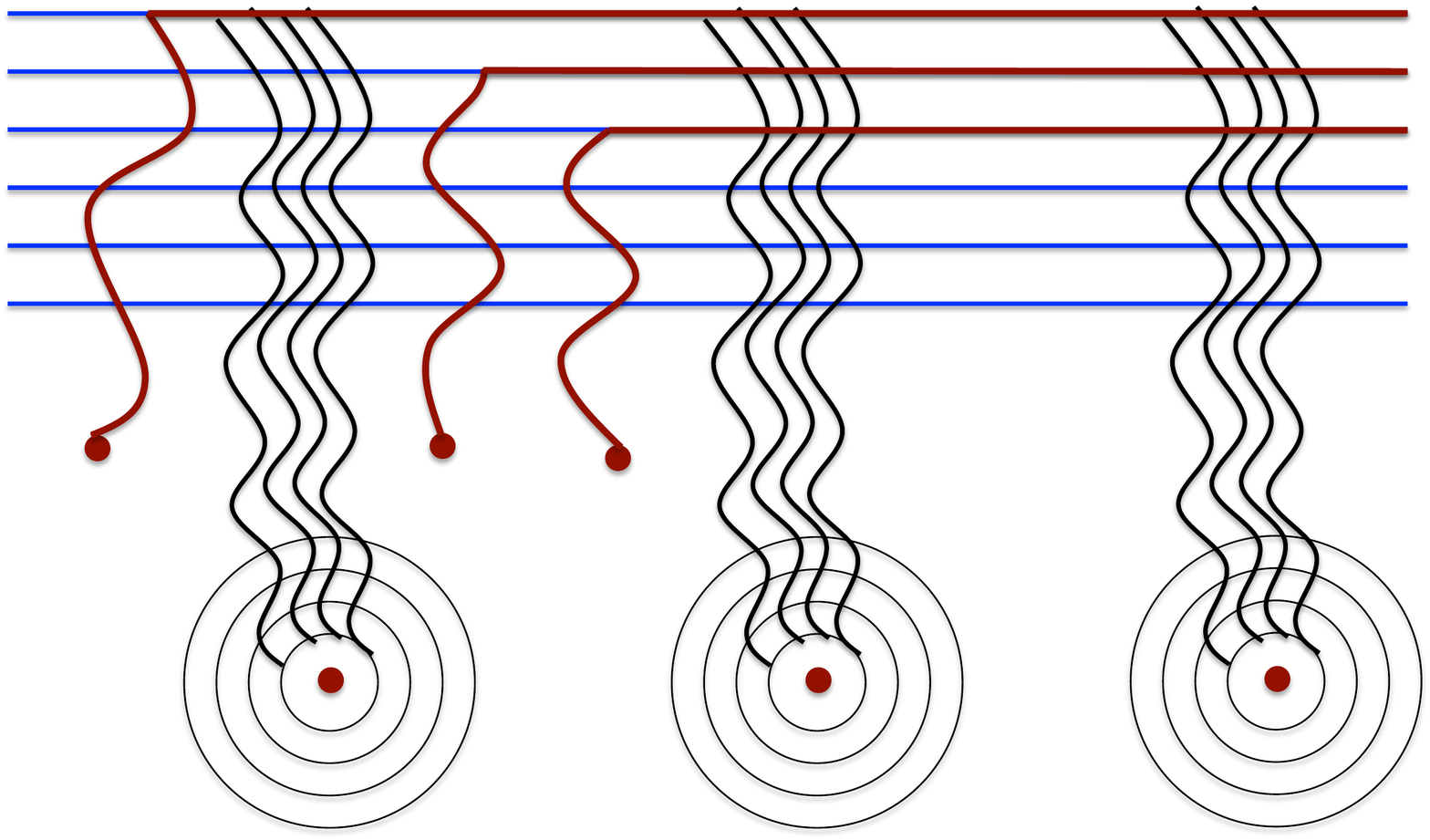}}\label{fig: initial-routing}}
\hspace{1cm}
\subfigure[Modifying the routing]{
\scalebox{0.3}{\includegraphics{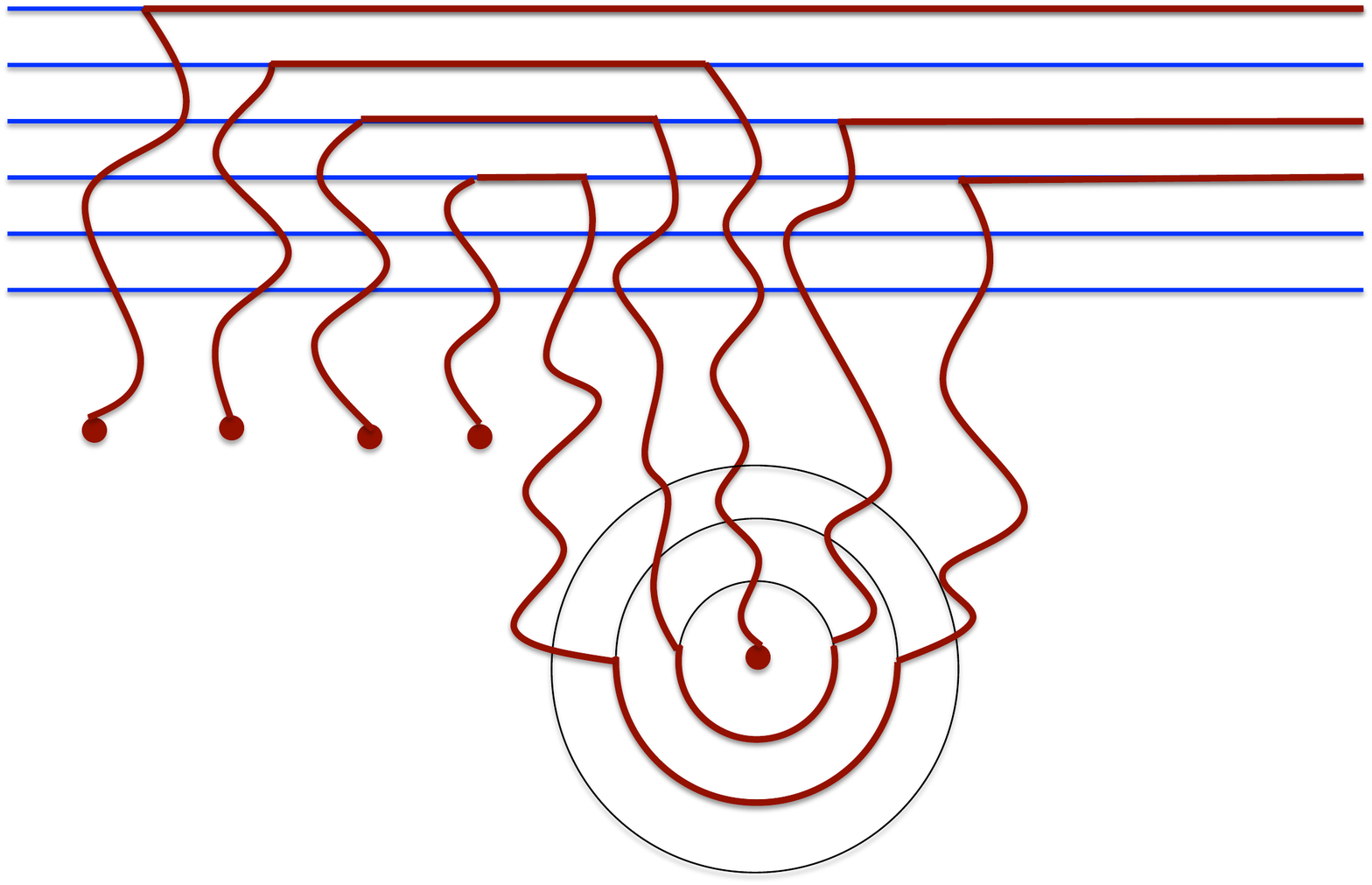}}\label{fig: routing-modified}}
\caption{Constructing the routing \label{fig: Routing}}
\end{figure}

For all $i'>i$, if $s_{i'}$ lies to the left of $t_i$, then path $P_{i'}^*$ is modified to ``go around'' $t_i$, using the paths $Q_{i'}, Q_{2q+2-i'}$ (that is, the $i'$th paths from the right and from the left), and the cycle $Z_{i'}(t_i)$, as follows. Let $Q'$ be the segment of path $Q_{i'}$, starting from the unique vertex of $Q_{i'}\cap Z_{i'}(t_i)$ (that we denote by $x$), to vertex $u(Q_{i'},i')$. Let $Q''$ be the segment of path $Q_{2q+2-i'}$, starting from the unique vertex of $Q_{2q+2-i'}\cap Z_{i'}(t_i)$ (that we denote by $y$), to vertex $u(Q_{2q+2-i'},i')$. Consider now the two segments $\sigma,\sigma'$ of $Z_{i'}(t_i)$, whose endpoints are $x$ and $y$. Exactly one of these segments (say $\sigma$) is disjoint from the new path $P_i^*$. We remove from $P_{i'}^*$ the segment of $R_{i'}$ between $u(Q_{i'},i')$ and $u(Q_{2q+2-i'},i')$, and replace it with the concatenation of $Q',\sigma$ and $Q''$ (see Figure~\ref{fig: routing-modified}). It is easy to see that all invariants continue to hold.
\end{proof}

In order to complete the proof of Theorem~\ref{thm: case 1}, we prove the following theorem that allows us to construct a large enough crossbar.

\begin{theorem}\label{thm: crossbar construction}
There is an efficient algorithm that either finds a routing of $\Omega(\kappa_0)$ demand pairs of $\mset_0$ via node-disjoint paths, or computes a subset $\mset^*\subseteq\mset_0$ of $\Omega(\kappa_0/\log n)$ demand pairs, and a good $p$-crossbar for $\mset^*$, with $p=\Omega(\Delta/\kappa_0)$.
\end{theorem}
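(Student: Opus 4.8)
The plan is to build the crossbar in three stages: (i) construct the shells around the relevant terminals; (ii) route the ``spine'' of the crossbar, i.e.\ the path families $\pset(v)$, via a single flow-type argument that combines the enclosure-connectivity property of Theorem~\ref{thm: build enclosures} with the $\alphaWL$-well-linkedness of $\tset$; and (iii) post-process the resulting paths with the monotonicity machinery of Theorem~\ref{thm: monotonicity for shells} to obtain the stronger properties~(\ref{prop3: paths don't touch inner discs})--(\ref{prop-last: contiguously hit disc of t0}) of a good crossbar. The first alternative in the statement (routing $\Omega(\kappa_0)$ demand pairs directly) will be what we fall back on whenever stage~(ii) fails.

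For stage~(i): recall that $t_0\in T_0$, that every terminal of $T_0$ lies in a light set of $\xset$, and that (after Theorem~\ref{thm: choosing cluster centers} and the discarding step) each of these terminals lies in a \emph{distinct} light set; hence by Theorem~\ref{thm: terminal clusters} we have $d(t,t')\geq 5\Delta$ for every two distinct $t,t'\in T_0\cup\set{t_0}$. I would build, for each $t\in T_0\cup\set{t_0}$, a shell $\zset(t)$ of depth $\Delta_1$, taking as the base face $F_t$ any face of the drawing of $G$ lying inside $D_{t_0}$ (for $t_0$ itself, inside $D_{t}$ for some fixed $t\in T_0$): since $d(t,t_0)\geq 5\Delta>\Delta_1+1$ every vertex on the boundary of that face is at $\dface$-distance $\geq\Delta_1+1$ from $C_t$, and $C_t$ separates $t$ from that face because $D_t\cap D_{t_0}=\emptyset$. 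To get $D^*(t)\cap D^*(t')=\emptyset$ for distinct $t,t'$, note that Observation~\ref{obs: cycles of shells disjoint} (with $r=r'=\Delta_1$, using $2\Delta_1+1<5\Delta\leq d(t,t')$) makes all shell cycles of $t$ and $t'$ pairwise disjoint, so either $D^*(t)\cap D^*(t')=\emptyset$ or one contains the other; containment is ruled out by Property~(\ref{prop: short curve}), since $C_{t'}\subseteq D^*(t)$ would join every vertex of $C_{t'}$ to $C_t$ by a $G$-normal curve of length $\leq\Delta_1+1<5\Delta$. The same reasoning yields $s\notin D^*(t')$ once we discard the (few) demand pairs whose source falls inside some $D^*(t')$.

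For stage~(ii) -- the heart of the argument -- I would set up a flow network whose ``super-source'' is $V(C_{t_0})$ (total capacity $\ell(C_{t_0})=\Delta$ under unit vertex capacities), whose sinks are the curves $C_t$ for $t\in T_0$ (each wishing to receive $p=\Theta(\Delta/\kappa_0)$ units) together with the source vertices $s\in S_0$ (each wishing to receive one unit), and in which the interior of every enclosure $D_t$ is contracted. The $\Delta$ node-disjoint paths between any two disjoint enclosure boundaries (Theorem~\ref{thm: build enclosures}) and the well-linkedness of $\tset$, invoked through the auxiliary paths $\qset$ exactly as in the proof of Theorem~\ref{thm: choosing cluster centers}, will be used to analyze the min cut of this network. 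The dichotomy I expect: either the max flow is ``not too concentrated'', in which case bucketing the terminals $t\in T_0$ by the dyadic value of ``number of flow-paths reaching $C_t$'' and keeping the best bucket (in the spirit of Claim~\ref{claim: partition the forest}, and this is precisely the $\log n$ loss) yields a set $\mset^*\subseteq\mset_0$ of $\Omega(\kappa_0/\log n)$ demand pairs, each with $\Omega(\Delta/\kappa_0)$ node-disjoint paths from its $C_t$ to $C_{t_0}$ and a single node-disjoint path from its source to $C_{t_0}$; or a small vertex cut appears, and well-linkedness applied to its two sides lets us instead route $\Omega(\kappa_0)$ demand pairs of $\mset_0$ on node-disjoint paths directly, which is the first alternative in the statement. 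Truncating each routed path at its first vertex of $C_{t_0}$, and discarding a further constant fraction of $\mset^*$ if needed so that the total number $|S^*|+p|T^*|$ of path-endpoints does not exceed $\ell(C_{t_0})=\Delta$, gives a basic $p$-crossbar for $\mset^*$.

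For stage~(iii): apply Theorem~\ref{thm: monotonicity for shells} once with $C=C_{t_0}$ and the cycles $(Z_1(t_0),\dots,Z_{\Delta_2}(t_0))$ to all of $\pset=\bigcup_v\pset(v)$, and once for each $t\in T^*$ with $C=C_t$ and $(Z_1(t),\dots,Z_{\Delta_2}(t))$ to $\pset(t)$, to make the paths monotone as required by Property~(\ref{prop4: in own disc contiguous path}) (the relevant disc complements are connected by Observation~\ref{obs: connectivity of shells}). For Property~(\ref{prop3: paths don't touch inner discs}), I would re-route any path that enters a \emph{foreign} inner disc $\tD(t')$ so that it instead travels \emph{around} $\tD(t')$ along one of the $\Delta_1-\Delta_2=\Theta(\Delta)$ buffer cycles $Z_{\Delta_2+1}(t'),\dots,Z_{\Delta_1}(t')$, discarding the few paths for which no free buffer cycle remains; since there are $\Theta(\Delta)$ buffer cycles and only $O(\Delta)$ paths in total, this costs only a constant fraction. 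Finally Property~(\ref{prop-last: contiguously hit disc of t0}) follows by reading off, along $Z_{\Delta_2}(t_0)$, the cyclic order in which the now-monotone (hence contiguous-per-terminal) families $\pset(v)$ meet $Z_{\Delta_2}(t_0)$, and assigning to each $v$ the arc spanned by $\pset(v)$. Carrying the constant-factor and $\log n$ losses through the three stages yields $|\mset^*|=\Omega(\kappa_0/\log n)$ and $p=\Omega(\Delta/\kappa_0)$.

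The step I expect to be the main obstacle is stage~(ii): obtaining spine paths that are \emph{simultaneously} node-disjoint for $\Omega(\kappa_0/\log n)$ distinct terminals while still giving each of them $\Omega(\Delta/\kappa_0)$ paths. A priori the flow out of $C_{t_0}$ could concentrate on a handful of enclosures, and the real work is to convert any such concentration -- equivalently, any sparse cut in the auxiliary network -- into a direct routing of $\Omega(\kappa_0)$ demand pairs, so that the ``either/or'' of the statement is genuinely forced; getting $\qset$ and the well-linkedness to cooperate here is the delicate point. A secondary difficulty is Property~(\ref{prop3: paths don't touch inner discs}): one must argue, using planarity and the way the spine paths emanate from $C_{t_0}$, that the $\Theta(\Delta)$ buffer cycles of a terminal really do suffice to divert all but a constant fraction of the paths that would otherwise pass through its inner disc.
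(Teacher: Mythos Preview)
Your three-stage outline matches the paper's, and stage~(i) is essentially correct. But two of your key attributions are off, and one of them hides a genuine gap.

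First, the ``either/or'' in the statement does not arise from a sparse cut in your flow network. In the paper, the direct routing of $\Omega(\kappa_0)$ pairs comes from a much simpler observation made \emph{before} any flow is set up: if for at least half the pairs $(s,t)\in\mset_0$ the source $s$ lies in $D^*(t)$, then each such pair can be routed inside its own $D^*(t)$ (these discs are disjoint), and we are done. Once this case is ruled out, the flow network \emph{always} succeeds: with capacities $\floor{\kappa_2/2}$ on a super-source feeding $S_2$ and $\floor{\Delta/(2\kappa_2)}$ on a super-source feeding each $V(C_t)$, one exhibits a half-integral flow by sending $1/2$ along each path of $\pset^S$ and $1/(2\kappa_2)$ along each of the $\Delta$ enclosure-to-enclosure paths guaranteed by Theorem~\ref{thm: build enclosures}. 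There is no dichotomy and no bucketing here; the output is a set $\mset_3$ of $\Omega(\kappa_0)$ pairs, each with exactly $\Omega(\Delta/\kappa_0)$ paths.

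Second---and this is the real gap---Property~(\ref{prop-last: contiguously hit disc of t0}) is \emph{not} a consequence of monotonicity. Monotonicity makes each individual path meet $Z_{\Delta_2}(t_0)$ in a single arc, but it says nothing about the \emph{family} $\pset(t)$: two paths of $\pset(t)$ can pass on opposite sides of $D^*(t')$ and hence sandwich a path of $\pset(t')$ on $Z_{\Delta_2}(t_0)$. Your parenthetical ``(hence contiguous-per-terminal)'' is exactly the step that fails. The paper devotes an entire step to this: for each $t$ one forms the disc $D(t,\qset(t))$ bounded by the two extreme paths of $\pset'(t)$, a segment of $R^*\subseteq Z_{\Delta_2}(t_0)$, and $Z_{\Delta_2}(t)$; these discs are nested, so the containment relation is a forest, and one applies Claim~\ref{claim: partition the forest} to extract a level on which the discs are pairwise comparable along a path. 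Within each such path one halves $\pset(t)$ to break the remaining nesting, and a final degree-$1$ conflict graph handles source-versus-destination conflicts. \emph{This} is where the $\log n$ loss in $|\mset^*|$ is incurred---not in stage~(ii) as you conjectured.

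For Property~(\ref{prop3: paths don't touch inner discs}), your buffer-cycle rerouting idea is morally right, but the paper packages it as a potential argument on $A$--$B$ linkages: one repeatedly replaces the current linkage by one using fewer edges outside the buffer cycles $Z_{\Delta_2+1}(t),\dots,Z_{\Delta_1}(t)$, via a Menger-type lemma (Claim~\ref{claim: A-B linkage}) that guarantees such a replacement exists whenever some path still enters a foreign $\tD(t)$.
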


Recall that $\kappa_0=\Theta\left(\frac{\alphawl|\mset^L|}{\tau}\right)$. 
 Letting $p^*=\frac{\kappa_0}{\log n}=\Omega\left(\frac{\alphawl|\mset^L|}{\tau\log n}\right)$,  from Theorem~\ref{thm: find the routing}, we can efficiently find a routing of at least $\min\set{\Omega(p^*),\Omega\left(\frac{\Delta}{p^*\log n}\right)}$ demand pairs, concluding the proof of Theorem~\ref{thm: case 1}. The rest of this section is dedicated to proving Theorem~\ref{thm: crossbar construction}. We do so in two steps. In the first step, we either find a routing of $\Omega(\kappa_0)$ demand pairs in $\mset_0$, or compute a large subset $\mset_3\subseteq \mset_0$ of demand pairs, and a $p$-crossbar with respect to $\mset_3$ (that may not be good). If the outcome of the first step is the former, then we terminate the algorithm and return the resulting routing. Otherwise, in the following step, we modify the set $\mset_3$ and the crossbar in order to obtain the final set $\mset^*$, and a good crossbar with respect to $\mset^*$.

\label{----------------------------------subset:  basic crossbar constr-------------------------------}
\subsection{Constructing a Basic Crossbar}\label{subsec: basic crossbar constr}
In this section we prove the following theorem.

\begin{theorem}\label{thm: basic crossbar construction}
There is an efficient algorithm that either computes a routing of $\Omega(\kappa_0)$ demand pairs in $\mset_0$ via node-disjoint paths, or finds a subset $\mset_3\subseteq\mset_0$ of $\kappa_3=\Omega(\kappa_0)$ demand pairs, and a $p_1$-crossbar for $\mset_3$, for $p_1=\Omega(\Delta/\kappa_0)$.
\end{theorem}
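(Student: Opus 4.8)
The plan is to build the basic crossbar centered at $t_0$ by routing, from each destination terminal $t\in T_0$ (and from $t_0$ itself), a bundle of node-disjoint paths towards $C_{t_0}$, while simultaneously routing a single path from each source vertex in $S_0$ towards $C_{t_0}$, all of these being globally disjoint. The raw material is the set $\qset$ of $\kappa_0$ node-disjoint paths from $S_0$ to $T'$ guaranteed by Theorem~\ref{thm: choosing cluster centers}, together with the well-linkedness of $\tset$ in $G$ and the enclosures $D_t$ from Theorem~\ref{thm: build enclosures}. First I would construct, for every terminal $t\in T_0\cup\set{t_0}$, a shell $\zset(t)$ of depth $\Delta_1=\floor{\Delta/6}$ around $t$; the existence of such shells (and the requisite face $F_t$ with all boundary vertices at $\dface$-distance $\geq \Delta_1+1$ from $C_t$) follows because the relevant terminals are far apart — we are in Case~1, and the set $\mset_0$ was extracted so that its destination terminals lie in distinct light clusters, so after possibly discarding a constant fraction of $\mset_0$ we may assume the destination terminals of $\mset_0$ are pairwise at $\dface$-distance $\geq 5\Delta$, ensuring by Observation~\ref{obs: cycles of shells disjoint} that the discs $D^*(t)=D(Z_{\Delta_1}(t))$ are pairwise disjoint, and also disjoint from all source vertices (again after discarding few pairs, using that each source $s$ has an enclosure $D_s$ and the curve lengths are $O(\Delta)$).

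Next I would use the paths of $\qset$ and well-linkedness to route, for each destination terminal $t\in T'$, a bundle $\pset(t)$ of $p_1=\Omega(\Delta/\kappa_0)$ node-disjoint paths from $C_t$ to $C_{t_0}$. The idea is that around $t_0$ there are $\Delta$ node-disjoint paths to $C_t$ (Theorem~\ref{thm: build enclosures}, since $D_{t_0}\cap D_t=\emptyset$ for these terminals), giving total capacity $\Delta$ that must be shared among the $\kappa_0$ destination terminals; by an averaging / flow argument (a single multicommodity flow from $C_{t_0}$ to $\bigcup_t C_t$ with appropriate capacities, then integral rounding via the vertex-capacitated max-flow), we extract $\Omega(\Delta/\kappa_0)$ disjoint paths to each. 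Simultaneously, from each source $s\in S_0$ we route one path $\pset(s)$ from $s$ to $C_{t_0}$, obtained by concatenating the $\qset$-path from $s$ to its partner in $T'$ with one path of the corresponding bundle $\pset(t)$, and then truncating at the first hit of $C_{t_0}$ — reserving one path of each bundle for this purpose. To make the whole collection globally node-disjoint I would set this up as a single flow instance (source $C_{t_0}$, sinks the $C_t$'s and the sources $S_0$, unit vertex capacities away from $C_{t_0}$), so disjointness is automatic; the bottleneck vertices are exactly those on $C_{t_0}$, which has only $\Delta$ vertices, so I can afford $p_1=\Theta(\Delta/\kappa_0)$ paths per bundle plus one per source as long as $\kappa_0 p_1 + \kappa_0 \le \Delta$, i.e. $p_1\le \Delta/\kappa_0 - 1$.

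The one-path-per-source requirement is the delicate part: if many sources fail to get a path (because the max-flow is short), I would instead invoke the ``find a routing'' escape clause. Concretely, if the natural flow cannot simultaneously supply $\Omega(\Delta/\kappa_0)$ paths to each bundle and one path to each source, this is because a small vertex cut separates $C_{t_0}$ from most sources; but then, using well-linkedness of $\tset$ and the $\qset$-paths (which already connect $S_0$ to $T'$ disjointly), I can argue that $S_0$ and $T'$ together with $C_{t_0}$ admit a direct routing of $\Omega(\kappa_0)$ of the demand pairs of $\mset_0$ without going through $t_0$ at all — roughly, the $\qset$-paths realize a near-perfect matching between $S_0$ and $T'$, and since the destination clusters each contain only one terminal of $T'$, a constant fraction of these can be completed to the actual demand pairs; this yields the first alternative of the theorem. \textbf{The main obstacle I anticipate} is precisely this case analysis: making the flow/cut dichotomy clean, bookkeeping the constant-fraction losses at each discarding step (far-apart destinations, disjoint enclosures, reserving a path per source), and verifying that the capacity budget on $C_{t_0}$ genuinely allows $p_1=\Omega(\Delta/\kappa_0)$ — which forces $\kappa_0 = O(\sqrt{\Delta}\cdot\text{something})$-type constraints to be checked against the definitions $\kappa_0=\Theta(\alphawl|\mset^L|/\tau)$, $\Delta=\ceil{W^{2/19}}$, $\tau=W^{18/19}$, so that the final bound $p_1=\Omega(\Delta/\kappa_0)$ is meaningful (indeed $\Delta/\kappa_0 = \Omega(W^{2/19}\cdot \tau/(\alphawl W)) = \Omega(W^{2/19})/\poly\log$, which is $\gg 1$).
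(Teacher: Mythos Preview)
Your high-level architecture (build shells around $T_0\cup\{t_0\}$, then run a flow to $C_{t_0}$ to get the source paths and the bundles simultaneously) matches the paper, but two steps in your proposal are genuine gaps, not just bookkeeping.

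First, the escape clause is wrong. In the paper the ``route $\Omega(\kappa_0)$ pairs directly'' alternative is triggered \emph{before} any flow is set up: after building the shells and proving the discs $D^*(t)$ are pairwise disjoint, one checks for each $(s,t)\in\mset_0$ whether $s\in D^*(t)$. If at least half the pairs have this property, each such pair is routed inside its own disc $D^*(t)$ (the discs are disjoint, and $G[V(D^*(t))]$ is connected), yielding $\Omega(\kappa_0)$ routed pairs. Your proposed escape---``the $\qset$-paths realize a near-perfect matching between $S_0$ and $T'$, and a constant fraction can be completed to the actual demand pairs''---does not work: the set $T'$ is an auxiliary set of terminals (one per light cluster) that has nothing to do with the mates $T_0$, so there is no way to turn a $\qset$-path $s\leadsto t'\in T'$ into a path $s\leadsto t$ for the actual mate $t$ of $s$. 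The remaining case (most pairs have $s\notin D^*(t)$) is then cleaned up via a conflict graph (out-degree $\le 1$) to get $\mset_1$ with $s\notin D^*(t')$ for all $s\in S_1$, $t'\in T_1$.

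Second, your flow argument is missing the key step that makes it go through. The paper does \emph{not} set up one flow and hope it has large value; it first proves, by a separate Menger/cut argument using the $\qset$-paths together with the $\Delta$ enclosure-to-enclosure paths, that there exist $\lfloor\kappa_1/2\rfloor-1$ node-disjoint paths $\pset^S$ from $S_1$ to $C_{t_0}$. Only then does it build the flow network (global source; one super-source $\tilde s$ of capacity $\lfloor\kappa_2/2\rfloor$ feeding $S_2$; one super-source $s_t$ of capacity $\lfloor\Delta/(2\kappa_2)\rfloor$ for each $t\in T_2$ feeding $V(C_t)$; sink at $C_{t_0}$) and exhibits a \emph{fractional} witness: send $1/2$ along each path of $\pset^S$ and $1/(2\kappa_2)$ along each of the $\Delta$ paths in $\tpset(t)$. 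The two contributions give congestion at most $1/2+1/2=1$ on every unit-capacity vertex, and integrality of flow then yields the disjoint paths. Your ``single flow instance with unit vertex capacities'' has no argument for why the required flow value is achievable; the half-integral combination of the two path families is precisely the missing idea.
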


We construct the crossbar in two steps. Let $S_0,T_0$ be the sets of all source and all destination vertices corresponding to the demand pairs in $\mset_0$, respectively. In the first step, we construct a shell $\zset(t)$ of depth $\Delta_1$ around every terminal $t\in T_0\cup \set{t_0}$, so that their corresponding discs $D^*(t)$ are disjoint. We then select a subset $\mset_1\subseteq \mset_0$ of demand pairs, so that, if we denote by $S_1$ and $T_1$ the sets of all source and all destination vertices of the pairs in $\mset_1$, respectively, then for all $s\in S_1$ and $t'\in T_1\cup \set{t_0}$, $s\not\in D^*(t')$. In the second step, we select a subset $\mset_3\subseteq \mset_1$ of demand pairs, and complete the construction of the crossbar by computing the sets $\pset(v)$ of paths for all $v\in \tset(\mset_3)$.

\paragraph{Step 1: the Shells.}

Consider any terminal $t\in T_0\cup\set{t_0}$, and let $t'\neq t$ be any other terminal in $T_0\cup \set{t_0}$. We let $F_t$ be the face of the drawing of $G$ on the sphere containing the image of $t'$. Notice that from Theorem~\ref{thm: choosing cluster centers}, $t$ and $t'$ lie in different sets of $\xset$, and so, from Theorem~\ref{thm: terminal clusters}, $d(t,t')\geq 5\Delta$. In particular, for every vertex $v$ on the boundary of $F_t$, $\dface(v,t)>\Delta_1+1$. We can now construct a shell $\zset(t)$ of depth $\Delta_1$ around $t$, with respect to $F_t$. 

Additionally, for every terminal $t\in T'$, we construct a shell $\zset(t)$ of depth $\Delta_1$ around $t$ similarly, letting $F_t$ be the face incident on any terminal $t'\in T'\setminus\set{t}$.
This concludes the construction of the shells. We need the following lemma.

\begin{lemma}\label{lem: disc disjointness}
For all $t_1,t_2\in T_0\cup\set{t_0}$, if $t_1\neq t_2$, then $D^*(t_1)\cap D^*(t_2)=\emptyset$. 
Moreover, for all $s\in S_0$, $s\not\in D^*(t_0)$.
\end{lemma}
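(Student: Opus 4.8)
The plan is to derive both statements from the weak triangle inequality (Observation~\ref{obs: weak-triangle-inequality}), Property~(\ref{prop: short curve}) of the shells, and the separation guarantees of Theorem~\ref{thm: terminal clusters} and Theorem~\ref{thm: choosing cluster centers}. Recall that $D^*(t)=D(Z_{\Delta_1}(t))$, and that by Property~(\ref{prop: short curve}), for every vertex $v\in V(Z_h(t))$ (for any $1\le h\le \Delta_1$) there is a $G$-normal curve of length $h+1\le \Delta_1+1$ connecting $v$ to a vertex of $C_t$ that lies inside $D(Z_h(t))$. Since $\Delta_1=\floor{\Delta/6}$, this says in particular that every vertex of $D^*(t)$ lies at $\dface$-distance less than $\Delta_1+1$ from $C_t$ — more precisely, every vertex lying in $D^*(t)$ can be connected to $C_t$ by a $G$-normal curve of length at most $\Delta_1+1$. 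Indeed, by Observation~\ref{obs: connectivity of shells} the graph induced by the vertices in $D^*(t)$ is connected, and the cycles $Z_1(t),\dots,Z_{\Delta_1}(t)$ foliate $D^*(t)\setminus D_t$, so any vertex in $D^*(t)$ sits in the component $R$ of some $D(Z_h(t))\setminus D(Z_{h-1}(t))$ and can be pushed to $C_t$ through the short curves given by Property~(\ref{prop: short curve}) applied successively to $Z_h(t),\dots,Z_1(t)$.

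For the first statement, suppose for contradiction that $t_1\neq t_2\in T_0\cup\set{t_0}$ have $D^*(t_1)\cap D^*(t_2)\neq\emptyset$. Then there is a vertex $v$ lying in both discs, hence there is a $G$-normal curve $\gamma_1$ from $v$ to $C_{t_1}$ of length at most $\Delta_1+1$ and a $G$-normal curve $\gamma_2$ from $v$ to $C_{t_2}$ of length at most $\Delta_1+1$. Concatenating, we obtain a $G$-normal curve connecting a vertex of $C_{t_1}$ to a vertex of $C_{t_2}$ of length at most $2\Delta_1+1\le \Delta/3+1$, so $d(t_1,t_2)\le \Delta/3+1$. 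On the other hand, by Theorem~\ref{thm: choosing cluster centers} the terminals $t_1,t_2\in T_0\cup\set{t_0}$ lie in distinct sets of the partition $\xset$ (the center $t_0$ lies in $X_0$, and all of $T_0$ lies in distinct light sets), so by Theorem~\ref{thm: terminal clusters}, $d(t_1,t_2)\ge 5\Delta$. Since $5\Delta>\Delta/3+1$, this is a contradiction, proving $D^*(t_1)\cap D^*(t_2)=\emptyset$.

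For the second statement, fix $s\in S_0$ and suppose $s\in D^*(t_0)$. By the discussion above, there is a $G$-normal curve from $s$ to $C_{t_0}$ of length at most $\Delta_1+1$, so $d(s,t_0)\le \Delta_1+1$ where we abuse notation slightly to write $d(s,t_0)$ for the distance between the enclosures $C_s$ and $C_{t_0}$ — note $s\in V(D_s)$ so this is at most $\Delta_1+1$ as well. The mate $t$ of $s$ satisfies $t\in T_0$, hence $t$ and $t_0$ lie in distinct sets of $\xset$, so $d(t,t_0)\ge 5\Delta$; but $s$ and $t$ form a demand pair and $s\in X$, $t\in X$ for the \emph{same} light set $X$ (since a demand pair is light with both endpoints in $\xset$, and in Case 1 we took $t$ in a light set; more carefully, we use that $s$ and $t$ are mates and $t$'s set and $s$'s set need not coincide). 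The cleanest route: by Observation~\ref{obs: weak-triangle-inequality}, $d(t,t_0)\le d(t,s)+d(s,t_0)+\Delta/2$. Now $s$ and $t$ are mates, and in the construction of $\mset_0$ (Theorem~\ref{thm: choosing cluster centers}) both $s$ and its mate belong to sets of $\xset$; if they lie in the same set then $d(s,t)\le \Delta_0$, and if in different sets we still only need a crude bound. Here I expect the paper actually guarantees $D_s\cap D_t\neq\emptyset$ or $d(s,t)$ small enough, or else argues directly that $s$ and $t$ are in a common light cluster; assuming $d(s,t)\le\Delta_0$, we get $d(t,t_0)\le \Delta_0+(\Delta_1+1)+\Delta/2<5\Delta$ once we check $\Delta_0+\Delta_1+1+\Delta/2<5\Delta$, which holds for $\Delta$ large since $\Delta_0=\Theta(\Delta\log n)$ — wait, that is too large. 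The honest main obstacle is precisely this: $\Delta_0$ is logarithmically larger than $\Delta$, so the crude bound fails, and one must instead argue that $s$ and its mate $t$ lie in the \emph{same} cluster of $\xset$ (so that the relevant separation is not triggered) or directly exploit that $s\notin D^*(t_0)$ follows because $D^*(t_0)$ has small diameter while $s$'s cluster $X_0$-membership is excluded by Theorem~\ref{thm: choosing cluster centers}'s guarantee that $X_0\cap S_0$ was discarded. I would resolve this by recalling that after discarding the pairs through $t_0$ and through $S_0\cap X_0$, every remaining source $s\in S_0$ has its mate $t\in T_0$ lying in a light set different from $X_0$, and then using that $s$ lies in the \emph{same} set as $t$ is not needed: instead, since $d(t_0, t)\ge 5\Delta$ for $t$ the mate, and $D^*(t_0)$ only contains vertices within $\Delta_1+1\le \Delta/6+1$ of $C_{t_0}$, if $s\in D^*(t_0)$ then routing $s$ to $t$ would be cheap relative to $d(t_0,t)$ only if $d(s,t)$ is large; so the statement as written must be using that $s$ and its mate are close, which holds because mates lie in a common cluster by construction — this is the point to nail down, and it is where I would focus the write-up.
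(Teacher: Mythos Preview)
Your argument has a genuine gap, and it lies in the claim that ``every vertex lying in $D^*(t)$ can be connected to $C_t$ by a $G$-normal curve of length at most $\Delta_1+1$.'' Property~(\ref{prop: short curve}) only provides such a curve for vertices that lie \emph{on} one of the shell cycles $Z_h(t)$. A vertex $v$ in a component $R\in\rset_h$ (sitting in $\dnot(Z_h(t))\setminus D(Z_{h-1}(t))$) need not have any short $G$-normal curve to $C_t$: the component $R$ can be arbitrarily large, and $v$ can be arbitrarily far (in $\dface$) from every cycle of the shell. Your ``foliation'' picture is misleading here --- the annulus between consecutive shell cycles is not thin in the metric sense. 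Consequently, even if you locate a common vertex $v\in D^*(t_1)\cap D^*(t_2)$, you cannot conclude $d(t_1,t_2)\le 2\Delta_1+1$, and the first part collapses.

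The paper takes a completely different route. Since the boundary cycles $Z_{\Delta_1}(t_1)$ and $Z_{\Delta_1}(t_2)$ are disjoint (Observation~\ref{obs: cycles of shells disjoint}), it reduces to the case $D_{t_2}\subseteq D^*(t_1)$, so the entire enclosure $D_{t_2}$ sits inside a single component $R\in\rset_j$ of the shell of $t_1$. Theorem~\ref{thm: curves around CC's} then gives a curve $\gamma(R)$ of length $<\Delta$ whose disc $\eta(R)\subseteq D^*(t_1)$ contains $R$. Now the auxiliary terminal $t_1'$ (the one used to define $F_{t_1}$) has $D_{t_1'}$ outside $D^*(t_1)$, so $\gamma(R)$ separates $D_{t_2}$ from $D_{t_1'}$, yet Theorem~\ref{thm: build enclosures} guarantees $\Delta$ node-disjoint paths between $V(C_{t_2})$ and $V(C_{t_1'})$ --- contradiction. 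The lever is the enclosure connectivity, not a diameter bound on $D^*(t)$.

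For the second statement, your detour through the mate $t$ and the worry about $\Delta_0$ are both unnecessary. After the paper discards the demand pairs involving $t_0$ and the (at most one) terminal in $S_0\cap X_0$, every remaining $s\in S_0$ lies in some set of $\xset\setminus\set{X_0}$, so $d(s,t_0)\ge 5\Delta$ \emph{directly}. From there the paper runs the same enclosure argument: one shows $D_s\subseteq D^*(t_0)$ and derives the same contradiction with the $\Delta$ paths between $C_s$ and $C_{t_0'}$.
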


\begin{proof}
Let $t_1,t_2\in T_0\cup\set{t_0}$ be any two distinct terminals, and assume for contradiction that $D^*(t_1)\cap D^*(t_2)\neq\emptyset$. We need the following observation.
 
 \begin{observation}
  Either $D_{t_2}\subseteq D^*(t_1)$, or $D_{t_1}\subseteq D^*(t_2)$ must hold.
  \end{observation}
  \begin{proof}
  If $D^*(t_2)\subseteq D^*(t_1)$, or $D^*(t_1)\subseteq D^*(t_2)$, then we are done, so assume that neither of these holds.  From Observation~\ref{obs: cycles of shells disjoint}, $Z_{\Delta_1}(t_1)\cap Z_{\Delta_1}(t_2)=\emptyset$. Since $D^*(t_1)\cap D^*(t_2)\neq\emptyset$, the only other possibility is that the union of the discs $D^*(t_1)$, $D^*(t_2)$ is the entire sphere, and the boundary of each disc is contained in the other disc. In particular, $Z_{\Delta_1}(t_2)\subseteq D^*(t_1)$. 
  We claim that all vertices of $C_{t_2}$ must lie in disc $D^*(t_1)$. Indeed, assume for contradiction that some vertex $v\in V(C_{t_2})$ does not lie in $D^*(t_1)$. Since all vertices of $Z_{\Delta_1}(t_2)$ lie in $D^*(t_1)$, there is a $G$-normal curve $\gamma$ of length at most $\Delta/2+\Delta_1+2$, connecting $v$ to some vertex of $Z_{\Delta_1}(t_2)$, and this curve must contain some vertex $v'\in Z_{\Delta_1}(t_1)$. But then there is a $G$-normal curve of length at most $\Delta_1+1$ connecting $v'$ to a vertex of $C_{t_1}$, so $d(t_1,t_2)\leq 2\Delta_1+\Delta/2+2<5\Delta$, a contradiction. Therefore, all vertices of $C_{t_2}$ appear in $D^*(t_1)$. 
  
Let $t_1'$ be the terminal we have used to define the face $F_{t_1}$, so $t_1'$ lies outside $D^*(t_1)$.  From Property~(\ref{prop: Yt out}) of the shells, it is easy to see that $D_{t_1'}$ is disjoint from $D^*(t_1)$.  If $D_{t_2}\not\subseteq D^*(t_1)$, then $t_1'$ must lie inside $D_{t_2}$. Moreover, since $D_{t_1'}\cap D^*(t_1)=\emptyset$, we get that $D_{t_1'}\subsetneq D_{t_2}$, contradicting the construction of enclosures. 
 \end{proof}
 
 We assume w.l.o.g. that $D_{t_2}\subseteq D^*(t_1)$. 
Since for all $0\leq j,j'\leq \Delta_1$, $Z_j(t_1)\cap Z_{j'}(t_2)=\emptyset$, there must be some index $1\leq j\leq \Delta_1$, such that $D_{t_2}\subseteq \dnot(Z_j(t_1))\setminus D(Z_{j-1}(t_1))$. 
Let $U_j$ be the set of all vertices of $G$ lying in $\dnot(Z_j(t_1))\setminus D(Z_{j-1}(t_1))$, and let $\rset_j$ be the set of all connected components in $G[U_j]$. 
Since $V(D_{t_2})$ induces a connected graph in $G$, there is some connected component $R\in \rset_j$, such that $V(D_{t_2})\subseteq V(R)$. From Theorem~\ref{thm: curves around CC's}, there is a $G$-normal curve $\gamma(R)$ of length at most $2j+\Delta/2+1\leq 2\Delta_1+\Delta/2+1<\Delta$, such that the disc $D(\gamma(R))$ contains $R$, and $D(\gamma(R))\subseteq D^*(t_1)$. 

Let $t_1'$ be the terminal we have used to define the face $F_{t_1}$, so $t_1'$ lies outside $D^*(t_1)$ and $D_{t_1'}$ is disjoint from $D^*(t_1)$. 
Curve $\gamma(R)$ then separates $D_{t_1'}$ from $D_{t_2}$,  and clearly $D_{t_1'}\cap D_{t_2}=\emptyset$. But from our construction of enclosures, 
 there is a set of $\Delta$ node-disjoint paths connecting $V(C_{t_2})$ to $V(C_{t_1'})$, all of which must intersect $\gamma(R)$, contradicting the fact that $\gamma(R)$ contains fewer than $\Delta$ vertices.
 

 Assume now for contradiction that some vertex $s\in S_0$ lies in $D^*(t_0)$. Recall that  $s$ must lie in some set of $\xset\setminus\set{X_0}$ (where $X_0$ is the set containing $t_0$), and so $d(s,t_0)\geq 5\Delta$. Therefore, for all $1\leq j\leq \Delta_1$, $V(C_s)\cap V(Z_j(t_0))=\emptyset$ must hold. Moreover, $D^*(t_0)\not\subseteq D_s$, since otherwise, from the definition of enclosures, $D_{t_0}=D_s$ and $d(t_0,s)=1$, a contradiction. Since $D^*(t_0)\cap D_s\neq \emptyset$, we get that $D_s\subseteq D^*(t_0)$. The rest of the proof follows the same reasoning as before.\end{proof}

We say that a demand pair $(s,t)\in \mset_0$ is a type-1 pair if $s\in D^*(t)$, and we say that it is a type-2 demand pair otherwise. 
Notice that we can route all type-1 demand pairs via node-disjoint paths, where each pair $(s,t)$ is routed in $G[V(D^*(t))]$ (since from Observation~\ref{obs: connectivity of shells}, this graph is connected).
Therefore, if at least half the demand pairs in $\mset_0$ are type-1 pairs, then we obtain a routing of $\Omega(\kappa_0)$ demand pairs and terminate the algorithm. We assume from now on that at least $\kappa_0/2$ demand pairs in $\mset_0$ are type-2 demand pairs. We next build a directed conflict graph $H$, that contains a vertex $v(s,t)$ for each type-2 demand pair $(s,t)\in \mset_0$. There is a directed edge from $v(s,t)$ to $v(s',t')$ iff $s\in D^*(t')$. Since the discs $\set{D^*(t)}_{t\in T_0}$ are mutually disjoint, the out-degree of every vertex of $H$ is at most $1$, and the total average degree of every vertex in $H$ is at most $2$. Since $|V(H)|\geq \kappa_0/2$, using standard techniques, we can find an independent set $I$ of $\kappa_1=\Omega(\kappa_0)$ vertices in $H$. We let $\mset_1=\set{(s,t)\in \mset_0\mid v(s,t)\in I}$. Let $S_1$ and $T_1$ be the sets of source and destination vertices, respectively, of the demand pairs in $\mset_1$. Then for all $t,t'\in T_1$, $D^*(t)\cap D^*(t')=\emptyset$, and for all $s\in S_1,t'\in T_1$, $s\not\in D^*(t')$.

\paragraph{Step 2: the Paths.}

Recall that $\kappa_1=\Omega(\kappa_0)=\Theta\left(\frac{\alphawl |\mset|}{\tau}\right )=\Theta \left(\frac{W^{1/19}}{\log k}\right )$, while $\Delta=\Theta(W^{2/19})$. We will assume throughout that $\kappa_1<\Delta/12$, as otherwise $\kappa_0$ is bounded by some constant, and routing a single demand pair is sufficient.
The following lemma will be used in order to compute the set $\mset_3\subseteq \mset_1$ of the demand pairs.

\begin{lemma}
There is a set $\pset^S$ of $\floor{\frac{\kappa_1}2}-1$ node-disjoint paths, connecting vertices of $S_1$ to vertices of $C_{t_0}$.
\end{lemma}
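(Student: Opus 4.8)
The plan is to prove the lemma via the vertex form of Menger's theorem: it suffices to show that no set $Y\subseteq V(G)$ with $|Y|\le \lfloor \kappa_1/2\rfloor-2$ can meet every path connecting $S_1$ to $V(C_{t_0})$ in $G$. So I would suppose for contradiction that such a $Y$ exists. The two ingredients driving the argument are the set $\qset$ of $\kappa_0$ node-disjoint paths connecting $S_0$ to $T'$ produced by Theorem~\ref{thm: choosing cluster centers}, and the disjoint-boundary property from Theorem~\ref{thm: build enclosures}: whenever $D_t\cap D_{t'}=\emptyset$ there are $\Delta$ node-disjoint paths between $V(C_t)$ and $V(C_{t'})$.

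First I would restrict $\qset$ to the sub-collection $\qset_1=\set{Q_i}$ of those paths that originate at a vertex of $S_1\subseteq S_0$. Since $\qset$ is node-disjoint with one path per vertex of $S_0$, this yields $\kappa_1$ node-disjoint paths, where $Q_i$ connects $s_i\in S_1$ to a distinct terminal $t_i\in T'$. The heart of the proof is to locate one ``good'' index $i$ satisfying all of: (a) $Y\cap Q_i=\emptyset$; (b) $Y\cap V(D_{t_i})=\emptyset$; and (c) $D_{t_i}\cap D_{t_0}=\emptyset$. For (a): the $Q_i$ are node-disjoint, so $Y$ meets at most $|Y|$ of them, leaving at least $\kappa_1-|Y|$ indices. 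For (c): the $t_i$ lie in pairwise distinct light clusters of $\xset$ (since $|T'\cap X|\le 1$ for every light set $X$, and every terminal of $T^L$ lies in a light set), so at most one $t_i$ lies in the cluster $X_0$ containing $t_0$; for every other $t_i$ we have $d(t_i,t_0)\ge 5\Delta$ by Theorem~\ref{thm: terminal clusters}, hence $D_{t_i}\cap D_{t_0}=\emptyset$ (as $D_t\cap D_{t'}\ne\emptyset$ would force $d(t,t')=1$). For (b): the enclosures $D_{t_i}$ are pairwise disjoint for the same reason, so $Y$ can meet at most $|Y|$ of them. Combining, at least $\kappa_1-2|Y|-1\ge 1$ good indices survive, using $|Y|\le\lfloor\kappa_1/2\rfloor-2$.

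Finally I would extract the contradiction from a good index $i$. Since $Y\cap Q_i=\emptyset$, the vertex $s_i$ reaches $t_i$ in $G\setminus Y$; since $t_i\in V(D_{t_i})$, the graph induced by $V(D_{t_i})$ is connected, and $Y\cap V(D_{t_i})=\emptyset$, it follows that $s_i$ reaches every vertex of $V(C_{t_i})$ in $G\setminus Y$. On the other hand, $Y$ meets every path from $S_1$ to $V(C_{t_0})$, so $s_i$ reaches no vertex of $V(C_{t_0})$ in $G\setminus Y$; hence $Y$ separates $V(C_{t_i})$ from $V(C_{t_0})$ in $G$. But by Theorem~\ref{thm: build enclosures}, applicable because $D_{t_i}\cap D_{t_0}=\emptyset$, there are $\Delta$ node-disjoint paths between $V(C_{t_i})$ and $V(C_{t_0})$, each of which must contain a vertex of $Y$; therefore $|Y|\ge\Delta$, contradicting $|Y|\le\lfloor\kappa_1/2\rfloor-2<\kappa_1<\Delta/12$. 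The step I expect to require the most care is the index-counting in the middle paragraph: one must verify that the three ``bad'' events (the cut hitting $Q_i$, the cut hitting $D_{t_i}$, and $t_i$ sharing a cluster with $t_0$) destroy few enough indices that a good one remains under the exact bound $|Y|\le\lfloor\kappa_1/2\rfloor-2$; the rest is a routine assembly of the structural theorems already proved.
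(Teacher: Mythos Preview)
Your proof is correct and follows essentially the same approach as the paper's: both assume a small separator $Y$ via Menger, restrict $\qset$ to the $\kappa_1$ paths starting in $S_1$, count to find a path $Q_i$ avoiding $Y$ whose endpoint $t_i\in T'$ lies in a cluster different from $X_0$ with $D_{t_i}$ also avoiding $Y$, and then exploit the $\Delta$ node-disjoint $V(C_{t_i})$--$V(C_{t_0})$ paths guaranteed by the enclosure construction. The only cosmetic difference is the final step: the paper picks one of those $\Delta$ paths that misses $Y$ and explicitly builds a $Y$-free $s_i$--$C_{t_0}$ path, whereas you argue the contrapositive ($Y$ would have to separate $V(C_{t_i})$ from $V(C_{t_0})$, forcing $|Y|\ge\Delta$); the logic is the same.
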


\begin{proof}
Assume otherwise. Then there is a set $Y$ of at most $\floor{\frac{\kappa_1}2}-2$ vertices, such that $G\setminus Y$ contains no path from a vertex of $S_1\setminus Y$ to a vertex of $V(C_{t_0})\setminus Y$.

Let $\qset'\subseteq \qset$ be a subset of the paths, computed in Theorem~\ref{thm: choosing cluster centers}, that connect every terminal $s\in S_1$ to some terminal of $T'$, so $|\qset'|=\kappa_1$. Let $\qset''\subseteq \qset'$ be the set of paths that contain no vertices of $Y$, so $|\qset''|\geq \kappa_1/2+2$. Recall that the terminals of $T'$ all belong to distinct sets of $\xset$, and so the discs $\set{D_{t'}}_{t'\in T'}$ are mutually disjoint. Moreover, there is at most one terminal $t''$ in $T'\cap X_0$, where $X_0\in \xset$ is the set containing $t_0$. Then there must be at least one path $Q\in \qset''$, such that, if we denote by $s\in S_1$, $t'\in T'$ its two endpoints, then $t'\not\in X_0$, and $D_{t'}$ contains no vertex of $Y$. We now show that there is a path connecting $s$ to a vertex of $C_{t_0}$ in $G\setminus Y$, leading to a contradiction.

From the construction of enclosures, there is a set $\pset(t',t_0)$ of $\Delta$ node-disjoint paths, connecting the vertices of $C_{t_0}$ to the vertices of $C_{t'}$. Since we have assumed that $\kappa_1<\Delta/12$, at least one such path, say $P$, is disjoint from $Y$.  Let $v'$ be the endpoint of $P$ lying on $C_{t'}$. Assume first that $s\not\in D_{t'}$. Then $Q$ must contain a vertex of $C_{t'}$, that we denote by $v$.  Since $G[V(D_{t'})]$ is connected, there is a path $P'\subseteq G[V(D_{t'})]$ connecting $v$ to $v'$, and this path is disjoint from $Y$. The union of $Q,P$ and $P'$ then contains a path connecting $s$ to a vertex of $V(C_{t_0})$, which is disjoint from $Y$, a contradiction. 

Assume now that $s\in D_{t'}$. Then there is a path $P'\subseteq G[V(D_{t'})]$, connecting $s$ to $v'$. The union of $P$ and $P'$ then contains a path connecting  $s$ to a vertex of $V(C_{t_0})$, which is disjoint from $Y$, a contradiction. 
\end{proof}

Let $S_2\subseteq S_1$ be the set of the source vertices where the paths of $\pset^S$ originate, let $T_2\subseteq T_1$ be the set of their mates,  and let $\mset_2\subseteq \mset_1$ be the set of the demand pairs in which the vertices of $S_2$ participate. Notice that from our choice of $t_0$, there is at most one demand pair $(s,t)\in \mset_1$, where $t$ belongs to the subset $X_0\in \xset$ containing $t_0$. If such a pair belongs to $\mset_1$, then we discard it from $\mset_1$, and update $S_1$ and $T_1$ accordingly. Let $\kappa_2=|\mset_2|$, so $\kappa_2\geq \kappa_1/2-3\geq \Omega(\kappa_0)$. We now use the following theorem to compute the subset $\mset_3$ of the demand pairs and construct the corresponding paths for the crossbar.

\begin{theorem}\label{thm: paths for basic crossbar}
There is an efficient algorithm to compute a subset $\mset_3\subseteq\mset_2$ of $\floor{\kappa_2/2}$ demand pairs, and a $p$-crossbar for $\mset_3$, where $p=\Omega(\Delta/\kappa_2)$.
%
\end{theorem}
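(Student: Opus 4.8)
The plan is to reduce the construction of the path families to a single maximum-flow computation, and to use the enclosure structure to certify that the flow is large. From Step~1 we already have shells $\zset(t)$ of depth $\Delta_1$ around every $t\in T_0\cup\set{t_0}$, and by Lemma~\ref{lem: disc disjointness} the discs $D^*(t)$ over $t\in T_0\cup\set{t_0}$ are mutually disjoint while $s\notin D^*(t')$ for all $s\in S_1$, $t'\in T_1\cup\set{t_0}$. Hence, for \emph{any} $\mset_3\sse\mset_2$, the shells around the destinations of $\mset_3$ together with $\zset(t_0)$ automatically satisfy the first requirement in the definition of a $p$-crossbar, so it will suffice to produce, for a suitable $\mset_3\sse\mset_2$ of size $\floor{\kappa_2/2}$, the families $\pset(v)$ with the claimed disjointness and the right cardinalities.

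To do this I would fix $p_1=\floor{\Delta/(4\kappa_2)}$ (so $p_1=\Omega(\Delta/\kappa_2)$ and $\kappa_2p_1\le\Delta/4$, using $\kappa_2<\Delta/12$), and build a vertex-capacitated flow network $\nset$: bi-direct $G$ with unit vertex capacities; add a super-source $\alpha$ with a capacity-$1$ edge to each $s\in S_2$; for each $t\in T_2$ add a node $\mu_t$ of capacity $p_1$ with an edge $(\alpha,\mu_t)$ and edges $(\mu_t,v)$ for all $v\in V(C_t)$; and add a super-sink $\beta$ reached from every $v\in V(C_{t_0})$. Trivially $\kappa_2(1+p_1)$ bounds the flow from above; the key claim is that the maximum $\alpha$–$\beta$ flow is at least $\kappa_2p_1+\kappa_2/2$. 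The main work — and what I expect to be the main obstacle — is the matching min-cut lower bound. An $\alpha$–$\beta$ vertex cut is described by a set $Y\sse V(G)$ together with a set of removed nodes $\mu_t$, of weight $|Y|+p_1\cdot\size{\set{t:\mu_t\text{ removed}}}$. If $|Y|\ge\Delta/2$ this already exceeds $\kappa_2/2+\kappa_2p_1$. Otherwise $|Y|<\Delta/2<\Delta$, and I would argue that then \emph{every} $\mu_t$ with $t\in T_2$ must be removed: if $\mu_t$ survived, validity of the cut would force $Y$ to separate $V(C_t)$ from $V(C_{t_0})$ inside $G\setminus Y$, but $d(t,t_0)\ge5\Delta>1$ gives $D_t\cap D_{t_0}=\emptyset$, so Theorem~\ref{thm: build enclosures} yields $\Delta$ node-disjoint $V(C_t)$–$V(C_{t_0})$ paths in $G$, each of which must meet $Y$, contradicting $|Y|<\Delta$. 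Thus the removed $\mu_t$'s contribute $p_1\kappa_2$; and since, by the same kind of reasoning, validity of the cut forces $G\setminus Y$ to contain no $S_2$–$V(C_{t_0})$ path, $Y$ must meet each of the $\ge\kappa_2$ node-disjoint paths supplied by $\pset^S$, so $|Y|\ge\kappa_2$ and the cut weight is at least $\kappa_2+p_1\kappa_2$. Either way the cut is large enough, proving the claim.

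Finally I would take a flow of value $F\ge\kappa_2p_1+\kappa_2/2$ and compute a path decomposition. Each flow-path either starts with an edge $(\alpha,s)$ (a \emph{source-path}, ending on $V(C_{t_0})$) or with $(\alpha,\mu_t)$ followed by a vertex $v\in V(C_t)$ (a \emph{$t$-path}, running from $V(C_t)$ to $V(C_{t_0})$), and, after truncating each path at its first vertex of $V(C_{t_0})$ and, for a $t$-path, at its last vertex of $V(C_t)$, these paths are node-disjoint. Let $A\sse S_2$ be the routed sources, let $n_t\le p_1$ be the number of $t$-paths, and let $B=\set{t\in T_2:n_t=p_1}$; writing $F_S=|A|$ and $F_T=\sum_{t\in T_2}n_t$, we have $F_S+F_T=F$ and $F_T\le p_1|B|+(p_1-1)(\kappa_2-|B|)$, hence $|B|\ge F_T-p_1\kappa_2+\kappa_2$ and $|A|+|B|\ge F-p_1\kappa_2+\kappa_2$. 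Since $\mset_2$ is a matching, $\mset_3:=\set{(s,t)\in\mset_2:\ s\in A,\ t\in B}$ has $|\mset_3|\ge|A|+|B|-\kappa_2\ge F-p_1\kappa_2\ge\kappa_2/2$; discard pairs until $|\mset_3|=\floor{\kappa_2/2}$. For each $(s,t)\in\mset_3$ set $\pset(s)$ to be the source-path out of $s$ and $\pset(t)$ to be the $p_1$ $t$-paths of $t$; together with the Step~1 shells this is a $p_1$-crossbar for $\mset_3$ with $p_1=\Omega(\Delta/\kappa_2)$, as required.
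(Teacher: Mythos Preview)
Your proof is correct and follows the same high-level approach as the paper: set up a vertex-capacitated flow network in $G$ so that an integral max flow decomposes into the desired disjoint families $\pset(s)$ and $\pset(t)$. The paper's implementation differs in two places. First, its network inserts an extra bottleneck vertex $\tilde s$ of capacity $\floor{\kappa_2/2}$ between the super-source and $S_2$, and sets each $\mu_t$-capacity to $\floor{\Delta/(2\kappa_2)}$; any integral flow saturating these source-side capacities then automatically selects exactly $\floor{\kappa_2/2}$ sources $S_3$ and gives \emph{every} $t\in T_2$ the full quota of paths, so there is no need for your inclusion--exclusion extraction of $\mset_3$. Second, the paper certifies the flow value on the primal side, by exhibiting a feasible fractional flow directly --- half a unit on each path of $\pset^S$, and $\tfrac{1}{2\kappa_2}$ units on each of the $\Delta$ enclosure paths $\tpset(t)$ for every $t\in T_2$ --- rather than on the dual side via your min-cut case analysis. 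Both certificates rest on the same two ingredients ($\pset^S$ and the $\Delta$ node-disjoint $V(C_t)$--$V(C_{t_0})$ paths from Theorem~\ref{thm: build enclosures}), so the difference is one of packaging rather than substance.
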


\begin{proof}
Consider any terminal $t\in T_2$, and let $X_0\in\xset$ be the set containing $t_0$. Since $t\not\in X_0$, $d(t,t_0)\geq 5\Delta$, and in particular, $D_t\cap D_{t_0}=\emptyset$. Therefore, there is a set $\tpset(t)$ of $\Delta$ node-disjoint paths, connecting the vertices of $V(C_t)$ to the vertices of $V(C_{t_0})$.

We construct a directed flow network $\nset$, by starting from $G$, and bi-directing each of its edges.
We then introduce several new vertices: a global source vertex $s^*$ of infinite capacity; an additional vertex vertex $\tilde s$ of capacity $\floor{\kappa_2/2}$, and, for each terminal $t\in T_2$ a vertex $s_t$ of capacity $\floor{\Delta/(2\kappa_2)}$. We connect $s^*$ to $\tilde s$, and to every vertex in $\set{s_t}_{t\in T_2}$. Vertex $\tilde s$ in turn connects to every vertex $s\in S_2$, and for each $t\in T_2$, vertex $s_t$  connects to every vertex in $V(C_t)$. Finally, we introduce a global destination vertex $t^*$ of infinite capacity, and connect every vertex in $V(C_{t_0})$ to it.  All vertices whose capacities have not been set so far have capacity $1$. 

Let $B$ be the total capacity of all vertices in $\set{\tilde s}\cup\set{s_t}_{t\in T_2}$. It is easy to see that there is a flow of value at least $B$ in the resulting network: we send $\half$ flow unit along each path in set $\pset^S$ (discarding one path if $|S_2|$ is odd), and $\frac{1}{2\kappa_2}$ flow units on each path in $\bigcup_{t\in T_2}\tpset(t)$ (if the flow through some vertex $s_t$ is too high due to the rounding of the capacities down, we simply lower the flow on one of the corresponding paths in $\tpset(t)$). Since the paths in $\pset^S$ are node-disjoint, the total flow on these paths causes congestion at most $\half$ on the vertices whose capacity is $1$. Since for each $t\in T_2$, the paths in $\tpset(t)$ are node-disjoint, the total congestion caused by the flow on paths in $\bigcup_{t\in T_2}\tpset(t)$ on vertices whose capacity is $1$ is at most $1/2$. Therefore, we obtain a valid $s^*$-$t^*$ flow of value $B$. From the integrality of flow, there is an integral flow of the same value.

 This flow gives a collection $\pset^1$ of $\floor{\kappa_2/2}$ paths connecting some vertices of $S_2$ to some vertices of $V(C_{t_0})$, and, for each $t\in T_2$, a collection $\pset'(t)$ of $\floor{\frac{\Delta}{2\kappa_2}}$ paths connecting the vertices of $V(C_t)$ to the vertices of $V(C_{t_0})$, such that all paths in $\pset^1\cup \set{\pset'(t)\mid t\in T_2}$ are mutually node-disjoint. We let $S_3\subseteq S_2$ be the set of the source vertices where the paths of $\pset^1$ originate. For each $s\in S_3$, let $P(s)\in \pset^1$ be the unique path originating at $s$, and let $\pset(s)=\set{P(s)}$. Let $\mset_3\subseteq \mset_2$ contain all demand pairs whose source belongs to $S_3$, and let $T_3\subseteq T_2$ be the set of the corresponding destination vertices. Then $|\mset_3|=\floor{\kappa_2/2}$, and all paths in set $\pset'=\bigcup_{v\in S_3\cup T_3}\pset(v)$ are mutually disjoint. For each terminal $t\in T_3$, set $\pset(t)$ contains $\Omega(\Delta/\kappa_2)$ paths, as required.
\end{proof}

This concludes the proof of Theorem~\ref{thm: basic crossbar construction}. Since $p_1=\Omega(\Delta/\kappa_0)$ and $\kappa_3\leq \kappa_0$, we will assume without loss of generality that $p_1\cdot \kappa_3<\Delta_2/24$ (otherwise we can lower the value of $p_1$ until the inequality holds and discard the appropriate number of paths from sets $\pset(t)$ for $t\in T_3\cup \set{t_0}$).

\label{-------------------------------------subsec: good crossbar--------------------------------------}
\subsection{Constructing a Good Crossbar}\label{subsec: good crossbar}
In this section, we complete the proof of Theorem~\ref{thm: crossbar construction}, by modifying the basic crossbar constructed in the previous section, in order to turn it into a good crossbar. Let $\left (\bigcup_{t\in T_3\cup\set{t_0}}\zset(t),\bigcup_{v\in S_3\cup T_3}\pset(v)\right )$ be the crossbar constructed in the previous section. 

For every terminal $t\in T_3$, let $A_t\subseteq V(C_t), B_t\subseteq V(C_{t_0})$ be the sets of $p_1$ vertices each, where the paths of $\pset(t)$ originate and terminate, respectively. For a source vertex $s\in S_3$, we let $A_s$ contain a single vertex $s$, and $B_s$ contain the vertex of $V(C_{t_0})$ where the unique path in $\pset(s)$ terminates. Let $A=\bigcup_{v\in S_3\cup T_3}A_v$ and $B=\bigcup_{v\in S_3\cup T_3}B_v$. Then $|A|=|B|=|\pset|=|\mset_3|(p_1+1)<\Delta_2/6$ from our assumption.

\begin{definition}
Given two equal-sized disjoint subsets $U_1,U_2$ of vertices of $G$, a $U_1$--$U_2$ linkage is a set of $|U_1|$ node-disjoint paths, connecting the vertices of $U_1$ to the vertices of $U_2$.
\end{definition}

The following observation is immediate from the definition of a crossbar.

\begin{observation}\label{obs: linkage to crossbar}
Let $\pset'$ be any $A$--$B$ linkage in $G$. Then $\left (\bigcup_{t\in T_3\cup\set{t_0}}\zset(t),\pset' \right )$ is a $p_1$-crossbar for $\mset_3$. 
\end{observation}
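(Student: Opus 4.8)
The plan is to unwind the definition of a $p_1$-crossbar for $\mset_3$ (instantiating $S^{*}=S_3$, $T^{*}=T_3$, $p=p_1$) and check each clause. The shell requirements do not involve $\pset'$ at all, so I would dispense with them first: the shells $\zset(t)$ of depth $\Delta_1$ around the terminals of $T_3\cup\set{t_0}\subseteq T_0\cup\set{t_0}$ were constructed in Step~1; Lemma~\ref{lem: disc disjointness} gives $D^{*}(t)\cap D^{*}(t')=\emptyset$ for all distinct $t,t'\in T_0\cup\set{t_0}$ and $s\not\in D^{*}(t_0)$ for every $s\in S_0\supseteq S_3$; and the independent-set argument that produced $\mset_1$ (hence its subset $\mset_3$) guarantees $s\not\in D^{*}(t')$ for all $s\in S_3$ and $t'\in T_3$. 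Taken together, these are exactly the disc-disjointness and non-containment conditions demanded of $\set{\zset(t)}_{t\in T_3\cup\set{t_0}}$ relative to $S_3$.

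Second, I would build the path collections directly from $\pset'$. Recall that $A$ is partitioned into the sets $A_v$, $v\in S_3\cup T_3$, where $A_s=\set{s}$ for $s\in S_3$ and $A_t\subseteq V(C_t)$ has cardinality $p_1$ for $t\in T_3$, while $B\subseteq V(C_{t_0})$. Since $\pset'$ is an $A$--$B$ linkage, it is a set of $|A|$ node-disjoint paths, each having exactly one endpoint in $A$, and these endpoints exhaust $A$; so setting $\pset(v)=\set{P\in\pset':\ \text{the }A\text{-endpoint of }P\text{ lies in }A_v}$ partitions $\pset'$. For $s\in S_3$ this yields a single path from $s$ to a vertex of $B\subseteq V(C_{t_0})$; for $t\in T_3$ it yields exactly $p_1$ paths, each from a vertex of $A_t\subseteq V(C_t)$ to a vertex of $V(C_{t_0})$; and all paths in $\bigcup_v\pset(v)=\pset'$ are node-disjoint. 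This is precisely the list of conditions the crossbar definition imposes on the path sets.

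There is no real obstacle here; the only point worth stating explicitly is that the definition of a (basic) $p_1$-crossbar places no constraint tying the paths to the shells — monotonicity with respect to the cycles of the shells, and disjointness of each $\pset(v)$ from the inner discs $\tilde D(t)$, are features demanded only of a \emph{good} crossbar — so an arbitrary $A$--$B$ linkage does the job. Hence the observation is immediate once both definitions are spelled out.
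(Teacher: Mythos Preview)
Your proposal is correct and matches the paper's approach: the paper states the observation is ``immediate from the definition of a crossbar'' and gives no further argument, so your unwinding of the definition is exactly what is intended. Your explicit note that the basic crossbar definition imposes no shell-path interaction conditions (those belong to a \emph{good} crossbar) is the only subtlety, and you handle it correctly.
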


Our algorithm consists of three steps. In the first step, we re-route the paths in $\pset$, so that they become disjoint from the relevant inner shells, ensuring Property~(\ref{prop3: paths don't touch inner discs}). In the following step, the paths in $\pset$ are further re-routed to ensure their monotonicity with respect to the relevant inner shells, obtaining Property~(\ref{prop4: in own disc contiguous path}). The set $\mset_3$ of the demand pairs remains unchanged in these two steps. In the last step, we carefully select a final subset $\mset^*\subseteq \mset_3$ of the demand pairs to ensure Property~(\ref{prop-last: contiguously hit disc of t0}). We discard some paths from set $\pset$, but the paths themselves are not modified at this step.

\subsection*{Step 1: Disjointness of Paths from Inner Shells}
The goal of this step is to modify the paths in set $\pset$, in order to ensure Property~(\ref{prop3: paths don't touch inner discs}).
In fact, we will ensure a slightly stronger property that we will use in subsequent steps.

Suppose we are given any $A$--$B$ linkage $\pset'$. For every terminal $t\in T_3$, we let $\pset'(t)\subseteq \pset'$ be the set of paths originating from the vertices of $A_t$, and for every source vertex $s\in S_3$, we let $\pset'(s)$ contain the unique path of $\pset'$ originating at $s$.
We will always view the paths in $\pset'$ as directed from $A$ to $B$. Consider now any path $P\in \pset'$. Clearly, path $P$ has to cross $Z_{\Delta_1}(t_0)$. Let $v_P$ be the last vertex on $P$ that lies on $Z_{\Delta_1}(t_0)$, and let $e_P$ be the edge immediately preceding $v_P$ on $P$. Let $v'_P$ be the other endpoint of edge $e_P$. Then $P\setminus\set{e_P}$ consists of two disjoint paths, that we denote by $P_1$ and $P_2$, respectively, where $P_1$ starts at a vertex of $V(C_t)$ and terminates at $v'_P$, and $P_2$ starts at $v_P$ and terminates at a vertex of $V(C_{t_0})$.

We then denote $A'=\set{v'_P\mid P\in \pset'}$, $B'=\set{v_P\mid P\in \pset'}$, and $\tilde E=\set{e_P\mid P\in \pset'}$. We also denote $\pset'_1=\set{P_1\mid P\in \pset'}$ and $\pset'_2=\set{P_2\mid P\in \pset'}$. Notice that $\pset'_1$ is an $A$--$A'$ linkage, and $\pset'_2$ is a $B'$--$B$ linkage. Observe that for every terminal $t\in T_3\cup \set{t_0}$, all vertices of $A'\cup B'$ must lie outside $D(Z_{\Delta_1-1}(t))$. Also, from our definitions, every path in $\pset'_2$ is contained in $D^*(t_0)$.

\begin{remark}
We note that the definitions of the sets $A',B'$ of vertices and the set $\tilde E$ of edges depend on the  $A$--$B$ linkage $\pset'$. When we modify the linkage, these sets may change as well.
\end{remark}

We are now ready to define good $A$--$B$ linkages.

\begin{definition}
We say that an $A$--$B$ linkage $\pset'$ is a \emph{good linkage}, iff:

\begin{itemize}
\item All paths in $\pset'_1$ are disjoint from $\tD(t_0)$; and

\item For every $t\in T_3$, if some path $P\in \pset'_1\cup \pset_2'$ intersects $\tD(t)$, then $P\in\pset'_1$, and it originates at a vertex of $A_t$.
\end{itemize}
\end{definition}

Notice that if $\pset'$ is a good $A$--$B$ linkage, then $\left (\bigcup_{t\in T_3\cup\set{t_0}}\zset(t),\pset' \right )$ is a $p_1$-crossbar for $\mset_3$ that has Property~(\ref{prop3: paths don't touch inner discs}). The goal of this step is to prove the following theorem.

\begin{theorem}\label{thm: basic crossbar construction+first property}
There is an efficient algorithm that computes a good $A$--$B$ linkage $\pset'$.
\end{theorem}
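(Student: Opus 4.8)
The plan is to start from the basic crossbar of Theorem~\ref{thm: basic crossbar construction}, take the $A$--$B$ linkage $\pset$ it provides (which by Observation~\ref{obs: linkage to crossbar} is automatically a $p_1$-crossbar for $\mset_3$), and re-route it so that it becomes a \emph{good} linkage. The key observation is that the inner discs $\set{\tD(t)}_{t\in T_3\cup\set{t_0}}$ are pairwise disjoint (since $D^*(t)\cap D^*(t')=\emptyset$ by Lemma~\ref{lem: disc disjointness} and $\tD(t)\subsetneq D^*(t)$), and each is quite small: the cycle $Z_{\Delta_2}(t)$ has $\Delta_2$ vertices, and by Property~(\ref{prop: short curve}) every vertex of $V(D(Z_{\Delta_2}(t)))$ can be connected to $V(C_t)$ by a short $G$-normal curve, so the total ``budget'' needed to route around any one inner disc is $O(\Delta_2)$ per path. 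Since $|\pset| = |\mset_3|(p_1+1) < \Delta_2/6$ by the assumption at the end of Section~\ref{subsec: basic crossbar constr}, there is plenty of room along each cycle $Z_j(t)$ of the inner shell to absorb all the paths we need to divert.

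The re-routing will be done one bad disc at a time, processing the inner shells of the terminals $t\in T_3$ and of $t_0$. Fix such a terminal $t$. Among the paths of $\pset$, some enter $\tD(t)$; for $t\in T_3$ the ones we are allowed to keep inside are exactly the paths of $\pset_1(t)$ originating at $A_t\subseteq V(C_t)$ (and for $t=t_0$ none are allowed in $\tD(t_0)$ except that $\pset_2$ paths, which are the terminal segments landing on $C_{t_0}$, are fine there; the constraint is on $\pset_1$ and on paths through $\tD(t)$ for $t\ne t_0$). For each ``illegal'' path $P$ that currently passes through $\tD(t)$, we want to replace the portion of $P$ lying inside $D(Z_{\Delta_2}(t))$ by a detour that runs along one of the cycles $Z_1(t),\dots,Z_{\Delta_2}(t)$. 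The standard device here is to use the tight concentric cycles of the shell as a ``routing annulus'': the segments $Z_1(t),\dots,Z_{\Delta_2}(t)$ together with the portions of the paths of $\pset$ crossing the annulus $D(Z_{\Delta_2}(t))\setminus D_t$ form a planar ``grid-like'' structure, and since there are at most $|\pset| < \Delta_2/6$ paths crossing, we can reroute each illegal path $P$ to travel ``around'' $\tD(t)$ using a distinct cycle $Z_{j}(t)$ for each path, in the spirit of the monotonization argument of Lemma~\ref{lem:reroute-montone} and Theorem~\ref{thm: monotonicity for shells}. The legal paths of $\pset_1(t)$ that are supposed to stay inside are untouched. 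Because the inner discs are disjoint, rerouting around $\tD(t)$ does not create new intersections with the untouched paths or with detours already made around other discs $\tD(t')$: any detour around $\tD(t)$ stays within $D(Z_{\Delta_2}(t))$, which is disjoint from $D(Z_{\Delta_2}(t'))$, from $C_{t'}$ and from $A_{t'}$. After processing all terminals, the resulting linkage $\pset'$ satisfies: $\pset'_1$ avoids $\tD(t_0)$, and for each $t\in T_3$ the only paths meeting $\tD(t)$ are the paths of $\pset'_1$ starting at $A_t$ — i.e. $\pset'$ is good. The fact that $\pset'$ is still an $A$--$B$ linkage is preserved throughout, because we only substitute sub-paths by alternative sub-paths with the same endpoints inside the relevant disc, never changing where a path starts in $A$ or ends in $B$.

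I expect the main obstacle to be the careful bookkeeping needed to make the rerouting of illegal paths around $\tD(t)$ simultaneously node-disjoint from (i) the paths of $\pset_1(t)$ that must be kept inside $\tD(t)$, and (ii) the other illegal paths being diverted around the same disc. The clean way to handle this is to observe that all the paths of $\pset$ intersecting the annulus between $C_t$ and $Z_{\Delta_2}(t)$, restricted to that annulus, form a set of node-disjoint paths that either cross the annulus or enter and have an endpoint on $C_t$; the $\Delta_2$ tight concentric cycles give us $\Delta_2$ ``tracks'', and since only $<\Delta_2/6$ paths are present, we can assign each illegal path its own track $Z_j(t)$ (choosing $j$ larger than the indices used by any nested path, exactly as in the shadow/monotonicity arguments), and route it along $Z_j(t)$ from where it enters $D(Z_{\Delta_2}(t))$ to where it exits, using the connectivity of $G[V(D_t)]$-style arguments only if it needs to reach the other side. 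A secondary subtlety is that a single path $P$ may intersect several disjoint inner discs $\tD(t)$; since the discs are disjoint, each such intersection is a separate segment of $P$ inside a distinct $D(Z_{\Delta_2}(t))$, and we can fix them independently. Formalizing this ``one disc at a time, using spare tracks'' argument, and checking that the counting $|A|=|B|=|\pset|<\Delta_2/6$ indeed leaves enough tracks after accounting for nesting, is the technical heart of the proof.
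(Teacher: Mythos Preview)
There is a concrete error in your rerouting plan. You propose to ``replace the portion of $P$ lying inside $D(Z_{\Delta_2}(t))$ by a detour that runs along one of the cycles $Z_1(t),\dots,Z_{\Delta_2}(t)$,'' but those cycles all lie \emph{inside} $\tD(t)=D(Z_{\Delta_2}(t))$. Rerouting along any $Z_j(t)$ with $j\le\Delta_2$ keeps the path inside $\tD(t)$, which is exactly what a good linkage forbids. The cycles that can serve as detour tracks are the \emph{outer} ones $Z_{\Delta_2+1}(t),\ldots,Z_{\Delta_1}(t)$, lying in the annulus $D^*(t)\setminus\tD(t)$; these are precisely the edges the paper collects into the set $E_1$.

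Even with the correct annulus, the ``assign each illegal path its own track $Z_j(t)$'' idea is incomplete. An illegal path may enter and leave $D^*(t)$ several times, and the legitimate paths in $\pset_1(t)$ also cross every outer cycle; you have not said how the detour is spliced back into the original path while remaining node-disjoint from all of this, nor how fixing one disc does not destroy goodness at a disc already processed (your detour stays inside $D(Z_{\Delta_2}(t))$, but a detour along outer cycles would not). The paper sidesteps local surgery entirely. It defines a potential $c(\qset)=\sum_{P\in\qset}|E(P)\setminus E_1|$ and shows (Lemma~\ref{lemma: avoid inner shells}) that whenever the current $A$--$B$ linkage is not good, one can find a new $A$--$B$ linkage with strictly smaller $c$; iterating terminates after at most $|E(G)|$ steps at a good linkage. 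The decrease step first splits $\pset$ at the last crossing of $Z_{\Delta_1}(t_0)$ into $\pset_1\cup\pset_2$ and the matching edges $\tilde E$, then applies a Menger-type statement (Claim~\ref{claim: A-B linkage}): in the union of $h$ concentric cycles and $r<h$ disjoint paths, one may delete from the ``illegal'' group every vertex and edge lying in the innermost open disc and still find an $r$-path linkage. Instantiating this with the outer cycles $Z_{\Delta_2+1}(t_0),\ldots,Z_{\Delta_1-2}(t_0)$ (or those of the offending $t\in T_3$), of which there are at least $\Delta_2>|\qset|$, yields the improved $(A\cup B')$--$(A'\cup B)$ linkage, and a short directed-matching argument (Observation~\ref{obs: Q-linkage is enough}) reconstitutes an $A$--$B$ linkage of smaller potential. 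This potential-function-plus-Menger argument is what handles the global disjointness bookkeeping that your disc-by-disc plan leaves open.
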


The remainder of this step focuses on the proof of Theorem~\ref{thm: basic crossbar construction+first property}. In order to prove the theorem, we start with the $A$--$B$ linkage $\pset$, given by the basic crossbar that we have computed in the previous section, and iteratively modify it. Let $E_1$ be the set of all edges that belong to cycles $\bigcup_{t\in T_3\cup\set{t_0}}\bigcup_{j=\Delta_2+1}^{\Delta_1}Z_j(t)$. Given any set $\qset$ of node-disjoint paths in $G$, let $c(\qset)$ be the total number of all edges lying on the paths in $\qset$ that do not belong to $E_1$, so $c(\qset)=\sum_{P\in \qset}|E(P)\setminus E_1|$. The following lemma is key to proving Theorem~\ref{thm: basic crossbar construction+first property}.

\begin{lemma}\label{lemma: avoid inner shells}
Let $\pset$ be any $A$--$B$ linkage, and assume that it is not a good linkage. Then there is an efficient algorithm to compute an $A$--$B$ linkage $\pset'$ with $c(\pset')<c(\pset)$.
\end{lemma}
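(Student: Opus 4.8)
The plan is to exploit the planarity of $G$ together with the shell structure, viewing the cycles $Z_j(t)$ for $j \in \{\Delta_2+1,\ldots,\Delta_1\}$ as a ``buffer zone'' around each inner disc $\tD(t)$ that a path crossing it can be pushed onto. Since $\pset$ is not a good linkage, there is either a path $P \in \pset'_1$ intersecting $\tD(t_0)$, or a path $P \in \pset'_1 \cup \pset'_2$ intersecting some $\tD(t)$ for $t \in T_3$ with $P$ not being a path of $\pset'_1$ originating at a vertex of $A_t$. I would treat these uniformly: in each violating case there is a terminal $v \in T_3 \cup \set{t_0}$ and a path $P$ that ``should not'' enter $\tD(v)$ but does. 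Because $P$ must enter and then leave $\tD(v)$ (its endpoints lie outside $D(Z_{\Delta_1-1}(v)) \supseteq \tD(v)$ by the remarks preceding the statement, recalling $\Delta_2 < \Delta_1 - 1$), and because it crosses all of the tight concentric cycles $Z_{\Delta_2+1}(v),\ldots,Z_{\Delta_1-1}(v)$ that separate $\tD(v)$ from the rest, the portion of $P$ inside $D(Z_{\Delta_1-1}(v))$ can be rerouted.

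\textbf{The local surgery.}
Fix such a $v$ and $P$. Let $a$ be the first vertex of $P$ on $Z_{\Delta_1-1}(v)$ and $b$ the last vertex of $P$ on $Z_{\Delta_1-1}(v)$ (as $P$ is directed from $A$ to $B$); the sub-path $P[a,b]$ lies inside $D(Z_{\Delta_1-1}(v))$. I would replace $P[a,b]$ by a path $Q$ that follows one of the two arcs of the cycle $Z_{\Delta_1-1}(v)$ from $a$ to $b$ (choosing the arc so as not to create conflicts, as detailed below). This new routing stays on a single cycle of the shell, hence every edge it uses belongs to $E_1$; meanwhile, the removed sub-path $P[a,b]$ contains at least one edge not in $E_1$ — namely any edge incident to a vertex of $\tD(v)$, which by the shell structure lies strictly inside $D(Z_{\Delta_2+1}(v))$ and therefore cannot lie on any cycle $Z_j(\cdot)$ with $j \geq \Delta_2+1$ for any terminal (for $v$ itself it is too deep inside; for another terminal $t'$, the discs $D^*(t')$ are disjoint from a neighborhood of $\tD(v)$ by Observation~\ref{obs: cycles of shells disjoint} and Lemma~\ref{lem: disc disjointness}). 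Hence $c(\pset') < c(\pset)$, which is the claimed monotone potential drop.

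\textbf{Maintaining node-disjointness — the main obstacle.}
The delicate point, and the one I expect to require the most care, is that after replacing $P[a,b]$ by an arc $Q$ of $Z_{\Delta_1-1}(v)$, the resulting collection $\pset'$ might no longer be node-disjoint: other paths of $\pset$ may already traverse vertices of $Z_{\Delta_1-1}(v)$. The standard fix is a ``cut-and-paste along the cycle'' argument: whenever $Q$ meets another path $P'$ at a vertex $x \in Z_{\Delta_1-1}(v)$, one swaps the tails of $P$ and $P'$ at $x$, so that collectively the paths still form an $A$--$B$ linkage (this uses that $Z_{\Delta_1-1}(v)$ is a cycle and each $P'$ crossing it does so transversally, by monotonicity/planarity). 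One must check that this swapping does not increase $c(\cdot)$: swapping tails only re-attaches existing path-segments, and any new cycle-edges used are again in $E_1$. A cleaner route, which I would actually pursue, is to apply a Menger/flow argument inside the planar disc $D(Z_{\Delta_1-1}(v))$ directly: contract $\tD(v)$ and reroute all paths of $\pset$ that currently enter $D(Z_{\Delta_1-1}(v))$ simultaneously, using that the cycles $Z_{\Delta_2+1}(v),\ldots,Z_{\Delta_1-1}(v)$ provide $\Delta_1 - \Delta_2 - 1 = \Omega(\Delta)$ disjoint ``tracks'' around $\tD(v)$ while the number of paths involved is at most $|\pset| < \Delta_2/6$, so there is ample room. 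In either implementation the key invariant to verify is that the total count of non-$E_1$ edges strictly decreases while node-disjointness and the endpoint sets $A, B$ are preserved; iterating Lemma~\ref{lemma: avoid inner shells} then terminates (since $c(\pset) \geq 0$ is a non-negative integer) at a good linkage, proving Theorem~\ref{thm: basic crossbar construction+first property}.
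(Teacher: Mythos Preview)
Your high-level plan is right --- the paper also proves this lemma via a Menger-type rerouting that exploits the $\Omega(\Delta)$ ``spare'' cycles between the inner disc $\tD(v)$ and $D^*(v)$, against only $O(\Delta_2)$ paths. But your write-up leaves the key step unspecified, and the single-path surgery you sketch first has a real problem.

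\textbf{The single-path surgery does not obviously give node-disjoint paths.} After you replace $P[a,b]$ by an arc of $Z_{\Delta_1-1}(v)$, the new path $\tilde P$ may share vertices with several other paths of $\pset$. Your ``swap tails'' suggestion is an \emph{edge-disjoint} trick; for node-disjoint paths it does not directly yield a linkage: at a shared vertex $x$ both paths have degree $2$, and iterated swapping may produce walks, not simple node-disjoint paths. There is no guarantee that the union $\tilde P\cup(\pset\setminus\{P\})$ contains $|\pset|$ node-disjoint $A$--$B$ paths. (Also, a minor slip: $P[a,b]$ need not lie inside $D(Z_{\Delta_1-1}(v))$; $P$ can wander in and out between its first and last visits to $Z_{\Delta_1-1}(v)$.) You correctly abandon this route for the Menger one.

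\textbf{The Menger route needs a sharper statement than ``contract $\tD(v)$ and reroute everything.''} The difficulty is that some paths \emph{legitimately} enter $\tD(v)$: for $v=t_0$ every path of $\pset_2$ terminates in $C_{t_0}\subset\tD(t_0)$, and for $v=t\in T_3$ the paths of $\pset_1$ that originate at $A_t$ start inside $\tD(t)$. You cannot push these around $\tD(v)$. What the paper does is: (i) pass to the $(A\cup B')$--$(A'\cup B)$ linkage $\qset=\pset_1\cup\pset_2$ in $G\setminus\tilde E$; (ii) prove a clean Menger claim (Claim~\ref{claim: A-B linkage}) saying that if you have $h$ concentric cycles, a set $\qset_1$ of paths starting inside the innermost and ending outside the outermost, and a set $\qset_2$ of paths with both endpoints outside, with $|\qset_1|+|\qset_2|<h$, then you may \emph{delete} every portion of the $\qset_2$-paths inside the innermost disc and still recover a full linkage in the remaining graph; (iii) apply this with $\qset_1$ the legitimate paths and $\qset_2$ the rest. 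The proof of (ii) is a cut argument: any putative separator $Y$ of size $<r$ must hit each intact path of $\qset_1$ and each intact path of $\qset_2$, leaving a truncated path $Q\in\qset_2$ that $Y$ misses entirely; its two surviving halves cross all $h>r-1$ cycles, one of which misses $Y$, and gluing along that cycle contradicts the separation. Finally (iv), Observation~\ref{obs: Q-linkage is enough} reassembles an $A$--$B$ linkage $\pset'$ from the new $\qset'$ together with the bridging edges $\tilde E$, via a simple directed-matching argument. Your sketch gestures at (ii) but does not isolate the legitimate/illegitimate split, and does not address (iv) at all; both are needed to turn the intuition ``many cycles, few paths'' into an actual $A$--$B$ linkage with strictly smaller $c(\cdot)$.
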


In order to complete the proof of Theorem~\ref{thm: basic crossbar construction+first property}, we start with the original $A$--$B$ linkage $\pset$. While the current linkage $\pset$ is not a good one, we apply Lemma~\ref{lemma: avoid inner shells} to it, to obtain another $A$--$B$ linkage $\pset'$ with $c(\pset')<c(\pset)$. The algorithm terminates when we obtain a good $A$--$B$ linkage. Since the values $c(\pset)$ are integers bounded by $|E(G)|$, and they decrease by at least $1$ in every iteration, the algorithm is guaranteed to terminate after $|E(G)|$ iterations. It now remains to prove Lemma~\ref{lemma: avoid inner shells}.

\begin{proofof}{Lemma~\ref{lemma: avoid inner shells}}
Let $\pset$ be any $A$--$B$ linkage, and assume that it is not a good linkage. Let $\qset=\pset_1\cup \pset_2$, so $\qset$ is an $(A\cup B')$--$(A'\cup B)$ linkage in $G\setminus \tilde E$, and $|\qset|=2|\mset_3|(p_1+1)<6\kappa_3 p_1<\Delta_2/3$ from our assumption. We will use the following simple observation.

\begin{observation}\label{obs: Q-linkage is enough}
Let $\qset'$ be any $(A\cup B')$--$(A'\cup B)$ linkage in $G\setminus \tilde E$, and let $G'$ be the graph obtained by the union of the paths in $\qset'$ and the edges of $\tilde E$. Then $G'$ contains an $A$--$B$ linkage $\pset'$. Moreover, if $c(\qset')<c(\qset)$, then $c(\pset')<c(\pset)$.
\end{observation}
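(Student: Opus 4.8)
<br>

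The plan is to prove Observation~\ref{obs: Q-linkage is enough} by exhibiting a canonical way to ``re-assemble'' an $(A\cup B')$--$(A'\cup B)$ linkage together with the edge set $\tilde E$ into an $A$--$B$ linkage, in a way that never increases the count $c(\cdot)$. Recall that $\tilde E=\set{e_P\mid P\in\pset'}$ is a set of $|\pset'|$ edges, where each $e_P=(v'_P,v_P)$ has $v'_P\in A'$ and $v_P\in B'$, and that $A',B'$ (and hence $\tilde E$) are in bijection with $\pset'$. Crucially, the vertices of $A'$ are pairwise distinct, the vertices of $B'$ are pairwise distinct, and $A',B'$ are disjoint (they lie on opposite sides of the edge $e_P$, and all edges $e_P$ are distinct since the original paths are node-disjoint). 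So $\tilde E$ is a perfect matching between $A'$ and $B'$.

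First I would take the given linkage $\qset'$ and form the graph $G'=\left(\bigcup_{Q\in\qset'}Q\right)\cup\tilde E$. Every vertex of $A$ has degree $1$ in $\bigcup\qset'$ (endpoint of a path in the $A$--$A'$ part) and is not incident to $\tilde E$; similarly for $B$. Every vertex of $A'$ has degree $1$ from the $\qset'$-paths ending there and degree $1$ from its matching edge in $\tilde E$, for total degree $2$; likewise every vertex of $B'$ has degree $2$. All internal vertices of the paths in $\qset'$ have degree $2$ (the paths are node-disjoint). Hence $G'$ is a disjoint union of simple paths and cycles, whose path-endpoints are exactly $A\cup B$. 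Since $|A|=|B|=|\qset'|/2$\,---\,wait, more precisely $|A\cup B'|=|A'\cup B|$ and the endpoints of $G'$ that are not matched away are exactly $A\cup B$\,---\,the path components of $G'$ give a collection of $|A|$ node-disjoint paths, and I need to argue they connect $A$ to $B$ rather than $A$ to $A$ or $B$ to $B$. This follows from a parity/orientation argument: orient each path of $\qset'$ from its $(A\cup B')$-endpoint to its $(A'\cup B)$-endpoint, and orient each edge of $\tilde E$ from $A'$ to $B'$; then $G'$ becomes a union of directed paths and directed cycles in which every vertex of $A$ is a source, every vertex of $B$ is a sink, and every vertex of $A'\cup B'$ has in-degree $1$ and out-degree $1$. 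Thus each maximal directed path starts at a vertex of $A$ and ends at a vertex of $B$, giving the desired $A$--$B$ linkage $\pset'$, with the cycle components simply discarded.

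Next I would track the edge count. By construction $E(\pset')\subseteq E(\qset')\cup\tilde E$, and in fact $\sum_{P\in\pset'}|E(P)|\le\sum_{Q\in\qset'}|E(Q)|+|\tilde E|$, with equality exactly when no cycles are formed (cycles only remove edges from the tally). Now $\tilde E$ consists of $|\pset'|$ edges none of which lie in $E_1$ (each $e_P$ is the single edge joining $v_P\in Z_{\Delta_1}(t_0)$ to $v'_P$, which lies outside $D(Z_{\Delta_1-1}(t_0))$, hence is not an edge of any cycle $Z_j$). Therefore $c(\pset')\le c(\qset')+|\tilde E|$, and since $c(\qset)=c(\pset_1\cup\pset_2)$ satisfies $c(\pset)\le c(\qset)+|\tilde E|$ in the same way, but actually by the original decomposition $c(\pset)=c(\pset_1)+c(\pset_2)+|\tilde E|\cdot[\text{$e_P\notin E_1$}]$\,---\,I would simply note $c(\qset)=c(\pset)-|\tilde E|$ since splitting off the edges $e_P$ removes exactly $|\tilde E|$ non-$E_1$ edges\,---\,so from $c(\qset')<c(\qset)$ we get $c(\pset')\le c(\qset')+|\tilde E|<c(\qset)+|\tilde E|=c(\pset)$, as claimed.

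The main obstacle I anticipate is the bookkeeping in the bijection and orientation argument\,---\,in particular verifying cleanly that $A',B'$ are disjoint and in bijection with $\pset'$ even when several original paths pass close to $Z_{\Delta_1}(t_0)$, and that the matching $\tilde E$ together with $\qset'$ really does produce a vertex-disjoint (not merely edge-disjoint) family. This is where I would be most careful: the node-disjointness of the output follows because $G'$ has maximum degree $2$, which in turn rests on $\qset'$ being a \emph{node}-disjoint family and on $A\cup B$ being disjoint from the internal vertices and from $A'\cup B'$. I would spell out that $A\cup B$ is disjoint from $A'\cup B'$ because $A\cup B\subseteq V(C_{t_0})\cup\bigcup_{t\in T_3}V(C_t)$ while $A'\cup B'$ lies outside all the discs $D(Z_{\Delta_1-1}(t))$, and these regions are disjoint by the disc-nesting properties of the shells. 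Everything else is a routine consequence of the Eulerian-type structure of degree-$\le 2$ graphs.
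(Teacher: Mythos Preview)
Your approach is essentially the paper's: build a directed structure in which vertices of $A$ are sources, vertices of $B$ are sinks, and vertices of $A'\cup B'$ have in- and out-degree $1$, then read off the $A$--$B$ linkage from the path components. The paper does this in a small auxiliary digraph $\tilde G$ on vertex set $A\cup A'\cup B\cup B'$ (contracting each $\qset'$-path to a single arc), whereas you work directly in $G'$; both are fine, and your extra care about why $A\cup B$, $A'$, $B'$ are pairwise disjoint is correct and more explicit than the paper.

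There is one slip in your cost bookkeeping. You assert that no edge of $\tilde E$ lies in $E_1$, arguing that $v'_P$ lies outside $D(Z_{\Delta_1-1}(t_0))$. That only rules out $e_P$ being an edge of $Z_j(t_0)$ for $j<\Delta_1$; nothing prevents $e_P$ from being an edge of $Z_{\Delta_1}(t_0)$ itself (i.e.\ $v'_P\in Z_{\Delta_1}(t_0)$ with $(v'_P,v_P)$ a cycle edge), in which case $e_P\in E_1$. Fortunately the claim is unnecessary: since $E(\qset')$ and $\tilde E$ are disjoint (as $\qset'\subseteq G\setminus\tilde E$) and $E(\qset)=E(\pset)\setminus\tilde E$, you get
\[
c(\pset')\le c(\qset')+|\tilde E\setminus E_1|\quad\text{and}\quad c(\pset)=c(\qset)+|\tilde E\setminus E_1|,
\]
so $c(\qset')<c(\qset)$ gives $c(\pset')<c(\pset)$ directly. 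This is what the paper's one-line ``immediate to verify'' is hiding; just replace your $|\tilde E|$ by $|\tilde E\setminus E_1|$ and the argument is clean.
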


\begin{proof}
We construct the following auxiliary directed graph $\tilde G$. The vertices of $\tilde G$ are $A\cup A'\cup B\cup B'$. The edges are defined as follows. First, for every path $Q\in \qset'$, that originates at a vertex $v\in A\cup B'$ and terminates at a vertex $u\in A'\cup B$, we add a directed edge from $v$ to $u$. Notice that since $\qset'$ is an $(A\cup B')$--$(A'\cup B)$ linkage, this gives a directed matching from the vertices of $A\cup B'$ to the vertices of $A'\cup B$. Finally, for every edge $e_P=(v'_P,v_P)\in \tilde E$, we add a directed edge from $v'_P$ to $v_P$ to $\tilde G$. Then in the resulting graph $\tilde G$, every vertex of $A$ has in-degree $0$ and out-degree $1$; every vertex of $B$ has out-degree $0$ and in-degree $1$, and every vertex of $A'\cup B'$ has in-degree and out-degree $1$. Therefore, graph $\tilde G$ is a collection of directed paths and cycles, and it must contain an $A$--$B$ linkage $\tilde{\pset}$. By replacing, for every path $P\in \tpset$, the edges representing the paths of $\qset'$ on $P$ with the corresponding paths, we obtain an $A$--$B$ linkage $\pset'$. Since the edges of $\tilde E$ participate in the paths in $\pset$, and the paths of $\pset'$ are contained in graph $G'$, it is immediate to verify that $c(\pset')<c(\pset)$.
\end{proof}

From Observation~\ref{obs: Q-linkage is enough}, it is now enough to construct an  $(A\cup B')$--$(A'\cup B)$ linkage $\qset'$  in $G\setminus \tilde E$, with $c(\qset')<c(\qset)$. The following technical claim is central to achieving this.

\begin{claim}\label{claim: A-B linkage}
Let $H$ be any planar graph embedded in the plane, such that $H$ is a union of a set $\zset=\set{Z_1,\ldots,Z_{h}}$ of disjoint cycles with $D(Z_1)\subsetneq D(Z_2)\subsetneq\ldots\subsetneq D(Z_{h})$, and a set $\qset=\qset_1\cup \qset_2$ of $r<h$ node-disjoint paths. Assume that for $i\in \set{1,2}$, the paths of $\qset_i$ originate at a set $U_i$ of vertices, and terminate at a set $U'_i$ of vertices. Assume further that the vertices of $U_1$ lie in $\dnot(Z_1)$, while the vertices of $U'_1\cup U_2\cup U'_2$ lie outside $D(Z_h)$. Let $H'$ be the graph obtained from $H$ by deleting, for every path $Q\in \qset_2$, every edge and vertex of $Q$ contained in $\dnot(Z_1)$. Then there is a set $\qset'$ of $r$ node-disjoint paths, connecting the vertices of $U_1\cup U_2$ to the vertices of $U'_1\cup U'_2$ in $H'$.
\end{claim}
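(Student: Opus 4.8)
The plan is to prove this by induction on $h$, the number of concentric cycles, peeling off the outermost cycle $Z_h$ one at a time. The intuition is that the paths of $\qset_2$ enter the disc $D(Z_1)$ and need to be "pushed out," and the nested cycles give us room to re-route. At each step I would like to reduce to a situation with one fewer cycle, while maintaining the invariant that the $\qset_2$-paths start from outside $D(Z_h)$.

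First I would set up the base case carefully. When $h = 1$ (so $r = 0$), there is nothing to prove since $\qset$ is empty; more useful is to think of the base case as $h$ just barely larger than $r$. The key structural observation is the following: since the paths of $\qset_2$ start outside $D(Z_h)$ and must reach $U_1'$ (also outside $D(Z_h)$ — wait, no), let me reconsider. Actually $U_1' \cup U_2 \cup U_2'$ all lie outside $D(Z_h)$, while only $U_1$ lies inside $D(Z_1)$. So the paths of $\qset_1$ go from inside $D(Z_1)$ to outside $D(Z_h)$, crossing all $h$ cycles, and the paths of $\qset_2$ stay entirely in the region outside $D(Z_1)$ except possibly dipping into $\dnot(Z_1)$. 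In $H'$ we delete exactly the portions of $\qset_2$-paths inside $\dnot(Z_1)$. The goal is to reconnect.

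The main step I would carry out: consider the cycle $Z_h$. Each path $Q \in \qset_1$ crosses $Z_h$; let $a_Q$ be the last vertex of $Q$ on $Z_h$ (traversing from $U_1$ to $U_1'$). Each path in $\qset_2$ may cross $Z_h$ some number of times. I would argue that because $r < h$, and the cycles are nested and disjoint, there is "slack": the total number of paths is less than the number of cycles, so by an averaging/pigeonhole argument there is some cycle $Z_\ell$ that is crossed by the paths in a controlled way — in particular, I want to find a cycle where I can perform surgery. Actually, the cleaner approach is the standard one used for re-routing through concentric cycles (compare Theorem~\ref{thm: monotonicity for shells} and Lemma~\ref{lem:reroute-montone}): use the cycle $Z_h$ itself as a "highway." A path $Q \in \qset_2$ whose interior in $\dnot(Z_1)$ has been deleted now consists of one or two pieces outside $D(Z_1)$; if it had endpoints on both sides I would route around using an arc of $Z_h$ (or some $Z_j$) connecting the two detached pieces, being careful that the arcs chosen for different paths are disjoint. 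Because there are $r < h$ paths and the cycles are genuinely nested with disjoint vertex sets, there is enough room: I can assign to the $i$-th path (in the natural left-to-right order induced on, say, $Z_1$) a distinct cycle $Z_{j_i}$ to route around, and disjointness of the cycles gives disjointness of the re-routings. The endpoints $U_1 \cup U_2$ get matched to $U_1' \cup U_2'$ by combining the surviving path-pieces with these cycle-arcs.

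The hard part will be bookkeeping the disjointness and making sure the re-routed paths actually connect the correct endpoint sets (i.e., that we get a genuine $(U_1 \cup U_2)$–$(U_1' \cup U_2')$ linkage of size exactly $r$, not some smaller matching that pairs up $U_2$ with $U_2$). I expect to handle this by the same trick as in Observation~\ref{obs: Q-linkage is enough}: build an auxiliary directed graph whose vertices are the relevant endpoints on the cycles and the terminals, put in edges for the surviving path-segments and for all the available cycle-arcs, and observe that every terminal in $U_1 \cup U_2$ has out-degree governed so that a maximal path-packing argument (every non-terminal has in- and out-degree $1$, terminals in $U_1 \cup U_2$ have in-degree $0$ out-degree $1$, and those in $U_1' \cup U_2'$ have out-degree $0$ in-degree $1$) forces the directed graph to decompose into $r$ paths plus cycles, yielding the linkage. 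The condition $r < h$ is exactly what guarantees a free cycle is always available when needed, so the induction — or equivalently the single global arc-assignment — goes through. I would also need to double-check the planarity/embedding constraints so that the chosen arcs of distinct $Z_j$'s together with the path-segments can be realized as vertex-disjoint curves, which follows because distinct $Z_j$ are vertex-disjoint and nested, and each re-routed path uses arcs from cycles in a strictly nested "corridor" between its two detached pieces.
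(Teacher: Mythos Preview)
Your approach is constructive --- you try to build the linkage explicitly by splicing cycle-arcs into the broken $\qset_2$-paths --- whereas the paper's proof is non-constructive and much shorter: it invokes Menger's theorem. Assume no such linkage exists; then there is a separator $Y$ with $|Y|\le r-1$. Each path in $\qset_1$ survives intact in $H'$, so it absorbs a distinct vertex of $Y$; likewise each $\qset_2$-path that never entered $\dnot(Z_1)$. By counting, some $\qset_2$-path $Q$ that \emph{did} dip into $\dnot(Z_1)$ carries no vertex of $Y$; its two ``arms'' from $U_2$ and to $U_2'$ each cross every cycle $Z_1,\ldots,Z_h$. Since $|Y|\le r-1<h$, some entire cycle $Z_j$ also misses $Y$, and stitching the two arms of $Q$ to $Z_j$ yields a $U$--$U'$ path in $H'\setminus Y$, contradicting the separator.

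Your construction, as written, has a real gap. You assert that when a $\qset_2$-path is cut inside $\dnot(Z_1)$ it breaks into ``one or two pieces''; in fact it may enter and exit $\dnot(Z_1)$ many times, producing many pieces. More seriously, the step ``assign to the $i$-th path a distinct cycle $Z_{j_i}$ to route around, and disjointness of the cycles gives disjointness of the re-routings'' does not go through: the arc of $Z_{j_i}$ you pick will typically meet other paths --- every $\qset_1$-path crosses every $Z_j$, and the surviving segments of other $\qset_2$-paths may also hit $Z_{j_i}$. Disjointness of the $Z_j$ among themselves says nothing about disjointness of a $Z_{j_i}$-arc from the path-segments you are keeping. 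The auxiliary-digraph trick from Observation~\ref{obs: Q-linkage is enough} does not rescue this, because there you had a fixed edge set $\tilde E$ with controlled degrees, whereas here the ``available cycle-arcs'' overlap the path-segments in an uncontrolled way. A constructive proof along your lines would need a genuinely new disjointness argument; the Menger route avoids the issue entirely.
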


\begin{proof}
Let $U=U_1\cup U_2$ and $U'=U'_1\cup U'_2$.
Assume otherwise. From Menger's theorem, there is a set $Y$ of at most $r-1$ vertices, such that in $H'\setminus Y$, no path connects a vertex of $U\setminus Y$ to a vertex of $U'\setminus Y$. Let $|\qset_1|=r_1$ and $|\qset_2|=r_2$.

Observe that all paths of $\qset_1$ are contained in $H'$, so for each path $P\in \qset_1$, there must be a distinct vertex $y_P\in Y$ lying on $P$. Let $Y_1\subseteq Y$ be the set of all such vertices, so $|Y_1|=r_1$.

Let $\qset^*\subseteq \qset_2$ be the set of all paths that are disjoint from $\dnot(Z_1)$, and denote $|\qset^*|=r^*$. Then every path $P\in \qset^*$ is contained in $H'$, and as before, $Y$ must contain a distinct vertex $y_P$ lying on $P$. Let $Y_2$ be the set of all such vertices $\set{y_P\mid P\in \qset^*}$. Since the paths in $\qset$ are disjoint, $Y_1\cap Y_2=\emptyset$, and $|Y_1|+|Y_2|=r_1+r^*$. Let $Y_3=Y\setminus (Y_1\cup Y_2)$, and let $\qset''=\qset_2\setminus \qset^*$. Then $|Y_3|\leq |\qset''|-1$, and due to the disjointness of the paths in $\qset$, every path in $\qset''$ is disjoint from $Y_1\cup Y_2$. Therefore, there is some path $Q\in \qset''$, such that $Q\cap Y=\emptyset$. Recall that $Q\cap \dnot(Z_1)\neq \emptyset$, and so $Q$ must intersect $Z_1$. Let $v',v''$ be the first and the last vertices of $Q$ that lie on $Z_1$. Recall that the endpoints of $Q$ lie outside $D(Z_h)$. Let $Q'$ be the segment of $Q$ between its first endpoint and $v'$, and let $Q''$ be the segment of $Q$ between $v''$ and its last endpoint. Then both $Q'$ and $Q''$ are contained in $H'$, and each of these paths intersects every cycle in $\zset$. The number of these cycles is $h>r$ from our assumption. Therefore, there is some $1\leq j\leq h$, such that $Z_j$ is disjoint from $Y$. By combining $Z_j$ with $Q'$ and $Q''$, we obtain a path connecting a vertex of $U\setminus Y$ to a vertex of $U'\setminus Y$ in $H'\setminus Y$, a contradiction.
\end{proof}

We are now ready to complete the proof of Lemma~\ref{lemma: avoid inner shells}.
Let $\pset$ be the given $A$--$B$ linkage, and assume that it is not a good linkage. 

Assume first that some path $P\in \pset_1$ has a non-empty intersection with $\tD(t_0)$. We let $\qset_1=\pset_2$, $\qset_2=\pset_1$, and  $\zset=(Z_{\Delta_2+1}(t_0),\ldots,Z_{\Delta_1-2}(t_0))$. Observe that $|\zset|=\Delta_1-\Delta_2-2\geq \Delta_2>24p_1\kappa_3>|\qset|$, the paths in $\qset_1$ originate at the vertices of $C_{t_0}$, that lie in $\dnot(Z_{\Delta_2+1})$, and terminate at the vertices of $B'$, that lie outside $D(Z_{\Delta_1-2}(t_0))$ (after we reverse them), while the paths of $\qset_2$ originate and terminate outside $D(Z_{\Delta_1-2}(t_0))$ (we also reverse them). From our assumption, at least one path in $\qset_2$ intersects $\dnot(Z_{\Delta_2+1})$. 
Let $H$ be the graph obtained by the union of the cycles in $\zset$ and the paths in $\qset$. Since the edges of $\tilde E$ cannot lie on the cycles of $\zset$, $H\subseteq G\setminus \tilde E$.
We can now apply Claim~\ref{claim: A-B linkage} to obtain a new $(A\cup B')$--$(A'\cup B)$ linkage $\qset'$ in graph $G\setminus \tilde E$.  Moreover, since we delete all edges lying on the paths of $\qset_2$ that belong to $\dnot(Z_{\Delta_2+1}(t_0))$, it is easy to verify that $c(\qset')<c(\qset)$.

Assume now that there is some path $P\in \qset$, and some terminal $t\in T_3$, such that $P$ intersects $\tD(t)$, but it does not originate at a vertex of $A_t$. Let $\qset_1\subseteq \pset_1$ be the set of all paths originating from the vertices of $A_t$, and let $\qset_2=\qset\setminus \qset_1$. It is easy to verify that we can apply Claim~\ref{claim: A-B linkage} as before, to obtain a new $(A\cup B')$--$(A'\cup B)$ linkage $\qset'$ in graph $G\setminus \tilde E$, with $c(\qset')<c(\qset)$. 

Using Observation~\ref{obs: Q-linkage is enough}, we can now obtain an $A$--$B$ linkage $\pset'$ with $c(\pset')<c(\pset)$.
\end{proofof}

\subsection*{Step 2: Monotonicity with Respect to Inner Shells}

In this step we further modify the paths in set $\pset$, in order to ensure Property~(\ref{prop4: in own disc contiguous path}). As before, given a good $A$--$B$ linkage $\pset'$, for every terminal $t\in T_3$ we denote by $\pset'(t)\subseteq \pset'$ the set of paths originating at the vertices of $A_t$, and for each $s\in S_3$, we let $\pset'(s)\subseteq \pset'$ contain the unique path originating at $s$. We define the sets $A',B'$ of vertices, the set $\tilde E$ of edges, and the sets $\pset_1',\pset_2'$ of paths with respect to $\pset'$ exactly as before.

\begin{definition}
We say that a good $A$--$B$ linkage $\pset'$ is perfect if  all paths in $\pset'$ are monotone with respect to $Z_1(t_0),\ldots,Z_{\Delta_2}(t_0)$, and for all $t\in T_3$, all paths in $\pset'(t)$ are monotone with respect to $Z_1(t),\ldots, Z_{\Delta_2}(t)$.
\end{definition}

In this step we prove the following theorem, that immediately gives a $p_1$-crossbar for $\mset_3$ with Properties~(\ref{prop3: paths don't touch inner discs}) and~(\ref{prop4: in own disc contiguous path}).

\begin{theorem}\label{thm: monotonicity}
There is an efficient algorithm, that, given a good $A$--$B$ linkage $\pset$, computes a perfect $A$--$B$ linkage $\pset'$.
\end{theorem}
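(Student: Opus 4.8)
The plan is to apply Theorem~\ref{thm: monotonicity for shells} repeatedly, once for each inner shell that appears in the linkage, and argue that the disjointness property of a good linkage is preserved throughout. Concretely, the shells we must make the paths monotone with respect to are the inner shell $(Z_1(t_0),\ldots,Z_{\Delta_2}(t_0))$ of the center terminal $t_0$, and, for each $t\in T_3$, the inner shell $(Z_1(t),\ldots,Z_{\Delta_2}(t))$ of $t$. Note that these inner shells have pairwise disjoint outer discs $\tilde D(t)=D(Z_{\Delta_2}(t))$: from Lemma~\ref{lem: disc disjointness} we even have $D^*(t)\cap D^*(t')=\emptyset$ for distinct $t,t'\in T_3\cup\set{t_0}$, and $\tilde D(t)\subseteq D^*(t)$. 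So the re-routing we perform for one inner shell takes place in a region disjoint from all the other inner shells.

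The first step is to handle the inner shell of $t_0$. By the definition of a good linkage, all paths of $\pset_1$ are disjoint from $\tilde D(t_0)$, so only the paths of $\pset_2$ (which are contained in $D^*(t_0)$) enter $\tilde D(t_0)$; in fact each $P_2\in\pset_2$ starts at a vertex $v_P\in V(Z_{\Delta_1}(t_0))$ outside $D(Z_{\Delta_2}(t_0))$ and ends at a vertex of $C_{t_0}\subseteq \dnot(Z_1(t_0))$. Reversing these paths, and viewing the subgraph $G[V(D^*(t_0))]$ as embedded in the plane with the face outside $D^*(t_0)$ as the outer face, the hypotheses of Theorem~\ref{thm: monotonicity for shells} are met with $C=C_{t_0}$, the tight concentric cycles $(Z_1(t_0),\ldots,Z_{\Delta_2}(t_0))$, and $Y$ a connected subgraph lying outside $D(Z_{\Delta_2}(t_0))$ that contains the terminal endpoints of the reversed $\pset_2$-paths (one can take, e.g., the relevant portion of $Z_{\Delta_1-1}(t_0)$ together with the terminal segments; the $A$-endpoints of these paths all lie outside $D(Z_{\Delta_1-1}(t_0))$). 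Theorem~\ref{thm: monotonicity for shells} then returns a replacement set of node-disjoint paths, contained in $\bigcup_{h=1}^{\Delta_2}V(Z_h(t_0))\cup(\text{old paths})$, internally disjoint from $V(C_{t_0})$ and $V(Y)$, and monotone with respect to $(Z_1(t_0),\ldots,Z_{\Delta_2}(t_0))$. We splice these back in to obtain a new $A$--$B$ linkage; it is still a good linkage, since the re-routing only happened inside $D^*(t_0)$ and did not touch any $\tilde D(t)$ for $t\in T_3$, and the $\pset_1$-paths still avoid $\tilde D(t_0)$ because the new paths through $\tilde D(t_0)$ are still exactly (re-routed versions of) $\pset_2$-paths.

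The second step iterates over $t\in T_3$. Fix such a $t$. By the goodness of the current linkage, the only paths that meet $\tilde D(t)$ are paths of $\pset'(t)$ originating at vertices of $A_t\subseteq V(C_t)$; each such path enters $\tilde D(t)$ only via its initial segment near $C_t$ and then leaves $D^*(t)$. We apply Theorem~\ref{thm: monotonicity for shells} once more, this time with $C=C_t$, the cycles $(Z_1(t),\ldots,Z_{\Delta_2}(t))$, and $Y$ a connected subgraph outside $D(Z_{\Delta_2}(t))$ capturing the exit points of these paths. This makes the paths of $\pset'(t)$ monotone with respect to the inner shell of $t$, without affecting any path outside $D^*(t)$ — in particular the monotonicity with respect to $t_0$'s inner shell established in the first step is untouched, since $D^*(t)\cap D^*(t_0)=\emptyset$, and goodness is preserved for the same reason as before. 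Processing all $t\in T_3$ in turn yields a perfect $A$--$B$ linkage. The main technical point to be careful about is verifying that each invocation of Theorem~\ref{thm: monotonicity for shells} genuinely has disjoint inputs from the other shells — i.e. that the ``$Y$'' subgraph and the portion of the graph being re-routed for shell $t$ are entirely contained in $D^*(t)$ and hence disjoint from every other $D^*(t')$ — which is exactly what Lemma~\ref{lem: disc disjointness} gives us; the rest is bookkeeping to confirm that the properties in the definition of a good linkage and the monotonicity from earlier iterations all survive each splicing operation.
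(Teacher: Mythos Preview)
Your proposal is correct and follows essentially the same two-phase strategy as the paper: first re-route the $\pset_2$ paths via Theorem~\ref{thm: monotonicity for shells} to achieve monotonicity with respect to the inner shell of $t_0$, then iterate over each $t\in T_3$ re-routing $\pset'(t)$ for the inner shell of $t$, using the pairwise disjointness of the discs $D^*(\cdot)$ to guarantee the established properties survive each step. The only notable difference is in the choice of the connected subgraph $Y$: the paper simply takes $Y=Z_{\Delta_1}(t_0)$ in the first phase (the $\pset_2$ paths already have their other endpoints on this cycle) and $Y=G[V(D_{t_0})]$ in the second phase (applying the theorem to the full paths of $\tpset(t)$ from $C_t$ all the way to $B_t\subseteq V(C_{t_0})$), which is cleaner than your ``exit points'' suggestion and avoids any truncation-and-splicing bookkeeping. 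Also, a small inaccuracy: paths in $\pset'(t)$ need not enter $\tilde D(t)$ ``only via their initial segment'' before re-routing---they may re-enter---but this is exactly what the application of Theorem~\ref{thm: monotonicity for shells} fixes, so it does not affect the argument.
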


\begin{proof}
Our first step is to re-route the paths in $\pset_2$, so that they become monotone with respect to $Z_1(t_0),\ldots,Z_{\Delta_2}(t_0)$. Recall that from our definition, the paths of $\pset_2$ originate from the set $B'\subseteq V(Z_{\Delta_1}(t_0))$ of vertices, terminate at the set $B\subseteq V(C_{t_0})$ of vertices, and they are internally disjoint from $V(Z_{\Delta_1}(t_0))\cup V(C_{t_0})$. 
Let $\zset=\set{Z_1(t_0),\ldots,Z_{\Delta_2}(t_0)}$, $C=C_{t_0}$, and $Y=Z_{\Delta_1}(t_0)$. Recall that from the construction of the shells, $\zset$ is a set of $\Delta_2$ tight concentric cycles around $C$. Let $H$ be the graph obtained by the union of the cycles in $\zset$ and the set $\pset_2$ of paths.
We use Theorem~\ref{thm: monotonicity for shells} to obtain a new $B'$--$B$ linkage $\pset'_2$ in $H$, such that the paths in $\pset'_2$ are monotone with respect to $Z_1(t_0),\ldots,Z_{\Delta_2}(t_0)$. Moreover, since the paths in $\pset_1$ are disjoint from the graph $H$, the paths in $\pset_2'\cup \pset_1$ remain disjoint.
Let $\tpset$ be the new $A$--$B$ linkage, obtained by concatenating the paths in $\pset_1$, the edges of $\tilde E$, and the paths in $\pset'_2$. Observe that $\tpset$ remains a good linkage.

For each terminal $t\in T_3$, let $\tpset(t)\subseteq \tpset$ be the set of paths originating from the vertices of $A_t$. For every terminal $t\in T_3$, we now re-route the paths in $\tpset(t)$, as follows.
We let $\zset=\set{Z_1(t),\ldots,Z_{\Delta_2}(t)}$ and $C=C_{t}$. As before, $\zset$ is a set of $\Delta_2$ tight concentric cycles around $C$. We let $Y$ be the sub-graph of $G$ induced by $V(D_{t_0})$. Let $H'$ be the graph obtained by the union of the cycles in $\zset$ and the set $\tpset(t)$ of paths. Let $H$ be the union of $H'$ and $G[Y]$. Then the paths in $\tpset(t)$ originate at the vertices of $A_{t}$ lying on $C_{t}$, terminate at the vertices of $Y$, and are internally disjoint from $C\cup Y$. Moreover, since $\tpset$ is a good linkage, all paths in $\tpset\setminus \tpset(t)$ are disjoint from graph $H'$. Using Theorem~\ref{thm: monotonicity for shells}, we can find a new set $\pset'(t)$ of $p_1$ disjoint paths in graph $H'$, 
connecting vertices of $V(C_t)$ to vertices of $V(C_{t_0})$. The paths in $\pset'(t)$ are guaranteed to be disjoint from all paths in $\tpset\setminus\pset'(t)$, and they remain monotone with respect to $Z_1(t_0),\ldots,Z_{\Delta_2}(t_0)$, since the paths in $\tpset(t)$ had this property, and $D^*(t)\cap D^*(t_0)=\emptyset$. They also remain disjoint from all discs $\tD(t')$ for all $t'\in T_3$ distinct from $t$. We replace the paths of $\tpset(t)$ with the paths of $\pset'(t)$ in $\tpset$, obtaining a new good $A$--$B$ linkage, and continue to the next iteration. The final set $\pset'$ of paths is obtained from $\tpset$ once all terminals of $T_3$ are processed.\end{proof}

\subsection*{Step 3: Ensuring Property~(\ref{prop-last: contiguously hit disc of t0})}

So far we have constructed a $p_1$-crossbar  $\left (\bigcup_{t\in T_3\cup\set{t_0}}\zset(t),\pset\right )$, that has properties~(\ref{prop3: paths don't touch inner discs}) and~(\ref{prop4: in own disc contiguous path}).

In this step we will discard some demand pairs from $\mset_3$, and some paths from sets $\pset(t)$ for terminals $t\in T_3$, so that the resulting set $\mset^*$ of the demand pairs, together with the shells around their destination vertices, and the resulting set $\pset^*$ of paths give a good crossbar. We do not alter the paths themselves, so Properties~(\ref{prop3: paths don't touch inner discs}) and~(\ref{prop4: in own disc contiguous path}) will continue to hold.

Consider some path $P\in \pset$, and recall that $P$ is monotone with respect to $Z_1(t_0),\ldots,Z_{\Delta_2}(t_0)$. Let $P'$ be the sub-path of $P$, starting from its first vertex (that lies in $A$), and terminating at the first vertex of $P$ that lies on $Z_{\Delta_2}(t_0)$. Let $\pset'=\set{P'\mid P\in \pset}$. As before, for every terminal $t\in T_3$, we denote by $\pset'(t)\subseteq \pset'$ the set of paths originating at the vertices of $A_t$, and for each $s\in S_3$, we denote by $\pset'(s)\subseteq \pset'$ the set containing the unique path originating from $s$.
Let $e^*$ be any edge of $Z_{\Delta_2}(t_0)$, and let $R^*$ be the path $Z_{\Delta_2}(t_0)\setminus e^*$. 
We prove the following theorem.

\begin{theorem}\label{thm: last property}
There is an efficient algorithm to compute a subset $\mset_4\subseteq \mset_3$ of $\Omega(\kappa_3/\log n)$ demand pairs, and to select, for each terminal $t\in \tset(\mset_4)\cap T_3$ a subset $\pset''(t)\subseteq \pset'(t)$ of $\Omega(p_1)$ paths, such that there is a partition $\Sigma=\set{\sigma(v)\mid v\in \tset(\mset_4)}$ of $R^*$ into disjoint segments, where for each $t\in  \tset(\mset_4)\cap T_3$, the paths in $\pset''(t)$ are disjoint from $R^*\setminus \sigma(t)$, while for each $s\in \tset(\mset_4)\cap S_3$, the unique path in $\pset'(s)$ is disjoint from $R^*\setminus \sigma(s)$.
\end{theorem}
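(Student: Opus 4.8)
The plan is to reduce Theorem~\ref{thm: last property} to a combinatorial statement about a cyclic sequence of points on $Z_{\Delta_2}(t_0)$, and to extract the geometry needed for that statement from planarity together with the crossbar properties already established. Note first that each path $P'\in\pset'$ meets $Z_{\Delta_2}(t_0)$ only in its last vertex, entering $D(Z_{\Delta_2}(t_0))$ from the outside (by monotonicity of $\pset$ with respect to $Z_{\Delta_2}(t_0)$, Property~(\ref{prop4: in own disc contiguous path})), and that $R^*$ contains every vertex of $Z_{\Delta_2}(t_0)$. So for each $v\in S_3\cup T_3$ let $E_v\subseteq V(Z_{\Delta_2}(t_0))$ be the set of endpoints of the paths of $\pset'(v)$; these sets are pairwise disjoint, with $|E_s|=1$ for $s\in S_3$ and $|E_t|=p_1$ for $t\in T_3$. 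For $E\subseteq V(Z_{\Delta_2}(t_0))$ let $\mathrm{span}(E)$ be the complement of (the interior of) a largest gap of $E$, i.e.\ a smallest arc of $Z_{\Delta_2}(t_0)$ containing $E$ (ties broken arbitrarily). It will suffice to choose a large $\mathcal{T}^*\subseteq T_3$, for each $t\in\mathcal{T}^*$ a subset $\pset''(t)\subseteq\pset'(t)$ of $\Omega(p_1)$ paths, and pairwise disjoint arcs $\sigma(t)\subseteq\mathrm{span}(E_t)$ containing the endpoints of $\pset''(t)$; then place the mates of $\mathcal{T}^*$, cut at $e^*$, and pad the arcs into a partition of $R^*$, setting $\mset_4$ to be the demand pairs with destination in $\mathcal{T}^*$.

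The hard part is the following structural claim: \emph{for all distinct $v,v'\in S_3\cup T_3$ there is an arc of $Z_{\Delta_2}(t_0)$ containing $E_v$ and disjoint from $E_{v'}$.} To prove it, fix $t\in T_3$. By Property~(\ref{prop3: paths don't touch inner discs}) the paths of $\pset'(v)$ for $v\neq t$ (being sub-paths of those of $\pset(v)$) avoid $\tilde D(t)=D(Z_{\Delta_2}(t))$, which is moreover disjoint from every other $\tilde D(\cdot)$ and from $D(Z_{\Delta_2}(t_0))$, and contains no source. Contract the connected region $\tilde D(t)$ (connected by Observation~\ref{obs: connectivity of shells}) to a vertex $x_t$ and the connected region $D(Z_{\Delta_2}(t_0))$ to a vertex $z$; planarity is preserved, the $p_1$ paths of $\pset'(t)$ become internally node-disjoint $x_t$--$z$ paths, and every other path of $\pset'$ becomes a $z$-path avoiding $x_t$ and internally disjoint from the rest. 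The $p_1$ paths of $\pset'(t)$ split the sphere into $p_1$ faces, with $z$ on the boundary of each. For any other $v'$, the set $C_{v'}\cup\bigcup\pset'(v')$ is connected and disjoint from the $\pset'(t)$-paths (using Property~(\ref{prop3: paths don't touch inner discs}) again), hence lies in a single face, so all edges of $\pset'(v')$ incident to $z$ fall into one gap between two consecutive $\pset'(t)$-edges at $z$. Reading the cyclic order of edges at $z$ back as the cyclic order of $\pset'$-endpoints along $Z_{\Delta_2}(t_0)$, the complementary arc contains all of $E_t$ and avoids $E_{v'}$, which is the claim; in particular, for every source $s$ we get $e_s\notin\mathrm{span}(E_t)$.

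Given the claim, I would argue as follows. For $t\neq t'$ in $T_3$, $E_t$ lies in one gap of $E_{t'}$; if it is the big gap then $\mathrm{span}(E_t)$ is disjoint from $\mathrm{span}(E_{t'})$, and otherwise $\mathrm{span}(E_t)\subseteq\mathrm{span}(E_{t'})$ — so the arcs $\{\mathrm{span}(E_t)\}_{t\in T_3}$ form a laminar family. Let $F$ be the forest on $T_3$ in which the parent of $t$ is the destination whose span is inclusion-minimal among those strictly containing $\mathrm{span}(E_t)$, and apply Claim~\ref{claim: partition the forest} to partition $T_3$ into $O(\log n)$ classes $R_1,\dots,R_L$, each a union of directed paths of $F$ in which any two destinations on different paths have incomparable, hence disjoint, spans. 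Let $\mathcal{T}^*$ be the largest class, so $|\mathcal{T}^*|\geq\kappa_3/O(\log n)$, and set $\mset_4$ to be the demand pairs with destination in $\mathcal{T}^*$, so $\tset(\mset_4)\cap T_3=\mathcal{T}^*$ and $|\mset_4|=\Omega(\kappa_3/\log n)$. Along a directed path $t_1,\dots,t_m$ of $F$ (so $\mathrm{span}(E_{t_1})\supsetneq\cdots\supsetneq\mathrm{span}(E_{t_m})$) the claim forces $E_{t_i}\subseteq\mathrm{span}(E_{t_i})\setminus\mathrm{span}(E_{t_{i+1}})$, a union of at most two arcs; let $\sigma(t_i)$ be the smallest arc containing the points of $E_{t_i}$ in the component carrying at least half of them, let $\pset''(t_i)$ be the corresponding paths of $\pset'(t_i)$, and if that component still contains the single edge $e^*$ (which can happen for at most one destination) discard the smaller of the two sides of $e^*$, losing only another constant factor. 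These arcs are pairwise disjoint along the path, and across different paths of a class the spans — hence the contained $\sigma$'s — are disjoint; so $\{\sigma(t)\}_{t\in\mathcal{T}^*}$ are pairwise disjoint arcs inside $R^*$ carrying $\Omega(p_1)$ kept paths each.

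Finally, since no source point $e_s$ of a pair in $\mset_4$ lies in any $\mathrm{span}(E_t)\supseteq\sigma(t)$, each such $e_s$ lies in a segment of $R^*\setminus\bigcup_{t\in\mathcal{T}^*}\sigma(t)$; assign to each such source a small pairwise-disjoint sub-segment $\sigma(s)$ of that region around $e_s$, and then absorb the leftover pieces of $R^*$ into neighbouring segments so that $\{\sigma(v)\}_{v\in\tset(\mset_4)}$ becomes a partition of $R^*$. By construction the paths of $\pset''(t)$ meet $R^*$ only inside $\sigma(t)$ and the unique path of $\pset'(s)$ meets $R^*$ only inside $\sigma(s)$, completing the proof. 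I expect the only delicate point to be the structural claim above; the laminar-family bookkeeping, the chain analysis, and the treatment of $e^*$ and of the sources are all routine once the claim is available, modulo standard general-position care with coinciding vertices on $Z_{\Delta_2}(t_0)$.
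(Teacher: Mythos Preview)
Your overall architecture matches the paper's almost exactly: both arguments observe that the endpoint sets $E_t$ on $Z_{\Delta_2}(t_0)$ give a laminar family of arcs, build the containment forest, invoke Claim~\ref{claim: partition the forest} to extract a large class of chains, and then, along each chain, keep the half of each $E_{t_i}$ lying outside $\mathrm{span}(E_{t_{i+1}})$ to obtain pairwise-disjoint arcs $\sigma(t)$ carrying $\Omega(p_1)$ paths each. Your proof of the structural claim (via contracting $\tilde D(t)$ and $D(Z_{\Delta_2}(t_0))$ and using connectedness of $G[V(D_{v'})]\cup\bigcup\pset'(v')$ inside one face of the $\pset'(t)$-fan) is correct and is essentially a reformulation of the paper's disc argument with $D(t,\qset)$.

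There is, however, a genuine gap in your treatment of the sources. Your structural claim only shows that for each $t$, the point $e_s$ lies in \emph{some} gap of $E_t$ --- it does not force $e_s$ into the \emph{largest} gap. Consequently the sentence ``in particular, for every source $s$ we get $e_s\notin\mathrm{span}(E_t)$'' is unjustified and in general false: $e_s$ may well sit inside $\mathrm{span}(E_t)$, and after chain processing it may land inside some $\sigma(t')$. So the final paragraph, which places each source segment in $R^*\setminus\bigcup_t\sigma(t)$, does not go through as written.

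The paper handles exactly this issue with two further cheap steps that you omitted. First, for each surviving pair $(s,t)$, since $e_s$ lies in one gap of the kept endpoint set of $t$, one can discard the smaller side and lose at most half the paths in $\pset''(t)$, ensuring $e_s\notin\sigma(t)$. Second, for cross-pair conflicts, since the arcs $\sigma(t)$ are now pairwise disjoint, each $e_s$ lies in at most one $\sigma(t')$; the resulting directed conflict graph on demand pairs has out-degree at most $1$, so a constant-fraction independent set gives the final $\mset_4$. Adding these two steps (each costing only a constant factor) repairs your argument completely.
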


Let $S_4$ and $T_4$ denote the sets of all source and all destination vertices of the demand pairs in $\mset_4$. Let $\pset''=\left (\bigcup_{t\in T_4}\pset''(t)\right )\cup \left(\bigcup_{s\in S_4}\pset'(s)\right )$.

In other words, Theorem~\ref{thm: last property} ensures that for each terminal $t\in T_4$, the endpoints of the paths in $\pset''(t)$ appear on $R^*$ consecutively, with respect to the endpoints of all paths in $\pset''$. 
It is now immediate to complete the construction of the good crossbar for $\mset_4$. For every path $P\in \pset(t)$ for all $t\in T_4$, we include $P$ in set $\pset^*$ only if the corresponding path $P'\in \pset'$ belongs to $\pset''$. Since the paths in $\pset$ are monotone with respect to $Z_1(t_0),\ldots,Z_{\Delta_2}(t_0)$, it is easy to see that Property~(\ref{prop-last: contiguously hit disc of t0}) is satisfied in the resulting crossbar, and we obtain a good $p^*$-crossbar for $\mset_4$, with $p^*=\Omega(p_1)=\Omega(\Delta/\kappa_0)$, and $|\mset_4|=\Omega(\kappa_0/\log n)$, as required. We now focus on the proof of Theorem~\ref{thm: last property}.

\begin{proofof}{Theorem~\ref{thm: last property}}
Let $G'$ be the graph obtained from $G$, after removing all vertices and edges lying in $\dnot_{\Delta_2}(t_0)$. Observe that all paths in $\pset'$ are still contained in $G'$. We view the face of $G'$ where $t_0$ used to reside as the outer face. Therefore, for each terminal $t\in T_3$, the paths of $\pset'(t)$ now connect the vertices of $V(C_t)$ to the vertices lying on the boundary of the outer face.

Consider any destination vertex $t\in T_3$, and let $\qset\subseteq \pset'(t)$ be any subset of its corresponding paths, with $|\qset|>2$. Let $\sigma$ be the shortest sub-path of $R^*$, containing all endpoints of the paths in $\qset$. Taking the union of $\sigma$, $\qset$ and $Z_{\Delta_2}(t)$, we obtain a new auxiliary graph $H(t,\qset)$. Let $\gamma(t,\qset)$ be the closed curve serving as the outer boundary of this graph, and let $D(t,\qset)$ be the disc whose boundary is $\gamma(t,\qset)$. 

Given two destination vertices $t,t'\in T_3$ with $t\neq t'$, and any two subsets $\qset(t)\subseteq \pset'(t)$ and $\qset(t')\subseteq \pset'(t')$ of paths, notice that the discs $D(t,\qset(t))$ and $D(t',\qset(t'))$ are either completely disjoint from each other, or one is contained in the other.
We say that there is a conflict between $(t,\qset(t))$ and $(t',\qset(t'))$ in the latter case. We also say that there is a conflict between $(t,\qset(t))$ and a source vertex $s\in S_3$, if $s\in D(t,\qset(t))$.

Let $\tmset_1=\mset_3$. Over the course of this step, we will define a series of subsets of demand pairs, $\tmset_3\subseteq \tmset_2\subseteq \tmset_1$, where $|\tmset_3|=\Omega(|\tmset_1|/\log n)$.
 For each $1\leq i\leq 3$, we will denote by $\tS_i$ and $\tT_i$ the sets of the source and the destination vertices of the pairs in $\tmset_i$, respectively. Our final set of the demand pairs will be $\mset^*=\tmset_3$. For every terminal $t\in \tset(\mset^*)$, we will define a series of subsets of paths $\qset_3(t)\subseteq \qset_2(t)\subseteq \qset_1(t)$, where we let $\qset_1(t)=\pset'(t)$, and we will ensure that $|\qset_3(t)|=\Omega(|\qset_1(t)|)=\Omega(p_1)$. Moreover, we will ensure that for all $t,t'\in \tT_3$, there is no conflict between $(t,\qset_3(t))$ and $(t',\qset_3(t'))$, and for all $s\in \tS_3$ and $t\in \tT_3$, there is no conflict between $(t,\qset_3(t))$ and $s$.

Our first step is to eliminate all conflicts between the destination vertices in $\tT_1$. We build a graph $F$, whose vertex set is $\tT_1$, and there is a directed edge $(t',t)$ iff (i) $t\neq t'$; (ii) $D(t',\qset_1(t'))\subseteq D(t,\qset_1(t))$; and (iii) there is no terminal $t''\in \tT_1\setminus\set{t,t'}$ with $D(t',\qset_1(t'))\subseteq D(t'',\qset_1(t''))\subseteq D(t,\qset_1(t))$. It is easy to see that $F$ is a directed forest. 

We use Claim~\ref{claim: partition the forest}, to obtain a partition $\set{R_1,\ldots,R_{\ceil{\log n}}}$ of $V(F)$ into subsets, such that, for each $1\leq j\leq \ceil{\log n}$, $F[R_j]$ is a collection $\yset_j$ of disjoint paths, and if vertices $v,v'\in R_j$ lie on two distinct paths in $\yset_j$, then neither is a descendant of the other in $F$.

Clearly, there is an index $1\leq j\leq \ceil{\log n}$, with $|R_j|\geq |\tT_1|/\ceil{\log n}$. We let $\tmset_2\subseteq \tmset_1$ be the set of the demand pairs whose destination vertices lie in $R_j$, and we define the sets $\tS_2$ and $\tT_2$ of source and destination vertices accordingly. For each $t\in \tT_2$, we now define a subset $\qset_2(t)\subseteq \qset_1(t)$ of paths, to ensure that there are no conflicts between the terminals in $\tT_2$.

Recall that $F[R_j]=\yset_j$ is a collection of paths, and for vertices $v,v'\in R_j$, if there is a directed path from $v$ to $v'$ in $F$, then $v,v'$ lie on the same path in $\yset_j$. Therefore, if $t,t'$ do not lie on the same path in $\yset_j$, there is no conflict between $(t,\qset_1(t))$ and $(t',\qset_1(t'))$. So we only need to resolve conflicts between terminals lying on the same path of $\yset_j$.

Let $P=(t_1,t_2,\ldots,t_r)$ be any such directed path. Then $D(t_1,\qset_1(t_1))\subseteq D(t_2,\qset_1(t_2)) \subseteq\cdots \subseteq D(t_r,\qset_1(t_r))$. Disc $D(t_{r-1},\qset_1(t_{r-1}))$ partitions the paths in $\qset_1(t_r)$ into two subsets, that go on each side of the disc. By discarding the paths in one of these two subsets (the one containing fewer paths), we can eliminate the conflict between $t_r$ and the remaining terminals on path $P$. Therefore, there is a subset $\qset_2(t_r)\subseteq \qset_1(t_r)$, containing at least $|\qset_1(t_r)|/2$ paths, such that $(t_r,\qset_2(t_r))$ has no conflict with $(t_i,\qset_1(t_i))$ for any $1\leq i< r$. We process all other terminals $t\in P$ similarly, obtaining a subset $\qset_2(t)\subseteq \qset_1(t)$ of paths, where $|\qset_2(t)|\geq |\qset_1(t)|/2$, and for all $t,t'\in P$ with $t\neq t'$, there is no conflict between $(t,\qset_2(t))$ and $(t',\qset_2(t'))$.

This completes the definition of the set $\tmset_2$ of demand pairs, and the corresponding sets $\qset_2(t)$ of paths for $t\in \tT_2$. Our next step is to eliminate conflicts between pairs $(t,\qset_2(t))$ for $t\in \tT_2$ and the sources $s\in \tS_2$. For every demand pair $(s,t)\in \tmset_2$, if there is a conflict between $(t,\qset_2(t))$ and $s$, then we can discard a subset of at most half the paths from $\qset_2(t)$ in order to eliminate this conflict. The resulting set of paths is denoted by $\qset_3(t)$. Therefore, we will assume from now on that for every demand pair $(s,t)\in \tmset_2$, there is no conflict between $(t,\qset_3(t))$ and $s$.

Finally, we build a conflict graph $H$, whose vertex set is $\set{v_{s,t}\mid (s,t)\in \tmset_2}$, and there is a directed edge $(v_{s,t},v_{s',t'})$ iff there is a conflict between $s$ and $(t',\qset_3(t'))$. Since the discs $\set{D(t,\qset_3(t))}_{t\in \tT_2}$ are all disjoint, every vertex $v_{s,t}$ has at most one out-going edge. Therefore, every vertex-induced sub-graph of $H$ has at least one vertex whose total degree (counting the incoming and the outgoing edges) is at most $2$. Using standard techniques, we can find an independent set $I$ of vertices in $H$, with $|I|=\Omega(|V(H)|)$. Our final set $\tmset_3$ of demand pairs contains all pairs $(s,t)$ with $v_{s,t}\in I$. We define the sets $\tS_3$ and $\tT_3$ of the source and the destination vertices accordingly, and the sets $\qset_3(t)$ for the destination vertices $t\in \tT_3$ remain the same.
From the above discussion, for each $t,t'\in \tT_3$ with $t\neq t'$, there is no conflict between $(t,\qset_3(t))$ and $(t',\qset_3(t'))$, and for all $s\in \tS_3$, $t'\in \tT_3$, there is no conflict between $(t',\qset_3(t'))$ and $s$.
For every source vertex $s\in \tS_3$, the set $\qset_3(s)$ contains the same path as the original set $\pset'(s)$.
\end{proofof}

\label{-------------------------------------------------sec: case 2--------------------------------------------}
\section{Case 2: Heavy Demand Pairs.}\label{sec: case 2}
In this case, we assume that at least $0.7|\mset|$ demand pairs are heavy. Let $\hset=\set{X_1,\ldots,X_q}\subseteq \xset$ be the collection of all heavy subsets of terminals, so $q\leq 2W/\tau$, and let $\mset^h$ be the set of all heavy demand pairs, so for all $(s,t)\in \mset^h$, both $s$ and $t$ lie in the sets of $\hset$. 

We partition the set $\mset^h$ of demand pairs into $q^2$ subsets, where for $1\leq i,j\leq q$, set $\mset_{i,j}$ contains all demand pairs $(s,t)$ with $s\in X_i$ and $t\in X_j$ (notice that it is possible that $i=j$). We then find an approximate solution to each resulting problem separately. The main theorem of this section is the following.

\begin{theorem}\label{thm: main for Case 2}
There is an efficient algorithm, that for each $1\leq i,j\leq q$, computes a subset $\mset'_{i,j}\subseteq \mset_{i,j}$ of at least $5|\mset_{i,j}|/6$ demand pairs, and a collection $\pset_{i,j}$ of node-disjoint paths routing a subset of the demand pairs in $\mset'_{i,j}$ in $G$, with $|\pset_{i,j}|\geq\min\set{ \frac{\opt(G,\mset_{i,j}')}{c_1\Delta_0^8\log^3n},\frac{\alphawl \cdot|\mset_{i,j}'|}{c_2\Delta_0^2}}$, for some universal constants $c_1$ and $c_2$.
\end{theorem}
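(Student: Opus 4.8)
Fix $1\le i,j\le q$, and pick arbitrary representative terminals $x\in X_i$ and $y\in X_j$ (when $i=j$, pick $x,y\in X_i$ arbitrarily). The plan is to reduce $(G,\mset_{i,j})$ to a single instance of \NDPdisc or \NDPcyl, solve that instance using Theorem~\ref{thm: routing on a disc and cyl main}, and pull the solution back to $G$; the reductions lose a factor $\poly(\Delta_0)$ in the value and a constant factor in the number of demand pairs. The case split is on $d(x,y)$: if $d(x,y)\le c\Delta_0$ for a suitable constant $c$ (which in particular always holds when $i=j$), we use \NDPdisc; if $d(x,y)> c\Delta_0$, we use \NDPcyl. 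A preliminary step is to pass from $\mset_{i,j}$ to a subset $\mset'_{i,j}\subseteq\mset_{i,j}$ with $|\mset'_{i,j}|\ge 5|\mset_{i,j}|/6$: the enclosures $\{D_t\}$ of terminals inside a single cluster can nest or overlap, and the depth-$\Theta(\Delta_0)$ shells we are about to build around them can interfere, so we first build a bounded-degree conflict graph on $\mset_{i,j}$ (whose edges record a source lying inside another terminal's shell-disc, or two destination shell-discs being non-disjoint, exactly as in the conflict-graph arguments of Section~\ref{sec: case 1}) and take $\mset'_{i,j}$ to be a large independent set. After this cleanup the shell-discs around the destinations of $\mset'_{i,j}$ are disjoint, and no source of $\mset'_{i,j}$ lies in any such disc.

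\textbf{Building the auxiliary instance.} Using Claims~\ref{clm:extend-by-one} and~\ref{clm:cut-in-planar-graph} in the same spirit as the proof of Theorem~\ref{thm: build enclosures}, I would construct, in the disc case, a single disc $D$ on the sphere with a $G$-normal boundary curve $C$ and with $G[V(D)]$ connected, containing the enclosure $D_t$ of every terminal $t$ participating in $\mset'_{i,j}$; this is possible because, by Observation~\ref{obs: weak-triangle-inequality} and the cluster bounds of Theorem~\ref{thm: terminal clusters}, all these terminals lie within distance $O(\Delta_0)$ of each other. In the cylinder case I would instead build two \emph{disjoint} discs $D_1,D_2$ with $G$-normal boundaries $C_1,C_2$ and connected interiors, with $D_1$ containing the enclosures of the sources (which lie in $X_i$) and $D_2$ containing the enclosures of the destinations (which lie in $X_j$); disjointness is guaranteed by $d(x,y)>c\Delta_0$ together with the intra-cluster diameter bound and the fact $d(t,t')\ge 5\Delta$ for $t\in X_i,t'\in X_j$. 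Next, for each terminal $t$ relevant to the chosen disc I would build a shell of depth $\Theta(\Delta_0)$ around $t$ contained in that disc (Observation~\ref{obs: cycles of shells disjoint} ensures shells of terminals in different clusters do not meet), then, using the $\alphawl$-well-linkedness of $\tset$ in $G$ and a Menger/max-flow argument, route the terminals to the boundary curve(s) inside the disc(s) via node-disjoint paths, re-route those paths to be monotone with respect to the corresponding shells using Theorem~\ref{thm: monotonicity for shells} (so that the cyclic order in which they meet $C$, resp. $C_1,C_2$, is controlled and bounded), and relocate each routed terminal to the endpoint of its path on the boundary. Deleting the interiors of the disc(s) from the sphere and keeping the rest of $G$ unchanged produces a planar graph $\hat G$ drawn in the complementary disc (resp.\ cylinder), with $\hmset$ the set of demand pairs of $\mset'_{i,j}$ both of whose terminals were relocated; after discarding a further constant fraction if necessary, $|\hmset|=\Omega(|\mset_{i,j}|)$, and the relocated terminals lie on $C$ (resp.\ on $C_1$ and $C_2$). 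This is an instance of \NDPdisc (resp.\ \NDPcyl).

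\textbf{Faithfulness and conclusion.} The two directions of the reduction are: (i) given a node-disjoint routing of $m$ demand pairs of $\hmset$ in $\hat G$, one lifts it to a routing of $\Omega(m/\poly(\Delta_0))$ demand pairs in $G$ by extending each path through the reserved monotone paths inside the disc(s) to the true terminal, resolving congestion among the at most $|\hmset|$ extension paths inside each disc at the cost of a $\poly(\Delta_0)$ factor; and (ii) $\opt(\hat G,\hmset)\ge \opt(G,\mset'_{i,j})/\poly(\Delta_0)$, obtained by cutting each path of an optimal $G$-routing at the first time it leaves the disc(s) and reconnecting the two pieces to the relocated terminals along the reserved paths, again losing only $\poly(\Delta_0)$ thanks to the shell-monotonicity and the nesting structure of the shell-discs established via Theorem~\ref{thm: curves around CC's} and Claim~\ref{claim: partition the forest}. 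Applying Theorem~\ref{thm: routing on a disc and cyl main} to $(\hat G,\hmset)$ gives a routing of $\Omega(\opt(\hat G,\hmset)/\log k)$ pairs there, which lifts to a routing of $\Omega\!\big(\opt(G,\mset'_{i,j})/(\Delta_0^{O(1)}\log k)\big)$ pairs in $G$. Separately, the construction directly yields a node-disjoint routing of $\Omega(\alphawl|\hmset|/\Delta_0)$ pairs in $\hat G$ (route the relocated sources to the relocated destinations across the cylinder, resp.\ within the disc, using well-linkedness of the relocated terminal set), which lifts to $\Omega(\alphawl|\mset'_{i,j}|/\Delta_0^2)$ pairs in $G$. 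Outputting the better of the two routings, and absorbing the $\poly(\Delta_0)$ loss from the reduction, the $\log k$ loss from Theorem~\ref{thm: routing on a disc and cyl main}, and the $\poly\log n$ loss from the forest decompositions used in the cleanup and in Theorem~\ref{thm: curves around CC's} into the constants $c_1,c_2$ and the exponent $8$, we obtain $|\pset_{i,j}|\ge\min\Big\{\frac{\opt(G,\mset'_{i,j})}{c_1\Delta_0^8\log^3 n},\ \frac{\alphawl\cdot|\mset'_{i,j}|}{c_2\Delta_0^2}\Big\}$, as claimed.

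\textbf{Main obstacle.} The delicate point is direction (ii): an optimal routing in $G$ can use the interiors of the discs arbitrarily, and the relocation map scrambles which terminal of $X_i$ ends up at which vertex of $C_1$, so pushing the routing into $\hat G$ without blowing up congestion is not automatic. Controlling this is exactly where the monotonicity of the relocation paths with respect to the shells, and the laminar structure of the shell-discs $\eta(R)$ from Theorem~\ref{thm: curves around CC's}, are needed, and it is the source of the large polynomial dependence on $\Delta_0$ in the statement. A secondary subtlety is ensuring that the conflict-graph cleanup, which must simultaneously control enclosure nesting and shell-disc overlaps, only costs a $1/6$ fraction of $\mset_{i,j}$ — this should follow from bounded out-degree of the relevant conflict graphs, as in Section~\ref{sec: case 1}.
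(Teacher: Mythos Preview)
Your high-level strategy---split on $d(x,y)$, reduce to \NDPdisc or \NDPcyl, and absorb $\poly(\Delta_0)\cdot\log$ losses---matches the paper, but the implementation you sketch has real gaps.

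\textbf{The cleanup step does not work as described.} You propose to build depth-$\Theta(\Delta_0)$ shells around individual destination terminals and pass to an independent set in a conflict graph whose edges record overlapping destination shell-discs. But all destinations lie in the single heavy cluster $X_j$, where pairwise distances are at most $\Delta_0$ (Theorem~\ref{thm: terminal clusters}); by Observation~\ref{obs: cycles of shells disjoint}, any two depth-$\Theta(\Delta_0)$ shells around such terminals must interact. The conflict graph is essentially complete and no constant-fraction independent set exists. The paper never builds shells around individual terminals in Case~2: it builds a \emph{single} shell of depth $O(\Delta_0)$ around the representative $x$ (and, in Subcase~2a, a second around $y$) and maps every terminal of the cluster to vertices on the cycles of that one shell via the functions $\beta,\beta'$.

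\textbf{The origin of $\mset'_{i,j}$ and of the second branch in the $\min$.} In the paper, Subcase~2a takes $\mset'_{i,j}=\mset_{i,j}$ with no pruning at all. The $5/6$ loss and the $\alphawl|\mset'_{i,j}|/(c_2\Delta_0^2)$ branch both arise only in Subcase~2b, and for a reason your sketch does not address: the discs $\eta(R)$ from Theorem~\ref{thm: curves around CC's} can contain entire demand pairs, so a path routing $(s,t)$ need not cross $\gamma(R)$, which breaks exactly the faithfulness direction you flag as the main obstacle. The fix is (i) a careful shell construction ensuring each $\eta(R)$ contains at most $\tau^*=O(\Delta_0/\alphawl)$ terminals (Theorem~\ref{thm: case 2b shell}), and (ii) a greedy routing of ``close'' pairs lying inside some $\eta^\circ(R)$, one per disc per level (Theorem~\ref{thm: routing close pairs}). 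If this greedy routing already achieves $\Omega(\alphawl|\mset_{i,j}|/\Delta_0^2)$ paths, we output it and the second branch fires; otherwise, since each pair routed forces the discard of at most $\tau^*$ others, at most $|\mset_{i,j}|/6$ pairs are discarded and the survivors satisfy the crossing property~(P1) needed for the reduction. Your proposed source of the second branch---``route relocated sources to relocated destinations using well-linkedness''---does not work: well-linkedness connects arbitrary equal-size subsets, not a prescribed matching, so it does not by itself route any specific demand pairs.

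\textbf{The inner-to-outer-cycle reduction in Subcase~2b.} After the terminal mapping, sources sit on one shell cycle $Z_{h'}(x)$ and destinations on another $Z_{h''}(x)$; to reduce to a single disc you must move the sources to $Z_{h''}(x)$. The paper does this with an explicit \emph{staircase} construction through the shell components and proves that any staircase intersects at most $O(\Delta_0^2)$ paths of any node-disjoint routing (Lemmas~\ref{lem: intersection of paths and staircases} and~\ref{lem: few path from each}); this, together with the two mapping theorems, is where the $\Delta_0^8\log^3 n$ factor comes from. Your sketch relocates terminals via ``monotone paths'' but gives no bound on how many optimal paths a single relocation path can meet, which is precisely what the staircase analysis supplies.
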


Before we prove this theorem, we show that it concludes the proof of Theorem~\ref{thm: main} for Case 2. 
Let set $\pi$ contain all pairs $(i,j)$ with $1\leq i,j\leq q$, such that $|\mset_{i,j}|\geq 0.1|\mset|/q^2$, and let $\tilde \mset=\bigcup_{(i,j)\in \pi}\mset_{i,j}$. Since the total number of heavy demand pairs is at least $0.7|\mset|$, it is easy to verify that $|\tilde \mset|\geq 0.6|\mset|$.

We apply Theorem~\ref{thm: main for Case 2}, to compute, for each $(i,j)\in \pi$, the subset $\mset'_{i,j}\subseteq \mset_{i,j}$ of at least $5|\mset_{i,j}|/6$ demand pairs and the corresponding set $\pset_{i,j}$ of paths routing a subset of the demand pairs in $\mset'_{i,j}$.
Let $\tmset'\subseteq \tmset$ be the set of all demand pairs in $\bigcup_{(i,j)\in \pi}\mset'_{i,j}$. Then:

\[|\tmset'|=\sum_{(i,j)\in \pi}|\mset'_{i,j}|\geq \sum_{(i,j)\in \pi} 5|\mset_{i,j}|/6=5|\tmset|/6\geq |\mset|/2.\]

 If, for any $(i,j)\in \pi$, we obtain a solution with $|\pset_{i,j}|\geq \frac{W}{2^{13}\cdot \alphasc\cdot c_1c_2q^2\Delta_0^8 \log^3 n\cdot \log k}$, then we return the set $\pset_{i,j}$ as our final solution. 
 
 Substituting $\Delta_0=O(\Delta \log n)$, $\Delta=\ceil{W^{2/19}}$, $q=O(W/\tau)$, and $\tau=W^{18/19}$, we get that:

\[\begin{split}
|\pset_{i,j}|&\geq \Omega\left (\frac{W\tau^2}{W^2 \Delta^8 \log^{11}n\log k}\right )\\
&\geq \Omega\left (\frac{W^{36/19}}{W\cdot W^{16/19} \log^{11}n\log k}\right )\\
&=\Omega\left( \frac{W^{1/19}}{\log^{11}n\log k}\right ).
\end{split}
\]

 Otherwise,  for all $(i,j)\in \pi$, the resulting solution $|\pset_{i,j}|< \frac{W}{2^{13}\cdot \alphasc\cdot c_1c_2q^2\Delta_0^8 \log^3 n\cdot \log k}$. We then return the subset $\tilde \mset'$ of demand pairs. As observed above, $|\tmset'|\geq |\mset|/2$, so it is now enough to show that $\opt(G,\tmset')\leq w^*|\tmset'|/8$.

Assume otherwise, and let $\pset^*$ be a solution to instance $(G,\tmset')$, routing a subset $\mset^*\subseteq \tmset'$ of at least $w^*|\tmset'|/8\geq w^*|\mset|/16$ demand pairs. Then there is a pair of indexes $(i,j)\in \pi$, such that $|\mset^*\cap \mset_{i,j}'|\geq \frac{w^*|\mset|}{16q^2}$. Therefore, $\opt(G,\mset_{i,j}')\geq  \frac{w^*|\mset|}{16q^2}$. From Theorem~\ref{thm: main for Case 2}, we compute a set $\pset_{i,j}$ of paths, routing a subset of demand pairs of $\mset'_{i,j}$, with either 
$|\pset_{i,j}|\geq  \frac{\opt(G,\mset'_{i,j})}{c_1\Delta_0^8\log^3 n}$, or $|\pset_{i,j}|\geq \frac{\alphawl\cdot |\mset_{i,j}'|}{c_2\Delta_0^2}$.
In the former case,
\[|\pset_{i,j}|\geq \frac{\opt(G,\mset'_{i,j})}{c_1\Delta_0^8 \log^3 n}\geq \frac{w^*|\mset|}{16c_1q^2\Delta_0^8 \log^3 n}=\frac{W}{16c_1q^2\Delta_0^8 \log^3 n},\]

while in the latter case, observe that $|\mset_{i,j}'|\geq 5|\mset_{i,j}|/6\geq |\mset|/(12q^2)$ from the definition of $\pi$ and $\tilde{M}$. Therefore,

\[|\pset_{i,j}|\geq \frac{\alphawl \cdot |\mset_{i,j}'|}{c_2\Delta_0^2}\geq \frac{w^* \cdot|\mset|}{12\cdot 512\cdot \alphasc\cdot c_2q^2\Delta_0^2\log k}>\frac{W}{2^{13}\cdot \alphasc\cdot c_1c_2q^2\Delta_0^8 \log^2 n \log k},\]

 a contradiction.

From now on we focus on proving Theorem~\ref{thm: main for Case 2}. We fix a pair of indices $1\leq i,j\leq q$. In order to simplify the notation, we denote $\mset_{i,j}$ by $\nset$, $X_i$ by $X$ and $X_j$ by $Y$. Our goal is to compute a subset $\nset'\subseteq \nset$ of at least $5|\nset|/6$ demand pairs, together with a set $\pset$ of at least  $\min\set{\Omega\left( \frac{\opt(G,\nset')}{\Delta_0^8\log^3 n}\right ),\Omega\left(\frac{\alphawl|\nset'|}{\Delta_0^2}\right )}$ disjoint paths, routing a subset of the demand pairs in $\nset'$.
Let $x\in X$ and $y\in Y$ be any pair of terminals. Recall that for every terminal $t\in \tset(\nset)\cap X$, $d(t,x)\leq \Delta_0$, and for every terminal $t\in \tset(\nset)\cap Y$, $d(t,y)\leq \Delta_0$. We consider two subcases. The first subcase happens when $d(x,y)>5\Delta_0$, and otherwise the second subcase happens. Note that the second subcase includes the case where $X=Y$.

\label{-------------------------------------------subsec: case 2a----------------------------------------}
\subsection{Subcase 2a: $d(x,y)>5\Delta_0$}\label{subsec: Case 2a}

In this case, we set $\nset'=\nset$. We will compute a set $\pset$ of at least $\Omega\left( \frac{\opt(G,\nset)}{\Delta_0^6\log n}\right )$ node-disjoint paths, routing a subset of the demand pairs in $\nset$.
We start by defining a simpler special case of the problem, and show that we can find a good approximation algorithm for this special case. The special case is somewhat similar to routing on a cylinder, and we solve it by reducing it to this setting. We then show that the general problem in Case 2a reduces to this special case.

\subsubsection*{A Special Case}
Suppose we are given a connected planar graph $\hat G$ embedded on the sphere, and two disjoint simple cycles $Z,Z'$ in $\hat G$. Suppose also that we are given a set $\hmset$ of demand pairs, where all source vertices lie on $Z$ and all destination vertices lie on $Z'$ (we note that the same vertex may participate in a number of demand pairs). Let $D(Z),D(Z')$ be two discs with boundaries $Z$ and $Z'$, respectively, so that $D(Z)\cap D(Z')=\emptyset$. Assume additionally that we are given a closed $\hat G$-normal  curve $C$  of length at most $\Delta$, that is contained in $\dnot(Z)$, so that for every vertex $v\in Z$, there is a $\hat G$-normal curve $\gamma(v)$ of length at most $2\Delta_0$ connecting $v$ to a vertex of $C$, and $\gamma(v)$ is internally disjoint from $Z$ and $C$. Similarly, assume that we are given a closed $\hat G$-normal curve $C'$ of length at most $\Delta$, that is contained in $\dnot(Z')$, so that for every vertex $v'\in Z'$ there is a $\hat G$-normal curve $\gamma(v')$ of length at most $2\Delta_0$, connecting $v'$ to a vertex of $C'$, and $\gamma(v')$ is internally disjoint from $Z'$ and $C'$ (see Figure~\ref{fig: special case}). This finishes the definition of the special case. The following theorem shows that we can obtain a good approximation for it.

\begin{figure}[h]
 \centering
\scalebox{0.4}{\includegraphics{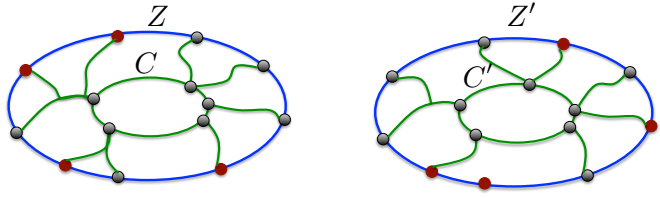}}
\caption{The special case, with the terminals shown in red.\label{fig: special case}}
\end{figure}

\begin{theorem}\label{thm: case 2a - special case}
There is an efficient algorithm, that, given any instance $(\hat G,\hat \mset)$ of the \NDP problem as above,  computes a solution of value at least $\Omega\left(\frac{OPT(\hat G,\hmset)}{\Delta_0^2\log n}\right )$.
\end{theorem}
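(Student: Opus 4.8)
The idea is to reduce the instance $(\hat G,\hmset)$ to an instance of \NDPcyl, apply the $O(\log k)$-approximation algorithm of Theorem~\ref{thm: routing on a disc and cyl main}, and then lift the resulting routing back to $\hat G$, losing only a factor of $\poly(\Delta_0)$ in the process. First I would clean up the instance: contract the disc $D(C)$ (of ``radius'' at most $\Delta$ inside $\dnot(Z)$) into a single super-node and similarly $D(C')$, and then delete everything outside $D(Z)\cup D(Z')$ except a routing skeleton; since the terminals lie on $Z$ and $Z'$ and every vertex of $Z$ (resp.\ $Z'$) is joined to $C$ (resp.\ $C'$) by a $\hat G$-normal curve of length at most $2\Delta_0$, the annular regions between $C$ and $Z$ and between $C'$ and $Z'$ are ``thin'' in the metric sense. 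The curves $\gamma(v)$ give, for each terminal $v\in Z$, a short path-like connection toward $C$; after removing the interiors of the two discs we obtain a cylinder $\Sigma$ whose cuffs $\Gamma_1,\Gamma_2$ are the images of $C$ and $C'$.

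The core step is a \emph{well-linkedness / routing-skeleton} argument showing that $\opt$ is essentially preserved up to $\poly(\Delta_0)$. In one direction this is trivial: any routing in the reduced (cylinder) instance can be extended back through the thin annuli to a routing in $\hat G$, since each terminal sits on $Z$ or $Z'$ and can be pushed to the corresponding cuff along node-disjoint segments of the cycles $Z,Z'$ together with the curves $\gamma(v)$ — here I would use the nested-segments machinery and Lemma~\ref{lem: getting r-split demand pairs on a disc} to partition $\hmset$ into $O(\log n)$ split subsets so that within each subset the short curves $\gamma(v)$ can be realized by genuinely disjoint paths, and the monotonicity tool (Theorem~\ref{thm: monotonicity for shells}) to keep them from tangling. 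In the other direction, given an optimal solution $\pset^*$ in $\hat G$ of value $\opt(\hat G,\hmset)$, each path $P\in\pset^*$ crosses $Z$ (at its source) and $Z'$ (at its destination); restricting $P$ to the part between its last vertex on $Z$ and its first vertex on $Z'$ gives a collection of paths that already live in the cylinder region, and these are node-disjoint, so $\opt(\hat G,\hmset)$ is at most the optimum of a \NDPcyl instance on $\Sigma$ whose demand pairs are the ``projections'' of $\hmset$ to $Z$ and $Z'$. Combining both directions, the cylinder optimum is within a $\poly(\Delta_0)\cdot\poly\log n$ factor of $\opt(\hat G,\hmset)$, and Theorem~\ref{thm: routing on a disc and cyl main} then gives a routing of $\Omega(\opt(\hat G,\hmset)/(\Delta_0^2\log n))$ demand pairs in the cylinder instance, which lifts back to $\hat G$.

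More concretely, I would carry out the steps in this order: (1) reduce to the cylinder $\Sigma$ by contracting $D(C),D(C')$ and choosing the planar drawing with these discs as the two caps; (2) use Lemma~\ref{lem: getting r-split demand pairs on a disc} on $C$ and on $C'$ to split $\hmset$ into $O(\log n)$ groups, and within each group route the terminals from $Z$ to $C$ (and from $Z'$ to $C'$) along disjoint paths built from segments of $Z,Z'$ and the curves $\gamma(v)$, losing a factor $O(\Delta_0)$ per group because each $\gamma(v)$ has length $\le 2\Delta_0$ and so only $O(\Delta_0)$ of them can pairwise conflict at any vertex; (3) apply the \NDPcyl approximation algorithm to the resulting instance; (4) show $\opt$-preservation up to $\poly(\Delta_0)$ in both directions as above; (5) lift the cylinder routing back through the annuli and assemble the final set $\pset$.

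The main obstacle I anticipate is step (2)/(4): making the reduction to the cylinder \emph{genuinely} node-disjoint. The curves $\gamma(v)$ are only guaranteed to be short $\hat G$-normal curves, not disjoint paths, so many of them may pile up on the same few vertices of the thin annulus; controlling this requires carefully combining the split decomposition (so that the sources/destinations within one group are separated into disjoint arcs of $Z$ resp.\ $Z'$), the tightness/monotonicity of the concentric-cycle structure around $C$ and $C'$, and a counting argument bounding conflicts by $O(\Delta_0)$ since each curve has length $O(\Delta_0)$. Getting the bookkeeping right so that the total loss is $\poly(\Delta_0)\cdot\poly\log n$ rather than something larger — and in particular matching the exponent claimed in Theorem~\ref{thm: main for Case 2} downstream — is where the real work lies; the rest is routine once the cylinder instance is set up.
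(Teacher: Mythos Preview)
Your high-level reduction to \NDPcyl is right, but you have the geometry of the reduction backwards, and this causes you to miss the actual technical content. The paper creates the cylinder by deleting the \emph{interiors of $D(Z)$ and $D(Z')$}, not by contracting $D(C)$ and $D(C')$. Since the terminals already lie on $Z$ and $Z'$, the resulting graph $\hat G'\subseteq\hat G$ is immediately an \NDPcyl instance with the \emph{same} demand set $\hmset$; no pushing of terminals through annuli is required, and the ``lifting back'' direction is literally trivial because $\hat G'\subseteq\hat G$. Your steps (2) and (5), and the invocations of Lemma~\ref{lem: getting r-split demand pairs on a disc} and Theorem~\ref{thm: monotonicity for shells}, are not needed at all.

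The real work is the inequality $\opt(\hat G',\hmset)\ge\Omega(\opt(\hat G,\hmset)/\Delta_0^2)$, and your argument for it has a gap. Restricting a path $P$ from $s\in Z$ to $t\in Z'$ to the portion between its last $Z$-vertex and its first $Z'$-vertex does produce a path in $\hat G'$, but its endpoints are in general not $s$ and $t$; you obtain a routing of some ``projected'' pairs, not of $\hmset$, so this does not bound $\opt(\hat G',\hmset)$ from below. The paper repairs this with a three-stage argument: (i) discard paths meeting $C\cup C'$ (at most $2\Delta$ of them); (ii) build a conflict digraph on the remaining pairs with an edge $(s,t)\to(s',t')$ whenever $P(s,t)$ meets $V(\gamma(s'))\cup V(\gamma(t'))$, and extract an independent set of size $\Omega(\kappa_0/\Delta_0)$ (each curve has length $\le 2\Delta_0$, so in-degree $\le 4\Delta_0$); the surviving pairs are then automatically non-crossing; (iii) thin the survivors by a further factor $8\Delta_0$, reserving for each remaining pair a private segment $\mu_r\subseteq Z$ around its source and $\mu_r'\subseteq Z'$ around its destination, and prove (Lemma~\ref{lemma: routing and segment intersections}) that the corresponding path meets $Z$ only inside $\mu_r$ and $Z'$ only inside $\mu_r'$. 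This confinement lemma is exactly where $C$, $C'$ and the curves $\gamma(\cdot)$ enter: one builds a separating curve from $D(C)$, $\gamma(s_\ell)$, a prefix of $P$, and $\gamma(v)$, and counts how many disjoint paths can cross it. Once confinement holds, each path is rerouted along $\mu_r$, its middle portion in $\hat G'$, and $\mu_r'$, recovering the correct endpoints and yielding $\Omega(\kappa_0/\Delta_0^2)$ pairs routed in $\hat G'$. This conflict-graph-plus-spacing-plus-confinement mechanism is the heart of the proof and is absent from your plan.
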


\begin{proof}
The algorithm is very simple: we reduce the problem to routing on a cylinder, by creating two holes in the sphere. The first hole is $D(Z)$: we delete all edges and vertices that appear inside the disc, except for the edges and the vertices of $Z$. The second hole is $D(Z')$: we similarly delete all edges and vertices that lie inside the disc, except for those lying on its boundary. Let $\hat G'$ be the resulting graph. We then apply the $O(\log n)$-approximation algorithm for \NDPcyl to the resulting problem, to obtain a routing of at least $\Omega\left(\frac{\opt(\hat G',\hmset)}{\log n}\right)$ demand pairs. In order to complete the analysis of the algorithm, it is enough to prove that $\opt(\hat G',\hmset)\geq \Omega\left(\frac{\opt(\hat G,\hmset)}{\Delta_0^2}\right )$.

Notice that we can assume without loss of generality that all curves in set $\set{\gamma(v)\mid v\in V(Z)}$ are mutually non-crossing, and moreover, whenever two curves meet, they continue together. In other words, for all $v,v'\in V(Z)$, $\gamma(v)\cap \gamma(v')$ is a contiguous curve that has a vertex of $C$ as its endpoint. We make a similar assumption for curves in $\set{\gamma(v)\mid v\in V(Z')}$.

Consider the optimal solution to instance $(\hat G,\hmset)$, and let $\pset_0$ be the set of paths in this solution. We will gradually modify the set $\pset_0$ of paths, to obtain path sets $\pset_1,\pset_2,\ldots$, until we obtain a feasible solution to instance $(\hat G',\hmset)$. For every $i\geq 0$, we will denote by $\hmset_i$ the set of the demand pairs routed by $\pset_i$, and by $\kappa_i$ its cardinality.
Recall that $\kappa_0=\opt(\hat G,\hmset)$. We assume that $\kappa_0\geq 512\Delta_0^2$, as otherwise a solution routing a single demand pair gives the desired approximation, and such a solution exists in $\hat G'$, as it must be connected.

We delete from $\pset_0$ all paths that use the vertices of $C$ or $C'$. Since both curves have length at most $\Delta$, we delete at most $2\Delta$ paths in this step. Let $\pset_1$ be the resulting set of paths, and $\hmset_1$ the set of the demand pairs routed by $\pset_1$.

Our next step is to build a conflict graph $H$. Its set of vertices, $V(H)=\set{v(s,t)\mid (s,t)\in \hmset_1}$. There is a directed edge from $v(s,t)$ to $v(s',t')$, iff the path $P(s,t)\in \pset_1$ routing the pair $(s,t)$ contains a vertex of $V(\gamma(s'))\cup V(\gamma(t'))$, and we say that there is a conflict between $(s,t)$ and $(s',t')$ in this case. Since we assume that the paths in $\pset_1$ are node-disjoint, and since $|V(\gamma(s'))|,|V(\gamma(t'))|\leq 2\Delta_0$ for all $(s',t')\in \hmset_1$, the in-degree of every vertex in $H$ is at most $4\Delta_0$. Therefore, the average degree (including the incoming and the outgoing edges) of every induced sub-graph of $H$ is at most $8\Delta_0$, and there is an independent set $I\subseteq V(H)$ of cardinality at least $\frac{|\pset_1|}{8\Delta_0+1}\geq \frac{|\pset_0|}{16\Delta_0}$ in $H$.

Let $\hmset_2$ be the set of all demand pairs $(s,t)$ with $v(s,t)\in I$, and let $\pset_2\subseteq \pset_1$ be the set of paths routing the demand pairs in $\hmset_2$. Recall that the paths in $\pset_2$ are disjoint from $C\cup C'$. Moreover, if $P(s,t)\in \pset_2$ is the path routing the pair $(s,t)\in \hmset_2$, then for every demand pair $(s',t')\neq (s,t)$ in $\hmset_2$, path $P(s,t)$ is disjoint from both $\gamma(s')$ and $\gamma(t')$. It is now easy to verify that the demand pairs in $\hmset_2$ are non-crossing, that is, we can find an ordering $\hmset_2=\set{(s_1,t_1),\ldots,(s_{\kappa_2},t_{\kappa_2})}$ of the demand pairs in $\hmset_2$, so that $s_1,\ldots,s_{\kappa_2}$ appear in this counter-clock-wise order on $Z$, while $t_1,\ldots,t_{\kappa_2}$ appear in this clock-wise order on $Z'$.

Let $\kappa_3=\floor{\frac{\kappa_2}{8\Delta_0}}-1$, and let $\hmset_3=\set{(s_{8\Delta_0r},t_{8\Delta_0r})\mid 1\leq r\leq \kappa_3}$. In other words, we space the demand pairs in $\hmset_2$ out, by adding one in $8\Delta_0$ such pairs to $\hmset_3$. Let $\pset_3\subseteq \pset_2$ be the set of paths routing the demand pairs in $\mset_3$, so $|\pset_3|\geq \frac{|\pset_2|}{32\Delta_0}\geq \frac{|\pset_0|}{512\Delta_0^2}$. Our final step is to show that all demand pairs in $\hmset_3$ can be routed in graph $\hat G'$ via node-disjoint paths. In order to do so, for each such demand pair $(s_{8\Delta_0r},t_{8\Delta_0r})$, we define a segment $\mu_r$ of $Z$ containing $s_{8\Delta_0r}$, and a segment $\mu'_r$ of $Z'$ containing $t_{8\Delta_0r}$, as follows. For convenience, denote $8\Delta_0r$ by $\ell$. The first segment, $\mu_r$, is simply the segment of $Z$ from $s_{\ell-4\Delta_0}$ to $s_{\ell+4\Delta_0-1}$, as we traverse $Z$ in the counter-clock-wise order. The second segment, $\mu'_r$, is the segment of $Z$ from $t_{\ell-4\Delta_0}$ to $t_{\ell+4\Delta_0-1}$, as we traverse $Z'$ in  the clock-wise order. It is immediate to verify that all segments of $Z$ in $\set{\mu_r\mid 1\leq r\leq \kappa_3}$ are mutually disjoint, and the same holds for all segments of $Z'$ in $\set{\mu_r'\mid 1\leq r\leq \kappa_3}$.
The crux of the analysis is the following lemma.

\begin{lemma}\label{lemma: routing and segment intersections}
Let $(s_{{8\Delta_0 r}},t_{{8\Delta_0 r}})\in \mset_3$ be any demand pair, and let $P\in \pset_3$ be the path routing it. Then $P\cap Z\subseteq \mu_{{r}}$, and $P\cap Z'\subseteq \mu_{r}'$.
\end{lemma}

Before we prove this lemma, we show that we can use it to obtain a routing of the demand pairs in $\hmset_3$ in graph $\hat G'$ via node-disjoint paths. Let $(s_{\ell},t_{\ell})\in \hmset_3$ be any demand pair (where $\ell=8\Delta_0r$ for some $1\leq r\leq \kappa_3$), and let $P_{\ell}$ be the path routing $(s_{\ell},t_{\ell})$ in $\pset_3$, that we view as directed from $s_{\ell}$ towards $t_{\ell}$. Clearly, $P_{\ell}$ intersects both $\mu_{r}$ and $\mu'_{r}$. Let $v_{\ell}$ be the last vertex of $P_{\ell}$ lying on $\mu_{r}$. Then there is some other vertex appearing on $P_{\ell}$ after $v_{\ell}$ that belongs to $\mu_{r}'$. We let $v'_{\ell}$ be the first such vertex on $P_{\ell}$, and we let $P'_{\ell}$ be the segment of $P_{\ell}$ between $v_{\ell}$ and $v'_{\ell}$. Let $P^*_{\ell}$ be the path obtained as follows: we start with a segment of $\mu_{r}$ between $s_{\ell}$ and $v_{\ell}$; we then follow $P'_{\ell}$ to $v'_{\ell}$, and finally we use a segment of $\mu'_{r}$ between $v'_{\ell}$ and $t_{\ell}$. From Lemma~\ref{lemma: routing and segment intersections}, it is immediate to verify that the resulting paths in set $\set{P^*_{\ell}\mid  \ell=8\Delta_0r; 1\leq r\leq \kappa_3}$  are completely disjoint, contained in $\hat G'$, and they route all demand pairs in $\hmset_3$. It now remains to prove Lemma~\ref{lemma: routing and segment intersections}.

\begin{proof}
Fix some demand pair $(s_{{8\Delta_0 r}},t_{{8\Delta_0 r}})\in \mset_3$, and let $P\in \pset_3$ be the path routing it. We show that 
$P\cap Z\subseteq \mu_{{ r}}$. The proof that $P\cap Z'\subseteq \mu_{{r}}'$ is symmetric. For convenience, we denote $8\Delta_0r$ by $\ell$ from now on.

Assume otherwise, and let $v$ be the first vertex on $P$ that belongs to $Z\setminus  \mu_{r}$.  Let $P'$ be the sub-path of $P$ from its first vertex to $v$. Consider the planar embedding of $\hat G$, where we fix any face contained in $D(Z')$ as the outer face. In this planar embedding, denote by $Y$ the union of $D(C),\gamma(s_{\ell}),P'$, and $\gamma(v)$, and let $R$ be the outer boundary of $Y$ (notice that $P'$ may intersect $\gamma(s_{\ell})$, and that $\gamma(s_{\ell})$, $\gamma(v)$ are not necessarily disjoint).  

\begin{figure}[h]
 \centering
\scalebox{0.3}{\includegraphics{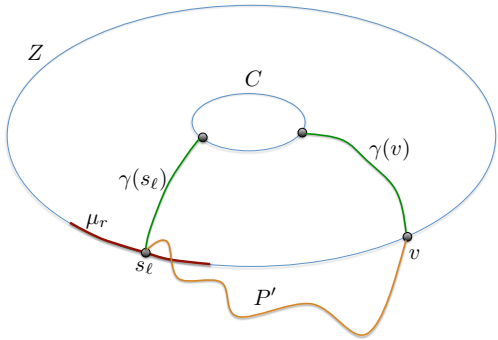}}\caption{Constructing the curve $R$\label{fig: 1a end}}
\end{figure}

Notice that, since there are no conflicts in $\pset_2$, curve $R$ does not cross any curve in $C'\cup\set{\gamma(t_h)\mid 1\leq h\leq \kappa_2}$, and so all destination vertices of the demand pairs in $\hmset_2$ lie on the outside of $R$. Let $S_1=\set{s_{\ell-4\Delta_0},\ldots,s_{\ell-1}}$, and $S_2=\set{s_{\ell+1},\ldots, s_{\ell+4\Delta_0-1}}$. 
Denote $\Gamma_1=\set{\gamma(s_i)\mid s_i\in S_1}$ and $\Gamma_2=\set{\gamma(s_i)\mid s_i\in S_2}$. Since the curves in set $\set{\gamma(u)\mid u\in V(Z)}$ are non-crossing, if $\sigma,\sigma'$ are the two segments of $Z$ whose endpoints are $s_{\ell}$ and $v$, then all vertices of $S_1\setminus\set{v}$ lie on one of the segments (say $\sigma$), while all vertices of $S_2\setminus\set{v}$ lie on the other segment. Moreover, since path $P'$ cannot cross any curve in $\Gamma_1\cup \Gamma_2$, either all sources of $S_1$, or all sources of $S_2$ lie inside the curve $R$ - let us assume that it is the former.

 All sources of $S_1$ are separated by $R$ from their destinations vertices, and yet all corresponding demand pairs are routed by $\pset_2$. Therefore, at least $4\Delta_0-2$ paths in $\pset_2$ must cross the curve $R$. Recall that none of these paths can cross $C$, $P'$, or $\gamma(s_{\ell})$. Therefore, all these paths must cross $\gamma(v)$. But since the length of $\gamma(v)$ is at most $2\Delta_0$, and the paths are node-disjoint, this is impossible.
\end{proof}

\end{proof}

\subsubsection*{Completing the Proof}
We now complete the proof of Theorem~\ref{thm: main for Case 2} for Case 2a, by reducing it to the special case defined above.

We assume that $\opt(G,\nset)>2^{13}\Delta_0^4$, since otherwise we can route a single demand pair and obtain a valid solution.
 We denote by $S$ and $T$ the sets of all source and all destination vertices of the demand pairs in $\nset$, respectively.

Our first step is to construct shells $\zset(x)=(Z_1(x),\ldots,Z_{2\Delta_0}(x))$ and $\zset(y)=(Z_1(y),\ldots,Z_{2\Delta_0}(y))$ of depth $2\Delta_0$ around $x$ and $y$, respectively. We would like to ensure that disc $D(Z_{2\Delta_0}(x))$ contains all terminals of $X$ and no terminals of $Y$, and similarly, disc $D(Z_{2\Delta_0}(y))$ contains all terminals of $Y$ and no terminals of $X$. In order to ensure this, when constructing the shell $\zset(x)$, we let the face $F_x$ (that is viewed as the outer face in the plane embedding of $G$ when constructing the shell) be the face incident on the terminal $y$, and similarly we let $F_y$ be the face incident on $x$ (recall that $d(x,y)>5\Delta_0$, so this choice of faces is consistent with the requirement that the shell depth is bounded by $\min\set{\dface(v,C_x)}-1$ over all vertices $v$ lying on the boundary of $F_x$). Let $\zset(x)=(Z_1(x),\ldots,Z_{2\Delta_0}(x))$ and $\zset(y)=(Z_1(y),\ldots,Z_{2\Delta_0}(y))$ be the resulting shells. 

\begin{claim}\label{claim: shells are good}
Disc $D(Z_{2\Delta_0}(x))$ contains all terminals of $X$ and no terminals of $Y$, and similarly, disc $D(Z_{2\Delta_0}(y))$ contains all terminals of $Y$ and no terminals of $X$. Moreover, $D(Z_{2\Delta_0}(x))\cap D(Z_{2\Delta_0}(y))=\emptyset$. 
\end{claim}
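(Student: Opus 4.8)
The plan is to prove the three assertions in order, relying throughout on the two triangle-type inequalities available to us: the weak triangle inequality for $d$ (Observation~\ref{obs: weak-triangle-inequality}) and the short-curve Property~(\ref{prop: short curve}) of shells, which says that every vertex $v\in V(Z_h(t))$ is joined to $C_t$ by a $G$-normal curve of length at most $h+1$ lying inside $D(Z_h(t))$. First I would establish that $D(Z_{2\Delta_0}(x))$ contains no terminal of $Y$ (and symmetrically $D(Z_{2\Delta_0}(y))$ contains no terminal of $X$). Suppose for contradiction that some $t\in \tset(\nset)\cap Y$ lies in $D(Z_{2\Delta_0}(x))$. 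Recall we chose $F_x$ to be the face incident on $y$, so $y$ lies outside $D(Z_{2\Delta_0}(x))$ by Property~(\ref{prop: Yt out}) (the analogue for shells built with respect to $F_x$), hence $D(Z_{2\Delta_0}(x))$ separates $t$ from $y$. Therefore any $G$-normal curve realizing $d(t,y)$ must cross $Z_{2\Delta_0}(x)$ at some vertex $v$; by Property~(\ref{prop: short curve}) $v$ connects to $C_x$ by a $G$-normal curve of length at most $2\Delta_0+1$, so $d(x,t)\le d(t,y)+2\Delta_0+1+\Delta/2$, wait — more directly: concatenating the short curve from $v$ to $C_x$ with the portion of the $d(t,y)$-curve from $t$ to $v$ gives $d(x,t)\le 2\Delta_0 + d(t,v\text{-part})$, which is at most $2\Delta_0 + d(t,y)$. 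But $t\in \tset(\nset)\cap Y$ gives $d(t,y)\le \Delta_0$ (this is the defining property of $Y=X_j$ from Theorem~\ref{thm: terminal clusters}, combined with $y\in Y$), so $d(x,t)\le 3\Delta_0$. Combined with $d(t,y)\le\Delta_0$ and Observation~\ref{obs: weak-triangle-inequality}, $d(x,y)\le d(x,t)+d(t,y)+\Delta/2\le 4\Delta_0+\Delta/2<5\Delta_0$ (using $\Delta \le 2\Delta_0/{20}$, so $\Delta/2 \le \Delta_0/20$), contradicting $d(x,y)>5\Delta_0$.

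Next I would show $D(Z_{2\Delta_0}(x))$ contains every terminal of $X$. Fix $t\in \tset(\nset)\cap X$; we have $d(t,x)\le \Delta_0$. Assume $t\notin D(Z_{2\Delta_0}(x))$. Since $x\in D_x\subseteq D(Z_{2\Delta_0}(x))$ and $t$ is outside, and $t$ lies in the connected graph $G\setminus\tilde U$ situation is delicate, so instead: consider a $G$-normal curve $\gamma$ realizing $d(t,x)$; it starts at a vertex of $C_t$, ends at a vertex of $C_x\subseteq D_x$, and since $t$ is outside $D(Z_{2\Delta_0}(x))$ while $C_x$ is inside it, $\gamma$ must cross all cycles $Z_1(x),\dots,Z_{2\Delta_0}(x)$, and hence contains at least $2\Delta_0$ distinct vertices (one from each cycle, all disjoint). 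Thus $d(t,x)=\ell(\gamma)-1\ge 2\Delta_0-1>\Delta_0$, a contradiction. The same argument, with $x,X$ replaced by $y,Y$, handles $D(Z_{2\Delta_0}(y))$.

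Finally, disjointness of the two discs. From the two containments just proven, $D(Z_{2\Delta_0}(x))$ contains all of $X$ (in particular $x$) and none of $Y$, while $D(Z_{2\Delta_0}(y))$ contains all of $Y$ (in particular $y$) and none of $X$; so neither disc contains the other's center, hence neither disc is contained in the other. By Observation~\ref{obs: cycles of shells disjoint}, since $d(x,y)>5\Delta_0>2\Delta_0+2\Delta_0+1$, all cycles $Z_h(x)$ are disjoint from all cycles $Z_{h'}(y)$; in particular the boundary cycles $Z_{2\Delta_0}(x)$ and $Z_{2\Delta_0}(y)$ are disjoint. If $D(Z_{2\Delta_0}(x))\cap D(Z_{2\Delta_0}(y))\ne\emptyset$ while neither contains the other and their boundaries are disjoint, the only remaining topological possibility on the sphere is that their union is the whole sphere with each boundary lying inside the other disc — but then $Z_{2\Delta_0}(x)\subseteq D(Z_{2\Delta_0}(y))$, and by Property~(\ref{prop: short curve}) a vertex of $Z_{2\Delta_0}(x)$ connects to $C_x$ within distance $2\Delta_0$ and, being inside $D(Z_{2\Delta_0}(y))$ while $y$'s boundary-face terminal is outside, one runs the same crossing argument as in the first paragraph to force $d(x,y)$ small, a contradiction. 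The main obstacle I expect is this last topological case analysis on the sphere (the ``both boundaries inside the other disc'' degenerate configuration); it is exactly the delicate point handled in Observation~\ref{obs: cycles of shells disjoint}'s proof and in Lemma~\ref{lem: disc disjointness}, so I would model the argument closely on the corresponding step of Lemma~\ref{lem: disc disjointness}, where the identical situation (discs $D^*(t_1),D^*(t_2)$) is resolved by producing a short separating curve and contradicting well-linkedness or the distance bound.
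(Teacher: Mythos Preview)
Your argument has a recurring gap in both the first and second paragraphs: you conflate the terminal $t$ with vertices of its enclosure boundary $C_t$. Recall that $d(t,x)$ is the length of the shortest $G$-normal curve connecting a vertex of $C_t$ to a vertex of $C_x$, not connecting $t$ itself to $x$. So when you write, in the second paragraph, that ``since $t$ is outside $D(Z_{2\Delta_0}(x))$ while $C_x$ is inside it, $\gamma$ must cross all cycles $Z_1(x),\dots,Z_{2\Delta_0}(x)$,'' this does not follow: the curve $\gamma$ starts at some $v\in V(C_t)$, and the assumption $t\notin D(Z_{2\Delta_0}(x))$ says nothing about where $v$ sits. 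It is entirely possible that $v$ lies inside the disc while $t$ lies outside (the enclosure $D_t$ can straddle the boundary $Z_{2\Delta_0}(x)$, or even wrap around it on the sphere). The same issue afflicts your first paragraph: the curve realizing $d(t,y)$ starts on $C_t$ and ends on $C_y$, so knowing $t\in D(Z_{2\Delta_0}(x))$ does not by itself force that curve to cross $Z_{2\Delta_0}(x)$.

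The paper handles exactly this subtlety. For the ``all of $X$ inside'' direction, it argues that \emph{every} vertex $v'\in C_t$ is joined to $C_x$ by a curve of length at most $\Delta_0+\Delta$ (a segment of $C_t$ followed by $\gamma$), and hence---by your cycle-counting idea, applied correctly---all of $C_t$ lies in $D(Z_{\Delta_0+\Delta}(x))\subseteq D(Z_{2\Delta_0}(x))$. Now the entire boundary $C_t$ is inside the disc while $t\in D_t$ is assumed outside; on the sphere this forces $D_t$ to contain the outer face $F_x$, hence $y\in D_t$, whence $d(t,y)=1$, contradicting $d(t,y)\ge 5\Delta$ (as $t\in X$, $y\in Y$, $X\ne Y$). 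The paper then derives the ``no terminals of $Y$'' assertion not by a separate direct argument but as an immediate consequence of the disjointness $D(Z_{2\Delta_0}(x))\cap D(Z_{2\Delta_0}(y))=\emptyset$ together with ``all of $Y$ inside the $y$-disc.'' Your third paragraph (disjointness) is along the right lines and matches the paper's approach, but it rests on the first two parts, so the gap above needs to be closed first; the missing ingredient is precisely this topological wraparound step bridging ``$C_t$ is inside'' to ``$t$ is inside.''
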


\begin{proof}
Let $t\in X$ be any terminal, and assume for contradiction that $t\not \in D(Z_{2\Delta_0}(x))$. Recall that $d(x,t)\leq \Delta_0$, and so there is a $G$-normal curve $\gamma$ of length at most $\Delta_0$, connecting some vertex $v\in C_t$ to some vertex $u\in C_x$. Then every vertex $v'\in C_t$ has a $G$-normal curve $\gamma(v')$ of length at most $\Delta_0+\Delta$ connecting it to $u$. Therefore, all vertices of $C_t$ lie in the disc  $D(Z_{\Delta_0+\Delta}(x))$ (if some vertex $v'\in C_t$ does not lie in this disc, then there are $\Delta_0+\Delta$ disjoint cycles separating $v'$ from $u$, a contradiction). We conclude that all vertices of $C_t$ lie in disc $D(Z_{2\Delta_0}(x))$, but $t$ does not lie in that disc. This can only happen if the outer face $F_x\subseteq D_t$, meaning that $y$ lies in $D_t$. But then, from our definition of enclosures, $d(t,y)=1$ must hold. However, $t\in X$, $y\in Y$, and $X\neq Y$, so $d(t,y)\geq 5\Delta$ from the definition of the family $\xset$ of sets of terminals, a contradiction. We conclude that all terminals of $X$ lie in disc $D(Z_{2\Delta_0}(x))$. Using a similar reasoning, all terminals of $Y$ lie in disc $D(Z_{2\Delta_0}(y))$.

 %

In order to complete the proof of the theorem it is now enough to show that $D(Z_{2\Delta_0}(x))\cap D(Z_{2\Delta_0}(y))=\emptyset$. Observe that from Property~(\ref{prop: Yt out}) of the shells, the vertices of $V(D_y)$ all lie outside the disc $D(Z_{2\Delta_0}(x))$, as all such vertices belong to the connected component $Y_x$. Therefore,  $y\in D(Z_{2\Delta_0}(y))\setminus D(Z_{2\Delta_0}(x))$, and similarly $x\in D(Z_{2\Delta_0}(x))\setminus D(Z_{2\Delta_0}(y))$, so neither of the two discs is contained in the other. Assume for contradiction that  the intersection of the two discs is non-empty. Observe first that the boundaries of the two discs cannot intersect. Indeed, assume otherwise, and let $v$ be any vertex lying in $Z_{2\Delta_0}(x)\cap Z_{2\Delta_0}(y)$. Then there are $G$-normal curves $\gamma$ and $\gamma'$ of length at most $2\Delta_0$ each, connecting $v$ to a vertex of $C_x$ and a vertex of $C_y$, respectively, implying that $d(x,y)\leq 4\Delta_0$, a contradiction. Therefore, all vertices of $Z_{2\Delta_0}(y)$ must lie inside the disc $D(Z_{2\Delta_0}(x))$. However, the vertices of $C_y$ lie outside $D(Z_{2\Delta_0}(x))$, and at least one such vertex $v$ can be connected to some vertex of $Z_{2\Delta_0}(y)$ with a $G$-normal curve of length at most $2\Delta_0+1$. This curve must intersect $Z_{2\Delta_0}(x)$ at some vertex that we denote by $u$, and $u$ can be in turn connected to a vertex of $C_x$ by a $G$-normal curve of length at most $2\Delta_0+1$, implying that $d(x,y)<5\Delta_0$, a contradiction.
\end{proof}

For consistency of notation, we will denote $Z_0(x)=C_x$ and $Z_0(y)=C_y$, even though both $C_x$ and $C_y$ are $G$-normal curves and not cycles.

Let $U=\bigcup_{h=0}^{2\Delta_0}V(Z_h(x))$, and let $U'=\bigcup_{h'=0}^{2\Delta_0}V(Z_{h'}(y))$.
For each $1\leq h\leq 2\Delta_0$, let $U_h$ be the set of vertices lying in $\dnot(Z_h(x))\setminus D(Z_{h-1}(x))$, and let $\rset_h$ be the set of all connected components of $G[U_h]$. For each $1\leq h'\leq 2\Delta_0$, we define $U'_{h'}$ and $\rset'_{h'}$ with respect to the shell $\zset(y)$ similarly. Let $\rset=\bigcup_{h=1}^{2\Delta_0}\rset_h$ and let $\rset'=\bigcup_{h'=1}^{2\Delta_0}\rset'_{h'}$.

Our next step is to define a mapping $\beta:S\rightarrow 2^U$ of all source vertices in $S$ to subsets of vertices of $U$, and a mapping $\beta': T\rightarrow 2^{U'}$ of all destination vertices in $T$ to subsets of vertices of $U'$. Every vertex in $S\cup T$ will be mapped to a subset of at most three vertices. We will then replace each demand pair $(s,t)$ with the set $\beta(s)\times \beta'(t)$ of demand pairs. Eventually, for every pair $0\leq h,h'\leq 2\Delta_0$ of indices, we will define a subset $\tmset_{h,h'}$ of the new demand pairs, containing all pairs whose sources lie on $Z_h(x)$ and destinations lie on $Z_{h'}(y)$, to obtain an instance of the special case, that will then be solved using Theorem~\ref{thm: case 2a - special case}.


We start by defining the mapping of the sources. For every source $s\in S$, we will also define a curve $\Gamma(s)$ that we will use later in our analysis. 
First, for every source vertex $s\in S\cap U$, we set $\beta(s)=\set{s}$, and we let $\Gamma(s)$ contain the vertex $s$ only.
Next, fix any vertex $v^*\in C_x$. For every source vertex $s\in V(D_x)\setminus V(C_x)$, we set $\beta(s)=\set{v^*}$, and $\Gamma(s)=C_x$. Finally, consider some component $R\in \rset_h$, for some $1\leq h\leq 2\Delta_0$, and let $s\in S\cap V(R)$ be any source lying in $R$.  If $|L(R)|\leq 2$, then we let $\beta(s)=L(R)\cup \set{u(R)}$ if $u(R)$ is defined, and $\beta(s)=L(R)$ otherwise. If $|L(R)|>2$, then we let $\beta(s)=\set{v}$, where $v$ is a leg of $R$, such that $v$ is not an endpoint of $\sigma(R)$ (so it is not the first and not the last leg of $R$).  We then let $\Gamma(s)$ be the outer boundary $\gamma(R)$ of the disc $\eta(R)$ given by Theorem~\ref{thm: curves around CC's}. Recall that the length of $\Gamma(s)$ is bounded by $4\Delta_0+\Delta/2+1< 5\Delta_0$. Notice that for every source $s\in S$, we now have defined a $G$-normal curve $\Gamma(s)$ of length at most $5\Delta_0$. An important property of this curve is that the disc whose boundary is $\Gamma(s)$ cannot contain any demand pair in $\nset$, as the disc itself is contained in $D(Z_{2\Delta_0}(x))$. We define the mapping $\beta': T \rightarrow 2^{U'}$, and the corresponding curves $\Gamma(t)$ for the destination vertices $t\in T$ similarly.

Let $\tmset=\bigcup_{ (s,t)\in \nset}\beta(s)\times \beta'(t)$.  In the following two theorems, we show that the problems $(G,\nset)$ and $(G,\tmset)$ are equivalent to within relatively small factors.

\begin{theorem}\label{thm: from new to old problem}
There is an efficient algorithm, that, given any solution to instance $(G,\tmset)$, that routes $\kappa$ demand pairs, finds a solution to instance $(G,\nset)$, routing at least $\frac{\kappa}{21\Delta_0}$ demand pairs.
\end{theorem}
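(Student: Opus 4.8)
The plan is to ``lift'' the given node-disjoint routing of $\kappa$ demand pairs of $\tmset$ to a node-disjoint routing of a $\Theta(1/\Delta_0)$-fraction of them in $(G,\nset)$. Let $\pset$ be the given solution. Each $P\in\pset$ routes some pair $(a,b)\in\tmset$; fix once and for all an \emph{origin} $(s,t)\in\nset$ for $P$ with $a\in\beta(s)$, $b\in\beta'(t)$. The first ingredient is that $a$ can be joined to $s$ inside a tiny connected gadget. If $s\in U$ then $a=s$ and there is nothing to do; if $s\in V(D_x)\setminus V(C_x)$ then $a=v^*\in V(C_x)$ and we use a path inside $G[V(D_x)]$, which is connected by the definition of enclosures; and if $s$ lies in a component $R\in\rset_h$ then $a\in L(R)\cup\{u(R)\}$ is adjacent to $R$, so we take the edge from $a$ into $R$ followed by a path inside $R$ to $s$. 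In every case we obtain a path $Q_s$ from $a$ to $s$ that, apart from $a$, stays in $V(R)$ (or in $V(D_x)$, or is trivial); in particular $Q_s$ is contained in the closed disc $D(\Gamma(s))$ bounded by the $G$-normal curve $\Gamma(s)$ of length at most $5\Delta_0$, and meets $\beta(s)$ only at $a$. A symmetric gadget $Q_t\subseteq D(\Gamma(t))$ joins $b$ to $t$. For each $P$ with origin $(s,t)$, let $P^*$ be a simple path from $s$ to $t$ contained in $Q_s\cup P\cup Q_t$.

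The core of the proof is selecting a large $\pset'\subseteq\pset$ whose lifted paths are pairwise node-disjoint. I would build a conflict graph on $\pset$ in which $P$ (origin $(s,t)$) and $P'$ (origin $(s',t')$) are adjacent whenever: $P$ meets $V(\Gamma(s'))\cup V(\Gamma(t'))$; or $P'$ meets $V(\Gamma(s))\cup V(\Gamma(t))$; or $s,s'$ (respectively $t,t'$) lie in the same component of $\rset$ (respectively of the shell around $y$); or $P$ and $P'$ have the same origin. Two facts bound its degeneracy. First, the curves $\Gamma(\cdot)$ have at most $5\Delta_0$ vertices each and the paths of $\pset$ are node-disjoint, so for any fixed origin only $O(\Delta_0)$ paths of $\pset$ meet $V(\Gamma(s'))\cup V(\Gamma(t'))$; orienting each such ``curve-crossing'' edge toward the owner of the curve gives in-degree $O(\Delta_0)$. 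Second, all sources lying in a fixed component $R$ are sent by $\beta$ into the same set of at most three vertices, and distinct paths of $\pset$ have distinct endpoints, so at most three origins have their source in $R$; hence every vertex has $O(1)$ ``same-component'' and ``same-origin'' edges. Thus every induced subgraph has a vertex of total degree $O(\Delta_0)$, and a greedy argument yields an independent set $\pset'$ with $|\pset'|\ge \kappa/(21\Delta_0)$.

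It remains to verify that for $P,P'\in\pset'$ with distinct origins $(s,t),(s',t')$ the lifted paths $P^*$ and $(P')^*$ are node-disjoint. The middle parts $P,P'$ are disjoint by hypothesis. By Claim~\ref{claim: shells are good} the endpoints $a\in U$ and $b\in U'$ of $P$ lie in the disjoint discs $D(Z_{2\Delta_0}(x))$ and $D(Z_{2\Delta_0}(y))$, which contain all discs $D(\Gamma(s'))$ and $D(\Gamma(t'))$ respectively; hence $P$ has an endpoint outside $D(\Gamma(s'))$, so if $P$ met $Q_{s'}\subseteq D(\Gamma(s'))$ it would have to cross the $G$-normal curve $\Gamma(s')$, i.e.\ meet $V(\Gamma(s'))$ — excluded by non-adjacency; likewise $P\cap Q_{t'}=\emptyset$ and $P'\cap(Q_s\cup Q_t)=\emptyset$. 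Finally $Q_s\cap Q_{s'}=\emptyset$: if $s,s'$ lie in different components then $Q_s\setminus\{a\}\subseteq R$ and $Q_{s'}\setminus\{a'\}\subseteq R'$ are disjoint and $a\ne a'$, while $s,s'$ in the same component is excluded by non-adjacency; the destination side, and $Q_s\cap Q_{t'}$, are handled the same way using the laminarity of $\{D(\Gamma(v))\}$ supplied by Theorem~\ref{thm: curves around CC's}. The paths $\{P^*:P\in\pset'\}$ then route at least $\kappa/(21\Delta_0)$ demand pairs of $\nset$. The genuinely delicate point, and where I expect the bulk of the work to lie, is exactly this disjointness bookkeeping: ensuring that the only possible interactions between two extension gadgets are a crossing of a short curve $\Gamma(\cdot)$ or a shared component — which is what dictates the careful choice of the maps $\beta,\beta'$ and the curves $\gamma(R)$, and which crucially relies on the disjointness of the two $2\Delta_0$-shells around $x$ and $y$.
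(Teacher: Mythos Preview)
Your proof is correct and follows essentially the same approach as the paper's: build a conflict graph whose edges record when one path meets the short curve $\Gamma(\cdot)$ of another path's origin, extract a large independent set using the in-degree bound $10\Delta_0$ (coming from $\ell(\Gamma(\cdot))\le 5\Delta_0$), and then extend each surviving path inside the enclosure or component containing its original terminal.

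The one noteworthy difference is that you add ``same-component'' and ``same-origin'' conflict edges, whereas the paper uses \emph{only} the curve-crossing edges. Your extra edge types are in fact redundant: if $s,s'\in R$ for the same component $R$, then $\Gamma(s)=\Gamma(s')=\gamma(R)$, and since the destination endpoint of $P$ lies in $D(Z_{2\Delta_0}(y))$, which is disjoint from $\eta(R)\subseteq D(Z_{2\Delta_0}(x))$, the path $P$ must contain a vertex of $V(\gamma(R))=V(\Gamma(s'))$ just to exit $\eta(R)$. So a curve-crossing edge $P\to P'$ already exists, and the ``same-component'' conflict is subsumed. The paper exploits exactly this observation (phrased as ``no demand pair in $\tmset$ is contained in the disc whose boundary is $\Gamma(s)$'') to argue directly that no other selected path can enter $R$, which makes the disjointness of the extensions immediate and avoids any separate bookkeeping for shared components. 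Your appeal to ``laminarity'' from Theorem~\ref{thm: curves around CC's} for $Q_s\cap Q_{t'}$ is also more than you need: that case follows simply from $D(\Gamma(s))\subseteq D(Z_{2\Delta_0}(x))$ and $D(\Gamma(t'))\subseteq D(Z_{2\Delta_0}(y))$ being disjoint (Claim~\ref{claim: shells are good}), which you already invoke elsewhere.
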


\begin{proof}
Let $\pset_0$ be any collection of disjoint paths in graph $G$, routing a subset $\tmset_0\subseteq \tmset$ of $\kappa$ demand pairs. We assume that $\kappa\geq 21\Delta_0$, as otherwise we can return a routing of a single demand pair in $\nset$.
For every demand pair $(\tilde s,\tilde t)\in \tmset_0$, let $(s,t)$ be any corresponding demand pair in $\nset$, that is, $\tilde s\in \beta(s)$ and $\tilde t\in \beta'(t)$.

We build a conflict graph $H$, whose vertex set is $\set{v(\tilde s,\tilde t)\mid (\tilde s,\tilde t)\in \tmset_0}$, and there is a directed edge from $v(\tilde s_1,\tilde t_1)$ to $v(\tilde s_2,\tilde t_2)$ iff the unique path $P(\tilde s_1,\tilde t_1)\in \pset_0$ routing the pair $(\tilde s_1,\tilde t_1)$ intersects $\Gamma(s_2)$ or $\Gamma(t_2)$ (in which case we say that there is a conflict between $(\tilde s_1,\tilde t_1)$ and $(\tilde s_2,\tilde t_2)$). Since all paths in $\pset_0$ are node-disjoint, and all curves $\Gamma(s),\Gamma(t)$ have lengths at most $5\Delta_0$, the in-degree of every vertex in $H$ is at most $10\Delta_0$. Therefore, we can efficiently compute an independent set $I$ of size at least $\frac{\kappa}{20\Delta_0+1}\geq\frac{\kappa}{21\Delta_0}$ in $H$.

Let $\tmset_1=\set{(\tilde s,\tilde t)\mid v(\tilde s,\tilde t)\in I}$, and let $\pset_1\subseteq\pset_0$ be the set of paths routing the demand pairs in $\tmset_1$. Let  $\mset'=\set{(s,t)\mid (\tilde s,\tilde t)\in \tmset_1}$.  It is now enough to show that all demand pairs in $\mset'$ can be routed in $G$. Consider any demand pair $(\tilde s,\tilde t)\in \tmset_1$, and let $P\in \pset_1$ be the path routing $(\tilde s,\tilde t)$. We will extend the path $P$, so it connects $s$ to $t$. Notice that if $s\in U$, then $s=\tilde s$. Assume now that $s\not \in U$. If $s\in V(D_x)\setminus V(C_x)$, then $\tilde s\in V(C_x)$. Since $G[V(D_x)]$ is a connected graph, we can extend path $P$ inside the disc $D_x$, so it now originates at $s$. As $\Gamma(s)=C_x$, no other source of a demand pair in $\tmset_1$ may lie on $C_x$, and no other path in $\pset_1$ contains a vertex of $D_x$. Finally, assume that $s\in V(R)$ for some component $R\in \rset$. Since the disc whose boundary is $\Gamma(s)$ contains $R$, all vertices of $L(R)$, and $u(R)$ (if such is defined), no other path in $\pset_1$ may contain a vertex of $L(R)\cup\set{u(R)}$. Moreover, since no demand pair in $\tmset$ is contained in the disc whose boundary is $\Gamma(s)$, no other path in $\pset_1$ may intersect $R$. We extend the path $P$ inside $R$, so it now originates at $s$. We perform the same transformation to path $P$ to ensure that it terminates at $t$. It is easy to see that the resulting collection of paths is disjoint.
\end{proof}

\begin{theorem}\label{thm: from old to new problem}
$\opt(G,\tmset)\geq \frac{\opt(G,\nset)}{21\Delta_0}$.
\end{theorem}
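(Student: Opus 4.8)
The plan is to take an optimal solution $\pset^*$ for the original instance $(G,\nset)$, routing $\opt(G,\nset)$ demand pairs of $\nset$, and transform it into a solution for $(G,\tmset)$ routing at least $\opt(G,\nset)/(21\Delta_0)$ demand pairs of $\tmset$. The guiding intuition is exactly the reverse of Theorem~\ref{thm: from new to old problem}: there, a path routing a pair of $\tmset$ was \emph{extended inside small discs} to reach the original terminals; here, we will \emph{truncate} each path of $\pset^*$ at the moment it first leaves the relevant small disc, thereby producing a path between vertices in $\beta(s)$ and $\beta'(t)$. The only subtlety is that the truncation points must actually lie in the prescribed sets $\beta(s)$, $\beta'(t)$, and different truncated paths must remain disjoint; the first is handled by the structure of the shells (Theorem~\ref{thm: curves around CC's}), and the second by a conflict-graph / independent-set argument that loses a factor of $O(\Delta_0)$.

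In detail, I would proceed as follows. First, let $\pset^*$ route a subset $\nset^*\subseteq\nset$ with $|\nset^*|=\opt(G,\nset)$; assume $\opt(G,\nset)\geq 21\Delta_0$, else routing a single pair of $\tmset$ (such exists since $G$ is connected and $\tmset\ne\emptyset$) suffices. Build a conflict graph $H$ on vertex set $\{v(s,t):(s,t)\in\nset^*\}$, with a directed edge from $v(s_1,t_1)$ to $v(s_2,t_2)$ whenever the path $P(s_1,t_1)\in\pset^*$ intersects the curve $\Gamma(s_2)$ or $\Gamma(t_2)$. Since the paths of $\pset^*$ are node-disjoint and $\ell(\Gamma(s)),\ell(\Gamma(t))\le 5\Delta_0$, each vertex of $H$ has in-degree at most $10\Delta_0$, so $H$ has an independent set $I$ with $|I|\ge \opt(G,\nset)/(20\Delta_0+1)\ge\opt(G,\nset)/(21\Delta_0)$. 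Let $\nset_1=\{(s,t):v(s,t)\in I\}$ and let $\pset_1\subseteq\pset^*$ be the corresponding paths. For $(s,t)\in\nset_1$, consider its path $P\in\pset_1$, directed from $s$ to $t$; let $a$ be the last vertex of $P$ lying on $\Gamma(s)$ (which exists, since $s$ itself lies on or inside $\Gamma(s)$, and $t$ lies outside $D(Z_{2\Delta_0}(x))\supseteq D(\Gamma(s))$, so $P$ must cross $\Gamma(s)$), and symmetrically let $b$ be the first vertex after $a$ on $P$ lying on $\Gamma(t)$. Replace $P$ by its segment $\tilde P$ from $a$ to $b$. By the definition of the curves $\Gamma(\cdot)$ and Theorem~\ref{thm: curves around CC's}, $V(\Gamma(s))\subseteq\beta(s)$ holds when $s\notin U$ is in a component $R$ with $|L(R)|\le 2$ (then $\Gamma(s)=\gamma(R)$ contains only $L(R)\cup\{u(R)\}=\beta(s)$), $\Gamma(s)=\{s\}=\beta(s)$ when $s\in U$, and $\Gamma(s)=C_x$ with $\beta(s)=\{v^*\}$ when $s\in V(D_x)\setminus V(C_x)$ — so in most cases $a\in\beta(s)$ directly; the problematic case $|L(R)|>2$, where $\beta(s)$ is a single leg $v$, must be handled by choosing $a$ to be the crossing point of $P$ with $\gamma(R)$ that is nearest to $v$, or more carefully by re-routing $\tilde P$ through $R$ (which is connected and, since $D(\Gamma(s))$ contains no demand pair of $\nset$, is untouched by all other paths of $\pset_1$) to start at $v$. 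The analogous statement holds for $b$ and $\beta'(t)$.

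The key step — and the main obstacle — is verifying that the truncated/re-routed paths $\{\tilde P\}$ are pairwise node-disjoint, so that they form a valid routing of $|\nset_1|$ demand pairs of $\tmset$. This is where the independent-set choice pays off: if $\tilde P$ (coming from $(s_1,t_1)$) met $\tilde P'$ (coming from $(s_2,t_2)$) at a vertex, then since the original $P,P'\in\pset^*$ were disjoint, the intersection can only occur at a vertex that was added during re-routing inside some disc $D(\Gamma(s_2))$, $D(\Gamma(t_2))$, $D_x$, or a component $R$; but then $P(s_1,t_1)$ would contain a vertex of $\Gamma(s_2)$ or $\Gamma(t_2)$ (or of $C_x$, or of $R$, which forces a crossing of $\gamma(R)$ since $D(\gamma(R))$ contains no demand pair of $\nset$), contradicting $v(s_1,t_1),v(s_2,t_2)\in I$ and the definition of the conflict edges. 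I would also need to record that $D(\Gamma(s))\subseteq D(Z_{2\Delta_0}(x))$ and $D(\Gamma(t))\subseteq D(Z_{2\Delta_0}(y))$, together with $D(Z_{2\Delta_0}(x))\cap D(Z_{2\Delta_0}(y))=\emptyset$ from Claim~\ref{claim: shells are good}, to guarantee that the source-side and destination-side re-routings cannot interfere with each other and that no $\Gamma$-disc contains a demand pair of $\nset$. Assembling these observations yields a routing of $|\nset_1|\ge\opt(G,\nset)/(21\Delta_0)$ demand pairs of $\tmset$, hence $\opt(G,\tmset)\ge\opt(G,\nset)/(21\Delta_0)$.
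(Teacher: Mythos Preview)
Your overall architecture---conflict graph on $\{\Gamma(s),\Gamma(t)\}$, independent set losing a $21\Delta_0$ factor, then truncate/re-route to hit $\beta(s)\times\beta'(t)$---is exactly the paper's. The gap is in the truncation step.

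You assert that when $s\in R$ with $|L(R)|\le 2$, the curve $\Gamma(s)=\gamma(R)$ ``contains only $L(R)\cup\{u(R)\}=\beta(s)$''. This is false. By the construction in Theorem~\ref{thm: curves around CC's}, $\gamma(R)$ is built from $\gamma'(R)$ (passing through $u(R)$ and the endpoints of $\sigma(R)$), the curves $\gamma(a(R)),\gamma(a'(R))$ descending through $Z_{h-1}(x),\ldots,Z_1(x)$ to $C_x$, and a segment of $C_x$; its length is up to $2h+1+\Delta/2$. So the last vertex $a$ of $P$ on $\Gamma(s)$ can perfectly well lie on, say, $Z_3(x)$ or on $C_x$, and then $a\notin\beta(s)$. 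The same issue arises in the $s\in V(D_x)\setminus V(C_x)$ case, where $\Gamma(s)=C_x$ but $\beta(s)=\{v^*\}$ is a single vertex.

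The paper fixes this by never truncating at $\Gamma(s)$. Instead it splits terminals into \emph{good} ($s\in U$, or $s\in R$ with $|L(R)|\le 2$) and \emph{bad} ($s\in R$ with $|L(R)|>2$). For good $s$, the key observation is that $L(R)\cup\{u(R)\}$ is the \emph{graph-theoretic} neighbourhood of $R$, so any path from $s\in R$ to $t\notin D(Z_{2\Delta_0}(x))$ must contain a vertex of $L(R)\cup\{u(R)\}=\beta(s)$; one truncates at the first such vertex. For bad $s$, one appends a path $Q(s)\subseteq G[V(R)\cup\{s'\}]$ from $s$ to the designated leg $s'\in\beta(s)$. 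Disjointness of the $Q(s)$'s then uses two facts you did not invoke: (i) the independent-set property prevents any other path of $\pset_1$ from entering $R$ (since it would have to cross $\gamma(R)=\Gamma(s)$ and no demand pair lies inside $\eta(R)$), and (ii) when two bad sources $s_1,s_2$ lie in distinct components $R,R'\in\rset_h$, the nestedness of $\{\sigma(R)\}$ together with the choice of the designated leg (an interior leg, not an endpoint of $\sigma(R)$) forces $\beta(s_1)\neq\beta(s_2)$, so $Q(s_1)$ and $Q(s_2)$ are disjoint. Your sketch of the disjointness argument is in the right spirit but does not supply either of these ingredients.
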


\begin{proof}
Let $\pset_0$ be the set of paths in the optimal solution to instance $(G,\nset)$, and let $\mset_0$ be the set of the demand pairs they route. 

As before, we define a conflict graph $H$, whose vertex set is $\set{v(s,t)\mid (s,t)\in \mset_0}$, and there is a directed edge from $v(s_1,t_1)$  to $v(s_2,t_2)$ iff the unique path $P(s_1,t_1)\in \pset_0$ routing the pair $(s_1,t_1)$ intersects $\Gamma(s_2)$ or $\Gamma(t_2)$ (in which case we say that there is a conflict between $(s_1,t_1)$ and $(s_2,t_2)$). Since all paths in $\pset_0$ are node-disjoint, and all curves $\Gamma(s),\Gamma(t)$ have lengths at most $5\Delta_0$, the in-degree of every vertex in $H$ is at most $10\Delta_0$. Therefore, there is an independent set $I$ of size at least $\frac{\opt(G,\nset)}{20\Delta_0+1}\geq\frac{\opt(G,\nset)}{21\Delta_0}$ in $H$.

Let $\mset_1=\set{(s,t)\mid v(s,t)\in I}$, and let $\pset_1\subseteq \pset_0$ be the set of paths routing the demand pairs in $\mset_1$. We show that we can route $|\mset_1|$ demand pairs of $\tmset$ in $G$ via node-disjoint paths. Let $S_1$ and $T_1$ be the sets of all source and all destination vertices of the pairs in $\mset_1$, respectively.

Consider any source vertex $s\in S_1$. We say that $s$ is a \emph{good source vertex} if $s\in U$, or $s$ belongs to some component $R\in \rset$, such that $|L(R)|\leq 2$. Otherwise, $s$ is a \emph{bad source vertex}. Notice that if $s$ is a good source vertex, then the path $P\in \pset_1$ that originates at $s$ must contain a vertex $s'\in \beta(s)$: if $s\in U$, then $\beta(s)=\set{s}$; otherwise, if $s\in R$ for some component $R\in \rset$ with $|L(R)|\leq 2$, then $\beta(s)=L(R)\cup \set{u(R)}$ if $u(R)$ is defined, and $\beta(s)=L(R)$ otherwise. In either case, in order to enter $R$, path $P$ has to visit a vertex of $\beta(s)$. Therefore, if $s$ is a good source vertex, then some vertex $s'\in P$ belongs to $\beta(s)$. Similarly, we say that a destination vertex $t\in T_1$ is a \emph{good destination vertex} if $t\in U'$ or $t$ belongs to some component $R'\in \rset'$ with $|L(R')|\leq 2$. Otherwise, it is a \emph{bad destination vertex}. As before, if $t$ is a good destination vertex, then the path $P\in \pset_1$ terminating at $t$ must contain some vertex $t'\in \beta(t)$.

We transform the paths in $\pset_1$ in two steps, to ensure that they connect demand pairs in $\tmset$. In the first step, for every path $P\in \pset_1$ originating at a good source $s\in S_1$, we truncate $P$ at the first vertex $s'\in \beta(s)$, so it now originates at $s'$. Similarly, if $P$ terminates at a good destination vertex $t\in T_1$, we truncate $P$ at the last vertex $t'\in \beta(t)$, so it now terminates at $t'$. Let $\pset_1'$ be the resulting set of paths. Notice that the paths in $\pset_1'$ remain node-disjoint.

In order to complete our transformation, we need to take care of bad source and destination vertices. Let $s\in S_1$ be any bad destination vertex. If $s\in V(D_x)\setminus V(C_x)$, then let $Q_s$ be any path connecting $s$ to the unique vertex $s'\in \beta(s)$, so that $Q_s\subseteq D_x$. Such a path exists, since $G[V(D_x)]$ is connected. Otherwise, $s\in R$ for some component $R\in \rset$ with $|L(R)|\geq 3$. Recall that in this case, $\beta(s)$ contains a unique vertex, that we denote by $s'$, which is a leg of $R$, and it is not one of the endpoints of $\sigma(R)$. We then let $Q(s)$ be any path connecting $s$ to $s'$ in the sub-graph of $G$ induced by $V(R)\cup \set{s'}$. We define paths $Q(t)$ for bad destination vertices $t\in T_1$ similarly. By concatenating the paths in $\set{Q(s)}$ for all bad source vertices $s\in S_1$, $\pset_1'$, and $\set{Q(t)}$ for all bad destination vertices $t\in T_1$, we obtain the desired collection $\tilde{\pset}$ of at least $\frac{\opt(G,\nset)}{21\Delta_0}$ paths, routing demand pairs in $\tmset$. It now only remains to show that the paths in $\tpset$ are disjoint. Recall that the paths in $\pset_1'$ were node-disjoint.

 Consider some bad source vertex $s\in S_1$, and let $P\in \pset_1'$ be the path originating at $s$. We show that $Q(s)$ is disjoint from all paths in $\pset_1'\setminus\set{P}$, and it is disjoint from all other paths $Q(s_1)$ for $s_1\in S_1\setminus\set{s}$.
 
 Assume first that $s\in V(D_x)\setminus V(C_x)$. Then $\Gamma(s)=C_x$, and so no other path in $\pset_1$ (and hence in $\pset_1')$ can contain a vertex of $D_x$. It follows that $Q(s)$ is disjoint from all paths in $\pset_1'\setminus\set{P}$. It is also disjoint from all other paths  $Q(s_1)$ for $s_1\in S_1\setminus\set{s}$, since in order for $Q(s_1)$ to intersect $D_x$, vertex $s_1$ must lie on $C_x$, and this is impossible.
 
 Assume now that $s\in V(R)$ for some $R\in \rset$. Recall that the disc whose boundary is $\Gamma(s)$ contains $R\cup L(R)$. Since no other path in $\pset_1'$ may intersect $\Gamma(s)$, and no demand pair is contained in the disc whose boundary is $\Gamma(s)$, no other path in $\pset_1'$ intersects $R\cup L(R)$, and so all such paths are disjoint from $Q(s)$. Consider now some other bad source vertex $s_1\in S_1$. Note that $s_1$ cannot lie in $R$, since in this case the path of $\pset_1'$ originating at $s_1$ would have crossed $\Gamma(s)$. Therefore, $s_1$ must lie in some other component $R'\in \rset$. Then the only way for $Q(s)$ and $Q(s_1)$ to intersect is when $s'=s_1'$. In particular, $R,R'$ should both belong to some set $\rset_h$ for $1\leq h\leq 2\Delta_0$. But since the segments $\set{\sigma(R)\mid R\in \rset_h}$ are nested, due to the way we chose the mappings $\beta(s)$ and $\beta(s_1)$, this is impossible.
 
 We can similarly prove that for each bad destination vertex $t\in T_1$, if $P'\in \pset_1'$ is the path terminating at $t$, then $Q(t)$ is disjoint from all paths in $\pset_1'\setminus\set{P'}$, and from all paths $Q(t_1)$, where $t_1\in T_1\setminus\set{t}$ is a bad destination vertex. Altogether, this proves that the paths in $\tpset$ are disjoint.
\end{proof}

For each $0\leq h,h'\leq 2\Delta_0$, let $\tmset_{h,h'}\subseteq \tmset$ be the set of all demand pairs $(\tilde s,\tilde t)$ with $\tilde s\in Z_h(x)$ and $\tilde t\in Z_{h'}(y)$. 

If $h=0$ or $h'=0$, then, since $|V(C_x)|,|V(C_y)|\leq \Delta$, $\opt(G,\tmset_{h,h'})\leq \Delta$. We route any demand pair in $\tmset_{h,h'}$ to obtain a factor-$\Delta$ approximation to the problem $(G,\tmset_{h,h'})$. If  both $h,h'>0$, then we apply Theorem~\ref{thm: case 2a - special case} to obtain a collection $\pset_{h,h'}$ of at least $\Omega\left(\frac{\opt(G,\tmset_{h,h'})}{\Delta_0^2\log n}\right )$ disjoint paths, routing demand pairs in $\tmset_{h,h'}$. We then take the best among all resulting solutions.

Notice that $\set{\tmset_{h,h'}\mid 0\leq h,h'\leq 2\Delta_0}$ partition the set $\tmset$ of demand pairs, and so there is a pair $0\leq h,h'\leq 2\Delta_0$ of indices with $\opt(G,\tmset_{h,h'})\geq \frac{\opt(G,\tmset)}{(2\Delta_0+1)^2}\geq \Omega\left(\frac{\opt(G,\nset)}{\Delta_0^3}\right)$. Therefore, we obtain a routing of at least $\Omega\left(\frac{\opt(G,\nset)}{\Delta_0^6\log n}\right )$ demand pairs.

\label{-------------------------------------------subsec: case 2b----------------------------------------}
\subsection{Subcase 2b: $d(x,y)\leq 5\Delta_0$}\label{subsec: Case 2b}
We again start by defining a special case of the problem, which is similar to the problem of routing on a disc. We show an approximation algorithm for this special case that reduces it to the problem of routing on a disc, and we later use this special case in order to handle the general problem in Case 2b.

\subsubsection*{A Special Case}
Suppose we are given a connected planar graph $\hat G$ embedded on the sphere, a cycle $Z$ in $\hat G$, and a set $\hmset$ of demand pairs, such that each terminal of $\tset(\hmset)$ lies on $Z$.  Assume additionally that we are given a closed $\hat G$-normal  curve $C$  of length at most $\Delta$, that is disjoint from $Z$. Let $D(Z)$ be the disc whose boundary is $Z$, which contains $C$.  Assume further that for every vertex $v\in Z$, there is a $\hat G$-normal curve $\gamma(v)$ of length at most $16\Delta_0$ connecting $v$ to a vertex of $C$, so that $\gamma(v)$ is contained in $D(Z)$ and it is internally disjoint from $C$. This finishes the definition of the special case.

Next, we reduce this special case to routing on a disc, by creating a hole in the sphere. The hole is $\dnot(Z)$, so we delete all edges and vertices that appear inside $D(Z)$, except for the edges and the vertices of $Z$. Let $\hat G'$ be the resulting graph. We can now apply the $O(\log n)$-approximation algorithm for \NDPdisc to the resulting problem, to obtain a routing of at least $\Omega\left(\frac{\opt(\hat G',\hmset)}{\log n}\right)$ demand pairs. In order to complete the analysis of the algorithm, we prove that $\opt(\hat G',\hmset)\geq \Omega\left(\frac{\opt(\hat G,\hmset)}{\Delta_0^2\log n}\right )$.


\begin{theorem}\label{thm: case 2b - special case}
$\opt(\hat G',\hmset)\geq \Omega\left(\frac{\opt(\hat G,\hmset)}{\Delta_0^2\log n}\right )$.
\end{theorem}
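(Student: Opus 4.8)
\textbf{Proof plan for Theorem~\ref{thm: case 2b - special case}.}
The plan is to mimic the structure of the proof of Theorem~\ref{thm: case 2a - special case}, but now with a single cycle $Z$ and a single curve $C$ instead of two of each, so that the target structure is a \NDPdisc instance rather than an \NDPcyl instance. First I would take the optimal solution $\pset_0$ to instance $(\hat G, \hmset)$, with $\kappa_0 = \opt(\hat G, \hmset)$ paths, and assume $\kappa_0 \ge c\,\Delta_0^2\log n$ for a suitable constant $c$, since otherwise a single routed demand pair (which exists because $\hat G'$ is connected — $Z$ is a cycle in $\hat G'$, and all terminals lie on $Z$) already gives the claimed bound. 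I would then discard all paths of $\pset_0$ that pass through a vertex of $C$; since $\ell(C)\le \Delta \le \Delta_0$, this removes at most $\Delta$ paths, leaving a set $\pset_1$ routing a set $\hmset_1$ of demand pairs with $|\hmset_1|\ge \kappa_0 - \Delta$.

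Next I would build a conflict graph $H$ on $\set{v(s,t) \mid (s,t)\in \hmset_1}$, with a directed edge from $v(s,t)$ to $v(s',t')$ whenever the path $P(s,t)\in\pset_1$ intersects $V(\gamma(s'))\cup V(\gamma(t'))$. Since the paths of $\pset_1$ are node-disjoint and each curve $\gamma(v)$ has at most $16\Delta_0$ vertices, the in-degree of every vertex of $H$ is at most $32\Delta_0$, so every vertex-induced subgraph has average (in+out) degree at most $O(\Delta_0)$, and we can extract an independent set $I$ of size $\Omega(|\pset_1|/\Delta_0) = \Omega(\kappa_0/\Delta_0)$. Let $\hmset_2$ be the corresponding demand pairs and $\pset_2\subseteq\pset_1$ the routing paths. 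Now every path $P(s,t)\in\pset_2$ is disjoint from $C$ and from all curves $\gamma(s'),\gamma(t')$ for all other demand pairs $(s',t')\in\hmset_2$. As in Case~2a, this non-conflicting property forces the demand pairs of $\hmset_2$ to be \emph{non-crossing} with respect to $Z$ (in the sense of the \NDPdisc definition): intuitively, each path $P(s,t)$ together with $\gamma(s)$, $C$, and $\gamma(t)$ and a segment of $Z$ bounds a region, and the absence of conflicts forces the cyclic order of terminals on $Z$ to be consistent with a non-crossing routing. Then I would space the pairs out: order $\hmset_2 = \set{(s_1,t_1),\ldots,(s_{\kappa_2},t_{\kappa_2})}$ consistently with the non-crossing structure on $Z$, keep only one pair out of every $\Theta(\Delta_0)$, obtaining $\hmset_3$ with $|\hmset_3| = \Omega(\kappa_2/\Delta_0) = \Omega(\kappa_0/\Delta_0^2)$, and for each kept pair define two short disjoint segments $\mu,\mu'$ of $Z$ around $s$ and $t$, mutually disjoint across kept pairs.

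The key technical step — the analogue of Lemma~\ref{lemma: routing and segment intersections} — is to show that the path $P\in\pset_3$ routing a kept pair $(s_\ell,t_\ell)$ satisfies $P\cap Z \subseteq \mu_\ell \cup \mu'_\ell$. The argument: if $P$ first leaves $Z$ at $s_\ell$ and later returns to $Z$ at some vertex $v$ outside the designated segments, then the curve $R$ formed by $C$, $\gamma(s_\ell)$, the relevant prefix of $P$, and $\gamma(v)$ separates roughly $\Delta_0$ nearby source vertices of $\hmset_2$ from their partners (which lie on $Z$ on the far side), forcing $\Delta_0$ node-disjoint paths of $\pset_2$ to cross $R$; but $R$ can only be crossed through $\gamma(v)$ (it cannot cross $C$, the prefix of $P$, or $\gamma(s_\ell)$ without creating a conflict), and $\gamma(v)$ has fewer than $\Delta_0$ vertices, a contradiction. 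With this lemma, reroute each kept path through $\mu_\ell$, the middle of $P$, and $\mu'_\ell$ to get disjoint paths entirely inside $\hat G'$ (i.e.\ on $Z$ and outside $D(Z)$), routing all of $\hmset_3$; hence $\opt(\hat G',\hmset) \ge |\hmset_3| = \Omega(\kappa_0/\Delta_0^2)$, which after the extra $\Omega(\log n)$ loss from the \NDPdisc approximation gives the stated bound. The main obstacle I anticipate is carefully verifying the non-crossing / planarity claims for the single-cycle configuration — in particular making the separation argument for $R$ precise when $\gamma(s_\ell)$ and $\gamma(v)$ are not disjoint and when $C$ sits inside $D(Z)$ rather than inside a separate disc — but this is a direct adaptation of the Case~2a argument with $C'$ and $Z'$ collapsed away.
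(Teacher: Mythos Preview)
The first four moves in your plan (drop paths through $C$, build the conflict graph, extract an independent set, conclude that $\hmset_2$ is non-crossing on $Z$) match the paper exactly. The gap is in the next step. Non-crossing demand pairs on a single cycle do \emph{not} admit a linear ordering analogous to the cylinder case; they form a laminar family. The paper therefore inserts a step you omit: it applies Lemma~\ref{lem: getting r-split demand pairs on a disc} to partition $\hmset_2$ into $O(\log n)$ subsets, each of which is $r$-split for some $r$, and continues inside the largest one. That is precisely where the $\log n$ in the statement of Theorem~\ref{thm: case 2b - special case} comes from; it is not slack to be absorbed later by the \NDPdisc approximation, as you suggest at the end.

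The $r$-split structure is what makes your separation lemma go through. Inside a single group $\mset^z$, all sources lie on one segment $\sigma$ of $Z$ and all destinations on the adjacent segment $\sigma'$. When you form $R$ as the outer boundary of $D(C)\cup\gamma(s_\ell)\cup P'\cup\gamma(v)$ with both $s_\ell,v\in\sigma$, every destination curve $\gamma(t)$ with $t\in\sigma'$ lies entirely on one side of $R$, and then the $\Theta(\Delta_0)$ nearby sources really are separated from their partners. Without the split this simply fails. A concrete obstruction: take non-crossing pairs arranged as $s_1,t_1,s_2,t_2,\ldots$ along $Z$ (each pair adjacent). Now the ``nearby sources'' to $s_\ell$ have their destinations interleaved among them on $Z$; the region bounded by $R$ contains both $s_{\ell+1}$ and $t_{\ell+1}$, so $R$ separates nothing and the counting bound on $|V(\gamma(v))|$ yields no contradiction. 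For the same reason, your segments $\mu,\mu'$ cannot in general be made both short and mutually disjoint across kept pairs. The paper's fix is the $r$-split decomposition together with a small side case (groups $\mset^z$ with $|\mset^z|\le 128\Delta_0$ are routed directly along $Z$); only after this does the spacing-and-separation argument of Lemma~\ref{lemma: routing and segment intersections2} mirror the Case~2a proof.
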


\begin{proof}
As before, we can assume without loss of generality that all curves in set $\set{\gamma(v)\mid v\in V(Z)}$ are mutually non-crossing, and for all $v,v'\in V(Z)$, $\gamma(v)\cap \gamma(v')$ is a contiguous curve that has a vertex of $C$ as its endpoint. 

Consider the optimal solution to instance $(\hat G,\hmset)$, and let $\pset_0$ be the set of paths in this solution. As before, we will gradually modify the set $\pset_0$ of paths, to obtain path sets $\pset_1,\pset_2,\ldots$, until we obtain a feasible solution to instance $(\hat G',\hmset)$. For every $i\geq 0$, we will denote by $\hmset_i$ the set of the demand pairs routed by $\pset_i$, and by $\kappa_i$ its cardinality.
Recall that $\kappa_0=\opt(\hat G,\hmset)$. We assume that $\kappa_0\geq 2^{13}\Delta_0^2\log n$, as otherwise a solution routing a single demand pair gives the desired approximation, and such a solution exists in $\hat G'$, as it must be connected.

We delete from $\pset_0$ all paths that use the vertices of $C$. Since $C$ contains at most $\Delta$ vertices, we delete at most $\Delta$ paths in this step. Let $\pset_1$ be the resulting set of paths, and $\hmset_1$ the set of the demand pairs routed by $\pset_1$.

Our next step is to build a conflict graph $H$, almost exactly as before. The set of vertices is $V(H)=\set{v(s,t)\mid (s,t)\in \hmset_1}$. There is a directed edge from $v(s,t)$ to $v(s',t')$, iff the path $P(s,t)\in \pset_1$ routing the pair $(s,t)$ contains a vertex of $V(\gamma(s'))\cup V(\gamma(t'))$, and we say that there is a conflict between $(s,t)$ and $(s',t')$ in this case. Since we assume that the paths in $\pset_1$ are node-disjoint, and since $|V(\gamma(s'))|,|V(\gamma(t'))|\leq 16\Delta_0$ for all $(s',t')\in \hmset_1$, the in-degree of every vertex in $H$ is at most $32\Delta_0$. As before,  there is an independent set $I\subseteq V(H)$ of cardinality at least $\frac{|\pset_1|}{32\Delta_0+1}\geq \frac{|\pset_0|}{64\Delta_0}$ in $H$.

Let $\hmset_2$ be the set of all demand pairs $(s,t)$ with $v(s,t)\in I$, and let $\pset_2\subseteq \pset_1$ be the set of paths routing the demand pairs in $\hmset_2$. Recall that the paths in $\pset_2$ are disjoint from $C$. Moreover, if $P(s,t)\in \pset_2$ is the path routing the pair $(s,t)\in \hmset_2$, then for every demand pair $(s',t')\neq (s,t)$ in $\hmset_2$, path $P(s,t)$ is disjoint from both $\gamma(s')$ and $\gamma(t')$. It is now easy to verify that the demand pairs in $\hmset_2$ are non-crossing with respect to the cycle $Z$.

We now depart from the proof of Theorem~\ref{thm: case 2a - special case}. We use Lemma~\ref{lem: getting r-split demand pairs on a disc}, in order to compute a partition $(\nset_1,\ldots,\nset_{4\ceil{\log n}})$ of the set $\hmset_2$ of the demand pairs, so that for all $1\leq a\leq 4\ceil{\log n}$, set $\nset_a$ is $r_a$-split, for some $r_a\geq 0$. Then there must be an index $1\leq a\leq 4\ceil{\log n}$, such that $|\nset_a|\geq \frac{|\hmset_2|}{4\ceil{\log n}}\geq \Omega\left(\frac{|\pset_0|}{\Delta_0\log n}\right )$. We let $\hmset_3=\nset_a$, and $\pset_3\subseteq \pset_2$ the set of paths routing the demand pairs in $\hmset_3$. We then denote $r_a$ by $\rho$, and the partition of $\hmset_3$ associated with the definition of the $\rho$-split set of demand pairs by $\mset^1,\ldots,\mset^{\rho}$. We also denote by $\Sigma=(\sigma_1,\ldots,\sigma_{2\rho})$ the corresponding partition of $Z$ into intervals. We assume without loss of generality that for all $1\leq z\leq \rho$, all source vertices of the demand pairs in $\mset^z$ lie on $\sigma_{2z-1}$, and all corresponding destination vertices lie on $\sigma_{2z}$.

Let $I_1$ contain all indices $1\leq z\leq \rho$, with $|\mset^z|\leq 128\Delta_0$, and let $I_2$ contain all remaining indices. If $\sum_{z\in I_1}|\mset^z|\geq |\hmset_3|/2$, then we can obtain a routing of at least $\frac{|\hmset_3|}{256\Delta_0}\geq \Omega\left(\frac{|\pset_0|}{\Delta_0^2\log n}\right )$ demand pairs, as follows: for each $z \in I_1$, we route any demand pair in $\mset^z$ via the segment $\sigma_{2z-1}\cup \sigma_{2z}$ of $Z$. Therefore, we assume from now on that $\sum_{z\in I_2}|\mset^z|\geq \frac{|\hmset_3|}{2}\geq \Omega\left(\frac{|\pset_0|}{\Delta_0\log n}\right )$. We denote $\hmset_4=\bigcup_{z\in I_2}\mset^z$, and we let $\pset_4\subseteq \pset_3$ be the set of paths routing the demand pairs in $\hmset_4$. For all $z\in I_2$, we denote $|\mset^z|$ by $\kappa_4^z$.

The rest of the proof is very similar to the rest of the proof of Theorem~\ref{thm: case 2a - special case}, except that now we deal with each subset $\mset^z$ for $z\in I_2$ of the demand pairs separately. Fix some $z\in I_2$, and denote $\mset^z=\set{(s^z_1,t^z_1),\ldots,(s^z_{\kappa^z_4},t^z_{\kappa^z_4})}$. Since the demand pairs in $\mset^z$ are non-crossing, and due to the definition of the $\rho$-split instance, we can assume without loss of generality that $s_1^z,\ldots,s_{\kappa^z_4}^z,t_{\kappa^z_4}^z,\ldots,t_1^z$ appear in this counter-clock-wise order on $Z$.
 Let $\kappa^z_5=\floor{\frac{\kappa_4^z}{64\Delta_0}}-1$. Since $\kappa_4^z\geq 128\Delta_0$, $\kappa^z_5\geq \frac{\kappa_4^z}{256\Delta_0}$.
 We then let $\hmset^z$ contain all demand pairs $(s^z_{64\Delta_0r},t^z_{64\Delta_0r})$, for $1\leq r\leq \kappa^z_5$, and we let $\hmset_5=\bigcup_{z\in I_2}\hmset^z$. Notice that $\kappa_5=|\hmset_5|=\sum_{z\in I_2}\kappa^z_5\geq \sum_{z\in I_2}\frac{\kappa_4^z}{256\Delta_0}\geq \frac{\kappa_4}{256\Delta_0}\geq \Omega\left(\frac{\kappa_0}{{\Delta_0^2\log n}}\right )$. We now show that all demand pairs in $\hmset_5$ can be routed in graph $\hat G'$. This is done similarly to the proof of Theorem~\ref{thm: case 2a - special case}.
 
 Fix some $z\in I_2$, and consider the set $\hmset^z=\set{(s^z_{64\Delta_0r},t^z_{64\Delta_0r})\mid 1\leq r\leq \kappa^z_5}$ of demand pairs. For each such demand pair $(s^z_{64\Delta_0r},t^z_{64\Delta_0r})$, we define two segments $\mu^z_r\subseteq \sigma_{2z-1}$ and $\tilde{\mu}^z_r\subseteq \sigma_{2z}$ of $Z$, as follows. For brevity of notation, denote $64\Delta_0r$ by $\ell$. The first segment, $\mu^z_r$, is  the segment of $Z$ from $s^z_{\ell-32\Delta_0}$ to $s^z_{\ell+32\Delta_0-1}$, as we traverse $Z$ in the counter-clock-wise order. The second segment, $\tilde{\mu}^z_r$, is the segment of $Z$ from $t^z_{\ell-32\Delta_0}$ to $t^z_{\ell+32\Delta_0-1}$, as we traverse $Z'$ in  the clock-wise order. It is immediate to verify that all segments of $Z$ in $\set{\mu^z_r,\tilde{\mu}^z_r\mid z\in I_2, 1\leq r\leq \kappa_5^z}$ are mutually disjoint. We use the following analogue of Lemma~\ref{lemma: routing and segment intersections}.

\begin{lemma}\label{lemma: routing and segment intersections2}
For every $z\in I_2$, for every demand pair $(s^z_{{64\Delta_0 r}},t^z_{{64\Delta_0 r}})\in \hmset^z$, if $P\in \pset_5$ is the path routing this pair, then $P\cap Z\subseteq \mu^z_{{r}}\cup \tilde {\mu}^z_{r}$.
\end{lemma}

We can now use this lemma to re-route the paths in $\pset_5$, similarly to the proof of Theorem~\ref{thm: main for Case 2}. 
Fix some $z\in I_2$, and let $(s^z_{64\Delta_0r},t^z_{64\Delta_0r})\in\hmset^z$ be some demand pair, with $1\leq r\leq \kappa_5^z$. For simplicity, we denote $\ell=64\Delta_0r$. Let $P_{\ell}$ be the path routing this demand pair in $\pset_5$, that we view as directed from $s_{\ell}$ towards $t_{\ell}$. Clearly, $P_{\ell}$ intersects both $\mu^z_{r}$ and $\tilde{\mu}^z_{r}$. Let $v_{\ell}$ be the last vertex of $P_{\ell}$ lying on $\mu^z_{r}$. Then there is some other vertex appearing on $P_{\ell}$ after $v_{\ell}$ that belongs to $\tilde{\mu}^z_{r}$. We let $v'_{\ell}$ be the first such vertex on $P_{\ell}$, and we let $P'_{\ell}$ be the segment of $P_{\ell}$ between $v_{\ell}$ and $v'_{\ell}$. Let $P^*_{\ell}$ be the path obtained as follows: we start with a segment of $\mu^z_{r}$ between $s^z_{\ell}$ and $v_{\ell}$; we then follow $P'_{\ell}$ to $v'_{\ell}$, and finally we use a segment of $\tilde{\mu}^z_{r}$ between $v'_{\ell}$ and $t^z_{\ell}$. From Lemma~\ref{lemma: routing and segment intersections2}, it is immediate to verify that we obtain a set of node-disjoint paths routing all demand pairs in $\hmset_5$ in graph $\hat G'$. It now remains to prove Lemma~\ref{lemma: routing and segment intersections2}.

\begin{proof}
Fix some $z\in I_2$, and consider some demand pair $(s^z_{{64\Delta_0 r}},t^z_{{64\Delta_0 r}})\in \hmset^z$. Let $P\in \pset_5$ be the path routing this pair. We partition the cycle $Z$ into two segments: $\sigma$ containing $\sigma_{2z-1}$ and $\sigma'$ containing $\sigma_{2z}$ arbitrarily, so $\sigma$ now contains all source vertices, and $\sigma'$ now contains all destination vertices of the pairs in $\mset^z$. It is enough to show that $P\cap \sigma \subseteq \mu_r^z$ and  $P\cap \sigma'\subseteq \tilde{\mu}_r^z$.
We prove that $P\cap \sigma \subseteq \mu_r^z$. The proof that $P\cap \sigma'\subseteq \tilde{\mu}_r^z$ is symmetric.

Assume for contradiction that $P\cap \sigma \not\subseteq \mu_r^z$, and let $v$ be the first vertex on $P$ that belongs to $\sigma\setminus   \mu_{{ r}}^z$, where we view $P$ as directed from $s^z_{\ell}$ to $t^z_{\ell}$.  Let $P'$ be the sub-path of $P$ from its first vertex to $v$. Consider the planar embedding of $\hat G$, where we fix any face lying outside of $D(Z)$ as the outer face. In this planar embedding, denote by $Y$ the union of $D(C),\gamma(s^z_{\ell}),P'$, and $\gamma(v)$, and let $R$ be the outer boundary of $Y$ (see Figure~\ref{fig: 1b end}).  

\begin{figure}[h]
 \centering
\scalebox{0.3}{\includegraphics{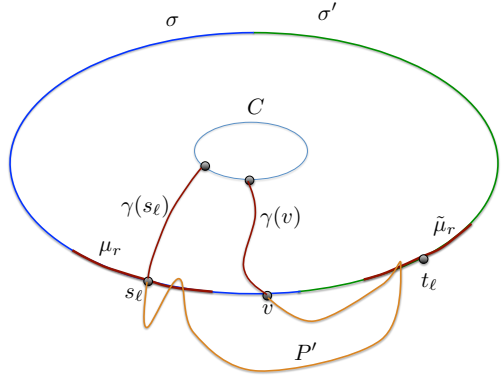}}\caption{Constructing the curve $R$\label{fig: 1b end}. The segments $\sigma$ and $\sigma'$ of $Z$ are shown in blue and green, respectively. We omit the superscript $z$ for brevity of notation.}
\end{figure}

Notice that, since there are no conflicts in $\pset_4$, curve $R$ does not cross any curve in the set $C\cup\set{\gamma(t)\mid t\in \tset(\hmset_4)\setminus\set{s^z_{\ell},t^z_{\ell}}}$, and so, since both $s^z_{\ell}$ and $v$ lie on $\sigma$, all destination vertices of the demand pairs in $\mset^z$, except for possibly $t^z_{\ell}$, lie on  one side of $R$. Let $S_1=\set{s^z_{\ell-32\Delta_0},\ldots,s^z_{\ell-1}}$, and $S_2=\set{s^z_{\ell+1},\ldots, s^z_{\ell+32\Delta_0-1}}$. 
Denote $\Gamma_1=\set{\gamma(s^z_i)\mid s^z_i\in S_1}$ and $\Gamma_2=\set{\gamma(s^z_i)\mid s^z_i\in S_2}$. Since the curves in set $\set{\gamma(u)\mid u\in V(Z)}$ are non-crossing, if $\beta,\beta'$ are the two segments of $Z$ whose endpoints are $s_{\ell}$ and $v$, then all vertices of $S_1\setminus\set{v}$ lie on one of the segments (say $\beta$), while all vertices of $S_2\setminus\set{v}$ lie on the other segment. Moreover, since path $P'$ cannot cross any curve in $\Gamma_1\cup \Gamma_2$, the vertices of $S_1\setminus\set{v}$ lie on one side of $R$, and the vertices of $S_2\setminus\set{v}$ lie on the other side of $R$. Therefore, either all vertices of $S_1\setminus\set{v}$ are separated by $R$ from all destination vertices of the demand pairs in $\mset^z$ (except for possibly $t^z_{\ell}$), or the same holds for $S_2\setminus\set{v}$ - we assume without loss of generality that it is the former.

 All sources of $S_1\setminus\set{v}$ are then separated by $R$ from their destinations vertices, and yet all corresponding demand pairs are routed by $\pset_2$. Therefore, at least $32\Delta_0-2$ paths in $\pset_2$ must cross the curve $R$. Recall that none of these paths can cross $C$, $P'$, or $\gamma(s^z_{\ell})$. Therefore, all these paths must cross $\gamma(v)$. But since the length of $\gamma(v)$ is at most $16\Delta_0$, and the paths are node-disjoint, this is impossible.
\end{proof}
\end{proof}

The rest of the proof follows the same strategy as the proof for Case 2a, but it is somewhat more involved. We break it into three steps. In the first step, we construct a single shell around the vertex $x$. In the second step, we map the terminals to the cycles of the shell. In the final step, we reduce the problem to the special case, by constructing a cycle $Z$ and mapping a subset of the terminals to the vertices of $Z$. 

We assume that $\opt(G,\nset)>2^{13}\Delta_0^4$, since otherwise we can route a single demand pair and obtain a valid solution.



\subsection*{Step 1: the Shell}

In this step we construct a shell $\zset(x)=(Z_1(x),\ldots,Z_{h}(x))$ of depth $h\leq 8\Delta_0$ around $x$. 
We first give a high-level explanation of why the construction of the shell in this case is more challenging than that in Case 2a, and motivate our construction.
Recall that given such a shell $\zset(x)$, we have defined, for all $1\leq h'\leq h$, a set $U_{h'}$ of vertices contained in $\dnot(Z_{h'}(x))\setminus D(Z_{h'-1}(x))$, and a set $\rset_{h'}$ of all connected components of $G[U_{h'}]$. For all $1\leq h'\leq h$, for every component $R\in \rset_{h'}$, we have defined the disc $\eta(R)$ (given by Theorem~\ref{thm: curves around CC's}), whose boundary $\gamma(R)$ served as the curve $\Gamma(t)$ for all terminals $t$ lying in $R$. In Case 2a, discs $\eta(R)$ had the important property that no demand pair in $\nset$ is contained in $\eta(R)$. This ensured that whenever a path routing any demand pair in $\nset$ intersects $R$, such a path must also cross $\gamma(R)$. This latter property was crucial in ensuring that, after we map all terminals to the vertices of the shell, the solution value does not change by much. Unfortunately, in Case 2b this is no longer true, and it is possible that discs $\eta(R)$, and even the components $R$ themselves, contain many demand pairs from $\nset$. We get around this problem as follows. We construct the shell around $x$ carefully, to ensure that the total number of the terminals contained in each such disc $\eta(R)$ is relatively small. We say that a pair of terminals is bad if there is a path connecting these terminals, which is completely contained in some disc $\eta(R)$, and it is good otherwise. As long as bad terminal pairs exist in our graph, we iteratively route one such pair inside the corresponding disc $\eta(R)$, and discard all other terminals lying in this disc. Since the total number of the terminals contained in each disc is relatively small, the number of the terminals we discard at this step is relatively small compared to the number of the demand pairs we route. If we manage to route a large enough number of demand pairs in this step, then we terminate the algorithm and return this solution. Otherwise, we let $\nset'$ be the subset of the demand pairs that have not been routed or discarded yet. Then $|\nset'|$ is sufficiently large relatively to $|\nset|$, and we have now achieved the property that for all components $R$, any path connecting a demand pair in $\nset'$ that intersects $R$ must also cross $\gamma(R)$. 
We now proceed to describe the shell construction.  Let $\tau^*=64\Delta_0/\alphaWL$, and let $\tset$ be the set of all terminals participating in the demand pairs in $\nset$.


For all integers $i>0$, let $S_i$ be the set of all vertices $v\not\in D_x$, with $\dface(v,V(C_x))\geq i$. We say that a connected component $R$ of $G[S_i]$ is \emph{heavy} iff $R$ contains more than $\tau^*$ terminals of $\tset$. Let $i^*$ be the largest integer, such that $G[S_{i^*}]$ contains at least one heavy connected component, and let $R^*$ be any such component. We need the following easy observation.

\begin{observation}
$i^*\leq 8\Delta_0-\Delta-4$.
\end{observation}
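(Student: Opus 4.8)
The plan is to bound $i^*$ by controlling how deep a ``heavy'' component can sit inside the layered structure around $x$, using the well-linkedness of the terminals together with a counting argument on the number of terminals that can possibly be trapped far from $C_x$. First I would recall that $\Delta = \ceil{W^{2/19}}$ and $\Delta_0 = \Theta(\Delta\log n)$, and that $\tau^* = 64\Delta_0/\alphaWL = \Theta(\Delta_0 \log k / w^*)$. The key point is that the total number of terminals in the instance $\nset$ is $|\tset| = 2|\nset|$, and we know $W = w^*|\nset|$, so $|\tset| = 2W/w^*$. A heavy component, by definition, contains more than $\tau^*$ terminals. I would argue that if $i^*$ were too large, then there would exist a $G$-normal curve of small length (roughly $i^* + O(\Delta)$, using the enclosure $C_x$ of length $\Delta$ and the fact that $R^*\subseteq G[S_{i^*}]$ is separated from $C_x$ by $i^*$ ``layers'') separating the more-than-$\tau^*$ terminals of $R^*$ from the rest of the terminal set. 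Since $R^*$ is heavy, the smaller side of this cut (with respect to $\tset$) still contains at least $\min\{\tau^*, |\tset|/2\}$ terminals — and since $\tau^* \ll |\tset|$ in the regime we care about (this should follow from $W \gg \mathrm{poly}(\Delta_0)$, which we have assumed, e.g. $\opt(G,\nset) > 2^{13}\Delta_0^4$), it contains $\geq \tau^*$ terminals, hence at least $\tau^*/2$ terminals on the side not containing $x$.

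Next I would invoke $\alphaWL$-well-linkedness of $\tset$ in $G$: partitioning $\tset$ into the two sides of the cut (after balancing sizes), there must be at least $\alphaWL \cdot \min\{\tau^*, |\tset|/2\} = \alphaWL\cdot\tau^*$ node-disjoint paths crossing the separating curve. But the separating curve has length at most $i^* + O(\Delta)$ (roughly $i^* + \Delta + c$ for a small constant $c$, built by combining the concentric-cycle-type separators at distance $i^*$ with $C_x$), and each such path must use a distinct vertex on the curve. This forces $i^* + O(\Delta) \geq \alphaWL\cdot\tau^* = \alphaWL\cdot 64\Delta_0/\alphaWL = 64\Delta_0$. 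Rearranging gives $i^* \geq 64\Delta_0 - O(\Delta) - c$; but this is a lower bound, which is the wrong direction — so I would instead run the argument contrapositively: \emph{if} $i^* > 8\Delta_0 - \Delta - 4$, I want a contradiction. The correct version: the separating curve realizing the cut at depth $i^*$ has length exactly governed by the min-cut, which by Menger equals the max number of disjoint paths, which by well-linkedness is $\geq \alphaWL\tau^* = 64\Delta_0$ when both sides have $\geq \tau^*$ terminals; but I can also \emph{exhibit} a short separating curve of length about $i^* + \Delta$, and short means the min-cut is small, contradicting the lower bound unless $i^*$ is large. So actually the bound we want, $i^* \leq 8\Delta_0 - \Delta - 4$, must come the other way: I suspect the intended argument is that there is always a short curve separating any heavy component from $C_x$ \emph{because} a heavy component deep inside would need many disjoint paths to $C_x$'s neighborhood to be well-linked, and these many paths passing through the annulus of width $i^*$ would, by a counting/pigeonhole on the layers, force one layer (a tight concentric cycle) to be long, which is then impossible since layers are bounded — I would look for the precise constant matching $8\Delta_0$ by tracking that the shell depth parameter here is $8\Delta_0$ and $C_x$ contributes $\Delta$ and there is a $+4$ slack for curve-concatenation overhead.

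The main obstacle I expect is pinning down the exact mechanism and getting the constants to line up to precisely $8\Delta_0 - \Delta - 4$ rather than some other $O(\Delta_0)$ bound. Concretely: one must decide whether the bound comes from (a) a direct max-flow/min-cut argument saying ``a heavy component admits a small separator from $C_x$, and a small separator at depth $i^*$ yields a $G$-normal curve of length $< \Delta$ separating the two terminal classes, contradicting the enclosure property of $C_x$ that gives $\Delta$ disjoint paths'' — this looks like the cleanest route and would naturally produce a bound like $i^* + (\text{overhead}) \leq 8\Delta_0$ — or (b) an iterated-shell argument. Route (a) seems right: since $R^*$ is heavy with $> \tau^* = 64\Delta_0/\alphaWL$ terminals, if additionally $i^*$ is large one would combine the $i^*$-deep separator with the structure around $C_x$ to violate well-linkedness between $R^*\cap\tset$ and its complement; the arithmetic $\alphaWL \cdot (\text{something}) $ versus curve length is where $64\Delta_0$ enters, and the reduction to depth $i^*$ plus the $\Delta$ length of $C_x$ plus constant-sized curve junctions gives $i^* \leq 8\Delta_0 - \Delta - 4$. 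I would carry this out by: first fixing the planar drawing of $G$ with a face incident to $x$'s complement region as outer face; second, if $i^* > 8\Delta_0-\Delta-4$, extracting from the layered sets $S_1 \supseteq S_2 \supseteq \cdots \supseteq S_{i^*}$ a nested family of separating curves and selecting the minimum one; third, showing its length must be both $< \Delta$ (since it sits inside a region forcing it short, via the concentric-cycle/min-cycle machinery of Theorem~\ref{thm: curves around CC's} applied to the component $R^*$, which gives a curve $\gamma(R^*)$ of length $\leq 2i^* + 1 + \Delta/2$ — wait, that is long, so instead) — and fourth, reconciling by using the \emph{other} direction: the well-linkedness forces $\geq \alphaWL\tau^* = 64\Delta_0$ disjoint paths from $R^*\cap\tset$ out of any enclosing region, and routing all these through the width-$i^*$ annulus while they must each cross every intermediate layer, combined with the bound $|V(C_x)| = \Delta$ at the inner boundary, caps $i^*$ at $8\Delta_0 - \Delta - 4$ by a direct comparison $i^* + \Delta + 4 \leq 8\Delta_0$. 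Settling which of these is the author's intended one-line observation, and verifying the slack ``$-4$'', is the delicate part; everything else is routine planar-separator bookkeeping.
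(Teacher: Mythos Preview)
Your proposal has a genuine gap: you are searching for a well-linkedness / min-cut mechanism, but the observation does not use well-linkedness at all, and your own sketch never settles on a consistent direction (you yourself note the argument seems to give a lower bound on $i^*$ rather than an upper bound). The missing idea is that in Subcase~2b we have $d(x,y)\leq 5\Delta_0$, so \emph{every} terminal in $\tset(\nset)$ is close to $x$: if $t\in X$ then $d(t,x)\leq\Delta_0$, and if $t\in Y$ then $d(t,y)\leq\Delta_0$ together with the weak triangle inequality gives $d(t,x)\leq 6\Delta_0+\Delta/2+1$. This is the key structural fact your proposal never invokes.

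The paper's argument is then a two-line containment trick using enclosures, not a flow argument. Suppose $i^*>8\Delta_0-\Delta-4$ and set $h=8\Delta_0-\Delta-4$; then $G[S_h]$ has a heavy component $R$. Pick any terminal $t\in R$. The distance bound above says every vertex of $C_t$ is connected to $C_x$ by a $G$-normal curve of length at most $6\Delta_0+\Delta+2<h$, so $V(C_t)\cap S_h=\emptyset$. But $t\in S_h$ and $C_t$ separates $D_t$ from the rest of $G$, so the connected component $R\subseteq G[S_h]$ is trapped inside $D_t$. Now the enclosure definition gives $|\tset\cap V(D_t)|\leq 4\Delta/\alphaWL<64\Delta_0/\alphaWL=\tau^*$, contradicting that $R$ is heavy. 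The constant $8\Delta_0-\Delta-4$ simply records the inequality $6\Delta_0+\Delta+2<8\Delta_0-\Delta-4$, which holds since $\Delta_0=\Theta(\Delta\log n)$. Your attempts via $\alphaWL\cdot\tau^*=64\Delta_0$ and path-counting through annuli are unnecessary and, as you observed, point the wrong way.
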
 
\begin{proof}
Assume otherwise and let $h=8\Delta_0-\Delta-4$. Then there is some connected component $R$ of $G[S_h]$, containing more than $\tau^*$ terminals of $\tset$. Let $t\in V(R)\cap \tset$ be any such terminal. If $t\in X$, then $d(t,x)\leq \Delta_0$. If $t\in Y$, then $d(t,y)\leq \Delta_0$, and $d(x,y)\leq 5\Delta_0$, so from Observation~\ref{obs: weak-triangle-inequality}, $d(t,x)\leq 6\Delta_0+\Delta/2+1$. Therefore, some vertex $v\in C_t$ has a $G$-normal curve of length at most  $6\Delta_0+\Delta/2+1$ connecting it to a vertex of $C_x$, and so every vertex in $C_t$ has a $G$-normal curve of length at most $6\Delta_0+\Delta/2+1+\Delta/2+1< 8\Delta_0-\Delta-4$ connecting it to a vertex of $C_x$. Therefore, $V(C_t)\subseteq V(G)\setminus S_h$, while $t\in S_h$. Since $V(C_t)$ separates $V(D_t)$ from all remaining vertices of $G$, and $R$ is a connected component of $G[S_h]$, it follows that $R\subseteq V(D_t)$. But $D_t$ contains fewer than $\tau^*$ terminals from the definition of enclosures, a contradiction.
\end{proof}

Let $\kappa=|\tset\cap R^*|$, so $\kappa>\tau^*$. The following observation will be useful in order to bound the number of terminals contained in each disc $\eta(R)$. The proof follows immediately from the well-linkedness of the terminals.

\begin{observation}\label{obs: number of terminals in disc}
Let $D$ be any disc, whose boundary $\gamma$ is a $G$-normal curve of length less than $16\Delta_0$. Assume further that at least $\kappa/4$ terminals lie outside of $D$. Then $D$ contains at most $\tau^*$ terminals of $\tset$.
\end{observation}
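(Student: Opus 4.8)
The plan is to use the well-linkedness of the terminal set $\tset$ together with the fact that $\gamma$ is short. Suppose for contradiction that $D$ contains more than $\tau^*$ terminals of $\tset$. Since $\tau^* = 64\Delta_0/\alphaWL$ and $\ell(\gamma) < 16\Delta_0$, I would take a subset $\tset_1 \subseteq \tset \cap V(D)$ of exactly $\lceil \tau^* \rceil$ terminals lying in $D$, and, using the assumption that at least $\kappa/4 \ge \tau^*$ terminals lie outside $D$, a subset $\tset_2 \subseteq \tset \setminus V(D)$ of the same size lying strictly outside $D$. These are disjoint equal-sized subsets of $\tset$, so by $\alphaWL$-well-linkedness there is a set $\pset$ of at least $\alphaWL \cdot |\tset_1| = \alphaWL \cdot \lceil \tau^* \rceil \ge 64\Delta_0$ node-disjoint paths in $G$ connecting $\tset_1$ to $\tset_2$.

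The key observation is that every path in $\pset$ must cross $\gamma$: one endpoint of each such path lies inside $D$ and the other lies strictly outside $D$, and since $\gamma$ is the boundary of $D$ and these are $G$-normal curves, any path in $G$ from a vertex of $V(D)$ to a vertex outside $V(D)$ must pass through a vertex of $V(\gamma)$. (Here I use that $V(D)$ is exactly the set of vertices of $G$ lying inside $D$ or on its boundary, so the only way to leave $V(D)$ along a path in $G$ is via a vertex of $V(\gamma)$.) Since the paths in $\pset$ are node-disjoint, they use at least $|\pset| \ge 64\Delta_0$ distinct vertices of $V(\gamma)$. But $\ell(\gamma) = |V(\gamma)| < 16\Delta_0 < 64\Delta_0$, a contradiction.

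I would need to double-check one small point: whether the slack in the constants actually works out, i.e. that $\alphaWL \cdot \lceil \tau^* \rceil$ genuinely exceeds $\ell(\gamma)$; with $\tau^* = 64\Delta_0/\alphaWL$ we get $\alphaWL \cdot \lceil \tau^* \rceil \ge 64\Delta_0 > 16\Delta_0 > \ell(\gamma)$, so the bound is comfortable. There is essentially no hard step here — the only thing to be careful about is the precise definition of "contains" for vertices on the boundary of $D$ (so that the $\tset_1$ chosen inside $D$ and the $\tset_2$ chosen strictly outside are genuinely separated by $V(\gamma)$), and making sure we have enough terminals outside $D$ to form $\tset_2$, which is exactly what the hypothesis "at least $\kappa/4$ terminals lie outside of $D$" together with $\kappa/4 \ge \tau^* / 4 \cdot 4$... more precisely $\kappa > \tau^*$ gives $\kappa/4 > \tau^*/4$, which is not quite enough on its own, so I would instead invoke that in all applications of this observation $\kappa$ will be large enough that $\kappa/4 \ge \lceil\tau^*\rceil$; alternatively one states the hypothesis as "at least $\tau^*$ terminals lie outside of $D$", which is how it is actually used. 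The main (and only) obstacle is thus bookkeeping the constants, not any real mathematical difficulty.
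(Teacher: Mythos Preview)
Your approach is exactly the paper's: the observation ``follows immediately from the well-linkedness of the terminals,'' and your contradiction via node-disjoint paths crossing $\gamma$ is the intended argument.

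Your self-raised worry about whether $\kappa/4 \ge \lceil\tau^*\rceil$ is unnecessary, and the suggested fixes (strengthening the hypothesis, or appealing to how the observation is used) are not needed. Simply take $\tset_1 \subseteq \tset\cap V(D)$ and $\tset_2 \subseteq \tset\setminus V(D)$ each of size $m=\min(|\tset\cap V(D)|,\,|\tset\setminus V(D)|)$. Well-linkedness gives at least $\alphaWL\cdot m$ node-disjoint paths, all of which must pass through $V(\gamma)$, so $\alphaWL\cdot m \le \ell(\gamma) < 16\Delta_0$. If the minimum is attained on the inside, then $m > \tau^*$ gives $\alphaWL\cdot m > 64\Delta_0$; if on the outside, then $m \ge \kappa/4 > \tau^*/4 = 16\Delta_0/\alphaWL$ (using $\kappa > \tau^*$), so again $\alphaWL\cdot m > 16\Delta_0$. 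Either way you get the contradiction without any extra assumption on $\kappa$.
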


We now provide some further intuition. Let $h'=i^*-1$, and consider a shell $\zset(x)=(Z_1(x),\ldots,Z_{h'}(x))$ of depth $h'$ around $x$, where we let $F_x$ be any face incident on a vertex of $R^*$. We can then define, for each $1\leq h''\leq h'$, the set $\rset_{h''}$ of connected components of the graph induced by all vertices lying in $D^{\circ}(Z_{h''}(x))\setminus D(Z_{h''-1}(x))$, set $\rset=\bigcup_{h''=1}^{h'}\rset_{h''}$, and use Theorem~\ref{thm: curves around CC's} in order to compute the discs $\eta(R)$ for the components $R\in \rset$. Moreover, since all vertices of $R^*$ lie outside each such disc $\eta(R)$, from Observation~\ref{obs: number of terminals in disc}, each such disc contains at most $\tau^*$ terminals. The problem is that $R^*$ may still contain many terminals, while we need to ensure that most of the terminals lie in the components of $\rset$. We get around this problem by extending the shell, and adding two outer cycles to it. First, we consider the outer boundary $\Gamma$ of the  graph $R^*$ (in the drawing where the face containing $x$ is viewed as the outer face), and carefully select some cycle $C\in \Gamma$, so that, if we add $C$ as the outer-most cycle to the shell $\zset(x)$, by setting $Z_{h'+1}(x)=C$, then for every component $R\in \rset_{h'+1}$, we still maintain the property that the corresponding disc $\eta(R)$ contains at most $\tau^*$ terminals. Finally, we take care of the terminals contained in the disc $D(C)$, that currently do not lie in any component of $\rset$. The idea is to carefully select one vertex $\tilde u\in V(C)$, and to attach a new cycle $C'$ to $\tilde u$, that lies ``inside'' cycle $C$, and is then added to the shell as the outermost cycle, so $Z_{h'+2}(x)=C'$. We then view the face whose boundary is $C'$ as the face $F_x$ in the shell construction. Therefore, once $\tilde u$ is selected and cycle $C'$ is added to the drawing of $G$, both the construction of the shell, and the construction of the discs $\eta(R)$ for $R\in \rset$ are fixed. We would like to select $\tilde u$ in such a way that each resulting disc $\eta(R)$ for $R\in \rset_{h'+2}$ contains at most $\tau^*$ terminals. We achieve this by discarding a small number of terminals and their corresponding demand pairs. We now describe the construction more formally.

Consider the graph $R^*$, and its drawing in the plane, induced by the drawing of $G$ on the sphere, where we view the face where $x$ used to be as the outer face. Let $\Gamma$ be the boundary of the outer face in this drawing of $R^*$, and let $\cset$ be the set of all simple cycles in $\Gamma$. 
Let $H$ be the block-decomposition of $\Gamma$. That is, the set $V(H)$ of vertices consists of two subsets: set $V_1$ of cut vertices of $\Gamma$, and set $V_2$, containing a vertex $v_B$ for every block $B$ (a maximal $2$-connected component) of $\Gamma$. We add an edge $(u,v_B)$ between vertices $u\in V_1$ and $v_B\in V_2$ iff $u\in V(B)$. It is easy to see that graph $H$ is a tree, and we root it at any vertex. We next define weights for the vertices of $H$. In order to do so, every terminal in $\tset\cap R^*$ will contribute a weight of $1$ to one of the vertices of $V_2$, and the weight of every vertex in $V_2$ is then the total weight contributed to it. The weights of all vertices in $V_1$ are $0$. Consider some terminal $t\in \tset\cap R^*$. If there is some simple cycle $C\subseteq \Gamma$, such that $t\in D(C)$, then $t$ contributes the weight of $1$ to the vertex $v_B$, where $B=C$ (if $t$ belongs to several such cycles, then we select one of these cycles arbitrarily). Otherwise, $t\in V(\Gamma)$, and there is some vertex $v_B\in V_2$, such that block $B$ consists of a single edge $e$, and $t$ is one of its endpoints. Among all such vertices $v_B$, we choose one arbitrarily, and contribute the weight of $t$ to $v_B$.

For every subgraph $H'\subseteq H$, the weight of $H'$ is the total weight of all vertices in $H'$.  Clearly, the weight of $H$ is $\kappa$. We need the following simple claim.

\begin{claim}\label{claim: middle vertex}
There is some vertex $u^*\in V(H)$, such that, if we root $H$ at $u^*$, then for every child $u'$ of $u^*$, the weight of the sub-tree of $H$ rooted at $u'$ is at most $\kappa/2$.
\end{claim}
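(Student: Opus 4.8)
\textbf{Proof proposal for Claim~\ref{claim: middle vertex}.}

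The plan is to use the standard ``centroid of a tree'' argument, adapted to the vertex-weighted setting. First I would observe that the total weight of $H$ is exactly $\kappa$, since every terminal in $\tset\cap R^*$ contributes weight $1$ to exactly one vertex of $V_2$. Now consider the following process: start with an arbitrary vertex $u\in V(H)$, root $H$ at $u$, and check whether every child-subtree has weight at most $\kappa/2$. If yes, we are done with $u^*=u$. If not, there is a unique child $u'$ of $u$ whose subtree has weight strictly more than $\kappa/2$ (uniqueness holds because the child-subtrees are vertex-disjoint and their total weight is at most $\kappa$, so at most one can exceed $\kappa/2$). We then move from $u$ to $u'$ and repeat.

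The key step is to argue termination and correctness. When we move from $u$ to its child $u'$, re-root at $u'$: the subtree hanging off $u'$ in the direction of $u$ now has weight equal to $\kappa$ minus the weight of the old $u'$-subtree, which is strictly less than $\kappa/2$; all other child-subtrees of $u'$ (under the new rooting) are exactly the child-subtrees of $u'$ under the old rooting, hence each has weight at most (old $u'$-subtree weight) $-$ (weight of $u'$ itself) $\le$ (old $u'$-subtree weight). If all of these are $\le \kappa/2$ we stop; otherwise exactly one of them, say the subtree rooted at $u''$, has weight $>\kappa/2$, and we continue from $u''$. The crucial monotonicity observation is that the weight of the ``heavy'' subtree we track strictly decreases at each step: the subtree rooted at $u''$ is a proper subtree of the subtree rooted at $u'$ (it excludes $u'$ and all its other subtrees), so its weight is strictly smaller. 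Since weights are nonnegative integers, this process must terminate, and it terminates precisely when the current root $u^*$ has the property that every child-subtree has weight at most $\kappa/2$, which is exactly the desired conclusion.

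The main obstacle, such as it is, is purely a bookkeeping matter: verifying that under re-rooting the child-subtrees transform as claimed, and that exactly one child-subtree can be ``heavy'' (i.e. exceed $\kappa/2$) at any time — this is where one must be careful that the subtrees are vertex-disjoint and exhaust $V(H)\setminus\{u\}$, so their weights sum to at most $\kappa$. There is no real difficulty here; the argument is a direct analogue of the classical tree-centroid lemma, and the vertex weights (as opposed to edge weights or unit weights) cause no complication since $V_1$ vertices simply carry weight $0$. I would present this in a few lines, emphasizing the strict decrease of the tracked heavy-subtree weight as the termination certificate.
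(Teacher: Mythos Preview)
Your proposal is correct and follows essentially the same centroid-descent argument as the paper (which is even terser: root anywhere, repeatedly move to a child whose subtree has weight $>\kappa/2$, and observe that the terminal vertex works). One small nitpick: your termination certificate via ``the heavy subtree weight strictly decreases'' is not quite right, since $u'$ may have weight $0$ and $u''$ may be its only other child with nonzero-weight subtree; the clean fix is to note instead that you never move back (the subtree toward the previous vertex has weight $<\kappa/2$), so you trace a simple path in a finite tree.
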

\begin{proof}
 We root $H$ at any vertex $v$, and set $u=v$. We then iterate. If the current vertex $u$ has a child $u'$, such that the total weight of all vertices contained in the sub-tree of $H$ rooted at $u'$ is more than $\kappa/2$, then we move $u$ to $u'$. It is easy to see that when this procedure terminates, we will find the desired vertex $u^*$.
\end{proof}

Consider the vertex $u^*$ computed by the above claim. If $u^*\in V_1$, then let $v^*=u^*$. Otherwise, $u^*=v_B$ for some vertex $v_B\in V_2$, then let $v^*$ be any vertex of $B$. We assume  without loss of generality that $v^*$ lies on some simple cycle $C\subseteq \Gamma$: otherwise, we create an artificial cycle $C=(v^*,u_1,u_2)$, where $u_1$ and $u_2$ are two new vertices. If $v^*$ lies on several such cycles, then we let $C$ be any one of them. Notice that every connected component of $R^*\setminus V(D(C))$ contains at most $\kappa/2$ terminals, and has exactly one neighbor in $V(C)$.

Let $\tilde u$ be some vertex in $V(C)$ (that we will select later).
We then add a new cycle $C'=(v_1,v_2,v_3)$, containing all new vertices, and an edge $e=(\tilde u,v_1)$, and draw $C'$ inside $C$ (we later specify the precise location of this drawing). 
Let $G'=G\cup C'\cup\set{e}$. We let $F_x$ be the face in the drawing of $G'$ on the sphere, whose boundary is $C'$. Letting $h=i^*$, we construct a shell $\zset(x)=(Z_1(x),\ldots,Z_{h}(x))$ of depth $h=i^*\leq 8\Delta_0-\Delta-4$ around $x$, with respect to $F_x$. As before, we use $Z_0(x)$ to denote $C_x$. Notice that from our construction, $Z_h(x)=C$. We note that the choice of the vertex $\tilde u\in V(C)$ to which the cycle $C'$ is attached does not affect the construction of the shell: for any such choice, the shell will be the same. Notice also that the addition of the new cycles does not affect the routings. So abusing the notation we denote $G'$ by $G$.

For every $1\leq h'\leq h$, we let $U_{h'}$ be the set of all vertices in $\dnot(Z_{h'}(x))\setminus D(Z_{h'-1}(x))$, and  let $\rset_{h'}$ be the set of all connected components of $G[U_{h'}]$. We let $U_{h+1}$ be the set of all vertices lying outside $D(Z_h(x))$ in the planar embedding of $G$ where $F_x$ is the outer face (equivalently, $U_{h+1}$ is the set of all vertices lying in disc $D^{\circ}(C)$ in the planar embedding of $G$ where the face containing $x$ is viewed as the outer face), and denote by $\rset_{h+1}$ the set of all connected components of $G[U_{h+1}]$. We will view the components in $\rset_{h+1}$ as type-2 components with respect to the shell. For each such component $R\in \rset_{h+1}$, we let $L(R)\subseteq V(C)$ be the set of the neighbors of the vertices of $R$, and we leave $u(R)$ undefined. We need the following simple observation.

\begin{observation}\label{obs: inner components are light}
For each $R\in \rset_{h+1}$, $|V(R)\cap \tset|\leq \tau^*$.
\end{observation}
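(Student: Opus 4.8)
\textbf{Proof proposal for Observation~\ref{obs: inner components are light}.}

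The plan is to apply Observation~\ref{obs: number of terminals in disc}, whose hypothesis requires a disc whose boundary is a short $G$-normal curve and such that at least $\kappa/4$ terminals of $\tset$ lie outside it. First I would exhibit, for each component $R\in\rset_{h+1}$, a suitable enclosing disc. The natural candidate is the disc $\eta(R)$ supplied by Theorem~\ref{thm: curves around CC's}, viewed as a type-$2$ component with respect to the extended shell; its boundary $\gamma(R)$ is a simple closed $G$-normal curve of length at most $2(h+1)+1+\Delta/2$, and since $h\le 8\Delta_0-\Delta-4$ this is well below $16\Delta_0$, so the length requirement of Observation~\ref{obs: number of terminals in disc} is met. (If one prefers to avoid invoking Theorem~\ref{thm: curves around CC's} for the outermost layer, one can instead build the enclosing disc directly from $\sigma(R)\subseteq V(C)$ together with the outer boundary of the drawing of $G[V(R)\cup L(R)]$ inside $D(C)$; either way the boundary curve has length $O(\Delta_0)$.)

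Next I would verify that at least $\kappa/4$ terminals of $\tset$ lie outside this disc. Here is where the careful choice of $\tilde u$ and the cycle $C$ via Claim~\ref{claim: middle vertex} enters. Recall $\kappa=|\tset\cap R^*|$, and every connected component of $R^*\setminus V(D(C))$ has exactly one neighbor in $V(C)$ and contains at most $\kappa/2$ terminals; moreover the weight of the subtree of $H$ below $u^*$ on any one side is at most $\kappa/2$, which controls how many terminals can be swallowed by any single ``side'' of $C$. Since $R\in\rset_{h+1}$ lies inside $D(C)$ and hangs off $C$ through the vertex set $L(R)$, all of $R$ sits inside one of these bounded pieces, so at most $\kappa/2$ terminals can be inside $D(C)$ on $R$'s side; furthermore the $G$-normal curve $\gamma(R)$ separates $R$ (and the terminals it contains) from the bulk of $R^*$ lying on the cycle $C$ and outside $D(C)$, of which there are at least $\kappa/2\ge\kappa/4$. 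Hence at least $\kappa/4$ terminals of $\tset\cap R^*\subseteq\tset$ lie outside $\eta(R)$, and Observation~\ref{obs: number of terminals in disc} gives $|V(R)\cap\tset|\le\tau^*$.

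The main obstacle I anticipate is bookkeeping around the attachment vertex $\tilde u$: one must argue that \emph{for the chosen} $\tilde u$, \emph{every} component $R\in\rset_{h+1}$ genuinely lies in a single bounded piece of $R^*\setminus V(D(C))$ (so that no component escapes the $\kappa/2$ bound), and that $\gamma(R)$ really does separate $R$'s terminals from a $\ge\kappa/4$-sized chunk of $\tset$. This is exactly the purpose of Claim~\ref{claim: middle vertex}: rooting $H$ at $u^*$ bounds each child subtree's weight by $\kappa/2$, so no matter which piece $R$ falls into, the complementary part of $R^*$ retains at least $\kappa/2$ terminals. The argument then reduces to checking that the disc $\eta(R)$ (equivalently $\gamma(R)$) lies inside $D(C)$ and does not accidentally contain $R^*$'s outer boundary vertices — which follows from Property~2 of Theorem~\ref{thm: curves around CC's} (disjointness of $\gamma(R)$ from all component vertices) together with the fact that $C=Z_h(x)$ and $R$ is a layer-$(h+1)$ component drawn inside $D(C)$. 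Once these containment facts are spelled out, the terminal count drops straight out of Observation~\ref{obs: number of terminals in disc}.
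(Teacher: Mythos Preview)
Your approach does not work, and it misses a much simpler argument.

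\textbf{The actual proof.} The paper's proof is a two-line application of the maximality of $i^*$. Recall that $i^*$ was chosen as the \emph{largest} integer such that $G[S_{i^*}]$ has a heavy connected component, and $R^*$ is such a component. Now take $R\in\rset_{h+1}$. Every vertex of $R$ lies in $D^{\circ}(C)$ (in the drawing with $x$'s face as outer face), and the cycle $C\subseteq R^*\subseteq S_{i^*}$ separates $R$ from all vertices of $V(G)\setminus S_{i^*}$. Hence any $G$-normal curve from $v\in V(R)$ to $V(C_x)$ must pass through a vertex of $C$, which already has distance $\ge i^*$ to $C_x$; so $\dface(v,C_x)\ge i^*+1$ and $V(R)\subseteq S_{i^*+1}$. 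If $|V(R)\cap\tset|>\tau^*$, then the connected component of $G[S_{i^*+1}]$ containing $R$ is heavy, contradicting the maximality of $i^*$.

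\textbf{Where your argument breaks.} You try to invoke Observation~\ref{obs: number of terminals in disc} via the disc $\eta(R)$ and the $\kappa/2$ bound from Claim~\ref{claim: middle vertex}. There are two genuine problems.

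First, the $\kappa/2$ bound from Claim~\ref{claim: middle vertex} controls the number of terminals in connected components of $R^*\setminus V(D(C))$, i.e.\ pieces of $R^*$ hanging \emph{outside} $D(C)$. But the components $R\in\rset_{h+1}$ lie \emph{inside} $D^{\circ}(C)$. These are disjoint families of objects, so the weight bound you cite simply does not apply to $R$.

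Second, the discs $\eta(R)$ for $R\in\rset_{h+1}$ are only defined \emph{after} this observation, once $\tilde u$ and the drawing of $C'$ are fixed; the paper explicitly notes that $\sigma(R)$ and $\eta(R)$ for level $h{+}1$ depend on this choice. More to the point, the entire content of the subsequent Theorem~\ref{thm: case 2b shell} is to \emph{choose} $\tilde u$ (and discard a small set $\tset'$) precisely so that each $\eta(R)$ with $R\in\rset_{h+1}$ leaves enough terminals outside. If your argument worked---that $\kappa/4$ terminals automatically lie outside $\eta(R)$ regardless of $\tilde u$---then that theorem would be trivial. It is not: for a bad choice of $\tilde u$, the segment $\sigma(R)$ can be almost all of $C$ and $\eta(R)$ can swallow nearly all of the terminals in $D(C)$.

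So the observation is not about discs at all; it is purely a statement about the component $R$ itself, and it follows immediately from the definition of $i^*$.
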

\begin{proof}
Assume otherwise, and let $R\in \rset_{h+1}$ be any component with $|R\cap \tset|> \tau^*$. We claim that all vertices of $R$ lie in $S_{i^*+1}$. Indeed, recall that all vertices of $R^*$ lie in $S_{i^*}$, and $C\subseteq R^*$ is a cycle separating $R$ from all vertices of $V(G)\setminus S_{i^*}$. Therefore, for every vertex $v\in V(R)$, $\dface(v,V(C_x))\geq i^*+1$ and $v\in S_{i^*+1}$. But then $R$ is a heavy connected component in $G[S_{i^*+1}]$, contradicting the choice of $i^*$.
\end{proof}

Notice that once $\tilde u$ and the face $F_x$ are fixed, we can define, for every component $R\in \rset_{h+1}$, the segment $\sigma(R)$ of the cycle $C=Z_h(x)$ exactly as before, and we can define the discs $\eta(R)$ for all components $R\in \bigcup_{h'=1}^{h+1}\rset_{h'}$ using Theorem~\ref{thm: curves around CC's}. (It may be convenient to think of $C'$ as the outer-most cycle of the shell; that is, we add $C'$ to the shell as $Z_{h+1}(x)$).
The main theorem summarizing the current step is the following.

\begin{theorem}\label{thm: case 2b shell}
There is an efficient algorithm to compute a vertex $\tilde u\in V(C)$, a drawing of the cycle $C'$, and a subset $\tset'\subseteq\tset$ of at most $4\tau^*$ terminals, such that, in the resulting shell $\zset(x)=(Z_1(x),\ldots,Z_{h}(x))$, for each component $R\in\bigcup_{h'=1}^{h+1}\rset_{h'}$, the resulting disc $\eta(R)$ contains at most $\tau^*$ terminals of $\tset\setminus \tset'$.
\end{theorem}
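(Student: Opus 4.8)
The plan is to bound $|\eta(R)\cap\tset|$ separately for the inner levels $1\le h'\le h$ of the shell and for the outermost level $h'=h+1$; the inner levels will cost nothing, and the entire discarding budget of $4\tau^*$ will be spent on level $h+1$. Throughout I will invoke the following two facts repeatedly: every disc $\eta(R)$ produced by Theorem~\ref{thm: curves around CC's} has boundary a $G$-normal curve of length at most $2(h+1)+1+\Delta/2\le 2(8\Delta_0-\Delta-4+1)+1+\Delta/2<16\Delta_0$, so Observation~\ref{obs: number of terminals in disc} applies to all of them; and a component $R\in\rset_{h+1}$ carries at most $\tau^*$ terminals by Observation~\ref{obs: inner components are light}.

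For the inner levels the point is that such a disc is pinned near $C_x$ while the bulk of $\tset\cap R^*$ sits elsewhere. For $R\in\rset_{h'}$ with $h'\le h$, Theorem~\ref{thm: curves around CC's} gives $\eta(R)\subseteq D(Z_{h'}(x))\subseteq D(Z_h(x))=D(C)$. Since $R^*\subseteq S_{i^*}$ lies at $\dface$-distance at least $i^*=h$ from $C_x$ while $D(Z_{h'}(x))$ only reaches $\dface$-depth about $h'$, for $h'<h$ the disc $\eta(R)$ is disjoint from $R^*$ altogether, and for $h'=h$ thinness of the wedge $\eta(R)$ together with the fact that $C=Z_h(x)$ lies on the outer boundary $\Gamma$ of $R^*$ and the centroid choice of $v^*$ (Claim~\ref{claim: middle vertex}, deleting $V(D(C))$ leaves pieces of weight $\le\kappa/2$) leave at least $\kappa/4$ terminals of $\tset\cap R^*$ outside $\eta(R)$. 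In either case Observation~\ref{obs: number of terminals in disc} gives $|\eta(R)\cap\tset|\le\tau^*$, so no terminal is discarded on account of the inner levels.

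The real work is level $h+1$. Although each $R\in\rset_{h+1}$ is light, $\eta(R)$ is a ``wedge'' reaching from $R$ back to $C_x$ through all $h$ shell cycles, and if the segment $\sigma(R)\subseteq C$ is long the wedge can swallow a large fraction of the other components of $\rset_{h+1}$, hence $\Theta(\kappa)$ terminals. The plan is to choose the attachment vertex $\tilde u\in V(C)$ and the drawing of the triangle $C'$ so that, after $F_x$ is declared the outer face, at most a constant number (at most four) of the components of $\rset_{h+1}$ have a long segment $\sigma(R)$ --- via a terminal-weight centroid argument on the way the components of the enclosed region hang off $C$, re-using the block decomposition $H$ of $\Gamma$ that produced $v^*$, so that $F_x$ is placed at (an edge near) a centroid. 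We then put all terminals of these at most four exceptional components into $\tset'$, which is legitimate since each carries $\le\tau^*$ terminals, so $|\tset'|\le 4\tau^*$. For every remaining $R\in\rset_{h+1}$ the wedge $\eta(R)$ meets only $R$, $O(1)$ other light components, and a thin slice of $D(C)$; the exceptional components, being disjoint from this thin wedge, do not reduce the count of terminals outside it, so at least $\kappa/4$ terminals of $\tset\setminus\tset'$ remain outside $\eta(R)$ and Observation~\ref{obs: number of terminals in disc} again yields $|\eta(R)\cap(\tset\setminus\tset')|\le\tau^*$.

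The hard part is exactly this last step: making precise how the choice of $\tilde u$ and the placement of $C'$ control the segments $\sigma(R)$ in the construction of Theorem~\ref{thm: curves around CC's}, and proving that only $O(1)$ components can be left with a long segment. This is a planar bookkeeping argument tying together the block structure around $v^*$, the wedge construction of the discs $\eta(R)$, and the well-linkedness hypothesis behind Observation~\ref{obs: number of terminals in disc}; the remaining verifications (the length bound $<16\Delta_0$ and the ``at least $\kappa/4$ outside'' estimates) are routine given the inclusions $\eta(R)\subseteq D(Z_{h'}(x))$ and the depth bound $h\le 8\Delta_0-\Delta-4$.
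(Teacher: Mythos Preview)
Your treatment of the inner levels $h'\le h$ is essentially the paper's argument: for $h'<h$ the disc $\eta(R)\subseteq D(Z_{h'}(x))$ misses $R^*$ entirely, and for $h'=h$ the centroid choice of $v^*$ (so that every component of $R^*\setminus V(D(C))$ carries at most $\kappa/2$ terminals) together with the nesting property of Theorem~\ref{thm: curves around CC's} leaves $\ge\kappa/4$ terminals of $R^*$ outside $\eta(R)$, whence Observation~\ref{obs: number of terminals in disc} finishes. Your $h'=h$ sketch is a bit loose (the paper pins down that at most one component of $\rset_h$ inside $\eta(R)$ can lie in $R^*$, namely $R$ itself, because any other such $R'$ would have to be type-2 and hence disconnected from $R^*$ in $G[S_{i^*}]$), but the idea is right.

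For level $h+1$, however, your plan diverges from the paper and leaves a real gap. You propose a single terminal-weight centroid choice of $\tilde u$ that forces all but at most four components of $\rset_{h+1}$ to have ``short'' $\sigma(R)$, then discard those four. You do not prove this, and the criterion is not the right one: even when $\sigma(R)$ is short, the portion $\eta(R)\cap(D(C')\setminus D(C))$ is not just $R$ but the whole region between $\gamma'(R)$ and $\sigma(R)$, which can contain arbitrarily many other components $R'\in\rset_{h+1}$ nested under $R$ (with $\sigma(R')\subseteq\sigma(R)$), hence arbitrarily many terminals. A single centroid placement cannot control a deep nesting chain. The block decomposition $H$ you invoke was used to pick $v^*$ and the cycle $C$, not to place $C'$; it does not directly bear on the $\rset_{h+1}$ geometry.

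What the paper actually does for level $h+1$ is an \emph{iterative peeling} argument that you are missing. Maintain a set $\rset''\subseteq\rset_{h+1}$ of ``unprocessed'' components; while $|\rset''|\ge 2$, show there are always at least two \emph{good} components $R\in\rset''$ admitting a disc $\eta'(R)$ (bounded by a segment $\mu(R)\subseteq C$ and a $G$-normal curve $\gamma'(R)$ hugging $R$) that contains $R$ but no other component of $\rset''$. The two discs $\eta'(R),\eta'(R')$ for the two good components are disjoint, so one of them contains at most $\kappa'/2$ terminals; move that one to the ``processed'' pile $\rset'$. When a single component $R$ remains, set $\tset'=\tset\cap V(R)$ (so $|\tset'|\le\tau^*$ by Observation~\ref{obs: inner components are light}) and let $\tilde u\in L(R)$, drawing $C'$ next to the edge $(\tilde u,v_1)$. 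The point of this specific placement is that it makes the final $\eta(R)$ satisfy $\eta(R)\cap(D(C')\setminus D(C))=\eta'(R)$ for every processed $R\in\rset'$; hence at least $\kappa'/2\ge 2\tau^*$ terminals lie outside $\eta(R)$, and Observation~\ref{obs: number of terminals in disc} applies. This peeling is the missing idea; it handles nesting chains one layer at a time rather than trying to kill them with a single centroid.
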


We emphasize that the shell $\zset=(Z_1(x),\ldots,Z_h(x))$ and the sets $\rset_{h'}$ of components, for $1\leq h'\leq h+1$ do not depend on our choice of the vertex $\tilde u$ or the drawing of $C'$. Similarly, the discs $\eta(R)$ for components $R$ lying in sets $\rset_{h'}$, for $1\leq h'\leq h$, given by Theorem~\ref{thm: curves around CC's}, are also independent of the choice of $\tilde u$ or the drawing of $C'$ inside $C$. The choice of $\tilde u$ only influences the discs $\eta(R)$ for $R\in \rset_{h+1}$, and so our goal is to select $\tilde u$ and $\tset'$, and to draw $C'$ inside $C$ in a way that ensures that each such disc $\eta(R)$ contains few terminals of $\tset\setminus\tset'$.
\begin{proof}
Our first step processes the components of $\rset_{h+1}$ and to select the vertex $\tilde u\in V(C)$. For this step, we will think of $G$ as being embedded on the sphere. 
Let $\kappa'$ be the number of the terminals of $\tset$ contained in the components of $\rset_{h+1}$. If $\kappa'\leq 4\tau^*$, then we add all terminals of $\tset\cap \left(\bigcup_{R\in \rset_{h+1}}V(R)\right )$ to $\tset'$, and terminate the algorithm, setting $\tilde u$ to be any vertex of $C$, and drawing $C'$ anywhere inside $C$, so the resulting drawing of $G$ is planar. (We show below that this choice satisfies the conditions of the theorem). Therefore, we assume from now on that $\kappa'\geq 4\tau^*$, and in particular $|\rset_{h+1}|\geq 1$.

Throughout the algorithm, we maintain a partition of $\rset_{h+1}$ into two subsets: $\rset'$, containing all components we have processed, and $\rset''$, containing all remaining components. At the beginning, $\rset'=\emptyset$ and $\rset''=\rset_{h+1}$.

Consider any component $R\in \rset''$. Recall that all neighbors of the vertices of $R$ must lie on $C$ from our construction, and we denoted the set of these vertices by $L(R)$. 

We say that $R$ is a good component, iff there is a segment $\mu(R)$ of $C$ containing all vertices of $L(R)$, whose endpoints, denoted by $a_1(R)$ and $a_2(R)$ belong to $L(R)$, and there is a $G$-normal curve $\gamma'(R)$, whose endpoints are $a_1(R)$ and $a_2(R)$, such that $\gamma'(R)$ is internally disjoint from $V(G)$, and the following holds: let $\eta'(R)$ be the disc, whose boundary is $\mu(R)\cup \gamma'(R)$, with $R\subseteq \eta'(R)$. Then for all $R'\in \rset''$ with $R'\neq R$, $R'$ is disjoint from $\eta'(R)$ (intuitively, these are the components that lie closest to $C$; see Figure~\ref{fig: extra discs}).

\begin{figure}[h]
 \centering
\scalebox{0.4}{\includegraphics{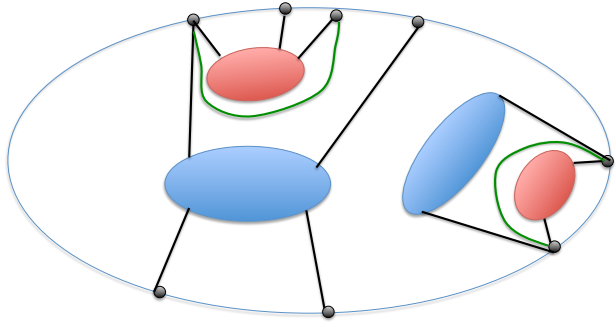}}
\caption{Good components are shown in red, and their corresponding curves $\gamma'(R)$ in green\label{fig: extra discs}}
\end{figure}

\begin{claim}\label{claim: good component}
If $|\rset''|\geq 2$, then there are at least two good components in $\rset''$.
\end{claim}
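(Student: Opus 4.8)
The plan is to analyse the nesting of the components of $\rset''$ inside the disc $D(C)$ and to exhibit two of them that can be sealed off from all the others. Throughout I would work in the planar drawing of $G$ in which the face containing $x$ is the outer face, so that $C=Z_h(x)$ is a simple cycle, every component $R\in\rset_{h+1}$ lies in $\dnot(C)$, and the only neighbours of a vertex of $R$ outside $\dnot(C)$ are legs in $L(R)\subseteq V(C)$; in particular every edge incident to $R$ is contained in $G[V(R)\cup L(R)]$. For $R\in\rset_{h+1}$ let $\sigma(R)$ be the bounding arc of $C$ defined exactly as before Theorem~\ref{thm: curves around CC's}, and recall that $\set{\sigma(R)\mid R\in\rset_{h+1}}$ is a nested family. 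Say that $R$ \emph{encloses} $R'$ if some path of $R$ between two of its legs, together with an arc of $C$ joining those legs, separates $R'$ from the outer face of $D(C)$; by planarity and the disjointness of the components, $R$ encloses $R'$ only if $\sigma(R')\subsetneq\sigma(R)$, so ``enclosure'' is a strict partial order refining $\subsetneq$ on the arcs $\sigma(R)$.

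Now assume $|\rset''|\ge 2$. Components with a single leg are trivially good (a tiny lens around such a component avoids everything), and the artificial triangle $C=(v^*,u_1,u_2)$ used when $v^*$ lies on no cycle of $\Gamma$ is handled the same way, so I may assume all the arcs $\sigma(R)$, $R\in\rset''$, are nondegenerate; they are moreover not all equal, since two distinct components whose legs lie on a common arc are nested and hence have strictly nested $\sigma$'s. Choose $R_{\min}\in\rset''$ with $\sigma(R_{\min})$ inclusion-minimal and $R_{\max}\in\rset''$ with $\sigma(R_{\max})$ inclusion-maximal; by the previous sentence these may be taken distinct. Minimality forces $R_{\min}$ to enclose no component of $\rset''$; maximality forces no component of $\rset''$ to enclose $R_{\max}$ and, via the nested-arc structure, guarantees that the outer corridor of $R_{\max}$ on the side of $C\setminus\sigma(R_{\max})$ is free of components of $\rset''$.

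It then remains to build the separating curves. For $R_{\min}$ I would set $\mu(R_{\min})=\sigma(R_{\min})$ and route $\gamma'(R_{\min})$ from one extreme leg of $R_{\min}$ to the other by hugging the outer boundary walk of $G[V(R_{\min})\cup L(R_{\min})]$ through the faces of $G$ lying just outside it; since every edge touching $R_{\min}$ is internal to $G[V(R_{\min})\cup L(R_{\min})]$, this curve never crosses an edge of $G$ and stays internally disjoint from $V(G)$, and the resulting disc $\eta'(R_{\min})$ consists of $R_{\min}$ together with its internal pockets, which meet no other component of $\rset''$ because $R_{\min}$ encloses nothing. For $R_{\max}$ I would perform the symmetric construction on the opposite side: $\mu(R_{\max})$ is the arc of $C$ complementary to $\sigma(R_{\max})$, enlarged so as to contain all of $L(R_{\max})$, and $\gamma'(R_{\max})$ hugs the boundary of $G[V(R_{\max})\cup L(R_{\max})]$ on the side facing $C\setminus\sigma(R_{\max})$; every other component of $\rset''$ then lies either on the $\sigma(R_{\max})$-side of $R_{\max}$ (hence outside $\eta'(R_{\max})$) or in territory disjoint from $R_{\max}$, so $\eta'(R_{\max})$ is clean. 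Thus $R_{\min}$ and $R_{\max}$ are two distinct good components.

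The step I expect to be the main obstacle is making the routing of $\gamma'(R)$ rigorous: ``hugging the outer boundary walk through faces of $G$'' has to be realised as an explicit curve that avoids every vertex of $G$ and never enters a pocket containing another component, and the fact that this is always possible has to be extracted from the two ingredients above — that the only edges leaving $R$ go to its legs, and that the minimal/maximal choice of $R$ leaves the relevant outer corridor unobstructed. The remaining bookkeeping (that enclosure is a partial order refining the nesting of the $\sigma(R)$'s, and that $\sigma(R_{\min})$ minimal and $\sigma(R_{\max})$ maximal have the stated consequences) follows from planarity and the nested-segments property, but has to be stated carefully because the arcs $\sigma(R)$ may degenerate to one or two vertices.
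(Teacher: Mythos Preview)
Your $R_{\min}$ argument is essentially the paper's first step (find a component whose legs span a shortest arc of $C$; the paper's tiebreak ``closest to $\mu$'' in the $|L(R)|=2$ case plays the role of your ``$R_{\min}$ encloses nothing''). The argument for $R_{\max}$, however, has a real gap. You assert that maximality of $\sigma(R_{\max})$, via the nested-arc structure, forces the corridor between $R_{\max}$ and the arc $C\setminus\sigma(R_{\max})$ to be free of components of $\rset''$. But nestedness only says that every other $\sigma(R')$ is either contained in $\sigma(R_{\max})$ or \emph{internally disjoint} from it; in the latter case $L(R')$ lies entirely on the complementary arc and $R'$ sits precisely in that corridor. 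Concretely, take three components whose leg-sets occupy three pairwise disjoint short arcs of $C$: every $\sigma$ is simultaneously maximal and minimal, and whichever one you call $R_{\max}$, the disc you try to build on the complementary side contains the other two. There is also a definitional snag: the notion of ``good'' requires $\mu(R)$ to contain \emph{all} of $L(R)$ with endpoints in $L(R)$; once $|L(R_{\max})|\ge 3$, the arc complementary to $\sigma(R_{\max})$ misses the interior legs, and any ``enlargement'' that recaptures them while keeping its endpoints in $L(R_{\max})$ is forced back onto the $\sigma(R_{\max})$ side.

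The paper never invokes a maximal component. After locating the first good component $R$ with $L(R)\subseteq\mu$ (the shortest arc), it passes to the complementary arc $\mu'$; planarity forces any $R'\neq R$ with a leg in the interior of $\mu$ to lie in a pocket of $R$ bounded by a strictly shorter sub-arc, contradicting minimality of $\mu$, so every $R'\neq R$ has $L(R')\subseteq\mu'$. It then reruns the same shortest-arc argument inside $\mu'$ to produce the second good component. The fix for your proof is the same: replace ``$R_{\max}$'' by a second application of the $R_{\min}$ idea on the other side.
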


\begin{proof}
Let $\mu$ be the shortest segment of $C$, such that for some $R\in \rset''$, $L(R)\subseteq \mu$. Let $\tilde \rset\subseteq \rset''$ be the set of all components $R$ with $L(R)\subseteq \mu$. If, for any component $R\in \tilde \rset$, $|L(R)|\geq 3$, then from the definition of $\mu$ it is easy to see that $R$ is a good component. If any component $R\in \tilde \rset$ has  $|L(R)|=1$, then $\mu$ contains a single vertex - the unique vertex of $L(R)$, and it is easy to see that $R$ is a good component. Otherwise, every component in $\tilde \rset$ has $|L(R)|=2$. We then let $R$ be the component that lies closest to $\mu$. In other words, we choose the unique component $R\in \tilde \rset$, such that for some vertex $v\in V(R)$, there is a curve $\gamma$ connecting $v$ to an inner point of $\mu$, so that $\gamma$ does not contain any vertices of $G$ except for $v$, and it does not intersect the edges of $G$. It is immediate to verify that $R$ is a good component. We conclude that there is at least one good component $R\in \rset''$, with $L(R)\subseteq \mu$.

Let $a,a'$ be the endpoints of $\mu$, and let $\mu'$ be the other segment of $C$ whose endpoints are $a$ and $a'$. Since $|\rset''|\geq 2$, there is at least one component $R'\neq R$ in $\rset''$, with $L(R')\subseteq \mu'$. As before, we let $\mu''$ be the shortest (inclusion-wise) segment of $\mu'$, such that for some component $R''\in \rset''$, $L(R'')\subseteq \mu''$. Let $\tilde \rset'$ be the set of all components $R''\neq R$ in $\rset''$ with $L(R'')\subseteq \mu''$. Using the same arguments as above, we can find a second good 
component in $\rset''$.
\end{proof}

We are now ready to describe our algorithm.
We start with $\rset''=\rset_{h+1}$ and $\rset'=\emptyset$. Recall that $|\rset''|\geq 1$ must hold.
While $|\rset''|\geq 2$, let $R,R'\in \rset''$ be two distinct good components in $\rset''$. 
Notice that $\tset\cap \eta'(R)$ and $\tset\cap \eta'(R')$ are completely disjoint. Therefore, either $|\tset\cap \eta'(R)|\leq \kappa'/2$, or $|\tset\cap \eta'(R')|\leq \kappa'/2$. We assume without loss of generality it is the former. We then move $R$ from $\rset''$ to $\rset'$, and continue to the next iteration.

Notice that in every iteration of the algorithm, $|\rset''|$ decreases by $1$. The algorithm terminates when $|\rset''|=1$. Let $R$ be the remaining component in $\rset''$. We set $\tset'=\tset\cap V(R)$, and we set $\tilde u$ to be any vertex of $L(R)$. Recall that from Observation~\ref{obs: inner components are light}, $|\tset'|\leq \tau^*$.
We then add the cycle $C'$ that attaches to $\tilde u$ with an edge $e$ to $G$, and we draw $C'$ inside $C$, so it is drawn next to the edge $e$. We then
obtain a drawing of the resulting graph in the plane where the outer face $F_x$ is the face whose boundary is $C'$. We construct the shells, and discs $\eta(R)$ for $R\in \bigcup_{h'=1}^{h+1}\rset_{h'}$ as described above. From the choice of $\tilde u$, the drawing of $C'$, and the construction of the discs $\eta(R)$, for every $R\in \rset'$, $\eta'(R)\subseteq \eta(R)$, and $\eta(R)\cap (D(C')\setminus D(C))=\eta'(R)$. Therefore, for each $R\in \rset'$, at least $\kappa'/2$ terminals of $\tset'$ lie outside $\eta(R)$. The following claim will finish the proof of the theorem. 

\begin{claim}
For all $1\leq h'\leq h+1$, for all $R\in \rset_{h'}$, $\eta(R)$ contains at most $\tau^*$ terminals of $\tset\setminus \tset'$.
\end{claim}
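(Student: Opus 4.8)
The plan is to analyze the disc $\eta(R)$ separately according to whether $1\le h'\le h$ or $h'=h+1$, and in each case reduce the bound to Observation~\ref{obs: number of terminals in disc} by exhibiting at least $\kappa/4$ terminals of $\tset\setminus\tset'$ lying outside $\eta(R)$. Recall from Theorem~\ref{thm: curves around CC's} that the boundary $\gamma(R)$ of $\eta(R)$ is a $G$-normal curve of length at most $2h'+1+\Delta/2$; since $h'\le h+1\le 8\Delta_0-\Delta-3$, this length is at most $16\Delta_0+\Delta/2<16\Delta_0$ (using $\Delta_0=\Theta(\Delta\log n)$, so $\Delta\le\Delta_0$ for $n$ large). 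Thus the curve $\gamma(R)$ has length strictly less than $16\Delta_0$, which is exactly the hypothesis required to invoke Observation~\ref{obs: number of terminals in disc}. It remains to verify, in each case, that enough terminals lie outside $\eta(R)$.

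\textbf{Case $1\le h'\le h$.} Here $R\in\rset_{h'}$ is a component lying strictly between $Z_{h'-1}(x)$ and $Z_{h'}(x)$, hence contained in $D(Z_h(x))=D(C)$, and therefore its disc $\eta(R)\subseteq D(Z_{h'}(x))\subseteq D(C)$ by property~1 of Theorem~\ref{thm: curves around CC's}. On the other hand, recall that $Z_h(x)=C$ was chosen inside the heavy component $R^*$, which contains $\kappa>\tau^*$ terminals of $\tset$, and by construction all vertices of $R^*$ lie outside $D^{\circ}(C)$ — indeed $C\subseteq R^*$, and $R^*$ is connected, so $R^*\setminus V(D(C))$ lies entirely outside $D(C)$, while $V(C)\cap\tset$ is negligible. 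More precisely, from the construction via Claim~\ref{claim: middle vertex} and Claim~\ref{claim: good component}, every connected component of $R^*\setminus V(D(C))$ contains at most $\kappa/2$ terminals, so at least $\kappa/2$ terminals of $R^*$ lie outside $D(C)$, hence outside $\eta(R)$. Since none of these terminals lies in $\tset'$ (which is contained in a single component $R\in\rset_{h+1}\subseteq D^{\circ}(C)$), at least $\kappa/2\ge\kappa/4$ terminals of $\tset\setminus\tset'$ lie outside $\eta(R)$, and Observation~\ref{obs: number of terminals in disc} gives $|\eta(R)\cap(\tset\setminus\tset')|\le\tau^*$.

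\textbf{Case $h'=h+1$.} This is the main obstacle, since here $\eta(R)\subseteq D^{\circ}(C)$ and the terminals of $R^*$ outside $D(C)$ are no longer obviously relevant — we must instead use the terminals discarded-or-spared-by-the-algorithm that drives Theorem~\ref{thm: case 2b shell}. Recall that that algorithm processes the components of $\rset_{h+1}$, at each iteration picking two good components and retiring the one whose disc $\eta'(R)$ contains at most $\kappa'/2$ terminals (where $\kappa'\ge 4\tau^*$ is the number of terminals in $\bigcup_{R\in\rset_{h+1}}V(R)$), and finally setting $\tset'=\tset\cap V(R_{\mathrm{last}})$ for the last surviving component and $\tilde u\in L(R_{\mathrm{last}})$. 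I would split this case: if $R\in\rset'$ (a processed component), the algorithm guaranteed $|\tset\cap\eta'(R)|\le\kappa'/2$, and the construction of $C'$ next to the edge $e=(\tilde u,v_1)$ ensures $\eta(R)\cap(D(C')\setminus D(C))=\eta'(R)$, so $|\tset\cap\eta(R)|\le\kappa'/2$, hence at least $\kappa'/2\ge 2\tau^*$ terminals lie outside $\eta(R)$; combined with the $\kappa/2$ terminals of $R^*$ outside $D(C)\supseteq\eta(R)$ this exceeds $\kappa/4$ (since $\kappa/2>\kappa/4$ already suffices from the $R^*$ terminals alone, as $\tset'$ is disjoint from them), giving the bound via Observation~\ref{obs: number of terminals in disc}. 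If $R=R_{\mathrm{last}}$, then $\tset\cap V(R)\subseteq\tset'$ by definition, and by Observation~\ref{obs: inner components are light} $|V(R)\cap\tset|\le\tau^*$; here I would argue directly that $\eta(R)$ contains no terminal of $\tset\setminus\tset'$ outside $V(R)$ itself — any terminal in $\eta(R)\setminus V(R)$ would have to lie in $D(C)$ in another component or on $C$, but all other components of $\rset_{h+1}$ are disjoint from $\eta'(R_{\mathrm{last}})$ and the construction keeps them outside $\eta(R_{\mathrm{last}})$ — so in fact $|\eta(R_{\mathrm{last}})\cap(\tset\setminus\tset')|=0\le\tau^*$. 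Finally, in the degenerate branch where $\kappa'\le 4\tau^*$ and $\tset'=\tset\cap\bigl(\bigcup_{R\in\rset_{h+1}}V(R)\bigr)$, every disc $\eta(R)$ for $R\in\rset_{h+1}$ contains only terminals of $\tset'$, so the bound is trivially $0\le\tau^*$. The only calculation to double-check is the length bound on $\gamma(R)$ and the inequality $\kappa'/2\ge 2\tau^*$, both of which follow from $\kappa'\ge 4\tau^*$ and $h'\le h+1\le 8\Delta_0$, completing the proof.
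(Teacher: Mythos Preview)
Your overall strategy—bounding the boundary length of $\eta(R)$ and then exhibiting enough terminals outside to invoke the well-linkedness bound—matches the paper's. However, your case analysis collapses a distinction that the paper keeps separate for good reason, and this creates a genuine gap.

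\textbf{The case $h'=h$ is not handled.} You treat $1\le h'\le h$ uniformly, asserting that ``all vertices of $R^*$ lie outside $D^{\circ}(C)$'' and that ``at least $\kappa/2$ terminals of $R^*$ lie outside $D(C)$, hence outside $\eta(R)$.'' For $h'<h$ your conclusion is correct (though your justification is muddled): every path in $G$ from $D(Z_{h'}(x))$ to $C$ must cross $Z_{h'}(x)$, whose vertices have $\dface<i^*$ and so are not in $S_{i^*}$; hence $D(Z_{h'}(x))\cap R^*=\emptyset$ and all $\kappa$ terminals of $R^*$ lie outside $\eta(R)$. But for $h'=h$ this fails: components of $\rset_h$ that are of type~1 or type~3 can be subsets of $R^*$ and sit inside $D^{\circ}(C)$. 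The statement ``every connected component of $R^*\setminus V(D(C))$ contains at most $\kappa/2$ terminals'' (which concerns the lobes of $R^*$ on the $x$-side of $C$, in the $x$-outer drawing) does not imply that half of the terminals of $R^*$ lie on the other side. What the paper actually does here is structural: it argues that any component $R'\in\rset_h$ with $R'\subseteq\eta(R)$ and $R'\neq R$ must be type-2 (since containment forces $\sigma(R')\subseteq\sigma(R)$), and type-2 components are disconnected from $R^*$ in $G[S_{i^*}]$. Thus the only $R^*$-terminals inside $\eta(R)$ are those in $R$ itself, and the block-decomposition choice of $v^*$ bounds these by $\kappa/2+1$. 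You have no substitute for this step.

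\textbf{A second error in $h'=h+1$.} For $R\in\rset_{h+1}$ you write ``$D(C)\supseteq\eta(R)$,'' but this is false: such $R$ lies outside $D(C)$ in the $F_x$-embedding, and $\eta(R)$ is a pie-slice extending from near $C_x$ out past $C$ to $R$. Fortunately, the paper does not need this containment: once you know $\kappa'/2\ge 2\tau^*$ terminals lie outside $\eta(R)$, well-linkedness directly forces $|\eta(R)\cap\tset|\le\tau^*$ (if not, equal-sized sets of size $\tau^*$ inside and outside would yield $\alphaWL\cdot\tau^*=64\Delta_0$ disjoint paths across a boundary of length $<16\Delta_0$). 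You should invoke well-linkedness directly here rather than trying to reach the $\kappa/4$ threshold of Observation~\ref{obs: number of terminals in disc}.
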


\begin{proof}
Fix some $1\leq h'\leq h+1$ and $R\in \rset_{h'}$. From our construction, the length of the boundary of $\eta(R)$ is bounded by $2h+3+\Delta/2\leq 2(8\Delta_0-\Delta)+\Delta/2<16\Delta_0$.

Assume first that $h'<h$. Then all vertices of $\tset\cap R^*$ lie outside $D(Z_{h'}(x))$, and since $\eta(R)\subseteq D(Z_{h'}(x))$, they also lie outside $\eta(R)$. From Observation~\ref{obs: number of terminals in disc}, $\eta(R)$ contains at most $\tau^*$ terminals of $\tset$.

Assume now that $h'=h$. Notice that for every component $R'\in \rset_h$, either $R'$ is disjoint from $R^*$, or $R'\subseteq R^*$. In the latter case, $R'$ is a type-1 or a type-3 component (see Figure~\ref{fig: three types of cc's}), as all type-2 components are disconnected from $R^*$ in $G[S_{i^*}]$. Recall that $R'\in \rset_h$ is contained in $\eta(R)$ only if $\sigma(R')\subseteq\sigma(R)$, and this can only happen if $R'$ is a type-2 component. 
Let $\tilde \rset\subseteq\rset_h$ be the set of all components contained in $\eta(R)$. Then at most one of these components $R'\in \tilde\rset$ may be contained in $R^*$ - and if such a component exists, then $R'=R$. Our construction of the cycle $C$ and the choice of $F_x$ guarantee that $R'$ contains at most $\kappa/2+1$ terminals of $\tset\cap V(R^*)$, and so at least $\kappa/2-1\geq \kappa/4$ terminals lie outside $\eta(R)$. Since the length of $\gamma(R)$ is less than $16\Delta_0$, from Observation~\ref{obs: number of terminals in disc}, $\eta(R)$ contains at most $\tau^*$ terminals of $\tset$.

Finally, assume that $h'=h+1$. If $\kappa'\leq 4\tau^*$, then $V(R)\cap (\tset\setminus\tset')=\emptyset$. Therefore, we assume that $\kappa'>4\tau^*$. If $R\cap (\tset\setminus\tset')\neq \emptyset$, then $R$ was added to $\rset'$ at some iteration of the algorithm. From the above discussion, at least $\kappa'/2\geq 2\tau^*$ terminals of $\tset$ lie outside of $\eta(R)$, while the length of the boundary of $\eta(R)$ is less than $16\Delta_0$. From the well-linkedness of the terminals, $\eta(R)$ contains at most $\tau^*$ terminals of $\tset$.
\end{proof}
\end{proof}

We partition $\nset$ into two subsets: set $\mset^0$ contains all demand pairs in which the terminals of $\tset'$ participate, together with all demand pairs for which either terminal lies in $D_x$. Then $|\mset^0|\leq 4\tau^*+4\Delta/\alphaWL\leq 5\tau^*$ (we have used the definition of enclosures to bound the number of the demand pairs of the latter kind, and the fact that $\tau^*=64\Delta_0/\alphaWL$). Let $\mset^1$ contain the remaining demand pairs, and let $\rset=\bigcup_{h'=1}^{h+1}\rset_{h'}$. As we noted before, we would like to ensure that for every component $R\in \rset$ that contains a terminal of $\tset(\mset^1)$, whenever a path $P$ routing a demand pair in $\mset^1$ intersects $R$, it must cross $\gamma(R)$ - the boundary of $\eta(R)$. We achieve this property by routing a subset of the demand pairs, and discarding some additional demand pairs, in the following theorem.

\begin{theorem}\label{thm: routing close pairs}
There is an efficient algorithm to compute a partition $(\nset_0,\nset_1,\nset_2)$ of $\mset^1$, and a collection $\pset^*$ of node-disjoint paths routing at least $|\nset_1|/(h+1)$ demand pairs in $\nset_1$ in graph $G$, such that:

\begin{itemize}
\item $|\nset_0|\leq \tau^*|\nset_1|$; and
\item for every component $R\in \rset$ with $R\cap \tset(\nset_2)\neq \emptyset$, for every path $P$ routing a demand pair in $\nset_2$, $P\not\subseteq \eta^{\circ}(R)$.
\end{itemize}
\end{theorem}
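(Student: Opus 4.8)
The overall strategy is to iteratively find and route ``bad'' terminal pairs, each inside some disc $\eta(R)$, discarding the other terminals contained in that disc; since each disc $\eta(R)$ contains at most $\tau^*$ terminals of $\tset\setminus\tset'$ by Theorem~\ref{thm: case 2b shell}, routing one pair costs us at most $\tau^*$ discarded demand pairs, which is how we will ensure $|\nset_0|\le\tau^*|\nset_1|$. Concretely, I would initialize $\nset_1=\nset_2=\emptyset$ and work with a current ``active'' set of demand pairs, starting from $\mset^1$. Call a demand pair $(s,t)$ from the active set \emph{bad} if there is a path $P$ connecting $s$ to $t$ in $G$ that is entirely contained in $\eta^{\circ}(R)$ for some component $R\in\rset$; otherwise it is \emph{good}. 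While a bad pair $(s,t)$ exists (witnessed by $R$ and $P$), add $(s,t)$ to $\nset_1$, add $P$ to a path collection $\qset^*$, and move every other active demand pair that has a terminal in $V(R)\cup L(R)$ (equivalently, a terminal in $\eta(R)$) to $\nset_0$; then remove $(s,t)$ and all these pairs from the active set. When no bad pairs remain, all remaining active pairs are good, and we set $\nset_2$ to be this remaining active set.

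First I would argue that this terminates: each iteration removes at least one pair from the active set, so there are at most $|\mset^1|$ iterations. Next, the disjointness of the paths in $\qset^*$: the key observation is that when $(s,t)$ is processed as a bad pair via disc $\eta(R)$, \emph{all} active demand pairs touching $\eta(R)$ — in particular every future bad pair processed via $\eta(R)$ — get removed; moreover, by Theorem~\ref{thm: curves around CC's}(3), any two discs $\eta(R),\eta(R')$ with $R,R'\in\rset_{h'}$ are either nested or interior-disjoint, and pairs from different levels of the shell can only be handled via discs that are again nested or disjoint (using that $\gamma(R)$ is disjoint from $\bigcup_{R''}V(R'')$, Theorem~\ref{thm: curves around CC's}(2)). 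So after processing $\eta(R)$, every remaining bad pair is handled via a disc $\eta(R')$ with $\eta^{\circ}(R')$ disjoint from $\eta^{\circ}(R)$ (if they were nested, the inner pair's terminals would lie in the outer disc and would have been discarded). Hence the paths of $\qset^*$ live in pairwise interior-disjoint discs and are node-disjoint. I still need to check that the number of bad pairs processed inside a single disc $\eta(R)$ is exactly one (which the discard rule forces) so that $|\qset^*|=|\nset_1|$, and hence $|\qset^*|\ge|\nset_1|/(h+1)$ trivially — actually the factor $1/(h+1)$ suggests the intended accounting is slightly different, so I would instead group the discs $\eta(R)$ by the level $h'\in\{1,\dots,h+1\}$ of the shell in which $R$ lies, route all bad pairs of one level simultaneously (their discs are pairwise nested-or-disjoint, so one nonnested antichain per level suffices after a further independent-set step), and take the best level, losing a factor $h+1$; this is the natural reading and is what I would carry out.

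For the cardinality bound $|\nset_0|\le\tau^*|\nset_1|$: each pair added to $\nset_1$ is processed via some disc $\eta(R)$, and the pairs moved to $\nset_0$ at that step all have a terminal in $\eta(R)$, hence a terminal in $\tset\setminus\tset'$ lying in $\eta(R)$ (pairs with a terminal in $\tset'$ or in $D_x$ are already in $\mset^0$, not $\mset^1$); by Theorem~\ref{thm: case 2b shell} there are at most $\tau^*$ such terminals in $\eta(R)$, so at most $\tau^*$ pairs are added to $\nset_0$ per pair added to $\nset_1$. (I should double-check the case where the witnessing path $P$ lies in a disc of level $h+1$, where Observation~\ref{obs: inner components are light} already caps the terminal count by $\tau^*$.) Finally, the second bullet is immediate: if $R\in\rset$ meets $\tset(\nset_2)$, then $\nset_2$ contains a good pair with a terminal in $V(R)$ (or $L(R)$), and by construction no pair of $\nset_2$ is routable entirely inside $\eta^{\circ}(R)$ — that is exactly the definition of ``good'' applied to this $R$.

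\textbf{Main obstacle.} The delicate point will be the disjointness bookkeeping for $\qset^*$: ensuring that the discs hosting the routed paths can be chosen pairwise interior-disjoint across \emph{different} shell levels, not just within one level. The clean way is to process one level at a time — route bad pairs of level $h'$ using the nestedness structure from Theorem~\ref{thm: curves around CC's}(3), discard the rest in that level's discs, then move to the next level — and observe that a path confined to $\eta^{\circ}(R)$ for $R\in\rset_{h'}$ cannot be confined to any $\eta(R')$ with $R'$ in a different level unless the discs are nested (in which case the terminals were already discarded). Getting this interaction exactly right, and reconciling it with the advertised $1/(h+1)$ loss, is the part that needs care; everything else is routine.
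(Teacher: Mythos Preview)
Your overall strategy is right, but the discard rule is too narrow, and this is precisely where your disjointness argument breaks. You discard only the pairs with a terminal in the single witnessing disc $\eta(R)$; the paper instead discards from \emph{every} live component $R'\in\rset_{h'}$ at the \emph{same level} as $R$ whose disc $\eta^{\circ}(R')$ is touched by the routed path $P$.

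To see why your rule fails, take two nested discs $\eta(R_1)\subsetneq\eta(R_2)$ at the same level $h'$, both live. Suppose $(s_1,t_1)$ has a witnessing path $P_1\subseteq\eta^{\circ}(R_1)$, and $(s_2,t_2)$ has both terminals in $\eta^{\circ}(R_2)\setminus\eta(R_1)$ with a witnessing path $P_2\subseteq\eta^{\circ}(R_2)$ that happens to pass through $\eta^{\circ}(R_1)$. Processing $(s_1,t_1)$ first, your rule discards only terminals in $\eta(R_1)$, so $(s_2,t_2)$ survives; processing it later yields $P_2$, which may meet $P_1$. Your sentence ``if they were nested, the inner pair's terminals would lie in the outer disc and would have been discarded'' has the containment direction backwards: you processed the \emph{inner} disc first and never touched the outer one. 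Your alternative ``antichain per level'' fix does not work either, since routing only along an antichain leaves other bad pairs unprocessed and $\nset_2$ would not satisfy the second bullet.

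The paper's mechanism resolves both issues at once. When $P$ is routed via $R\in\rset_{h'}$, one discards every pair with a terminal in $\eta^{\circ}(R')$ for \emph{every} live $R'\in\rset_{h'}$ with $P\cap\eta^{\circ}(R')\neq\emptyset$. Disjointness within one level then follows cleanly: if a later same-level path $P'$ (with responsible component $R'$) meets $P$, then $P$ touched $\eta^{\circ}(R')$, and $R'$ was live at the time $P$ was processed (the pair routed by $P'$ still sat in $\nset_2$ and had a terminal in $R'$), so that pair would already have been moved to $\nset_0$. Taking the best of the $h{+}1$ levels gives $|\pset^*|\ge|\nset_1|/(h+1)$; no cross-level disjointness is ever claimed. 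The $\tau^*$ bound survives the broader discard because, by the laminarity of the level-$h'$ discs (Theorem~\ref{thm: curves around CC's}(3)) and the fact that $P\subseteq\eta^{\circ}(R)$, all live $R'$ at level $h'$ touched by $P$ have their discs nested with $\eta(R)$, so there is a single maximal such disc $\eta^{\circ}(R')$ containing all discarded terminals --- and Theorem~\ref{thm: case 2b shell} caps that at $\tau^*$.

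A small side remark: your parenthetical ``$V(R)\cup L(R)$ (equivalently, a terminal in $\eta(R)$)'' is not an equivalence --- $\eta(R)$ may contain other components $R''$ nested inside, and their terminals lie in $\eta(R)$ but not in $V(R)\cup L(R)$. The paper's rule uses $\eta^{\circ}(R')$ throughout, which is what the laminarity/counting argument needs.
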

\begin{proof}
We start with $\nset_0=\nset_1=\emptyset$, $\pset=\emptyset$, and $\nset_2=\mset^1$. Throughout the algorithm, we say that a component $R\in \rset$ is a live component iff $R\cap \tset(\nset_2)\neq \emptyset$.
While there is any demand pair $(s,t)\in \nset_2$, and any live component $R\in \rset$, such that some path $P$ connecting $s$ to $t$ is contained in $\eta^{\circ}(R)$, we do the following. We add $P$ to $\pset$, and we move $(s,t)$ from $\nset_2$ to $\nset_1$. We say that the component $R$ is \emph{responsible for $P$}, and we say that $P$ is a level-$h'$ path if $R\in \rset_{h'}$. Next, for every live component $R'\in \rset_{h'}$, such that $P$ intersects $\eta^{\circ}(R')$, we move all demand pairs $(s',t')\in \nset_2$ with $\set{s',t'}\cap \eta^{\circ}(R')\neq \emptyset$ from $\nset_2$ to $\nset_0$. The crux of the analysis of this algorithm is in the following claim.

\begin{claim}\label{claim: bounding number of discarded pairs}
In every iteration of the algorithm, the number of the demand pairs moved to $\nset_0$ is at most $\tau^*$.
\end{claim}
\begin{proof}
Consider some iteration of the algorithm, where a path $P$ that belongs to level $h'$ was added to $\pset$, and let $R\in \rset_{h'}$ be the component responsible for it. Recall that $P\subseteq\eta^{\circ}(R)$, and recall that Theorem~\ref{thm: curves around CC's} guarantees that all discs $\eta(R')$ for $R'\in \rset_{h'}$ are laminar: that is, for $R',R''\in \rset_{h'}$, either $\eta(R')\subseteq \eta(R'')$, or $\eta(R'')\subseteq\eta(R')$, or $\eta^{\circ}(R')\cap \eta^{\circ}(R'')=\emptyset$. Let $\rset'\subseteq \rset_{h'}$ be the set of all live components $R'$ with $P\cap \eta^{\circ}(R')\neq \emptyset$. 
Since $P\subseteq \eta^{\circ}(R)$, there is some component $R'\in \rset'$, such that $P\subseteq \eta^{\circ}(R')$, and for every component $R''\in \rset'$, $\eta^{\circ}(R'')\subseteq \eta^{\circ}(R')$. Therefore, all demand pairs moved in this step from $\nset_2$ to $\nset_0$ have at least one terminal lying in $\eta^{\circ}(R')$, and from Theorem~\ref{thm: case 2b shell}, their number is bounded by $\tau^*$.
\end{proof}

Therefore, once the algorithm terminates, $|\nset_0|\leq \tau^*|\nset_1|$ must hold. Moreover, for every component  $R\in \rset$ with $R\cap \tset(\nset_2)\neq \emptyset$, for every path $P$ routing a demand pair in $\nset_2$, $P\not\subseteq \eta^{\circ}(R)$.  Consider now the set $\pset$ of paths. Each of these paths belongs to one of $h+1$ levels, and so there is a subset $\pset^*\subseteq \pset$ of paths that belong to the same level, say $h'$, such that $|\pset^*|\geq |\pset|/(h+1)$. The paths in $\pset^*$ route a subset of the demand pairs in $\nset_1$, and it is now enough to show that they are node-disjoint. Assume for contradiction otherwise, and let $P,P'\in \pset^*$ be two distinct paths that are not disjoint. Assume that $P$ was added to $\pset$ before $P'$. Let $R'\in \rset_{h'}$ be the component responsible for $P'$, so $P'\subseteq \eta^{\circ}(R')$, and $R'$ was live when $R$ was processed. Then path $P$ must intersect $\eta^{\circ}(R')$, and so all demand pairs that have a terminal in $\eta^{\circ}(R')$, including the demand pair routed by $P'$, should have been removed from $\nset_2$ during the iteration when $P$ was added to $\pset$, a contradiction.
\end{proof}

Recall that $\tau^*=64\Delta_0/\alphawl$. We now consider two cases. First, if $|\pset^*|\geq |\nset|\cdot \frac{\alphawl}{2^{13}\Delta_0^2}$, then we return $\nset'=\nset$ and $\pset^*$ as the set of paths routing a subset of the demand pairs in $\nset'$. Therefore, we assume from now on that $|\pset^*|<|\nset|\cdot \frac{\alphawl}{2^{13}\Delta_0^2}$. Let $\nset''=\mset^0\cup \nset_0\cup \nset_1$, and let $\nset'=\nset\setminus \nset''=\nset_2$. Then:

\[\begin{split}
|\nset''|&\leq 5\tau^*+(\tau^*+1)|\nset_1|\\
&\leq 5\tau^*+(\tau^*+1)(h+1)|\pset^*|\\
&\leq 16\Delta_0\tau^*\cdot |\pset^*|\\
&\leq \frac{1024\Delta_0^2}{\alphawl}\cdot \frac{|\nset|\alphawl}{2^{13}\Delta_0^2}\\
&\leq \frac{|\nset|}{6}.\end{split}\]

Therefore, $|\nset'|\geq 5|\nset|/6$.  From now on, our goal is to find a set $\pset$ of paths, routing $\Omega\left(\frac{\opt(G,\nset')}{\Delta_0^8\log^3 n}\right )$ demand pairs in $\nset'$. 
From the above discussion, the demand pairs in $\nset'$ have the following property.

\begin{properties}{P}
\item For every component $R\in \rset$ with $R\cap \tset(\nset')\neq \emptyset$, if $P$ is any path routing any demand pair in $\nset'$, then $P\not\subseteq \eta^{\circ}(R)$.\label{prop: intersection of gammas for case 2b}
\end{properties}

\paragraph{Step 2: Mapping the Terminals.}

This step is almost identical to the similar step in Case 2a, except that now we crucially exploit Property (\ref{prop: intersection of gammas for case 2b}).
Let $\tset'=\tset(\nset')$, and let $U=\bigcup_{h'=1}^hV(Z_{h'}(x))$ be the set of all vertices lying on the cycles of the shell. Our next step is to define a mapping $\beta:\tset'\rightarrow 2^U$ of the terminals $t\in \tset'$ to subsets $\beta(t)$ of at most three vertices of $U$. For every terminal $t\in \tset'$, we also define a corresponding $G$-normal curve $\Gamma(t)$, as before. 

The mapping $\beta$ and the curves $\Gamma(t)$ are defined as follows.
First, for every terminal $t\in \tset'\cap U$, we set $\beta(t)=\set{t}$, and we let $\Gamma(t)$ contain the vertex $t$ only.
For all remaining terminals $t\in \tset'$, $t$ must lie in some component $R\in \rset$.  If $|L(R)|\leq 2$, then we let $\beta(t)=L(R)\cup \set{u(R)}$ if $u(R)$ is defined, and $\beta(t)=L(R)$ otherwise. If $|L(R)|>2$, then we let $\beta(t)=\set{v}$, where $v$ is a leg of $R$, which is not an endpoint of $\sigma(R)$ (in other words, $v$ is not the first and not the last leg of $R$).  We then let $\Gamma(t)$ be the boundary $\gamma(R)$ of the disc $\eta(R)$ given by Theorem~\ref{thm: case 2b shell}. Recall that the length of $\Gamma(t)$ is bounded by $2h+\Delta/2+3< 16\Delta_0$, as $h\leq 8\Delta_0-\Delta-4$.

Let $\tmset=\bigcup_{ (s,t)\in \nset'}\beta(s)\times \beta(t)$.  In the following two theorems, we relate the values of the solutions to problems $(G,\nset')$ and $(G,\tmset)$.

\begin{theorem}\label{thm: from new to old problem2}
There is an efficient algorithm, that, given any solution to instance $(G,\tmset)$, that routes $\kappa$ demand pairs, finds a solution to instance $(G,\nset')$, routing at least $\Omega\left(\frac{\kappa}{\Delta_0}\right )$ demand pairs.
\end{theorem}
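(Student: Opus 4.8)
This theorem is the Case~2b analogue of Theorem~\ref{thm: from new to old problem}, and the plan is to reuse its proof almost verbatim, substituting Property~(\ref{prop: intersection of gammas for case 2b}) for the ingredient that was specific to Case~2a (the disjointness of the two shells $D(Z_{2\Delta_0}(x))$ and $D(Z_{2\Delta_0}(y))$). Let $\pset_0$ be a collection of node-disjoint paths in $G$ routing a subset $\tmset_0\subseteq\tmset$ with $|\tmset_0|=\kappa$; we may assume $\kappa\geq c\Delta_0$ for a suitable constant $c$, since otherwise routing any single pair of $\nset'$ (which is routable as $G$ is connected) already gives the bound. For each $(\tilde s,\tilde t)\in\tmset_0$ fix a corresponding demand pair $(s,t)\in\nset'$ with $\tilde s\in\beta(s)$, $\tilde t\in\beta(t)$, and denote by $P(\tilde s,\tilde t)\in\pset_0$ the path routing it.

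Next I would form the conflict graph $H$ with vertex set $\set{v(\tilde s,\tilde t)\mid(\tilde s,\tilde t)\in\tmset_0}$ and a directed edge from $v(\tilde s_1,\tilde t_1)$ to $v(\tilde s_2,\tilde t_2)$ whenever $P(\tilde s_1,\tilde t_1)$ contains a vertex of $\Gamma(s_2)\cup\Gamma(t_2)$. Since the paths of $\pset_0$ are node-disjoint and every curve $\Gamma(\cdot)$ has length less than $16\Delta_0$ (this is the only place where the constant is worse than in Case~2a, where it was $5\Delta_0$), every vertex of $H$ has in-degree $O(\Delta_0)$, so $H$ has an independent set $I$ with $|I|=\Omega(\kappa/\Delta_0)$. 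Let $\pset_1\subseteq\pset_0$ be the paths whose vertices lie in $I$, and let $\mset'=\set{(s,t)\mid v(\tilde s,\tilde t)\in I}$; it then suffices to route all of $\mset'$ in $G$.

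For every $(\tilde s,\tilde t)\in I$ I would extend $P=P(\tilde s,\tilde t)$ so that it connects $s$ to $t$. Since $\nset'=\nset_2$ contains no terminal of $D_x$ and no terminal of $\tset'$, whenever $s\notin U$ it lies in a component $R\in\rset$ with $R\cap\tset(\nset')\neq\emptyset$ and $\beta(s)\subseteq L(R)\cup\set{u(R)}\subseteq\eta(R)$; as $R$ is connected and $\tilde s$ is a neighbor of $R$, $P$ can be routed one edge into $R$ and then within $R$ to reach $s$, and symmetrically at the destination end. The crux — the step I expect to cost the most care — is to check that all these extensions together are node-disjoint. The key claim is that for every component $R$ with $R\cap\tset(\nset')\neq\emptyset$, at most one pair of $I$ has a terminal in $R$, and no path of $\pset_1$ other than the one routing that pair meets $R$. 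To see this, note that $\Gamma(\cdot)$ equals $\gamma(R)$ for every terminal lying in $R$, and that $\gamma(R)$ is a $G$-normal curve disjoint from every component of $\rset$ (Theorem~\ref{thm: curves around CC's}). So if some path $P'\in\pset_1$ met $R\subseteq\eta^{\circ}(R)$, then either $P'$ is not contained in $\eta^{\circ}(R)$, whence it crosses $\gamma(R)$ at a vertex and hence intersects some $\Gamma(\cdot)$, contradicting the independence of $I$; or $P'\subseteq\eta^{\circ}(R)$, and then, using the $G$-normality of $\gamma(R)$, the components containing the endpoints of the corresponding $\nset'$-pair are themselves contained in $\eta^{\circ}(R)$, so the extension of $P'$ routes a demand pair of $\nset'$ inside $\eta^{\circ}(R)$, contradicting Property~(\ref{prop: intersection of gammas for case 2b}). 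Vertices of $\beta(\cdot)$ lying on $\gamma(R)$ (such as $u(R)$ and the endpoints of $\sigma(R)$) are handled by the first branch, and those lying in $\eta^{\circ}(R)$ by the dichotomy above; the remaining $\beta(\cdot)$-endpoint of $P$ is private by node-disjointness of $\pset_0$. Given this claim, distinct pairs of $I$ use pairwise distinct components, their extensions stay inside those private components, and the resulting paths form the required routing of $\Omega(\kappa/\Delta_0)$ pairs of $\nset'$.

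The only genuinely new point relative to Case~2a is this last dichotomy. In Case~2a its second branch ($P'\subseteq\eta^{\circ}(R)$) was vacuous because $\eta(R)\subseteq D(Z_{2\Delta_0}(x))$ meets no demand pair of $\nset$ at all; here, with only one shell, pairs of $\nset'$ could a priori be trapped inside some $\eta^{\circ}(R)$, and the reason they are not is exactly that Theorem~\ref{thm: routing close pairs} pre-routed or discarded every such pair when it established Property~(\ref{prop: intersection of gammas for case 2b}). Apart from combining that property with the laminarity and $G$-normality of the curves $\gamma(R)$ from Theorem~\ref{thm: curves around CC's} and the independent-set argument, the proof is a transcription of the Case~2a argument.
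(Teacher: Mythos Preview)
Your proposal is correct and follows essentially the same approach as the paper's proof: the conflict graph on curves $\Gamma(\cdot)$ of length $<16\Delta_0$, the independent set of size $\Omega(\kappa/\Delta_0)$, the extension of each path into the component $R$ containing the corresponding $\nset'$-terminal, and the disjointness argument via the dichotomy (either a path crosses $\gamma(R)=\Gamma(t)$, contradicting independence, or it is trapped in $\eta^\circ(R)$, and then its extension violates Property~(\ref{prop: intersection of gammas for case 2b})). The paper organizes the disjointness check as two separate cases ($Q(\tilde t)$ versus $Q(\tilde t')$, then $Q(\tilde t)$ versus a path $P\in\pset_1$), whereas you package both into a single claim about privacy of each component $R$; the underlying logic is identical.
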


\begin{proof}
Let $\pset_0$ be any collection of disjoint paths in graph $G$, routing a subset $\tmset_0$ of $\kappa$ demand pairs in $\tmset$. We assume that $\kappa\geq 64\Delta_0$, as otherwise we can return a routing of a single demand pair in $\nset'$.
For every demand pair $(\tilde s,\tilde t)\in \tmset_0$, let $(s,t)$ be any corresponding demand pair in $\nset'$, that is, $\tilde s\in \beta(s)$ and $\tilde t\in \beta(t)$.

We build a conflict graph $H$, whose vertex set is $\set{v(\tilde s,\tilde t)\mid (\tilde s,\tilde t)\in \tmset_0}$, and there is a directed edge from $v(\tilde s_1,\tilde t_1)$ to $v(\tilde s_2,\tilde t_2)$ iff the unique path $P(\tilde s_1,\tilde t_1)\in \pset_0$ routing the pair $(\tilde s_1,\tilde t_1)$ intersects $\Gamma(s_2)$ or $\Gamma(t_2)$ (in which case we say that there is a conflict between $(\tilde s_1,\tilde t_1)$ and $(\tilde s_2,\tilde t_2)$). Since all paths in $\pset_0$ are node-disjoint, and all curves $\Gamma(s),\Gamma(t)$ have lengths at most $16\Delta_0$ each, the in-degree of every vertex in $H$ is at most $32\Delta_0$. Therefore, we can efficiently compute an independent set $I$ of size at least $\frac{\kappa}{32\Delta_0+1}\geq\frac{\kappa}{64\Delta_0}$ in $H$.

Let $\tmset_1=\set{(\tilde s,\tilde t)\mid v(\tilde s,\tilde t)\in I}$, and let $\pset_1\subseteq\pset_0$ be the set of paths routing the demand pairs in $\tmset_1$. Let  $\mset'=\set{(s,t)\mid (\tilde s,\tilde t)\in \tmset_1}$.  It is now enough to show that all demand pairs in $\mset'$ can be routed in $G$. Consider any demand pair $(\tilde s,\tilde t)\in \tmset_1$, and let $P\in \pset_1$ be the path routing $(\tilde s,\tilde t)$. We will extend the path $P$, so it connects $s$ to $t$, by appending two paths: $Q(\tilde s)$ connecting $\tilde s$ to $s$, and $Q(\tilde t)$ connecting $\tilde t$ to $t$, to it. If $s\in U$, then $s=\tilde s$, and we define $Q(\tilde s)=\emptyset$. Assume now that $s\not \in U$, and let $R\in \rset$ be the component in which $s$ lies. Let $Q(\tilde s)$ be any path that starts from $\tilde s$, terminates at $s$, and except for its first edge, is contained in $R$. We define the path $Q(\tilde t)$ similarly. 
Let $\pset^*$ be the set of paths obtained by concatenating the paths in $\pset_1$ with the paths in $\set{Q(\tilde s),Q(\tilde t)\mid (\tilde s,\tilde t)\in \tmset_1}$. Then the paths in $\pset^*$ route all demand pairs in $\mset'$, so $|\pset^*|\geq \frac{\kappa}{64\Delta_0}$. It is now enough to prove that the paths in $\pset^*$ are node-disjoint. We do so in the following claim.

\begin{claim}
The paths in $\pset^*$ are node-disjoint.
\end{claim}

\begin{proof}
Observe that all endpoints of all paths of $\pset_1$ are distinct.
We first prove that for all terminals $\tilde t,\tilde t'\in \tset(\tmset_1)$, with $\tilde t\neq \tilde t'$, paths $Q(\tilde t)$ and $Q(\tilde t')$ are node-disjoint. Assume otherwise. Then $Q(\tilde t),Q(\tilde t')\neq \emptyset$, and $t,t'\not \in U$. Let $t,t'\in \tset(\tmset')$ be the terminals corresponding to $\tilde t$ and $\tilde t'$, respectively, so $Q(\tilde t)$ connects $\tilde t$ to $t$, and $Q(\tilde t')$ connects $\tilde t'$ to $t'$. Let $R,R'\in \rset$ be the components to which $t$ and $t'$ belong, respectively. Recall that except for its first edge, $Q(\tilde t)$ is contained in $R$, and the same holds for $Q(\tilde t')$ and $R'$. Since $\tilde t\neq \tilde t'$, but $Q(\tilde t)\cap Q(\tilde t')\neq \emptyset$, we get that $R=R'$. But then $\tilde t'\in L(R)\cup \set{u(R)}$,  and so it lies in $\eta(R)$. Let $P'$ be the path of $\pset_1$, such that $\tilde t'$ is an endpoint of $P'$, and let $\tilde s'$ be its other endpoint. Since $\Gamma(t)=\gamma(R)$, and set $\pset_1$ is conflict-free, path $P'$ cannot intersect $\gamma(R)$. Therefore, $\tilde s'$ must belong to $\eta^{\circ}(R)$. Let $s'$ be the terminal in $\tset(\mset')$ corresponding to $\tilde s'$, so path $Q(\tilde s')$ connects $\tilde s'$ to  $s'$. We claim that if $Q(\ts')\neq \emptyset$, then $Q(\tilde s')$ is also contained in $\eta^{\circ}(R)$. Indeed, if $Q(\ts')\neq \emptyset$, then $s'$ belongs to some component $R''\in \rset$, and $\tilde s'\in L(R'')\cup \set{u(R'')}$. But since $\tilde s'\in \eta^{\circ}(R)$, and since $\gamma(R)$ is a $G$-normal curve disjoint from $V(R'')$, $R''\subseteq \eta^{\circ}(R)$ must hold. We conclude that $Q(\tilde s')$ is contained in $\eta^{\circ}(R)$. By concatenating $Q(\tilde s'),P'$ and $Q(\tilde t')$, we obtain a path $P$, connecting $s'$ to $t'$, where $P$ is contained in $\eta^{\circ}(R)$, violating Property~(\ref{prop: intersection of gammas for case 2b}).
We conclude that paths $Q(\tilde t)$ and $Q(\tilde t')$ are node-disjoint. 

It is now enough to show that for every terminal $\tilde t\in \tset(\tmset_1)$, and for every path $P\in \pset_1$, such that $\tilde t$ is not an endpoint of $P$, $Q(\tilde t)$ is disjoint from $P$. Assume otherwise, and let $\tilde s',\tilde t'$ be the endpoints of the path $P$. 
Since we have assumed that $Q(\tilde t)\cap P\neq \emptyset$, $Q(\tilde t)\neq\emptyset$, and so the vertex $t\in \tset(\mset')$ serving as the other endpoint of $Q(\tilde t)$ must lie in some component $R\in \rset$.
Notice that since the paths in $\pset_1$ have no conflicts, $P$ is disjoint from $\gamma(R)$, and so it must be contained in $\eta^{\circ}(R)$. Using the same reasoning as above, we conclude that if $Q(\tilde s')\neq \emptyset$, then  it is contained in  $\eta^{\circ}(R)$, and the same holds for $Q(\tilde t')$. Therefore, there is a path $P$, obtained by concatenating $P',Q(\tilde s')$ and $Q(\tilde t')$, connecting the pair $(s',t')\in \nset'$, with $P'\subseteq\eta^{\circ}(R)$, contradicting Property~(\ref{prop: intersection of gammas for case 2b}). Since the paths in $\pset_1$ are node-disjoint, it is now immediate to see that the paths in $\pset^*$ must also be node-disjoint.
\end{proof}

\end{proof}

\begin{theorem}\label{thm: from old to new problem2}
$\opt(G,\tmset)\geq \frac{\opt(G,\nset')}{64\Delta_0}$.
\end{theorem}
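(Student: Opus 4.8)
The statement is the ``easy'' direction of the equivalence between instances $(G,\nset')$ and $(G,\tmset)$ in Subcase~2b: I need to show that any optimal routing for $(G,\nset')$ can be converted into a routing for $(G,\tmset)$ that loses only a factor $O(\Delta_0)$. This mirrors the structure of the proof of Theorem~\ref{thm: from old to new problem} from Case~2a, so I will follow the same template, but I must be careful to exploit Property~(\ref{prop: intersection of gammas for case 2b}) wherever in Case~2a the analogous conclusion came ``for free'' from the fact that no demand pair of $\nset$ is contained in a disc $\eta(R)$.

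\textbf{Step 1: set up and sparsify by a conflict graph.} Let $\pset_0$ be an optimal set of node-disjoint paths for $(G,\nset')$, routing a set $\mset_0\subseteq\nset'$ with $|\mset_0|=\opt(G,\nset')$. Build the directed conflict graph $H$ on vertex set $\set{v(s,t)\mid (s,t)\in\mset_0}$, with an edge from $v(s_1,t_1)$ to $v(s_2,t_2)$ whenever the path $P(s_1,t_1)\in\pset_0$ intersects $\Gamma(s_2)$ or $\Gamma(t_2)$. Since each curve $\Gamma(s),\Gamma(t)$ has length at most $16\Delta_0$ (this is exactly the bound recorded in Step~2 when $\beta$ is defined, using $h\le 8\Delta_0-\Delta-4$), and the paths of $\pset_0$ are node-disjoint, the in-degree of every vertex of $H$ is at most $32\Delta_0$. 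Hence every induced subgraph of $H$ has a vertex of total degree at most $64\Delta_0$, so $H$ contains an independent set $I$ with $|I|\ge\frac{\opt(G,\nset')}{32\Delta_0+1}\ge\frac{\opt(G,\nset')}{64\Delta_0}$. Let $\mset_1=\set{(s,t)\mid v(s,t)\in I}$ and let $\pset_1\subseteq\pset_0$ be the corresponding paths; these paths are pairwise node-disjoint and, for every $(s,t)\in\mset_1$ and every \emph{other} $(s',t')\in\mset_1$, $P(s,t)$ is disjoint from $\Gamma(s')$ and from $\Gamma(t')$.

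\textbf{Step 2: truncate good endpoints, reconnect bad endpoints.} Call a terminal $v\in\tset(\mset_1)$ \emph{good} if either $v\in U$ or $v$ lies in a component $R\in\rset$ with $|L(R)|\le 2$; otherwise $v$ is \emph{bad}. For a good terminal $v$, the path of $\pset_1$ incident on $v$ must pass through some vertex of $\beta(v)$ (to enter $R$ it must visit a leg of $R$, and $\beta(v)=L(R)\cup\set{u(R)}$ in that case), so we can truncate that path at its first/last such vertex; this keeps the paths disjoint. For a bad terminal $v$, lying in a component $R$ with $|L(R)|>2$, set $\beta(v)=\set{w}$ with $w$ an interior leg of $\sigma(R)$, and let $Q(v)$ be any path from $v$ to $w$ inside $G[V(R)\cup\set{w}]$. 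Concatenating, I obtain a candidate set $\tpset$ of $|\mset_1|$ paths, each connecting a demand pair of $\tmset$. It remains to verify the $Q(v)$'s do not collide with $\pset_1$ nor with each other, and here is where Property~(\ref{prop: intersection of gammas for case 2b}) enters: if $Q(v)$ met some $P\in\pset_1$ not incident on $v$, then since $\pset_1$ is conflict-free $P$ is disjoint from $\gamma(R)=\Gamma(v)$ and hence lies entirely inside $\eta^\circ(R)$; propagating this to the $Q$-paths of $P$'s endpoints (a $Q$-path incident on a vertex of $\eta^\circ(R)$ must itself lie in $\eta^\circ(R)$, because its component's boundary curve is $G$-normal and disjoint from $R$), we would get a path routing a demand pair of $\nset'$ contained in $\eta^\circ(R)$, contradicting~(\ref{prop: intersection of gammas for case 2b}). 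The case of two $Q$-paths colliding is handled as in Case~2a: either they are in a common component (forcing a crossing of $\Gamma(v)$, impossible) or in two components $R,R'$ of the same $\rset_{h'}$ sharing the mapped vertex $w$, which is ruled out because the segments $\set{\sigma(R)\mid R\in\rset_{h'}}$ are nested and we chose interior legs. This gives a feasible routing of $|\mset_1|\ge\frac{\opt(G,\nset')}{64\Delta_0}$ demand pairs of $\tmset$, hence $\opt(G,\tmset)\ge\frac{\opt(G,\nset')}{64\Delta_0}$.

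\textbf{Main obstacle.} The routine part is the conflict-graph/independent-set counting; the delicate part is the disjointness argument for the reconnection paths $Q(v)$, because in Case~2b a disc $\eta(R)$ may genuinely contain many demand pairs of $\nset$, so I cannot argue as in Case~2a that ``any path entering $R$ must cross $\gamma(R)$.'' The whole point of the elaborate shell construction of Step~1 (Theorem~\ref{thm: case 2b shell} and Theorem~\ref{thm: routing close pairs}) was to pass to the subset $\nset'$ for which Property~(\ref{prop: intersection of gammas for case 2b}) holds, and I expect the bulk of the care in writing this proof to be in chasing exactly this property through the concatenated paths $Q(\ts)\cup P\cup Q(\tt)$ to derive the needed contradiction — essentially the same bookkeeping already carried out in the proof of Theorem~\ref{thm: from new to old problem2}, applied now in the reverse direction.
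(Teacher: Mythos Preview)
Your proof plan is correct and follows the paper's approach essentially verbatim: conflict-graph sparsification using the $16\Delta_0$ curve-length bound, the good/bad terminal split with truncation and $Q(v)$-paths, and invocation of Property~(\ref{prop: intersection of gammas for case 2b}) for disjointness. Two small refinements the paper makes that you gloss over: the good-terminal truncation also needs Property~(\ref{prop: intersection of gammas for case 2b}) to rule out $P\subseteq R$, and in the $Q(v)$-vs-$P$ case no propagation through $Q$-paths is needed, since $P\in\pset_1$ already routes a pair of $\nset'$ and hence $P\subseteq\eta^\circ(R)$ contradicts Property~(\ref{prop: intersection of gammas for case 2b}) directly.
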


\begin{proof}
Let $\pset_0$ be the set of paths in the optimal solution to instance $(G,\nset')$, and let $\mset_0$ be the set of the demand pairs they route. We can assume that $|\mset_0|\geq 64\Delta_0$, as otherwise we can route a single demand pair in $\tmset$. 

As before, we define a conflict graph $H$, whose vertex set is $\set{v(s,t)\mid (s,t)\in \mset_0}$, and there is a directed edge from $v(s_1,t_1)$  to $v(s_2,t_2)$ iff the unique path $P(s_1,t_1)\in \pset_0$ routing the pair $(s_1,t_1)$ intersects $\Gamma(s_2)$ or $\Gamma(t_2)$ (in which case we say that there is a conflict between $(s_1,t_1)$ and $(s_2,t_2)$). Since all paths in $\pset_0$ are node-disjoint, and all curves $\Gamma(s),\Gamma(t)$ have lengths at most $16\Delta_0$ each, the in-degree of every vertex in $H$ is at most $32\Delta_0$. Therefore, we can efficiently compute an independent set $I$ of size at least $\frac{\opt(G,\nset')}{32\Delta_0+1}\geq\frac{\opt(G,\nset')}{64\Delta_0}$ in $H$.

Let $\mset_1=\set{(s,t)\mid v(s,t)\in I}$, and let $\pset_1\subseteq \pset_0$ be the set of paths routing the demand pairs in $\mset_1$. We show that we can route $|\mset_1|$ demand pairs of $\tmset$ in $G$ via node-disjoint paths. Let $\tset_1$ be the sets of all terminals participating in the pairs in $\mset_1$.

Consider any terminal $t\in \tset_1$. We say that $t$ is a \emph{good terminal} if $t\in U$, or $t$ belongs to some component $R\in \rset$, such that $|L(R)|\leq 2$. Otherwise, $t$ is a \emph{bad terminal}. Notice that if $t$ is a good terminal, then the path $P\in \pset_1$ that contains $t$ as its endpoint must contain a vertex $t'\in \beta(t)$: if $t\in U$, then $\beta(t)=\set{t}$; otherwise, if $t\in R$ for some component $R\in \rset$ with $|L(R)|\leq 2$, then $\beta(t)=L(R)\cup \set{u(R)}$ if $u(R)$ is defined, and $\beta(t)=L(R)$ otherwise. In either case, in order to enter $R$, path $P$ has to visit a vertex of $\beta(t)$ (it is impossible that $P\subseteq R$ due to Property~(\ref{prop: intersection of gammas for case 2b})). Therefore, if $t$ is a good terminal, then some vertex $t'\in P$ belongs to $\beta(t)$. 

We transform the paths in $\pset_1$ in two steps, to ensure that they connect demand pairs in $\tmset$. In the first step, for every path $P\in \pset_1$ originating at a good terminal $s\in \tset_1$, we truncate $P$ at the first vertex $s'\in \beta(s)$, so it now originates at $s'$. Similarly, if $P$ terminates at a good terminal $t\in T_1$, we truncate $P$ at the last vertex $t'\in \beta(t)$, so it now terminates at $t'$. Let $\pset_1'$ be the resulting set of paths. Notice that the paths in $\pset_1'$ remain node-disjoint.

In order to complete our transformation, we need to take care of bad terminals. Let $t\in \tset_1$ be any bad terminal. Then $t\in R$ for some component $R\in \rset$ with $|L(R)|\geq 3$. Recall that in this case, $\beta(t)$ contains a unique vertex, that we denote by $t'$, which is a leg of $R$, and it is not one of the endpoints of $\sigma(R)$. We then let $Q(t)$ be any path connecting $t$ to $t'$ in the sub-graph of $G$ induced by $V(R)\cup \set{t'}$. By concatenating the paths in $\set{Q(t)}$ for all bad terminals $t\in \tset_1$, and the paths in $\pset_1'$, we obtain a collection $\tilde{\pset}$ of at least $\frac{\opt(G,\nset')}{64\Delta_0}$ paths, routing demand pairs in $\tmset$. It now only remains to show that the paths in $\tpset$ are disjoint. Recall that the paths in $\pset_1'$ were node-disjoint.

\begin{claim}
The paths in $\tpset$ are node-disjoint.
\end{claim}

\begin{proof}
Consider first some pair $t_1,t_2$ of bad terminals. We show that the paths $Q(t_1)$ and $Q(t_2)$ are disjoint. Let $P_1$ and $P_2$ be the paths in $\pset_1'$, for which $t_1$ and $t_2$ serve as endpoints, respectively. Let $P_2'\in \pset_1$ be the path corresponding to $P_2$, that is, $P_2$ is a sub-path of $P_2'$.

Recall that $t_1\in R$ for some $R\in \rset$, and recall that the disc whose boundary is $\Gamma(t_1)$ contains $R\cup L(R)$. Path $P_2'$ cannot cross $\gamma(R)$ since the paths in $\pset_1$ are are conflict-free, and it is not contained in $\eta^{\circ}(R)$, since that would violate Property~(\ref{prop: intersection of gammas for case 2b}). Therefore, path $P_2'$ lies completely outside $\eta(R)$, and so does path $P_2$.  Let $R'\in \rset$ be the component to which $t_2$ belongs. Then $R'\cap \eta(R)=\emptyset$ must also hold, since $\gamma(R)$ cannot intersect $R'$, from Theorem~\ref{thm: curves around CC's}. From our definition of $\beta(t)$ for bad terminals $t$, $\beta(t_1)\neq \beta(t_2)$, and each such set contains exactly one vertex. It is now easy to see that $Q(t_1)$ and $Q(t_2)$ are disjoint.

Consider now some bad terminal $t\in \tset_1$, and let $P\in \pset_1'$ be any path, such that $t$ is not an endpoint of $P$. We next show that $Q(t)$ is disjoint from $P$. Let $t',t''$ be the endpoints of path $P$, and let $R\in \rset$ be the component containing $t$. Let $P'\in \pset_1$ be the path corresponding to $P$, so $P\subseteq P'$. Path $P'$ cannot intersect $\gamma(R)$ since the paths in $\pset_1$ are conflict-free, and it is not contained in $\eta^{\circ}(R)$ due to Property~(\ref{prop: intersection of gammas for case 2b}). Therefore, $P'$, and hence $P$, lie completely outside $\eta(R)$. Since $Q(t)\subseteq \eta(R)$, we get that $Q(t)\cap P=\emptyset$.
Since the paths in $\pset_1'$ are node-disjoint, it is now immediate to verify that the paths in $\tpset$ are node-disjoint as well.
\end{proof}
\end{proof}

For all $1\leq h',h''\leq h$, we let $\tmset_{h',h''}$ be the set of all demand pairs $(\ts,\tilde t)\in \tmset$ with $\ts\in Z_{h'}(x)$ and $\tilde t\in Z_{h''}(x)$. 
Since $h\leq 16\Delta_0$, we obtain the following corollary.

\begin{corollary}\label{cor: opt before reducing to special case}
There are some $1\leq h',h''\leq h$, such that $\opt(G,\tmset_{h',h''})\geq \Omega\left(\frac{\opt(G,\nset')}{\Delta_0^3}\right )$, and for any solution to instance $(G,\tmset_{h',h''})$, routing $\kappa$ demand pairs, we can efficiently obtain a solution to instance $(G,\nset')$, routing $\Omega(\kappa/\Delta_0)$ demand pairs.
\end{corollary}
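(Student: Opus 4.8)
The plan is to combine Theorem~\ref{thm: from new to old problem2} and Theorem~\ref{thm: from old to new problem2} with an averaging argument over the $h^2$ index pairs $(h',h'')$. First I would observe that the sets $\{\tmset_{h',h''}\mid 1\le h',h''\le h\}$ partition $\tmset$: indeed, by construction of the mapping $\beta$, every terminal of $\tset(\tmset)$ lies on exactly one cycle $Z_{h'}(x)$ of the shell (for $1\le h'\le h$; note that we never map a terminal to $C_x=Z_0(x)$, since for a terminal $t\notin U$ its image $\beta(t)$ is contained in $L(R)\cup\{u(R)\}$, all of which lie on shell cycles, and for $t\in U$ we have $\beta(t)=\{t\}\subseteq U=\bigcup_{h'=1}^h V(Z_{h'}(x))$). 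Hence every demand pair of $\tmset$ falls into exactly one class $\tmset_{h',h''}$, and so $\sum_{h',h''}\opt(G,\tmset_{h',h''})\ge \opt(G,\tmset)$ by placing the optimal solution's paths into their respective classes. Since there are $h^2$ classes and $h\le 8\Delta_0-\Delta-4\le 16\Delta_0$, there must exist a pair $(h',h'')$ with $\opt(G,\tmset_{h',h''})\ge \opt(G,\tmset)/h^2 \ge \Omega(\opt(G,\tmset)/\Delta_0^2)$.

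Next I would invoke Theorem~\ref{thm: from old to new problem2}, which gives $\opt(G,\tmset)\ge \opt(G,\nset')/(64\Delta_0)$, to conclude that for this pair $(h',h'')$,
\[
\opt(G,\tmset_{h',h''})\ge \Omega\!\left(\frac{\opt(G,\tmset)}{\Delta_0^2}\right)\ge \Omega\!\left(\frac{\opt(G,\nset')}{\Delta_0^3}\right),
\]
which is the first assertion of the corollary. For the second assertion, note that any solution to instance $(G,\tmset_{h',h''})$ routing $\kappa$ demand pairs is in particular a solution to instance $(G,\tmset)$ routing $\kappa$ demand pairs, since $\tmset_{h',h''}\subseteq\tmset$ and the paths are node-disjoint in the same graph $G$. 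Applying Theorem~\ref{thm: from new to old problem2} to this solution yields a solution to $(G,\nset')$ routing $\Omega(\kappa/\Delta_0)$ demand pairs, exactly as claimed.

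This step is essentially bookkeeping; there is no substantial obstacle. The only point requiring a little care is the claim that the classes $\tmset_{h',h''}$ genuinely partition all of $\tmset$ — I need to double-check that no terminal of $\tset(\tmset)$ is mapped outside $U=\bigcup_{h'=1}^h V(Z_{h'}(x))$, and in particular that the two artificial cycles $C,C'$ added during the shell construction, and the disc $D_x$, do not carry any image of a terminal. This follows because $\mset^1$ (hence $\nset'$) excludes all demand pairs with a terminal in $D_x$, and because for every component $R\in\rset$ the image $\beta(t)$ of a terminal $t\in R$ consists only of legs of $R$ (which lie on $Z_{h'-1}(x)$ for the appropriate $h'$) and possibly $u(R)\in Z_{h'}(x)$; in the boundary case $h'=h+1$ the legs lie on $C=Z_h(x)$, which is within $U$. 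Thus every image vertex lies on one of $Z_1(x),\dots,Z_h(x)$, and the averaging argument goes through. Once this is verified, the rest is a direct substitution of the bounds from the two preceding theorems, and the choice $h\le 16\Delta_0$ absorbs the loss into the $\poly(\Delta_0)$ factor.
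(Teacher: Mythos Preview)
Your proposal is correct and matches the paper's approach exactly: the paper derives the corollary in a single line (``Since $h\leq 16\Delta_0$, we obtain the following corollary''), and you have spelled out precisely the intended averaging over the $h^2\le (16\Delta_0)^2$ classes together with the direct application of Theorems~\ref{thm: from old to new problem2} and~\ref{thm: from new to old problem2}. Your extra care in verifying that the $\tmset_{h',h''}$ cover $\tmset$ is more detail than the paper provides, and your handling of the boundary case $h'=h+1$ is correct.
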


It is now enough to prove the following theorem.

\begin{theorem}\label{thm: before reduction}
There is an efficient algorithm, that for all $1\leq h',h''\leq h$ computes a set $\pset_{h',h''}$ of disjoint paths, routing $\Omega\left(\frac{\opt(G,\tmset_{h',h''})}{\Delta_0^4\log^3n}\right )$ demand pairs of $\tmset_{h',h''}$ in $G$.
\end{theorem}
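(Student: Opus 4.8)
The plan is to reduce the problem $(G, \tmset_{h',h''})$ to the special case of Subcase 2b analyzed above, namely routing on a disc with the additional structure of short $\hat G$-normal curves connecting the boundary cycle to a short interior curve. Recall that in $\tmset_{h',h''}$ all source vertices lie on the cycle $Z_{h'}(x)$ and all destination vertices lie on the cycle $Z_{h''}(x)$, both of which are cycles of the single shell $\zset(x)$ around $x$. The key structural fact we exploit is Property~(\ref{prop: short curve}) of the shell: for every vertex $v \in V(Z_{h'}(x))$ there is a $G$-normal curve of length at most $h'+1 \le 8\Delta_0$ connecting $v$ to a vertex of $C_x$, contained in $D(Z_{h'}(x))$ and internally disjoint from $Z_{h'}(x)$ and $C_x$; the same holds for $Z_{h''}(x)$. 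Since $C_x$ has length $\Delta$, this is exactly the setup required by Theorem~\ref{thm: case 2b - special case}, with $Z$ being the relevant boundary cycle and $C = C_x$ the short interior curve, once we merge the two cycles $Z_{h'}(x), Z_{h''}(x)$ into a single boundary cycle.

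First I would handle the fact that we have two cycles of the shell rather than one. Without loss of generality assume $h' \le h''$. The outer cycle $Z_{h''}(x)$ encloses $Z_{h'}(x)$. I would build an auxiliary cycle $Z$ by traversing $Z_{h''}(x)$, and, using the curves from Property~(\ref{prop: short curve}), connect a vertex of $Z_{h''}(x)$ down to a vertex of $Z_{h'}(x)$ and then traverse $Z_{h'}(x)$; more carefully, I would route along the boundary of the region $D(Z_{h''}(x)) \setminus D^{\circ}(Z_{h'}(x))$ augmented by two parallel copies of one short connecting curve, so that both $Z_{h'}(x)$ and $Z_{h''}(x)$ appear on the boundary of a single disc $D(Z)$, with all source and destination vertices of $\tmset_{h',h''}$ on $\partial D(Z) = Z$. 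The curve $C_x$ lies in $D^{\circ}(Z)$, and every vertex of $Z$ is joined to a vertex of $C_x$ by a $G$-normal curve of length at most $2 \cdot 8\Delta_0 \le 16\Delta_0$ (concatenating the short curve from Property~(\ref{prop: short curve}) with at most one more short curve), contained in $D(Z)$ and internally disjoint from $C_x$ — precisely the hypotheses of the special case. Applying the reduction to \NDPdisc and Theorem~\ref{thm: case 2b - special case} then yields a routing of $\Omega\!\left(\frac{\opt(\hat G',\tmset_{h',h''})}{\log n}\right)$ demand pairs where $\opt(\hat G', \tmset_{h',h''}) \ge \Omega\!\left(\frac{\opt(G,\tmset_{h',h''})}{\Delta_0^2 \log n}\right)$, giving $\Omega\!\left(\frac{\opt(G,\tmset_{h',h''})}{\Delta_0^2 \log^3 n}\right)$ demand pairs routed — even better than the claimed $\Omega\!\left(\frac{\opt(G,\tmset_{h',h''})}{\Delta_0^4 \log^3 n}\right)$, leaving slack for the merging losses.

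The main obstacle I anticipate is the merging step when $h' = h''$, i.e. when both terminal sets lie on the same cycle $Z_{h'}(x)$: here there is no natural "annular region" to take a boundary of, and one must instead split $Z_{h'}(x)$ into two arcs or otherwise create a doubled cycle so that sources and destinations land on a common disc boundary; some care is needed to keep the short curves to $C_x$ valid and to avoid creating crossings. A second subtlety is verifying that the reduction to \NDPdisc, which requires the terminal set to lie on the boundary of a disc, is consistent with the transformation of the routings back — but this is exactly the content of Theorem~\ref{thm: case 2b - special case} and its embedded re-routing via Lemma~\ref{lemma: routing and segment intersections2}, so I would invoke it as a black box. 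Finally, one must check that any routing of $\kappa$ demand pairs of $\tmset_{h',h''}$ in $\hat G$ (or $\hat G'$) lifts to a routing of the corresponding demand pairs in $G$ of the same cardinality; since $Z$ and $C_x$ and the connecting curves are all built from vertices and edges of $G$, and the deleted region $D^{\circ}(Z)$ does not contain any terminal of $\tmset_{h',h''}$ (all such terminals lie on cycles of the shell, hence on or outside $Z$), the lift is immediate.
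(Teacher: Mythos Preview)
Your proposal has a genuine topological gap in the case $h'<h''$. You form $Z$ as the boundary of the slit annulus $D(Z_{h''}(x))\setminus D^{\circ}(Z_{h'}(x))$ and take $D(Z)$ to be the side containing $C_x$. That side is the union of the interior of $D(Z_{h'}(x))$, the exterior of $D(Z_{h''}(x))$, and the thin slit joining them. Now consider a vertex $v$ on the $Z_{h''}(x)$ portion of $Z$. The curve from Property~(\ref{prop: short curve}) goes from $v$ straight inward through the annulus to $C_x$; the annulus is exactly the complement of $D(Z)$, so this curve is \emph{not} contained in $D(Z)$. Any curve from $v$ to $C_x$ that stays inside $D(Z)$ must first travel along $Z_{h''}(x)$ to the slit and then descend, but $|V(Z_{h''}(x))|$ is not bounded in terms of $\Delta_0$, so you cannot guarantee length at most $16\Delta_0$. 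Hence the hypotheses of Theorem~\ref{thm: case 2b - special case} are not met, and its proof (which uses these curves to build the separating curve $R$ in Lemma~\ref{lemma: routing and segment intersections2}) does not go through. Relatedly, your routing region $\hat G'$ is the annulus itself, and there is no a priori reason that near-optimal routings of $\tmset_{h',h''}$ can be confined to this annulus with only $O(\Delta_0^2\log n)$ loss.

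The paper handles the two cases the opposite way from what you suggest. When $h'=h''$, all terminals already lie on the single cycle $Z_{h'}(x)$ and Theorem~\ref{thm: case 2b - special case} applies directly with $Z=Z_{h'}(x)$; this is the easy case. When $h'<h''$, the paper does \emph{not} merge the two cycles. Instead it introduces \emph{staircases}: for every source $s\in V(Z_{h'}(x))$ it builds a canonical directed path $Q(s)$ from $s$ to a vertex $s'\in V(Z_{h''}(x))$, using segments of the shell cycles and paths through type-1 components $R\in\trset_i$. The new demand set $\hmset=\{(s',t):(s,t)\in\tmset_{h',h''}\}$ has all terminals on $Z_{h''}(x)$, so the special case applies cleanly with the Property~(\ref{prop: short curve}) curves lying in $D(Z_{h''}(x))$. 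The substantive work is then Lemma~\ref{lem: intersection of paths and staircases} and Lemma~\ref{lem: few path from each}, which show that any staircase is intersected by at most $O(\Delta_0^2)$ paths of an optimal routing, and that at most $O(\Delta_0^2)$ optimal paths originate in each equivalence class $V_j$; these give Claim~\ref{claim 2b - reduction}, $\opt(G,\hmset)\ge\Omega(\opt(G,\tmset_{h',h''})/\Delta_0^2)$, which accounts for the extra $\Delta_0^2$ factor in the theorem. Your slit construction does not supply a substitute for this step.
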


\paragraph{Step 3: Reduction to Routing on a Disc.}
In this step, we complete the proof of Theorem~\ref{thm: main for Case 2} by proving Theorem~\ref{thm: before reduction}.
We fix some $1\leq h',h''\leq h$. If $h'=h''$, then all terminals of $\tset(\tmset_{h',h''})$ lie on $Z_{h'}(x)$, and we obtain an instance of the special case,  with $Z=Z_{h'}(x)$, $C=C_x$ and $\hmset=\tmset_{h',h''}$. We let $\hat G'$ be the graph obtained from $G$ by deleting all vertices and edges lying in $\dnot(Z_{h'})$, and apply the $O(\log n)$-approximation algorithm for \NDPdisc to the resulting instance $(\hat G', \hmset)$. From Theorem~\ref{thm: case 2b - special case},
$\opt(\hat G',\hmset)\geq \Omega\left(\frac{\opt(G,\hmset)}{\Delta_0^2\log n}\right )$, and so overall we obtain a routing of  $\Omega\left(\frac{\opt(G,\tmset_{h',h''})}{\Delta_0^2\log^2 n}\right )$ demand pairs of $\tmset_{h',h''}$.
Therefore, we assume that $h'\neq h''$ from now on.
We assume without loss of generality that $h'<h''$, and that all source vertices of $\tmset_{h',h''}$ lie on $Z_{h'}(x)$ and all destination vertices of $\tmset_{h',h''}$ lie on $Z_{h''}(x)$. 

In order to simplify the notation, we denote $\rho=h''-h'+1$, and we denote cycles $Z_{h'}(x),Z_{h'+1}(x),\ldots,Z_{h''}(x)$ by $\tZ_1,\tZ_2,\ldots, \tZ_{\rho}$, respectively. We also denote $\tmset_{h',h''}$ by $\tmset$, and the sets of all source and all destination vertices of the demand pairs in $\tmset$ by $\tS$ and $\tT$, respectively. We assume that $\opt(G,\tmset)\geq 500\Delta_0^4$, since otherwise routing a single demand pair of $\tmset$ is sufficient. We also assume that for all $1\leq i<\rho$, no edge connects a vertex of $\tZ_i$ to a vertex of $\tZ_{i+1}$, since we can subdivide each such edge with a vertex.

For convenience of notation, for each $1\leq i<\rho$, we denote the set of vertices of $G$ lying in $\dnot(\tZ_{i+1})\setminus D(\tZ_i)$ by $\tilde U_i$, and we denote by $\trset_i$ the set of all {\bf type-1} connected components of $G[\tilde U_i]$. Let $R\in \trset_i$ be any such connected component. Recall that we have defined a segment $\sigma(R)$ of $\tZ_i$ containing all vertices of $L(R)$. We view $\sigma(R)$ as directed in the counter-clock-wise direction of $\tZ_i$, and we let $u'(R)\in L(R)$ be the first vertex of $\sigma(R)$. We then define $\chi(R)$ to be any path that connects $u'(R)$ to $u(R)$, such that all inner vertices of $\chi(R)$ belong to $R$.

The idea of the proof is to construct a collection $\qset$ of special paths, connecting the vertices of $\tZ_1$ to the vertices of $\tZ_{\rho}$, that we call \emph{staircases}. We use the paths in $\qset$, in order to map all the source vertices in $\tilde S$ to some vertices of $\tZ_{\rho}$. We then reduce the problem to routing on a disc, by creating a hole $\dnot(\tZ_{\rho})$ in the sphere, so that all terminals participating in the new set $\hmset$ of demand pairs lie on the boundary $\tZ_{\rho}$ of the hole.

In order to define the staircases, it will be convenient to work with a directed graph $G'$, obtained from a sub-graph of $G$, as follows. First, we add the cycles $\tZ_1,\tZ_2,\ldots,\tZ_{\rho}$ to $G'$, and for each such cycle $\tZ_i$, we direct all its edges in the counter-clock-wise direction along $\tZ_i$. Next, for each $1\leq i<\rho$, for each component $R\in \trset_i$, we add the path $\chi(R)$ to $G'$, and direct all its edges so that the path is directed from $u'(R)$ to $u(R)$. We are now ready to define a staircase.

A staircase is simply a directed path in $G'$, connecting some vertex of $\tZ_1$ to some vertex of $\tZ_{\rho}$, which is internally disjoint from $\tZ_{\rho}$. Observe that we can decompose any such staircase $Q$ into $2\rho-2$ segments $\mu_1(Q),\chi_1(Q),\ldots,\mu_{\rho-1}(Q),\chi_{\rho-1}(Q)$, where for $1\leq i<\rho$, $\mu_i(Q)$ is a directed sub-path of $\tZ_i$ (possibly consisting of a single vertex), and $\chi_i(Q)=\chi(R)$ for some $R\in \trset_i$.

For every vertex $v\in V(\tZ_1)$, we build a special staircase $Q(v)$, as follows. Intuitively, we will try to minimize the lengths of the segments $\mu_i(Q(v))$. Denote $v=v_1$. For each $1\leq i<\rho$, we now define the segments $\mu_i(Q(v))$ and $\chi_i(Q(v))$, and the vertex $v_{i+1}\in V(\tZ_{i+1})$, which is the last vertex of $\chi_i(Q(v))$, assuming that we are given the vertex $v_i\in \tZ_i$. We let $\mu_i(Q(v))$ be the shortest directed segment of $\tZ_{i}$, starting from $v_i$, that terminates at some vertex $v_i'$, such that for some component $R\in \trset_{i}$, $u'(R)=v'_i$. If $R\in \trset_i$ is a unique component with this property, then we set $\chi_{i}(Q(v))=\chi(R)$. Otherwise, let $\rset'\subseteq \trset_{i}$ be the set of all components $R$ with $u'(R)=v'_i$. Intuitively, we would like to set $\chi_i=\chi(R^*)$, where $R^*$ is the first component of $\rset'$ in the counter-clock-wise order  (see Figure~\ref{fig: staircase}). 
In order to define $R^*$ formally, we need the following observation.

\begin{observation}\label{obs: aux}
There is some component $R\in \trset_i$, such that $R\not\in \rset'$.
\end{observation}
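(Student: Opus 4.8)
The plan is to prove Observation~\ref{obs: aux} by contradiction: assume $\trset_i=\rset'$, i.e. every type-1 component of $G[\tilde U_i]$ has its first leg equal to $v_i'$. First I would check that this assumption is not vacuous, i.e. that $\trset_i\ne\emptyset$ (and hence $\rset'\ne\emptyset$). Since $G$ is connected, there is a path in $G$ from a vertex of $\tZ_i$ to a vertex of $\tZ_{i+1}$; take the portion of it lying in the closed annulus bounded by $\tZ_i$ and $\tZ_{i+1}$ that runs from $\tZ_i$ until the first time it reaches $\tZ_{i+1}$. Since we have assumed that no edge joins $\tZ_i$ directly to $\tZ_{i+1}$, this sub-path has at least one internal vertex, and all of its internal vertices lie in a single connected component of $G[\tilde U_i]$ which has neighbors on both $\tZ_i$ and $\tZ_{i+1}$, hence is a type-1 component.

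Next I would exploit nestedness. The segments $\{\sigma(R)\mid R\in\trset_i\}$ form a nested family, and under our assumption all of them start at $v_i'$; thus they are linearly ordered by inclusion, and I pick $R^*\in\trset_i$ with $\sigma(R^*)$ inclusion-maximal. Then $L(R)\subseteq\sigma(R^*)$ for every $R\in\trset_i$, and by the third property of Theorem~\ref{thm: curves around CC's} this forces $\eta(R)\subseteq\eta(R^*)$ for every type-1 component $R\in\trset_i$; in particular $\eta(R^*)$ contains every vertex of $\tilde U_i$ having a neighbor on $\tZ_i$ and a neighbor on $\tZ_{i+1}$, since every such vertex lies in a type-1 component. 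On the other hand, from the construction of $\gamma(R^*)$ in the proof of Theorem~\ref{thm: curves around CC's} (the portion following $\partial G[V(R^*)\cup L(R^*)]$ touches the outer cycle $\tZ_{i+1}$ only at $u(R^*)$, and the remaining portions stay in $D(\tZ_i)$), the curve $\gamma(R^*)$ meets $V(\tZ_{i+1})$ only at $u(R^*)$ and contains an edge not on $\tZ_{i+1}$; hence $\eta(R^*)$ is a proper sub-disc of $D(\tZ_{i+1})$ and every vertex of $\tZ_{i+1}$ other than $u(R^*)$ lies strictly outside $\eta(R^*)$.

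I would then derive the contradiction from the fact that $\tZ_{i+1}$ is $\mincycle$ of the graph $H$ obtained from $G$ by contracting $D(\tZ_i)$ into a super-node $a$. Consider the region $D^*$ which is the union of $D(\tZ_i)$, the disc $\eta(R^*)$, and all discs $\eta(R)$ for type-2 components $R$ with $\sigma(R)\subseteq\sigma(R^*)$. Every type-1 component of $G[\tilde U_i]$ lies in $D^*$, so in $H$ the boundary of $D^*$ separates $a$ from $V(\tZ_{i+1})\setminus\{u(R^*)\}$. This boundary can be realized as a $G$-normal closed curve (replacing each $G$-normal arc of $\gamma(R^*)$ or of the $\eta(R)$'s by the walk along the boundary of the subgraph it shadows), and therefore contains a cycle $C'$ of $G$ with $a\in\dnot(C')$ and $D(C')\subsetneq D(\tZ_{i+1})$, contradicting the minimality of $\tZ_{i+1}$ as $\mincycle(H,a)$.

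The step I expect to be the main obstacle is this last one: carefully describing the region $D^*$ and its boundary, and verifying that the boundary really is a simple $G$-normal closed curve through which no vertex of $\tZ_{i+1}$ other than $u(R^*)$ can be reached from $D(\tZ_i)$, so that the contraction genuinely yields a cycle strictly inside $D(\tZ_{i+1})$ enclosing $D(\tZ_i)$. In particular one must cope with the fact that the discs $\eta(R)$ supplied by Theorem~\ref{thm: curves around CC's} may dip into $D(\tZ_i)$ along the short curves to $C_t$, and one must rule out the degenerate possibility that $\eta(R^*)$ already fills all of $D(\tZ_{i+1})$ — this is excluded since $\ell(\gamma(R^*))<\Delta$ while all of $V(\tZ_{i+1})\setminus\{u(R^*)\}$, and every type-3 component, lies outside $\eta(R^*)$. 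An alternative that avoids some of this bookkeeping is to argue purely by connectivity: under the assumption, $V(\gamma(R^*))$ would be a cut in $H$ separating $a$ from all but a single arc of $\tZ_{i+1}$, and one extracts the smaller enclosing cycle directly from the side of that cut containing $a$.
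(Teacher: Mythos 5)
Your approach has a genuine gap: the contradiction you aim for does not exist. The observation is not a consequence of the shell structure (the $\mincycle$ property) alone --- it relies essentially on the standing assumption, made just before the staircase construction, that $\opt(G,\tmset)\geq 500\Delta_0^4$, which you never invoke. To see that your strategy cannot work, consider an annulus between $\tZ_i$ and $\tZ_{i+1}$ containing a single type-1 component $R$ (say, the interior of one path from $v_i'\in V(\tZ_i)$ to $u(R)\in V(\tZ_{i+1})$) and nothing else. Then $\tZ_{i+1}$ is still $\mincycle(H,a)$ --- the face of $H\setminus a$ where $a$ resided is bounded by $\tZ_{i+1}$ together with $R$ traversed twice, and its unique enclosing cycle is $\tZ_{i+1}$ --- yet $\trset_i=\rset'=\set{R}$. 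In this situation no cycle of $G$ enclosing $a$ lies strictly inside $D(\tZ_{i+1})$, so the cycle $C'$ you try to extract from the boundary of $D^*$ simply does not exist: that boundary is a $G$-normal curve passing through faces, not a cycle of $G$, and replacing its face-crossing arcs by walks along face boundaries forces you back onto $\tZ_{i+1}$ itself. Relatedly, your claim that $\sigma(R)\subseteq\sigma(R^*)$ forces $\eta(R)\subseteq\eta(R^*)$ reverses the implication actually provided by Theorem~\ref{thm: curves around CC's} (which gives $\eta(R)\subseteq\eta(R^*)\Rightarrow\sigma(R)\subseteq\sigma(R^*)$); a component with $L(R)=\set{v_i'}$ can perfectly well be drawn outside $\eta(R^*)$ and reach $\tZ_{i+1}$ at a vertex far from $u(R^*)$.

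The paper's argument is instead a short counting argument using the large optimum. Since $\opt(G,\tmset)\geq 500\Delta_0^4$, there are that many node-disjoint $\tS$--$\tT$ paths; each such path, upon first reaching $\tZ_{i+1}$, must do so at $u(R)$ for some type-1 component $R$ through which it has just passed (using a vertex of $L(R)$), and each component is responsible for at most one path. All components of $\rset'$ except possibly one have $L(R)=\set{v_i'}$, so the components of $\rset'$ can be responsible for at most two of the node-disjoint paths, forcing $\trset_i\setminus\rset'\neq\emptyset$. Any correct proof must use this (or an equivalent) lower bound on the number of disjoint crossing paths.
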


\begin{proof}
From our assumption, there is a set $\pset$ of at least $500\Delta_0^4$ node-disjoint paths, connecting the vertices of $\tS$ to the vertices of $\tT$ in $G$. For every path $P\in \pset$, let $v_P$ be the first vertex of $P$ lying on $\tZ_{i+1}$. Since we have assumed that no edge connects a vertex of $\tZ_i$ to a vertex of $\tZ_{i+1}$, there must be some component $R'\in \trset_i$, such that $v_P=u(R')$, and moreover, $P$ must contain some vertex of $L(R')$. We say that $R'$ is responsible for $P$. Notice that each component of $\rset_i$ may be responsible for at most one path in $\pset$.

Notice that for all components $R'\in \rset'$, except for maybe one, $L(R)=\set{v'_i}$, and so the components of $\rset'$ may be responsible for at most two paths in $\pset$. But $|\pset|\geq 500\Delta_0^4$, so $\trset_i\setminus \rset'\neq \emptyset$.
\end{proof}


We draw a closed $G$-normal curve $\gamma\subseteq \dnot(\tZ_{i+1})\setminus D(\tZ_i)$, so that for each $R'\in \trset_i$, $\gamma\cap V(R')$ is a contiguous set of vertices on $\gamma$, and all vertices that lie on $\gamma$ belong to $\bigcup_{R'\in \trset_i}V(R')$. We then let $\sigma^*$ be the shortest segment of $\gamma$, containing all vertices of $\bigcup_{R'\in \rset'}V(R')$, and no other vertices (in particular, the vertices of all components in $\rset_i\setminus \rset'$ do not lie on $\sigma^*$). Let $v^*$ be the first vertex on $\sigma^*$ in the counter-clock-wise direction, and let $R^*\in \rset'$ be the unique component containing $v^*$. Finally, we set $\chi_i(Q(v))=\chi(R^*)$.
The final staircase $Q(v)$ is a concatenation of the paths in $\set{\mu_i(Q(v)),\chi_i(Q(v))}_{i=1}^{\rho-1}$.  Each such staircase $Q(v)$ defines an undirected path in graph $G$, that we also denote by $Q(v)$, and we do not distinguish between them.

\begin{figure}[h]
 \centering
\scalebox{0.5}{\includegraphics{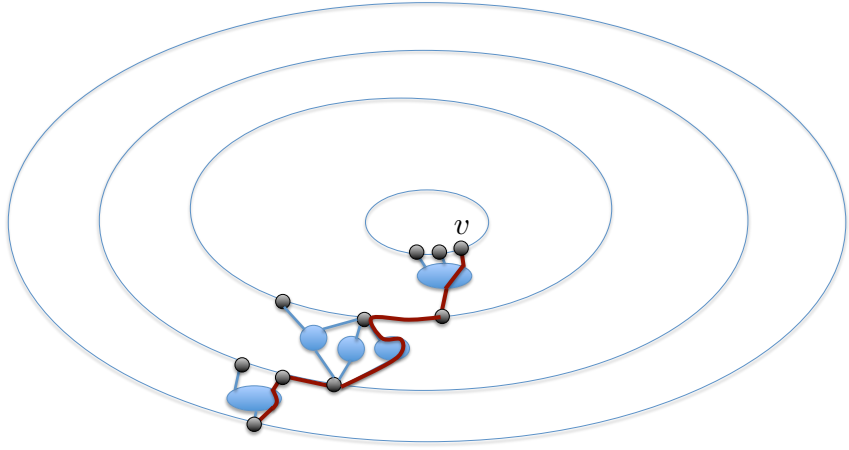}}
\caption{A staircase $Q(v)$ in graph $G$. \label{fig: staircase}}
\end{figure}

We will use the following lemma in order to bound the number of paths in the optimal solution intersecting any staircase $\qset(v)$.
\begin{lemma}\label{lem: intersection of paths and staircases}
Let $\pset^*$ be any set of node-disjoint paths, connecting a subset of the demand pairs in $\tmset$. Then for each $v\in V(\tZ_1)$, the number of paths in $\pset^*$ that intersect $Q(v)$ is at most $O(\Delta_0^2)$.
\end{lemma}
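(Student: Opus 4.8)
The plan is to bound, for a fixed $v \in V(\tZ_1)$, how many node-disjoint paths of $\pset^*$ can intersect the staircase $Q(v)$, by exploiting the "minimality" built into the construction of $Q(v)$. Recall that $Q(v)$ decomposes into $2\rho-2$ segments $\mu_1(Q(v)),\chi_1(Q(v)),\ldots,\mu_{\rho-1}(Q(v)),\chi_{\rho-1}(Q(v))$, where each $\chi_i$ is a path $\chi(R)$ running through a single type-1 component $R \in \trset_i$, and each $\mu_i$ is a directed sub-path of $\tZ_i$. Since $\rho \le h \le 16\Delta_0$, there are $O(\Delta_0)$ segments. So it suffices to show that each individual segment is intersected by only $O(\Delta_0)$ paths of $\pset^*$; summing gives $O(\Delta_0^2)$.

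For the "rung" segments $\chi_i(Q(v)) = \chi(R)$, the key point is that all inner vertices of $\chi(R)$ lie in the component $R$, and $R$ can be enclosed by the disc $\eta(R)$ from Theorem~\ref{thm: curves around CC's}, whose boundary $\gamma(R)$ has length at most $2h+1+\Delta/2 < 16\Delta_0$ (using $h \le 16\Delta_0$, and more carefully $h \le 8\Delta_0 - \Delta - 4$ from Step~1 of Case 2b). Any path of $\pset^*$ that touches an inner vertex of $\chi(R)$ must enter $\eta(R)$; but since the demand pairs in $\tmset$ have their terminals on $\tZ_1$ and $\tZ_\rho$ — outside $\eta^\circ(R)$ — such a path cannot be contained in $\eta^\circ(R)$ and hence must cross $\gamma(R)$, using up at least one of its at most $16\Delta_0$ vertices. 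Node-disjointness of $\pset^*$ then forces at most $16\Delta_0$ such paths. The endpoints $u'(R)$ and $u(R)$ of $\chi(R)$ that lie on $\tZ_i$ resp.\ $\tZ_{i+1}$ are handled together with the cycle segments below (they contribute $O(1)$ extra).

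For the "tread" segments $\mu_i(Q(v))$, the crucial fact is that $\mu_i$ was chosen as the \emph{shortest} directed sub-path of $\tZ_i$ from $v_i$ to the next "attachment vertex" $v_i' = u'(R)$ of some component $R \in \trset_i$. The plan is to argue that $\mu_i$ is intersected by at most $O(\Delta_0)$ paths of $\pset^*$. The reason is a counting/crossing argument: consider the region "inside" $\tZ_i$ (the disc $D(\tZ_{i-1})$ side, appropriately extended). If many paths of $\pset^*$ crossed $\mu_i$, then — since sources and destinations of these demand pairs lie on $\tZ_1 \subseteq D(\tZ_i)$ and $\tZ_\rho$ outside $D(\tZ_i)$ respectively (recall $h' < h''$, so sources are on the inner cycle) — each such path, in order to travel from the inner side to $\tZ_\rho$, together with the short segment $\mu_i$ and parts of neighboring staircase pieces and the curves $\gamma(R)$ bounding nearby components, would create a small $G$-normal closed curve separating many attachment vertices of $\trset_i$ from $\tZ_\rho$; this would contradict the existence of $\Omega(\Delta_0^4)$ node-disjoint $\tS$–$\tT$ paths (guaranteed by $\opt(G,\tmset) \ge 500\Delta_0^4$), exactly as in the proofs of Lemmas~\ref{lemma: routing and segment intersections} and~\ref{lemma: routing and segment intersections2}. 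Concretely, I expect the argument to mirror those: let $P$ be the first path of $\pset^*$ crossing $\mu_i$ far from $v_i$, take the initial segment $P'$ of $P$, and build a closed curve $R^\dagger$ from $P'$, a sub-segment of $\mu_i$, and the relevant $\gamma(\cdot)$ curves; then the sources of $\Omega(\Delta_0)$ demand pairs whose staircase pieces lie "between" are separated from their destinations and must all cross the short curve $\gamma(v_i)$ or a comparable short curve of length $< 16\Delta_0$, which is impossible.

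The main obstacle, I expect, is the tread-segment bound: making precise the claim that the \emph{minimality} of $\mu_i$ prevents too many of the disjoint routing paths from crossing it. For the rung segments the enclosure discs $\eta(R)$ give an immediate and clean bound, but for $\mu_i \subseteq \tZ_i$ there is no single enclosing disc — instead one must use the fact that $\mu_i$ is sandwiched between the component $R$ it leads to (whose disc $\eta(R)$ is short) and the previous attachment point $v_i$ (which, unrolling the recursion, is itself the endpoint $u(R_{i-1})$ of the previous rung $\chi_{i-1}$, lying on the short curve $\gamma(R_{i-1})$). So the plan is to bound the number of crossings of $\mu_i$ by charging them to crossings of the two short curves $\gamma(R_{i-1})$ (or $C_x$ when $i=1$) and $\gamma(R_i)$ together with a local planarity argument showing that a path of $\pset^*$ crossing the "middle" of $\mu_i$ without crossing either of these short curves would have to re-cross $\mu_i$ or violate node-disjointness against the staircase or against the other disjoint $\tS$–$\tT$ paths. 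Getting the topology of this local argument exactly right — in particular handling the fact that $\mu_i$ may wind around and that components of $\trset_i$ are nested via their segments $\sigma(R)$ — is the delicate part, but it is structurally the same kind of argument used already in this section, so I would import that machinery wholesale.
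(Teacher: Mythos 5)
Your decomposition into $O(\Delta_0)$ segments and your treatment of the rungs $\chi_i(Q(v))$ match the paper: each rung lies in the disc $\eta(R)$ of Theorem~\ref{thm: curves around CC's}, whose boundary has $O(\Delta_0)$ vertices, and a node-disjoint path meeting the rung must use one of them to escape to its destination on $\tZ_\rho$. The gap is in the treads. Your explicit claim that ``for $\mu_i\subseteq\tZ_i$ there is no single enclosing disc'' is exactly the wrong turn: the paper's proof covers each $\mu_i$ by at most two such discs. First it cuts off the prefix of $\mu_i$ up to the last vertex $v_i^*$ of $\sigma(R^q)$, where $R^q$ is the outermost component whose segment $\sigma(R^q)$ has $v_i$ as an inner vertex; that prefix lies in $\eta(R^q)$. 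For the remaining piece $\mu_i'$, the minimality in the staircase construction guarantees that every component with a leg at an inner vertex of $\mu_i'$ is a type-2 component with $\sigma(R)\subseteq\mu_i'$, so one can draw a $G$-normal curve $\gamma_i$ from $v_i'$ to $v_i^*$ inside the annulus $\dnot(\tZ_{i+1})\setminus D(\tZ_i)$ that is \emph{internally disjoint from all vertices of $G$}. Closing $\gamma_i$ up with the two Property~(\ref{prop: short curve}) curves of length $\le 8\Delta_0$ from $v_i'$ and $v_i^*$ down to $C_x$ and a segment of $C_x$ yields a disc $\eta_i^*$ containing all of $\mu_i'$ with boundary of length $O(\Delta_0)$; any path of $\pset^*$ meeting $\mu_i'$ must exit $\eta_i^*$ through one of those $O(\Delta_0)$ boundary vertices. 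Summing over the $O(\Delta_0)$ discs gives the bound.

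The substitute you propose for the treads --- building a separating curve from an initial segment of a path of $\pset^*$ and deriving a contradiction with $\opt(G,\tmset)\ge 500\Delta_0^4$, in the style of Lemmas~\ref{lemma: routing and segment intersections} and~\ref{lemma: routing and segment intersections2} --- does not go through as sketched. Those lemmas exploit that the paths under consideration route demand pairs whose terminals are pinned to specific short arcs, so a single path plus short curves separates many \emph{routed} sources from their destinations. Here $\pset^*$ is an arbitrary node-disjoint family, $\mu_i$ can be a long arc of $\tZ_i$, and a path touching the middle of $\mu_i$ separates nothing until you close the region --- which requires precisely the vertex-free parallel curve $\gamma_i$ whose existence you did not establish. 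Likewise, ``re-crossing $\mu_i$'' is harmless (you are counting paths, not crossings), and no conflict with node-disjointness arises. Without the enclosing-disc construction for $\mu_i'$, the tread bound, and hence the lemma, is not proved.
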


\begin{proof}
For simplicity of notation, denote $Q=Q(v)$.
We construct a collection $\dset$ of at most $3\rho\leq O(\Delta_0)$ discs, such that for each disc $\eta\in \dset$, its boundary is a $G$-normal curve of length at most $O(\Delta_0)$. We ensure that every vertex $v'\in V(Q)$ is contained in some disc $\eta\in \dset$, and no terminal of $\tT$ is contained in $\bigcup_{\eta\in \dset}\eta^{\circ}$. It is then easy to see that every path $P\in \pset^*$ that intersects $Q$ must also contain at least one vertex on the boundary of some disc in $\dset$, and, since the total length of all boundaries of the discs in $\dset$ is bounded by 
$O(\Delta_0^2)$, at most $O(\Delta_0^2)$ paths in $\pset^*$ intersect $Q$. 

For simplicity of notation, for all $1\leq i<\rho$, we denote $\mu_i(Q)$ by $\mu_i$, and $\chi_i(Q)$ by $\chi_i$. We let $v_i,v_i'$ be the first and the last vertex of $\mu_i$, respectively, and we let $R_i\in \trset_i$ be the component with $\chi(R_i)=\chi_i(Q)$, so that $v_{i+1}=u(R_i)$, and $v_i'=u'(R_i)$ (see Figure~\ref{fig: staircase2}). We start with $\dset=\emptyset$. For each $1\leq i<\rho$, we add at most three discs to $\dset$.

Fix some $1\leq i<\rho$. We first add to $\dset$ the disc $\eta(R_i)$, given by Theorem~\ref{thm: curves around CC's}. Notice that this disc contains all vertices of $\chi_i(Q)$, and its boundary is a $G$-normal curve of length $O(\Delta_0)$. Let $\rset'$ be the set of all type-1 and type-2 components $R$ of $G[\tilde U_i]$, such that $v_i$ is an inner vertex of $\sigma(R)$. Since the segments in $\set{\sigma_R\mid R\in \rset'}$ form a nested set, we can assume that $\rset'=\set{R^1,\ldots, R^q}$ with $\sigma(R^1)\subseteq \sigma(R^2)\subseteq \cdots\subseteq \sigma(R^q)$. Notice that among all components in $\rset'$, only $R^q$ may be a type-1 component, and the remaining components are type-2 components. If $\rset'\neq \emptyset$, then we let $v^*_i$ be the last vertex on $\sigma(R^q)$ in the counter-clock-wise direction, so $v^*_i\in \mu_i$, and we add the disc $\eta(R^q)$ to $\dset$. Otherwise, we let $v^*_i=v_i$. Let $\mu'_i\subseteq \mu_i$ be the segment of $\mu_i$ between $v_i^*$ and $v'_i$ (see Figure~\ref{fig: staircase2}). Notice that all vertices of $\mu_i\setminus \mu'_i$ are contained in the discs we have already added to $\dset$. Our final step is to add a disc $\eta^*_i$, containing all vertices of $\mu'_i$, that is defined as follows. From our construction of the staircases and the set $\rset'$, if $R$ is a component of $G[\tilde U_i]$, and some inner vertex of $\mu'_i$ belongs to $L(R)$, then $R$ is a type-2 component, and $\sigma(R)\subseteq \mu'(R)$. Therefore,  we can draw a $G$-normal curve $\gamma_i$ with endpoints $v'_i$ and $v^*_i$, so that $\gamma_i$ is contained in $\dnot(\tZ_{i+1})\setminus D(\tZ_i)$, and it is internally disjoint from all vertices of $G$. From Property~(\ref{prop: short curve}) of shells, we can construct curves $\gamma,\gamma'$ that connect $v'_i$ and $v^*_i$, respectively, to some vertices $a,a'\in C_x$, so that $\gamma,\gamma'\subseteq D(\tZ_i)$, and the length of each curve is bounded by $i\leq 8\Delta_0$. Combining $\gamma_i,\gamma,\gamma'$, and one of the segments of $C_x$ with endpoints $a$ and $a'$, we obtain a closed curve $\gamma^*_i$ of length $O(\Delta_0)$, such that the disc $\eta^*_i$, whose boundary is $\gamma^*_i$, contains all vertices of $\mu'_i$. We then add $\eta^*_i$ to $\dset$. 

\begin{figure}[h]
 \centering
\scalebox{0.5}{\includegraphics{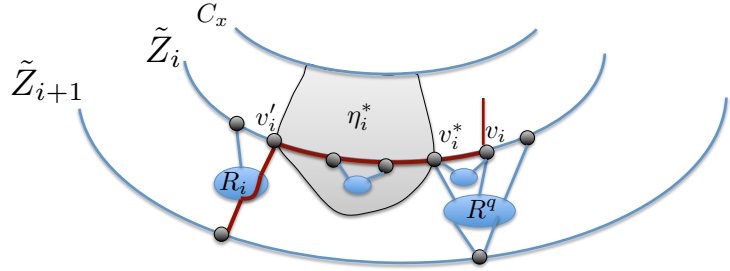}}
\caption{Constructing the disc $\eta^*_i$. The staircase is shown in red.\label{fig: staircase2}}
\end{figure}

It is easy to verify that all vertices of $\mu_i\cup \chi_i$ lie in the discs $\eta(R_i),\eta(R^q),\eta^*_i$, and for each such disc, its boundary is a $G$-normal curve of length at most $O(\Delta_0)$. Moreover, from our construction, none of these discs contains a vertex of $Z_{\rho}$, except as part of its boundary. We conclude that every vertex of $Q$ belongs to some disc $\eta\in \dset$, the boundary of each such disc is a $G$-normal curve of length at most $O(\Delta_0)$. Every path $P\in \pset^*$ intersecting $Q$ must intersect the boundary of at least one disc in $\dset$, and since $|\dset|=O(\Delta_0)$, the number of such paths is bounded by $O(\Delta_0^2)$.
\end{proof}

Consider now the set $\qset=\set{Q(v)\mid v\in V(\tZ_1)}$ of staircases. Notice that once a pair of staircases meet, they always continue together. Therefore, we can partition the set $\qset$ of staircases into equivalence classes $\qset_1,\ldots,\qset_{\lambda}$, where two staircases belong to the same class iff they terminate at the same vertex. From the above discussion, staircases belonging to distinct equivalence classes must be disjoint. For each such set $\qset_j$, let $V_j\subseteq V(\tZ_1)$ be the set of all vertices where the staircases of $\qset_j$ originate. 
We also need the following lemma.

\begin{lemma}\label{lem: few path from each}
Let $\pset^*$ be any set of node-disjoint paths, connecting a subset of the demand pairs in $\tmset$. Then for each $1\leq j\leq \lambda$,   the number of paths in $\pset^*$ that originate at the vertices of $V_j$ is at most $O(\Delta_0^2)$.
\end{lemma}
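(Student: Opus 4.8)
The plan is to relate the staircases in $\qset_j$ to a single staircase, and then invoke Lemma~\ref{lem: intersection of paths and staircases}. Recall that all staircases in $\qset_j$ terminate at the same vertex of $\tZ_\rho$, and, since staircases that ever meet continue together, this means that the union $\bigcup_{Q\in\qset_j}V(Q)$ is in fact a tree-like structure that coalesces into a single path as we move toward $\tZ_\rho$. More precisely, I would argue that $\bigcup_{Q\in\qset_j}V(Q)\subseteq V(Q^*)\cup X_j$, where $Q^*$ is any single representative staircase in $\qset_j$, and $X_j$ is the set of vertices lying on cycle $\tZ_1$ strictly between the first vertex of $V_j$ and the origin of $Q^*$; equivalently, all the ``spread'' among the staircases of $\qset_j$ occurs on the first cycle $\tZ_1$, after which they have merged. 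This follows from the construction: the segment $\mu_1(Q(v))$ for $v\in V_j$ is a sub-path of $\tZ_1$, and once two staircases reach the same vertex $v_1'=u'(R)$ on $\tZ_1$ and pick the same component $R^*$ (which is forced by the deterministic tie-breaking rule in the definition of $Q(v)$), they follow identical paths thereafter.

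Given this structure, I would proceed as follows. First, pick a representative $Q^*\in\qset_j$, so $Q^*$ is a single staircase connecting a vertex of $\tZ_1$ to a vertex of $\tZ_\rho$. Second, observe that any path $P\in\pset^*$ that originates at a vertex $v\in V_j$ must either (i) already intersect $Q^*$, or (ii) originate at a vertex of $V_j$ lying on the segment $X_j$ of $\tZ_1$ but leave that segment before meeting $Q^*$. For case (i), Lemma~\ref{lem: intersection of paths and staircases} bounds the number of such paths by $O(\Delta_0^2)$. For case (ii), I would enclose the segment $X_j$ of $\tZ_1$ together with the initial portion of $Q^*$ up to the coalescence point inside a single disc $\eta$ whose boundary is a $G$-normal curve of length $O(\Delta_0)$ and which contains no terminal of $\tT$ in its interior, using exactly the same technique as in the proof of Lemma~\ref{lem: intersection of paths and staircases} (enclose the relevant sub-paths of cycles using Property~(\ref{prop: short curve}) of the shells to connect their endpoints back to $C_x$, and using the laminarity of the discs $\eta(R)$ from Theorem~\ref{thm: curves around CC's}). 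Then any path of $\pset^*$ originating in $V_j$ must cross the boundary of $\eta$ (since its destination in $\tT$ lies outside $\eta$), and since that boundary has length $O(\Delta_0)$ and the paths are node-disjoint, at most $O(\Delta_0)$ such paths exist in case (ii).

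Combining the two cases gives the desired bound of $O(\Delta_0^2)$ on the number of paths of $\pset^*$ originating at vertices of $V_j$.

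The main obstacle I anticipate is making precise the claim that all the divergence among the staircases of $\qset_j$ happens on $\tZ_1$ — in other words, that two staircases in the same equivalence class share every vertex outside of their first segments $\mu_1$. This requires carefully checking the deterministic tie-breaking in the definition of $Q(v)$: at each cycle $\tZ_i$, the next vertex $v_{i+1}$ and the component $R^*$ are determined entirely by the current vertex $v_i$, so two staircases agreeing at some $v_i$ must agree forever after. The subtlety is that staircases in $\qset_j$ need not all reach the same vertex on $\tZ_2$; a priori they could remain distinct across several cycles and only merge near $\tZ_\rho$. I would need to handle this by iterating the enclosure argument: partition $V_j$ according to the vertex where the staircases first merge, or more simply, build the disc collection $\dset$ around $Q^*$ exactly as in Lemma~\ref{lem: intersection of paths and staircases} but enlarge each disc $\eta^*_i$ (the one covering $\mu'_i$) so that it also covers the corresponding segments of $\tZ_i$ traversed by all staircases in $\qset_j$ before merging. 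Since these segments are all nested sub-paths of $\tZ_i$ bounded by the merge points, the enlarged disc still has boundary length $O(\Delta_0)$ and avoids $\tT$, so the counting argument goes through with the same $O(\Delta_0^2)$ bound.
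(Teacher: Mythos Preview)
Your approach has a genuine gap in case~(ii), and the fallback does not close it. With a single representative staircase $Q^*$, you try to enclose the segment $X_j\subseteq\tZ_1$ (and perhaps an initial piece of $Q^*$) inside a disc $\eta$ whose boundary is short and avoids $\tT$. But a routing path $P$ from a source $v\in V_j$ heads toward its destination on $\tZ_\rho$, i.e.\ it leaves $\tZ_1$ on the \emph{outer} side. Your curves from the endpoints of $X_j$ down to $C_x$ only bound the region on the \emph{inner} side of $\tZ_1$; on the outer side you have at most one wall, namely $Q^*$. Nothing prevents $P$ from escaping on the side of $V_j$ opposite to the origin of $Q^*$ without ever touching $Q^*$ or the boundary of $\eta$. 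Your fallback---enlarging the discs of Lemma~\ref{lem: intersection of paths and staircases} to cover the full tree of staircases in $\qset_j$---does not obviously help: that tree can have many branches that remain separate across many levels (the claim that all divergence happens on $\tZ_1$ is false, as you yourself note), and you have not argued that the union of the $\mu_i$-segments at each level can be enclosed by $O(1)$ discs of $O(\Delta_0)$ boundary. More fundamentally, covering the staircases is the wrong goal: you need to separate the \emph{sources} $V_j$ from the \emph{destinations} on $\tZ_\rho$, and the routing paths need not follow staircases at all.

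The paper's argument is much simpler and supplies exactly the missing second wall. Take the two extreme staircases $Q_1,Q_r\in\qset_j$ (originating at the first and last vertices $v_1,v_r$ of $V_j$ on $\tZ_1$). Connect $v_1,v_r$ to vertices $a,a'\in C_x$ by $G$-normal curves $\gamma,\gamma'\subseteq D(\tZ_1)$ of length at most $h'\leq 8\Delta_0$ (Property~(\ref{prop: short curve})), and close up with a segment of $C_x$. The resulting curve $\gamma^*$ has all of $V_j$ on one side and every destination on $\tZ_\rho$ (except possibly the common endpoint of $Q_1,Q_r$) on the other. Hence every path of $\pset'$ hits a vertex of $V(\gamma)\cup V(\gamma')\cup V(C_x)\cup V(Q_1)\cup V(Q_r)$. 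The first three sets contribute $O(\Delta_0)$ vertices; intersections with $Q_1$ or $Q_r$ are each bounded by $O(\Delta_0^2)$ via Lemma~\ref{lem: intersection of paths and staircases}. That is the whole proof.
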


\begin{proof}
Fix some $1\leq j\leq \lambda$, and let $\pset'\subseteq \pset^*$ be the set of paths originating at the vertices of $V_j$. We assume that $\qset_j=\set{Q_1,\ldots,Q_r}$, and the endpoints of the paths $Q_i$ that belong to $V_j$ appear consecutively in this order on $\tZ_1$. Let $v_1,v_r\in V_j$ be the vertices where the paths $Q_1$ and $Q_r$ originate. From Property~(\ref{prop: short curve}) of shells, we can construct $G$-normal curves $\gamma,\gamma'$, connecting $v_1$ and $v_r$, respectively, to some vertices $a,a'$ of $C_x$, so that the length of each such curve is bounded by $h'\leq 8\Delta_0$, and both curves are contained in $D(\tZ_1)$. We claim that every path in $\pset'$ must contain a vertex of $V(\gamma)\cup V(\gamma')\cup V(Q_1)\cup V(Q_r)\cup V(C_x)$. Indeed, consider the curve $\gamma^*$ obtained by the union of the images of $Q_1$ and $Q_r$, the curves $\gamma$ and $\gamma'$, and one of the segments of $C_x$ with endpoints $a$ and $a'$. Then for every path of $\pset'$, its source lies inside or on the curve $\gamma^*$, and its destination lies either on  $\gamma^*$ (when the destination is the common endpoint of all paths in $\qset_j$), or outside the curve $\gamma^*$. Therefore, every path in $\pset'$ must contain a vertex of $V(\gamma)\cup V(\gamma')\cup V(Q_1)\cup V(Q_r)\cup V(C_x)$. Since $\gamma,\gamma'$ and $C_x$ contain $O(\Delta_0)$ vertices each, and since from Lemma~\ref{lem: intersection of paths and staircases} at most $O(\Delta_0^2)$ paths in $\pset^*$ may intersect $Q_1$ and $Q_r$, we get that $|\pset'|\leq O(\Delta_0)^2$.
\end{proof}

We are now ready to complete our reduction. For every source vertex $s\in \tS$, let $s'$ be the vertex of $\tZ_{\rho}$ that serves as the other endpoint of path $Q(s)$. We define a new set of demand pairs $\hmset=\set{(s',t)\mid (s,t)\in \tmset}$. Notice that $(G,\hmset)$ is now an instance of the special case, where we use $Z=\tZ_{\rho}$ and $C=C_x$.  We construct a graph $\hat G$ from $G$, by creating a hole $\dnot(\tZ_{\rho})$ in the sphere, and removing from $G$ all edges and vertices that lie in the interior of $D(\tZ_{\rho})$. We then apply the $O(\log n)$-approximation algorithm for \NDPdisc to the resulting problem, to obtain a routing of at least $\Omega\left(\frac{\opt(\hat G,\hmset)}{\log n} \right )$ demand pairs. 
From Theorem~\ref{thm: case 2b - special case}, $\opt(\hat G,\hmset)\geq \Omega\left(\frac{\opt(G,\hmset)}{\Delta_0^2\log n}\right )$, and so we obtain a set $\pset$ of disjoint paths, routing of $\Omega\left(\frac{\opt(G,\hmset)}{\Delta_0^2\log^2 n}\right )$ demand pairs of $\hmset$ in graph $\hat G$. Notice that for all $1\leq j\leq \lambda$, if we denote by $b_j$ the unique vertex where all staircases of $\qset_j$ terminate, then at most one demand pair in which $b_j$ participates is routed by $\pset$. 
We construct a set $\qset'\subseteq \qset$ of staircases as follows. For every demand pair $(s',t)\in \hmset$ routed by $\pset$, we select one source vertex $s\in \tS$ with $(s,t)\in\tmset$, such that the staircase $Q(s)$ terminates at $s'$, and we add $Q(s)$ to $\qset'$. Notice that all staircases in $\qset'$ are disjoint from each other, since all of them belong to different sets $\qset_j$.

Since the paths in $\pset$ do not use any vertices in the interior of $D(\tZ_{\rho})$, we can combine them with the paths in $\qset'$, to obtain a routing of  $\Omega\left(\frac{\opt(G,\hmset)}{\Delta_0^2\log^2 n}\right )$ demand pairs of $\tmset$ in graph $G$. In order to complete the proof of Theorem~\ref{thm: before reduction}, it is now enough to show that $\opt(G,\hmset)\geq \Omega\left(\frac{\opt(G,\tmset)}{\Delta_0^2}\right )$. We do so in the following claim.

\begin{claim}\label{claim 2b - reduction}
$\opt(G,\hmset)\geq \Omega\left(\frac{\opt(G,\tmset)}{\Delta_0^2}\right )$.
\end{claim}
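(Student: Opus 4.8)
The plan is to show that any solution to $(G,\tmset)$ routing $\kappa$ demand pairs can be converted, with only an $O(\Delta_0^2)$ loss, into a solution to $(G,\hmset)$ of the same size up to constants; this immediately yields $\opt(G,\hmset)\geq \Omega(\opt(G,\tmset)/\Delta_0^2)$. Let $\pset^*$ be an optimal solution for $(G,\tmset)$, routing a set $\mset^*\subseteq\tmset$ with $|\mset^*|=\opt(G,\tmset)$. Recall that for every demand pair $(\ts,\tilde t)\in\tmset$ we have $\ts\in V(\tZ_1)$ and $\tilde t\in V(\tZ_{\rho})$, and that for every source $\ts$ there is a staircase $Q(\ts)$ connecting $\ts$ to a vertex $s'$ of $\tZ_{\rho}$, where $\hmset=\set{(s',t)\mid(s,t)\in\tmset}$. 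The idea is: first prune $\pset^*$ so that the staircases we will use are disjoint from all surviving routing paths, then prepend each staircase $Q(\ts)$ to the corresponding path in the pruned solution, re-routing it to originate at $s'$ instead of $\ts$.

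First I would build a conflict graph $H$ on $\mset^*$, where there is a directed edge from $v(\ts_1,\tilde t_1)$ to $v(\ts_2,\tilde t_2)$ if the path of $\pset^*$ routing $(\ts_1,\tilde t_1)$ intersects the staircase $Q(\ts_2)$. By Lemma~\ref{lem: intersection of paths and staircases}, each staircase $Q(\ts_2)$ is intersected by at most $O(\Delta_0^2)$ paths of $\pset^*$, so the in-degree of every vertex of $H$ is $O(\Delta_0^2)$; hence the average degree of every induced subgraph is $O(\Delta_0^2)$, and we can efficiently extract an independent set $I$ with $|I|=\Omega(|\mset^*|/\Delta_0^2)$. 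Let $\mset_1=\set{(\ts,\tilde t)\mid v(\ts,\tilde t)\in I}$ and let $\pset_1\subseteq\pset^*$ be the corresponding paths. For each $(\ts,\tilde t)\in\mset_1$, no path of $\pset_1$ other than the one routing $(\ts,\tilde t)$ intersects $Q(\ts)$, but the path routing $(\ts,\tilde t)$ itself may intersect $Q(\ts)$. This is easy to handle: let $P\in\pset_1$ route $(\ts,\tilde t)$, let $w$ be the last vertex of $P$ lying on $Q(\ts)$ (such a vertex exists, namely $\ts$ itself, at worst), and replace the segment of $P$ from $\ts$ to $w$ by the segment of $Q(\ts)$ from $s'$ to $w$; call the result $P^*$. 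Since $Q(\ts)$ is internally disjoint from $\tZ_{\rho}$ and $s'\in V(\tZ_{\rho})$, and since no other path of $\pset_1$ meets $Q(\ts)$, the paths $\set{P^*}$ remain node-disjoint, and each $P^*$ now routes $(s',\tilde t)\in\hmset$. One further subtlety: two distinct sources $\ts_1,\ts_2$ in $\mset_1$ might have $Q(\ts_1)$ and $Q(\ts_2)$ sharing a suffix (they belong to the same equivalence class $\qset_j$ and terminate at the same vertex of $\tZ_{\rho}$); in that case the resulting paths $P_1^*,P_2^*$ would share that common vertex of $\tZ_{\rho}$. To avoid this, I would additionally discard all but one source per equivalence class $\qset_j$ when forming $\mset_1$ — by Lemma~\ref{lem: few path from each}, the number of sources of $\pset^*$ within any single $V_j$ is $O(\Delta_0^2)$, so keeping one representative per class costs only another $O(\Delta_0^2)$ factor, which is already absorbed. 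After this pruning, the staircases used are pairwise disjoint and the construction goes through.

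The main obstacle is the bookkeeping of disjointness after re-routing: I must ensure simultaneously that (i) distinct surviving routing paths do not collide with each other's staircases (handled by the conflict graph and Lemma~\ref{lem: intersection of paths and staircases}), (ii) a path does not collide with its own staircase in a way that breaks the splice (handled by truncating at the last intersection vertex $w$), and (iii) the prepended staircases themselves are pairwise disjoint, which requires the per-equivalence-class pruning justified by Lemma~\ref{lem: few path from each}. Once these three points are verified, the bound $\opt(G,\hmset)\geq\Omega(|\mset_1|)=\Omega(\opt(G,\tmset)/\Delta_0^2)$ follows immediately, completing the proof of Claim~\ref{claim 2b - reduction} and hence of Theorem~\ref{thm: before reduction}.
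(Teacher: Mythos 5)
Your proposal follows essentially the same route as the paper: prune the optimal $\tmset$-routing via a conflict graph built from Lemmas~\ref{lem: intersection of paths and staircases} and~\ref{lem: few path from each}, then prepend the (now pairwise-disjoint, now unmolested) staircases to the surviving paths. Your handling of a path meeting its own staircase — splicing at the last intersection vertex — is a fine, slightly more explicit version of the paper's "concatenate and extract a simple path," and your identification of the three disjointness obligations is exactly right.

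There is, however, one genuine quantitative slip. You perform the two prunings \emph{sequentially}: first an independent set in the path-intersects-staircase conflict graph (losing a factor $O(\Delta_0^2)$), and then, within the survivors, you keep one representative source per equivalence class $\qset_j$, asserting this second $O(\Delta_0^2)$ loss is "already absorbed." It is not: the two losses compound multiplicatively. After the first step you retain $\Omega(\kappa/\Delta_0^2)$ pairs; Lemma~\ref{lem: few path from each} only guarantees that each class $V_j$ contributes $O(\Delta_0^2)$ of them, so keeping one per class leaves only $\Omega(\kappa/\Delta_0^4)$ pairs. That proves $\opt(G,\hmset)\geq\Omega(\opt(G,\tmset)/\Delta_0^4)$, which is weaker than the claim and would propagate an extra $\Delta_0^2$ into Theorem~\ref{thm: before reduction} and the exponent bookkeeping of Case 2. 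The repair is what the paper does: put both conflict types into a \emph{single} directed conflict graph — an edge into $v(s,t)$ whenever another surviving pair's path intersects $Q(s)$, \emph{or} whenever the other pair's source shares the set $V_j$ with $s$ — so the in-degree is still $O(\Delta_0^2)+O(\Delta_0^2)=O(\Delta_0^2)$ and one independent-set extraction yields $\Omega(\kappa/\Delta_0^2)$ pairs satisfying both disjointness requirements simultaneously.
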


\begin{proof}
Let $\pset_0$ be a set of paths routing $\kappa_0=\opt(G,\tmset)$ demand pairs of $\tmset$ in $G$. We show that we can route 
$\Omega(\kappa_0/\Delta_0^2)$ demand pairs of $\hmset$ in $G$. Let $\tmset_0\subseteq\tmset$ be the set of the demand pairs routed by set $\pset_0$. For each demand pair $(s,t)\in \tmset_0$, let $P(s,t)\in \pset_0$ be the path routing this demand pair.
We build a conflict graph $H$, whose vertex set is $\set{v(s,t)\mid (s,t)\in \tmset_0}$, and there is a directed edge from $v(s_1,t_1)$ to $v(s_2,t_2)$ iff one of the following happens: either (i) $s_1$ and $s_2$ both belong to the same set $V_j$, for $1\leq j\leq \lambda$; or (ii) path $P(s_2,t_2)$ intersects $Q(s_1)$. From Lemmas~\ref{lem: intersection of paths and staircases} and \ref{lem: few path from each}, the in-degree of every vertex in $H$ is at most $O(\Delta_0^2)$. Therefore, there is an independent set $I$ of $\Omega(\kappa_0/\Delta_0^2)$ vertices in $H$. We let $\tmset_1=\set{(s,t)\mid v(s,t)\in I}$, and we let $\pset_1\subseteq\pset_0$ be the set of paths routing the demand pairs in $\tmset_1$.

Let $\hmset'\subseteq \hmset$ contain, for every demand pair $(s,t)\in \tmset_1$, the pair $(s',t)$. It is now enough to show that all demand pairs in $\hmset'$ can be routed in $G$. Let $\qset'$ contain all staircases $Q(s)$, where $s$ participates in some demand pair in $\tmset_1$. Then all staircases in $\qset'$ are disjoint from each other, since they all belong to different sets $\qset_j$, for $1\leq j\leq \lambda$. Moreover, for each staircase $Q(s)\in \qset'$, all paths in $\pset_1\setminus\set{P(s,t)}$ are disjoint from $Q(s)$. By concatenating the paths in $\pset_1$ and the staircases in $\qset'$, we obtain a collection of node-disjoint paths routing all demand pairs in $\hmset'$
\end{proof}



\label{-----------------------------------sec: proof of 2nd main thm------------------------------------------}
\section{Proof of Theorem~\ref{thm: main2}}\label{sec: proof of 2nd main thm}

We perform a transformation to instance $(G,\mset)$ as before, to ensure that every terminal participates in at most one demand pair, and the degree of every terminal is $1$. The number of vertices in the new instance is bounded by $2n^2$, and abusing the notation we denote this number by $n$.
 We use the following analogue of Theorem~\ref{thm: solution or sep oracle}.

\begin{theorem} \label{thm: separation solution opt epsilon} There is an efficient algorithm, that, given any semi-feasible solution to (LP-flow2), either computes a routing of at least $\Omega \left( \frac{{(X^*)}^{1/19}}{\poly \log n}\right)$ demand pairs in $\mset$ via node-disjoint paths, or returns a constraint of type (5), that is violated by the current solution. 
\end{theorem}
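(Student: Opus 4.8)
\textbf{Proof plan for Theorem~\ref{thm: separation solution opt epsilon}.} The plan is to mimic exactly the argument that derived Theorem~\ref{thm: solution or sep oracle} from Theorem~\ref{thm: main}, but keeping track of a multiplicative bound of the form $(X^*)^{1/19}/\poly\log n$ rather than $X^*/(N^{9/19}\poly\log n)$. First I would take the given semi-feasible solution $(x,f)$ to (LP-flow2), apply the same preprocessing as in Section~\ref{sec: alg overview} (zeroing out pairs with $x_i\le 1/(2k)$, bucketing the surviving pairs by the dyadic scale of $x_i$, picking the heaviest bucket $\mset_{j^*}$ carrying $\Omega(X^*/\log k)$ flow, and rescaling so that every surviving pair carries exactly $w^*$ flow). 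This gives a solution with total flow $W=\Omega(X^*/\log k)$, and it suffices either to route $\Omega((X^*)^{1/19}/\poly\log n)$ pairs, or to find $\mset'\subseteq\mset_{j^*}$ with $\opt(G,\mset')\le w^*|\mset'|/2$, which is a violated type-(\ref{LP: new constraint}) constraint.

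Next I would run the well-linked decomposition (Theorem~\ref{thm: wld}) to obtain instances $(G_j,\mset^j)$ with $\sum_j W_j=\Omega(X^*/\log k)$, and apply Theorem~\ref{thm: main} to each. Splitting into the type-1 index set $I_1$ (where we route $\Omega(W_j^{1/19}/\poly\log n)$ pairs) and the type-2 index set $I_2$: if $\sum_{j\in I_1}W_j\ge W/2$, then there is a single $j\in I_1$ with $W_j\ge W/(2|I_1|)$; but $|I_1|\le r\le |\mset|\le n$, so this only gives $W_j\ge W/(2n)$, which is too weak. The correct move here — and the place where this proof genuinely differs from the proof of Theorem~\ref{thm: solution or sep oracle} — is to \emph{not} localize to a single instance but instead observe that routing $\Omega(W_j^{1/19}/\poly\log n)$ pairs in each type-1 instance, summed over $j\in I_1$, together with the concavity of $t\mapsto t^{1/19}$ and $\sum_{j\in I_1}W_j\ge W/2$, already implies we route at least $\Omega\big((\sum_{j\in I_1}W_j)^{1/19}/\poly\log n\big)=\Omega(W^{1/19}/\poly\log n)=\Omega((X^*)^{1/19}/\poly\log n)$ pairs in total. (Here I use $\sum_j a_j^{1/19}\ge (\sum_j a_j)^{1/19}$ for nonnegative $a_j$.) For the complementary case $\sum_{j\in I_2}W_j\ge W/2$, the argument is \emph{identical} to Case 2 in the derivation of Theorem~\ref{thm: solution or sep oracle}: setting $\mset'=\bigcup_{j\in I_2}\tmset^j$, the bad-path counting against $|V(G)\setminus\bigcup_j V(G_j)|\le w^*|\mset|/64$ shows $\opt(G,\mset')<w^*|\mset'|/2$, yielding a violated type-(\ref{LP: new constraint}) constraint; this part needs no change.

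The main obstacle, then, is verifying that the super-additivity bound $\sum_{j\in I_1}\Omega(W_j^{1/19}/\poly\log n)=\Omega(W^{1/19}/\poly\log n)$ is actually legitimate — i.e.\ that the $\poly\log n$ factors hidden inside the $\Omega(\cdot)$ in Theorem~\ref{thm: main} are uniform across all $j$ (they are, since they depend only on $n$ and $k$, both of which are fixed global parameters), so that we may pull a common $1/\poly\log n$ out of the sum before applying concavity. Once that is checked, the inequality $\sum_j W_j^{1/19}\ge\big(\sum_j W_j\big)^{1/19}$ finishes it with no loss beyond constants. I would also need the trivial base-case handling: if $X^*$ (equivalently $W$) is bounded by a constant, routing a single demand pair suffices, which is consistent with all the ``$W>\Omega(\Delta)$'' and ``$\kappa_0\ge$ const'' assumptions made inside the proof of Theorem~\ref{thm: main}. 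Finally, exactly as after Theorem~\ref{thm: solution or sep oracle}, plugging this algorithm as a separation oracle into the Ellipsoid algorithm for (LP-flow2) and guessing $X^*$ correctly yields the efficient algorithm claimed in Theorem~\ref{thm: main2}.
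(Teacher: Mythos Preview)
Your proposal is correct and matches the paper's proof essentially verbatim: the preprocessing, well-linked decomposition, invocation of Theorem~\ref{thm: main}, and Case~2 argument are identical, and in Case~1 the paper bounds $W_j^{1/19}\ge W_j/W^{18/19}$ (using $W_j\le W$) before summing to get $\sum_{j\in I_1}W_j^{1/19}\ge (\sum_{j\in I_1}W_j)/W^{18/19}\ge W^{1/19}/2$, which is exactly your subadditivity inequality written out. Your observation that the $\poly\log n$ factor in Theorem~\ref{thm: main} is uniform in $j$ is correct and is implicitly used by the paper as well.
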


We show that the above theorem implies Theorem~\ref{thm: main2}. The Ellipsoid Algorithm, in every iteration, applies the above theorem to the current semi-feasible solution $(x,f)$ to (LP-flow2). If the outcome is a solution routing at least $\Omega \left( \frac{{(X^*)}^{1/19}}{\poly \log n}\right)$ demand pairs, then we obtain the desired routing, assuming that $X^* = \opt$ is guessed correctly. Otherwise, we obtain a violated constraint of type (5), and continue to the next iteration of the Ellipsoid Algorithm. The algorithm is guaranteed to terminate with a feasible solution after a number of iterations that is polynomial in the number of the LP-variables, so we obtain an efficient algorithm, that returns a solution routing $\Omega\left(\frac{(\opt(G,\mset))^{1/19}}{\poly\log n}\right )$ demand pairs. We now focus on proving Theorem~\ref{thm: separation solution opt epsilon}.

We again process the fractional solution $(x,f)$ to obtain a new fractional solution $(x', f')$, where every demand pair sends either $0$ or $w^* $ flow units, in the same way as described in Section~\ref{sec: alg overview}. We let $\mset' \subseteq \mset$ denote the set of the demand pairs $(s_i,t_i)$ with non-zero flow value $x'_i$ in this new solution. As before, the total flow between the demand pairs in $\mset'$ is at least $\Omega(X^*/\log k)$ in the new solution, and, if we find a subset $\mset'' \subseteq \mset'$ of demand pairs with $\opt(G, \mset'') \le w^* |\mset''|/2$, then set $\mset''$ defines a violated constraint of type (5) for (LP-flow2). Therefore, we focus on set $\mset'$ and for simplicity denote $\mset = \mset'$.

We decompose the input instance $(G, \mset)$ into a collection of well-linked instances $\set{(G_j, \mset^j)}_{j=1}^r$ using Theorem~\ref{thm: wld}. For each $1 \le j \le r$, let $W_j = w^*|\mset^j|$ be the contribution of the demand pairs in $\mset^j$ to the current flow solution and let $W = \sum_{j=1}^r W_j = \Omega(X^*/\log k)$.

Theorem~\ref{thm: main} guarantees that for each $1 \le j \le r$, we can obtain one of the following:
\begin{enumerate}
\item Either a collection $\pset^j$ of node-disjoint paths, routing $\Omega(W_j^{1/19}/\poly\log n)$ demand pairs of $\mset^j$ in $G_j$; or

\item A collection $\tmset^j\subseteq \mset^j$ of demand pairs, with $|\tmset^j|\geq |\mset^j|/2$, such that $\opt(G_j,\tmset^j)\leq w^*|\tmset^j|/8$.
\end{enumerate}

We say that instance $(G_j, \mset^j)$ is a type-1 instance, if the first outcome happens for it, and we say that it is a type-2 instance otherwise. Let $I_1=\set{j\mid \mbox{$(G_j,\mset^j)$ is a type-1 instance}}$, and similarly, $I_2=\set{j\mid \mbox{$(G_j,\mset^j)$ is a type-2 instance}}$. We consider two cases, where the first case happens when $\sum_{j\in I_1}W_j\geq W/2$, and the second case when $\sum_{j\in I_2}W_j\geq W/2$. In the second case, we let $\mset' = \bigcup_{j \in I_2} \tmset^j$, and by the same reasoning as in Section~\ref{sec: alg overview}, the following inequality, that is violated by the current LP-solution, is a valid constraint of (LP-flow2): 

\[\sum_{(s_i, t_i) \in \mset'}x'_i \le w^* |\mset'|/2.\]

We now focus on Case 1, where the number of paths routed for each instance $(G_j, \mset^j)$ with $j \in I_1$ is at least $|\pset^j| = \Omega\left(\frac{W_j^{1/19}}{\poly\log n}\right)$. Since $\sum_{j\in I_1}W_j\geq W/2 = \Omega\left(X^*/\log k\right)$, the total number of paths routed is:
\[\begin{split}\sum_{j \in I_1} |\pset^j| 
&\ge\sum_{j\in I_1} \Omega\left(\frac{W_j^{1/19}}{\poly\log n}\right)\\
&\geq\sum_{j\in I_1} \Omega\left(\frac{W_j}{W^{18/19}\cdot \poly\log n}\right)\\
&=
\Omega\left(\frac{W^{1/19}}{\poly\log n}\right)\\
&= \Omega\left(\frac{{(X^*)}^{1/19}}{\poly\log n}\right).\end{split}\]

\label{-----------------------------------sec: conclusion------------------------------------------}
\section{Conclusion and Open Problems}\label{sec: conclusion}


In this paper we showed the first approximation algorithm for the \NDPplanar problem, whose approximation factor breaks the $\Omega(n^{1/2})$ barrier of the multicommodity flow LP-relaxation. We introduce a number of new techniques, that we hope will be helpful in obtaining better approximation algorithms for this problem. We note that our initial motivation came from the improved approximation algorithm for \NDPgrid of~\cite{NDP-grids}. Even though adapting their main idea to the more general setting of planar graphs is technically challenging, we believe that the work of~\cite{NDP-grids} on the much simpler and better structured grid graphs helped crystallize the main conceptual idea that 
eventually lead to this result. Therefore, we believe that studying the \NDPgrid problem can be very helpful in understanding the more general \NDPplanar problem. The best current approximation algorithm for \NDPgrid achieves an $\tilde O(n^{1/4})$-approximation, and it seems likely that this approximation ratio can be improved. We leave open the question of whether the techniques introduced in this paper can help improve the $O(n^{1/2})$-approximation factor of~\cite{EDP-alg} for \EDP on planar graphs. Finally, we remark that the complexity of the \NDPdisc and the \NDPcyl problems is still not well-understood: we provide an $O(\log k)$-approximation algorithm for both problems, and we are not aware of any results that prove that the optimization versions of \NDPdisc or \NDPcyl are \NP-hard. We note that the \EDP problem is known to be \NP-hard for both these settings~\cite{EDP-disc-hardness}, but we are not aware of any approximation algorithms for it.

\bibliography{NDP-planar-graphs.v3}
\bibliographystyle{alpha}

\label{------------------------------------------------Appendix-----------------------------------------------------}
\label{------------------------------------------------Appendix-------------------------------------------------------}
\appendix

\section{Proofs Omitted from Section~\ref{sec: prelims}}\label{sec: proofs from prelims}
\subsection{Proof of Observation~\ref{obs: sparsest cut}}

For every terminal $t\in \tset$, let $v_t\in V(G)$ be the unique neighbor of $t$ in $G$. Let $(A,C,B)$ be the cut computed by algorithm $\algsc$, whose sparsity $\beta=\frac{|C|}{\min\set{|A\cap \tset|,|B\cap \tset|}+|C\cap \tset|}$ is within factor $\alphasc$ of the optimal one. If $\beta> 1$, then we replace cut $(A,C,B)$ with the cut $(\emptyset,\tset,V(G)\setminus \tset)$, obtaining a cut of sparsity $1$, and continue with our algorithm, so we assume that $\beta\leq 1$ from now on. While $C\cap \tset\neq \emptyset$, let $t\in C\cap \tset$ be any terminal in $C$. 

If $|A\cap \tset|<|B\cap \tset|$, then we move $t$ from $C$ to $A$, and if $v_t\not \in C$, we add $v_t$ to $C$. Let $(A',C',B')$ be the resulting tri-partition of $V(G)$. It is easy to see that $(A',C',B')$ is a valid vertex cut, and $|C'|\leq |C|$, while $\min\set{|A'\cap \tset|,|B'\cap \tset|}+|C'\cap \tset|= \min\set{|A\cap \tset|,|B\cap \tset|}+|C\cap \tset|$, so the sparsity of the cut does not increase.

The case where $|B\cap \tset|<|A\cap \tset|$ is dealt with similarly.

Finally, assume that $|A\cap \tset|=|B\cap \tset|$. 
If $v_t\in A$, then we move $t$ from $C$ to $A$, and otherwise we move $t$ from $C$ to $B$. Let $(A',C',B')$ be the resulting tri-partition of $V(G)$. It is easy to see that $(A',C',B')$ is a valid vertex cut, and $|C'|=|C|-1$. 
It is also easy to verify that $\min\set{|A'\cap \tset|,|B'\cap \tset|}+|C'\cap \tset|\geq \min\set{|A\cap \tset|,|B\cap \tset|}+|C\cap \tset|-1$, and $\min\set{|A'\cap \tset|,|B'\cap \tset|}+|C'\cap \tset|>0$  hold. Since $\beta\leq 1$, the sparsity of the cut does not increase. Once we process all terminals in set $C$ in this fashion, we obtain a final vertex cut $(A,C,B)$ whose sparsity is at most $\beta$, and $C\cap \tset=\emptyset$. 

\subsection{Proof of Claim~\ref{claim: partition the forest}}

We compute a partition $\rset(\tau)$ for every tree $\tau\in F$ separately. The partition is computed in iterations, where in the $j$th iteration we compute the set $R_j(\tau)\subseteq V(\tau)$ of vertices, together with the corresponding collection $\pset_j(\tau)$ of paths. For the first iteration, if $\tau$ contains a single vertex $v$, then we add this vertex to $\rset_1(\tau)$ and terminate the algorithm. Otherwise, for every leaf $v$ of $\tau$, let $P(v)$ be the longest directed path of $\tau$, starting at $v$, that only contains degree-1 and degree-2 vertices, and does not contain the root of $\tau$. We then add the vertices of $P(v)$ to $R_1(\tau)$, and the path $P(v)$ to $\pset_1(\tau)$. Once we process all leaf vertices of $\tau$, the first iteration terminates. It is easy to see that all resulting vertices in $R_1(\tau)$ induce a collection $\pset_1(\tau)$ of disjoint paths in $\tau$, and moreover if $v,v'\in R_1(\tau)$, and there is a path from $v$ to $v'$ in $\tau$, then $v,v'$ lie on the same path in $\pset_1(\tau)$. We then delete all vertices of $R_1(\tau)$ from $\tau$.

The subsequent iterations are executed similarly, except that the tree $\tau$ becomes smaller, since we delete all vertices that have been added to the sets $R_j(\tau)$ from the tree.

It is now enough to show that this process terminates after $\ceil{\log n}$ iterations. In order to do so, we can describe each iteration slightly differently. Before each iteration starts, we contract every edge $e$ of the current tree, such that at least one endpoint of $e$ has degree $2$ in the tree, and $e$ is not incident on the root of $\tau$. We then obtain a tree in which every inner vertex (except possibly the root) has degree at least $3$, and delete all leaves from this tree. The number of vertices remaining in the contracted tree after each such iteration therefore decreases by at least factor $2$. It is easy to see that the number of iteration in this procedure is the same as the number of iterations in our algorithm, and is bounded by $\ceil{\log n}$.
For each $1\leq j\leq \ceil{\log n}$, we then let $R_j=\bigcup_{\tset\in F}R_j(\tset)$.

\subsection{Proof of Lemma~\ref{lem: getting r-split demand pairs on a disc}}

We denote by $\tset$ the set of all vertices participating in the demand pairs in $\mset$, and we refer to them as terminals.
Consider any demand pair $(s,t)\in \mset$, and let $\sigma(s,t),\sigma'(s,t)$ be the two segments of $C$ whose endpoints are $s$ and $t$. We assume without loss of generality that $|\sigma(s,t)\cap \tset|\leq |\sigma'(s,t)\cap \tset|$, and we denote $\delta(s,t)=|\sigma(s,t)\cap \tset|-1$. By possibly renaming the terminals $s$ and $t$, we assume that $s$ appears before $t$ on $\sigma(s,t)$ as we traverse it in counter-clock-wise direction along $C$. Our first step is to partition the demand pairs in $\mset$ into $\ceil{\log \kappa}$ subsets $\nset_1,\ldots,\nset_{\ceil{\log \kappa}}$, as follows. For each $1\leq i\leq \ceil{\log\kappa}$, $\nset_i$ contains all demand pairs $(s,t)$ with $2^{i-1}\leq \delta(s,t)<2^i$. In order to complete the proof of the lemma, it is enough to show that for each $1\leq i\leq \ceil{\log\kappa}$, we can partition $\nset_i$ into four sets of demand pairs, each of which is $r$-split, for some integer $r$.

Fix some $1\leq i\leq \ceil{\log \kappa}$, and assume that $\tset=\set{v_1,\ldots,v_{2\kappa}}$, where the vertices are indexed in the circular order of their appearance on $C$. We let $A$ contain all vertices $v_j$, where $j=1$ modulo $2^{i-1}$. For convenience, we denote $A=\set{a_1,\ldots,a_z}$, and we assume that the vertices of $A$ appear in this circular order on $C$, with $a_1=v_1$. If $|A|=1$, then $|\tset|\leq 2^{i-1}$, and so $\delta(s,t)\leq 2^{i-2} $ for all $(s,t)\in \mset$, and $\nset_i=\emptyset$. If $|A|=2$, then let $\beta,\beta'$ be the two segments of $C$ between $a_1$ and $a_2$, where $\beta$ contains $a_1$ but not $a_2$, and $\beta'$ contains $a_2$ but not $a_1$. Then for every pair $(s,t)\in \nset_i$, one of the two terminals lies on $\beta$ and the other on $\beta'$, and so $\nset_i$ is $1$-split. We assume from now on that $z\geq 3$.

For each $1\leq j\leq z-1$, let $\beta_j$ be the segment of $C$ from $a_j$ and $a_{j+1}$, as we traverse $C$ in the counter-clock-wise order, so that $\beta_j$ includes $a_j$ but excludes $a_{j+1}$. We let $\beta_{z}$ be the segment of $C$ from $a_{z}$ to $a_1$, as we traverse $C$ in the counter-clock-wise order, so that $\beta_{z}$ includes $a_{z}$, but not $a_1$. 
Then for every segment $\beta_j$ with $1\leq j<z$, $|\beta_j\cap \tset|=2^{i-1}$, while $|\beta_z\cap \tset| \leq 2^{i-1}$.
Notice that for every demand pair $(s,t)\in \nset_i$, one of the following must happen: either  (i) $s\in \beta_j$, $t\in \beta_{j+1}$ for some $1\leq j\leq z$, where we treat $z+1$ as $1$; or (ii) $s\in \beta_z$, $t\in \beta_2$; or (iii) $s\in \beta_{z-1}$, $t\in \beta_1$.

We are now ready to partition $\nset_i$ into four subsets. The first subset, $\nset_i^1$, contains all pairs $(s,t)\in \nset_i$ with $s\in \beta_{z-1}$, $t\in \beta_1$. The second set, $\nset_i^2$, contains all pairs $(s,t)\in \nset_i$, with  $s\in \beta_z$, $t\in \beta_1\cup \beta_2$. It is immediate to verify that each of these two sets is $1$-split. The third set, $\nset_i^3$, contains all pairs $(s,t)\in \nset_i\setminus(\nset_i^1\cup \nset_i^2)$, where $s\in \beta_j$, for an odd index $1\leq j<z$. This set is $\floor{z/2}$-split, with the segments $\beta_1,\ldots,\beta_{z}$ giving the splitting. The last set, $\nset_i^4$, contains all pairs $(s,t)\in \nset_i\setminus(\nset_i^1\cup \nset_i^2\cup \nset^3_i)$. This set is similarly $\floor{z/2}$-split, since for all $(s,t)\in \nset_i^4$, $s\in \beta_j$ for an even index $1\leq j<z$.
Overall, we obtain a partition of $\mset$ into $4\ceil{\log \kappa}$ sets, each of which is $r$-split for some integer $r$.

\subsection{Proof of Lemma~\ref{lem:reroute-montone}}

Consider any path $P\in \pset$.  A sub-path $Q$ of $P$ is called a bump, if the two endpoints of $Q$ lie on $C$, and all the intermediate vertices of $Q$ do not lie on $C$ (notice that $Q$ may be simply an edge of $C$). Since path $P$ is simple, a bump $Q$ cannot contain $s$. 

We now define a shadow of a bump $Q$. If $Q$ is an edge of $C$, then the shadow of $Q$ is $Q$ itself.  Assume now that $Q$ is not an edge of $C$. Let $u, v \in V(C)$ be the two endpoints of $Q$, and let $\sigma,\sigma'$ be the two segments of $C$ with endpoints $u$ and $v$. 
Let $C_1$ be the union of $\sigma$ and $Q$, and $C_2$ the union of $\sigma'$ and $Q$. Then one of the two corresponding discs, $D(C_1)$ and $D(C_2)$ contains $s$ - we assume that it is the latter disc. We then let $\sigma$ be the \emph{shadow} of $Q$ on $C$ (see Figure~\ref{fig: shadow}).

\begin{figure}[h]
 \centering
\scalebox{0.5}{\includegraphics{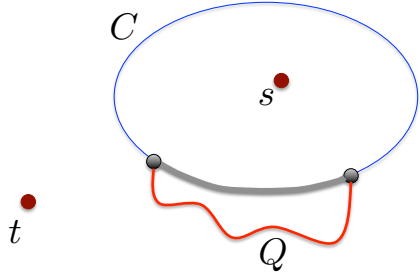}}
\caption{A bump $Q$ and its shadow\label{fig: shadow}}
\end{figure}

We note that all inner points on the image of $Q$ must lie outside $\dnot(C)$, as otherwise, we can find a cycle $C'$ in $H$ with $D(C')$ containing $s$ and $D(C')\subsetneq D(C)$, contradicting the fact that $C=\mincycle(H,s)$. We obtain the following two simple observations.

\begin{observation}\label{obs: escape C}
Let $P\in \pset$ be any path, and let $R$ be the longest segment of $P$, starting from $s$ and terminating at a vertex of $C$, such that $R$ does not contain any vertex of $C$ as an inner vertex. Let $v\in V(C)$ be the endpoint of $R$, and let $R'$ be the sub-path of $P$ from $v$ to $t$. Then  every point $p$ on the image of $R'$ lies outside $\dnot(C)$.
\end{observation}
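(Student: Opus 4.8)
The plan is to argue by contradiction, using the structural fact established immediately above the observation: for any bump $Q$ of $P$, all internal points of the image of $Q$ lie outside $\dnot(C)$ (this is where $C=\mincycle(H,s)$ was used). Granting that, the observation is essentially a consequence of the Jordan curve theorem together with the fact that $t\notin D(C)$.

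First I would suppose, for contradiction, that some point $p$ on the image of $R'$ lies in $\dnot(C)$. Traversing $R'$ from $v$ to $t$, I would let $a$ be the last vertex of $R'$ lying on $C$ that occurs at or before $p$, and $b$ the first vertex of $R'$ lying on $C$ that occurs at or after $p$. The vertex $a$ exists because $v\in V(C)$ is the first vertex of $R'$. The existence of $b$ is the point that needs care: since $p\in\dnot(C)$ while $t\notin D(C)$, the portion of $R'$ from $p$ to $t$ is a curve joining an interior point of $C$ to a point outside $D(C)$, hence it must meet $C$; because $H$ is a plane graph and $C$ is a subgraph of it, the image of the path $P$ can meet the image of $C$ only along vertices or edges of $C$ (distinct edges of a plane graph are internally disjoint), so this crossing forces a vertex of $R'$ that lies on $C$ and occurs after $p$. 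Thus $b$ is well defined, and since vertices of $C$ are not in $\dnot(C)$ whereas $p$ is, the point $p$ lies strictly between $a$ and $b$ along $R'$.

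Then I would observe that the sub-path $Q$ of $R'$ (hence of $P$) between $a$ and $b$ has both endpoints on $C$ and, by the choice of $a$ and $b$, no internal vertex on $C$; therefore $Q$ is a bump of $P$. If $Q$ is an edge of $C$, its image lies on $C$, contradicting the fact that $p$ lies on it and $p\in\dnot(C)$. Otherwise $Q$ is a genuine bump, and the fact recalled above guarantees that no internal point of its image lies in $\dnot(C)$; but $p$ is exactly such an internal point, a contradiction. Either way the assumption fails, and every point of the image of $R'$ lies outside $\dnot(C)$.

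The only subtle step is the existence of $b$ — equivalently, the assertion that after entering $\dnot(C)$ the path $R'$ can only escape through a vertex of $C$, rather than meeting $C$ at a non-vertex point or running along it. As indicated above, this is resolved purely by planarity of $H$, and the rest of the argument is routine; I do not anticipate any further obstacle.
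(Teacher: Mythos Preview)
Your proposal is correct and follows the same route as the paper: assume some point $p\in\dnot(C)$ lies on $R'$, locate the bump of $P$ containing $p$, and invoke the previously established fact that no inner point of a bump lies in $\dnot(C)$. The paper compresses this into a single sentence, whereas you spell out the construction of the bracketing vertices $a,b$ and justify the existence of $b$ via the Jordan curve theorem; this extra care is fine but not a genuinely different argument.
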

\begin{proof}
If any such point $p$ lies in $\dnot(C)$, then for some bump $Q$ of $P$, some inner point on the image of $Q$ lies in $\dnot(C)$, and this is impossible, as observed above.
\end{proof}

\begin{observation}\label{obs:shadow-node-disjoint}
Let $P,P'\in \pset$ be two distinct paths, let $Q$ be a bump on $P$, and let $Q'$ be a bump on $P'$. Then the shadows of $Q$ and $Q'$ are node-disjoint. 
\end{observation}
\begin{proof}
Let $\sigma$ be the shadow of $Q$, and let $a,b$ be its endpoints. Let $\sigma'$ be the shadow of $Q'$, and let $a',b'$ be its endpoints. Since the inner points of the images of both  $Q$ and $Q'$ lie outside $\dnot(C)$, and their endpoints are all distinct, if $\sigma$ and $\sigma'$ are not disjoint, then either $\sigma\subsetneq \sigma'$, or $\sigma'\subsetneq \sigma$. Assume without loss of generality that the latter is true. Let $A$ be the disc whose boundary is $\sigma\cup Q$. Then $t\not\in A$, but some vertex of $P'$ lies in $A$. Let $R'$ be the longest sub-path of $P'$ starting from $s$ and terminating at a vertex of $C$, such that $R'$ is internally disjoint from $C$. Then there is a vertex $v'\in V(Q')$, that does not lie on $R'$. Let $R''$ be the sub-path of $P'$ from $v'$ to $t$. Then path $R''$ originates in disc $A$ and terminates outside $A$. But it can only leave $A$ by traveling inside $\dnot(C)$, which is impossible from Observation~\ref{obs: escape C}.
\end{proof}

We now re-route each path $P\in \pset$, as follows. Let $u$ and $v$ be the first and the last vertex of $P$ that belong to $C$, respectively. Let $\tilde P$ be  the union of the shadows of all bumps of $P$. Then $\tilde P$ is a path, contained in $C$, that contains $u$ and $v$. Let $\tilde P'$ be a simple sub-path of $\tilde P$ connecting $u$ to $v$, and let $P'$ be obtained by concatenating the segment of $P$ from $s$ to $u$, $\tilde P'$, and the segment of $P$ from $v$ to $t$. Notice that path $P'$ is monotone with respect to $C$, and all paths in the resulting set $\pset'=\set{P'\mid P\in \pset}$ are internally node-disjoint by Observation~\ref{obs:shadow-node-disjoint}. 

\subsection{Proof of Theorem~\ref{thm: monotonicity for shells}}

We perform $r$ iterations. At the beginning of the $h$th iteration, for $1\leq h\leq r$, we assume that we are given a set $\pset_{h-1}$ of $\kappa$ node-disjoint paths in $H'$, connecting the vertices of $A$ to the vertices of $B$, so that the paths in $\pset_{h-1}$ are internally disjoint from $V(C)\cup V(Y)$, and they are monotone with respect to $Z_1,\ldots,Z_{h-1}$. The output of iteration $h$ is a set $\pset_h$ of $\kappa$ node-disjoint paths in $H'$,  connecting the vertices of $A$ to the vertices of $B$, so that the paths in $\pset_{h}$ are internally disjoint from $V(C)\cup V(Y)$, and they are monotone with respect to $Z_1,\ldots,Z_h$. The output of the algorithm is the set $\pset_r$ of paths computed in the last iteration. For simplicity of notation, we denote $Z_0=C$, even though $C$ is not a cycle of $H$.

We start with $\pset_0=\pset$. It is immediate to see that this is a valid input to iteration $1$. Assume now that we are given a set $\pset_{h-1}$ of paths, which is a valid input to iteration $h$. The iteration is then executed as follows. Let $H''$ be the graph obtained by starting with $H''=\left (\bigcup_{h=0}^rV(Z_h)\right )\cup \pset_{h-1}$, and then contracting all vertices lying in $D(Z_{h-1})$ into a source vertex $s'$, and all vertices of $B$ into a destination vertex $t'$. Since $B\subseteq V(Y)$, where $Y$ is a connected sub-graph of $G$, $Y\cap D(Z_r)=\emptyset$, and the paths in $\pset_{h-1}$ are internally disjoint from $V(Y)$, graph $H''$ is a planar graph. Moreover, $Z_h=\mincycle(H'',s')$ from the definition of tight concentric cycles. We consider the drawing of $H''$ in the plane where $t'$ is incident on the outer face. It is easy to see that $Z_h=\mincycle(H'',s')$ still holds with respect to this new drawing of $H''$.

From Lemma~\ref{lem:reroute-montone}, there is a set $\qset$ of $\kappa$ internally node-disjoint paths in $H''$, connecting $s'$ to $t'$, that are monotone with respect to $Z_h$.
In order to construct the set $\pset_h$ of paths, let $P\in \pset_{h-1}$ be any path, and let $v_P$ be the last vertex of $P$ lying on $Z_{h-1}$. Let $P'$ be the sub-path of $P$ starting from $v_P$ and terminating at the endpoint of $P$ lying in $B$. 
Notice that from the monotonicity of the paths in $\pset_{h-1}$ with respect to $Z_1,\ldots,Z_{h-1}$, there are exactly $\kappa$ edges leaving the vertex $s'$ in $H''$, each edge lying on a distinct path $P\in \pset_{h-1}$. If edge $e$ is leaving $s'$ in $H''$, and $e$ lies on $P$, then it is incident on $v_P$, and so exactly one path of $\qset$ originates at $v_P$. We denote this path by $Q_P$. For each path $P\in \pset_{h-1}$, we let $P^*$ be the path obtained from $P$ by replacing $P'$ with $Q_P$. Notice that, since the paths in $\pset_{h-1}$ are internally disjoint from $B$, exactly $\kappa$ edges are incident on $t'$ in graph $H'$, each of which is incident on a distinct vertex of $B$ in $H$. It is now easy to verify that the set $\pset_h$ contains $\kappa$ node-disjoint paths, connecting the vertices of $A$ to the vertices of $B$, and the paths in $\pset_h$ are internally disjoint from $V(C)\cup V(Y)$ and monotone with respect to $Z_1,\ldots,Z_h$.
We return $\pset_r$ as the output of the algorithm.

\section{Proofs Omitted from Section~\ref{sec: disc and cylinder}}\label{sec: proofs for disc and cylinder}
\subsection{Proof of Theorem~\ref{thm: approximate DPSP}}


The following definition and observation allow us to slightly relax the problem. 

\begin{definition} Let $c\geq 1$ be an integer, and let $\mset'\subseteq \mset$ be a subset of the demand pairs. Given a constraint $K=(i,a,b,w)$ of type $1$ or $2$, we say that $\mset'$ violates $K$ by a factor of at most $c$, iff the number of  the demand pairs $(s,t)\in \mset'$ with either $s$ or $t$ lying in $(a,b)$ is at most $cw$. Likewise, given a constraint $K=(i,a,b,w)$ of type $3$ or $4$, we say that $\mset'$ violates $K$ by a factor of at most $c$ iff the number of the demand pairs in $\mset'$ crossing $K$ is at most $c w$.
\end{definition}


\begin{observation}\label{obs: violated to satisfied} There is an efficient algorithm, that, given a \DPSP instance $(\sigma,\sigma',\mset,\kset)$ and a non-crossing set $\mset'\subseteq\mset$ of demand pairs that violates every constraint in $\kset$ by a factor of at most $c$, for any integer $c>1$, computes a subset $\mset''\subseteq\mset'$ of at least $|\mset'|/c$ demand pairs, satisfying all constraints in $\kset$.
\end{observation}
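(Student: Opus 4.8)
\textbf{Proof plan for Observation~\ref{obs: violated to satisfied}.} The plan is to reduce the constraint-satisfaction problem to a scheduling/fractional-relaxation argument. First I would observe that the four families of constraints $\kset_1,\dots,\kset_4$ all have the same structural shape: each constraint $K$ is associated with a certain ``zone'' (an interval of $\sigma$, an interval of $\sigma'$, or a ``corner'' consisting of a prefix of $\sigma$ together with a suffix of $\sigma'$, or a suffix of $\sigma$ together with a prefix of $\sigma'$), and a demand pair $(s,t)$ is ``charged'' to $K$ precisely when it lies in that zone. The hypothesis is that for every $K$, the number of pairs of $\mset'$ charged to $K$ is at most $c\,w(K)$, and we want to delete at most a $(1-1/c)$-fraction of $\mset'$ so that no zone is over-loaded. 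The key conceptual point is that this is exactly the kind of statement that follows from a deterministic rounding (bucketing) argument, provided we can linearly order $\mset'$ so that each zone corresponds to a \emph{contiguous} block of pairs in that order; then deleting all but every $c$-th pair within appropriate blocks brings every load down by a factor $c$ while keeping at least $|\mset'|/c$ pairs.

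The main step, then, is to exhibit such a linear order. Here I would use non-crossingness of $\mset'$ crucially: if $\mset'=\{(s_1,t_1),\dots,(s_m,t_m)\}$ is non-crossing, then sorting the pairs by the position of $s_i$ along $\sigma$ also sorts them by the position of $t_i$ along $\sigma'$ (up to reversal of direction on $\sigma'$), since any inversion would produce a crossing pair of one of the forms (iii) or (iv) in the crossing definition. Index the pairs $p_1,\dots,p_m$ in this common order. Now for a type-1 constraint $K=(1,a,b,w)$ the set of pairs of $\mset'$ whose source lies in the interval $(a,b)$ of $\sigma$ is a contiguous block $\{p_\ell,\dots,p_r\}$ of this ordering; the same is true of type-2 constraints (contiguous in the $\sigma'$-order, hence in our order). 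For type-3 constraints $K=(3,a,b,w)$, the pairs charged are those with $s\preceq a$ \emph{and} $b\preceq t$; by monotonicity of the order the set $\{i : s_i\preceq a\}$ is a prefix and $\{i : b\preceq t_i\}$ is a suffix (or prefix, depending on orientation), and because of non-crossingness their intersection is again contiguous — this is the one place that needs a short sanity check, and it is the step I expect to be the main obstacle, since the two monotonicities could a priori run in opposite directions. The resolution is that orientation of $\sigma'$ is forced once we fix the order to be $\sigma$-monotone: in that orientation ``$b\preceq t$'' is automatically a tail condition relative to ``$s\preceq a$'' for non-crossing pairs, so the intersection is a (possibly empty) contiguous block. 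Type-4 constraints are symmetric.

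Having established that every constraint's charged set is a contiguous block $B_K=\{p_{\ell_K},\dots,p_{r_K}\}$ with $|B_K\cap \mset'|\le c\,w(K)$, I would finish with the following deterministic selection. Partition $\{1,\dots,m\}$ into consecutive groups of size $c$ (the last group possibly smaller) and keep exactly one pair from each group — say the first. Call the resulting set $\mset''$; clearly $|\mset''|\ge \lceil m/c\rceil\ge |\mset'|/c$, and $\mset''$ is still non-crossing as a sub-collection of $\mset'$. For any constraint $K$, the number of pairs of $\mset''$ inside $B_K$ is at most $\lceil |B_K|/c\rceil$; since $|B_K|\le c\,w(K)$ and $w(K)\ge 1$ is an integer, this is at most $w(K)$. (A one-line check: $\lceil |B_K|/c\rceil \le \lceil c\,w(K)/c\rceil = w(K)$.) Hence $\mset''$ satisfies every constraint in $\kset$, completing the proof. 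The only care needed is the boundary arithmetic in the last group and the verification that picking the representative consistently (always the first element of each size-$c$ block) does indeed guarantee at most one representative in every window of $c$ consecutive indices, which is immediate from the construction. I would also remark that the whole procedure is clearly efficient: sorting, identifying the blocks $B_K$ for each constraint, and the deterministic pruning are all polynomial in $|\mset|+|\kset|$.
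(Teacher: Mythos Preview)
Your proposal is correct and is essentially the same argument as the paper's proof: order the non-crossing set $\mset'$ so that both the $s_i$ and the $t_i$ increase simultaneously along $\sigma$ and $\sigma'$, observe that the pairs charged to any constraint form a contiguous block in this order (the paper phrases this for type-3/4 as ``the demand pairs of $\mset_K$ appear consecutively''), and then keep every $c$-th pair. The paper simply writes $\mset''=\{(s_{j_\ell},t_{j_\ell}) : \ell\equiv 1\pmod c\}$ and invokes the same $\lceil cw/c\rceil=w$ count you give; your extra discussion of orientation is harmless but unnecessary, since by the definition of non-crossing the $\sigma$-order and the $\sigma'$-order on $\mset'$ automatically agree.
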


\begin{proof} Assume that  $\mset'=\set{(s_{j_1},t_{j_1}),\ldots,(s_{j_z},t_{j_z})}$, where $s_{j_1}\prec\cdots\prec s_{j_z}$ and $t_{j_1}\prec\cdots\prec t_{j_z}$. Let $\mset''=\set{(s_{j_{\ell}},t_{j_{\ell}})\mid 1\leq \ell\leq z\mbox{ and } \ell\equiv 1\mod c}$. Clearly, $\mset''\subseteq \mset'$ and $|\mset''|\geq |\mset'|/c$. We now claim that all constraints in $\kset$ are satisfied by $\mset''$.

Indeed, consider any constraint $K=(i,a,b,w)\in \kset$. Assume first that $K$ is a type-1 constraint. Then at most $cw$ demand pairs in $\mset'$ have a source vertex in the interval $(a,b)$. It is easy to see that the number of the demand pairs of $\mset''$ that have a source vertex in the interval $(a,b)$ is at most $w$. If $K$ is a type-2 constraint, the argument is similar. Assume now that $K$ is a type-3 constraint (the case where it is a type-4 constraint is symmetric). Let $\mset_K\subseteq \mset'$ be the set of all demand pairs of $\mset'$ crossing $K$. The key observation is that the demand pairs of $\mset_K$ appear consecutively in the ordered set $\mset'$. Since $|\mset_K|\leq cw$, it is easy to see that $|\mset''\cap \mset_K|\leq w$, and so set $\mset''$ satisfies the constraint $K$. 
\end{proof}


Let $r=\lceil\log |\mset|\rceil$, and for $1\leq j\leq r+1$, set $W_j=2^{j}$. We partition the constraints of $\kset$ into $r$ levels, where for $1\le j\le r$, the $j$th level contains all constraints $(i,a,b,w)$ with $W_{j-1}\le w<2\cdot W_{j-1}=W_j$. For all $1\le i\le 4$ and $1\le j\le r$, we denote by $\sset_j^{(i)}$ the set of all type-$i$ constraints that belong to level $j$. Let $\sset_j=\bigcup_{i=1}^4\sset_j^{(i)}$ be the set of all level-$j$ constraints.

Our algorithm employs dynamic programming. It is convenient to view the algorithm as constructing $r$ dynamic programming tables - one for each level. Consider some level $1\leq j\leq r$. Let $I=(x,y)\subseteq \sigma$, $I'=(x',y')\subseteq \sigma'$ be a pair of intervals. We say that it is a \emph{good level-$j$} pair if the following conditions hold for every level-$j$ constraint $K=(i,a,b,w)\in \sset_j$:

\begin{properties}{C}
\item if $K$ is a type-1 constraint, then $I$ is not contained in $(a,b)$;
\item if $K$ is a type-2 constraint, then $I'$ is not contained in $(a,b)$;
\item if $K$ is a type-3 constraint, then either $I$ is not contained in $L_a$, or $I'$ is not contained in $R_{b}$; and
\item if $K$ is a type-4 constraint, then either $I$ is not contained in $R_a$, or $I'$ is not contained in $L_{b}$.
\end{properties}

For each $1\le j\le r$, let $\pset_j$ denote the set of all good level-$j$ pairs of intervals. The level-$j$ dynamic programming table, $\Pi_j$ contains an entry $\Pi_j[I,I']$ for every good level-$j$ pair $(I,I')\in \pset_j$ of intervals. The entry will either remain empty, or it will contain a collection of non-crossing demand pairs from $\mset$  of cardinality exactly $W_j$, whose sources lie in $I$ and destinations lie in $I'$.

We now describe an efficient algorithm that computes the entries of the dynamic programming tables. We start with $j=1$.  For every pair $(I,I')\in\pset_1$ of good level-$1$ intervals, if there are two distinct non-crossing demand pairs $(s,t),(s',t')\in\mset$ with $s,s'\in I$ and $t,t'\in I'$, then we set $\Pi_1[I,I']=\set{(s,t),(s',t')}$. Otherwise, we set $\Pi_1[I,I']=\emptyset$. 

Assume now that we have constructed the tables for levels $1,\ldots,j-1$, and consider the level-$j$ table, for some $1< j\leq r$, and its entry $\Pi_j(I,I')$ for some good level-$j$ pair $(I, I')\in\pset_j$ of intervals, where $I=(x,y)$ and $I'=(x',y')$. If there exist two pairs of vertices $u,v\in I$ and $u',v'\in I'$ such that $u\prec v$, $u'\prec v'$, and both $\Pi_{j-1}[(x,u),(x',u')]$ and $\Pi_{j-1}[(v,y),(v',y')]$ are non-empty, then we set $\Pi_j[I,I']=\Pi_{j-1}[(x,u),(x',u')]\cup \Pi_{j-1}[(v,y),(v',y')]$. Otherwise, we set $\Pi_j[I,I']=\emptyset$. This completes the description of the algorithm that computes the entries of the dynamic programming tables. We now proceed to analyze it, starting with the following easy observations.

\begin{observation}\label{obs: dp table cardinality} For all $1\le j\le r$ and $(I,I')\in\pset_j$, either $|\Pi_j[I,I']|=0$ or $|\Pi_j[I,I']|=W_j$.
\end{observation}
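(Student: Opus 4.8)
\textbf{Proof plan for Observation~\ref{obs: dp table cardinality}.}

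The plan is to prove this by induction on the level $j$, following exactly the recursive structure of the dynamic programming algorithm just described. The base case $j=1$ is immediate from the definition: for every good level-$1$ pair $(I,I')\in\pset_1$, the algorithm either sets $\Pi_1[I,I']=\set{(s,t),(s',t')}$ for two distinct non-crossing demand pairs, in which case $|\Pi_1[I,I']|=2=W_1$, or it sets $\Pi_1[I,I']=\emptyset$, in which case the cardinality is $0$.

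For the inductive step, I would assume that for every $(I,I')\in\pset_{j-1}$ we have $|\Pi_{j-1}[I,I']|\in\set{0,W_{j-1}}$, and consider an arbitrary entry $\Pi_j[I,I']$ with $(I,I')\in\pset_j$, $I=(x,y)$, $I'=(x',y')$. By construction, either $\Pi_j[I,I']=\emptyset$ (cardinality $0$), or there exist $u,v\in I$, $u',v'\in I'$ with $u\prec v$, $u'\prec v'$, such that $\Pi_j[I,I']=\Pi_{j-1}[(x,u),(x',u')]\cup \Pi_{j-1}[(v,y),(v',y')]$, where both of the latter entries are non-empty. By the inductive hypothesis, each of these two entries has cardinality exactly $W_{j-1}$. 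The key point to verify is that the union is disjoint: the sources of the demand pairs stored in $\Pi_{j-1}[(x,u),(x',u')]$ all lie in the interval $(x,u)\subseteq\sigma$, while the sources of those in $\Pi_{j-1}[(v,y),(v',y')]$ all lie in $(v,y)\subseteq\sigma$; since $u\prec v$, these two sub-intervals of $\sigma$ are internally disjoint, and no demand pair can have its source in both. (One should note that the intervals might share the single endpoint vertex when $u=v$, but the strict inequality $u\prec v$ rules this out, so the two source-sets are genuinely disjoint, and hence so are the two collections of demand pairs.) Therefore $|\Pi_j[I,I']|=|\Pi_{j-1}[(x,u),(x',u')]|+|\Pi_{j-1}[(v,y),(v',y')]|=W_{j-1}+W_{j-1}=2W_{j-1}=W_j$, completing the induction.

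I do not expect any serious obstacle here; the statement is essentially a bookkeeping fact about the recursion. The only mild subtlety is making sure the union in the recursive step is a disjoint union so that cardinalities add rather than merely bound from above — this is handled by the strictness of $u\prec v$ (and $u'\prec v'$) forcing the source sub-intervals (and destination sub-intervals) to be internally disjoint. One should also implicitly rely on the fact that the recursion only ever combines entries from level $j-1$, so that the inductive hypothesis applies cleanly to both sub-entries.
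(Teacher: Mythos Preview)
Your proof is correct and follows essentially the same inductive argument as the paper's own proof. You are in fact slightly more careful than the paper, which silently assumes the union in the recursive step is disjoint; your explicit verification via the strict inequality $u\prec v$ fills that small gap.
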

\begin{proof}
The proof is by induction on $j$.
Clearly, the claim holds for $j=1$ and all $(I,I')\in\pset_1$ by our construction. Consider now some $j>1$, and assume that the claim holds for all values $j'<j$. Consider some entry $\Pi_j(I,I')$ of the level-$j$ dynamic programming table. Our algorithm either sets $\Pi_j[I,I']=\emptyset$ or $\Pi_j[I,I']=\Pi_{j-1}[(x,u),(x',u')]\cup \Pi_{j-1}[(v,y),(v',y')]$. The latter only happens when both $\Pi_{j-1}[(x,u),(x',u')], \Pi_{j-1}[(v,y),(v',y')]$ are non-empty, and so by the induction hypothesis, $|\Pi_{j-1}[(x,u),(x',u')]|,|\Pi_{j-1}[(v,y),(v',y')]|=W_{j-1}$, giving us $|\Pi_j[I,I']|=2\cdot W_{j-1}=W_j$.
\end{proof}

\begin{observation}\label{obs: dp table non-crossing} For all $1\le j\le r$ and $(I,I')\in\pset_j$, $\Pi_j[I,I']$ is a non-crossing subset of the demand pairs in $\mset$, where every $(s,t)\in T_j[I,I']$ has $s\in I$ and $t\in I'$.
\end{observation}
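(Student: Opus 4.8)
\textbf{Proof plan for Observation~\ref{obs: dp table non-crossing}.}
The plan is to prove the statement by induction on the level $j$, mirroring the structure of the proof of Observation~\ref{obs: dp table cardinality}. The base case $j=1$ is essentially immediate from the construction: for a good level-$1$ pair $(I,I')$, the algorithm only sets $\Pi_1[I,I']=\set{(s,t),(s',t')}$ when it has explicitly found two distinct non-crossing demand pairs $(s,t),(s',t')\in\mset$ with sources in $I$ and destinations in $I'$; otherwise the entry is empty, and the empty set trivially satisfies the claim. So the only content is the inductive step.

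For the inductive step, I would fix $j>1$ and a good level-$j$ pair $(I,I')\in\pset_j$ with $I=(x,y)$ and $I'=(x',y')$, and assume the claim holds for level $j-1$. If the algorithm sets $\Pi_j[I,I']=\emptyset$ there is nothing to prove, so assume it sets $\Pi_j[I,I']=\Pi_{j-1}[(x,u),(x',u')]\cup \Pi_{j-1}[(v,y),(v',y')]$ for some $u,v\in I$, $u',v'\in I'$ with $u\prec v$ and $u'\prec v'$. First, membership: by the induction hypothesis, every demand pair $(s,t)\in\Pi_{j-1}[(x,u),(x',u')]$ has $s\in(x,u)\subseteq I$ and $t\in(x',u')\subseteq I'$, and similarly every pair in $\Pi_{j-1}[(v,y),(v',y')]$ has source in $(v,y)\subseteq I$ and destination in $(v',y')\subseteq I'$. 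Hence every pair in the union has source in $I$ and destination in $I'$.

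It remains to verify that the union is non-crossing. By the induction hypothesis each of the two constituent sets is individually non-crossing, so the only danger is a crossing pair where one demand pair $(s,t)$ comes from $\Pi_{j-1}[(x,u),(x',u')]$ and the other $(s',t')$ from $\Pi_{j-1}[(v,y),(v',y')]$. Here I would use that the two source intervals $(x,u)$ and $(v,y)$ are disjoint on $\sigma$ and appear in this order (since $u\prec v$), and likewise the two destination intervals $(x',u')$ and $(v',y')$ are disjoint on $\sigma'$ and appear in this order (since $u'\prec v'$). Therefore $s\prec v\preceq s'$ and $t\prec v'\preceq t'$, so $s\prec s'$ and $t\prec t'$; consulting the definition of crossing for \DPSP, a pair $(s,t),(s',t')$ with $s\prec s'$ crosses only when $s=s'$, $t=t'$, or $t'\prec t$, none of which can occur here (the intervals are half-open and disjoint, so $s\neq s'$ and $t\neq t'$, and $t\prec t'$ rules out $t'\prec t$). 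Hence the two pairs do not cross, which completes the induction. The only mildly delicate point — and thus the place I would be most careful — is the bookkeeping of the half-open intervals to guarantee genuine disjointness (so that equality of endpoints cannot sneak in a degenerate crossing), but this is a routine consequence of how the intervals $(x,u)$, $(v,y)$ and $(x',u')$, $(v',y')$ are formed with $u\prec v$, $u'\prec v'$.
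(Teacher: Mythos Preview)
Your proposal is correct and follows essentially the same approach as the paper's proof: induction on $j$, with the base case immediate from construction and the inductive step combining the two level-$(j-1)$ entries via the disjointness (from $u\prec v$, $u'\prec v'$) of their source and destination intervals. The only minor discrepancy is that the paper treats the intervals $(x,u),(v,y)$ etc.\ as closed rather than half-open, but since $u\prec v$ and $u'\prec v'$ are strict this makes no difference to the argument.
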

\begin{proof}
The proof is again by induction on $j$. 
Clearly, the claim holds for $j=1$ and all $(I,I')\in\pset_1$ from our construction. Consider now some level $j>1$ and assume that the claim holds for levels $1,\ldots,(j-1)$. Let $(I,I')\in\pset_j$ be a good level-$j$ pair of intervals, with $I=(x,y)$ and $I'=(x',y')$. The claim trivially holds when $\Pi_j[I,I']=\emptyset$, so we assume that $\Pi_j[I,I']=\Pi_{j-1}[(x,u),(x',u')]\cup \Pi_{j-1}[(v,y),(v',y')]$, where $x\preceq u\prec v\preceq y$ and $x'\preceq u'\prec v'\preceq y'$. From the induction hypothesis, no two demand pairs from $\Pi_{j-1}[(x,u),(x',u')]$ can cross, and the same holds for the demand pairs of $\Pi_{j-1}[(v,y),(v',y')]$. Since every demand pair $(s,t)\in \Pi_{j-1}[(x,u),(x',u')]$ has $s\in (x,u), t\in (x',u')$, and every demand pair $(s',t')\in \Pi_{j-1}[(v,y),(v',y')]$ has $s'\in (v,y),t'\in (v',y')$,  it is immediate to verify that no pair of demands $(s,t)\in \Pi_{j-1}[(x,u),(x',u')]$ and $(s',t')\in \Pi_{j-1}[(v,y),(v',y')]$ can cross, and for every demand pair $(s,t)\in T_j[I,I']$, $s\in I$ and $t\in I'$ holds.
\end{proof}

\begin{observation}\label{obs: dp table constraint violation} For all $1\le j\le r$ and $(I,I')\in\pset_j$, the set $\Pi_j[I,I']$ of demand pairs violates every constraint in $\kset$ by at most factor $4$.
\end{observation}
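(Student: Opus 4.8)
\textbf{Proof plan for Observation~\ref{obs: dp table constraint violation}.} The plan is to prove the statement by induction on the level $j$, mirroring the structure of the two preceding observations. For the base case $j=1$: any entry $\Pi_1[I,I']$ contains at most $2=W_1$ demand pairs whose sources lie in $I$ and destinations lie in $I'$, and the pair $(I,I')$ is a good level-$1$ pair of intervals. I would argue that every constraint in $\kset$ is violated by $\Pi_1[I,I']$ by at most factor $4$. The constraints in levels $\geq 1$ all have $w\geq W_0=1$, so a set of at most $2$ demand pairs trivially violates them by at most factor $2\leq 4$ whenever the relevant containments force all the pairs to count against the constraint; and when $(I,I')$ is good for level $j$, the defining properties (C1)--(C4) of a good pair already prevent \emph{all} pairs counted in $\Pi_1[I,I']$ from simultaneously violating a level-$1$ constraint. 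The point to be careful about is constraints of \emph{lower or equal} weight-level: since $\Pi_1$ only ever stores $2$ pairs and every constraint has $w\geq 1$, the factor-$4$ bound is immediate. Thus the base case needs essentially a case check over the four constraint types combined with the good-pair conditions.

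For the inductive step, fix $j>1$, assume the claim for all levels $<j$, and consider a nonempty entry $\Pi_j[I,I'] = \Pi_{j-1}[(x,u),(x',u')]\cup \Pi_{j-1}[(v,y),(v',y')]$ where $I=(x,y)$, $I'=(x',y')$, $x\preceq u\prec v\preceq y$, $x'\preceq u'\prec v'\preceq y'$. I would split the constraints of $\kset$ into those of level $<j$, those of level $j$, and those of level $>j$. For a constraint $K=(i,a,b,w)$ of level $>j$, we have $w\geq W_j = |\Pi_j[I,I']|$ (using Observation~\ref{obs: dp table cardinality}), so even if every pair in $\Pi_j[I,I']$ counts against $K$ the violation factor is at most $|\Pi_j[I,I']|/w\leq 1$. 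For a constraint $K$ of level $<j$, I would observe that the two halves $(x,u),(x',u')$ and $(v,y),(v',y')$ are good level-$(j-1)$ pairs (this is exactly what the DP recursion requires before it fills $\Pi_j[I,I']$), so by the induction hypothesis each of $\Pi_{j-1}[(x,u),(x',u')]$ and $\Pi_{j-1}[(v,y),(v',y')]$ violates $K$ by at most factor $4$ when considered separately. The subtlety is that $K$ has weight-level $\ell<j$, so its weight $w$ satisfies $w< W_j$, and each half contributes at most $4w$ bad pairs, for a total of at most $8w$ — which is only a factor-$8$ bound, not $4$. Resolving this is the heart of the argument: I would observe that a constraint of level $\ell<j$ has $w< W_\ell \le W_{j-1} = W_{j}/2$, and the two halves $\Pi_{j-1}[(x,u),(x',u')]$, $\Pi_{j-1}[(v,y),(v',y')]$ each have size exactly $W_{j-1}$ (Observation~\ref{obs: dp table cardinality}), but a single such half cannot in fact have all its pairs charged to $K$ once the \emph{good level-$(j-1)$} condition is invoked together with the fact that the pairs are sorted; more precisely, for a type-3 or type-4 constraint the bad pairs within each half form a contiguous block, and the total number of bad pairs in $\Pi_j[I,I']$ is then bounded by counting: at most $w$ from the "interior" of each half plus at most $w$ boundary pairs, giving $\le 4w$. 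For type-1 and type-2 constraints the bad pairs in the combined set again form a contiguous block of $\le 4w$ by the same argument as in Observation~\ref{obs: violated to satisfied}. Finally, for a constraint of level exactly $j$, the good level-$j$ condition on $(I,I')$ directly rules out the worst containments, and the constraint has $w\ge W_{j-1}$, so counting $\le 2W_{j-1}\le 2w\le 4w$ bad pairs suffices; I'd do the four-type case check here.

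\textbf{Main obstacle.} The genuinely delicate point is the level-$<j$ case: naively summing the factor-$4$ bounds from the two recursively combined halves gives factor $8$, not $4$, so the induction as literally stated does not go through by a one-line union bound. The fix must exploit structural information that the bad pairs charged to any fixed constraint occupy a \emph{contiguous} segment of the sorted demand-pair sequence $\Pi_j[I,I']$ (this is the same observation used in Observation~\ref{obs: violated to satisfied}: for type-3/4 constraints the crossing pairs are consecutive in the ordering, and for type-1/2 constraints the pairs with an endpoint in an interval are consecutive). Combining contiguity with the fact that within each half at most $4w$ pairs are bad and the "overlap" at the splitting point contributes only boundedly many extra pairs yields the $\le 4w$ bound for the union. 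I would therefore carry out the proof by first recording a lemma that in any nonempty entry, for any constraint $K\in\kset$, the set of demand pairs of the entry that are "charged" to $K$ is a contiguous block in the entry's canonical sorted order, then running the induction using this contiguity to upgrade the crude factor-$8$ union bound down to factor $4$.
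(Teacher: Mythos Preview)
Your inductive approach has a genuine gap that you correctly diagnose but do not actually resolve. The obstacle you name—naively combining the two halves gives factor $8$, not $4$—is real, and the contiguity lemma you propose does not by itself close it. Contiguity tells you the bad pairs form a block, but says nothing about the \emph{size} of that block: in your inductive setup each half could have its entire $4w$ bad, and then the union is a contiguous block of $8w$ bad pairs. Your sketched counting ``at most $w$ from the interior of each half plus at most $w$ boundary pairs'' is not justified by the induction hypothesis, which only gives $4w$ per half.

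The paper's proof is \emph{not} by induction on $j$. For a constraint $K$ of level $j'$, it unwraps the recursion \emph{all the way down} to level $j'$: the entry $\Pi_j[I,I']$ is the disjoint union of exactly $2^{j-j'}$ nonempty level-$j'$ entries, whose source intervals are pairwise disjoint and appear in order along $\sigma$ (and likewise on $\sigma'$). The key step is then a direct ``middle one is trapped'' argument: if three of these level-$j'$ source intervals meet $(a,b)$ (for a type-1 constraint), the middle one is contained in $(a,b)$, contradicting that it is a good level-$j'$ pair; hence at most two level-$j'$ sub-entries contribute bad pairs, giving at most $2W_{j'}=4W_{j'-1}\le 4w$. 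The same argument with $L_a$ and $R_b$ handles types 3 and 4. This uses the good-pair property at level $j'$, not at level $j-1$, which is exactly what your one-step induction cannot access.
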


\begin{proof}
Fix some $1\leq j\leq r$ and $(I,I')\in \pset_j$, and consider the corresponding table entry $\Pi_j[I,I']$. The claim holds trivially when $|\Pi_j[I,I']|=0$, so we assume that $|\Pi_j[I,I']|=W_j$. Consider some constraint $K=(i,a,b,w)\in\sset_{j'}^{(i)}$, for some level $1\leq j'\leq r$. if $j'\geq j$, then $w\geq W_{j-1}$ must hold, while $|\Pi_j[I,I']|=W_j$, so the constraint is violated by the factor of at most $4$.

Consider now the case where $j'<j$. Assume first that $i=1$. Note that $\Pi_j[I,I']$ is the union of exactly $2^{j-j'}$ non-empty level-$j'$ table entries, each of which contains a set of demand pairs of cardinality exactly $W_{j'}$. Let $\rset$ be the set of all these level-$j'$ table entries. We claim that  there are at most two table entries $\Pi_{j'}(\hat I,\hat I')\in \rset$ with $\hat I\cap (a,b)\neq\emptyset$. Indeed, assume for contradiction that there are  $3$ such distinct entries in $\rset$, say $\Pi_{j'}[I_1,I'_1],\Pi_{j'}[I_2,I'_2]$, and $\Pi_{j'}[I_3,I'_3]$, with $I_1\cap (a,b),I_2\cap (a,b),I_3\cap (a,b)\neq\emptyset$. Then at least one of the intervals $I_1,I_2,I_3$ must be contained in $(a,b)$, violating our condition for good level-$j'$ pairs of intervals. We conclude that the number of the demand pairs in $\Pi_j[I,I']$ with a source vertex in $(a,b)$ is bounded by $2\cdot W_{j'}=4\cdot W_{j'-1}\le 4\cdot w$. The proof for the case where $i=2$ is analogous.

Assume now that $K$ is a type-$3$ constraint. As before, $\Pi_j[I,I']$ is the union of exactly $2^{j-j'}$ non-empty level-$j'$ table entries, each of which stores a set of demand pairs of cardinality exactly $W_{j'}$. Let $\rset$ be the set of these level-$j'$ table entries. We claim that there are at most two entries $\Pi_{j'}(\hat I,\hat I')\in \rset$, with $\hat I\cap L_a\neq \emptyset$ and $\hat I'\cap R_b\neq \emptyset$.
Indeed, assume otherwise, and let $\Pi_{j'}[I_1,I'_1],\Pi_{j'}[I_2,I'_2],\Pi_{j'}[I_3,I'_3]$ be three distinct entries in $\rset$, such that for each $1\leq \ell\leq 3$, $I_{\ell}\cap L_a\neq \emptyset$ and $I'_{\ell}\cap R_b\neq \emptyset$. Assume that $I_1,I_2,I_3$ appear on $\sigma$ in this order, and recall that from our construction they are disjoint. Then it is easy to see that $I_2\subseteq L_a$ and $I'_2\subseteq R_b$ must hold, contradicting our definition of good level-$j'$ pairs of terminals. Therefore, there are at most two entries $\Pi_{j'}(\hat I,\hat I')\in \rset$, with $\hat I\cap L_a\neq \emptyset$ and $\hat I'\cap R_b\neq \emptyset$. Demand pairs participating in solutions corresponding to other entries in $\rset$ cannot cross $K$, and so the number of the demand paris crossing $K$ is at most $2\cdot W_{j'}=4\cdot W_{j'-1}\le 4\cdot w$. The case where $K$ is a type-$4$ constraint is treated similarly.
\end{proof}

From the discussion so far, every entry of every dynamic programming table contains a non-crossing set of demand pairs, that violates every constraint of $\kset$ by at most factor $4$. Let $\tilde {\mset}$ be the largest-cardinality set of demand pairs stored in any entry any of the tables, and let $\opt$ be the value of the optimal solution to the \DPSP problem instance. The following theorem is central to our analysis.

\begin{theorem} \label{thm: opt analysis} If $\opt\geq 2$, then $|\tilde {\mset}|\geq \opt/2$. \end{theorem}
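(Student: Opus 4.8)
The goal is to show that the dynamic programming algorithm finds a near-optimal solution, i.e. $|\tilde\mset|\ge \opt/2$. I would prove this by a direct recursive argument: take an optimal solution $\mset^*$ with $|\mset^*|=\opt\ge 2$, and show by induction on $j$ that for every ``window'' of $\mset^*$ consisting of $W_j$ consecutive demand pairs (in the common sorted order of sources on $\sigma$ and destinations on $\sigma'$), the corresponding pair of intervals is a good level-$j$ pair, and hence the entry $\Pi_j[I,I']$ is non-empty. Applying this to the top level $j^*$ with $W_{j^*-1}<\opt\le W_{j^*}$ on the first $W_{j^*-1}$ pairs of $\mset^*$ (which is possible since $\opt\ge 2 = W_1$, so $j^*\ge 2$ and $W_{j^*-1}\ge 1$; actually one wants $W_{j^*-1}\ge \opt/2$, which holds by definition of $j^*$) then yields an entry storing $W_{j^*-1}\ge \opt/2$ demand pairs, and since $\tilde\mset$ is the largest such stored set, $|\tilde\mset|\ge \opt/2$.

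\textbf{Key steps in order.} First, fix an optimal non-crossing solution $\mset^*=\{(s_{i_1},t_{i_1}),\ldots,(s_{i_{\opt}},t_{i_{\opt}})\}$ where $s_{i_1}\preceq\cdots\preceq s_{i_{\opt}}$ on $\sigma$; since the set is non-crossing, the destinations then also satisfy $t_{i_1}\preceq\cdots\preceq t_{i_{\opt}}$ on $\sigma'$ (and all sources, resp.\ destinations, are distinct, or can be made so). Second, state and prove the inductive claim: for every $1\le j\le j^*$ and every set of $W_j$ consecutive pairs $(s_{i_a},t_{i_a}),\ldots,(s_{i_{a+W_j-1}},t_{i_{a+W_j-1}})$ of $\mset^*$, if $I$ is an interval of $\sigma$ with $s_{i_a},\ldots,s_{i_{a+W_j-1}}\in I$ containing no source of $\mset^*$ outside this window, and $I'$ similarly on $\sigma'$, then $(I,I')\in\pset_j$ and $\Pi_j[I,I']\ne\emptyset$. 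The base case $j=1$: such a window has exactly $W_1=2$ pairs $(s_{i_a},t_{i_a}),(s_{i_{a+1}},t_{i_{a+1}})$, which are non-crossing and lie in $I\times I'$, so the algorithm sets $\Pi_1[I,I']$ to a two-element set — provided $(I,I')$ is a good level-$1$ pair. For the induction step, split the window of $W_j$ pairs into its first and last $W_{j-1}$ pairs, choose separating vertices $u\prec v$ on $\sigma$ and $u'\prec v'$ on $\sigma'$ in the gaps between the halves (these exist because sources and destinations are distinct), apply the induction hypothesis to the two sub-windows with intervals $(x,u),(x',u')$ and $(v,y),(v',y')$, and observe the algorithm will then populate $\Pi_j[I,I']$.

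\textbf{The main obstacle.} The nontrivial part is verifying that the relevant pairs of intervals are \emph{good} at every level — that is, that no level-$j$ constraint $K=(i,a,b,w)$ fails conditions (C1)--(C4) for the intervals arising from a window of $W_j$ consecutive pairs of $\mset^*$. This is where the feasibility of $\mset^*$ and the level structure ($w\ge W_{j-1}$ for level-$j$ constraints) must be combined: if, say, a type-1 level-$j$ constraint had $I\subseteq(a,b)$, then all $W_j$ window-pairs would have their sources in $(a,b)$; but $W_j=2W_{j-1}>w$, which would only contradict feasibility of $\mset^*$ if $w<W_j$, and indeed $w<2W_{j-1}\cdot(\text{something})$ — one has to be careful here, since $w$ can be as large as $W_j-1<2w$... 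Let me restate: a level-$j$ constraint has $W_{j-1}\le w<W_j=2W_{j-1}$, so $W_j\le 2w$, giving at most $2w$ window-sources in $(a,b)$, which does not immediately contradict $\mset^*$ satisfying $K$ with bound $w$. The fix is to use intervals carrying only $W_{j-1}$, not $W_j$, window-pairs at the point of application, or more cleanly to strengthen the inductive claim so the intervals $I,I'$ are taken \emph{minimal} (their endpoints being the extreme sources/destinations of the window), so that $I\subseteq(a,b)$ forces \emph{all} $W_j$ sources into the open interval strictly between $a$ and $b$, and then $W_j>w$ would require $w<W_j\le 2w$ — still not a contradiction. I therefore expect the correct argument to apply the induction only up to level $j^*$ with the final window having $W_{j^*-1}$ pairs, and to handle goodness by noting that a window of $W_{j'-1}$ consecutive $\mset^*$-pairs fitting in $(a,b)$ for a level-$j'$ constraint contradicts feasibility since $W_{j'-1}\le w$ would be needed but the full $\mset^*$ has more pairs with sources in $(a,b)$... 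The precise bookkeeping of window sizes versus constraint levels — ensuring every constraint is violated by the optimal solution's sub-windows in a way that forces goodness — is the crux, and is exactly the step I would spend the most care on; the dynamic-programming recursion itself is routine once goodness is in hand.
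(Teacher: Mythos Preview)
Your plan is exactly the paper's approach: take the optimal non-crossing solution $\mset^*$, truncate to its first $W_{r'}$ pairs (your $W_{j^*-1}$), partition into consecutive blocks of size $W_j$ at each level $j\le r'$, take the minimal interval pair $(I,I')$ spanned by each block, and prove by induction on $j$ that every such $(I,I')$ is a good level-$j$ pair with $\Pi_j[I,I']\ne\emptyset$. The inductive step and base case are as you describe.

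The ``main obstacle'' you worry about is not an obstacle at all; you had the contradiction in hand and then talked yourself out of it. A level-$j$ constraint $K=(1,a,b,w)$ has, by definition of the level partition, $W_{j-1}\le w< W_j$. If the minimal interval $I$ of a block of $W_j$ consecutive pairs of $\mset^*$ were contained in $(a,b)$, then $\mset^*$ would have at least $W_j>w$ demand pairs with source in $(a,b)$, so $\mset^*$ would violate $K$ --- contradicting feasibility of the optimum. That is the entire argument; the inequality $W_j\le 2w$ you derived is true but irrelevant. The same one-line contradiction handles type-2 constraints, and for a type-3 constraint $(3,a,b,w)$, if $I\subseteq L_a$ and $I'\subseteq R_b$ then all $W_j>w$ block-pairs cross $K$, again contradicting feasibility of $\mset^*$ (type 4 is symmetric).

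So your proposal is correct and coincides with the paper's proof; the only issue is the self-inflicted confusion in the last paragraph, which you should simply delete and replace with the direct observation that $w<W_j$ for level-$j$ constraints.
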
\label{thm: relating to opt}

We can now apply Observation~\ref{obs: violated to satisfied} to compute a subset $\tilde{\mset'}\subseteq \tilde{\mset}$ of at least $|\tilde {\mset}|/4\geq \opt/8$ non-crossing demand pairs satisfying all constraints in $\kset$ (if $|\tmset|=0$, then $\opt\leq 1$ must hold, and finding an optimal solution is trivial). In order to complete the proof of Theorem~\ref{thm: approximate DPSP}, it now remains to prove Theorem~\ref{thm: opt analysis}.

\begin{proofof}{Theorem~\ref{thm: opt analysis}}
Denote $\kappa=\opt$, and let $\mset^*=\set{(s_1,t_1),\ldots,(s_\kappa,t_\kappa)}$ be the optimal solution to the \DPSP instance, where $s_1\prec\ldots\prec s_\kappa$ and $t_1\prec\ldots\prec t_\kappa$. Let $r'$ be the largest value for which $\kappa/W_{r'}\ge 1$ (this is well-defined since we have assumed that $\opt\geq 2$).
Let $\mset^{**}\subseteq\mset^*$ contain the first $W_{r'}$ demand pairs of $\mset^*$, so $|\mset^{**}|\geq |\mset^*|/2$. We will show that some entry of the level-$r'$ dynamic programming table stores a solution whose cardinality is at least $W_{r'}$.

For every level $1\le j\le r'$, we define a partition $\sset_j$ of $\mset^{**}$ into $2^{r'-j}$ subsets, each containing exactly $W_j$ consecutive demand pairs from $\mset^{**}$. For every set $S\in \sset_j$ of the partition, we define a pair $(I(S),I'(S))$ of intervals, with $I(S)\subseteq \sigma$ and $I'(S)\subseteq \sigma'$, as follows. Let $(s,t),(s',t')$ be the first and the last demand pairs of $S$, respectively (this is well-defined since the demand pairs are non-crossing). Then $I(S)=(s,s')$ and $I'(S)=(t,t')$. We denote by $\qset_j$ the resulting collection of $2^{r'-j}$ pairs of intervals. Note that for every pair $(I_1,I'_1),(I_2,I_2')\in \qset_j$, $I_1\cap I_2=\emptyset$ and $I_1'\cap I_2'=\emptyset$. We need the following claim.

\begin{claim} For every $1\leq j\leq r'$, every pair $(I,I')\in\qset_j$ of intervals is a good level-$j$ pair.
\end{claim}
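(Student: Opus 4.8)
The goal is to show that for every $1\le j\le r'$ and every pair $(I,I')\in\qset_j$, the pair $(I,I')$ is a good level-$j$ pair of intervals, i.e.\ it satisfies conditions (C1)--(C4) against every level-$j$ constraint $K\in\sset_j$. The plan is to argue by contradiction: suppose $(I,I')=(I(S),I'(S))$ for some $S\in\sset_j$, and suppose it violates some constraint $K=(i,a,b,w)\in\sset_j$. Recall that $K$ being a level-$j$ constraint means $W_{j-1}\le w<W_j$, so in particular $w<W_j=|S|=|S\cap\mset^{**}|$. The key structural fact is that $I(S)=(s,s')$ and $I'(S)=(t,t')$ where $(s,t)$ and $(s',t')$ are the first and last demand pairs of $S$; hence \emph{every} source of a pair in $S$ lies in $I(S)$ and every destination lies in $I'(S)$, because the demand pairs of $\mset^{**}$ are non-crossing and appear in the same order on $\sigma$ and on $\sigma'$.

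The heart of the argument is a short case analysis on the type $i$ of the offending constraint $K$. If $i=1$ and $I(S)\subseteq(a,b)$, then all $|S|=W_j$ sources of the pairs of $S$ lie in the interval $(a,b)$; but $K$ is satisfied by $\mset^*$ (it is the optimal solution, hence feasible), so the number of demand pairs of $\mset^*$ with source in $(a,b)$ is at most $w<W_j$, and since $S\subseteq\mset^{**}\subseteq\mset^*$ this is a contradiction. The case $i=2$ is symmetric, using destinations and $I'(S)\subseteq(a,b)$. For $i=3$: if $I(S)\subseteq L_a$ and $I'(S)\subseteq R_b$, then every pair $(s'',t'')\in S$ has $s''\in I(S)\subseteq L_a$ and $t''\in I'(S)\subseteq R_b$, so every pair of $S$ crosses $K$; feasibility of $\mset^*$ gives at most $w<W_j$ crossing pairs, again a contradiction. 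The case $i=4$ is handled identically with $R_a$ and $L_b$ in place of $L_a$ and $R_b$.

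The only subtlety is making the ``every source lies in $I(S)$'' claim precise, and this is where I would be slightly careful: $S$ consists of $W_j$ consecutive demand pairs of $\mset^{**}$, say $(s_{a_1},t_{a_1}),\ldots,(s_{a_m},t_{a_m})$ with $a_1<\cdots<a_m$ (in the global indexing of $\mset^*$), so $s_{a_1}\preceq s_{a_\ell}\preceq s_{a_m}$ and $t_{a_1}\preceq t_{a_\ell}\preceq t_{a_m}$ for all $\ell$, by the non-crossing ordering of $\mset^*$; hence $s_{a_\ell}\in(s_{a_1},s_{a_m})=I(S)$ and $t_{a_\ell}\in(t_{a_1},t_{a_m})=I'(S)$, where intervals include their endpoints. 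Thus all $W_j$ sources of $S$ lie in $I(S)$ and all $W_j$ destinations lie in $I'(S)$. Once this is in place, each of the four cases closes by combining the containment hypothesis (negation of the relevant condition (C$i$)) with feasibility of $\mset^*$ for $K$ and the level-$j$ bound $w<W_j$. I do not expect any real obstacle here beyond bookkeeping; the ``hard'' conceptual content is already packaged in the non-crossing property of the optimal solution and in the definition of the partitions $\sset_j$.
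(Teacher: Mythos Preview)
Your proposal is correct and follows essentially the same approach as the paper's own proof: fix $(I,I')=(I(S),I'(S))$, observe that all $W_j$ demand pairs of $S\subseteq\mset^*$ have sources in $I$ and destinations in $I'$, and then for each constraint type derive a contradiction with feasibility of $\mset^*$ via $w<W_j$. If anything, you are slightly more explicit than the paper in justifying why all sources of $S$ lie in $I(S)$ (and destinations in $I'(S)$) via the non-crossing ordering.
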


\begin{proof} Fix some $1\le j\le r'$ and some pair $(I,I')\in \qset_j$ of intervals. From our definition of the pairs of intervals in $\qset_j$, there are exactly $W_j$ demand pairs $(s,t)$ in $\mset^*$ with $s\in I$ and $t\in I'$. Consider any level-$j$ constraint $K=(i,a,b,w)\in \sset_j$, and recall that $W_{j-1}\leq w<W_j$ must hold.

Assume first that $K$ is a type-$1$ constraint. Then $I$ cannot be contained in $(a,b)$, since then $\mset^*$ would have $W_j>w$ demand pairs whose sources lie in $I$, and hence in $(a,b)$. The analysis for type-$2$ constraints is similar.

Assume now that $K$ is a type-$3$ constraint, and assume for contradiction that $I\subseteq L_a$ and $I'\subseteq R_b$. Then $\mset^*$ has $W_j>w$ demand pairs $(s,t)$ with $s\in I$ and $t\in I'$, each of which crosses the constraint $K$, a contradiction. The case where $K$ is a type-$4$ constraint is proved similarly.
\end{proof}

From the above claim, for all $1\leq j\leq r'$ and $(I,I')\in \qset_j$, there is an entry $\Pi_j[I,I']$ in the level-$j$ dynamic programming table. It is now enough to show that each such entry contains a solution of cardinality $W_j$.

\begin{claim} For each $1\leq j\leq r'$, for every pair $(I(S),I'(S))\in \qset_j$ of intervals, entry $\Pi_j[I(S),I'(S)]$ contains a solution of value $W_j$.
\end{claim}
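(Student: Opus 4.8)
The plan is to prove this claim by induction on the level $j$, mirroring the structure of the dynamic programming algorithm itself. The base case is $j=1$: for every pair $(I(S),I'(S))\in\qset_1$, the set $S$ consists of exactly $W_1=2$ non-crossing demand pairs $(s,t),(s',t')\in\mset^*$ with $s,s'\in I(S)$ and $t,t'\in I'(S)$; indeed, by the definition of $I(S)=(s,s')$ and $I'(S)=(t,t')$, both demand pairs have their sources in $I(S)$ and destinations in $I'(S)$. Since $(I(S),I'(S))$ is a good level-$1$ pair (by the preceding claim), the algorithm examines this entry, finds these two distinct non-crossing demand pairs, and therefore sets $\Pi_1[I(S),I'(S)]$ to a set of cardinality exactly $W_1=2$.

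For the inductive step, fix some $1< j\le r'$ and a pair $(I(S),I'(S))\in\qset_j$, where $S\in\sset_j$ consists of $W_j$ consecutive demand pairs of $\mset^{**}$. The key observation is that $S$ splits naturally into two halves: let $S_1$ be the first $W_{j-1}$ demand pairs of $S$ and $S_2$ the last $W_{j-1}$ demand pairs of $S$, so that $S_1,S_2\in\sset_{j-1}$ and $(I(S_1),I'(S_1)),(I(S_2),I'(S_2))\in\qset_{j-1}$. Writing $I(S)=(x,y)$, $I'(S)=(x',y')$, and letting $(s,t)$ be the last demand pair of $S_1$ while $(s'',t'')$ is the first demand pair of $S_2$, we have $I(S_1)=(x,s)$, $I'(S_1)=(x',t)$, and $I(S_2)=(s'',y)$, $I'(S_2)=(s'',t'')$ wait --- more precisely $I(S_2)=(s'',y)$ and $I'(S_2)=(t'',y')$. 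Since the demand pairs of $\mset^*$ are non-crossing, $s\prec s''$ and $t\prec t''$ hold, so there exist vertices $u=s\prec v=s''$ in $I(S)$ and $u'=t\prec v'=t''$ in $I'(S)$ with $(x,u)=I(S_1)$, $(x',u')=I'(S_1)$, $(v,y)=I(S_2)$, $(v',y')=I'(S_2)$. By the induction hypothesis, both $\Pi_{j-1}[I(S_1),I'(S_1)]$ and $\Pi_{j-1}[I(S_2),I'(S_2)]$ are non-empty (they have cardinality $W_{j-1}$). Hence, when the algorithm processes the entry $\Pi_j[I(S),I'(S)]$, it finds such a quadruple $u,v,u',v'$, and sets $\Pi_j[I(S),I'(S)]=\Pi_{j-1}[(x,u),(x',u')]\cup\Pi_{j-1}[(v,y),(v',y')]$, a set of cardinality $2W_{j-1}=W_j$ by Observation~\ref{obs: dp table cardinality}. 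This completes the induction.

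Applying this with $j=r'$ to any pair in $\qset_{r'}$ shows that some entry of the level-$r'$ dynamic programming table stores a solution of cardinality $W_{r'}\ge\kappa/2=\opt/2$, and therefore $|\tilde\mset|\ge\opt/2$, as required. The only subtle point to be careful about is that the algorithm may not pick exactly the quadruple $u,v,u',v'$ coming from the optimal solution $\mset^{**}$ --- it picks \emph{some} valid quadruple --- but Observations~\ref{obs: dp table cardinality} and~\ref{obs: dp table non-crossing} guarantee that whatever it picks still yields a non-crossing set of the correct cardinality $W_j$. Thus the existence of the optimal-solution quadruple is all that is needed to guarantee the entry is non-empty, and the cardinality claim then follows unconditionally from Observation~\ref{obs: dp table cardinality}. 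The main (though still routine) obstacle is bookkeeping the interval endpoints correctly when splitting $S$ into $S_1$ and $S_2$ and verifying that the induced interval pairs are exactly the ones in $\qset_{j-1}$; this is where one must use the non-crossing property of $\mset^*$ to line up the source ordering with the destination ordering.
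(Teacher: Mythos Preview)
Your proof is correct and follows essentially the same induction-on-$j$ argument as the paper: split $S\in\sset_j$ into its two halves $S_1,S_2\in\sset_{j-1}$, apply the induction hypothesis to get non-empty level-$(j-1)$ entries, and conclude that the level-$j$ entry is non-empty (hence of cardinality $W_j$). You are in fact more explicit than the paper in identifying the quadruple $u,v,u',v'$ and in noting that the algorithm need not choose \emph{this} particular quadruple; you should just clean up the ``wait --- more precisely'' stumble in the write-up.
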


\begin{proof} 
The proof is by induction on $j$. The claim clearly holds for $j=1$, since there are two distinct non-crossing demand pairs $(s,t),(s',t')$ with $s,s'\in I(S)$, $t,t'\in I'(S)$ - the two demand pair lying in $S$. Assume now that the claim holds for levels $1,\ldots,j-1$, for some $1<j\leq r'$, and we would like to prove it for $j$.

Our definition of the partitions $\sset_{\ell}$ of $\mset^{**}$ ensures that there are exactly two distinct sets $S_1,S_2\in \sset_{j-1}$, with $S_1,S_2\subseteq S$ (and in fact $S_1\cup S_2=S$). From the induction hypothesis, the entries of $\Pi_{j-1}$ corresponding to pairs $(I(S_1),I'(S_1))$ and $(I(S_2),I'(S_2))$ each contain $W_{j-1}$ demand pairs, and so $\Pi_j[I(S),I'(S)]$ must contain $W_j$ demand pairs.
\end{proof}

Recall that $\sset_{r'}$ contains a single set of demand pairs - the set $\mset^{**}$. We conclude that the corresponding entry of the level-$r'$ dynamic programming table contains $W_{r'}\geq \opt/2$ demand pairs.
\end{proofof}

\label{-----------------------------------------------sec: routing on a disc-------------------------------------------}
\subsection{Approximation Algorithm for \NDPdisc}\label{subsec: routing on a disc}
In this section we prove Theorem~\ref{thm: routing on a disc and cyl main} for \NDPdisc.
The proof builds on the work of Robertson and Seymour~\cite{RS-disc}, who gave a precise characterization of the instances \NDPdisc, where all demand pairs can be routed simultaneously. 
Many of the definitions below are from~\cite{RS-disc}.
Let $\Sigma$ be a disc, whose boundary is denoted by $\Gamma$, and let $G$ be any graph drawn on $\Sigma$. Suppose we are given a set $\mset=\set{(s_1,t_1),\ldots,(s_k,t_k)}$ of pairs of vertices of $G$, called demand pairs, and let $\tset$ be the set of all vertices participating in the demand pairs in $\mset$, that we refer to as terminals. We identify the graph $G$ with its drawing. A \emph{region} of $G$ is a connected component of $\Sigma\setminus G$. We say that the drawing of $G$ is \emph{semi-proper} if no inner point on an image of an edge of $G$ lies on $\Gamma$, and we say that it is \emph{proper} with respect to $\mset$, if additionally $V(G)\cap \Gamma=\tset$. 

Suppose we are given a planar graph $G$, together with a set $\mset$ of demand pairs, such that $G$ is drawn properly (with respect to $\mset)$ on $\Sigma$.
Let $W$ be a set of points on $\Gamma$, constructed as follows. First, we add to $W$ all points corresponding to the vertices of $\tset$. Next, for every segment $\beta$ of $\Gamma\setminus \tset$, we add one arbitrary point $p\in \beta$ to $W$. 
A vertex $v$ of $G$ is \emph{peripheral} if $v\in \Gamma$, and a region of $G$ is peripheral, if it contains a segment of $\Gamma\setminus \tset$. Given two points $x,y\in W$, we denote by $\Delta_{\mset}(x,y)$ the total number of the demand pairs $(s,t)\in \mset$, where either $\set{s,t}\cap \set{x,y}\neq \emptyset$, or $s$ and $t$ belong to different segments of $\Gamma\setminus\set{x,y}$.
We need the following definition.

\begin{definition} Given $G$, $\mset$ and $W$ as above, for any $x,y\in W$, an $(x,y)$-chain is a sequence $A_1,A_2,\ldots,A_r$, such that:

\begin{itemize}
\item for all $1\leq i<r$, one of $A_i,A_{i+1}$ is a vertex of $G$, the other is a region, and they are incident;
\item if $A_1$ is a vertex then $A_1=x$; if $A_1$ is a region then $x\in A_1$;
\item similarly, if $A_r$ is a vertex then $A_r=y$, and if $A_r$ is a region then $y\in A_r$; and

\item for all $1\leq i\leq r$, $A_i$ is peripheral if and only if $i=1$ or $i=r$.
\end{itemize}

The length of an $(x,y)$-chain is the number of its terms that are vertices. The redundancy of an $(x,y)$-chain is its length minus $\Delta_{\mset}(x,y)$.
\end{definition}

The following theorem, proved by Roberston and Seymour~\cite{RS-disc} characterizes routable sets of demand pairs for the \NDPdisc problem.

\begin{theorem}[Theorem 3.6 in~\cite{RS-disc}]\label{thm: routing on a disc - RS}
Let $G$ be a planar graph properly drawn on a disc $\Sigma$, with respect to a set $\mset$ of demand pairs, and let $W$ be defined as above. Then there is a set of node-disjoint paths, routing all demand pairs in $\mset$ if and only if: (i) $\mset$ is a non-crossing set of demand pairs; and (ii) for all $x,y\in W$, every $(x,y)$-chain has a non-negative redundancy.
Moreover, there is an efficient algorithm to determine whether these conditions hold, and if so, to find a routing of the demand pairs in $\mset$.
\end{theorem}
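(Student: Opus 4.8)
\textbf{Proof proposal for Theorem~\ref{thm: routing on a disc - RS}.}

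The plan is to follow the structure of Robertson and Seymour's original argument in~\cite{RS-disc}, reducing the feasibility question for \NDPdisc to a statement about chains and their redundancies. The necessity direction is the easy one: I would first observe that if $\mset$ can be routed by node-disjoint paths, then $\mset$ is non-crossing, since two crossing demand pairs (in the circular sense on $\Gamma$) cannot be connected by disjoint paths inside a disc by a standard Jordan-curve argument. For the redundancy condition, fix $x,y\in W$ and any $(x,y)$-chain $A_1,\ldots,A_r$. The union of the regions $A_i$ and the points $x,y$ forms a curve $\gamma$ in $\Sigma$ from $x$ to $y$ that meets $G$ only at the vertex-terms of the chain, and $\gamma$ separates $\Gamma$ into two arcs; any demand pair counted by $\Delta_{\mset}(x,y)$ must be routed by a path crossing $\gamma$, and such a path must pass through one of the (length of the chain)-many vertex-terms. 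Since the routing paths are node-disjoint, the length is at least $\Delta_{\mset}(x,y)$, i.e., the redundancy is non-negative.

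The sufficiency direction is the substantial part, and I would prove it by induction on the number of vertices of $G$ (or on $|E(G)|+|V(G)|$), assuming $\mset$ is non-crossing and every chain has non-negative redundancy. The standard approach is to look for a ``reducible'' configuration. First, I would handle degenerate cases: if some demand pair $(s_i,t_i)$ has $s_i=t_i$ or consists of adjacent peripheral vertices with an empty region between them, route it trivially, delete it and the relevant vertices/region, and verify the hypotheses still hold on the smaller instance (this is where the chain condition is inherited, and one must check that deleting a peripheral vertex cannot destroy non-negativity of any remaining chain — a short case analysis). If no demand pair is routable ``for free,'' I would look at a peripheral region $F$ and a vertex $v$ on its boundary that is not a terminal; a non-terminal peripheral vertex can be suppressed or an incident edge contracted/deleted, again checking that chains and non-crossingness survive. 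The crux is to show that \emph{some} such reduction is always available unless the instance is already trivially solvable, and that the chain condition with non-negative redundancy is exactly what guarantees this — if no reduction works, one extracts a chain of negative redundancy, contradicting the hypothesis.

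The main obstacle I anticipate is precisely this bookkeeping: proving that after each reduction step the two conditions of the theorem are preserved, and that when they hold and the graph is non-degenerate, a reduction step must exist. This is a delicate case analysis involving the interplay between peripheral vertices, peripheral regions, and the combinatorics of chains; getting it right is the heart of the Robertson--Seymour proof and of~\cite[Theorem~3.6]{RS-disc}. For the algorithmic claim, I would note that each reduction step is polynomially computable, the number of steps is polynomial in $|V(G)|+|E(G)|$, and the chain condition can be checked efficiently: for each pair $x,y\in W$, computing the minimum length of an $(x,y)$-chain reduces to a shortest-path computation in an auxiliary ``vertex--region incidence'' graph, so verifying non-negative redundancy for all pairs takes polynomial time. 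Since this theorem is quoted from~\cite{RS-disc}, in the write-up I would most likely cite their proof for the sufficiency direction and only spell out the necessity direction and the algorithmic remarks in detail, rather than reproduce the full inductive argument.
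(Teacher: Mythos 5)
The paper does not prove this statement at all: it is imported verbatim as Theorem~3.6 of Robertson and Seymour~\cite{RS-disc} and used as a black box, so your final decision to cite~\cite{RS-disc} for the sufficiency direction is exactly what the paper does. Your necessity argument and the algorithmic remarks are fine, but be aware that your sufficiency sketch on its own is not a proof --- the reduction/induction bookkeeping you defer is the entire content of the Robertson--Seymour argument --- so the citation is doing all the work there, just as in the paper.
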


Notice that if $G$ is a graph drawn on a disc $\Sigma$, such that all terminals appear on the boundary $\Gamma$ of the disc, then, by slightly altering $\Gamma$, we can ensure that the drawing of $G$ is proper with respect to $\mset$. Therefore, we assume from now on that we are given a proper drawing $\phi$ of $G$ on $\Sigma$ with respect to $\mset$.
We prove Theorem~\ref{thm: routing on a disc and cyl main} for \NDPdisc in three steps. In the first step, we prove a stronger version of the theorem for the case where $G$ is connected and the set $\mset$ of terminals is $1$-split. In the second step, we prove the theorem for the case where $G$ is $2$-connected, and we make no assumptions on the demand pairs. In the third step we prove the theorem without any additional assumptions.
\subsubsection{Special Case: $G$ is Connected and $\mset$ is $1$-Split}\label{subsec: NDP on discs 2-connected 1-split}

In this section, we prove the following theorem.

\begin{theorem}\label{thm: routing on a disc 1-split 2-connected}
There is an efficient algorithm, that, given a connected planar graph $G$ with a set $\mset$ of demand pairs, and a proper drawing of $G$ on the disc $\Sigma$ with respect to $\mset$, where $\mset$ is $1$-split with respect to the disc boundary, computes a routing of $\opt(G,\mset)/8$ demand pairs of $\mset$ via node-disjoint paths in $G$.
\end{theorem}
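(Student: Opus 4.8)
The plan is to reduce the instance $(G,\mset)$, where $\mset$ is $1$-split, to an instance of \DPSP, and then apply the $8$-approximation of Theorem~\ref{thm: approximate DPSP}. Since $\mset$ is $1$-split, there is a partition of the disc boundary $\Gamma$ into two arcs $\sigma,\sigma'$ such that every source lies on $\sigma$ and every destination lies on $\sigma'$ (up to renaming within pairs). We view $\sigma$ and $\sigma'$ as directed paths, fixing orientations consistent with the circular order on $\Gamma$, so that the notion of two demand pairs crossing in the \NDPdisc sense (four endpoints appearing in the order $s,s',t,t'$ on $\Gamma$) coincides exactly with the crossing notion in \DPSP. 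Thus any routable subset of $\mset$ is non-crossing in the \DPSP sense, and conversely we must recover routability from non-crossing-plus-constraints.

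The heart of the reduction is to encode, via the constraint set $\kset$ of \DPSP, the condition from Theorem~\ref{thm: routing on a disc - RS} that every $(x,y)$-chain has non-negative redundancy. First I would set up the point set $W$ on $\Gamma$ as in the Robertson--Seymour framework. For each ordered pair $x,y\in W$, the minimum length of an $(x,y)$-chain is a quantity $\ell(x,y)$ computable efficiently (e.g., via a shortest-path computation in an auxiliary graph of vertices and regions of $G$). The redundancy condition says precisely that for any subset $\mset'\subseteq\mset$ that we route, $\Delta_{\mset'}(x,y)\le \ell(x,y)$ for all $x,y\in W$. Now $\Delta_{\mset'}(x,y)$ counts demand pairs of $\mset'$ either incident to $\{x,y\}$ or separated by $\{x,y\}$ on $\Gamma$; because $\mset$ is $1$-split, each point of $W$ lies on $\sigma$ or on $\sigma'$, and ``$(s,t)$ separated by $\{x,y\}$'' translates into one of the four quantified conditions $x\preceq s$, $s\preceq x$, $y\preceq t$, $t\preceq y$ — exactly the bodies of type-1 through type-4 constraints. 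Concretely, when both $x,y\in\sigma$ the pairs separated by $\{x,y\}$ are those with source in the interval between them, giving a type-1 constraint; when both lie on $\sigma'$ we get a type-2 constraint; and when one lies on $\sigma$ and the other on $\sigma'$ the pairs separated are those with $s\preceq x, y\preceq t$ (or the mirror image), which is a type-3 (resp. type-4) constraint. Incident pairs contribute an additive $O(1)$ that I would absorb by decreasing the threshold $w$ by a constant, or by handling endpoints of $W$ separately. So for every $x,y\in W$ I would install the constraint $(i,x,y,\ell(x,y))$ of the appropriate type $i$.

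Given this $\kset$, I claim $\opt_{\DPSP}(\sigma,\sigma',\mset,\kset)=\opt(G,\mset)$: any routable set satisfies non-crossing and all redundancy constraints by Theorem~\ref{thm: routing on a disc - RS}, hence is \DPSP-feasible; conversely any \DPSP-feasible $\mset'$ is non-crossing and satisfies $\Delta_{\mset'}(x,y)\le\ell(x,y)$ for all $x,y\in W$, and restricting the proper drawing $\phi$ of $G$ to the terminals of $\mset'$ (deleting the other terminals, which have degree issues handled by the properness assumption) gives a proper drawing on $\Sigma$ for which Theorem~\ref{thm: routing on a disc - RS} certifies a simultaneous routing of all of $\mset'$. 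Running the $8$-approximation for \DPSP then yields a non-crossing $\mset'$ satisfying $\kset$ with $|\mset'|\ge \opt(G,\mset)/8$, which by the converse direction is routable in $G$, and the Robertson--Seymour algorithm produces the routing efficiently. The main obstacle I anticipate is getting the bookkeeping of $\Delta_{\mset}(x,y)$ exactly right: correctly matching the ``incident or separated'' count to the \DPSP constraint bodies (especially the endpoints-of-intervals and the demand pairs incident to $x$ or $y$ themselves), making sure the orientations of $\sigma,\sigma'$ make the crossing notions literally identical rather than merely analogous, and verifying that the quantity $\ell(x,y)$ — the minimum $(x,y)$-chain length — is both efficiently computable and is what the type-$i$ threshold should be; a secondary nuisance is that deleting unused terminals could create peripheral vertices or regions and I must check the properness of the restricted drawing is preserved (or re-established by a local perturbation of $\Gamma$).
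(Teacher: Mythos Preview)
Your plan is the paper's plan: encode the Robertson--Seymour redundancy condition as a \DPSP instance on the two arcs $\sigma,\sigma'$, with one threshold $\ell(x,y)$ per pair $x,y\in W$, apply Theorem~\ref{thm: approximate DPSP}, and then certify routability of the returned $\mset'$ via Theorem~\ref{thm: routing on a disc - RS}. The orientations you describe (one arc traversed each way) are exactly how the paper makes the two crossing notions coincide.

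There is one concrete point where your writeup is not yet right, and it is exactly the bookkeeping you flagged. In the mixed case $x\in\sigma$, $y\in\sigma'$, the quantity $\Delta_{\mset'}(x,y)$ is the \emph{sum} of the type-3 count (pairs with $s\preceq x$ and $y\preceq t$) and the type-4 count (pairs with $x\preceq s$ and $t\preceq y$): removing $\{x,y\}$ from $\Gamma$ produces two arcs, and both patterns put $s,t$ on opposite arcs. Your sentence ``install the constraint $(i,x,y,\ell(x,y))$ of the appropriate type $i$'' reads as installing one constraint; that bounds one summand but leaves the other free, so a \DPSP-feasible $\mset'$ need not satisfy $\Delta_{\mset'}(x,y)\le \ell(x,y)$. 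The fix the paper uses is to install \emph{both} $(3,x,y,\ell(x,y))$ and $(4,x,y,\ell(x,y))$ for every such $(x,y)$; in the routability direction one then observes that for a non-crossing $\mset'$ at most one of the two crossing sets can be nonempty (up to the single pair $(x,y)$ itself, which lies in both), so the sum is still at most $\ell(x,y)$. A second small adjustment: the paper takes $\ell(x,y)$ to be the length of the shortest $G$-normal curve in the disc (computed via the dual), not the minimum chain length. This matters because in the routability direction the chain you are given lives in the restricted drawing $\phi'$ proper for $\mset'$; when you move its endpoints back to points $x',y'$ of the original $W$, the intermediate terms may become peripheral in $\phi$, so you recover only a $G$-normal curve in $\phi$, not a chain. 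With these two fixes your argument is exactly the paper's.
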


Recall that if $\mset$ is $1$-split with respect to the boundary $\Gamma$ of the disc $\Sigma$, then we can partition $\Gamma$ into two disjoint segments, $\sigma$ and $\sigma'$, such that for every demand pair $(s,t)\in\mset$, one of the vertices $s,t$ lies on $\sigma$, and the other on $\sigma'$. We assume without loss of generality that all source vertices of the demand pairs in $\mset$ lie on $\sigma$, and all destination vertices lie on $\sigma'$. Let $\tset$ be the set of all vertices participating in the demand pairs in $\mset$, that we refer to as terminals. We denote by $S$ and $T$ the sets of the source and the destination vertices of the demand pairs in $\mset$, respectively. As before, we construct a set $W$ of points: start with $W=\tset$; then for every segment of $\Gamma\setminus\tset$, add one arbitrary point of that segment to $W$.

The idea is to reduce this problem to the \DPSP problem.  We let $\tilde{\sigma}$ be a directed path, whose vertices correspond to the points of $W\cap \sigma$, ordered in their counter-clock-wise order on $\Gamma$. We let $\tilde{\sigma}'$ be a directed path, whose vertices correspond to the points of $W\cap \sigma'$, ordered in their clock-wise order on $\Gamma$. Therefore, all vertices of $S$ lie on $\tilde{\sigma}$, and all vertices of $T$ lie on $\tilde{\sigma}'$. Moreover, $\mset'$ is a non-crossing set of demand pairs with respect to $\tilde{\sigma}$ and $\tilde{\sigma}'$ iff it is non-crossing with respect to $\Gamma$.

Suppose we are given any pair $(x,y)$ of vertices of $W$. Let $\gamma(x,y)$ denote the shortest $G$-normal curve connecting $x$ to $y$ in the drawing of $G$ on $\Sigma$, so that $\gamma(x,y)$ is contained in the disc $\Sigma$. Curve $\gamma(x,y)$ can be found efficiently by considering the graph $G'$ dual to $G$, deleting the vertex corresponding to the outer face of $G$ from it, and computing shortest paths between appropriately chosen vertices of $G'$ (that correspond to the faces of $G$ incident on $x$ and $y$). Let $\ell(x,y)$ be the length of $\gamma(x,y)$. Notice that, since $G$ is connected, if $\Delta_{\mset}(x,y)\geq 1$, then $\ell(x,y)\geq 1$.
 We now construct a collection $\kset$ of constraints of the \DPSP probem, as follows.

For every pair $x,y$ of vertices of $W\cap \tilde{\sigma}$, with $\Delta_{\mset}(x,y)\geq 1$,  we add a type-$1$ constraint $(1,x,y,\ell(x,y))$ to $\kset$. Similarly, for every pair $x,y$ of vertices of $W\cap \tilde{\sigma}'$, with $\Delta_{\mset}(x,y)\geq 1$, we add a type-$2$ constraint $(2,x,y,\ell(x,y))$ to $\kset$.
For every pair $x\in W\cap \tilde {\sigma}$, $y\in W\cap \tilde{\sigma'}$ of vertices,  with $\Delta_{\mset}(x,y)\geq 1$, we add a type-3 constraint $(3,x,y,\ell(x,y))$ and a type-4 constraint $(4,x,y,\ell(x,y))$ to $\kset$. This finishes the definition of the \DPSP problem instance. The following observation is immediate.

\begin{observation}\label{obs: routing to DPSP}
Let $\mset'\subseteq \mset$ be any set of demand pairs that can be routed via disjoint paths in $G$. Then $\mset'$ is a valid solution to the \DPSP problem instance.
\end{observation}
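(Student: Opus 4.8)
Looking at Observation~\ref{obs: routing to DPSP}, I need to show that any subset $\mset' \subseteq \mset$ of demand pairs routable via node-disjoint paths in $G$ satisfies all the constraints we constructed, and forms a non-crossing set with respect to $\tilde\sigma,\tilde\sigma'$.

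The plan is as follows. First, suppose $\mset'$ can be routed via node-disjoint paths $\pset'$ in $G$. Since $G$ is drawn properly on the disc $\Sigma$ with all terminals on the boundary $\Gamma$, and the paths in $\pset'$ are node-disjoint and drawn inside $\Sigma$, a standard planarity argument shows that $\mset'$ must be non-crossing with respect to $\Gamma$ (two crossing demand pairs would force their routing paths to cross inside the disc, contradicting node-disjointness). By the construction relating $\tilde\sigma,\tilde\sigma'$ to $\Gamma$ noted just before the observation, non-crossing with respect to $\Gamma$ is equivalent to non-crossing with respect to $\tilde\sigma$ and $\tilde\sigma'$. So the first requirement for being a valid \DPSP solution is met.

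Next, I would verify each of the four constraint types. The key idea is that each constraint $(i,x,y,\ell(x,y))$ was defined so that $\ell(x,y)$ is the length of the shortest $G$-normal curve $\gamma(x,y)$ between $x$ and $y$ inside $\Sigma$. For a type-1 constraint $(1,x,y,w)$ with $x,y\in W\cap\tilde\sigma$: the curve $\gamma(x,y)$, together with the segment of $\Gamma$ between $x$ and $y$ that contains the sources counted by the constraint, encloses a sub-disc $D$ of $\Sigma$. Every demand pair $(s,t)\in\mset'$ with $x\preceq s\preceq y$ has its source $s$ inside $D$ (on the boundary segment) and its destination $t$ outside $D$, so the routing path for $(s,t)$ must cross $\gamma(x,y)$. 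Since $\gamma(x,y)$ is $G$-normal — intersecting $G$ only at its $\ell(x,y)=w$ vertices — and the routing paths are node-disjoint, at most $w$ such paths can cross $\gamma(x,y)$. Hence at most $w$ demand pairs of $\mset'$ have a source in the interval $(x,y)$, which is exactly the type-1 constraint. The argument for type-2 constraints is symmetric, with $\tilde\sigma'$ in place of $\tilde\sigma$. For type-3 and type-4 constraints, the same counting applies: a demand pair $(s,t)$ crossing the constraint (with $s\in L_x$ and $t\in R_y$, or the type-4 analogue) is separated from its mate by the curve $\gamma(x,y)$, so its routing path crosses $\gamma(x,y)$, and again at most $\ell(x,y)=w$ node-disjoint paths can do so.

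The main obstacle I anticipate is making the separation argument fully rigorous: one must argue carefully that $\gamma(x,y)$ together with the appropriate boundary segment of $\Gamma$ does bound a disc, that the relevant terminals lie strictly on opposite sides, and that a node-disjoint routing path crossing this closed curve must pass through one of the $\ell(x,y)$ vertices lying on $\gamma(x,y)$ (here we use that $\gamma(x,y)$ is $G$-normal and that its endpoints $x,y$ are points of $W$, not necessarily vertices — one handles the endpoint subtleties by the fact that segments of $\Gamma\setminus\tset$ contribute their own representative points to $W$). Once these geometric facts are in place, the constraint verification is a direct application of the pigeonhole principle to the node-disjoint paths, and the observation follows.
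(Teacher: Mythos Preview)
Your proposal is correct and follows exactly the natural argument the paper has in mind; in fact, the paper gives no proof at all for this observation, simply labeling it ``immediate'' after defining the \DPSP instance. Your two-part verification (non-crossing via planarity of node-disjoint paths in a disc, and constraint satisfaction via the fact that each routing path crossing the separating $G$-normal curve $\gamma(x,y)$ must use a distinct vertex of $V(\gamma(x,y))$) is precisely the reasoning that makes the observation immediate, so there is nothing further to compare.
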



We apply the $8$-approximation algorithm for the \DPSP problem, to obtain a set $\mset'$ of non-crossing demand pairs (with respect to $\tsigma$ and $\tsigma'$) satisfying all constraints in $\kset$, with $|\mset'|\geq |\opt(G,\mset)|/8$. As observed above, the pairs in $\mset'$ are non-crossing with respect to $\Gamma$.
It now only remains to show that all demand pairs in $\mset'$ can be routed in $G$. The algorithm from Theorem~\ref{thm: routing on a disc - RS} can then be used to find the routing. The following theorem will finish the proof of Theorem~\ref{thm: routing on a disc 1-split 2-connected}.

\begin{theorem}\label{thm: DPSP to routing}
Let $\mset'\subseteq\mset$ be a set of non-crossing demand pairs (with respect to $\tsigma$ and $\tsigma'$), that satisfy all constraints in $\kset$. Then all demand pairs in $\mset'$ can be routed in $G$.
\end{theorem}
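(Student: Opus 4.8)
The plan is to invoke the Robertson--Seymour characterization (Theorem~\ref{thm: routing on a disc - RS}) for the instance $(G,\mset')$ with the proper drawing $\phi$ restricted to the terminals of $\mset'$. Since $\mset'$ is non-crossing with respect to $\Gamma$ (as observed, non-crossing with respect to $\tsigma,\tsigma'$ is equivalent to non-crossing with respect to $\Gamma$), the first condition of Theorem~\ref{thm: routing on a disc - RS} holds. So the whole proof reduces to verifying the second condition: for every pair $x,y$ of points of the point-set $W'$ associated with $(G,\mset')$, every $(x,y)$-chain in $G$ has non-negative redundancy. Equivalently, writing $\Delta_{\mset'}(x,y)$ for the relevant demand count, I must show that the shortest $(x,y)$-chain has length at least $\Delta_{\mset'}(x,y)$.

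First I would recall that the length of an $(x,y)$-chain is the number of its vertex-terms, and that an $(x,y)$-chain naturally traces out a $G$-normal curve in $\Sigma$ from $x$ to $y$: the regions in the chain are the ``gaps'' through which the curve passes, and the vertex-terms are exactly the vertices of $G$ the curve visits. Hence the minimum length of an $(x,y)$-chain equals $\ell(x,y)$, the length of the shortest $G$-normal curve from $x$ to $y$ inside $\Sigma$ — the very quantity used to define the constraints in $\kset$. (Here I would need to be slightly careful about the reduction from a general $G$-normal curve to one arising from a chain whose intermediate terms are all non-peripheral, and about whether $x,y$ themselves must be vertices or merely points in peripheral regions; but this is a standard and routine correspondence, essentially the planar duality argument already used to compute $\ell(x,y)$.) The key remaining point is then: for the pair $(x,y)$, the value $\Delta_{\mset'}(x,y)$ counts precisely those demand pairs of $\mset'$ that are ``separated'' by $\{x,y\}$ on $\Gamma$, and I must check that the appropriate constraint of $\kset$ bounds this count by $\ell(x,y)$.

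The main case analysis is according to where $x$ and $y$ lie. If both $x,y\in \tsigma$, then the demand pairs separated by $\{x,y\}$ are exactly those whose source lies strictly between $x$ and $y$ on $\tsigma$ (together with those pairs using $x$ or $y$ as a terminal), which is controlled by the type-$1$ constraint $(1,x,y,\ell(x,y))$ — so $\Delta_{\mset'}(x,y)\le \ell(x,y)$ by feasibility of $\mset'$. Symmetrically, if both lie on $\tsigma'$ we use the type-$2$ constraint. If $x\in\tsigma$ and $y\in\tsigma'$ (or vice versa), then $\{x,y\}$ splits $\Gamma$ into two arcs, one containing a prefix $L_x$ of $\tsigma$ and a suffix $R_y$ of $\tsigma'$, the other containing the complementary pieces; a demand pair is separated iff it has its source in one arc and destination in the other, which decomposes into ``source in $L_x$, destination in $R_y$'' — a type-$3$ constraint — plus ``source in the suffix of $\tsigma$, destination in the prefix of $\tsigma'$'' — a type-$4$ constraint. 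I would have to be a little careful that $\Delta_{\mset'}$ here is bounded by the \emph{sum} of the two quantities, whereas each of the type-$3$ and type-$4$ constraints has bound $\ell(x,y)$; but since any \emph{single} $(x,y)$-chain is a single $G$-normal curve, it suffices that \emph{either} the type-$3$ \emph{or} the type-$4$ count is at most $\ell(x,y)$ — actually both are, so the bound on whichever one of $L_x/R_y$ versus suffix/prefix the chain ``realizes'' applies; I will need to phrase this so that the redundancy of each individual chain (not the total of $\Delta$) is what gets bounded, matching the statement of Theorem~\ref{thm: routing on a disc - RS}.

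The main obstacle I anticipate is precisely this last bookkeeping: getting the correspondence between ``$(x,y)$-chains'' and ``$G$-normal curves from $x$ to $y$'' exactly right, including the peripheral-endpoint conventions and the additive decomposition of $\Delta_{\mset'}(x,y)$ into the contributions handled by the four constraint types, and then checking that the constraint bound $\ell(x,y)$ indeed dominates the redundancy of \emph{every} chain rather than some aggregate quantity. Everything else — non-crossingness, the duality computation of $\ell(x,y)$, and the invocation of Theorems~\ref{thm: routing on a disc - RS} and~\ref{thm: approximate DPSP} — is routine. Once the second condition of Theorem~\ref{thm: routing on a disc - RS} is verified for all pairs in $W'$, that theorem supplies an efficient algorithm routing all of $\mset'$, completing the proof of Theorem~\ref{thm: DPSP to routing} and hence of Theorem~\ref{thm: routing on a disc 1-split 2-connected}.
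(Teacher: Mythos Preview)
Your overall strategy matches the paper's: verify the two conditions of Theorem~\ref{thm: routing on a disc - RS} for the instance $(G,\mset')$, with non-crossingness immediate and the chain-redundancy condition established by case analysis on whether $x,y$ lie on $\tsigma$, on $\tsigma'$, or one on each. The paper also transforms arbitrary points $x,y\in W'$ back to points $x',y'\in W$ before invoking the constraints, which handles the technicality you flag about peripheral endpoints; this is routine.

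There is, however, a genuine gap in your handling of the cross case ($x\in\tsigma$, $y\in\tsigma'$). The quantity $\Delta_{\mset'}(x,y)$ is a fixed number determined by $x,y$ alone, and it equals the number of pairs crossing the type-$3$ constraint \emph{plus} the number crossing the type-$4$ constraint (with the possible shared pair $(x',y')$ counted once). A chain does not ``realize'' one side or the other; the redundancy condition demands that the chain length be at least this full sum, not just one summand. Knowing that each summand is at most $\ell(x,y)$ gives only $\Delta_{\mset'}(x,y)\le 2\ell(x,y)$, which is insufficient. The missing observation is exactly the one the paper uses: because $\mset'$ is non-crossing, if some pair $(s,t)$ crosses the type-$3$ constraint (so $s\preceq x'$ and $y'\preceq t$) and some other pair $(s',t')$ crosses the type-$4$ constraint (so $x'\preceq s'$ and $t'\preceq y'$), then $(s,t)$ and $(s',t')$ cross each other. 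Hence, apart from the degenerate pair $(x',y')$ itself (which, if present, lies in both sets), one of the two crossing sets is empty, and a single constraint of type $3$ or type $4$ already bounds $\Delta_{\mset'}(x,y)$ by $\ell(x',y')\le\ell$. Once you insert this non-crossing argument, your proof goes through and coincides with the paper's.
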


\begin{proof}
Let $\tset'=\tset(\mset')$, and let $\phi$ be the current proper drawing of $G$ with respect to $\mset$. Notice that $\phi$ is not necessarily a proper drawing of $G$ with respect to $\mset'$, since the vertices of $\tset\setminus\tset'$ may lie on $\Gamma$. We can obtain a proper drawing $\phi'$ of $G$ with respect to $\mset'$ by moving all such terminals inside the disc, so they no longer lie on $\Gamma$. It is immediate to verify that the demand pairs in $\mset'$ remain non-crossing with respect to $\Gamma$ in the new drawing.

As before, we construct a set $W'$ of points of $\Gamma$, by first adding all points corresponding to the terminals of $\tset'$ to $W'$. Additionally, for every segment $\beta$ of $\Gamma\setminus\tset'$, we add an arbitrary point of $\beta$ to $W'$. It now remains to show that for every pair $x,y$ of points in $W'$, for every $(x,y)$-chain in $\phi'$, the redundancy of the chain (with respect to $\mset'$) is non-negative.

Assume otherwise. Let $x,y\in W'$ be any pair of points, and let $\aset=(A_1,\ldots,A_r)$ be an $x$-$y$ chain in the drawing $\phi'$, such that the redundancy of $\aset$ is negative. We modify the chain $\aset$, by replacing $A_1$ and $A_r$ with elements $A_1'$ and $A_r'$, as follows.

If $A_1$ is a vertex of $\tset'$, then we let $A_1'=A_1$, and we let $x'=A_1'$. Otherwise, $A_1$ is a region of $G$ in the drawing $\phi'$. Let $v=A_2$, so $v\in V(G)$. Then $v$ must lie on a boundary of some peripheral region $R$ of $G$ in the drawing $\phi$. Let $A_1'=R$, let $\beta(R)$ be the segment of $\Gamma$ that serves as part of the boundary of the region $R$, and let $x'\in W$ be the point lying on the interior of $\beta(R)$. Similarly, if $A_r$ is a vertex of $\tset'$, then we let $A_r'=A_r$, and we let $y'=A_r'$.  Otherwise, $A_r$ is a region of $G$ in the drawing $\phi'$. Let $v'=A_{r-1}$, so $v'\in V(G)$. Then $v'$ must lie on a boundary of some peripheral region $R'$ of $G$ of the drawing $\phi$. Let $A'_r=R'$, let $\beta(R')$ be the segment of $\Gamma$ that serves as part of the boundary of the region $R'$, and let $y'\in W$ be the point lying on the interior of $\beta(R)$. We then obtain a sequence $\aset'=(A_1',A_2,\ldots,A_{r-1},A_r')$ (it may not be a valid chain for the drawing $\phi$, since some of the elements $A_i$, for $1<i<r'$, may be peripheral with respect to $\phi$). We can then construct a $G$-normal curve $\gamma'$ in the original drawing $\phi$ of $G$, connecting $x'$ to $y'$, such that $\gamma'\cap V(G)$ only contains the vertices that participate in $\aset'$, and $\gamma'$ is contained in $\Sigma$. Let $\ell$ denote the length of the chain $\aset$, and let $\Delta=|\Delta_{\mset'}(x,y)|$. We can assume that $\Delta\geq 1$, since otherwise it is immediate to verify that $\Delta\leq \ell$. Then $\ell(x',y')\leq \ell$, from the definition of $\ell(x',y')$. Notice that one of the segments of $\Gamma\setminus\set{x,x'}$ contains no terminals of $\tset'$, and the same holds for one of the segments of $\Gamma\setminus\set{y,y'}$. We now consider three cases.
 
Assume first that both $x'$ and $y'$ lie on $\tilde{\sigma}$. Then pair $(s,t)\in \Delta_{\mset'}(x,y)$ iff $s$ belongs to the sub-path $(x',y')$ of $\tilde{\sigma}$. Since we assumed that $\Delta\geq 1$, we get that $\Delta_{\mset}(x',y')\geq 1$,  so constraint $(1,x',y',\ell(x',y'))$ belongs to $\kset$, and we get that $\Delta\leq \ell(x',y')\leq \ell$, a contradiction.

The case where $x',y'\in \tilde{\sigma}'$ is dealt with similarly.

Assume now that $x'\in \tilde{\sigma}$ and $y'\in \tilde{\sigma}'$. As before, since we assumed that $\Delta\geq 1$, we get that $\Delta_{\mset}(x',y')\geq 1$. Consider the two corresponding type-3 and type-4 constraints $K=(3,x',y',\ell(x',y')), K''=(4,x',y',\ell(x',y'))\in \kset$. Let $L_{x'},R_{x'}$ be the segments of $\tilde{\sigma}$ from its first endpoint to $x'$, and from $x'$ to its last endpoint, respectively, where both segments include $x'$. Define segments $L_{y'},R_{y'}$ of $\tilde{\sigma}'$ similarly. Recall that a demand pair $(s,t)$ crosses $K$ iff $s\in L_{x'}$ and $t\in R_{y'}$, and it crosses $K'$ iff $s\in R_{x'}$ and $t\in L_{y'}$. Let $\mset_1\subseteq\mset'$ be the set of the demand pairs crossing $\kset$, and let $\mset_2\subseteq \mset'$ be the set of the demand pairs crossing $\kset'$. Since $\mset'$ is a non-crossing set of demand pairs, it is easy to verify that either $\mset_1\setminus\set{(x',y')}=\emptyset$, or $\mset_2\setminus\set{(x',y')}=\emptyset$. We assume without loss of generality that it is the latter. Notice that if $(x',y')\in \mset'$, then it belongs to both $\mset_1$ and $\mset_2$. Constraint $(3,x',y',\ell(x',y'))$ then ensures that $|\mset_1\cup \mset_2|\leq |\mset_1|\leq \ell(x',y')\leq \ell$. It is easy to verify that $\Delta_{\mset'}(x,y)\subseteq\mset_1\cup \mset_2$, and so $\Delta\leq \ell$, a contradiction.
\end{proof}

\subsubsection{Special Case: $G$ is $2$-connected}\label{subsec: NDP on disc 2-connected}
The goal of this section is to prove the following theorem.

\begin{theorem}\label{thm: 1-split to general 2-connected} There is an efficient algorithm, that, given any $2$-connected planar graph $G$, and a set $\mset$ of $k$ demand pairs for $G$, such that $G$ is properly drawn on a disc with respect to $\mset$, returns an $O(\log k)$-approximate solution to problem $(G,\mset)$. 
\end{theorem}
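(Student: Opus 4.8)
The plan is to reduce the general $2$-connected case to the special case handled by Theorem~\ref{thm: routing on a disc 1-split 2-connected}. First I would apply Lemma~\ref{lem: getting r-split demand pairs on a disc} to partition $\mset$ into $4\ceil{\log k}$ subsets $\mset^1,\ldots,\mset^{4\ceil{\log k}}$, where each $\mset^i$ is $r_i$-split with respect to the boundary $\Gamma$ of the disc. Since $\opt(G,\mset)\le \sum_i \opt(G,\mset^i)$, there is some index $i^*$ with $\opt(G,\mset^{i^*})\ge \opt(G,\mset)/(4\ceil{\log k})$, and it suffices to compute a routing of $\Omega(\opt(G,\mset^{i^*}))$ demand pairs for this single $r$-split instance; we then output the best solution obtained over all $i$.

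So fix an $r$-split instance $(G,\mset')$, with the associated partition $\mset'=\mset'_1\cup\cdots\cup\mset'_r$ and the partition $\sigma_1,\ldots,\sigma_{2r}$ of $\Gamma$ into consecutive segments, where for each $z$ the sources of $\mset'_z$ lie on $\sigma_{2z-1}$ and the destinations on $\sigma_{2z}$ (or vice versa). The key point is that the segments $\sigma_{2z-1}\cup\sigma_{2z}$ for different $z$ occupy disjoint arcs of $\Gamma$ appearing in this circular order, so any routing of $(G,\mset')$ decomposes across the indices $z$. I would split into two cases according to whether most of $\opt(G,\mset')$ comes from ``small'' groups $\mset'_z$ (say with $|\mset'_z|$ bounded by an absolute constant) or from ``large'' groups. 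If the small groups dominate, then because $G$ is connected we can route one demand pair from each such group using a $G$-normal path near the boundary arc $\sigma_{2z-1}\cup\sigma_{2z}$; these paths are disjoint because the arcs are disjoint, giving $\Omega(\opt(G,\mset'))$ routed pairs after accounting for the constant. Here I would want to be a little careful that such a near-boundary route actually exists in $G$ itself; since $G$ is $2$-connected and drawn on the disc with the relevant terminals on $\Gamma$, the two terminals of the chosen pair in $\mset'_z$ are connected by a path contained in the subgraph drawn in the closed region between $\sigma_{2z-1}\cup\sigma_{2z}$ and a nearby $G$-normal arc, and disjointness across $z$ follows.

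If instead the large groups dominate, we cannot afford to route only one pair per group, so the plan is to handle each $\mset'_z$ with $|\mset'_z|$ large as its own $1$-split sub-instance. The natural move is: for group $z$, consider the sub-disc cut off by a chord $\gamma_z$ (a $G$-normal curve) separating $\sigma_{2z-1}\cup\sigma_{2z}$ from the rest of $\Gamma$, let $G_z$ be the portion of $G$ inside this sub-disc, and contract all vertices lying outside — or rather, observe that any routing inside $G$ of pairs from $\mset'_z$ can be assumed to stay inside $G_z$ after rerouting, so $\opt(G_z,\mset'_z)=\Theta(\opt(G,\mset'_z))$. Then $(G_z,\mset'_z)$ is a $1$-split instance (with the two segments $\sigma_{2z-1},\sigma_{2z}$ plus the chord playing the role of $\sigma,\sigma'$ after a relabeling of boundary) and if $G_z$ is connected we can invoke Theorem~\ref{thm: routing on a disc 1-split 2-connected} to route $\opt(G_z,\mset'_z)/8$ pairs; summing over $z$ and using disjointness of the sub-discs gives $\Omega(\opt(G,\mset'))$ pairs overall. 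The main obstacle I anticipate is the bookkeeping needed to make ``reroute any optimal solution to lie inside $G_z$'' precise — in particular ensuring $G_z$ is connected (which is where $2$-connectivity of $G$ should be used, guaranteeing the boundary arc together with the inside is connected) and that the boundary of $G_z$ can be made into a proper drawing on a disc with the right $1$-split structure. Combining the two cases and the $O(\log k)$ loss from Lemma~\ref{lem: getting r-split demand pairs on a disc} yields the claimed $O(\log k)$-approximation.
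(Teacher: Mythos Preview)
Your first step---applying Lemma~\ref{lem: getting r-split demand pairs on a disc} and reducing to a single $r$-split instance---matches the paper. But the geometric decomposition into sub-discs $G_z$ is where your plan diverges, and the obstacle you flag is not bookkeeping: it is a real gap.

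The claim that ``any routing inside $G$ of pairs from $\mset'_z$ can be assumed to stay inside $G_z$'' is false in general. An optimal routing of $\mset'_z$ in $G$ may use paths that wander through the entire disc; cutting along a chord $\gamma_z$ can destroy an arbitrarily large fraction of $\opt(G,\mset'_z)$, and you have no control over the length of any $G$-normal chord relative to $\opt$. The ``small groups'' case has the same defect: $\Gamma$ is the disc boundary, not a cycle of $G$, so there is no near-boundary path in $G$ along each arc, and no reason that one chosen pair per group can be routed on node-disjoint paths. Two-connectivity gives you two internally disjoint paths between any two vertices, but not paths confined to prescribed regions.

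The paper avoids geometric cutting entirely. It runs the $1$-split algorithm of Theorem~\ref{thm: routing on a disc 1-split 2-connected} for each group $\mset_z$ on the \emph{whole} graph $G$, obtaining a routable set $\mset'_z$ with $|\mset'_z|\ge\opt(G,\mset_z)/8$. It then thins each $\mset'_z$ by keeping every third pair (in the order of sources along $\sigma_{2z-1}$), and proves that the union $\mset''=\bigcup_z\mset''_z$ is globally routable by verifying the Robertson--Seymour chain criterion directly. The point is that for any $x,y\in W$, the set $\Delta_{\mset''}(x,y)$ meets at most two groups $j,j'$ (those whose segments contain $x$ and $y$); since each $\mset'_j$ is individually routable in $G$, each contributes at most $\ell$ pairs to $\Delta_{\mset'}(x,y)$, and the thinning by $3$ brings the combined count down to at most $\ell$. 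Two-connectivity enters only to guarantee $\ell\ge 2$ whenever $|\Delta_{\mset''}(x,y)|\ge 2$, which is what makes the arithmetic close when both contributions are small.
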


\begin{proof}
Let $|\mset|=k$, and let $z=4\ceil{\log k}$. We use Lemma~\ref{lem: getting r-split demand pairs on a disc} to compute a partition $\mset^1,\ldots,\mset^z$ of $\mset$ into $z$ disjoint subsets, so that for each $1\leq i\leq z$, $\mset^i$ is $r_i$-split for some integer $r_i$. We prove the following lemma.

\begin{lemma}\label{lem: solving split instance}
There is an efficient algorithm, that, given an index $1\leq i\leq z$, computes a solution to instance $(G,\mset^i)$ routing $\Omega(\opt(G,\mset^i))$ demand pairs.
\end{lemma}

Theorem~\ref{thm: 1-split to general 2-connected}  then easily follows from Lemma~\ref{lem: solving split instance}, since there is some index $1\leq i\leq z$, for which $\opt(G,\mset^i)=\Omega(\opt(G,\mset)/\log k)$. In the rest of this section, we focus on proving Lemma~\ref{lem: solving split instance}.

For simplicity, we denote $\mset^i$ by $\mset$ and $r_i$ by $r$. We assume that we are given a partition $\set{\mset_1,\ldots,\mset_r}$ of $\mset$, and a collection $\sigma_1,\ldots,\sigma_{2r}$ of disjoint segments of $\Gamma$, such that for all $1\leq j\leq r$, for every demand pair $(s,t)\in \mset_j$, $s\in \sigma_{2j-1}$ and $t\in \sigma_{2j}$.

Clearly, for each $1\leq j\leq r$, set $\mset_j$ is $1$-split. Therefore, we can use the algorithm from Section~\ref{subsec: NDP on discs 2-connected 1-split} in order to compute a set $\pset_j$ of disjoint paths, routing a subset $\mset'_j\subseteq \mset_j$ of demand pairs, with $|\mset'_j|\geq \Omega(\opt(G,\mset_j))$. Since $\sum_{j=1}^r\opt(G,\mset_j)\geq \opt(G,\mset)$, we get that $\sum_{j=1}^r|\mset_j'|\geq \Omega(\opt(G,\mset))$.

For every $1\leq j\leq r$, we compute a subset $\mset''_j\subseteq \mset'_j$ of demand pairs, as follows. If $|\mset'_j|\leq 3$, then we let $\mset''_j$ contain any demand pair from $\mset'_j$. Otherwise, we assume that $\mset'_j=\left((s^j_1,t^j_1),\ldots,(s^j_{q_j},t^j_{q_j})\right )$, where the vertices $s^j_1,s^j_2,\ldots,s^j_{q_j}$ appear in this counter-clock-wise order on $\sigma_{2j-1}$. We then add to $\mset''_j$ all demand pairs $(s^j_a,t^j_a)$, where $a=1$ modulo $3$. Notice that $\sum_{j=1}^r|\mset_j''|\geq \sum_{j=1}^r\Omega(|\mset_j'|)\geq \Omega(\opt(G,\mset))$. It is now enough to prove that all demand pairs in $\mset''=\bigcup_{j=1}^r\mset''_j$ can be routed in $G$. It is easy to verify that the demand pairs in $\mset''$ are non-crossing, since for each $1\leq j\leq r$, the demand pairs in $\mset'_j$ are non-crossing, and pairs that belong to different sets $\mset_j$ cannot cross. Let $\tset''=\tset(\mset'')$, and let $\phi''$ is a proper drawing of $G$ in $\Sigma$ with respect to $\mset''$, obtained from the original drawing $\phi$ by moving all terminals of $\tset\setminus\tset''$ to the interior of the disc. 
We define a set $W$ of points on $\Gamma$ as before: it contains all terminals of $\tset''$, and for every segment of $\Gamma\setminus\tset''$, $W$ contains an arbitrary point in the interior of the segment.

Let $(x,y)\in W$ be any pair of points, and let $\aset=(A_1,A_2,\ldots,A_p)$ be any $(x,y)$-chain. Let $\ell$ denote the length of $\aset$. It is now enough to prove that $|\Delta_{\mset''}(x,y)|\leq \ell$.
Assume first that for some $1\leq j\leq z$, $x,y\in \sigma_{2j-1}\cup \sigma_{2j}$. Then $\Delta_{\mset''}(x,y)$ only contains demand pairs from $\mset''_j$. Since all demand pairs in $\mset''_j$ can be routed in $G$ via node-disjoint paths, $|\Delta_{\mset''}(x,y)|\leq \ell$.

Assume now that $x\in \sigma_{2j-1}\cup \sigma_{2j}$ and $y\in \sigma_{2j'-1}\cup \sigma_{2j'}$ for some $j\neq j'$. Then $\Delta_{\mset''}(x,y)$ only contains demand pairs from $\mset''_j\cup\mset''_{j'}$. Let $\nset_1=\Delta_{\mset''}(x,y)\cap \mset''_j$ and $\nset_2=\Delta_{\mset''}(x,y)\cap \mset''_{j'}$.  Let $\nset'_1=\Delta_{\mset'}(x,y)\cap \mset'_j$, and let $\nset'_2=\Delta_{\mset'}(x,y)\cap \mset'_{j'}$. Since both $\mset'_j$ and $\mset'_{j'}$ can be routed in $G$ via node-disjoint paths, $|\nset'_1|,|\nset'_2|\leq \ell$. From our selection of the sets $\mset''_j$, $\mset''_{j'}$ of demand pairs, $|\nset_1|\leq \max\set{|\nset'_1|/2,1}$, and $|\nset_2|\leq \max\set{|\nset'_2|/2,1}$. Since graph $G$ is $2$-vertex connected, we can assume that either $|\Delta_{\mset''}(x,y)|\leq 1$, or $\ell\geq 2$. In the former case, $\ell\geq |\Delta_{\mset''}(x,y)|$ since the graph is connected.  In the latter case, we are now guaranteed that $|\Delta_{\mset''}(x,y)|=|\nset'_1|+|\nset'_2|\leq \ell$.
\end{proof}
\subsubsection{The General Case}\label{NDP on grids general}
In this section, we complete the proof of Theorem~\ref{thm: routing on a disc and cyl main} for \NDPdisc. We assume that the input graph $G$ is connected, since otherwise we can solve the problem separately on each connected component of $G$.

%

We use a block-decomposition of $G$. Recall that a block of $G$ is a maximal $2$-node-connected component of $G$. A block-decomposition of $G$ is a tree $\tau$, whose vertex set consists of two subsets, $V(\tau)=V_1\cup V_2$, where $V_1$ is the set of all cut-vertices of $G$, and $V_2$ contains a vertex $v_B$ for every block $B$ of $G$. There is an edge between a vertex $u\in V_1$ and $v_B\in V_2$ iff $u\in v_B$. We choose an arbitrary vertex $r\in V_1$ as the root of $\tau$. We assume that the value of the optimal solution is at least $10$, as otherwise we can route a single demand pair. We then discard from $\mset$ all demand pairs in which $r$ participates -- this changes the value of the optimal solution by at most $1$.

Over the course of the algorithm, we will gradually change the tree $\tau$, by deleting vertices from it. Given any vertex $u$ in the current tree $\tau$, we let $\tau_u$ denote the subtree of $\tau$ rooted at $u$, and we let $G_u$ be the subgraph of $G$ induced by the union of all blocks $B$ with $v_B\in V(\tau_u)$.  For every block $B$, we denote by $p(B)$ the unique vertex of $B$ that serves as the parent of the vertex $v_B$ in $\tau$. As the tree $\tau$ changes, so do the trees $\tau_u$ and the graphs $G_u$.

If any block $B$ of $G$ contains all the terminals, then we can apply the algorithm from Theorem~\ref{thm: 1-split to general 2-connected} to instance $(B,\mset)$ of \NDPdisc to obtain an $O(\log k)$-approximate solution. Otherwise, for every block $B$, if we denote by $\Gamma(B)$ the set of all cut vertices of $G$ that belong to $B$, and by $\tset(B)$ the set of all terminals lying in $B$, then we can draw $B$ inside a disc, so that the vertices of $\Gamma(B)\cup \tset(B)$ lie on its boundary.

We gradually construct a solution $\pset$ to the \NDPdisc instance $(G,\mset)$, starting from $\pset=\emptyset$. Throughout the algorithm, we ensure that all paths in $\pset$ are disjoint from the vertices of $G_r$, where $r$ is the root vertex of $\tau$, and $G_r$ is computed with respect to the current tree $\tau$. Clearly, the invariant holds at the beginning of the algorithm.

We maintain a collection $\bset$ of blocks, that is initialized to $\bset=\emptyset$. For every block $B\in \bset$, we will define two subsets, $\mset_B,\mset'_B\subseteq \mset$ of demand pairs, so that eventually, $\set{\mset_B,\mset'_B\mid B\in \bset}$ is a partition of $\mset$. 

 In every iteration, we consider all vertices $v_B\in V_2$ that belong to the current tree $\tau$, such that $G_{v_B}\setminus p(B)$ contains at least one demand pair $(s,t)\in \mset$. Among all such vertices, we choose one that is furthest from the root of the tree, breaking ties arbitrarily. Let $v_B$ be the selected vertex.
We add $B$ to $\bset$, and we define two new subsets of demand pairs $\mset_B,\mset'_B$, as follows. Set $\mset_B$ contains all demand pairs $(s,t)$ with $s,t\in G_{v_B}\setminus p(B)$. Set $\mset_B'$ contains all demand pairs $(s,t)$ with exactly one of $s,t$ lying in $G_{v_B}\setminus p(B)$, while the other terminal must belong to $G_r$. Notice that any path connecting any demand pair in $\mset'_B$ must use the vertex $p(B)$.

Next, we construct a new instance of the \NDPdisc problem, on the graph $B$. The corresponding set of demand pairs, that we denote by $\nset_B$, is defined as follows. Consider any demand pair $(s,t)\in \mset_B$. We define a new demand pair $(s',t')$ representing $(s,t)$, with $s',t'\in V(B)$, and define two paths: $Q(s)$ connecting $s$ to $s'$, and $Q(t)$ connecting $t$ to $t'$. If $s\in V(B)$, then we let $s'=s$, and $Q(s)=\set{s}$. Otherwise, we let $s'\in V_1$ be the unique child of the vertex $v_B$ in $\tau$, such that $s\in G_{s'}$. Notice that $s'$ must be a vertex of $B$. We then let $Q(s)$ be any simple path connecting $s$ to $s'$ in graph $G_{s'}$. We define the vertex $t'\in B$, and a path $Q(t)$ connecting $t$ to $t'$ similarly. Notice that it is possible that $s'=t'$. Let $\nset_B=\set{(s',t')\mid (s,t)\in \mset_B}$ be the resulting set of demand pairs. All vertices participating in the demand pairs in $\nset_B$ belong to $B$.  Consider the \NDP instance $(B,\nset)$. It is immediate to verify that we can draw $B$ in a disc, so that all vertices participating in the demand pairs in $\mset(B)$ lie on the boundary of the disc, and clearly $B$ is $2$-connected. 
We need the following immediate observation.

\begin{observation}\label{obs: old to new pairs}
$\opt(B,\nset_B)\geq \opt(G,\mset_B)$.
\end{observation}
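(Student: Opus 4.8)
The plan is to show that any solution to the new instance $(B, \nset_B)$ can be lifted back to a solution of $(G, \mset_B)$ of at least the same cardinality, which immediately gives $\opt(B,\nset_B) \le \opt(G,\mset_B)$; combined with the reverse direction this yields equality, but only the inequality $\opt(B,\nset_B)\ge\opt(G,\mset_B)$ is claimed, wait --- in fact the statement is $\opt(B,\nset_B)\ge \opt(G,\mset_B)$, so I must argue that an optimal solution in $G$ for $\mset_B$ can be \emph{pushed into} $B$. First I would take an optimal solution $\pset^*$ for instance $(G,\mset_B)$, routing $\opt(G,\mset_B)$ demand pairs of $\mset_B$ via node-disjoint paths in $G$. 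The key structural fact is that every demand pair $(s,t)\in\mset_B$ has both endpoints in $G_{v_B}\setminus p(B)$, and $G_{v_B}$ is attached to the rest of $G$ only through the single cut vertex $p(B)$. Since $p(B)\notin G_{v_B}\setminus p(B)$ and the paths routing $\mset_B$ have their endpoints strictly inside $G_{v_B}\setminus p(B)$, I would first argue that we may assume without loss of generality that each such path is entirely contained in $G_{v_B}$: if a path $P$ routing $(s,t)$ were to leave $G_{v_B}$, it would have to pass through $p(B)$ at least twice (once to exit, once to return), and we could shortcut $P$ at $p(B)$ to obtain a path staying in $G_{v_B}$; if it passes through $p(B)$ exactly once it cannot return, contradicting that $t\in G_{v_B}\setminus p(B)$. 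Hence we may take $\pset^*$ to consist of node-disjoint paths inside $G_{v_B}$.

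Next I would push each such path from $G_{v_B}$ into the block $B$. Recall $G_{v_B}$ is obtained from $B$ by gluing, at each vertex $s'\in V(B)\cap V_1$ that is a child of $v_B$, the subgraph $G_{s'}$, and these glued pieces $G_{s'}$ meet $B$ only at the single cut vertex $s'$, and they are pairwise disjoint except at $v_B$'s level. Consider a path $P\in\pset^*$ routing $(s,t)$ inside $G_{v_B}$. Whenever $P$ enters one of the glued subgraphs $G_{s'}$ for a child vertex $s'$, it must enter and exit through $s'$ (the only connection point), so the portion of $P$ inside $G_{s'}\setminus s'$ forms one or more ``excursions'' that begin and end at $s'$; we can delete all such excursions, contracting $P$ at $s'$, and obtain a path $\hat P$ contained in $B$ whose endpoints are the vertices $s'$ and $t'$ defined in the construction (namely, if $s\in V(B)$ then $s'=s$, otherwise $s'$ is the child of $v_B$ with $s\in G_{s'}$, and similarly for $t$). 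Thus $\hat P$ routes the new demand pair $(s',t')\in\nset_B$ inside $B$. Doing this for every path in $\pset^*$ yields a collection $\hat\pset$ of paths in $B$; I would then verify that these remain node-disjoint: two distinct paths of $\pset^*$ were node-disjoint in $G_{v_B}$, and the contraction operation only removes vertices (replacing excursions into $G_{s'}\setminus s'$ by nothing, keeping only the shared boundary vertex $s'$), so the only possible new intersection is at some child vertex $s'$ --- but $s'$ can appear on a contracted path only as an endpoint $s'$ or $t'$ of a new demand pair, and since distinct demand pairs of $\mset_B$ have distinct terminals and the map $(s,t)\mapsto(s',t')$ is well-defined, any two new paths can share $s'$ only if it is an endpoint of both, which contradicts that the original paths in $\pset^*$ were disjoint unless they had already shared the terminal in $G$. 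I would handle the edge case $s'=t'$ (when the new demand pair is trivial) by noting such a pair is routed by a zero-length path, which is trivially disjoint from everything except itself.

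The main obstacle I anticipate is the careful bookkeeping in the node-disjointness argument after contraction: one must be sure that two paths which separately dipped into the \emph{same} glued subgraph $G_{s'}$ do not collide at $s'$ after contraction. The clean way to see this is that at most one original path can use any non-endpoint vertex, so a collision at $s'$ can only happen if $s'$ is an endpoint of the contracted path --- i.e.\ if $s'\in\{s'_{\text{pair}}, t'_{\text{pair}}\}$ for two different new pairs, which means $s'$ was itself a terminal (or cut vertex playing the role of a terminal image) shared by two original routing paths, impossible since $\pset^*$ is a valid node-disjoint routing of \emph{distinct} demand pairs with distinct terminals. Once this is nailed down, $|\hat\pset|=|\pset^*|=\opt(G,\mset_B)$ and $\hat\pset$ is a feasible solution for $(B,\nset_B)$, giving $\opt(B,\nset_B)\ge\opt(G,\mset_B)$ as required. (The construction's definition of the paths $Q(s),Q(t)$ is exactly the inverse operation, which is what will be used elsewhere to go from a $B$-solution back to a $G$-solution, but for this observation only the direction above is needed.)
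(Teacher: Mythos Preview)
Your approach is correct and is essentially the paper's, though the paper's version is much terser: take a simple optimal solution $\tpset$ for $(G,\mset_B)$, observe that each path $P$ routing $(s,t)$ must contain both $s'$ and $t'$ and that the sub-path of $P$ between $s'$ and $t'$ lies entirely in $B$ (a simple path cannot exit $B$ through any cut vertex and return), and truncate each $P$ to that sub-path. Your ``excursion removal'' is this truncation in disguise --- a simple path has no mid-path excursions through a cut vertex, only a prefix and a suffix to be dropped.

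One caution on your disjointness discussion: the claim that a child cut vertex $u$ ``can appear on a contracted path only as an endpoint'' is false --- if $s,t\in V(B)$ and $P$ happens to pass through $u$ while staying inside $B$, then $u$ is an internal vertex of $\hat P$. Fortunately this is irrelevant: since each $\hat P$ is a sub-path of the corresponding $P$, node-disjointness of the $\hat P$'s follows immediately from node-disjointness of the $P$'s, with no case analysis needed.
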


\begin{proof}
Let $\tpset$ be the optimal solution to instance $\opt(G,\mset_B)$. We can assume that all paths in $\tpset$ are simple. Let $P\in \tpset$ be any such path, and assume that $P$ connects some demand pair $(s,t)\in \mset_B$. Then it is easy to see that $s',t'\in P$, and moreover, the segment of $P$ between $s'$ and $t'$ is contained in $B$. By appropriately truncating every path in $\tpset$, we can obtain a solution to instance $(B,\nset_B)$ of the \NDP problem of the same value.
\end{proof}

Since $B$ is $2$-vertex connected, and can be drawn in a disc with all vertices participating in the demand pairs in $\nset_B$ lying on the disc boundary, we can apply the algorithm from Theorem~\ref{thm: 1-split to general 2-connected} to instance $(B,\nset_B)$, to compute a set $\pset(B)$ of node-disjoint paths, routing a subset of at least $\Omega(\opt(B,\nset_B)/\log k)$ demand pairs of $\mset_B$ in $B$. We can assume that $|\pset(B)|\geq 1$, since otherwise we can route any demand pair in $\nset_B$.
We would like to ensure that all paths in $\pset(B)$ avoid the vertex $p(B)$. If $|\pset(B)|=1$, then, since $B$ is $2$-vertex connected, we can re-route the unique path in $\pset(B)$ inside $B$, so that its endpoints remain the same, but it avoids the vertex $p(B)$ (since $G_{v_B}\setminus p(B)$ contains at least one demand pair, we can ensure that the endpoints of the path are distinct from $p(B)$). Otherwise, we discard from $\pset(B)$ the path that uses vertex $p(B)$, if such exists. By concatenating the paths in $\pset(B)$ with the paths in $\set{Q(s),Q(t)\mid (s,t)\in \mset(B)}$, we obtain a collection $\pset'(B)$ of at least $\Omega(\opt(B,\nset)/\log k)$ node-disjoint paths, connecting demand pairs in $\mset(B)$. We add the paths in $\pset'(B)$ to $\pset$, and delete from $\tau$ all vertices of $\tau_{v_B}$. Since we have ensured that the paths in $\pset'(B)$ are disjoint from $p(B)$, the invariant that the paths in $\pset$ are disjoint from the new graph $G_r$ continues to hold. The algorithm terminates when no demand pair $(s,t)\in \mset$ is contained in $G_r$. We claim that the resulting collection $\set{\mset(B),\mset'(B)\mid B\in \bset}$ of sets of demand pairs partitions $\mset$. Indeed, consider any demand pair $(s,t)\in \mset$, and consider the last iteration $i$ when both $s,t\in G_r$. Let $v_B$ be the vertex that was processed in the following iteration. If both $s,t\in G_{v_B}\setminus p(B)$, then $(s,t)$ was added to $\mset_B$. Otherwise, exactly one of $s,t$ belongs to $G_{v_B}\setminus p(B)$, while the other belongs to $G_r$, so $(s,t)$ was added to $\mset'_B$. We now obtain a set $\pset$ of disjoint paths, routing a subset of vertices in $\mset$. We  show that $|\pset|\geq \Omega(\opt(G,\mset)/\log k)$. Let $\pset^*$ be the optimal solution to instance $\opt(G,\mset)$, and let $\mset^*$ be the set of the demand pairs routed by $\pset^*$. For every block $B\in \bset$, let $\tmset(B)=\mset^*\cap \mset_B$, and let $\tmset'_B=\mset^*\cap \mset'_B$. 

From Observation~\ref{obs: old to new pairs}, set $\pset'(B)$ of paths routes at least $\Omega(\opt(B,\nset_B)/\log k)\geq \Omega(\opt(G,\mset_B)/\log k)\geq \Omega(|\tmset_B|/\log k)$ demand pairs. Therefore, $|\pset|=\sum_{B\in \bset}|\pset'(B)|\geq \sum_{B\in \bset}\Omega(|\tmset_B|/\log k)$. On the other hand, as observed above, for every block $B\in \bset$, $|\tmset'_B|\leq 1$ (since all paths routing the pairs in $\tmset'_B$ contain vertex $p(B)$), while $|\pset'(B)|\geq 1$. Therefore, $|\pset|\geq \sum_{B\in \bset}|\tmset'_B|$. Overall, $|\pset|\geq \sum_{B\in \bset}\Omega((|\tmset_B|+|\tmset_B'|)/\log k)=\Omega(|\mset^*|/\log k)$.
\label{-------------------------------------------subsec: on a cylinder-------------------------------------------------------}
\subsection{Approximation Algorithm for \NDPcyl}
In this section we prove Theorem~\ref{thm: routing on a disc and cyl main} for \NDPcyl.
Recall that in the \NDPcyl problem, we are given a cylinder $\Sigma$, obtained from the sphere, by removing two open discs from it. We denote the boundaries of the two discs by $\Gamma_1$ and $\Gamma_2$, respectively. We assume that we are given a graph $G$, drawn on $\Sigma$, and a set $\mset$ of demand pairs. We denote by $S$ and $T$ the sets of all source and all destination vertices participating in the demand pairs in $\mset$. We say that a drawing of $G$ is proper with respect to $S$ and $T$ iff the vertices of $S$ lie on $\Gamma_1$, the vertices of $T$ lie on $\Gamma_2$, and no other edges or vertices of $G$ intersect $\Gamma_1$ or $\Gamma_2$. We can assume without loss of generality that we are given a proper drawing of the input graph $G$ on $\Sigma$ with respect to $S$ and $T$. We also assume that the graph $G$ is connected, as otherwise we can solve the problem for each connected component of $G$ separately. We assume that we know the value $\opt$ of the optimal solution to instance $(G,\mset)$, and a demand pair $(s^*,t^*)\in \mset$ that is routed by some optimal solution to instance $(G,\mset)$. We can make these assumptions by solving the problem for every possible value of $\opt$ between $1$ and $|\mset|$, and every choice of $(s^*,t^*)\in \mset$. It is enough to show that the algorithm returns the desired solution when the value $\opt$ and the pair $(s^*,t^*)$ are guessed correctly. We can also assume that $\opt>10$, since otherwise routing a single demand pair gives a desired solution. 

We define a set $W_1$ of points on $\Gamma_1$, as follows. First, we add to $\Gamma_1$ all points corresponding to the vertices of $S$. Next, for every segment of $\Gamma_1\setminus W_1$, we add an arbitrary point on the segment to $W_1$. We define a set $W_2$ of points on $\Gamma_2$ similarly, using $T$ instead of $S$. Our first step is to compute the shortest $G$-normal curve $\gamma^*\subseteq \Sigma$, connecting a point of $W_1$ to a point of $W_2$. We consider two cases.

Assume first that the length of $\gamma^*$ is less than $\opt/2$. Then we can cut the cylinder $\Sigma$ along the curve $\gamma^*$, deleting from $G$ all vertices lying on $\gamma^*$, to obtain a disc $\Sigma'$, and a drawing of $G$ on $\Sigma'$, where all terminals of $S\cup T$ lie on the boundary of the disc. It is easy to see that the value of the optimal solution of the resulting problem instance is at least $\opt/2$. We can now apply the $O(\log k)$-approximation algorithm for \NDPdisc from Section~\ref{subsec: routing on a disc} to obtain an $O(\log k)$-approximate solution to the new \NDPdisc instance, which in turn gives an $O(\log k)$-approximate solution to the original instance of \NDPcyl.

We assume from now on that the length of $\gamma^*$ is at least $\opt/2$. 
We reduce the problem to \DPSP. Let $\sigma$ be cycle, whose vertices are $W_1$, and they are connected in the order of their appearance on $\Gamma_1$. We delete the edge of $\sigma$ incident on $s^*$, that appears after $s^*$ in the counter-clock-wise traversal of $\Gamma_1$, and direct all edges of the resulting path away from $s^*$. We define a path $\sigma'$ similarly - start with the cycle, whose vertices are $W_2$, and they are connected in the order of their appearance on $\Gamma_2$. Delete the edge incident on $t^*$, that appears after $t^*$ in the counter-clock-wise traversal of $\Gamma_2$, and direct all edges of the resulting path away from $t^*$. Our next step is to define a set $\kset$ of constraints for the \DPSP problem instance. The instance we construct will only contain type-1 and type-2 constraints.

Let $a^*$ be the last vertex of $\sigma$. The first constraint that we add to $\kset$ is $(1,s^*,a^*,\opt/2)$. This constraint ensures that overall we will not attempt to route more than $\opt/2$ demand pairs.

Consider now any pair $x,y\in W_1$ of points. 
Let $\beta_1(x,y)$ and $\beta_2(x,y)$ be the two segments of $\Gamma_1$ whose endpoints are $x$ and $y$. For $i\in \set{1,2}$, we let $\ell_i(x,y)$ be the smallest number of vertices that need to be removed from $G$, in order to disconnect all vertices of $\beta_i(x,y)\cap S$ from the vertices of $T$ - this value can be computed efficiently using standard minimum cut algorithms. 
%
%
Let $w_i=\ell_i(x,y)$. We assume w.l.o.g. that $x$ lies before $y$ on $\sigma$. If $s^*$ is not an inner vertex on $\beta_i(x,y)$, then we add the constraint $(1,x,y,w_i)$ to $\kset$. Otherwise, we add two constraints: $(1,s^*,x,w_i)$ and $(1,y,a^*,w_i)$ to $\kset$. For every pair of points $(x,y)\in W_1$, we therefore add at most three type-1 constraints to $\kset$.

We process all pairs of points $(x,y)\in W_2$, and add corresponding type-2 constraints to $\kset$ similarly, except that we use $t^*$ instead of $s^*$. This finishes the description of the \DPSP instance.
We start with the following easy observation.

\begin{observation}\label{obs: from cylinder optimal to DPSP}
Let $\pset^*$ be the optimal solution to the \NDP instance $(G,\mset)$, such that $(s^*,t^*)$ is routed by $\pset^*$, and let $\mset^*$ be the set of the demand pairs routed by $\pset^*$. Let $\mset^{**}\subseteq \mset^*$ be any subset of $\floor{|\mset^*|/2}$ demand pairs. Then $\mset^{**}$ is a feasible solution to the \DPSP instance $(\sigma,\sigma',\mset,\kset)$. \end{observation}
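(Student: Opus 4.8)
\textbf{Proof plan for Observation~\ref{obs: from cylinder optimal to DPSP}.}
The plan is to verify that $\mset^{**}$ is non-crossing (with respect to $\sigma$ and $\sigma'$) and that it satisfies every constraint in $\kset$. For non-crossingness, recall from the preliminaries that any subset of $\mset$ that can be simultaneously routed via node-disjoint paths in an \NDPcyl instance is non-crossing in the sense of the cylinder; since $\mset^{**}\subseteq\mset^*$ and $\mset^*$ is routed by $\pset^*$, the set $\mset^{**}$ is non-crossing on the cylinder. I will then argue that, after cutting $\sigma$ and $\sigma'$ at $s^*$ and $t^*$ respectively, the cylinder notion of non-crossing translates exactly into the \DPSP notion of non-crossing with respect to the directed paths $\sigma,\sigma'$: the pair $(s^*,t^*)$ being routed ``anchors'' the cut points, so no routed pair can straddle the deleted edges, and the cyclic non-crossing order on $\Gamma_1,\Gamma_2$ becomes a linear non-crossing order on $\sigma,\sigma'$.

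Next I would handle the constraints. The very first constraint $(1,s^*,a^*,\opt/2)$ bounds the number of source vertices of routed pairs lying on all of $\sigma$ by $\opt/2$; since $|\mset^{**}|=\floor{|\mset^*|/2}\le\floor{\opt/2}\le\opt/2$, this holds. For a type-1 constraint arising from a pair $x,y\in W_1$: by construction $w_i=\ell_i(x,y)$ is the minimum number of vertices whose removal disconnects $\beta_i(x,y)\cap S$ from $T$. Any demand pair of $\mset^{**}$ whose source lies in the relevant sub-interval of $\sigma$ has its source in $\beta_i(x,y)\cap S$, and it is routed by a path in $\pset^*$ reaching $T$; by Menger's theorem at most $w_i$ node-disjoint such paths exist, so at most $w_i$ such pairs appear in $\mset^{**}\subseteq\mset^*$. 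The case split on whether $s^*$ is an inner vertex of $\beta_i(x,y)$ is precisely the bookkeeping needed to express ``source in $\beta_i(x,y)$'' as an interval condition on the \emph{path} $\sigma$ (which was obtained by cutting at $s^*$); whichever of the two sub-intervals we consider, the set of sources it captures is contained in $\beta_i(x,y)\cap S$, so the same Menger bound applies. The type-2 constraints are symmetric, with $T$, $\Gamma_2$, $t^*$, $\sigma'$ in place of $S$, $\Gamma_1$, $s^*$, $\sigma$.

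The main obstacle I expect is the translation between the cyclic geometry on $\Gamma_1\cup\Gamma_2$ and the linear/interval geometry on $\sigma\cup\sigma'$ — in particular, checking carefully that the ``cut at $s^*$, resp.\ $t^*$'' operation is consistent across the non-crossing claim and all the constraint cases, and that the case analysis on the position of $s^*$ within $\beta_i(x,y)$ correctly covers the interval $(x,y)$ of $\sigma$ in both orientations. This is not deep but requires attention to orientation conventions (counter-clockwise traversals, which edge is deleted, the direction the edges are oriented). Once that correspondence is pinned down, each constraint verification reduces to a one-line Menger/cardinality argument as sketched above, and the observation follows.
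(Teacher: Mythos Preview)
Your proposal is correct and follows essentially the same approach as the paper: both argue non-crossingness via the anchoring role of the routed pair $(s^*,t^*)$, handle the global constraint $(1,s^*,a^*,\opt/2)$ by the cardinality bound $|\mset^{**}|\le\opt/2$, and verify each remaining type-1/type-2 constraint by the Menger/min-cut interpretation of $\ell_i(x,y)$ applied to the routed set $\mset^*\supseteq\mset^{**}$. Your anticipated ``main obstacle'' (the cyclic-to-linear bookkeeping around $s^*$ and $t^*$) is exactly what the paper glosses over in one line, and your treatment of it is accurate.
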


\begin{proof}
Since we assume that the demand pair $(s^*,t^*)$ is routed by $\pset^*$, and since the demand pairs in $\mset^*$ must be non-crossing with respect to $\Gamma_1$ and $\Gamma_2$, due to the way in which we have defined the paths $\sigma$ and $\sigma'$, set $\mset^{**}$ must be non-crossing with respect to $\sigma$ and $\sigma'$.

Recall that we have added the constraint $(1,s^*,a^*,|\opt|/2)$ to $\kset$, where $s^*$ and $a^*$ are the endpoints of $\sigma$. Since $|\mset^{**}|\leq |\mset^*|/2=|\opt|/2$, set $\mset^{**}$ satisfies this constraint.

Consider now any pair $(x,y)$ of points in $W_1$, and fix some $i\in \set{1,2}$. Since set $\mset^*$ of demand pairs is routable in $G$, the number of the source vertices of the demand pairs in $\mset^*$ that lie on $\beta_i(x,y)$ is at most $\ell_i$, as the value of the minimum cut separating the vertices of $S\cap \beta_i(x,y)$ from the vertices of $T$ is $\ell_i$. It is now easy to verify that all type-1 constraints in $\kset$ corresponding to the pair $(x,y)$ are satisfied by $\mset^*$, and hence by $\mset^{**}$. Type-2 constraints are dealt with similarly.
\end{proof} 

Our next step is to use the algorithm from Theorem~\ref{thm: approximate DPSP}, in order to compute a set $\mset'\subseteq \mset$ of non-crossing (with respect to $\sigma$ and $\sigma'$) demand pairs, satisfying all constraints in $\kset$, with $|\mset'|\geq \Omega(\opt(G,\mset)/\log k)$. We assume that $\mset'=\set{(s_1,t_1),\ldots,(s_r,t_r)}$, where $s_1,\ldots,s_r$ appear in this circular order on $\Gamma_1$, and if $(s^*,t^*)\in\mset'$, then $s_1=s^*$. If $|\mset'|\leq 10$, then a routing of a single demand pair gives a feasible solution to the \NDP problem instance and achieves the desired approximation ratio. Therefore, we assume that $|\mset'|>10$. We let $\mset''$ contain all demand pairs $(s_j,t_j)$, where $j=0$ modulo $8$, and $1\leq j\leq r$. Notice that $\mset''$ excludes the pair $(s^*,t^*)$. Let $S''$ and $T''$ be the sets of the source and the destination vertices of the demand pairs in $\mset''$. We need the following theorem.

\begin{theorem}\label{thm: max flow}
There is a set $\pset$ of $|S''|$ node-disjoint paths in $G$, connecting the vertices of $S''$ to the vertices of $T''$.
\end{theorem}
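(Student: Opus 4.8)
The plan is to establish the existence of the $|S''|$ node-disjoint paths via Menger's theorem, by exhibiting that no small vertex cut can separate $S''$ from $T''$ in $G$. Suppose for contradiction that there is a set $Y\subseteq V(G)$ with $|Y|<|S''|$ such that $G\setminus Y$ contains no path from a vertex of $S''\setminus Y$ to a vertex of $T''\setminus Y$. The key idea is that such a cut $Y$, together with the proper drawing of $G$ on the cylinder $\Sigma$, would give rise to a short $G$-normal curve separating the two cuffs, which we would then translate into a violated constraint of the \DPSP instance $(\sigma,\sigma',\mset,\kset)$ — contradicting the fact that $\mset''\subseteq\mset'$ satisfies all constraints in $\kset$.

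The main steps I would carry out are as follows. First, I would use the fact that the demand pairs in $\mset''$ are non-crossing with respect to $\sigma,\sigma'$, and that they form a ``spaced-out'' subset of $\mset'$ (one in every eight pairs, excluding $(s^*,t^*)$), so that consecutive source vertices $s_j,s_{j+8}$ of $\mset''$ have at least $7$ source vertices of $\mset'$ strictly between them on $\Gamma_1$, and similarly for the destinations on $\Gamma_2$. Next, given the hypothetical cut $Y$ with $|Y|<|S''|$, I would consider, for each demand pair $(s_j,t_j)\in\mset''$ not routed across $Y$, the position of the separating portion of $Y$: since $G$ is connected and properly drawn on the cylinder, the cut $Y$ can be viewed as inducing, after taking the union of the relevant faces, a $G$-normal curve $\Gamma^*$ in $\Sigma$ that, on each ``side'', separates a contiguous arc of sources on $\Gamma_1$ from a contiguous arc of destinations on $\Gamma_2$. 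I would then locate a pair of points $x,y\in W_1$ (endpoints of an appropriate arc $\beta_i(x,y)$) such that the number of sources of $\mset''$ on $\beta_i(x,y)$ exceeds $\ell_i(x,y)$ — where $\ell_i(x,y)$ is precisely the minimum number of vertices needed to disconnect $S\cap\beta_i(x,y)$ from $T$, and hence is at most $|Y|$. This yields a violated type-1 constraint (or, if $s^*$ lies in the relevant arc, one of the two split constraints $(1,s^*,x,w_i)$ or $(1,y,a^*,w_i)$), contradicting the feasibility of $\mset''$ for $(\sigma,\sigma',\mset,\kset)$. The symmetric argument using $W_2$ and type-2 constraints handles the case where the obstruction is on the $\Gamma_2$ side.

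The hard part will be the topological bookkeeping in the middle step: converting an abstract vertex cut $Y$ separating $S''$ from $T''$ into a concrete statement about arcs of $\Gamma_1$ and $\Gamma_2$ and a single pair $(x,y)$ whose minimum cut value $\ell_i(x,y)$ is dominated by $|Y|$. The subtlety is that $Y$ need not correspond to a ``nice'' curve — one must carefully pick the side of the curve $\Gamma^*$ along which the cut is witnessed, use the spacing of $\mset''$ inside $\mset'$ to guarantee that the count of $\mset''$-sources strictly exceeds the cut value (the factor-$8$ spacing and the minimum-cut definition of $\ell_i$ must be reconciled here), and handle the wrap-around at $s^*$ (respectively $t^*$) that is exactly why $\kset$ contains the split constraints $(1,s^*,x,w_i)$ and $(1,y,a^*,w_i)$ rather than a single constraint $(1,x,y,w_i)$. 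Once this correspondence is set up cleanly, the contradiction with feasibility of $\mset''$ is immediate, and Menger's theorem then delivers the desired set $\pset$ of $|S''|$ node-disjoint paths.
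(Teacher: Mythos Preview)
Your proposal is essentially correct and follows the same approach as the paper: assume a small cut $Y$ via Menger, convert it to a short $G$-normal curve, locate a segment of $\Gamma_1$ (or $\Gamma_2$) on which the number of $\mset''$-sources exceeds the corresponding cut value $\ell_i(x,y)$, and derive a violated type-1 (or type-2) constraint, using the factor-$8$ spacing and the split constraints at $s^*$ exactly as you anticipate.

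One point the paper makes explicit that you leave inside ``topological bookkeeping'': the curve induced by $Y$ (viewed on the sphere obtained by capping the cylinder) must cross \emph{exactly one} of $\Gamma_1,\Gamma_2$. It must cross at least one, since otherwise it would separate the two cuffs and hence meet all $\opt$ disjoint routing paths, forcing its length to be at least $\opt>|S''|$. It cannot cross both, because we are in the case where the shortest $G$-normal curve from $\Gamma_1$ to $\Gamma_2$ has length at least $\opt/2>|S''|$. With this in hand, the curve decomposes into arcs with both endpoints on (say) $\Gamma_1$, and a pigeonhole over these arcs produces a single segment $\gamma^*$ whose length is strictly smaller than the number of $S''$-vertices it cuts off from $\Gamma_2$; its endpoints give your $(x,y)$. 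This is the precise mechanism behind your step ``locate a pair of points $x,y\in W_1$'', and it is worth stating explicitly rather than folding into the bookkeeping.
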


We prove Theorem~\ref{thm: max flow} below, after we complete the proof of Theorem~\ref{thm: routing on a disc and cyl main} using it.
Denote $|\mset''|=\kappa^*$, and recall that from our constraints, $\kappa^*\leq |\opt|/4$. Our first step is to construct a collection $\zset=(Z_1,\ldots,Z_{\kappa^*})$ of $\kappa^*$ tight concentric cycles around $\Gamma_1$, where we consider a planar drawing of $G$, whose outer face contains $\Gamma_2$. In order to do so, we denote $\Gamma_1=Z_0$, and perform $\kappa^*$ iteration, where in the $i$th iteration we construct the cycle $Z_i$. In order to execute the $i$th iteration, for $1\leq i\leq \kappa^*$, we contract $D(Z_{i-1})$ into a single vertex $s$, to obtain a new graph $H_i$. We view the face of $H_i$ containing $\Gamma_2$ as the outer face in the planar drawing of $H_i$, and we then let $Z_i=\mincycle(H,s)$. Since the length of the shortest $G$-normal curve connecting a point of $\Gamma_1$ to a point of $\Gamma_2$ is at least $|\opt|/2>\kappa^*$, it is easy to verify that we can successfully complete the construction of the set $\zset$ of $\kappa^*$ cycles, so that all cycles are disjoint from the vertices lying on $\Gamma_2$.

Our next step is to re-route the paths in $\pset$, so that they become monotone with respect to $\zset$. In order to do so, we construct a graph $H$, as follows. We start with the union of the paths in $\pset$ and the cycles in $\zset$. We then add a new cycle $Y$ connecting the vertices of $T''$ in the order in which they appear on $\Gamma_2$. We can now use Theorem~\ref{thm: monotonicity for shells} to find a collection $\pset'$ of $\kappa^*$ node-disjoint paths in $H$, connecting vertices of $S''$ to vertices of $T''$, that are monotone with respect to $\zset$. It is easy to see that the paths of $\pset'$ are contained in $G$.

We assume that $\pset'=\set{P_1,P_2,\ldots,P_{\kappa^*}}$, and for each $1\leq i\leq \kappa^*$, we denote by $a_i\in S''$ and $b_i\in T''$ the endpoints of $P_i$. We assume that $a_1,a_2,\ldots,a_{\kappa^*}$ appear in this circular order on $\Gamma_1$. Consider the source vertex $a_1\in S''$, and let $(a_1,b_{1+z})\in \mset''$ be the demand pair in which $a_1$ participates. We can assume without loss of generality that $z\leq \kappa^*/2$, since if this is not the case, we can re-order the vertices $a_1,\ldots,a_{\kappa^*}$ in the opposite direction around $\Gamma_1$. Observe that for all $1\leq j\leq \floor{\kappa^*/2}$, pair $(a_j,b_{j+z})\in \mset''$, since the demand pairs in $\mset''$ are non-crossing. We now show how to route all demand pairs in $\set{(a_j,b_{j+z})}_{1\leq z\leq \floor{\kappa^*/2}}$. Fix some $1\leq j\leq \floor{\kappa^*/2}$. We view the paths in $\pset'$ as directed from $S''$ to $T''$. Let $P_j'$ be the sub-path of $P_j$ from $a_j$ to the first vertex $v_j$ of $P_j$ lying on $Z_{\kappa^*-j+1}$. Let $P''_{j+z}$ be the sub-path of $P_{j+z}$, from the last vertex $v'_{j+z}$ of $P_{j+z}$ lying on $Z_{\kappa^*-j+1}$ to $b_{j+z}$. Finally, let $Q_j$ be the segment of $Z_{\kappa^*-j+1}$ between $v_j$ to $v'_{j+z}$, that intersects the paths $P_j,P_{j+1},\ldots,P_{j+z}$, but no other paths of $\pset'$. By combining $P_j'$, $P''_{j+z}$ and $Q_j$, we obtain a path $P^*_j$, connecting $a_j$ to $b_{j+z}$. We then set $\pset^*=\set{P^*_j\mid 1\leq j\leq z}$. 
It is immediate to verify that the paths in $\pset^*$ are node-disjoint. Therefore, we obtain a solution routing $\Omega(\opt(G,\mset))$ demand pairs in $\mset$. 
%
%
It now only remains to prove Theorem~\ref{thm: max flow}.

\begin{proofof}{Theorem~\ref{thm: max flow}}
Assume for contradiction that there is no such set of paths. Denote $|S''|=\kappa$ and recall that $\kappa<\opt/2$. Then there is a set $Y$ of at most $\kappa-1$ vertices, so that $G\setminus Y$ contains no path connecting a vertex of $S$ to a vertex of $T$. Consider the drawing of $G$ on the sphere $\Sigma''$, obtained from the drawing of $G$ on the cylinder $\Sigma$, by adding back the two caps with the boundaries $\Gamma_1$ and $\Gamma_2$. We can then construct a simple closed $G$-normal curve $\gamma$ of length at most $\kappa-1$ in $\Sigma''$, so that all vertices of $S''$ lie in one of the discs of $\Sigma''$ with boundary $\gamma$, and all vertices of $T''$ lie on the other disc (but the vertices of $S''$ and $T''$ may lie on $\gamma$). Notice that $\gamma$ has to cross $\Gamma_1$ or $\Gamma_2$. Indeed, otherwise, since there are $\opt$ node-disjoint paths connecting the vertices of $\Gamma_1$ to the vertices of $\Gamma_2$, all such paths would have to cross $\gamma$, and so the length of $\gamma$ should be at least $\opt>\kappa$, a contradiction. Moreover, since the length of the shortest $G$-normal curve connecting a point of $\Gamma_1$ to a point of $\Gamma_2$ is at least $\opt/2>\kappa$, curve $\gamma$ may not intersect both $\Gamma_1$ and $\Gamma_2$. We assume without loss of generality that $\gamma$ crosses $\Gamma_1$, and not $\Gamma_2$.

Let $\rset$ be the set of segments of $\gamma$, obtained by deleting all points of $\gamma$ that lie outside the cylinder $\Sigma$ (that is, the points that lie in the interior of the cap whose boundary is $\Gamma_1$). All curves in $\rset$ are mutually disjoint. For each curve $\gamma'\in \rset$, let $S(\gamma')\subseteq S''$ be the set of the source vertices that $\gamma'$ separates from $\Gamma_2$ in the cylinder $\Sigma$. Then $\bigcup_{\gamma'\in \rset}S(\gamma')=S''$ must hold, and so there must be some curve $\gamma^*\in \rset$, such that the length of $\gamma^*$ is less than $|S(\gamma^*)|$. Let $\ell^*$ denote the length of $\gamma^*$. 

Let $x',y'$ be the endpoints of the curve $\gamma^*$, so $x',y'\in \Gamma_1$. If $x'\in W_1$, then we let $x=x'$. Otherwise, we let $x$ be the closest to $x'$ point of $W_1\setminus S''$ on $\Gamma_1$. We define point $y'$ for $y$ similarly. Consider the two segments $\beta_1(x,y)$ and $\beta_2(x,y)$ of $\Gamma_1$, whose endpoints are $x$ and $y$. One of the segments, say $\beta_1(x,y)$ must contain all points of $S(\gamma^*)$. Since the vertices lying on $\gamma^*$ separate all vertices of $S(\gamma^*)$ from the vertices of $T$, $\ell_1(x,y)\leq \ell^*$. Assume without loss of generality that $x$ lies before $y$ on $\sigma$.

Assume first that $s^*$ does not lie on $\beta_1(x,y)$, and consider the corresponding constraint $K=(1,x,y,w_1)\in \kset$. As observed above, $w_1\leq \ell^*$. Due to the way we have selected the subset $\mset''\subseteq \mset'$ of the demand pairs, we are guaranteed that $|S(\gamma^*)|\leq w_1/2\leq \ell^*$, a contradiction.

Assume now that $s^*$ lies on $\beta_1(x,y)$. Using the same reasoning as above, $w_1\leq \ell^*$. Let $\beta_1',\beta_1''$ be the segments of $\beta_1(x,y)$ between $x$ and $s^*$, and between $s^*$ and $y$, respectively, where the last segment excludes $s^*$. Let $\Delta_1,\Delta_1'$ be the number of the source vertices lying on $\beta_1$ and $\beta_1''$ respectively, that participate in the demand pairs in $\mset'$. Since the constraints $(1,s^*,x,w_1)$ and $(1,y,a^*,w_1)$ belong to $\kset$, $\Delta_1+\Delta_1'\leq 2\ell^*$. Due to the way we have selected the subset $\mset''\subseteq \mset'$ of the demand pairs, we are guaranteed that $|S(\gamma^*)|\leq \max\set{w_1,1}\leq \ell^*$, a contradiction.
\end{proofof}

\section{Proofs Omitted from Section~\ref{sec: alg overview}}\label{sec: proofs alg overview}
\subsection{Proof of Theorem~\ref{thm: wld}}

Let $\tau = \frac{w^*}{512 \cdot \log k}$. The algorithm is iterative and maintains a set $U$ of vertices. We start with $U=\emptyset$, and in every iteration we add vertices to $U$. The algorithm terminates when no vertices have been added to $U$ in an iteration. Each iteration is executed as follows:

Let $\hset$ denote the set of all connected components of $G\setminus U$, and for each $H \in \hset$, let $\tset_H \subseteq \tset(\mset)$ denote the set of all terminals contained in $V(H)$. For each $H \in \hset$ with $|V(H)\cap \tset_H|>3$, we use  Observation~\ref{obs: sparsest cut} to compute a vertex cut $(A,C,B)$ in $H$ whose sparsity $\phi$ with respect to $\tset_H$ is within a factor $\alphasc$ from the value of the sparsest cut, and $C\cap \tset_H=\emptyset$. If $\phi < \tau$, then we add the vertices of $C$ to $U$. This finishes the description of the iteration execution.

Consider the set $\set{G_1, \ldots, G_r}$ of all components of $G \setminus U$ once the algorithm terminates. For all $1\leq j\leq r$, let $\mset^j = \set{(s_i,t_i) \in \mset \mid s_i,t_i \in V(G_j)}$. The algorithm output is $\set{(G_j, \mset^j)}_{j=1}^r$.

Clearly, the algorithm is efficient and terminates after at most $n+1$ iterations, since the size of $U$ increases after each iteration, except for the last one. It is easy to see that no edge connects a vertex of $G_j$ to a vertex of $G_{j'}$ for any $1 \le j \neq j' \le r$, from the definition of the graphs $G_j$.

We now verify that for all $1\leq j\leq r$, $(G_j, \mset^j)$ is a well-linked instance. Fix some $1 \le j \le r$, and let $\tset', \tset'' \subseteq \tset(\mset^j)$ be two disjoint equal-sized subsets of $\tset(\mset^j)$. Assume for contradiction that there are fewer than $\alphaWL\cdot |\tset'|$ node-disjoint paths in $\mset^j$ connecting the vertices of $\tset'$ to the vertices of $\tset''$. Then by Menger's Theorem, there exists a set $Z \subseteq V(G_j)$ of fewer than $\alphaWL\cdot |\tset'|$ vertices in $G^j$, such that there is no path from $\tset'$ to $\tset''$ in $G_j\setminus Z$. Note that we may assume that $Z \cap \tset(\mset^j) = \emptyset$. Otherwise, since the terminals have degree $1$ and form an independent set in $G$, we may simply replace each terminal in $Z$ with its unique neighbor. Consider a vertex cut $(A', C', B')$ of $G_j$, defined as follows: $C' = Z$, $A'$ is the union of the vertices of all components of $G_j \setminus Z$ intersecting $\tset'$, and $B' = V(G_j) \setminus (A' \cup C')$. This is a valid vertex cut, with $\tset'\subseteq A'$ and $\tset''\subseteq B'$. The sparsity of cut $(A', C', B')$ with respect to $\tset(\mset^j)$ is at most $\frac{|Z|}{\min\set{|\tset_H\cap A'|,|\tset_H\cap B'|}} < \frac{\alphaWL |\tset'|}{\min\set{|\tset'|,|\tset''|}}=\alphaWL$. Therefore, the algorithm from Observation~\ref{obs: sparsest cut} should have returned a cut of sparsity less than $\alphasc\cdot \alphaWL = \tau$, a contradiction.

We now show that $|U| = \left |V(G)\setminus\left(\bigcup_{j=1}^rV(G_j)\right )\right |\leq \frac{w^*\cdot |\mset|}{64}$. Consider a single iteration of the algorithm. Let $H$ be any component of $G\setminus U$, and let $\tset_H$ be the set of all terminals contained in $V(H)$. Suppose the algorithm computes a vertex cut $(A,C,B)$ in $H$ with respect to $\tset_H$ of sparsity $\phi < \tau$, and adds the vertices of $C$ to $U$. Assume without loss of generality that $|A \cap \tset_H| \le |B \cap \tset_H|$. Since we have assumed that $C\cap \tset_H=\emptyset$, the sparsity of the cut $\phi = \frac{|C|}{|A \cap \tset_H|}$. Moreover, since $\phi < \tau$, $|C| < \tau \cdot |A \cap \tset_H|$ must hold. We  charge the value of $\tau$ to every terminal in $A \cap \tset_H$, so that the total amount charged to the terminals of $A\cap \tset_H$ is $\tau \cdot |A \cap \tset_H|\geq |C|$. This charging scheme is repeated whenever a set of vertices is added to $U$ throughout the different iterations and components considered by the algorithm. Note that a terminal may be charged during multiple iterations, but at most once per iteration. Clearly, the sum of the total charges to all of the terminals is at least $|U|$. Also, each terminal $t \in \tset(\mset)$ can be charged at most $\lfloor \log 2k \rfloor$ times, since whenever $t$ is charged and $U$ is updated, the number of terminals in the component containing $t$ in $G\setminus U$ falls by at least a factor $2$. Therefore, the total charge to all terminals is at most $\tau\cdot |\tset(\mset)|\cdot \log 2k  \le (\frac{w^*}{512 \cdot \log k}) \cdot ( \log 2k) \cdot |\tset(\mset)| \leq  \frac{w^* \cdot |\mset|}{64}$, and $|U|$ is also bounded by this amount.

Finally, we verify that $\sum_{j=1}^r|\mset^j|\geq 63|\mset|/64$. Recall that for each demand pair $(s_i, t_i) \in \mset$, the current LP-solution sends $w^*$ flow units between $s_i$ and $t_i$. Let $\tmset=\mset\setminus\left(\bigcup_{j=1}^r\mset^j\right )$. If $(s_i, t_i) \in \tmset$, then all of the $w^*$ flow units between $s_i$ and $t_i$ must pass through $U$. Since $|U| \le \frac{w^* \cdot |\mset|}{64}$, $|\tmset|\leq \frac{|\mset|}{64}$, and the theorem follows.

\subsection{Proof of Lemma~\ref{lem: nj bound}}


We use the notion of treewidth, which is usually defined via tree decompositions. A tree decomposition of a graph $H $ consists of a tree $\tau$ and a collection $\set{\beta_v \subseteq V(H)}_{v \in V(\tau)}$ of vertex subsets, called bags, such that the following two properties are satisfied: (i) for each edge $(a,b) \in E(H)$, there is some node $v \in V(\tau)$ with both $a,b \in \beta_v$ and (ii) for each vertex $a \in V$, the set of all nodes of $\tau$ whose bags contain $a$ form a non-empty connected subtree of $\tau$. The \emph{width} of a given tree decomposition is $\max_{v \in V(\tau)}\set{ |\beta_v|} - 1$, and the treewidth of a graph $H$, denoted by $\tw(H)$, is the width of a minimum-width tree decomposition for $H$.

\begin{claim}\label{claim: treewidth}
For each $1\leq j\leq r$, $\tw(G_j) \ge \frac{W_j}{2^{12} \cdot \alphasc \cdot \log k}$.
\end{claim}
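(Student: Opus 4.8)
The plan is to derive a lower bound on $\tw(G_j)$ directly from the $\alphaWL$-well-linkedness of the terminal set $\tset(\mset^j)$ in $G_j$, using the standard fact that well-linkedness of a large set of vertices forces large treewidth. Concretely, I would argue by contradiction: suppose $\tw(G_j) < \frac{W_j}{2^{12}\cdot\alphasc\cdot\log k}$. Recall that $W_j = w^*|\mset^j|$, that $\tset(\mset^j)$ has $2|\mset^j|$ terminals (two per demand pair), and that $\alphaWL = \frac{w^*}{512\cdot\alphasc\cdot\log k}$, so $W_j = w^*|\mset^j| = 256\cdot\alphasc\cdot\log k\cdot\alphaWL|\mset^j| \cdot \tfrac{1}{128}$; the point is that the claimed bound $\frac{W_j}{2^{12}\alphasc\log k}$ is a small constant times $\alphaWL|\tset(\mset^j)|$. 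A tree decomposition of small width $w$ can be used, via the standard ``balanced separator from a tree decomposition'' argument, to find a bag $\beta_v$ of size at most $w+1$ whose removal splits $\tset(\mset^j)$ into pieces each containing at most $|\tset(\mset^j)|/2$ terminals. Grouping these pieces into two parts, each of size at most $|\tset(\mset^j)|/2$ and together containing at least $|\tset(\mset^j)| - (w+1)$ terminals, we can pick two disjoint equal-sized subsets $\tset_1,\tset_2 \subseteq \tset(\mset^j)$ with $|\tset_1| = |\tset_2| \ge \Omega(|\tset(\mset^j)|)$ that are separated by $\beta_v$. If $w+1 < \alphaWL|\tset_1|$, this contradicts the well-linkedness definition, which guarantees $\alphaWL|\tset_1|$ node-disjoint paths between $\tset_1$ and $\tset_2$. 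Choosing the constants correctly (the factor $2^{12}$ and the $\alphasc$ are there precisely to absorb the constant losses in the balanced-separator argument and in passing from $|\tset(\mset^j)|$ to $|\tset_1|$) yields the claim.

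Once Claim~\ref{claim: treewidth} is established, I would invoke the polynomial (in fact linear) Grid-Minor Theorem for planar graphs of Robertson, Seymour and Thomas (and its effective versions), which states that any planar graph of treewidth $t$ contains a $(\Omega(t)\times\Omega(t))$-grid as a minor. Applying this to $G_j$ with $t = \tw(G_j) = \Omega(W_j/\log k)$, we get that $G_j$ contains an $(\ell\times\ell)$-grid minor with $\ell = \Omega(W_j/\log k)$. Such a grid minor has $\ell^2 = \Omega(W_j^2/\log^2 k)$ vertices, and any graph containing an $H$-minor has at least $|V(H)|$ vertices. Since all terminals in $G_j$ have degree $1$ (by the preprocessing in Section~\ref{sec: alg overview}), they cannot serve as branch sets for internal grid vertices of degree $\ge 3$; more simply, $|V(G_j)| \ge \ell^2$, and the number of non-terminal vertices $N_j$ is at least $|V(G_j)|$ minus the number of terminals. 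But the number of terminals is $2|\mset^j| = 2W_j/w^* \le 2W_j$ (since $w^* \le 1$), which is negligible compared to $\ell^2 = \Omega(W_j^2/\log^2 k)$ whenever $W_j$ exceeds a suitable constant; in the degenerate case where $W_j$ is bounded by a constant, the bound $N_j \ge \Omega(W_j^2/\log^2 k)$ holds trivially since $G_j$ is connected and nonempty. Putting this together gives $N_j \ge \Omega(W_j^2/\log^2 k)$.

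The main obstacle I anticipate is getting the constants in Claim~\ref{claim: treewidth} to come out cleanly, i.e.\ verifying that the specific constant $2^{12}$ in the statement is genuinely enough to cover all the losses: the $+1$ and the factor loss when extracting a balanced separator of size $\le \tw+1$ from a tree decomposition, the factor loss when regrouping the $\le |\tset|/2$-sized terminal pieces into two nearly-balanced halves, and the subtraction of the $w+1$ separator vertices themselves from $|\tset(\mset^j)|$. This is purely a bookkeeping exercise but it is where the argument could go wrong if one is careless; I would handle it by first proving a clean version with unspecified constants and then checking that $\alphaWL = \frac{w^*}{512\alphasc\log k}$ together with $W_j = w^*|\mset^j|$ and $|\tset(\mset^j)| = 2|\mset^j|$ makes $\frac{W_j}{2^{12}\alphasc\log k}$ comfortably smaller than $\alphaWL \cdot |\tset_1|$ for the resulting $|\tset_1|$. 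A secondary (minor) point is to cite the correct effective form of the planar Grid-Minor Theorem so that the hidden constant in $\ell = \Omega(W_j/\log k)$ is a genuine universal constant; the excerpt already references \cite{RST94,DH05} for exactly this, so I would use those.
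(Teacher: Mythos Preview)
Your proof of Claim~\ref{claim: treewidth} is correct and matches the paper's approach exactly: assume small treewidth, find a bag that balanced-separates the terminals (the paper states this as an auxiliary Claim~\ref{claim: good vertex}), extract two disjoint equal-size terminal subsets of size $\lceil|\tset(\mset^j)|/8\rceil$ on opposite sides of that bag, and contradict $\alphaWL$-well-linkedness. Your worry about the constant $2^{12}$ resolves cleanly: with $|\tset_1|\ge|\tset(\mset^j)|/8=|\mset^j|/4$ one gets $\alphaWL|\tset_1|\ge \frac{w^*|\mset^j|}{2048\,\alphasc\log k}=\frac{W_j}{2^{11}\alphasc\log k}>|\beta_v|$.

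One slip in your follow-up derivation of Lemma~\ref{lem: nj bound}: you write ``$2W_j/w^*\le 2W_j$ (since $w^*\le 1$)'', but the inequality goes the other way, so the terminal count cannot be bounded this way. Your earlier remark is the right one (and is what the paper uses, tersely): in an $\ell\times\ell$ grid with $\ell\ge 2$ every vertex has degree $\ge 2$, so no branch set in a grid-minor model can be a single degree-$1$ terminal; since branch sets are connected and terminals are degree-$1$ with no terminal--terminal edges, every branch set contains at least one non-terminal vertex, giving $N_j\ge \ell^2$ directly without any subtraction.
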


We first prove the lemma assuming Claim~\ref{claim: treewidth}.
It is well known that any planar graph of large treewidth contains a large grid as a minor~\cite{RST94,DH05}. We use the following theorem.

\begin{theorem}[Theorem 1.2 in~\cite{DH05}]
 For any fixed graph $H$, every $H$-minor-free graph of treewidth $w$ has an $(\Omega(w) \times \Omega(w))$ grid as a minor. 
 \end{theorem}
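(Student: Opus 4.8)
\textbf{Proof plan for Lemma~\ref{lem: nj bound}.}

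The statement to prove is: for each $1 \le j \le r$, $N_j \ge \Omega(W_j^2/\log^2 k)$, where $N_j$ is the number of non-terminal vertices in $G_j$. The plan is to combine a treewidth lower bound for $G_j$ (which follows from well-linkedness of the terminals in $G_j$, via the sparsest-cut/balanced-separator machinery) with the planar grid-minor theorem, and then convert a large grid minor into a large number of non-terminal vertices using the fact that every terminal has degree $1$.

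First I would establish Claim~\ref{claim: treewidth}, i.e. $\tw(G_j) \ge \Omega(W_j / \log k)$. The idea is standard: if a planar graph $G_j$ containing a set $\tset(\mset^j)$ of $2|\mset^j|$ terminals that is $\alphaWL$-well-linked (with $\alphaWL = \Theta(w^*/\log k)$) had small treewidth, then it would admit a balanced vertex separator whose size is roughly the treewidth; this separator would split $\tset(\mset^j)$ into two pieces, at least a constant fraction of which lies on each side, contradicting $\alphaWL$-well-linkedness unless the separator has size $\Omega(\alphaWL \cdot |\mset^j|) = \Omega(w^* |\mset^j|/\log k) = \Omega(W_j/\log k)$. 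More precisely, one uses the well-known fact that a graph of treewidth $w$ has a $\frac{2}{3}$-balanced vertex separator of size $w+1$ with respect to any vertex weighting; applying this with uniform weight on $\tset(\mset^j)$ gives two terminal sets of size $\Omega(|\mset^j|)$ separated by at most $\tw(G_j)+1$ vertices, and well-linkedness forces $\tw(G_j)+1 \ge \alphaWL \cdot \Omega(|\mset^j|)$, hence $\tw(G_j) = \Omega(W_j/\log k)$.

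Next I would invoke the planar grid-minor theorem: $G_j$ is planar, hence $H$-minor-free for $H = K_5$ (say), so by Theorem~1.2 of~\cite{DH05} it contains an $(\Omega(\tw(G_j)) \times \Omega(\tw(G_j)))$ grid minor, i.e. a grid minor of side length $g_j = \Omega(W_j/\log k)$. Finally I would count non-terminal vertices. A grid minor of side $g_j$ is witnessed by $g_j^2$ pairwise-disjoint connected \emph{branch sets} of vertices of $G_j$, so $G_j$ has at least $g_j^2$ vertices. Since every terminal of $\tset(\mset^j)$ has degree $1$ in $G$ (and hence in $G_j$), a terminal cannot be an interior vertex of any branch set witnessing a grid minor of side $\ge 3$, and in fact at most $O(g_j)$ of the branch sets can be single-terminal branch sets occupying a boundary position — but a cleaner argument is simply to note that deleting the (degree-$1$) terminals of a graph changes its treewidth by at most $O(1)$ only when they are isolated-ish; instead I will argue directly that the subgraph $G_j \setminus \tset(\mset^j)$ still has treewidth $\Omega(W_j/\log k)$ (removing degree-$1$ vertices cannot decrease treewidth), hence still contains a grid minor of side $\Omega(W_j/\log k)$, whose $\Omega(W_j^2/\log^2 k)$ branch-set vertices are all non-terminals. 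This yields $N_j = |V(G_j)| - |\tset(\mset^j)| \ge \Omega(W_j^2/\log^2 k)$, as desired.

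The main obstacle is the clean bookkeeping in Claim~\ref{claim: treewidth}: one must carefully go from $\alphaWL$-node-well-linkedness (a statement about node-disjoint paths between \emph{equal-sized} terminal subsets) to a treewidth lower bound, making sure the balanced-separator argument produces two equal-sized (or easily-equalized) terminal subsets on opposite sides of the separator, and tracking the exact constants so that the $2^{12}\cdot\alphasc$ factor in the claim is honored. This requires the observation that one may assume the separator avoids the terminals (replacing each terminal in the separator by its unique neighbor, as terminals have degree $1$ and form an independent set), and an averaging step to extract equal-sized subsets from the two sides. Everything after Claim~\ref{claim: treewidth} — applying~\cite{DH05} and counting branch-set vertices — is routine.
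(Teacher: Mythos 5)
The statement you were asked to prove is the grid-minor theorem itself (Theorem~1.2 of~\cite{DH05}): every $H$-minor-free graph of treewidth $w$ contains an $(\Omega(w)\times\Omega(w))$ grid minor. Your write-up does not prove this; it proves the downstream Lemma~\ref{lem: nj bound}, and in the middle of that argument you explicitly \emph{invoke} ``Theorem~1.2 of~\cite{DH05}'' as a black box. Relative to the assigned target this is circular: the one step that would need justification is assumed. The statement is a deep structural result --- the linear grid-minor bound for minor-closed classes, built on the Robertson--Seymour--Thomas analysis of planar graphs and the structure theory of $H$-minor-free graphs --- and nothing in your proposal (balanced separators, well-linkedness, counting branch sets) supplies any ingredient of its proof. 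The paper itself offers no proof either; it imports the theorem by citation, exactly as you do.

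That said, if your proposal is re-read as a proof of Lemma~\ref{lem: nj bound} (which is clearly what you intended), it matches the paper's argument essentially step for step: a treewidth lower bound $\tw(G_j)=\Omega(W_j/\log k)$ from $\alphaWL$-well-linkedness via a balanced separator in a minimum-width tree decomposition (the paper's ``good vertex'' plus the grouping of components into two terminal sets of size $\lceil\kappa/8\rceil$, and the trick of replacing any terminal in the separator by its unique neighbor), followed by the grid-minor theorem and the observation that degree-$1$ terminals cannot account for the $\Omega(W_j^2/\log^2 k)$ vertices of the grid minor. Your variant of the last step --- deleting the degree-$1$ terminals first, noting the treewidth is unchanged, and extracting the grid minor from the terminal-free subgraph --- is a slightly cleaner way to justify the paper's one-line counting claim. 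But none of this addresses the theorem as stated.
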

 
 Therefore, in particular, every planar graph of treewidth $w$ contains an $\left(\Omega(w) \times \Omega(w)\right)$ grid as a minor. So $G_j$ must contain a grid minor of size $\left(\Omega(W_j/\log k)\times \Omega(W_j/\log k)\right )$. Since all terminals have degree $1$ in $G$, the number of the non-terminal vertices, $N_j\geq  \Omega(W_j^2 / \log^2 k)$. It now remains to prove Claim~\ref{claim: treewidth}.

\begin{proofof}{Claim~\ref{claim: treewidth}}
For convenience, we let $\kappa = |\tset(\mset^j)|$. Assume for contradiction that $\tw(G_j)< \frac{W_j}{2^{12} \cdot \alphasc \cdot \log k}$ and consider a tree decomposition $\tau$ of width less than $\frac{W_j}{2^{12} \cdot \alphasc\cdot\log k}$. Note that $\tau$ cannot be a singleton vertex, as $\kappa > \frac{W_j}{2^{12} \cdot \alphasc\cdot\log k}$, since $W_j = w^* \cdot \kappa/2$ and $w^*<1$. For any given subtree $\tau'$ of $\tau$, we let $\beta(\tau') = \bigcup_{u \in V(\tau')} \beta_u$. We say that a vertex $v \in V(\tau)$ is good iff every component of $G \setminus \beta_v$ contains at most $\kappa/2$ terminals.

\begin{claim} \label{claim: good vertex} There is a good vertex in $\tau$.
\end{claim}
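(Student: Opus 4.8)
The plan is to prove that every tree decomposition $\tau$ of $G_j$ contains a node $v$ whose bag $\beta_v$ is ``balanced'' with respect to the terminal set $\tset(\mset^j)$: removing $\beta_v$ leaves no component with more than $\kappa/2$ terminals. This is the standard centroid-type argument for tree decompositions, and it is the crux of the claim.

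First I would fix an arbitrary root for $\tau$ and a direction of the edges away from the root, and define, for each node $u$, the subtree $\tau_u$ rooted at $u$. For a node $u$ with children $u_1, \ldots, u_d$, the sets of terminals appearing in $\beta(\tau_{u_1}) \setminus \beta_u, \ldots, \beta(\tau_{u_d}) \setminus \beta_u$, together with the terminals in $\beta_u$, cover all of $\tset(\mset^j)$; moreover each connected component of $G_j \setminus \beta_u$ is contained, terminal-wise, in exactly one of these ``branch'' pieces (for the branches hanging off children) or inside the piece corresponding to going toward the root. The key monotone quantity I would track is, for a node $u$, the maximum number of terminals lying strictly on one side of $\beta_u$ — more precisely, for an edge $(u, \text{parent}(u))$ one counts the terminals that appear in $\beta(\tau_u)$ but not in $\beta_{\text{parent}(u)} \cap \beta_u$, i.e.\ the terminals ``below'' $u$.

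Then I would run the usual search procedure: start at the root and repeatedly move to the child $u'$ whose subtree piece contains more than $\kappa/2$ terminals, if such a child exists; when no such child exists, the current node $v$ is good. This procedure terminates because each step strictly decreases the number of terminals below the current node (the branch we just came from, which contains at least $\kappa/2$ terminals, is no longer counted as ``below''), so it cannot cycle. When it stops at $v$, every child-branch of $v$ carries at most $\kappa/2$ terminals, and the ``toward-the-root'' branch of $v$ also carries at most $\kappa/2$ terminals (otherwise the previous step would not have descended into $v$, or $v$ would not be the root). Hence every component of $G_j \setminus \beta_v$ has at most $\kappa/2$ terminals, so $v$ is good. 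The main obstacle — really the only delicate point — is bookkeeping the correspondence between connected components of $G_j \setminus \beta_v$ and the branches of $\tau$ at $v$, using property (ii) of tree decompositions (connectivity of the bags containing a fixed vertex), to be sure that no component straddles two branches and that the terminal counts I am manipulating genuinely bound the component sizes; everything else is the routine centroid argument.

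Once Claim~\ref{claim: good vertex} is established, the excerpt's proof of Claim~\ref{claim: treewidth} proceeds by taking such a good vertex $v$, and using the bag $\beta_v$ together with the well-linkedness of $\tset(\mset^j)$ in $G_j$ to derive a contradiction: $\beta_v$ would be a small balanced separator for a large, $\alphaWL$-well-linked terminal set, forcing $|\beta_v|$ to be at least $\Omega(\alphaWL \cdot \kappa) = \Omega(W_j / \log k)$, contradicting the assumed width bound. That part I would leave to the standard argument already invoked in the excerpt.
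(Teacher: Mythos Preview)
Your proposal is correct and follows essentially the same approach as the paper: root the tree, greedily descend to a child whose subtree holds more than $\kappa/2$ terminals (measured as $|(\beta(\tau_i)\setminus\beta_v)\cap\tset(\mset^j)|$), and stop when no such child exists. The paper leaves the ``toward-the-root'' branch bound as ``immediate to verify,'' whereas you spell it out; and the component--branch correspondence you flag as the delicate point is exactly the observation the paper states up front.
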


\begin{proof}
Note that for a vertex $v \in V(\tau)$, if $\tau_1, \ldots, \tau_\ell$ are the connected subgraphs of $\tau \setminus \set{v}$, then every connected component $C$ of $G_j\setminus \beta_v$ must have $V(C) \subseteq \beta(\tau_p) \setminus \beta_v$ for some $1 \le p \le \ell$. 
Also note that the sets $\set{\beta(\tau_1) \setminus \beta_v, \ldots, \beta(\tau_\ell) \setminus \beta_v}$ are pairwise vertex disjoint.

Root the tree $\tau$ at any vertex $v_0$, and start with $v=v_0$. While the current vertex $v$ has a child $v_i$, such that the sub-tree $\tau_i$ of $\tau$ rooted at $v_i$ has $|(\beta(\tau_i)\setminus \beta_v)\cap \tset(\mset^j)|>|\tset(\mset^j)|/2$, we set $v=v_i$, and continue to the next iteration. It is immediate to verify that when the algorithm terminates, the final vertex $v$ is good.
\end{proof}

Let $v \in V(\tau)$ be a good vertex, and let $C_1, \ldots, C_a$ denote the connected components of $G_j\setminus \beta_v$. For all $1\leq p\leq a$,  let $\kappa_p = |\tset(\mset^j) \cap C_p|$. Note that $|\beta_v| \le \frac{W_j}{2^{12} \cdot \alphasc\cdot\log k} = \frac{w^* \cdot \kappa}{2^{13} \cdot \alphasc \cdot \log k} \le \kappa/4$, and hence $|\tset(\mset^j) \setminus \beta_v| = \sum_{p=1}^a \kappa_p \ge 3\kappa/4$. We claim that there exist two disjoint subsets $\tset', \tset'' \subseteq \tset(\mset^j) \setminus \beta_v$ such that $|\tset'| = |\tset''|=\ceil{\kappa/8}$, while $\tset'$ and $\tset''$ are separated by $\beta_v$ in $G_j$. Let $1 \le b < a$ be the smallest index for which $\sum_{p=1}^b \kappa_p \ge \kappa/8$. Then $\sum_{p=1}^b \kappa_p \le (1/8 + 1/2)\kappa = 5\kappa/8$, and so $\sum_{p=b+1}^a \kappa_p \ge (3/4 - 5/8)\kappa = \kappa/8$. We then let $\tset' \subseteq \left( \bigcup_{p=1}^b C_p \cap \tset(\mset^j) \right)$ and $\tset'' \subseteq \left( \bigcup_{p=b+1}^a C_p \cap \tset(\mset^j) \right)$, respectively, be subsets of size $\ceil{\kappa/8}$, so $\tset'\cap \tset''=\emptyset$, and $\beta_v$ separates $\tset'$ from $\tset''$ in $G_j$.

Since the terminals are $\alphaWL$-well-linked in $G_j$, there is a set of at least $\alphaWL\cdot |\tset'|= \frac{w^* \cdot |\tset'|}{512 \cdot \alphasc \cdot \log k} \ge \frac{w^* \cdot \kappa}{2^{12} \cdot \alphasc \cdot \log k} = \frac{W_j}{2^{12} \cdot \alphasc \cdot \log k}$ node-disjoint paths from $\tset'$ to $\tset''$ in $G_j$. Since $\tset'$ and $\tset''$ are separated by $\beta_v$ in $G \setminus \beta_v$, each path must intersect at least one distinct vertex of $\beta_v$. However, we have assumed that $|\beta_v| < \frac{W_j}{2^{12} \cdot \alphasc\cdot\log k}$, a contradiction.
\end{proofof}



\newpage
\section{Table of Parameters}\label{sec: appendix-params-table}
\renewcommand{\arraystretch}{1.4}
\begin{tabular}{|l|l|p{10cm}|} \hline
$\alphawl$&$\frac{w^*}{512 \cdot \alphasc \cdot \log k}=\Theta(w^*/\log k)$& well-linkedness parameter, where $k$ is the number\\
&& of the demand pairs in the original instance.\\ \hline
$\Delta$&$\ceil{W^{2/19}}$&Minimum distance between terminals in distinct terminal sets of $\xset$.\\ \hline
$\Delta_0$&$\Theta(\Delta\log n)$&Maximum distance between terminals in each terminal set of $\xset$.\\ \hline
$\tau$&$W^{18/19}$&Threshold for light and heavy clusters in $\xset$\\ \hline
$\Delta_1$&$\floor{\Delta/6}$& Depth of shells in Case 1\\ \hline
$\Delta_2$&$\floor{\Delta_1/3}$&Depth of inner shells in Case 1\\ \hline
\end{tabular}

\newpage




\label{------------------------------------------------END-------------------------------}

\end{document}

Leftovers

\section{Case 2: $q\geq x$}\label{sec: case 2}
Let $\mset'\subseteq \mset$ be the set of all demand pairs $(s,t)$ with $s\in A$ or $t\in A$ (or both), so $|\mset'|\geq x/2$. We partition $\mset'$ into two subsets: $\mset_1\subseteq \mset'$ contains all demand pairs $(s,t)$ with both $s,t\in A$, and $\mset_2=\mset'\setminus \mset_1$. For each demand pair $(s,t)\in \mset_2$, we assume w.l.o.g. that $t\in A$ and $s\not\in A$. We denote by $S$ the set of all source vertices and by $T$ the set of all destination vertices for the demand pairs in $\mset_2$.

\paragraph{Bad Terminals}
\begin{definition}
We say that $t\in A$ is a bad terminal iff there is a $G$-normal curve $J(t)$ whose endpoints $u,u'$ lie on $C(t)$, and $\ell(J(t))\leq \Delta/3$, such that for each segment $\sigma$ of $C(t)$ whose endpoints are $u$ and $u'$, the disc $D$ whose boundary is $\sigma\cup J(t)$ contains all discs $D(t')$ for all $t'\in Y\setminus \set{t'}$, $D(t')\subseteq D$.
\end{definition}

\begin{claim}
There is at most one bad terminal $t\in Y$.
\end{claim}

Next, we construct a maximal subset $B=\set{b_1,\ldots,b_{q'}}\subseteq S$ of terminals, such that for all $b_i,b_j\in B$ with $b_i\neq b_j$, $d(b_{i},b_j)\geq \Delta/8$. We then define a new partition $\set{Y_1,\ldots,Y_{q'}}$ of the vertices in $S$ as follows: we add $s\in S$ to the cluster $Y_{i}$, minimizing $d(s,b_i)$ among all $1\leq i\leq q'$.

We say that Case 2a happens if $|\mset_2|\geq \half|\mset'|$, and there is some cluster $Y_i$, containing at least $\sqrt x$ vertices of $S$. Otherwise, we say that Case 2b happens. We deal with each of the two cases separately, starting with Case 2a (which is the more difficult case).

\section*{Case 2a}


Let $\tmset=\set{(s,t)\in \mset_2\mid s\in Y_i}$, so $|\tmset|\geq \sqrt x$. 
For a vertex $v$ and a subset $U$ of vertices of $G$, we define $d(v,U)=\min_{u\in U}\set{d(v,u)}$. We need the following simple observation.

\begin{observation}\label{obs: one terminal of K close to S}
There is at most one vertex $a\in A$ with $d(u,Y_i)\leq \Delta$.
\end{observation}
\begin{proof}
Assume otherwise, and let $a,a'\in A$ with $a\neq a'$, such that $d(a,Y_i),d(a',Y_i)\leq \Delta$. Let $v\in Y_i$ be the vertex with $d(a,v)\leq \Delta$, and define $v'\in Y_i$ similarly for $a'$. From the definition of distances, there is a $G$-normal curve $R$ with endpoints lying on $C_a$ and $C_v$, whose length is at most $\Delta$, and there is a $G$-normal curve $R'$ with endpoints lying on $C_{a'}$ and $C_{v'}$, whose length is at most $\Delta$. Since $v,v'\in Y_i$, $d(v,b_i),d(v',b_i)\leq \Delta/8$. Therefore, there are $G$-normal curves $\tilde R$, $\tilde R'$ of lengths at most $\Delta/8$ each, where the endpoints of $\tilde R$ lie on $C_v$ and $C_{b_i}$, and the endpoints of $\tilde R'$ lie on $C_{v'}$ and $C_{b_i}$. (See Figure~\ref{fig: obs}).

\begin{figure}[h]
 \centering
\scalebox{0.3}{\includegraphics{obs-cut.pdf}}\caption{Illustration for Observation~\ref{obs: one terminal of K close to S}\label{fig: obs}}
\end{figure}

 Combining the curves $R,R',\tilde R$ and $\tilde R'$, together with the appropriate segments of $C_v,C_{b_i}$ and $C_{v'}$ (of length at most $\Delta/2$ each), we obtain a $G$-normal curve connecting a vertex of $C_a$ to a vertex of $C_{a'}$ of length at most $2\Delta+3\Delta/2+\Delta/4<5\Delta$, contradicting the fact that $d(a,a')\geq 5\Delta$.
\end{proof}

If there is a demand pair $(s,t)\in \tmset$, with $d(t,Y_i)\leq \Delta$, we discard this pair from $\tmset$ (from Observation~\ref{obs: one terminal of K close to S}, there is at most one such pair). Let $\tS,\tT$ be the sets of all the source and the destination vertices, respectively, for the pairs in $\tmset$, so $\tT\subseteq A$ and $\tS\subseteq Y_i$. Let $\Delta'=\Delta/24$. Our algorithm proceeds in three steps. In the first step, we build, for each terminal $t\in \tT$, a shell $\zset(t)$ around $t$, consisting of a sequence $Z_1(t),\ldots,Z_{\Delta'}(t)$ of simple cycles in $G$. For each cycle $Z_i(t)$, we let $D(Z_i(t))$ be the disc whose boundary is $Z_i(t)$, such that $t$ lies inside the disc. We will ensure that $D(t)\subseteq D(Z_1(t))$, and the resulting discs are concentric, that is, $D(Z_1(t))\subseteq D(Z_2(t))\subseteq \cdots\subseteq D(Z_{\Delta'}(t))$. We also build a similar shell $\zset^*=(Z_1^*,\ldots,Z_{\Delta'}^*)$ around the cluster $Y_i$ (see Figure~\ref{fig: shells}). 
For each shell $\zset=(Z_1,\ldots,Z_{\Delta'})$, we call $(Z_1,\ldots,Z_{\Delta'/2})$ the \emph{inner cycles of the shell}, and $Z_{\Delta'/2+1},\ldots,Z_{\Delta'}$ \emph{the outer cycles of the shell}.

\begin{figure}[h]
\scalebox{0.7}{\includegraphics{shells-cut.pdf}}\caption{Constructing the shells. The terminals of $\tS$ are red, and the terminals of $\tT$ are green.\label{fig: shells}}
\end{figure}

\begin{figure}[h]
\scalebox{0.7}{\includegraphics{crossbar.pdf}}\caption{Constructing the crossbar\label{fig: crossbar}}
\end{figure}

In the second step, we construct, for each terminal $t\in \tT$, a large collection $\pset(t)$ of disjoint paths, connecting $Z_1(t)$ to $Z_1^*$. We will ensure that all paths in $\bigcup_{t\in \tT}\pset(t)$ are disjoint from each other, and only intersect the outer cycles of all shells in $\set{\zset(t)}_{t\in \tT}$. (See Figure~\ref{fig: crossbar}.)

In the third step, we use the resulting structure (that we call a crossbar) in order to perform the routing.

\subsection{Step 1: Building the Shells}
We first show how to build the shell $\zset(t)$ around each terminal $t\in \tT$. For convenience, we denote $Z_0(t)=C(t)$ and $D(Z_0(t))=D(t)$. Note that $Z_0(t)$ is not a cycle in the graph - it is just a closed $G$-normal curve, and so it is not part of the shell, but we will use it to define the shell. Assume now that we have defined $Z_0(t),\ldots,Z_{i-1}(t)$, and we show how to define $Z_i(t)$ for $i\leq \Delta'$. Consider the drawing of the graph $G$ in the plane, after we delete all vertices lying in $D(Z_{i-1}(t))$, and consider the face $F$ where the deleted vertices used to be. Then the boundary of face $F$ contains a single cycle $Z$, such that the disc $D(Z)$ contains $D(Z_{i-1}(t))$. We set $Z_i(t)=Z(t)$. Let $U_i(t)$ be the set of all vertices of $G$ lying in $D(Z_i(t))\setminus (D(Z_{i-1}(t))\cup Z_i(t))$.

Let $W$ be the closed walked following the boundary of $F$ from the inside. For every vertex $v\in Z_i(t)$ that appears more than once on $W$, let $W(v)$ be the the sub-path of $W$ that starts and ends at $v$, contains all occurrences of $v$ on $W$, and does not contain any other vertex of $Z_i(t)$. If $v$ appears once on $W$, then $W(v)=\emptyset$. We can then view $W$ as the union of $Z_i(t)$ and the closed walks $W(v)$ for $v\in Z_i(t)$. For each such walk $W(v)$ let $U(v)$ be the set of all vertices of $G$ appearing inside the disc(s) defined by $W(v)$, or on $W(v)$, except for the vertex $v$. Notice that $U_i(t)=\bigcup_{v\in Z_i(t)}U(v)$. Notice also that for each vertex $v\in Z_i(t)$ where $W(v)$ is non-empty, there is a $G$-normal curve $\hat R(v)$, whose both endpoints lie on $Z_{i-1}(t)$, and that contains, in addition to its two endpoints just one additional vertex of $G$ - the vertex $v$, such that, if $\sigma$ and $\sigma'$ are the two segments of $Z_{i-1}(t)$, whose endpoints are the same as the endpoints of $W'$, then either the disc whose boundary is $W(v)\cup \sigma$ contains all vertices of $U(v)$, or the same is true for the disc whose boundary is $W(v)\cup \sigma'$ (see Figure~\ref{fig: build shell}).

\begin{figure}[h]
\scalebox{0.5}{\includegraphics{shell-building-cut.pdf}}\caption{Building $Z_i(t)$. Curve $\hat R(v)$ is shown in red.\label{fig: build shell}}
\end{figure}

Notice that in the original drawing of $G$, for any vertex $v\in Z_i(t)$, there is a $G$-normal curve $R(v)$ that starts at $v$ and terminates at some vertex $u\in Z_{i-1}(t)$, such that $u$ and $v$ are the only vertices of $G$ lying on $R(v)$. In particular, there is a $G$-normal curve $R'(v)$ of length at most $i$, connecting $v$ to some vertex of $Z_1(t)$. This finishes the definition of the shell $\zset(t)$. We need the following simple observation.

\begin{observation} Let $t,t'\in \tT$ with $t\neq t'$, and let $Z_j(t)\in \zset(t)$ and $Z_{j'}(t')\in \zset(t')$. Then $Z_j(t)\cap Z_{j'}(t')=\emptyset$, and $t\not\in D(Z_{\Delta}(t')$.
\end{observation}

\begin{proof}
Recall that $t,t'\in A$, so $d(t,t')\geq 5\Delta$. If there is some vertex $v$ lying on both $Z_j(t)$ and $Z_{j'}(t')$, then, as observed before, there is a $G$-normal curve $R'(v)$ of length at most $j\leq \Delta'$ connecting $v$ to some vertex of $Z_1(t)=C(t)$, and similarly there is a $G$-normal curve $R''(v)$ of length at most $j'\leq \Delta'$ connecting $v$ to some vertex of $Z_1(t')=C(t')$. Combining $R'(v)$ with $R''(v)$, we obtain a $G$-normal curve of length at most $2\Delta'=\Delta/12$ connecting $C(t)$ to $C(t')$, contradicting the fact that $d(t,t')\leq 5\Delta$.

We now show that $t\not\in D(Z_{\Delta}(t')$. Assume otherwise, and let $j$ be the smallest index so that $t\in D(Z_i(t'))$. Notice that $t$ cannot lie on $Z_i(t')$, since for every vertex $v\in Z_{i}(t')$, there is a $G$-normal curve $R(v)$ of length at most $i$ connecting $v$ to $C(t')$, and so $d(t,t')\leq i<\Delta'$ must hold, a contradiction. 
Therefore, $t$ lies strictly between the cycles $Z_{i}(t')$ and $Z_{i-1}(t')$, so $t\in U_i(t')$. We note that $C_t$ cannot intersect any cycle $Z_i(t')$, since otherwise $d(t,t')\leq \Delta'$ as above. It is also impossible that $D(C_t)$ contains $D(C_t')$, since in such a case we must have $C_t=C_{t'}$. Therefore, all vertices of $C_t$ must also lie between the cycles $Z_{i}(t')$ and $Z_{i-1}(t')$, and $D(t)\subseteq U_i(t')$.

As observed above, there is a $G$-normal curve $\tilde R$ of length $3$, whose endpoints $u,u'$ lie on $Z_{i-1}(t')$, such that for one of the two segments of $Z_{i-1}(t')$ between $u$ and $u'$, the disc whose boundary is the union of $\tilde R$ and that segment, contains $t$. Let $\sigma$ denote the corresponding segment of $Z_{i-1}(t')$, and let $D$ be the disc whose boundary is $\tilde R\cup \sigma$. Since $V(D(t))$ induce a connected graph in $G$, $D(t)\subseteq D$. 

Recall that there is a $G$-normal curve $R(u)$ connecting $u$ to some vertex $x$ of $C(t')$, and a $G$-normal curve $R(u')$ connecting $u'$ to some vertex $x'$ of $C(t')$, such that $\ell(R(u)),\ell(R(u'))\leq \Delta'$. We can then choose one of the segments $\sigma^*$ of $C(t')$ with endpoints $u$ and $u'$, such that $\ell(\sigma^*)\leq \Delta/2$. It is easy to see that the disc whose boundary is $\tilde R\cup R(u)\cup R(u')\cup \sigma^*$ contains $D(t)$. But the length of the boundary of this disc is bounded by the total length of the curves $\tilde R, R(u),R(u'),\sigma^*$, which is at most $3+\Delta'+\Delta'+\Delta/2<\Delta$.

\end{proof}

Finally, we show how to construct the shell $\zset^*$ around the vertices of $\tS$. Recall that each vertex $s\in \tS$ belongs to $S_i$, and hence $d(s,u_i)\leq 5\Delta$. Let $R(s)$ be the $G$-normal curve of length at most $5\Delta$ connecting $C(s)$ to $C(u_i)$, and let $F(s)$ be the union of the boundaries of all faces through which $R(s)$ passes. We let $Z_1^*$ be the outer boundary of the union of $\set{C(s),F(s)\mid s\in \tS}$. Given $Z_{i-1}^*$, we then define $Z_i^*$ exactly like we have defined $Z_i(t)$ from $Z_{i-1}(t)$. As before, for every vertex $v\in Z_i(t)$, there is a $G$-normal curve of length at most $i$, connecting $v$ to some vertex of $Z_1^*$, and therefore, there is some $G$-normal curve $R'(v)$ of length at most $i+1$, connecting $v$ to some vertex of $\bigcup_{s\in \tS}C(s)\cup R(s)$.

We need the following observation.

\begin{observation} For each $t\in \tT$, for all $1\leq j,j'\leq \Delta$, $Z_j(t)\cap Z_{j'}^*=\emptyset$; and $t\not\in D(Z_{\Delta}^*)$, and for each vertex $s\in \tS$, $s\not\in D(Z_{\Delta}(t)$.
\end{observation}

\begin{proof}
Assume first for contradiction that there is some vertex $v$ that belongs to both $Z_{j}(t)$ and $Z_{j'}^*$. Then there is a $G$-normal curve $R'(v)$ of length at most $j\leq \Delta$ connecting $v$ to $C(t)$, and there is some $s\in \tS$, and a $G$-normal curve $R''(v)$ of length at most $j'+1\leq \Delta+1$, connecting $v$ to a vertex of $C(v)\cup R(v)$. If $R''(v)$ terminates at a vertex of $C(v)$, then, combining $R''(v)$ with $R'(v)$, we get that $d(t,v)\leq 2\Delta$, contradicting Observation~\ref{obs: distance between clusters}. Otherwise, let $u\in R(s)$ be the vertex where $R''(v)$ terminates. Recall that the length of $R(s)$ is at most $5\Delta$. Since $d(t,u_i)\geq 5\Delta$, the length of the segment of $R(s)$ between $v$ and $u_i$ is at least $5\Delta-j-j'-1$. But then the length of the segment of $R(s)$ between $v$ and $s$ is at most $j+j'+1$, and so $d(s,t)\leq 4\Delta$.
\end{proof}


Finally, we show how to construct the shell $\zset^*$ around the vertices of $\tS$. Recall that each vertex $s\in \tS$ belongs to $S_i$, and hence $d(s,u_i)\leq 5\Delta$. Let $R(s)$ be the $G$-normal curve of length at most $5\Delta$ connecting $C(s)$ to $C(u_i)$, and let $F(s)$ be the union of the boundaries of all faces through which $R(s)$ passes. We let $Z_1^*$ be the outer boundary of the union of $\set{C(s),F(s)\mid s\in \tS}$. Given $Z_{i-1}^*$, we then define $Z_i^*$ exactly like we have defined $Z_i(t)$ from $Z_{i-1}(t)$. As before, for every vertex $v\in Z_i(t)$, there is a $G$-normal curve of length at most $i$, connecting $v$ to some vertex of $Z_1^*$, and therefore, there is some $G$-normal curve $R'(v)$ of length at most $i+1$, connecting $v$ to some vertex of $\bigcup_{s\in \tS}C(s)\cup R(s)$.

We need the following observation.

\begin{observation} For each $t\in \tT$, for all $1\leq j,j'\leq \Delta$, $Z_j(t)\cap Z_{j'}^*=\emptyset$; and $t\not\in D(Z_{\Delta}^*)$, and for each vertex $s\in \tS$, $s\not\in D(Z_{\Delta}(t)$.
\end{observation}

\begin{proof}
Assume first for contradiction that there is some vertex $v$ that belongs to both $Z_{j}(t)$ and $Z_{j'}^*$. Then there is a $G$-normal curve $R'(v)$ of length at most $j\leq \Delta$ connecting $v$ to $C(t)$, and there is some $s\in \tS$, and a $G$-normal curve $R''(v)$ of length at most $j'+1\leq \Delta+1$, connecting $v$ to a vertex of $C(v)\cup R(v)$. If $R''(v)$ terminates at a vertex of $C(v)$, then, combining $R''(v)$ with $R'(v)$, we get that $d(t,v)\leq 2\Delta$, contradicting Observation~\ref{obs: distance between clusters}. Otherwise, let $u\in R(s)$ be the vertex where $R''(v)$ terminates. Recall that the length of $R(s)$ is at most $5\Delta$. Since $d(t,u_i)\geq 5\Delta$, the length of the segment of $R(s)$ between $v$ and $u_i$ is at least $5\Delta-j-j'-1$. But then the length of the segment of $R(s)$ between $v$ and $s$ is at most $j+j'+1$, and so $d(s,t)\leq 4\Delta$.
\end{proof}

\section*{Case 2b}


Old proof for Case 1a

---------------------------------------------------------------------------------------------------------------------------

The transformation that we perform is the following. Let $R\in \rset_h$ be any degenerate type-2 component, and let $u(R)$ be the unique neighbor of the vertices of $R$ lying on $Z_{h-1}(x)$. We delete all vertices of $R$ from the graph. If $R$ contained any terminals, then all such terminals are mapped to the vertex $u(R)$. That is, $u(R)$ becomes a terminal, and it participates in all demand pairs in which the terminals of $S\cap V(R)$ participated. Similarly, if $R$ is a type-3 component, then let $u(R)$ be the unique neighbor of the vertices of $R$ on $Z_h(R)$. We delete all vertices of $R$ from the graph, and we map all terminals contained in $R$ to $u(R)$.

\begin{figure}[h]
\scalebox{0.4}{\includegraphics{three-types.pdf}}\caption{The three types of components in $\rset_h$: $R_1$ is of type 1, $R_2$ and $R_2'$ are non-degenerate and degenerate type-2 components,  respectively, and $R_3$ is of type 3.\label{fig: three types of cc's}}
\end{figure}

Once we perform this transformation for all $1\leq h\leq 2\Delta_0$, set $\rset_h$ only contains type-1 and non-degenerate type-2 components. We denote the sets of type-1 and type-2 components of $\rset_h$ by $\rset^1_h$ and $\rset^2_h$, respectively.

For all $1\leq h'\leq 2\Delta_0$, we let $U'_{h'}$ be the set of all vertices of $G$ lying in $D(Z_{h'}(y))\setminus D(Z_{h'-1}(y))$, excluding the vertices of $Z_{h'}(y)$. As before we let $\hat \rset_{h'}'$ be the set of all connected components of $G[U'_{h'}]$. We partition the components into three types exactly as before, and perform a similar transformation. As before, we let $\hat{\rset}^1_{h'}$ and $\hat{\rset}^2_{h'}$ denote the resulting sets of type-1 and type-2 components.

It is easy to verify that routing in the old and the new graphs are equivalent: given a routing of any subset of demand pairs in the old graph, we can obtain a routing of the corresponding demand pairs in the new graph and vice versa, so we will focus on routing in the new graph from now on.

Our next step is to partition all sources in $S$ and all destinations in $T$ into $2\Delta_0$ levels. In order to partition the sources, we assign a source vertex $s\in S$ to level $1$ if $s\in D(Z_1(x))$, and we assign it to level $h$ for $1<h\leq 2\Delta_0$, iff either $s$ belongs to $Z_{h-1}(x)$, or it lies in one of the components of $\rset_h$ (see Figure~\ref{fig: level-h terminals}). 
Note that from Claim~\ref{claim: shells are good}, every vertex of $S$ belongs to some level $1\leq h\leq 2\Delta_0$. 
We partition the set $T$ of destination vertices into $2\Delta_0$ levels similarly.
For $1\leq h,h'\leq 2\Delta_0$, let $\tmset_{h,h'}\subseteq \mset_{i,j}$ be the set of all demand pairs $(s,t)$ where $s$ belongs to level $h$ and $t$ belongs to level $h'$. The following theorem provides an approximate solution to the corresponding instance $(G,\tmset_{h,h'})$.

\begin{figure}[h]
\scalebox{0.4}{\includegraphics{level-h-terminals.pdf}}\caption{Level-$h$ terminals (shown in red).\label{fig: level-h terminals}}
\end{figure}

\begin{theorem}\label{thm: main for Case 1a}
For any $4/\log n<\delta<1$, there is an algorithm with running time $n^{O(1/\delta)}$, that for each $1\leq h,h'\leq 2\Delta_0$, finds a collection $\pset_{h,h'}$ of paths, routing a subset $\tmset'_{h,h'}\subseteq \tmset_{h,h'}$ of demand pairs in $G$, where \fbox{$|\tmset'_{h,h'}|\geq \frac{\opt(G,\tmset_{h,h'})}{64\Delta_0^2 n^{\delta}\log n}$}.
\end{theorem}

Before we prove this theorem, we show that the proof of Theorem~\ref{thm: main for Case 1} follows from it. Let $\opt$ be the number of the demand pairs routed by the optimal solution to instance $(G,\mset_{i,j})$. Clearly, there are some $1\leq h,h'\leq 2\Delta_0$, such that at least $\frac{\opt}{4\Delta_0^2}$ of the demand pairs routed in this solution belong to set $\tmset_{h,h'}$. Our algorithm is then guaranteed to return a routing of at least $\frac{\opt(G,\tmset_{h,h'})}{64\Delta_0^2 n^{\delta}\log n}\geq \frac{\opt}{2^{13}\Delta_0^4n^{\delta}\log n}$ demand pairs, as required.

From now on we focus on proving Theorem~\ref{thm: main for Case 1a}. The idea is to reduce this problem to the problem of routing on a cylinder. We will construct two holes in the sphere. The first hole, $H_1$, is defined to be $D(Z_h(x))$, and we will move the source vertices of the demand pairs in $\tmset_{h,h'}$ to its boundary. The second hole, $H_2$, is $D(Z_{h'}(y))$, and we will similarly move the destination vertices of the pairs in $\tmset_{h,h'}$ to its boundary. We then use the algorithm for routing in a cylinder to obtain an $(n^{\delta}\log n)$-approximate solution to this problem in time $n^{O(1/\delta)}$. The key part of the proof is to show that there is a solution to the resulting routing problem on a cylinder, whose value is at least $\Omega(\opt(G,\tmset_{h,h'})/\Delta_0^2)$. We now turn to describe the construction of the holes and the corresponding new instance of routing on a  cylinder more formally.

\paragraph{Constructing the Holes and Mapping the Terminals.}
For simplicity, we denote $Z_{h-1}(x)$ and $Z_{h}(x)$ by $Z_1$ and $Z'_1$, respectively, and we denote $Z_{h'-1}(y)$ and $Z_{h'}(y)$ by $Z_2$ and $Z_2'$, respectively. 

We now consider the set $\tmset_{h,h'}$ of demand pairs, and we denote by $\tS$ and $\tT$ the set of the source and the destination vertices, respectively, participating in the demand pairs in $\tmset_{h,h'}$. 

The two holes are defined as follows. Hole $H_1$ is simply the disc $D(Z'_1)$. We will delete from $G$ all vertices lying in the interior of this disc, and we will map the terminals of $\tS$ to the vertices on the boundary of the hole. Hole $H_2=D(Z'_2)$ is defined similarly. We start by developing some machinery that will later help us with the analysis.

For simplicity, we denote $\rset=\rset_h$, $\rset^1=\rset^1_h$, and $\rset^2=\rset^2_{h}$. Recall that if $h>1$, then all terminals of $\tilde S$ lie in $V(Z_1)\cup\left(\bigcup_{R\in \rset}V(R)\right)$, and otherwise they lie in $V(D_x)\cup\left(\bigcup_{R\in \rset}V(R)\right)$. When constructing the hole $H_1$, we consider the plane drawing of $G$, where we view the face $F_x$ incident to vertex $y$ as the outer face of the drawing.

We assume without loss of generality that $\opt(G,\tmset_{h,h'})>\Delta_0^2$, as otherwise we can simply return a routing of a single demand pair. It is easy to see that $|\rset^1|>2$ must then hold: otherwise, there is a $G$-normal curve of length at most $2$, separating all vertices in $\tS$ from all vertices in $\tT$, contradicting the fact that  $\opt(G,\tmset_{h,h'})>\Delta_0^2$. The curve follows $Z_1'$ inside the disc $D(Z_1')$, intersecting $Z_1'$ only at the vertices that are neighbors of components in $\rset^1$ (recall that we have assumed that no edge connects a vertex of $Z_1$ to a vertex of $Z_1'$). Therefore, we assume from now on that $|\rset^1|>2$.



For every component $R\in \rset^1$,  we denote by $u(R)$ the unique neighbor of $V(R)$ lying on $V(Z_1')$. If $R$ has at least three neighbors in $V(Z_1)$, then let $\sigma(R)\subseteq Z_1$ be the shortest segment of $Z_1$ containing all neighbors of $R$. Otherwise, if $R$ has one neighbor in $V(Z_1)$, then we let $\sigma(R)$ consist of this single neighbor. Finally, if $R$ has exactly two neighbors, $v,v'\in V(Z_1)$, then we let $\sigma(R)$ be the segment of $Z_1$ whose endpoints are $v$ and $v'$, such that, if we let $C$ be the union of $\sigma(R)$ and the outer boundary of the drawing of $G[V(R)\cup \set{v,v'}]$, then the interior of $C$ does not contain $x$. (See Figure~\ref{fig: disc-defs}).

For every component $R\in \rset^2$, we leave $u(R)$ undefined, and we define the segment $\sigma(R)$ of $Z_1$ similarly. 
Notice that $\Sigma=\set{\sigma(R)}_{R\in \rset^1\cup \rset^2}$ is a nested set of segments of $Z_1$.

\begin{figure}[h]
\scalebox{0.5}{\includegraphics{disc-defs.pdf}}
\caption{Definitions of vertices $u(R)$ and segments $\sigma(R)$.\label{fig: disc-defs}}
\end{figure}

Let $U$ be the set of all distinct vertices in $\set{u(R)\mid R\in \rset^1}$. We assume that $U=\set{u_1,\ldots,u_z}$, and that the vertices of $U$ appear in this counter-clock-wise order on $Z_1'$. For each $1\leq i\leq z$, let $\sigma_i$ be the shortest segment of $Z_1$ containing all segments $\sigma(R)$, where $R\in \rset^1$ and $u(R)=u_1$, and let $a_i$ be the first vertex on $\sigma_i$ in the counter-clock-wise order. It is easy to verify that the segments $\sigma_i$ are internally disjoint. 

We will need the following observation, which is immediate from the construction of the shells, and from our assumption that no edge connects a vertex of $Z_1'$ to a vertex of $Z_1$.

\begin{observation}\label{obs: neighbors of not u}
Let $v\in V(Z_1')$, and assume that $v\not\in U$. Then for every edge $e$ incident on $v$, either $e$  belongs to $Z_1'$, or its interior lies completely outside of $D(Z_1')$.
\end{observation}

For all $1\leq i\leq z$, we build two $G$-normal curves: $\gamma_i$, connecting $u_i$ to $a_i$, and $\gamma_{i}'$, connecting $u_i$ to $a_{i+1}$ (where we think of $z+1$ as $1$), so that the interior of each curve does not contain any vertex of $G$, and all curves in set $\Gamma=\set{\gamma_i,\gamma_i'\mid 1\leq i\leq z}$ are internally disjoint. For all $1\leq i\leq z$, we then let $B_i$ be the disc whose boundary is the union of $\gamma_i,\gamma_i'$, and $\sigma_i$. In case where $a_i=a_{i+1}$, we draw the curves $\gamma_i$ and $\gamma_i'$ so that all components $R\in \rset^1$ with $u(R)=u_i$ are contained in the disc $B_i$ (if $a_i\neq a_{i+1}$, this will always happen). We say that disc $B_i$ is degenerate iff $a_i=a_{i+1}$. We call $a_i,a_{i+1}$ the \emph{endpoints of $B_i$}, and $u_i$ the \emph{interface vertex of $B_i$}.

If $h=1$, then we discard from $\tS$ all source vertices lying in type-2 components and in disc $D_x$ (we will later show that this does not affect our solution by much), and we discard from $\tmset_{h,h'}$ all corresponding demand pairs.

For every $1\leq i\leq z$, let $\tS_i\subseteq S_i$ be the set of all source vertices that lie in disc $B_i$. It is easy to verify that every vertex of $\tS$ now belongs to at least one such disc. However, if several discs $B_i$ share the same endpoint $a$, and this endpoint happens to also belong to $\tS$, then vertex $a$ will belong to several sets $\tS_i$, and we would like to avoid that. In this case, we only add $a$ to the set $\tS_i$, such that $a\in B_i$, and $i$ is the last index in the counter-clock-wise order with this property (it is easy to verify that this is well-defined since we have assumed that $\opt(G,\tmset_{h,h'})\geq \Delta_0^2$). Let $\bset=\set{B_i\mid 1\leq i\leq z}$ be the resulting set of discs.

\begin{figure}[h]
\centering
\subfigure[Beginning; the terminals are shown in red.]{\scalebox{0.35}{\includegraphics{discs1.pdf}}\label{fig: discs1}}
\hspace{1cm}
\subfigure[Construction of curves $\gamma_i$ and $\gamma_i'$.]{
\scalebox{0.35}{\includegraphics{discs2.pdf}}\label{fig: discs2}}
\hspace{1cm}
\subfigure[Construction of discs and assignment of terminals - color-coded.]{\scalebox{0.35}{\includegraphics{discs3.pdf}}\label{fig: discs3}}
\caption{Construction of discs in $\bset$ and assignment of terminals\label{fig: discs-whole}}
\end{figure}

The following key properties of the resulting sets $\set{\sset_i}$ is immediate:

\begin{observation}\label{obs: prop of source sets}
Every terminal $s\in \tS$ belongs to set $\tS_i$, for some $1\leq i\leq z$, and moreover, there is a path $P(s)$ connecting $s$ to $u_i$ in $G$, which is contained in the disc $B_i$, such that, if $s\neq a_{i+1}$, then $P(s)$ is disjoint from $a_{i+1}$. In particular, if $\tS'\subset \tS$ is any subset of source vertices, such that for all $1\leq i\leq z$, $|\tS_i\cap \tS'|\leq 1$, then all paths in set $\set{P(s)\mid s\in \tS'}$ are mutually disjoint.
\end{observation}

We note that it is in order to obtain this property that we have discarded the sources lying in $D_x$ and in type-2 components for the case where $h=1$. 

For each $1\leq i\leq z$, we map all sources in $\tS_i$ to the vertex $u_i$ (that lies on the boundary of the hole $H_1$), so $u_i$ now appears in every demand pair involving source vertices from $\tS_i$.

We define the hole $H_2=D(Z_{h'}(y))$, the set $\bset'=\set{B_i\mid 1\leq i\leq z'}$ of discs, their interface vertices $u'_1,\ldots,u'_{z'}$, endpoints $a'_1,\ldots,a'_{z'}$, curves $\set{\tgamma_1,\tgamma_1',\ldots,\tgamma_{z'},\tgamma_{z'}'}$ and a partition $\set{\tT_i\mid 1\leq i\leq z'}$ of the destination vertices similarly (if $h'=1$, then as before, we first discard from $\tT$ all destination vertices lying in type-2 components and in $D_y$, and we discard from $\mset_{h,h'}$ the corresponding demand pairs. Family $\set{\tT_i\mid 1\leq i\leq z'}$ then partitions the resulting set of the destination vertices). The only difference is that we now use the clock-wise orientation of the cycles in all definitions, so for example $u_1',\ldots,u_{z'}'$ appear in this clock-wise order on $Z_2'$. Finally, we delete from $G$ all edges and vertices lying in the interior of the holes $H_1$ and $H_2$.
We let $G'$ denote this final graph drawn on a cylinder, and $\hat \mset$ the resulting set of the demand pairs.
For convenience, we denote $U=\set{u_1,\ldots,u_z}$, $A=\set{a_1,\ldots,a_z}$, $U'=\set{u_1',\ldots,u_{z'}}$, and $A'=\set{a_1',\ldots,a'_{z'}}$.

\paragraph{Finding the Routing.}
In this step, we use the $(n^{\delta}\log n)$-approximation algorithm, whose running time is $n^{O(1/\delta)}$ for the resulting problem instance $(G',\hat \mset)$. Let $\hat\mset'$ be the set of the demand pairs routed. It is easy to see that the set of the demand pairs is non-crossing.
That is, there is an ordering $(\hat s_1,\ldots,\hat s_{\kappa})$ of the source vertices participating in the demand pairs in $\hat \mset'$ in counter-clock-wise order along the boundary of $H_1$, and an ordering $(\hat t_1,\ldots,\hat t_{\kappa})$ of the destination vertices in the clock-wise order along the boundary of $H_2$, so that $\hat \mset'=\set{(\hat s_i,\hat t_i)\mid 1\leq i\leq \kappa}$.

Consider some demand pair $(\hat s_i,\hat t_i)\in \hat\mset'$, let $P(\hat s_i, \hat t_i)$ be the path routing this pair in the solution, and let $(s_i,t_i)\in \tmset_{h,h'}$ be the demand pair corresponding to $(\hat s_i,\hat t_i)$ in the original instance.
We use  Observation~\ref{obs: prop of source sets} to obtain a path $P(s_i,t_i)$, connecting $s_i$ to $t_i$, by concatenating $P(s_i),P(\hat s_i,\hat t_i)$, and $P(t_i)$. From Observation~\ref{obs: prop of source sets}, the resulting set $\set{P(s_i,t_i)\mid 1\leq i\leq \kappa}$ of paths is completely disjoint. We then let $\pset=\set{P(s_{i},t_{i})\mid 1\leq i\leq \ceil{\kappa}}$ be our final solution, that routes at least $|\hat \mset'|\geq \frac{\opt(G',\hat \mset)}{n^{\delta}\log n}$ demand pairs.

\paragraph{Analysis.}
In order to complete the proof of Theorem~\ref{thm: main for Case 1a}, it is enough to show that $\opt(G',\hat\mset)\geq \frac{\opt(G,\tmset_{h,h'})}{512\Delta_0^2}$. The following theorem will then finish the proof of Theorem~\ref{thm: main for Case 1a}.

\begin{theorem}\label{thm: analysis of Case 1a}
$\opt(G',\hat\mset)\geq \frac{\opt(G,\tmset_{h,h'})}{512\Delta_0^2}$.
\end{theorem}

\begin{proof}
Consider the optimal solution to instance $(G,\tmset_{h,h'})$, and let $\pset_0$ be the set of paths used by this solution. We will gradually modify the set $\pset_0$ of paths, to obtain path sets $\pset_1,\pset_2,\ldots$, until we obtain a feasible solution to instance $(G',\hat \mset)$. For every $i\geq 0$, we will denote by $\mset_i$ the set of the demand pairs routed by $\pset_i$, and by $\kappa_i$ its cardinality.
Recall that $\kappa_0=\opt(G,\tmset_{h,h'})$, and we have assumed that $\opt(G,\tmset_{h,h'})\geq \Delta_0^2$.

We delete from $\pset_0$ all paths that use the vertices of $C_x$ or $C_y$. Since $|V(C_x)|=|V(C_y)|=\Delta$, we delete at most $2\Delta$ paths in this step. We let $\pset_1$ be the resulting set of paths, and $\mset_1$ the set of the demand pairs routed by $\pset_1$. Notice that if $h=1$, then all paths originating at a source vertex that lies in $D_x$, or in a type-2 component $R\in \rset^2_h$, must contain a vertex of $C_x$, and so all such paths are discarded at this step. Similarly, if $h'=1$, then all paths terminating at a vertex of $D_y$, or a type-2 component $R\in \tilde{\rset}^2_{h'}$ are discarded.

Recall that from our construction of shells, for every vertex $a_i\in A$ (the vertices serving as endpoints of the discs in $\bset$), there is a $G$-normal curve $R_i$ of length $h$, connecting $a_i$ to a vertex of $C_x$, so that $R_i$ is contained in disc $D(Z_h(x))$, and is internally disjoint from $V(Z_h(x))\cup V(C_x)$ (see Figure~\ref{fig: gamma-curve}). We can construct the curves $R_i$, for $1\leq i\leq z$, so that they do not cross each other, and once a pair of such curves meet they continue together. In other words, for all $1\leq i,j\leq z$, $R_i\cap R_j$ is a contiguous curve that terminates at a vertex of $C_x$.

Consider any source vertex $s$ of a demand pair in $\mset_1$, and assume that $s\in \tS_i$, for some $1\leq i\leq z$. We let $\Gamma(s)$ be the curve obtained by concatenating $R_i, \gamma_i,\gamma_i'$ and $R_{i+1}$. Notice that $\Gamma(s)$ is not necessarily simple, and it contains at most $2h+1\leq 2\Delta_0$ vertices.
For every destination vertex $t$ of a demand pair in $\mset_1$, we define the curve $\Gamma(t)$ similarly.

\begin{figure}[h]
\scalebox{0.5}{\includegraphics{gamma-curve.pdf}}\caption{Constructing the curve $\Gamma(s)$\label{fig: gamma-curve}}
\end{figure}

Our next step is to build a conflict graph $H$. Its set of vertices, $V(H)=\set{v(s,t)\mid (s,t)\in \mset_1}$. There is a directed edge from $v(s,t)$ to $v(s',t')$, iff the path $P(s,t)\in \pset_1$ routing the pair $(s,t)$ contains a vertex of $V(\Gamma(s'))\cup V(\Gamma(t'))$, and we say that there is a conflict between $(s,t)$ and $(s',t')$ in this case. Since we assume that the paths in $\pset_1$ are node-disjoint, and since $|V(\Gamma(s'))|,|V(\Gamma(t'))|\leq 2\Delta_0$ for all $(s',t')\in \mset_1$, the in-degree of every vertex in $H$ is at most $4\Delta_0$. Therefore, the average degree (including incoming and outgoing edges) of every induced sub-graph of $H$ is at most $8\Delta_0$, and there is an independent set $I\subseteq V(H)$ of cardinality at least $\frac{|\pset_1|}{8\Delta_0}\geq \frac{|\pset_0|}{8\Delta_0}$ in $H$.

We let $\mset_2$ be the set of all demand pairs $(s,t)$ with $v(s,t)\in I$, and we let $\pset_2\subseteq \pset_1$ be the set of paths routing the demand pairs in $\mset_2$. Notice that the paths in $\pset_2$ are disjoint from $C_x\cup C_y$. Moreover, if $P(s,t)\in \pset_2$ is the path routing the pair $(s,t)\in \mset_2$, then for every demand pair $(s',t')\neq (s,t)$ in $\mset_2$, path $P(s,t)$ is disjoint from $\Gamma(s')$ and $\Gamma(t')$.
On the other hand, path $P(s,t)$ clearly has to cross both $\Gamma(s)$ and $\Gamma(t)$.

Recall that for $1\leq i\leq z$, $\tS_i$ is the subset of the source vertices in $\tS$ mapped to $u_i$. From the above discussion, for all $1\leq i\leq z$, at most one source vertex in $\tS_i$ may participate in the pairs in $\mset_2$ (since for $s,s'\in \tS_i$, $\Gamma(s)=\Gamma(s')$), and similarly, for $1\leq j\leq z'$, at most one destination vertex in $\tT_j$ may participate in pairs in $\mset_2$. 

We denote the demand pairs in $\mset_2$ by $\set{(s_1,t_1),\ldots,(s_{\kappa_2},t_{\kappa_2})}$, where we assume that for all $1\leq r\leq \kappa_2$, vertex $s_r$ lies in set $\tS_{i_r}$, and vertex $t_r$ lies in set $\tT_{j_r}$, so that $i_1<i_2<\cdots<i_r$. Since there are no conflicts between the demand pairs in $\mset_2$, and the paths in $\pset_2$ are disjoint from $C_x\cup C_y$, vertices $u'_{j_1},\ldots,u'_{j_{r}}$ must appear in this clock-wise order along $Z_2'$, and so we can assume w.l.o.g. that $j_1<j_2<\cdots<j_r$.

Let $\kappa_3=\floor{\frac{\kappa_2}{16\Delta_0}}-1$. Let $\mset_3=\set{(s_{16\Delta_0r},t_{16\Delta_0r})\mid 1\leq r\leq \kappa_3}$, and we let $\pset_3\subseteq \pset_2$ be the set of paths routing the demand pairs in $\mset_3$, so $|\pset_3|\geq \frac{|\pset_2|}{32\Delta_0}\geq \frac{|\pset_0|}{512\Delta_0}$. For every demand pair $(s_r,t_r)\in \mset_3$, recall that $(u_{i_r},u'_{j_r})$ is the corresponding demand pair in $\hat \mset$. We now show that all pairs in set $\mset^*=\set{(u_{i_{16\Delta_0 r}},u'_{j_{16\Delta_0 r}})\mid 1\leq r\leq \kappa_3}$ can be routed in $G'$ on node-disjoint paths.

The idea is that for each demand pair $(s_{16\Delta_0r},t_{16\Delta_0r})\in \mset_3$, we define a segment $\mu_r$ of $Z_1'$, and a segment $\mu_r'$ of $Z_2'$, such that $u_{i_{16\Delta_0 r}}\in \mu_r$, $u_{j_{16\Delta_0 r}}'\in \mu'_r$, and the path $P_r\in \pset_3$ connecting $s_{16\Delta_0r}$ to $t_{16\Delta_0r}$ is the only path in $\pset_3$ that contains the vertices of $\mu_r\cup \mu_{r}'$. It is then straightforward to complete the routing of the pair $(u_{i_{16\Delta_0 r}},u'_{j_{16\Delta_0 r}})$ using $\mu_r,\mu_{r}'$ and a segment of $P_r$.

Consider the demand pair $(s_{16\Delta_0r},t_{16\Delta_0r})\in \mset_3$. For convenience, we denote $16\Delta_0 r$ by $\ell$. Recall that $u_{i_{\ell}}$ is the vertex of $U$ to which $s_{\ell}$ is mapped, and $u_{j_{\ell}}'$ is the vertex of $U'$ to which $t_{\ell}$ is mapped. 
Consider the set $\set{s_1,s_2,\ldots,s_{\kappa_2}}$ of all source vertices corresponding to the demand pairs in $\mset_2$, and take two such source vertices lying at distance $16\Delta_0-1$ on either side of $s_{\ell}$, that is, $s_{\ell-16\Delta_0-1}$ and $s_{\ell+16\Delta_0-1}$. We let $\mu_{\ell}$ be the segment of $Z_1'$ lying between the two corresponding interface vertices, $u_{\ell-16\Delta_0-1}$ and $u_{\ell+16\Delta_0-1}$, as we traverse $Z_1'$ in the counter-clock-wise order. Since we have spaced out the demand pairs in $\mset_2$ when defining the set $\mset_3$ of the demand pairs, it is easy to see that the segments of $Z_1'$ in the resulting set $\set{\mu_{{16\Delta_0 r}}\mid  1\leq r\leq \kappa_3}$  are completely disjoint. We define the segments $\mu'_{\ell}$ of $Z_2'$ for all destination vertices $t_{\ell}$, where $\ell=16\Delta_0r$ and $1\leq r\leq \kappa_3$ similarly. The crux of the analysis is the following theorem.

\begin{theorem}\label{theorem: routing and segment intersections}
Let $(s_{{16\Delta_0 r}},t_{{16\Delta_0 r}})\in \mset_3$ be any demand pair, and let $P\in \pset_3$ be the path routing it. Then $P\cap Z_1'\subseteq \mu_{{16\Delta_0 r}}$, and $P\cap Z_2'\subseteq \mu_{{16\Delta_0 r}}'$.
\end{theorem}

Before we prove this theorem, we show that we can use it to complete the routing of the demand pairs in $\mset^*$. Let $(s_{\ell},t_{\ell})\in \mset_3$ be any demand pair, and let $(u_{i_{\ell}},u'_{j_{\ell}})$ be the corresponding pair in $\mset^*$. Let $P_{\ell}$ be the path routing $(s_{\ell},t_{\ell})$ in $\pset_3$, that we view as directed from $s_{\ell}$ to $t_{\ell}$. Since there are no conflicts between the demand pairs in $\mset_2$, and $P_{\ell}$ is disjoint from $C_x\cup C_y$, it is easy to see that $P_{\ell}$ has to cross both $\mu_{\ell}$ and $\mu'_{\ell}$. Let $v_{\ell}$ be the last vertex of $P_{\ell}$ lying on $\mu_{\ell}$. Then there is some other vertex appearing on $P_{\ell}$ after $v_{\ell}$ that belongs to $\mu_{\ell}'$. We let $v'_{\ell}$ be the first such vertex on $P_{\ell}$, and we let $P'_{\ell}$ be the segment of $P_{\ell}$ between $v_{\ell}$ and $v'_{\ell}$. Let $P^*_{\ell}$ be the path obtained as follows: we start with a segment of $\mu_{\ell}$ between $u_{i_{\ell}}$ and $v_{\ell}$; we then follow $P'_{\ell}$ to $v'_{\ell}$, and finally we use a segment of $\mu'_{\ell}$ between $v'_{\ell}$ and $u'_{j_{|ell}}$ to reach the destination vertex $u'_{j_{\ell}}$. From Theorem~\ref{theorem: routing and segment intersections} it is immediate to verify that the resulting paths $\set{P^*_{\ell}\mid  \ell=16\Delta_0r; 1\leq r\leq \kappa_3}$  are completely disjoint. It now remains to prove Theorem~\ref{theorem: routing and segment intersections}.

\begin{proof}
Fix some demand pair $(s_{{16\Delta_0 r}},t_{{16\Delta_0 r}})\in \mset_3$, and let $P\in \pset_3$ be the path routing it. We show that 
$P\cap Z_1'\subseteq \mu_{{16\Delta_0 r}}$. The proof that $P\cap Z_2'\subseteq \mu_{{16\Delta_0 r}}'$ is symmetric.

Assume otherwise, and let $v$ be the first vertex on $P$ that belongs to $Z_1'\setminus  \mu_{{16\Delta_0 r}}$. For convenience, we denote $16\Delta_0r$ by $\ell$ from now on. Let $P'$ be the sub-path of $P$ from its first vertex to $v$. 

If $v\in U$, we let $u=v$, and otherwise we let $u$ be the closest vertex of $U$ to $v$ on cycle $Z_1'$. Let $\sigma^*$ be the segment of $Z_1'$ between $u$ and $v$, that does not contain any vertices of $U$ as inner vertices. Assume that $u=u_w$, for $1\leq w\leq z$. Recall that we have constructed the curves $\gamma_w$ and $R_w$, that together connect $u_w$ to some vertex $v_1$ of $C_x$, and their total length is bounded by $h+1$. Let $R'$ be the union of these two curves. We have also constructed the curves $\gamma_{i_{\ell}}$ and $R_{i_{\ell}}$, connecting $u_{i_{\ell}}$ to some vertex $v_2$ of $C_x$, and their total length is bounded by $h+1$. Let $R''$ be the union of these two curves. Let $D$ be the union of $C_x$, $R'$, $R''$ and $P'$ (notice that some of these curves may intersect each other), and let $C$ be the outer boundary of $D$ (where we view $F_x$ - a face incident on $y$ as the outer face). 

\begin{figure}[h]
\scalebox{0.5}{\includegraphics{case-1a-end.pdf}}\caption{Constructing the curve $C$\label{fig: 1a end}}
\end{figure}

Notice that curve $C$ does not intersect the disc $D(Z_{2\Delta_0}(y))$, and so all vertices of $\tT$ lie on the outside of $C$. On the other hand, since there are no conflicts between the paths in $\pset_2$, either all sources in $\set{s_{\ell-16\Delta_0+1},\ldots,s_{\ell-1}}$, or all sources in $\set{s_{\ell+1},\ldots, s_{16\Delta_0-1}}$ must lie inside the curve - let us assume that it is the former, and let $S'$ denote the corresponding set of sources.

 All these sources are separated by $C$ from their destinations vertices, and yet all corresponding demand pairs are routed by $\pset_2$. Therefore, at least $16\Delta_0-1$ paths in $\pset_2$ must cross the curve $C$. Recall that none of these paths can cross $C_x$, and they cannot cross $R''$ due to absence of conflicts in $\pset_2$. Since the length of $R'$ is at most $h+1\leq 2\Delta_0+1$, at most $2\Delta_0+1$ of these paths may cross $R'$. The remaining $14\Delta_0-2$ paths must all cross the segment $\sigma^*$. From Observation~\ref{obs: neighbors of not u}, no vertex of $\sigma^*$ is incident on an edge that is contained in $D(Z_1')$ (as all such vertices belong to $U$). Therefore, only one path in $\pset_2$ may cross $\sigma^*$ (at vertex $u_w$) - a contradiction.
\end{proof}
\end{proof}

\

---------------------------------------------------------------------------
\subsection{Subcase 1b: $d(a_i,a_j)\leq 3\Delta_0$}

We say that a demand pair $(s,t)\in \mset^h$ is a \emph{close pair} iff there is some terminal $t'\in \tset(\mset^h)$, such that $s,t\in D_{t'}$.
The algorithm that we describe below for heavy demand pairs only works if there are no close pairs. Therefore, before we apply this algorithm, we need to take care of the close pairs. If all enclosures $D_t$ were disjoint, it would be easy to do: for every enclosure $D_t$ containing a close demand pair, we can route that demand pair inside $D_t$, and discard all other demand pairs whose source or sink lie in $D_t$. Since $|\tset\cap D_t|\leq O(\Delta\log k/w^*)$, for every demand pair we route, we discard only $O(\Delta\log k/w^*)$ demand pairs. If we succeed in routing $\Delta$ demand pairs, then we obtain a solution routing $\Omega(W^{\delta})$ demand pairs and we are done. Otherwise, if $\tmset^h$ denotes the set of the heavy demand pairs that we have not discarded, then $|\tmset^h|\geq 0.5|\mset|$, and we are now guaranteed that for each $t'\in \tset(\tmset^h)$, and for all $(s,t)\in \tmset^h$, 
either $s$ or $t$ lie outside $D_{t'}$.

Unfortunately, the discs $D_t$ may intersect, making this part more challenging. Our algorithm will either route a large number of close demand pairs as above, or it will find a subset $\tmset^h\subseteq \mset^h$ of at least $0.5|\mset|$ demand pairs, and new enclosures $D'_t$ for all terminals participating in these demand pairs, so that $\tmset^h$ does not contain close demand pairs (with respect to the new enclosures). The new enclosures will have slightly weaker properties than the original ones, but they are sufficient for dealing with heavy demand pairs. We use the following theorem to take care of the close pairs.

\begin{theorem}\label{thm: close demand pairs}
There is an efficient algorithm, that:
\begin{itemize}
\item either computes a routing of $\Delta$ demand pairs in $\mset^h$ on node-disjoint paths; or 
\item returns a subset $\tmset^h\subseteq \mset^h$ of at least $|\mset|/2$ demand pairs, and for each terminal $t\in \tset(\tmset^h)$ a disc $D'_t$ containing $t$, whose boundary is denoted by $C'_t$, such that $\tset(\tmset^h)\cap A=\emptyset$, and for each $t\in \tset(\tmset^h)$:

\begin{itemize}
\item $C'_t$ is a $G$-normal curve containing at most $2\Delta^2$ vertices;
\item For every vertex $v\in V(C'_t)$, there is a $G$-normal curve of length at most $2c\Delta\log k$ connecting $v$ to a vertex of $\bigcup_{a_j\in A}V(C_{a_j})$, where $c$ is the constant from Theorem~\ref{thm: build enclosures};
\item $D'_t\cap D_{a_j}=\emptyset$ for all $1\leq j\leq q$;
\item If $(s',t')\in \tmset^h$, then $|V(D'_t)\cap \set{s',t'}|\leq 1$, where $V(D'_t)$ is the set of vertices of $G$ that lie in disc $D'_t$; and
\item $V(D'_t)$ induces a connected sub-graph in $G$.
\end{itemize}
\end{itemize}
\end{theorem}

\begin{proof}
Let $\dset=\set{D_t\mid t\in \tset(\mset^h)}$, and let $\dset'=\emptyset$. Throughout the algorithm, we will move some enclosures $D_t$ from $\dset$ to $\dset'$, and for some of these enclosures $D_t$, we will route one close pair inside $D_t$. We denote by $\hmset$ the set of the demand pairs that we route, and by $\pset$ the set of paths in the routing. At the beginning, $\hmset=\emptyset$ and $\pset=\emptyset$. Throughout the algorithm, we also denote $U=\bigcup_{D_t\in \dset'}V(D_t)$.

Our first step is to move to $\dset'$ all enclosures $D_{a_j}$, where $a_j\in A$. For convenience, we denote this initial set of enclosures by $\dset_0$, and we set $U=\bigcup_{D_t\in \dset_0}V(D_t)$. We now execute a number of iterations, where in every iteration we move one enclosure $D_t$ from $\dset$ to $\dset'$, and route one demand pair inside $D_t$.

We execute iterations as long as there is some enclosure $D_t\in \dset$, and some demand pair $(s',t')\in \mset^h\setminus \hmset$ with $s',t'\in D_t$, such that $s'$ and $t'$ lie in the same connected component of $G[V(D_t)\setminus U]$. If such an enclosure $D_t$ and such a demand pair $(s',t')$ exist, we move $D_t$ to $\dset'$, add $(s',t')$ to $\hmset$, and we add to $\pset$ any path connecting $s'$ to $t'$ in $G[V(D_t)\setminus U]$. We then add all vertices of $D_t$ to $U$, and continue to the next iteration. The algorithm terminates when for every enclosure $D_t\in \dset$, for every demand pair $(s',t')\in \mset^h$, no path connecting $s'$ to $t'$ is contained in $G[V(D_t)\setminus U]$.

It is easy to see that set $\pset$ contains a collection of node-disjoint paths, routing the demand pairs in $\hmset$. If $|\hmset|\geq \Delta$, then we terminate the algorithm and return $\pset$. From now on, we assume that $|\hmset|<\Delta$. We say that a demand pair $(s,t)\in \mset^h$ is bad iff $s$ or $t$ (or both) belong to $U$. Since $|\dset'|=|\hmset|+|A|<\Delta+q$, and for each disc $D_t\in \dset'$, $|\tset\cap V(D_t)|\leq c' \Delta \log k/w^*$ from Theorem~\ref{thm: build enclosures}, the number of bad demand pairs is at most \fbox{$c'\Delta (\Delta+q)\log k/w^*<0.1W/w^*=0.1|\mset|$}. Let $\tmset^h$ be the set of all good demand pairs, so $|\tmset^h|\geq 0.5|\mset|$. Our last step is to define, for every terminal $t\in \tmset^h$, a new enclosure $D'_t$. Before we do so, we need a few definitions and observations.

Let $\Gamma=\bigcup_{t\in \tset(\mset^h)}V(C_t)$. The following observation is immediate:

\begin{observation}\label{obs: reach centers}
For every vertex $v\in \Gamma$, there is a $G$-normal curve $\gamma(v)$ of length at most $2c\Delta\log k$, connecting $v$ to some vertex of $\bigcup_{j=1}^qV(C_{a_j})$.
\end{observation}
\begin{proof}
Assume that $v\in V(C_t)$, and $t\in X_j$. Then there is a $G$-normal curve $\gamma_1$ of length at most $c\Delta\log k$, connecting some vertex $u\in V(C_t)$ to some vertex $u'\in \bigcup_{j=1}^qV(C_{a_j})$. Using a segment of $C_t$ between $v$ and $u$, whose length is at most $\Delta$, and combining it with $\gamma_1$, we obtain the desired curve $\gamma(v)$.
\end{proof}

Let $\Gamma'=\bigcup_{D_t\in \dset'\setminus\dset_0}V(C_t)$. Since $|\dset'\setminus\dset_0|\leq \Delta$, and $|V(C_t)|\leq \Delta$ for every terminal $t$, $|\Gamma'|\leq \Delta^2$. We are now ready to define the new enclosures.

Consider any terminal $t\in \tset(\tmset^h)$, and let $R$ be the connected component of $G[V(D_t)\setminus U]$ that contains $t$ (recall that $t\not\in U$ from our definition of $\tmset^h$). Let $R'=R\cup (N(R)\cap U)$. Then there is a disc $D'_t$, with $R\subseteq V(D_t')\subseteq R'$, such that the boundary of $D'_t$ is $G$-normal, and it only contains the vertices of $V(C_t)\cup (N(R)\cap U)$. We let $D'_t$ be the new enclosure, and let $C'_t$ be its boundary. 

\begin{claim}
The boundary $C'_t$ contains at most $2\Delta^2$ vertices, and for every vertex $v\in V(C'_t)$, there is a $G$-normal curve of length at most $2c\Delta\log k$, connecting $v$ to some vertex of $\bigcup_{a_j\in A}V(C_{a_j})$.
\end{claim}

\begin{proof}
In order to prove the claim, we start with the following two observations.

\begin{observation}
There is at most one terminal $a_j\in A$, such that $D_{a_j}\cap D_t\neq \emptyset$. 
\end{observation}
\begin{proof}
Assume for contradiction that there are two terminals $a_j\neq a_{j'}\in A$ with $D_{a_j}\cap D_t,D_{a_{j'}}\cap D_t \neq \emptyset$. 
Then there is some vertex $v\in V(C_t)\cap V(C_{a_j})$ and some vertex $v'\in V(C_t)\cap V(C_{a_{j'}})$. We can follow one of the segments of $C_t$ connecting $v$ to $v'$ to obtain a $G$-normal curve of length at most $\Delta/2$ connecting $v$ to $v'$, contradicting the fact that $d(a_j,a_{j'})\geq \Delta$.
\end{proof}

If there is a vertex $a_j\in A$ with $D_{a_j}\cap D_t\neq \emptyset$, then let $\Gamma''=V(C_{a_j})$, and otherwise let $\Gamma''=\emptyset$.

\begin{observation}
$V(C'_t)\subseteq V(C_t)\cup \Gamma'\cup \Gamma''$.
\end{observation}
\begin{proof}
Assume otherwise, and let $v\in V(C'_t)$, such that $v\not\in V(C_t)\cup \Gamma'\cup \Gamma''$.
From the definition of $D'_t$, $v\in C_t\cup (N(R)\cap U)$, so $v\in N(R)\cap U$ must hold. Let $D_{t'}\in\dset'$ be any enclosure with $v\in V(D_{t'})$. Notice that either $t'=a_j$, or $D_{t'}\in \dset'\setminus\dset_0$. If $v\not\in \Gamma'\cup \Gamma''$, then $v\in V(D_{t'})\setminus V(C_{t'})$. But there is a vertex $u\in R$ with an edge $(u,v)\in E(G)$. Since $u\not\in U$, $u\not\in V(D_{t'})$. Since $C_{t'}$ is the boundary of $D_{t'}$ and it is $G$-normal, this is impossible.
\end{proof}

Since $|\Gamma'|\leq \Delta^2$, we get that $|V(C'_t)|\leq |\Gamma'|+|\Gamma''|+|V(C_t)|\leq 2\Delta^2$. From Observation~\ref{obs: reach centers},
for every vertex $v\in V(C'_t)$, there is a $G$-normal curve $\gamma(v)$ of length at most $2c\Delta\log k$, connecting $v$ to some vertex of $\bigcup_{j=1}^qV(C_{a_j})$. 
\end{proof}

It is immediate to verify that $V(D'_t)$ induces a connected sub-graph in $G$, from the definition of $D'_t$. Finally, we claim that for all $(s',t')\in \tmset^h$, either $s'$ or $t'$ lie outside $V(D'_t)$. Indeed, assume otherwise and suppose for some $(s',t')\in \tmset^h$, $s',t'\in V(D'_t)$. Notice that from the definition of $\tmset^h$, $s',t'\not \in U$, so $s',t'\in R$. But then we should not have terminated the algorithm, as we can route $(s',t')$ in $G[V(D_t)\setminus U]$.
\end{proof}

If Theorem~\ref{thm: close demand pairs} returns a routing of $\Delta$ demand pairs in $\mset^h$ on node-disjoint paths then we terminate the algorithm and return this routing. We assume therefore from now on that the outcome of Theorem~\ref{thm: close demand pairs} is a collection $\tmset^h$ of at least $0.5|\mset|$ demand pairs, and the enclosures $D'_t$ for $t\in \tset(\tmset^h)$ as in the theorem statement.

Before we proceed, we need a new definition. Suppose we have a path $P$, connecting some pair $x,y$ of vertices in $G$. We say that the \emph{face-length} of $P$ is at most $\ell$, and denote $\lface(P)\leq \ell$, iff there is a sequence $v_1=x,v_2,\ldots,v_{\ell}=y$ of vertices that appear on $P$ in this order, and for each $2\leq i\leq \ell$, there is some face $F_i$ of $G$, such that the segment of $P$ between $v_{i-1}$ and $v_i$ (that we denote by $\sigma_i(P)$), is contained in the boundary of $F_i$. The following simple observation relates the smallest face-length of any path $P$ connecting $x$ to $y$ to the length of the shortest $G$-normal curve connecting them.

\begin{observation}
Let $x,y\in V(G)$, let $P$ be a path of smallest face-length connecting $x$ to $y$ in $G$, and let $\gamma$ be the shortest $G$-normal curve connecting $x$ to $y$. Then $\lface(P)=L(\gamma)$.
\end{observation}
\begin{proof}
Let $\fset$ be the set of faces $F$ of $G$, such that $\gamma$ contains an inner point of $F$. Since $\gamma$ contains $L(\gamma)$ vertices, $|\fset|\leq L(\gamma)-1$. The union of the boundaries of these faces contains a path $P$ connecting $x$ to $y$, whose face-length is at most $L(\gamma)$. This shows that $\lface(P)\leq L(\gamma)$.

For the other direction, we can build a $G$-normal curve $\gamma'$, that only intersects $G$ at vertices $v_1,v_2,\ldots,v_{\ell}$, where for every $1< i\leq \ell$ the segment of $\gamma'$ between $v_{i-1}$ and $v_{i}$ is contained in the corresponding face $F_i$.
\end{proof}

\begin{lemma}\label{lemma: build the trees}
There is an efficient algorithm to construct, for every terminal $t\in \tmset^h$, a path $Q(t)$ connecting $t$ to a vertex of $A$, such that the graph $\bigcup_{t\in \tmset^h}Q(t)$ is a forest. Moreover, path $Q(t)$ consists of three segments, $Q_1(t)$, $Q_2(t)$, and $Q_3(t)$, such that $Q_1(t)\subseteq D'_t$, $Q_3(t)\subset D'_{a_j}$ for some $a_j\in A$, and $\lface(Q_2(t))\leq 3c'\Delta\log k$.
\end{lemma}

For every terminal $t\in \tset(\tmset^h)$, let $Q(t)$ be the shortest $G$-normal curve connecting $t$ to some vertex $z(t)\in\bigcup_{j=1}^qC_{a_j}$. We can assume without loss of generality that whenever two such curves $\gamma(t),\gamma(t')$ meet, they continue together, that is, the curves in $\set{\gamma(t)}_{t\in \tset(\tmset^h)}$ form a forest-like structure, so the union of all the curves does not contain a circle. Consider now some terminal $t\in \tset(\tmset^h)$, and the corresponding curve $\gamma(t)$. Assume that $\gamma(t)$ terminates at some $C_{a_j}$ for $a_j\in A$. We can assume w.l.o.g. that it is internally disjoint from all other $C_{a_j}$. Since $V(D'_{t})\cap V(D_{a_j})=\emptyset$, $\gamma_t$ contains some vertex of $V(C_t)$. Let $u_t$ be the first such vertex on $\gamma_t$, where we view $\gamma_t$ as directed from $t$ towards $C_{a_j}$. Let $\gamma'_t$ and $\gamma''_t$ be the segments of $\gamma_t$ from $t$ and $v_t$, and from $v_t$ to a vertex of $C_{a_j}$, respectively. Then $\gamma''_t$ contains at most $2c\Delta\log k$ vertices, from Theorem~\ref{thm: close demand pairs}, while $\gamma'_t$ only contains vertices of $V(D'_t)$.

\label{---------------------------------Sec: type 2-------------------------------}
\section{Dealing with Type-2 Demand Pairs}\label{sec: type 2}

Recall that a demand pair $(s,t)$ is a type-2 pair iff $d(s,t)\leq 5\Delta$. We now assume that all demand pairs are type-2 pairs.

We again construct cluster centers and clusters, only now cluster centers are demand pairs (and not individual terminals). Given two demand pairs $(s,t)$, $(s',t')$, we let $d((s,t),(s',t'))$ be the smallest distance from any terminal in $\set{s,t}$ to any terminal in $\set{s',t'}$. We now build a maximal set $K=\set{(s_1,t_1),\ldots,(s_q,t_q)}$ of demand pairs, such that for all $i\neq j$, $d((s_i,t_i),(s_j,t_j))\geq 10\Delta$. We then partition all demand pairs into clusters, where for each demand pair $(s,t)$, we add $(s,t)$ to cluster $X_i$, if $d((s,t),(s_i,t_i))$ is minimized among all $1\leq i\leq q$. We then consider two cases.

The first case is where $q\geq x$. In this case, we can route each pair $(s_i,t_i)$, for $1\leq i\leq q$, as follows. Let $R_i$ be the $G$-normal curve of length at most $5\Delta$ connecting $s_i$ and $t_i$. Let $H_i$ be the graph obtained by the union of the boundaries of all faces that $R_i$ crosses, and the sub-graphs induced by $D(s_i)$ and $D(t_i)$. Clearly, there is some path $P_i$ connecting $s_i$ to $t_i$ in $H_i$. We route $s_i$ to $t_i$ via this path. We claim that this gives a node-disjoint routing of all demand pairs in $K$. This follows from the fact that $d((s_i,t_i),(s_j,t_j))\geq 10D$, and so the graphs $H_i$ and $H_j$ for $i\neq j$ must be disjoint.

The second case is when $q<x$. We then solve the problem instance induced by each of the clusters $X_i$ separately.  Let $\mset_i$ be the set of the demand pairs in $X_i$. We then reduce this problem to the problem of routing terminals lying on a boundary of a hole in the plane, like in Case 1 above. As before, at least one of the resulting problems has a solution of value $\Omega\left(\frac{\tilde k-\tilde k/\log \tk}{x\Delta\log \Delta}\right )$ (we account for the at most $O(\tilde k/\log \tk)$ demand pairs in $\mset'$ and the rest of the reasoning is like in Case 1). Since we find an $\tilde O(\tk^{\delta})$-approximation for each resulting subproblem, and take the best solution, we will route $\Omega\left (\frac{\tk^{1-\delta}}{x\Delta\poly\log \tk}\right )$ demand pairs.


Old Step 3 of Case 2a

Let $\lambda$ be the maximum number of node-disjoint paths connecting the vertices of $V(Z)$ to the vertices of $V(Z')$ in $G$. Value $\lambda$ can be computed via a simple maximum flow computation. Our starting point is the following theorem.

\begin{theorem}\label{theorem: the paths for the cycles}
There is an efficient algorithm to compute a set $\qset$ of $\lambda$ node-disjoint paths, connecting a subset of vertices of $V(Z)$ to a subset of vertices of $V(Z')$, such that the paths in $\qset$ are monotone with respect to every cycle $Z_{i}(x)$, for $h'\leq i\leq h''$
\end{theorem}

\begin{proof}
Let $\pset$ be any set of $\lambda$ node-disjoint paths, connecting vertices of $V(Z)$ to the vertices of $V(Z')$ - we can find $\pset$ via a standard maximum flow computation. We can assume without loss of generality that every path in $\pset$ is internally disjoint from $Z$ and $Z'$. Let $H$ be the graph obtained from $G$ by first discarding all edges and vertices lying outside $D(Z')$, and then adding a vertex $v^*$ lying outside $D(Z')$, and connecting it to every vertex of $V(Z')$ with an edge. We then subdivide each such new edge with a new vertex, and we let $B'$ be the set of these new vertices.

From the construction of the shell, $(Z_{h'+1},\ldots,Z_{h''})$ is a family of tight concentric cycles around $Z_{h'}$ in $H$. Let $Y$ be the sub-graph of $H$ induced by $B'\cup \set{v^*}$. Then we can extend the paths in $\pset$, so they connect a subset of the vertices of $B'$ to a subset of the vertices of $V(Z_{h'})$, and they remain internally disjoint from $Z_{h'}\cup Y$. We can now use Theorem~\ref{thm: monotonicity for shells} to compute the desired set $\qset$ of paths.
\end{proof}

Consider some path $Q\in \qset$, and fix some $h'\leq j\leq h''$. We view $Q$ as directed from a vertex of $Z'$ to a vertex of $Z$. Let $v$ be the last vertex of $Q$ lying on $Z_j(x)$, and let $v'$ be the first vertex of $Q$ lying on $Z_{j-1}(x)$. We denote by $\mu_j(Q)$ the segment of $Q$ between $v$ and $v'$. The following observation is immediate from the monotonicity of the paths in $\qset$.

\begin{observation}\label{obs: paths in Q and components R}
For each path $Q\in\qset$, for each  $h'< j\leq h''$, there is a type-1 component $R_j(Q)\in \rset_j$, such that path $\mu_j(Q)$ starts at $u(R)$, terminates at a vertex of $L(R)$, and except for its first and last edges is completely contained in $R_j(Q)$.
Moreover, if $Q,Q'\in \qset$ with $Q\neq Q'$, then for all $h'< j\leq h''$, $R_j(Q)\neq R_j(Q')$.
\end{observation}

We will use the components $R_j(Q)$ later in our proof.
%

Let $\qset=(Q_1,Q_2,\ldots,Q_{\lambda})$. For each $1\leq i\leq \lambda$, let $b_i$ be the endpoint of $Q_i$ lying on $Z$, and let $b'_i$ be its endpoint lying on $Z'$. We assume without loss of generality that $b_1,\ldots,b_{\lambda}$ appear in this counter-clock-wise order on $Z$, and so $b_1',\ldots,b_{\lambda}'$ appear in this counter-clock-wise order on $Z'$. Notice that we can assume that $\lambda>32\Delta_0$, as otherwise at most $32\Delta_0$ demand pairs of $\hmset$ can be routed by any solution, and we can return a single path routing any demand pair as a valid solution.

Let $\lambda'$ be the smallest integral power of $2$ greater than $\frac{\lambda}{32\Delta_0}$, so $\frac{\lambda}{32\Delta_0}\leq \lambda'<\frac{\lambda}{16\Delta_0}$. Then we can partition set $\qset$ into $\lambda'$ subsets, $\qset_1,\ldots,\qset_{\lambda'}$, each containing at least $16\Delta_0$ and at most $32\Delta_0$ consecutive paths of $\qset$, where $\qset_1$ contains $Q_1$ and $\qset_1\subseteq \set{Q_1,\ldots,Q_{32\Delta_0}}$. For each $1\leq j\leq \lambda'$, let $P_j\in \qset_j$ be the path with the smallest index $j$. Abusing the notation, we will now let $b_j$ and $b'_j$ be the endpoints of path $P_j$ lying on $Z$ and $Z'$, respectively.

Consider the drawings of $Z,Z'$, and the set $\pset=\set{P_1,\ldots,P_{\lambda'}}$ of paths. This drawing consists of $\lambda'+2$ faces. We denote by $\Sigma_i$ the face whose boundary contains $P_i$ and $P_{i+1}$ (where we think of $\lambda'+1$ as $1$), and we let $\sigma_i,\sigma'_i$ be the segments of $Z$ and $Z'$, respectively, on the boundary of $\Sigma_i$. We will sometimes refer to the discs $\Sigma_i$ as squares. We say that a source vertex $s\in S$ belongs to square $\Sigma_i$ if $s$ lies on $\sigma_i\setminus\set{b_{i+1}}$, and similarly we say that a destination vertex $t\in T$ belongs to square $\Sigma_i$ if $t$ lies on $\sigma_i'\setminus\set{b'_{i+1}}$.
We need the following lemma.

\begin{lemma}\label{lem: curves from paths}
For each $1\leq j\leq \lambda$, for every vertex $v\in V(P)$, there are $G$-normal curves $\gamma_1(v)$ and $\gamma_2(v)$ of length at most $h''$ each, connecting $v$ to vertices of $Z$, such that $\gamma_1(v)$ is contained in $\Sigma_{j-1}$, and $\gamma_2(v)$ is contained in $\Sigma_j$ (where we view $\Sigma_0=\Sigma_{\lambda}$).
\end{lemma}

\begin{proof}

Let $v\in V(P)$ be any vertex. Then there is some $h'\leq j<h''$, such that either $v\in R_j$, or $v=u(R_j)$, or $v\in L(R_j)$. 

If $v=u(R_j)$, or $v\in R_j$, then by our construction, there is a $G$-normal curve $\gamma''(v)$ of length $2$, connecting $v$ to some vertex $v_j\in Z_j(x)$, so that $\gamma''(v)\subseteq \Sigma_i$. Among all such possible curves $\gamma''(v)$, we choose the one for which $v_j$ lies furthest from any vertex of $P_i\cap Z_j(x)$ along the direction of $Z_j(x)$ (that is, we push $v_j$ towards path $P_{i+1}$ as much as possible). If $v\in L(R_j)$, then we let $v_j=v$ and $\gamma''(v)=\emptyset$. Assume now that we have defined the vertex $v_{j'}\in V(Z_j(x))$, for some $h''<j'\leq j$.
Then there is a $G$-normal curve $\gamma_{j'}(v_{j'})$ of length $2$, connecting $v_{j'}$ to some vertex $v_{j'+1}\in Z_{j+1}(x)$. Among all such curves, we choose the one where $v_{j'+1}$ is as far as possible from the vertices of $P_i\cap Z_{j+1}(x)$, so it is pushed as far as possible in the counter-clock-wise direction along $Z_{j+1}(x)$. 

In the end, we let $\gamma^*(v)$ be the concatenation of $\gamma''(v),\gamma_{j}(v_j),\ldots,\gamma_{h''+1}(v_{h''+1})$, and we let $\gamma(v)$ be the concatenation of $\gamma^*(v)$ and $\gamma(v_{h''})$. We claim that $\gamma^*(v)\subseteq \Sigma_i$. Indeed, assume for contradiction that it intersects $P_{i+1}$. Then every path in set $\qset$, that is contained in $\Sigma_i$, must intersect $\gamma^*(v)$. But the length of $\gamma^*(v)$ is at most $h\leq 8\Delta_0$, while the number of such paths is at least $16\Delta_0$, a contradiction. Therefore, $\gamma^*(v)$ does not intersect $P_{i+1}$. Our construction then also ensures that $\gamma^*(v)$ does not intersect $P_i$. We conclude that $\gamma(v)$ is contained in $D(Z^*)$.

Finally, let $v\in V(Z')\cap V(Z^*)$ be any vertex. From  Property~(\ref{prop: short curve}) of the shells, there is a $G$-normal curve $\gamma(v)$ of length $h''+1\leq 8\Delta_0$ connecting $v$ to some vertex of $C_x$. If $\gamma(v)\subseteq D(Z^*)$, then we are done. Otherwise, let $v'$ be the first internal vertex of $\gamma(v)$ lying on $Z^*$. Then there must be some path $P_i\in \pset$, such that $P_i\subseteq Z^*$, and $v'\in V(P_i)$. We then use the segment of $\gamma(v)$ from $v$ to $v'$, concatenated with the curve $\gamma(v')$, to obtain a $G$-normal curve of length at most $16\Delta_0$, connecting $v$ to a vertex of $C_x$, so that the curve is contained in $D(Z^*)$. 
\end{proof}

We partition the set $\hat\mset$ of the demand pairs into two subsets: set $\hmset'$ contains all demand pairs $(s,t)$ where $s$ and $t$ belong to the same square; and $\hmset''$ contains all remaining demand pairs. We call demand pairs in $\hmset'$ \emph{type-1 pairs}, and all demand pairs in $\hmset''$ \emph{type-2 pairs}. We deal with the demand pairs from these two types separately.

\paragraph{Type-1 Demand Pairs}
We define a set $\tpset$ of paths, routing a subset of type-1 demand pairs, and we then show that if at least half of the demand pairs routed by the optimal solution to instance $(G,\hmset)$ are type-1 pairs, then $|\tpset|$ is close to $\opt(G,\hmset)$.

Let $\kset$ be the set of all squares $\Sigma_i$, such that there is at least one type-1 demand pair $(s,t)\in \hmset'$, with $s,t$ both belonging to $\Sigma_i$. For each square $\Sigma_i\in \kset$, we select any demand pair $(s,t)$ where both $s,t$ belong to $\Sigma_i$, and we route it via the path $P(s,t)$ defined as follows: it is a concatenation of the segment of $\sigma_i$ between $s$ and $b_i$, path $P_i$, and a concatenation of the segment of $\sigma'_i$ between $t$ and $b_i'$. We then add $P(s,t)$ to $\tpset$. It is easy to see that $\tpset$ contains $|\kset|$ node-disjoint paths, routing a subset of the demand pairs in $\hset'$. The following theorem will be useful in analyzing the value of the resulting solution.

\begin{theorem}\label{thm: number of type-1 pairs in a square}
Consider any optimal solution $\pset^*$ to instance $(G,\hmset)$. Then for every square $\Sigma_i\in \kset$, the number of the demand pairs $(s,t)\in \hmset'$,  routed by $\pset^*$, where both $s$ and $t$ belong to $\Sigma_i$ is at most $306\Delta_0^2$.
\end{theorem}
\begin{proof}
Fix some square $\Sigma_i\in \kset$, and let $\pset_0\subseteq \pset^*$ be the set of paths routing the type-1 demand pairs $(s,t)$, where both $s$ and $t$ belong to $\Sigma_i$. Let $\hmset_0$ be the set of the demand pairs routed by $\pset_0$, and assume for contradiction that $|\hmset_0|>306\Delta_0^2$. Let $S_0,T_0$ be the sets of the source and the destination vertices of the demand pairs in $\hmset_0$.

Recall that for every vertex $s\in S_0$ has a $G$-normal curve $\gamma(s)$ of length $h'+1< 8\Delta_0$, connecting $s$ to a vertex of $C_x$, so that $\gamma(s)\subseteq D(Z_{h'}(x))$. Let $s',s''$ be the sources of the demand pairs in $\hmset_0$, lying closest to each of the two endpoints of $\sigma_i$, respectively. We discard from $\pset_0$ all paths crossing $\gamma(s')$, or $\gamma(s'')$, or $C_x$, and we let $\pset_1$ be the set of the remaining paths. Since the lengths of the first two curves are at most $8\Delta_0$ each, and the length of the third curve is at most $\Delta$, $|\pset_1|\geq |\pset_0|-17\Delta_0$.


Let $\tpset\subseteq \pset_1$ be the set of all paths contained in $\Sigma_{i-1}\cup \Sigma_i\cup\Sigma_{i+1}$. We claim that $|\tpset|<96\Delta_0+2$. Indeed, assume otherwise. Consider the set $\qset$ of the paths computed in Theorem~\ref{theorem: the paths for the cycles}. Let $\qset'\subseteq \qset$ be the set of all paths contained in $\Sigma_{i-1}\cup \Sigma_i\cup\Sigma_{i+1}$. From our construction of the squares, $|\qset'|\leq 96\Delta_0+1$. If $|\tpset|\geq 96\Delta_0+2$, then $(\qset\setminus \qset')\cup \tpset$ is a collection of at least $\lambda+1$ node-disjoint paths connecting the vertices of $Z$ to the vertices of $Z'$, contradicting the choice of $\lambda$. Therefore, $|\tpset|<96\Delta_0+2$.

Let $\pset_2\subseteq \pset_1$ be the subset obtained by discarding all paths in $\tpset$ from $\pset_1$. Then $|\pset_2|\geq |\pset_1|-96\Delta_0-2> |\pset_0|-50\Delta_0> 256\Delta_0$ We let $\hmset_2$ be the set of the demand pairs routed by the paths in $\pset_2$.

Consider now any demand pair $(s,t)\in\hmset_2$, and let $P\in \pset_2$ be the path routing it. Then $P$ does not contain any vertex lying on $\gamma(s)\cup \gamma(s')\cup C_x$, and it is not contained in $\Sigma_{i-1}\cup \Sigma_i\cup\Sigma_{i+1}$. 
Let $v\in V(Z')$ be the vertex where path $P_i$ originates, and define $v'$ similarly for path $P_{i+1}$. Let $\gamma=\gamma_1(v)$ and $\gamma'=\gamma_2(v')$ be the curves given by Lemma~\ref{lem: curves from paths}. Then $\gamma'$ connects $v$ to some vertex of $Z$, and it is contained in $\Sigma_{i-1}$, while $\gamma'$ connects $v'$ to some vertex of $Z$, and it is contained in $\Sigma_{i+1}$. It is easy to see that $P$ has to cross at least one of the curves $\gamma,\gamma'$. (need to extend them?)
\end{proof}

\begin{corollary}\label{cor: type 1 approx}
Consider any optimal solution to instance $(G,\hmset)$, and assume that at least half of the demand pairs routed in this solution are type-1 demand pairs. Then set $\tpset$ contains $\Omega\left(\frac{\opt(G,\hmset)}{\Delta_0^2}\right )$ paths.
\end{corollary}

\paragraph{Type-2 Demand Pairs}
We now focus on routing type-2 demand pairs.
Let graph $H$ be the union of $Z,Z'$, and the paths in set $\pset$.
 Consider any binary vector $B$ of length $\lambda'$. We associate with each such vector $B$ a cycle $Z(B)\subseteq H$, as follows. In order to construct cycle $Z_B$, for each $1\leq i\leq \lambda'$, if $B[i]=1$, then we add $\sigma_i$ to $Z(B)$, and otherwise we add $\sigma_i'$ to $B$. Additionally, if $B[i]\neq B[i-1]$, we add $P_i$ to $Z(B)$. It is easy to verify that $Z(B)$ is a simple cycle in $H$.
 
 Consider now any type-2 demand pair $(s,t)\in \hmset''$, and assume that $s\in \sigma_i$, while $t\in \sigma_{i'}'$. We say that pair $(s,t)$ is covered by $B$, iff $B[i]=1$ and $B[i']=0$. In other words, both $s$ and $t$ must lie on $Z(B)$. We let $\mset(B)\subseteq \hmset''$ be the set of all demand pairs covered by $B$. We now continue in two steps. First, we show that we can cover all type-2 demand pairs with $O(\log n)$ vectors $B$. Next, for each such vector $B$, we obtain an approximate solution to the corresponding problem instance $(G,\mset(B))$, by reducing it to the special case, using the cycle $Z(B)$ (or more precisely, its slight modification). We start with the following simple lemma. Recall that $\lambda'$ is a power of $2$.
 
 \begin{lemma}\label{lem: covering type-2 pairs}
 There is a set $\bset$ of at most $2\log \lambda'$ binary vectors, covering all demand pairs in $\hmset''$.
 \end{lemma}
 
 \begin{proof}
 Let $r=\log \lambda'$. We partition the set $\hmset''$ of type-2 demand pairs into $r$ subsets, $\mset^1,\ldots,\mset^r$, as follows. Consider some demand pair $(s,t)$, with $s\in \sigma_i$ and $t\in \sigma'_{i'}$. Assume without loss of generality that $i'>i$ (the other case is dealt with similarly). Let $\delta(s,t)=\min\set{(i'-i), \lambda'-(i'-i)+1}$. In other words, $\delta(s,t)$ is the shorter distance between $\Sigma_i$ and $\Sigma_{i'}$ along the cycle, in terms of the number of squares. Notice that for all $(s,t)\in \hmset''$, $1\leq \delta(s,t)\leq \lambda'/2$.  For all $1\leq j\leq r$, we let $\mset^j$ contain all demand pairs $(s,t)$ with $2^{j-1}\leq \delta(s,t)< 2^{j}$. Clearly, every demand pair in $\hmset''$ belongs to one of the sets $\mset^1,\ldots,\mset^r$.
 
 We next show that for all $1\leq j\leq r$, set $\mset^j$ can be covered by two vectors, $B_j$ and $B_j'$. In order to define the two vectors, we partition the squares $\Sigma_1,\ldots,\Sigma_{\lambda'}$ into blocks, where every block contains $2^{j-1}$ consecutive squares (for $j=0$, every square $\Sigma_i$ lies in a separate block). Notice that now, for every demand pair $(s,t)\in \mset^j$, if $s\in \Sigma_i$ and $t\in \Sigma_{i'}$, then $\Sigma_i$ and $\Sigma_{i'}$ must belong to distinct, but neighboring blocks.
 
 Let $\rho=\lambda'/2^{j-1}$, and let $\beta_1,\ldots,\beta_{\rho}$ be the resulting collection of blocks. Notice that since $\lambda'$ is a power of $2$, $\rho$ is an even integer. Vector $B_j$ is defines as follows. For every  $1\leq i\leq \lambda'$, if $\Sigma_i$ belongs to block $\beta_z$, where $z$ is odd, then we set $B_j[i]=1$, and otherwise it is set to $0$. Vector $B_j'$ is obtained by flipping every bit of $B_j$. It is now easy to see that every demand pairs $(s,t)\in \mset^j$ is covered by one of the vectors $B_j$, $B_{j}'$. Indeed, assume that 
$s\in\Sigma_i$, $t\in \Sigma_{i'}$, and let $\beta_z,\beta_{z'}$ be the two neighboring blocks (so $z'=z+1$ or $z'=z-1$), containing $\Sigma_i$ and $\Sigma_{i'}$, respectively. Then, depending on the parity of $z$, one of the vectors $B_j$, $B_j'$ will have $1$ in coordinate $z$ and $0$ in coordinate $z'$, and that vector covers $(s,t)$.
\end{proof}

Finally, we prove the following theorem.

\begin{theorem}\label{thm: solutions for vectors}
For every $4/\log n<\delta<1$, there is an algorithm with running time $n^{O(1/\eps)}$, that, given any vector $B$, finds a set $\pset'$ of paths routing a subset of the demand pairs in $\mset(B)$ of cardinality at least $\Omega\left (\frac{\opt(G,\mset(B))}{\Delta_0n^{\eps}\log n}\right )$.
\end{theorem}

Before we prove this theorem, we show that it concludes the proof of Theorem~\ref{thm: main for Case 2}. Let $\bset$ be the set of $O(\log n)$ binary vectors given by Lemma~\ref{lem: covering type-2 pairs}. For each vector $B\in \bset$, we run the algorithm from Theorem~\ref{thm: solutions for vectors}, to obtain a set $\pset(B)$ of paths, routing a subset of the demand pairs in $\mset(B)$. We then return the best among the solutions $\tpset$ and $\pset(B)$ for $B\in \bset$. We claim that the resulting solution routes $\Omega\left (\frac{\opt(G,\hmset)}{\Delta_0^2n^{\eps}\log^2 n}\right )$ demand pairs. Indeed, let $\pset^*$ be the optimal solution to instance $(G,\hmset)$. If at least half of the demand pairs routed by $\pset^*$ are type-1 pairs, then from Corollary~\ref{cor: type 1 approx}, $\tpset$ gives a solution of value $\Omega\left (\frac{\opt(G,\hmset)}{\Delta_0^2}\right )$. Otherwise, there must be some vector $B\in \bset$, such that at least $\Omega\left(\frac{\opt(G,\hmset)}{\log n}\right)$ of the demand pairs routed by $\pset^*$ belong to $\mset(B)$. Therefore, there is some vector $B\in \bset$, with $\opt(G,\mset(B))\geq \Omega\left(\frac{\opt(G,\hmset)}{\log n}\right)$. Then the corresponding set $\pset(B)$ contains at least $\Omega\left (\frac{\opt(G,\mset(B))}{\Delta_0n^{\eps}\log n}\right )=\Omega\left (\frac{\opt(G,\hmset)}{\Delta_0n^{\eps}\log^2 n}\right )$ paths. It is now enough to prove Theorem~\ref{thm: solutions for vectors}.

\begin{proofof}{Theorem~\ref{thm: solutions for vectors}}
Fix some vector $B\in \bset$. In general, we would like to reduce problem $(G,\mset(B))$ to the special case, using the cycle $Z(B)$ as $Z$, and curve $C_x$ as $C$. However, the special case requires that for every vertex $v\in V(Z(B))$, there is a $G$-normal curve $\gamma(v)$ of length at most $16\Delta_0$, connecting $v$ to a vertex of $C_x$, and $\gamma(v)$ must be contained in the disc $D(Z(B))$. It is easy to see that for every segment $\sigma$ of $Z$ or $Z'$ that belongs to $Z(B)$, for every vertex of $\sigma$, such a curve exists (we prove it more formally later). However, for paths $P\in \pset$ that are contained in $Z(B)$ this is not necessarily true. Indeed, for each $h'< j\leq h''$, path $P$ traverses the component $R_j(P)$, and it may enter deep into $R_j(P)$, so it is possible that vertices in $R_j(P)\cap P$ do not have such curves. We will fix this problem by re-routing the segment of $P$ inside $R_j(P)$, so it follows the boundary of $R_j(P)$.

Specifically, we consider every path $P\in \pset$ contained in $Z(B)$ one-by-one. For each such path $P\in \pset$, we consider every index $h'<j\leq h''$ one-by-one, and perform the following transformation of $P$. Let $R=R_j(P)$ be the component of $\rset_j$ that path $P$ traverses. We view path $P$ as originating at a vertex of $Z'$ and terminating at a vertex of $Z$. Assume that $P=P_i$, so either $\Sigma_i$, or $\Sigma_{i-1}$ is contained in $D(Z(B))$ - we assume without loss of generality it is the former. Consider the segment $\mu_j(P)$ that starts at $u_R$, and terminates at some vertex of $L(R)$. Let $\mu'$ be obtained from $\mu_j(P)$ by discarding its first and last edge, so $\mu'$ is a simple path in $R$. Let $v',v''$ be the endpoints of $\mu'$, and let $\Gamma$ be the outer boundary of $R$. Notice that every vertex $v\in V(\Gamma)$ has a $G$-normal curve $\gamma'(v)$ of length $2$, connecting $v$ to some vertex of $Z_{j-1}(x)$, that is, $\gamma'(v)$ does not contain any vertices of $G$ as inner vertices. However, for some vertices $v\in V(\Gamma)$, $\gamma'(v)$ is contained in $\Sigma_{i-1}$, and for some it is contained in $\Sigma_i$ (some vertices may have two such curves - one in $\Sigma_{i-1}$ and one in $\Sigma_i$). Let $V'\subseteq V(\Gamma)$ be the set of all vertices $v$, such that there is a $G$-normal curve $\gamma''(v)$ of length $2$, connecting $v$ to some vertex of $Z_{j-1}(x)$, such that $\gamma''(v)$ is contained in $\Sigma_{i}$. Then it is easy to see that $R[V']$ contains a path $\mu''$, connecting $v'$ to $v''$. We replace $\mu'$ with $\mu''$ on path $P$, and continue to the next iteration. We note that after each such transformation, all resulting paths in set $\qset$ remain node-disjoint, as no other path in $\qset$ can contain a vertex of $R_j(P)$, from Observation~\ref{obs: paths in Q and components R}. In particular, $Z(B)$ remains a simple cycle. Moreover, even though  subsequent transformations may modify the square $\Sigma_i$ (for example by modifying path $P_{i+1}$), for every vertex $v\in \mu''$, the corresponding curve $\gamma''(v)$ will still remain contained in $\Sigma_i$. All demand pairs that were covered by the original cycle $Z(B)$ still lie on the final cycle.

Let $Z^*$ be the cycle $Z(B)$ at the end of the above transformation. The following claim is central to our proof.
\begin{claim}\label{claim: curves for final cycle}
For every vertex $v\in V(Z^*)$, there is a $G$-normal curve $\gamma(v)$ of length at most $16\Delta_0$, that is contained in $D(Z^*)$, connecting $v$ to a vertex of $C_x$.
\end{claim}

The above claim will finish the proof of Theorem~\ref{thm: solutions for vectors}, since we can obtain an instance of the special case, and  apply Theorem~\ref{thm: case 2b - special case} with $Z=Z^*$ and $C=C_x$, to obtain a set $\pset(B)$ of paths routing $\Omega\left (\frac{\opt(G,\mset(B))}{\Delta_0n^{\eps}\log n}\right )$ demand pairs of $\mset(B)$. It now remains to prove Claim~\ref{claim: curves for final cycle}.

\begin{proof}
We first notice that for every vertex $v$ lying on $Z=Z_{h'}(x)$, there is a $G$-normal curve $\gamma(v)$ of length at most $h'$, connecting $v$ to a vertex of $C_t$, with $\gamma(v)\subseteq D(Z)$, from Property~(\ref{prop: short curve}) of the shells. In particular, $\gamma(v)\subseteq D(Z^*)$. It is now enough to show that for every vertex $v\in V(Z^*)$, there is a $G$-normal curve $\gamma''(v)$ of length at most $h''-h'+1$, connecting $v$ to some vertex of $Z$, such that $\gamma''(v)\subseteq D(Z^*)$.

Consider first some path $P\in \pset$ with $P\subseteq Z^*$. Assume that $P=P_i$, so either $\Sigma_i$ or $\Sigma_{i-1}$ (the modified squares) are contained in $D(Z^*)$. We assume w.l.o.g. that it is the former. For all $h'\leq j\leq h''$, let $\sigma^j_i$ be the segment of $Z_j(x)$ contained in $\Sigma_i$. We view this segment as directed in the counter-clock-wise direction along $Z_j(x)$.

Let $v\in V(P)$ be any vertex. Then there is some $h'\leq j<h''$, such that either $v\in R_j$, or $v=u(R_j)$, or $v\in L(R_j)$. 

If $v=u(R_j)$, or $v\in R_j$, then by our construction, there is a $G$-normal curve $\gamma''(v)$ of length $2$, connecting $v$ to some vertex $v_j\in Z_j(x)$, so that $\gamma''(v)\subseteq \Sigma_i$. Among all such possible curves $\gamma''(v)$, we choose the one for which $v_j$ lies furthest from any vertex of $P_i\cap Z_j(x)$ along the direction of $Z_j(x)$ (that is, we push $v_j$ towards path $P_{i+1}$ as much as possible). If $v\in L(R_j)$, then we let $v_j=v$ and $\gamma''(v)=\emptyset$. Assume now that we have defined the vertex $v_{j'}\in V(Z_j(x))$, for some $h''<j'\leq j$.
Then there is a $G$-normal curve $\gamma_{j'}(v_{j'})$ of length $2$, connecting $v_{j'}$ to some vertex $v_{j'+1}\in Z_{j+1}(x)$. Among all such curves, we choose the one where $v_{j'+1}$ is as far as possible from the vertices of $P_i\cap Z_{j+1}(x)$, so it is pushed as far as possible in the counter-clock-wise direction along $Z_{j+1}(x)$. 

In the end, we let $\gamma^*(v)$ be the concatenation of $\gamma''(v),\gamma_{j}(v_j),\ldots,\gamma_{h''+1}(v_{h''+1})$, and we let $\gamma(v)$ be the concatenation of $\gamma^*(v)$ and $\gamma(v_{h''})$. We claim that $\gamma^*(v)\subseteq \Sigma_i$. Indeed, assume for contradiction that it intersects $P_{i+1}$. Then every path in set $\qset$, that is contained in $\Sigma_i$, must intersect $\gamma^*(v)$. But the length of $\gamma^*(v)$ is at most $h\leq 8\Delta_0$, while the number of such paths is at least $16\Delta_0$, a contradiction. Therefore, $\gamma^*(v)$ does not intersect $P_{i+1}$. Our construction then also ensures that $\gamma^*(v)$ does not intersect $P_i$. We conclude that $\gamma(v)$ is contained in $D(Z^*)$.

Finally, let $v\in V(Z')\cap V(Z^*)$ be any vertex. From  Property~(\ref{prop: short curve}) of the shells, there is a $G$-normal curve $\gamma(v)$ of length $h''+1\leq 8\Delta_0$ connecting $v$ to some vertex of $C_x$. If $\gamma(v)\subseteq D(Z^*)$, then we are done. Otherwise, let $v'$ be the first internal vertex of $\gamma(v)$ lying on $Z^*$. Then there must be some path $P_i\in \pset$, such that $P_i\subseteq Z^*$, and $v'\in V(P_i)$. We then use the segment of $\gamma(v)$ from $v$ to $v'$, concatenated with the curve $\gamma(v')$, to obtain a $G$-normal curve of length at most $16\Delta_0$, connecting $v$ to a vertex of $C_x$, so that the curve is contained in $D(Z^*)$. 
\end{proof}
\end{proofof}

We need the following lemma.

\begin{lemma}\label{lemma: bound on number of paths in Sj}
For all $1\leq j\leq \lambda$, if $\pset$ is any set of node-disjoint paths contained in $\Sigma_j$, that connect vertices of $\tZ_1$ to vertices of $\tZ_{\rho}$, then $|\pset|\leq 64\Delta_0$.
\end{lemma}

\begin{proof}
Fix some $1\leq j\leq \lambda$, and let $\pset$ be the set of node-disjoint paths, contained in $\Sigma_j$, that connect vertices of $\tZ_1$ to vertices of $\tZ_{\rho}$. Assume for contradiction that $|\pset|> 64\Delta_0$. 
For each $1\leq i\leq \rho$, we denote by $\chi_i$ the segment of $\tZ_i$ that is contained in $\Sigma_j$.

Let $H$ be the sub-graph of $G$ induced by all vertices that are contained in $\Sigma_j$, so all paths in $\pset$ are contained in $H$. 
We modify the graph $H$ as follows. First, for each $1\leq i\leq \rho$, we add an edge $e_i$, connecting the endpoints of the path $\chi_i$, thus turning it into a cycle, that we denote by $\chi'_i$. We also add a new artificial cycle $C$ of length $|V(\chi_1)|$, and we connect every
vertex in $\chi_1$ to some vertex of $C$ with an edge, so the resulting graph, that we denote by $H'$, is planar. We can extend the paths in $\pset$, so that they now originate at the vertices of $C$, terminate at the vertices of $\chi_{\rho}$, and are internally disjoint from $C\cup \chi_{\rho}$, while remaining node-disjoint. It is easy to verify that $\chi_1',\ldots,\chi_{\rho}'$ is a family of tight concentric cycles around $C$, from the construction of the shell. Therefore, from Theorem~\ref{thm: monotonicity for shells} there is a collection $\pset'$ of at least $64\Delta_0$ node-disjoint paths in $H'$, connecting the vertices of $C$ to the vertices of $\chi_{\rho}'$, so that the paths in $\pset'$ are internally disjoint from $C\cup \chi_{\rho}'$, and monotone with respect to $\chi'_1,\ldots,\chi'_{\rho}$.
Each path $P\in \pset'$ can be partitioned into $2\rho-2$ segments: $\mu_1(P),\beta_1(P),\ldots,\mu_{\rho-1}(P),\beta_{\rho-1}(P)$,
where for each $1\leq i <\rho$, $\mu_i(P)$ is a contiguous segment of $\chi'_i$, and $\beta_i(P)$ is contained in $H[R\cup L(R)\cup \set{u(R)}]$ for some component $R\in \trset_i$. Notice that $\beta_i(P)$ is not necessarily the same as $\beta(R)$, and $\mu_i(P)$ is not necessarily directed along the cycle $\tZ_i$. 
We claim that for each $1\leq i<\rho$, at most two paths of $\pset'$ may intersect the segment $\mu_i(Q_j)$ of the staircase $Q_i$. Indeed, let $R\in \trset_i$ be the component with $\beta(R)=\beta_i(Q_j)$. Then for every vertex $v\in V(\mu_i(Q_j))\setminus\set{u'(R)}$, $v\neq u'(R')$ for any component $R'\in \trset_j$, from the construction of $\mu_j(Q_i)$. Therefore, there can be at most one component $R'\in \trset_i$, such that $L(R')$ has a non-empty intersection with $\mu_i(Q_j)\setminus\set{u'(R)}$. So at most two paths in $\pset'$ intersect $\mu_i(Q_j)$. It is also immediate to see that at most one path in $\pset'$ intersects $\beta_i(Q_j)$, and this path must also intersect $\mu_i(Q_j)$. Therefore, in total, the number of paths in $\pset'$, intersecting $Q_j$ is bounded by $2\rho\leq 16\Delta_0$. Similarly, at most $16\Delta_0$ paths in $\pset'$ intersect $Q_{j-1}$. We discard all such paths from $\pset'$, and obtain a new set $\pset''$ of paths, that are contained in $\Sigma'_j$, and contain at least $32\Delta_0$ paths, connecting the vertices of $\chi_1$ to the vertices of $\chi_{\rho}$, that are internally disjoint from $\chi_1\cup \chi_{\rho}$, and monotone with respect to $\chi_1,\ldots,\chi_{\rho}$.

We need the following claim.

\begin{claim}\label{claim: there is a staircase}
There is a staircase $Q$ in graph $H$, that is contained in $\Sigma_j'$.
\end{claim}

We prove the claim below, and show that the proof of Lemma~\ref{lemma: bound on number of paths in Sj} follows from it. Let $v$ be the first vertex of the staircase $Q$. For each $1\leq i< \rho$, let $v_i(Q),v_i'(Q)$ be the first and the last vertices of the segment $\mu_i(Q)$, respectively, and let $v_{\rho}(Q)$ be the last vertex of $Q$. Recall that we have constructed a staircase $Q(v)$, starting from $v$, that we denote by $Q'$. For each $1\leq i< \rho$, let $v_i(Q'),v_i'(Q')$ be the first and the last vertices of the segment $\mu_i(Q')$, respectively, and let $v_{\rho}(Q')$ be the last vertex of $Q'$. Since $Q'$ is not included in $\qset$, the last vertex of $Q'$, $v_{\rho}(Q')$ coincides with $b_j'$, the last vertex of $Q_j$. However, $v_{\rho}(Q)$ must appear before $b_j'$ on $\chi_{\rho}$, since $Q$ is disjoint from $Q_j$. But from our construction of staircase $Q'$, it is easy to verify by induction on $i$, that for all $1\leq i< \rho$, either $v_i(Q)=v_i(Q')$, or $v_i(Q')$ lies before $v_i(Q)$ on $\chi_i$, and similarly, either $v_i'(Q)=v_i'(Q')$, or $v_i'(Q')$ lies before $v_i'(Q)$ on $\chi_i$. Therefore, either $v_{\rho}(Q')=v_{\rho}(Q)$, or $v_{\rho}(Q')$ lies before $v_{\rho}(Q)$ on $\chi_{\rho}$, a contradiction. It now remains to prove Claim~\ref{claim: there is a staircase}.

\begin{proofof}{Claim~\ref{claim: there is a staircase}}
Let $H'$ be the sub-graph of $G$ induced by all vertices that are contained in $\Sigma'_j$. Then all paths in $\pset''$ are contained in $H$. In order to construct the staircase, it is enough to select, for each $1\leq i<\rho$ a component $R_i\in \trset_i$, such that $R_i\cup L(R_i)\cup \set{u(R_i)}\subseteq H'$, such that for each $1\leq i<\rho-1$, $u(R_i)$ lies before $u'(R_{i+1})$ on $\tZ_{i+1}$. We can then use the paths $\beta(R_i)$ for each $1\leq i<\rho$, together with the segments of cycles $\tZ_{i+1}$ between $u(R_i)$ and $u'(R_{i+1})$ in order to complete the construction of the staircase.

We assume that $\pset''=\set{P_1,\ldots,P_r}$, where $r\geq 32\Delta_0$, and the paths in $\pset''$ are are indexed according to the ordering of their first vertices on $\chi_1$. 

We now fix some $1\leq i<\rho$, and show how to find the desired component $R_i\in \trset_i$. For each path $P_q\in \pset''$, let $v^{i+1}_q$ be the first vertex on $P_q$ that belongs to $\chi_{i+1}$ (where we view the paths in $\pset'$ as directed from $C$ to $\chi_{\rho}$). Let $P^{i+1}_q$ be the sub-path of $P_q$ from its first vertex to $v^{i+1}_q$, and let $\pset^{i+1}=\set{P^{i+1}_q\mid P_q\in \pset'}$. Then the paths in $\pset^{i+1}$ all originate at the vertices of $C$, terminate at the vertices of $\chi_{i+1}$, and they are internally disjoint from $C\cup \chi_{i+1}$. From planarity, it is easy to see that the vertices $v^{i+1}_1,\ldots,v^{i+1}_r$ appear in this counter-clock-wise order on $\chi_{i+1}$. Moreover, for each $1\leq q\leq r$, there must be some component $R=R^i(P_q)$, such that $u(R)=v^{i+1}_q$, and $P_q$ contains some vertex of $L(R)$. We then let $R_i$ be $R^i(P_{2i})$. It is immediate to see that $ R_i\in \trset_i$, and $R_i\subseteq H$. It now only remains to show that 
for each $1\leq i<\rho-1$, $u(R_i)$ lies before $u'(R_{i+1})$ on $\tZ_{i+1}$. 

We fix again some $1\leq i<\rho-1$. For each $1\leq j\leq r$, recall that the intersection of $P_j$ and $\chi_{i}$ is a path, that we denote by $\mu^i_j$. Then path $\mu^i_{2i-1}$ appears between $\mu^i_{2i}$ and $\mu^i_{2i-2}$ on $\chi_i$. Moreover, since $R^i(P_{2i})\neq R^i(P_{2i-1})$, all legs of component $R^i(P_{2i})$ must appear either on the segment $\mu^i_{2i-1}$, or after it on $\chi_i$, while vertex $u(R_{i-1})$, that belongs to $\mu^i_{2i-1}$, appears before the vertices of $\mu^i_{2i-1}$ on $\chi_i$. Therefore,  $u(R_i)$ lies before $u'(R_{i+1})$ on $\tZ_{i+1}$. 
\end{proofof}
\end{proof}

\begin{corollary}\label{cor: many squares}
The total number of squares, $\lambda>32\Delta_0$.
\end{corollary}

\begin{lemma}\label{lem: curves for paths}
For every $1\leq j\leq \lambda$, there is a $G$-normal curve $\tgamma_j$ of length $\rho$, connecting $b_j'$ to $b_j$, that is contained in $\Sigma'_j$, and there is a $G$-normal curve $\tgamma_j'$ of length $\rho$, connecting $b_j'$ to some vertex of $\tZ_1$, such that $\tgamma_j'$ is contained in $\Sigma'_{j+1}$.\end{lemma}
\begin{proof}
The existence of the curve $\tgamma_j$ is immediate from our construction of the staircase $Q(b_j)=Q_j$ (see Figure~\ref{fig: staircase}). For each $1\leq i<\rho$, let $v_i$ be the first vertex of segment $\mu_i(Q_j)$, so $v_1=b_j$. For convenience, denote $v_{\rho}=b_j'$. Then for $1< i\leq \rho$, $v_i=u(R_i)$, where $R_i\in \trset_i$ is the component with $\beta(R_i)=\beta_i(Q_j)$. From our construction of the staircase, it is immediate to verify that for all $1\leq i<\rho$, there is a $G$-normal curve $\gamma^j_i$ of length $2$, connecting $v_{i+1}$ to $v_{i}$, so that $\gamma^j_i$ is internally disjoint from all vertices of $G$. We let $\tgamma_j$ be the concatenation of the curves $\gamma_1^j,\gamma_{2}^j\ldots,\gamma_{\rho-1}^j$. It is immediate to verify that $\tgamma_j$ is contained in $\Sigma'_j$, as it is disjoint from $Q(j+1)$.

In order to construct $\tgamma_j'$, observe that from the structure of the staircase, it is easy to construct a $G$-normal curve $\gamma'$ of length $\rho$, connecting $b_j'$ to some vertex of $\tZ_1$, so that some prefix of $\gamma'$ (excluding $b_j'$) is contained in   $\Sigma'_{j+1}$, and $\gamma'$ never crosses $Q_j$. If $\gamma'$ never intersects $\gamma_{j+1}$, then we are done, since $\gamma'$ must be contained in $\Sigma'_{j+1}$. Otherwise, let $v$ be the first point on $\gamma'$ that belongs to $\gamma_{j+1}$. By combining the segment of $\gamma'$ from $b_j'$ to $v$, and the segment of $\gamma_{j+1}$ from $v$ to its last point, we obtain the desired curve $\tgamma_j'$.
\end{proof}

Fix some $1\leq j\leq \lambda$, and let $v,v'\in V(\tZ_1)$ be the vertices where the curves $\tgamma_j,\tgamma_j'$ terminate, respectively. Using Property~(\ref{prop: short curve}) of the shells, we can construct a curve $\gamma'$, connecting $v$ to some vertex of $V(C_x)$, and a curve $\gamma''$, connecting $v'$ to some vertex of $V(C_x)$, such that $\gamma',\gamma''\subseteq D(\tZ_1)$, and each curve has length $h'+1$. We denote by $\gamma_j$ the concatenation of $\tgamma_j$ and $\gamma'$, and by $\gamma'_j$ the concatenation of $\tgamma_j'$ and $\gamma''$. Notice that curves $\gamma_j,\gamma_j'$ have lengths $h''+1$ each.

\begin{corollary}\label{cor: few paths cross staircases}
Let $\pset$ be any set of node-disjoint paths routing a subset of the demand pairs in $\tmset$ in graph $G$. Then for each $1\leq j\leq \lambda$, the number of paths in $\pset$ that intersect $Q_j$ is at most $16\Delta_0$.
\end{corollary}

\begin{proof}
Fix some $1\leq j\leq \lambda$, and let $a,a'\in V(C_x)$ be the vertices where the curves $\gamma_j,\gamma_j'$ terminate, respectively. Let $\sigma$ be any segment of $C_x$ between $a$ and $a'$, and let $\gamma^*$ be obtained from the union of $\gamma_j,\gamma_j'$, and $\sigma$. Let $\eta$ be the disc whose boundary is $\gamma^*$, so $\eta$ contains $Q_j$. Notice that no demand pair of $\tmset$ is contained in the interior of $\eta$. Therefore, every path of $\pset$ that intersects $Q_j$ must also intersect $\gamma^*$. Since the length of $\gamma^*$ is bounded by $2h''+\Delta<16\Delta_0$, at most $16\Delta_0$ paths in $\pset$ intersect $Q_j$.
\end{proof}


The following theorem is crucial in our reduction to the routing on a disc problem.
\begin{theorem}\label{thm: few paths in each square}
Let $\pset$ be any set of disjoint paths, routing any subset of demand pairs in $\hmset$ in graph $G$. Then for each $1\leq j\leq \lambda$, at most $O(\Delta_0)$ paths in $\pset$ originate at the source vertices that belong to the square $\Sigma_j$.
\end{theorem}

We are now ready to complete our reduction. We construct a new set $\hmset$ of demand pairs, starting with $\hmset=\emptyset$. For each demand pair $(s,t)\in \tmset$, if $\Sigma_j$ is the square to which $s$ belongs, then we add the pair $(b_j',t)$ to $\hmset$. In other words, for all $1\leq j\leq \lambda$, we map all source vertices that belong to square $\Sigma_j$ to the last vertex $b'_j$ of $Q_j$, to obtain the new set of the demand pairs. We also construct a new graph $\hat G$, by creating a hole $\dnot(\tZ_1)$ in the sphere, and removing from $G$ all edges and vertices that lie in the interior of $D(\tZ_1)$. We then apply the algorithm for routing on a disc to the resulting problem, to obtain a routing of at least $\Omega\left(\frac{\opt(\hat G,\hmset)}{n^{\eps}\log n}\right )$ demand pairs in time $n^{O(1/\eps)}$. The following theorem will then finish the proof of Theorem~\ref{thm: before reduction}, and hence Theorem~\ref{thm: main for Case 2}.

Then for each $1\leq j\leq \lambda$, the vertices of $V_j$ appear consecutively on $\tZ_1$ with respect to $\bigcup_{j'=1}^{\lambda}V_{j'}$. We let $b_j$ be the last vertex in $V_j$, with respect to the direction of $\tZ_1$ (if $\lambda=1$, then $b_1$ is any vertex of $\tZ_1$), we denote $Q_j=Q(b_j)$, and we denote the other endpoint of $Q_j$ by $b_j'$. Let $\qset'=\set{Q_1,\ldots,Q_{\lambda}}$. We assume without loss of generality that $b_1,\ldots,b_{\lambda}$ appear in this counter-clock-wise order on $\tZ_1$.

Consider the drawings of $\tZ_1,\tZ_{\rho}$, and the set $\qset'=\set{Q_1,\ldots,Q_{\lambda}}$ of paths. This drawing consists of $\lambda+2$ faces. We denote by $\Sigma_j$ the face whose boundary contains $Q_{j-1}$ and $Q_{j}$ (we will always think of $Q_0$ and $\Sigma_0$ as $Q_{\lambda}$ and $\Sigma_{\lambda}$, respectively), and we let $\sigma_j,\sigma'_j$ be the segments of $\tZ_1$ and $\tZ_{\rho}$, respectively, on the boundary of $\Sigma_j$. We will sometimes refer to the discs $\Sigma_j$ as squares. Recall that for each path $Q_j\in \qset'$, its first and last endpoints are denoted by $b_j$ and $b_{j'}$, respectively, and from the definition of the paths in $\qset'$, it is easy to see that they are internally disjoint from $\tZ_1$ and $\tZ_{\rho}$.  We say that a source vertex $s\in \tS$ belongs to square $\Sigma_j$ if $s$ lies on $\sigma_j\setminus\set{b_{j-1}}$. 
We need the following lemma

\begin{lemma}\label{lem: few path for each square}
Let $\pset^*$ be any set of node-disjoint paths, connecting a subset of the demand pairs in $\tmset$. Then for each $1\leq j\leq \lambda$,  and the number of paths in $\pset^*$ that originate at the source vertices that belong to square $\Sigma_j$ is at most $O(\Delta_0^2)$.
\end{lemma}

It now only remains to prove Lemma~\ref{lem: few path for each square}.

\begin{proofof}{Lemma~\ref{lem: few path for each square}}
Fix some $1\leq j\leq \lambda$, and let $\pset'\subseteq \pset^*$ be the set of all paths originating at the source vertices that belong to $\Sigma_j$. Let $\Sigma'_j=\Sigma_j\setminus (Q_j\cup Q_{j-1})$.

Using Property~(\ref{prop: short curve}) of shells, we can construct $G$-normal curves $\gamma,\gamma'$, connecting $b_j$ and $b'_j$, respectively, to some vertices of $C_x$, so that the length of each such curve is bounded by $h'$, and both curves are contained in $D(\tZ_1)$. 

Let $\pset''\subseteq \pset'$ be the set of all paths that intersect $Q_j$, $Q_{j-1}$, $\gamma$, $\gamma'$, or $C_x$. Since the lengths of $C_x,\gamma$ and $\gamma'$ are bounded by $O(\Delta_0)$ each, from Lemma~\ref{lem: intersection of paths and staircases}, $|\pset''|\leq O(\Delta_0^2)$. It is now enough to show that $|\pset'\setminus \pset''|\leq 32\Delta_0$. Assume otherwise, and recall that no path in $\pset'\setminus\pset''$ intersects $Q_j,Q_{j-1}',\gamma,\gamma'$, or $C_x$. For each path $P\in \pset'\setminus\pset''$, let $v_P$ be the last vertex of $P$ that lies on $\tZ_1$ (where we view $P$ as directed from its source vertex in $\tS$ to its destination vertex in $\tT$). Let $P'$ be the sub-path of $P$ from $v_P$ to its last vertex. Then $P'$ connects a vertex of $\tZ_1$ to a vertex of $\tZ_{\rho}$, and $P$ is contained in $\Sigma_j'$. Let $\pset^*=\set{P'\mid P\in \pset'\setminus \pset''}$, so $\pset^*$ is a set of at least $32\Delta_0$ node-disjoint paths, contained in $\Sigma_j'$, where each path connects a vertex of $\tZ_1$ to a vertex of $\tZ_{\rho}$.

For each $1\leq i\leq \rho$, we denote by $\chi_i$ the segment of $\tZ_i$ that is contained in $\Sigma_j'$.
We need the following claim.

\begin{claim}\label{claim: there is a staircase}
There is a staircase $Q$ in graph $G'$, that is contained in $\Sigma_j'$.
\end{claim}

We prove the claim below, and show that the proof of Lemma~\ref{lem: few path for each square} follows from it. Let $v$ be the first vertex of the staircase $Q$. For each $1\leq i< \rho$, let $v_i(Q),v_i'(Q)$ be the first and the last vertices of the segment $\mu_i(Q)$, respectively, and let $v_{\rho}(Q)$ be the last vertex of $Q$. Recall that we have constructed a staircase $Q(v)$, starting from $v$, that we denote by $Q'$. For each $1\leq i< \rho$, let $v_i(Q'),v_i'(Q')$ be the first and the last vertices of the segment $\mu_i(Q')$, respectively, and let $v_{\rho}(Q')$ be the last vertex of $Q'$. Since $Q'$ is not included in $\qset$, the last vertex of $Q'$, $v_{\rho}(Q')$ coincides with $b_j'$, the last vertex of $Q_j$. However, $v_{\rho}(Q)$ must appear before $b_j'$ on $\chi_{\rho}$, since $Q$ is disjoint from $Q_j$. But from our construction of staircase $Q'$, it is easy to verify by induction on $i$, that for all $1\leq i< \rho$, either $v_i(Q)=v_i(Q')$, or $v_i(Q')$ lies before $v_i(Q)$ on $\chi_i$, and similarly, either $v_i'(Q)=v_i'(Q')$, or $v_i'(Q')$ lies before $v_i'(Q)$ on $\chi_i$. Therefore, either $v_{\rho}(Q')=v_{\rho}(Q)$, or $v_{\rho}(Q')$ lies before $v_{\rho}(Q)$ on $\chi_{\rho}$, a contradiction. It now remains to prove Claim~\ref{claim: there is a staircase}.

\begin{proofof}{Claim~\ref{claim: there is a staircase}}
Let $H$ be the sub-graph of $G$ induced by all vertices that are contained in $\Sigma'_j$. Then all paths in $\pset^*$ are contained in $H$. In order to construct the staircase, it is enough to select, for each $1\leq i<\rho$ a component $R_i\in \trset_i$, such that $R_i\cup L(R_i)\cup \set{u(R_i)}\subseteq H$, such that for each $1\leq i<\rho-1$, $u(R_i)$ lies before $u'(R_{i+1})$ on $\tZ_{i+1}$. We can then use the paths $\beta(R_i)$ for each $1\leq i<\rho$, together with the segments of cycles $\tZ_{i+1}$ between $u(R_i)$ and $u'(R_{i+1})$ in order to complete the construction of the staircase.

We modify the graph $H$ as follows. First, for each $1\leq i\leq \rho$, we add an edge $e_i$, connecting the endpoints of the path $\chi_i$, thus turning it into a cycle, that we denote by $\chi'_i$. We also add a new artificial cycle $C$ of length $|V(\chi_1)|$, and we connect every
vertex in $\chi_1$ to some vertex of $C$ with an edge, so the resulting graph, that we denote by $H'$, is planar. We can extend the paths in $\pset^*$, so that they now originate at the vertices of $C$, terminate at the vertices of $\chi_{\rho}$, and are internally disjoint from $C\cup \chi_{\rho}$, while remaining node-disjoint. It is easy to verify that $\chi_1',\ldots,\chi_{\rho}'$ is a family of tight concentric cycles around $C$, from the construction of the shell. Therefore, from Theorem~\ref{thm: monotonicity for shells} there is a collection $\pset'$ of at least $32\Delta_0$ node-disjoint paths in $H'$, connecting the vertices of $C$ to the vertices of $\chi_{\rho}'$, so that the paths in $\pset'$ are internally disjoint from $C\cup \chi_{\rho'}$, and monotone with respect to $\chi'_1,\ldots,\chi'_{\rho}$. Notice that at most $\rho\leq 8\Delta_0$ of the paths in $\pset'$ may contain the edges in $\set{e_1,\ldots,e_{\rho}}$. We discard all such paths from $\pset'$, so that the remaining paths are contained in $H$.
We assume that $\pset'=\set{P_1,\ldots,P_r}$, where $r\geq 24\Delta_0$, and the paths in $\pset'$ are are indexed according to the ordering of their first vertices on $\chi_1$. 

We now fix some $1\leq i<\rho$, and show how to find the desired component $R_i\in \trset_i$. For each path $P_q\in \pset'$, let $v^{i+1}_q$ be the first vertex on $P_q$ that belongs to $\chi_{i+1}$ (where we view the paths in $\pset'$ as directed from $C$ to $\chi_{\rho}$). Let $P^{i+1}_q$ be the sub-path of $P_q$ from its first vertex to $v^{i+1}_q$, and let $\pset^{i+1}=\set{P^{i+1}_q\mid P_q\in \pset'}$. Then the paths in $\pset^{i+1}$ all originate at the vertices of $C$, terminate at the vertices of $\chi_{i+1}$, and they are internally disjoint from $C\cup \chi_{i+1}$. From planarity, it is easy to see that the vertices $v^{i+1}_1,\ldots,v^{i+1}_r$ appear in this counter-clock-wise order on $\chi_{i+1}$. Moreover, for each $1\leq q\leq r$, there must be some component $R=R^i(P_q)$, such that $u(R)=v^{i+1}_q$, and $P_q$ contains some vertex of $L(R)$. We then let $R_i$ be $R^i(P_{2i})$. It is immediate to see that $ R_i\in \trset_i$, and $R_i\subseteq H$. It now only remains to show that 
for each $1\leq i<\rho-1$, $u(R_i)$ lies before $u'(R_{i+1})$ on $\tZ_{i+1}$. 

We fix again some $1\leq i<\rho-1$. For each $1\leq j\leq r$, recall that the intersection of $P_j$ and $\chi_{i}$ is a path, that we denote by $\mu^i_j$. Then path $\mu^i_{2i-1}$ appears between $\mu^i_{2i}$ and $\mu^i_{2i-2}$ on $\chi_i$. Moreover, since $R^i(P_{2i})\neq R^i(P_{2i-1})$, all legs of component $R^i(P_{2i})$ must appear either on the segment $\mu^i_{2i-1}$, or after it on $\chi_i$, while vertex $u(R_{i-1})$, that belongs to $\mu^i_{2i-1}$, appears before the vertices of $\mu^i_{2i-1}$ on $\chi_i$. Therefore,  $u(R_i)$ lies before $u'(R_{i+1})$ on $\tZ_{i+1}$. 
\end{proofof}
\end{proofof}

 We perform $2r$ iterations, where in the first $r$ iterations we process all type-3 constraints that belong to classes $1,\ldots,r$, and in the last $r$ iteration we process all type-4 constraints similarly. In every iteration, we will replace constraints from one of the sets $\tcclass i j$, for $i\in \set{3,4}$ with type-1 constraints, using Theorem~{thm: replacing type 3 with type 1}. In this way, we will eliminate 

We maintain a partition $\pset$ of the demand pairs in $\mset$ into subsets, where at the beginning of the first iteration, $\pset=\set{\mset}$, and in general after the $i$th iteration, $|\pset|\leq (\log n)^i$. Each set $\mset'\in \pset$ is associated with a set $\kset(\mset')$ of constraints. If $i\leq r$, then each such set $\kset(\mset')$ may contain constraints of all four types, except that all type-3 constraints must belong to classes $i+1,\ldots,r$. Similarly, if $i>r$, then $\kset(\mset')$ may not contain type-3 constraints, and all type-4 constraints in $\kset(\mset')$ must belong to classes $(i-r)+1,\ldots,r$.

We say that constraints $K=(i,a,b,w),K'=(i',a',b',w')$ with $i,i'\in \set{3,4}$ cross iff the pairs $(a,b)$ and $(a',b')$ cross.
We say that a set $\xset\subseteq\tkset^3\cup\tkset^4$ of constraints is non-crossing if no pair of constraints in $\xset$ cross.

We extend the dynamic program described above, to deal with type-3 and type-4 constraints. This is more challenging, for the following reason. Consider our algorithm, as we add the demand pairs to our solution, from left to right, and let $(s,t)$ be the current demand pair. Let $K=(3,a,b,w)$ be some type-3 constraint. Recall that $L_a$ is the interval of $\sigma$ between its left endpoint and $a$, and $\sigma'$ the interval of $\sigma'$ between $b$ and its right endpoint. If $K$ belongs to class $j$, then the corresponding constraint is that the number of the demand pairs $(s',t')$ with $s'\in I_a$ and $t'\in I'_b$ is at most $W_j$. If $a\prec s$, then whatever demand pairs we add later (to the right of $(s,t)$), will not affect this constraint. If $t\prec b$, then the demand pairs we have added so far do not affect the constraint. If $s\prec a$, and $b\prec t$, then we say that the constraint $K$ is active for pair $(s,t)$. In this case, both the pairs that we have added so far, and the pairs that we may add later, may affect $K$. The problem is, that unlike type-1 and type-2 constraints, there can be a large number of type-3 constraints from each class $\tcclass j 3$ that are active for each demand pair $(s,t)$, and for each such constraint we need to keep track of the number of the demand pairs that have been already added to the solution, that cross that constraint.

We now give an intuitive explanation of how we get around this problem, and then describe the dynamic program more formally. We will compare our solution to the near-optimal solution $\opt'$, given in Theorem~\ref{thm: near-optimal}.
 Consider the execution of our algorithm, as we add demand pairs to our solution, from left to right. Fix some $1\leq j\leq r$, and consider the first iteration when $W_j$ demand pairs have been added to the solution. So far we have not violated any type-3 constraints that belong to class $\tcclass j 3$. Let $(s,t)$ be the last demand pair that we have added, and consider all currently active type-3 class-$j$ constraints: that is, all constraints $(3,a,b,W_j)\in\tcclass j 3$ where (i) $s\prec a$ or $s=a$; and (ii) $b\prec t$ or $b=t$. Among all such constraints $(3,a,b,W_j)$, choose the one where $a$ lies furthest to the right on $\sigma$, and we denote this constraint by $K'=(3,a',b',W_j)$. In the following iterations, we will only select demand pairs $(s',t')$, where $a'\prec s'$. We say that $K'$ is a \emph{separator constraint}. Notice that this corresponds to discarding, from the  solution $\opt'$, all demand pairs $(s',t')$, where $s'$ appears between $s$ and $a'$ on $\sigma$, including possibly the pair with $s'=a'$. However, each such demand pair $(s',t')$ must lie to the right of $(s,t)$, and so in particular $t\prec t'$, and $b\prec t'$. This means that all such pairs cross the constraint $K'$, and their number is bounded by $W_j$.  Intuitively, we can then charge the discarded demand pairs to the $W_j$ demand pairs that were already added to the solution.

\begin{definition}
Fix some $1\leq j\leq r$, and suppose we are given a sequence $\xset\subseteq \tcclass j 3$  of non-crossing constraints of type-3 that belong to class-$j$, ordered in their natural left-to-right order (we call these constraints \emph{separator constraints}). Let $\mset'$ be any non-crossing collection of demand pairs. Denote $\xset=(K_1,\ldots,K_q)$, where for all $1\leq q'\leq q$, $K_{q'}=(3,a_{q'},b_{q'},W_j)$. For all $1\leq q'< q$, let $\mset_{q}\subseteq \mset'$ be the set of the demand pairs $(s,t)$, where either $s=a_q$, or $a_{q-1}\prec s\prec a_q$. We say that demand pairs in $\mset'$ \emph{respect} the separation constraints in $\xset$, iff:

\begin{itemize}
\item for all $1\leq q\leq r$, $|\mset_q|\leq W_j$; and
\item the demand pairs in $\mset'\setminus \mset_q$ do not cross the constraint $Kq$.
\end{itemize}

The definition for a set $\xset$ of type-4 class-$j$ separator constraints is similar, except that now $\mset_q$ is the set of all demand pairs $(s,t)$ where either $t=b_q$, or $b_{q-1}\prec t\prec b_q$.\end{definition}

We next define another near-optimal solution $\opt''$, that has a more convenient structure, using the transformation outlined above. We will then design a dynamic program, that finds a collection $\mset'$ of non-crossing demand pairs, satisfying all constraints in $\tkset$, of cardinality at least $|\opt''|$.

In order to define the solution $\opt''$, we start with the solution $\opt'$, and the subset $\mset^*$ of the demand pairs routed in it. We will select a large subset $\mset''\subseteq \mset^*$, and solution $\opt''$ is simply $\opt'$ restricted to the pairs in $\mset''$. In order to define $\mset''$, we start with $\mset^*$, and perform $\gamma$ iterations. In iteration $i$, we process type-3 and type-4 constraints that belong to class $i$, define class-$i$ separator constraints, and discard some demand pairs from $\mset^*$.

Fix some $1\leq i\leq \gamma$, and let $\mset'\subseteq \mset^*$ be the current set of demand pairs, ordered in their natural left-to-right order. We first show how to handle type-3 constraints; type-4 constraints are dealt with similarly.

Let $(s,t)$ be the $\floor{W_i}$th pair of $\mset'$. Recall that a constraint $K(a,b)\in \tcclass i 3$ is called active with respect to $(s,t)$, iff $a$ lies to the right of $s$ in $I_S$ and $b$ lies to the left of $t$ on $I_T$. Among all such active constraints, let $K(a',b')$ be the one where $a'$ appears furthest to the right on $I_S$. We discard from $\mset'$ all demand pairs $(s',t')$, where $s'$ appears between $s$ and $a'$ on $I_S$ (including when $s'=a'$). Notice that all such pairs must cross $K(a',b')$, and so their number is bounded by $W_i$. We then continue this process with demand pairs lying to the right of $(s,t)$ in the resulting set $\mset'$. Once all demand pairs are processed, we obtain a new set $\mset'$ of demand pairs, whose cardinality goes down by at most factor $2$. We also obtain a collection of type-3 class-$i$ separator constraints. The new demand set $\mset'$ now respects the set of the separator constraints we have constructed.

We process type-4 class-$i$ constraints in the same fashion.

Let $\mset''$ be the resulting set of the demand pairs, obtained after $\gamma$ iterations. Then $|\mset''|\geq |\mset^*|/2^{\gamma}=\Omega(|\opt|/(n^{\eps}\cdot 2^{1/\eps}))$. Moreover, for all $1\leq i\leq \gamma$, $j\in \set{3,4}$, demand pairs in $\mset''$ respect the sequence of type-$j$ class-$i$ separator constraints we have constructed.

We are now ready to define the dynamic program. Notice that we do not know the identities of the separator constraints, and we will need to guess them instead. However, when processing demand pair $(s,t)$, we only need to know, for each $1\leq i\leq \gamma$, the identity of the class-$i$ type-3 separator lying immediately to the left of $(s,t)$ (and same thing for type-4 separator). Therefore, we can guess these separators one at a time. As before, for each demand pair $(s,t)\in \mset$, for all $1\leq i\leq \gamma$ and $j\in \set{1,2}$, we define the set of (at most $2$) type-$j$ class-$i$ active intervals $\iset_i^{(j)}$ as all intervals $I$ corresponding to constraints $K(a,b)\in \tcclass i j$ where $s$ or $t$ belong to $I$.

The entries of the dynamic programming table $T$ are indexed by the following fields:

\begin{itemize}
\item A demand pair $(s,t)\in \mset$;

\item For each $1\leq i\leq \gamma$, $j\in \set{1,2}$, and every active interval $I\in \iset_i^{(j)}(s,t)$, integral value $1\leq n_I\leq W_i$. (Intuitively, this value is the number of the demand pairs whose sources lies between $\ell(I)$ and $s$ on $I_S$ (including $\ell(I)$ and $s$), or whose destinations lie between $\ell(I)$ and $t$ on $I_T$ (including $\ell(I)$ and $t$) that belong to the solution);

\item For each $1\leq i\leq \gamma$, $j\in \set{1,2}$, a constraint $K_i^{(j)}=K(a,b) \in \tcclass i j$ with $a$ lying to the left of $s$ on $I_S$, and $b$ lying to the left of $t$ on $I_T$, that we guess to be the type-$j$ class-$i$ separator immediately preceding the pair $(s,t)$, together with an integral value $1\leq n(K_i^{(j)})\leq W_i$. Intuitively, $n(K_i^{(j)})$ is the number of the demand pairs, lying between $K_i^{(j)}$ and $(s,t)$, that belong to the solution, including, possibly $(a,b)$ and $(s,t)$.
\end{itemize}

The entry will contain the largest-cardinality set $\mset'\subseteq \mset$ of demand pairs, such that:

\begin{itemize}
\item $\mset'$ is non-crossing, and $(s,t)\in \mset'$;
\item All other demand pairs in $\mset'$ lie to the left of $(s,t)$;

\item For each $1\leq i\leq \gamma$:

\begin{itemize}
\item For each active interval $I\in \iset_i^{(1)}(s,t)$, the number of the vertices participating in pairs in $\mset'$ that lie between $\ell(I)$ and $s$ on $I$ (including $\ell(I)$ and $s$) is exactly $n_I$.

\item For each active interval $I\in \iset_i^{(2)}(s,t)$, the number of the vertices participating in pairs in $\mset'$ that lie between $\ell(I)$ and $t$ on $I$ (including $\ell(I)$ and $t$) is exactly $n_I$.

\item If $K_i^{(j)}=K(a,b)$, then the number of the demand pairs in $\mset'$ lying between $(a,b)$  and $(s,t)$ (including possibly $(a,b)$ and $(s,t)$) is exactly $n(K_i^{(j)})$.
\end{itemize}

\item Each type-1 constraint $K(a,b)$ with both $a,b$ lying to the left of $s$ on $I_S$ is satisfied by $\mset'$, and the same holds for type-2 constraints $K(a,b)$ with both $a,b$ lying to the left of $t$ in on $I_S$. 

\item For all $1\leq i\leq \gamma$, there is a choice of a sequence of non-crossing type-$3$ class-$i$ separator constraints, ordered from left to right, such that the last constraint in the sequence is $K_i^{(j)}$, and pairs in $\mset'$ respect this sequence. The same holds for type-4 class-$i$ constraints.
\end{itemize}
 If no such solution is possible, the value of the entry is $0$, and the corresponding solution is $\emptyset$.

In order to initialize the dynamic programming table, for each demand pair $(s,t)\in \mset$, the entry where 
for all $1\leq i\leq \gamma$, the values $n_I$ for all active type-1 and type-2 constraints, and the values $n(K_i^{(j)})$ of the corresponding type-3 and type-4 constraints are $1$, contains the solution $\mset'=\set{(s,t)}$.

Consider now some entry of the dynamic programming table, corresponding to some demand pair $(s,t)$, where there is some value $n_I$ (for some active type-1 or type-2 constraint), or some value $n(K_i^{(j)})$ (for some type-3 or type-4 separator constraint), that is greater than $1$.

As before, we will construct a collection of ``candidate'' entries of the dynamic programming table to check, and then will reduce to the best of them. 
Let $\Pi(s,t)$ be the set of all demand pairs $(s',t')\in \mset$, lying to the left of $(s,t)$. We view the pairs in $\Pi(s,t)$ as candidate pairs to appear immediately before $(s,t)$ in our solution. We will discard some of these pairs, that cannot appear immediately before $(s,t)$ in the solution, and for the remaining pairs, we will construct a set of candidate entries of the dynamic programming table.

Consider some pair $(s',t')\in \Pi(s,t)$. For each $1\leq i\leq \gamma$, for each $j\in \set{1,2,3,4}$:

\begin{itemize}
\item If $j\in \set{1,2}$:

\begin{itemize}

\item If there is an active type-$j$ interval $I$ with $n_I=1$, but $s'$ or $t'$ belong to $I$, then we discard $(s',t')$ from $\Pi(s,t)$: since $(s,t)$ is added to the solution, we cannot include $(s',t')$ as well.

\item If $n_I>1$, but neither $s'$ nor $t'$ belong to $I$, then we discard $(s',t')$ from $\Pi(s,t)$: since we are required to have $n_I$ demand pairs in our solution whose sources or sinks lie on $I$ between $\ell(I)$ and $s$ or $t$, it is impossible that $(s',t)$ appears immediately before $(s,t)$ in the solution.
\end{itemize}

\item If $j\in \set{3,4}$:  Let $K_i^{(j)}=K(a,b)$. 

\begin{itemize}

 \item If $n(K_i^{(j)})=1$, but $(s',t')$ lies between $(a,b)$ and $(s,t)$, then we discard $(s',t')$ from $\Pi(s,t)$.

\item If $n(K_i^{(j)})>1$, but $(s',t')$ lies strictly to the left of $(a,b)$ or crosses $(a,b)$, then we discard $(s',t')$ from $\Pi(s,t)$.

\end{itemize}
\end{itemize}

We say that a demand pair $(s',t')$ survives if we did not discard it from $\Pi(s,t)$. If no demand pair survives, then we associate the current entry with the solution $\emptyset$. Assume now that at least one demand pair in $\Pi(s,t)$ survived.
We define a collection of candidate entries of the dynamic programming table for each such surviving pair.

Consider some surviving demand pair $(s',t')$.

For $1\leq i\leq\gamma$, for each $j\in \set{1,2}$, and each interval $I\in \iset_i^{(j)}(s',t')$, if $I\in \iset_i^{(j)}(s,t)$, then we set the corresponding value $n'_I=n_I-1$. It is easy to see that $1\leq n'_I\leq W_i$. Otherwise, we go over all possible values $1\leq n'_I\leq W_i$. 

For $1\leq i\leq\gamma$, for each $j\in \set{3,4}$, let $K_i^{(j)}=K(a,b)$. If $n(K_i^{(j)})>1$, then $(s',t')$ does not cross $(a,b)$ and does not lie to the left of $(a,b)$. We set $n'(K_i^{(j)})=n(K_i^{(j)})-1$, and $K_i^{(j)}$ remains the type-$j$ class-$i$ separator constraint for $(s',t')$. Otherwise, $n(K_i^{(j)})=1$. In this case, $(s',t')$ does not lie to the right of $(a,b)$. we guess the new type-$j$ class-$i$ separator constraint $K'=K(a',b')$, such that $(a',b')$ is to the left of $(s',t')$, by going over all such possible choices; and we guess the corresponding value $1\leq n(K')\leq W_i$.

This defines the set of all candidate entries of the dynamic programming table for $(s',t')$.

Let $k'$ be the maximum value of any candidate dynamic programming table for any pair $(s',t')\in \Pi(s,t)$. We then set the value of the current entry to $k'+1$, and the corresponding solution is obtained by adding the pair $(s,t)$ to the solution stored in the selected candidate entry.

It is immediate to verify that the dynamic program computes each entry correctly. 
Therefore, all constraints in $\tkset_1\cup \tkset_2$ are satisfied by the solution. It is easy to see that all constraints in $\tkset_3\cup \tkset_4$ are satisfied as well. Indeed, let $K(a,b)$ be any such constraint, and assume w.l.o.g. that $K(a,b)\in\tcclass i 3$. Then there is a sequence $\xset=(K(a_1,b_1),\ldots,K(a_r,b_r))$ of non-crossing class-$i$ type-$3$  separators  such that our final set $\mset'$ of demands respects $\xset$. Since all constraints in $\tcclass i 3$ are non-crossing, we can choose a constraint $K_(a_q,b_q)\in \xset$ immediately preceding $K(a,b)$ (if $K(a,b)\in \xset$, then $K(a_q,b_q)=K(a,b)$). 
Recall that $\mset_{q}\subseteq \mset'$ is the set of the demand pairs $(s,t)$, where $s$ lies to between $a_{q-1}$ and $a_q$ on $I_S$, with possibly $s=a_q$. Since pairs in $\mset'$ respect the separation constraints in $\xset$, $|\mset_q|\leq W_i$, and
demand pairs in $\mset'\setminus \mset_q$ do not cross the constraint $K(a_q,b_q)$. This means that the only demand pairs in $\mset'$ that can cross the constraint $K(a,b)$ are the pairs in $\mset_q$, and their number is bounded by $W_i\leq L(a,b)$.

Notice that for each demand pair $(s,t)\in \mset$, for each $1\leq i\leq \gamma$, there are at most four intervals in $\iset_i(s,t)$, from Observation~\ref{obs covered by 2}. We have $W_i\leq n$ choices for each corresponding value $n_I$. 
There are also $n^2$ choices of each separator constraint $K_i^{(j)}$, and additional $O(n)$ choices of the corresponding value $n_{K_i^{(j)}}$. 
Therefore, the number of entries in the dynamic programming table is bounded by $|\mset|\cdot n^{O(\gamma)}=n^{O(1/\eps)}$. It is also easy to see that the running time of the algorithm is $n^{O(1/\eps)}$.